\newcommand{\executeiffilenewer}[3]{%
\ifnum\pdfstrcmp{\pdffilemoddate{#1}}%
{\pdffilemoddate{#2}}>0%
{\immediate\write18{#3}}\fi%
} 
\newcommand{%
\executeiffilenewer{images/.svg}{images/.pdf}%
{inkscape -z -D --file=images/.svg %
--export-pdf=images/.pdf --export-latex}%
{\input{images/.pdf_tex}}}[1]{%
\executeiffilenewer{images/#1.svg}{images/#1.pdf}%
{inkscape -z -D --file=images/#1.svg %
--export-pdf=images/#1.pdf --export-latex}%
{\input{images/#1.pdf_tex}}}%
\newcommand{\svg}[2]{\def\svgwidth{#1}%
\executeiffilenewer{images/#2.svg}{images/#2.pdf}%
{inkscape -z -D --file=images/#2.svg %
--export-pdf=images/#2.pdf --export-latex}%
{\input{images/#2.pdf_tex}}}
\tikzstyle{smallvertex}=[draw,circle,fill=black,minimum size=3pt,inner sep=0pt]
\tikzstyle{vertex}=[draw,circle,fill=black,minimum size=5pt,inner sep=0pt]
\newtheorem{theorem}{Theorem}[section]
\newtheorem{lemma}[theorem]{Lemma}
\newtheorem{corollary}[theorem]{Corollary}
\newtheorem{observation}[theorem]{Observation}
\newtheorem{proposition}[theorem]{Proposition}
\newtheorem{conjecture}[theorem]{Conjecture}
\newtheorem{reduction rule}{Reduction Rule}[section]
\theoremstyle{definition}
\newtheorem{definition}[theorem]{Definition}
\theoremstyle{remark}
\newcommand{\calC}{\mathcal{C}}
\newcommand{\calH}{{\mathcal H}}
\newcommand{\calO}{\ensuremath{{\mathcal O}}}
\newcommand{\Otilde}{\widetilde{\mathcal{O}}}
\newcommand{\Oh}{\mathcal{O}}
\newcommand{\calP}{\mathcal{P}}
\newcommand{\calT}{\mathcal{T}}
\newcommand{\FPT}{\textsf{FPT}}
\newcommand{\WH}{\textsf{W[1]-Hard}}
\newcommand{\ETH}{\textsf{ETH}}
\newcommand{\yes}{\textsc{Yes}}
\newcommand{\no}{\textsc{No}}
\newcommand{\CO}{\mathcal O}
\newcommand{\CB}{\mathcal B}
\newcommand{\CC}{\mathcal C}
\newcommand{\CD}{\mathcal D}
\newcommand{\CF}{\mathcal F}
\newcommand{\CI}{\mathcal I}
\newcommand{\CP}{\mathcal P}
\newcommand{\CS}{\mathcal S}
\newcommand{\FT}{\mathfrak T}
\newcommand{\Fc}{\mathfrak c}
\newcommand{\NN}{\mathbb N}
\newcommand{\op}{{\sf op}}
\newcommand{\glo}{{\sf global}}
\newcommand{\loc}{{\sf local}}
\DeclareMathOperator{\tw}{tw}
\DeclareMathOperator{\dist}{dist}
\DeclareMathOperator{\adeg}{adeg}
\DeclareMathOperator{\ar}{ar}
\DeclareMathOperator{\ext}{\sf ext}
\DeclareMathOperator{\shr}{\sf shr}
\DeclareMathOperator{\cut}{cut}
\DeclareMathOperator{\val}{val}
\DeclareMathOperator{\desc}{Desc}
\DeclareMathOperator{\cost}{cost}
\numberwithin{equation}{section}
\numberwithin{figure}{section}
\newcounter{claimcounter}
\newenvironment{claim}[1][]{
  \renewcommand{\proof}{\smallskip\par\noindent\textit{Proof. }}
  \medskip\par\noindent%
  \ifthenelse{\equal{#1}{}}{%
    \setcounter{claimcounter}{0}\refstepcounter{claimcounter}\textit{Claim~\arabic{claimcounter}.}
  }{%
    \ifthenelse{\equal{#1}{resume}}{%
      \refstepcounter{claimcounter}\textit{Claim~\arabic{claimcounter}.}
    }{%
      \textit{Claim~#1.}
    }
  }
}{
  \par\medskip
}
\newcommand{\uend}{\hfill$\lrcorner$}
\newcommand{\defparproblem}[4]{
  \vspace{1mm}
\noindent\fbox{
  \begin{minipage}{0.96\textwidth}
  \begin{tabular*}{\textwidth}{@{\extracolsep{\fill}}lr} #1  & {\bf{Parameter:}} #3
\\ \end{tabular*}
  {\bf{Input:}} #2  \\
  {\bf{Question:}} #4
  \end{minipage}
  }
  \vspace{1mm}
}
\newcommand{\defpargarproblem}[5]{
  \vspace{1mm}
\noindent\fbox{
  \begin{minipage}{0.96\textwidth}
  \begin{tabular*}{\textwidth}{@{\extracolsep{\fill}}lr} #1  & {\bf{Parameter:}} #3
\\ \end{tabular*}
  {\bf{Input:}} #2  \\
  {\bf{Question:}} #4 \\
  {\bf{Guarantee:}} #5
  \end{minipage}
  }
  \vspace{1mm}
}
\newcommand{\defproblem}[3]{
  \vspace{1mm}
\noindent\fbox{
  \begin{minipage}{0.96\textwidth}
  \begin{tabular*}{\textwidth}{@{\extracolsep{\fill}}lr} #1 \\ \end{tabular*}
  {\bf{Input:}} #2  \\
  {\bf{Question:}} #3
  \end{minipage}
  }
  \vspace{1mm}
}
\title{A Framework for Parameterized Subexponential Algorithms for Generalized Cycle Hitting Problems on Planar Graphs}
\author{
D{\'{a}}niel Marx\\
CISPA Helmholtz Center for Information Security, SIC, Germany.\\
\texttt{marx@cispa.de}
\and
Pranabendu Misra\\
Chennai Mathematical Institute, India.\\
\texttt{pranabendu@cmi.ac.in}
\and
Daniel Neuen\\
CISPA Helmholtz Center for Information Security, SIC, Germany\\
\texttt{daniel.neuen@cispa.de}
\and
Prafullkumar Tale\\
CISPA Helmholtz Center for Information Security, SIC, Germany\\
\texttt{prafullkumar.tale@cispa.de}
}
\date{}
\begin{document}

\maketitle

\begin{abstract}
  Subexponential parameterized algorithms are known for a wide range of natural problems on planar graphs, but the techniques are usually highly problem specific.
  The goal of this paper is to introduce a framework for obtaining $n^{\CO(\sqrt{k})}$ time algorithms for a family of graph modification problems that includes problems that can be seen as generalized cycle hitting problems.

  Our starting point is the \textsc{Node Unique Label Cover} problem (that is, given a CSP instance where each constraint is a permutation of values on two variables, the task is to delete $k$ variables to make the instance satisfiable). We introduce a variant of the problem where $k$ vertices have to be deleted such that every 2-connected component of the remaining instance is satisfiable. Then we extend the problem with cardinality constraints that restrict the number of times a certain value can be used (globally or within a 2-connected component of the solution). We show that there is an $n^{\CO(\sqrt{k})}$ time algorithm on planar graphs for any problem that can be formulated this way, which includes a large number of well-studied problems, for example, \textsc{Odd Cycle Transversal}, 
  \textsc{Subset Feedback Vertex Set}, \textsc{Group Feedback Vertex Set}, \textsc{Subset Group Feedback Vertex Set},
  \textsc{Vertex Multiway Cut}, and \textsc{Component Order Connectivity}.
  
  For those problems that admit appropriate (quasi)polynomial kernels (that increase the parameter only linearly and preserve planarity), our results immediately imply $2^{\CO(\sqrt{k} \cdot \textup{polylog}(k))}\cdot n^{\CO(1)}$ time parameterized algorithms on planar graphs.
  In particular, we use or adapt known kernelization results to obtain $2^{\CO(\sqrt{k}\cdot \textup{polylog}(k))}\cdot n^{\CO(1)}$ time (randomized) algorithms for \textsc{Vertex Multiway Cut}, \textsc{Group Feedback Vertex Set}, and \textsc{Subset Feedback Vertex Set}.

  Our algorithms are designed with possible generalization to $H$-minor free graphs in mind.
  To obtain the same $n^{\CO(\sqrt{k})}$ time algorithms on $H$-minor free graphs, the only missing piece is the vertex version of a contraction decomposition theorem that we currently have only for planar graphs.
\end{abstract}

\thispagestyle{empty}
\newpage

\tableofcontents
\thispagestyle{empty}

\newpage
\pagenumbering{arabic}

\section{Introduction}

For most NP-hard graph problems, restriction to planar graphs does not
seem to impact complexity: they remain NP-hard. However, in many
cases, planarity can be exploited to obtain improved algorithms
compared to what is possible for general graphs. In terms of approximability, polynomial-time approximation schemes (PTAS) are known for the planar restrictions of many problems that are APX-hard (or worse) on general graphs \cite{DBLP:conf/stoc/Klein06,DBLP:conf/focs/Klein05,DBLP:conf/stoc/Cohen-AddadGK021,DBLP:conf/soda/Fox-EpsteinKS19,DBLP:conf/wads/BeckerKS19,DBLP:conf/soda/BateniFH19,DBLP:conf/esa/BeckerKS17,DBLP:conf/stoc/BateniDHM16,DBLP:conf/stoc/FoxKM15,DBLP:conf/soda/EisenstatKM12,DBLP:conf/soda/BateniHKM12,DBLP:journals/jacm/BateniHM11,DBLP:conf/soda/BateniCEHKM11,DBLP:journals/talg/BorradaileKM09}.
Planarity can decrease also the exponential time needed to compute exact solutions. The Exponential-Time Hypothesis (ETH) implies for many NP-hard problems that they cannot be solved in time $2^{o(n)}$ on $n$-vertex graphs. An $n$-vertex planar graph has treewidth $\Oh(\sqrt{n})$ and there is a long list of problems that can be solved in time $2^{\Otilde(t)}\cdot n^{\Oh(1)}$ on graphs with treewidth at most $t$ (the $\Otilde$-notation hides polylogarithmic factors). Putting together these two results gives $2^{\Otilde(\sqrt{n})}$ time algorithms on $n$-vertex planar graphs. This form of running time can be considered essentially optimal, as usually it can be shown that no $2^{o(\sqrt{n})}$ time algorithm exists, assuming the ETH. Therefore, the appearance of the square root in the exponent is the natural behavior of the running time for many problems.

This ``square root phenomenon'' has been observed also in the context
of parameterized problems: subexponential running times of the form
$2^{\Otilde(\sqrt{k})}\cdot n^{\Oh(1)}$ and
$f(k)\cdot n^{\Otilde(\sqrt{k})}$ are known for a wide range of
problems, with matching lower bounds ruling out
$2^{o(\sqrt{k})}\cdot n^{\Oh(1)}$ and $f(k)\cdot n^{o(\sqrt{k})}$ time
algorithms, assuming the ETH \cite{DBLP:conf/focs/MarxPP18,DBLP:journals/siamcomp/ChitnisFHM20,DBLP:conf/focs/FominLMPPS16,DBLP:conf/esa/MarxP15,DBLP:conf/soda/KleinM14,DBLP:conf/icalp/KleinM12,DBLP:conf/icalp/Marx12,DBLP:conf/fsttcs/LokshtanovSW12,DBLP:journals/algorithmica/Verdiere17,DBLP:conf/stoc/Nederlof20a}.
The lower bounds usually follow a well-understood methodology: either they can be obtained from known
planar NP-hardness proofs or they are based on grid-like W[1]-hardness
proofs (see \cite[Section 14.4.1]{CyganFKLMPPS15}). On the other hand,
there is no generic argument for obtaining subexponential
running times with the square root in the exponent. As treewidth can be much larger than the parameter $k$ (representing for example the size of the solution we are looking for), the fact that planar graphs have treewidth $\Oh(\sqrt{n})$ does not help us on its own. While there are
certain algorithmic ideas that were used multiple times, the
algorithms are highly problem-specific and they do not give a general
understanding of why planar parameterized problems should have
subexponential algorithms on planar graphs. In fact, there is an example of a natural
problem, \textsc{Steiner Tree}, which was shown not to have
$2^{o(k)}\cdot n^{\Oh(1)}$ time algorithms on planar graphs, assuming
the ETH \cite{DBLP:conf/focs/MarxPP18}. Thus we cannot take it for granted that every reasonable
parameterized problem has subexponential algorithms on planar graphs.

There is a set of basic problems for which we can obtain subexponential parameterized algorithms using an argument called {\em bidimensionality}\cite{DBLP:journals/siamcomp/FominLST20,DBLP:conf/soda/FominLS12,DBLP:conf/soda/FominLRS11,DBLP:conf/esa/FominGT09,DBLP:journals/cj/DemaineH08,DBLP:journals/combinatorica/DemaineH08,DBLP:journals/siamdm/DemaineFHT04}. Let us consider for example the \textsc{Feeback Vertex Set} problem, where given a graph $G$ and an integer $k$, the task is to find a set $Z$ of at most $k$ vertices that hits all the cycles, that is, $G-Z$ is a forest. It is known that if a planar graph $G$ has treewidth $t$, then $G$ contains an $\Omega(t)\times \Omega(t)$ grid minor, which shows that there are $\Omega(t^2)$ vertex-disjoint cycles, requiring at least that many vertices for the solution. Therefore, there is a constant $c$ such that if $G$ has treewidth at least $c\sqrt{k}$, then we can safely answer ``no''; otherwise, if treewidth is at most $c\sqrt{k}$, then 
a standard $2^{\Otilde(t)}\cdot n^{\Oh(1)}$ treewidth based algorithm achieves running time $2^{\Otilde(\sqrt{k})}\cdot n^{\Oh(1)}$. There are a few other problems that can be handled by similar arguments (for example, \textsc{Independent Set}, \textsc{Dominating Set}, \textsc{Longest Path}, \textsc{Longest Cycle}), but the technique is not very robust and does not extend to most generalizations of the problems. Two well-studied generalizations of \textsc{Feedback Vertex Set} are \textsc{Odd Cycle Transversal} (\textsc{OCT}, hit every cycle of odd length) and \textsc{Subset Feedback Vertex Set} (\textsc{SFVS}, hit every cycle that contains at least one vertex of the terminal set $T$). Now a large grid minor does not imply an immediate answer to the problem: it could be that every cycle in the grid minor is of even length or they do not contain any terminal vertices.

The main contribution of this paper is formulating a family of deletion-type problems and showing that each such problem can be solved in time $n^{\Oh(\sqrt{k})}$ on planar graphs. Our original motivation comes from cycle-hitting problems such as \textsc{OCT} and \textsc{SFVS}, but our framework is much wider than that: it also includes problems such as \textsc{Vertex Multiway Cut} and \textsc{Component Order Connectivity} (where the task is to delete $k$ vertices such that every component has size at most $t$). A key feature of our framework is the ability to reason about problems that are defined in terms of the 2-connected components of the graph $G-Z$ resulting after the deletions. For example, \textsc{SFVS} can be defined as saying that in $G-Z$ every 2-connected component is either of size at most 2 or terminal free. \textsc{Odd Subset Feedback Vertex Set}, the common generalization of \textsc{OCT} and \textsc{SFVS} can be defined as saying that every 2-connected component is either bipartite or terminal free. The most novel technical ideas of our proofs are related to handling these constraints on 2-connected components.

It may look underwhelming that we are targeting running time $n^{\Oh(\sqrt{k})}$ for the above-mentioned problems: after all, they are known to be fixed-parameter tractable, many of them with $2^{\Otilde(k)}\cdot n^{\Oh(1)}$ time algorithms, hence the natural expectation is that they can be solved in time $2^{\Otilde(\sqrt{k})}\cdot n^{\Oh(1)}$ on planar graphs. However, let us observe that a polynomial kernelization together with an $n^{\Oh(\sqrt{k})}$ time algorithm delivers precisely this running time. A {\em polynomial kernelization} of problem $\Pi$ is a polynomial-time algorithm that, given an instance $x$ of $\Pi$ with parameter $k$, produces an equivalent instance $x'$ of $\Pi$ with parameter $k'$ such that $|x'|=|k'|^{\Oh(1)}$ and $k'=k^{\Oh(1)}$. In other words, the kernelization compresses the problem to a small instance $x'$ whose size is polynomially bounded in the new parameter $k'$ (hence also in the original parameter $k$). We need two additional properties of the kernelization for our applications. First, we require that the parameter be increased only linearly, that is, $k'=\Oh(k)$. Second, we need that if $x$ is a planar instance, then $x'$ is also planar. In general, a kernelization for a problem defined on general graphs does not need to keep planarity during the compression; therefore, we need a kernelization for the {\em planar version} of the problem. If we perform such a kernelization and then we solve the resulting instance $(x',k')$ using our algorithm, then the running time is $|x'|^{\Otilde(\sqrt{k'})}=k^{\Otilde(\sqrt{k})}=2^{\Otilde(\sqrt{k})}$, plus the polynomial running time of the kernelization. Thus our $n^{\Oh(\sqrt{k})}$ time algorithm can be seen as an important step towards obtaining subexponential FPT algorithms. As we shall see, for some problems the required kernelization results already exist (or existing results can be adapted to preserve planarity), hence new subexponential FPT algorithms follow from our work.

In order to appreciate our technical contributions, it is useful to take the following perspective. We can classify the problems considered here into three groups, characterized by a distinctive set of tools required to obtain subexponential FPT algorithms. Below we briefly highlight the technical tools that become relevant, and then proceed with a more detailed explanation. The first technique is well known, for the second we need to put together known pieces and extend them, and we use a significantly novel approach for the third technique. 

\begin{itemize}
 \item \textbf{Technique 1: Bidimensionality.} As explained above, for some problems (e.g., \textsc{Feedback Vertex Set}), we can exploit the fact that the answer is trivial if there is a grid of size $\Omega(\sqrt{k})\times \Omega(\sqrt{k})$, hence we may assume that treewidth is $\Oh(\sqrt{k})$. Then a $2^{\Otilde(t)}\cdot n^{\Oh(1)}$ algorithm on graphs of treewidth $t$ gives an algorithm with running time $2^{\Otilde(\sqrt{k})}\cdot n^{\Oh(1)}$.

 \item \textbf{Technique 2: Kernelization, contraction decomposition, guessing.} A contraction decomposition theorem can be ue used to find a set $A$ that has intersection $\Oh(\sqrt{k})$ with the solution $Z$. After guessing this intersection $A\cap Z$, we can remove it from the graph. We know that each component of $A\setminus Z$ is disjoint from the solution and in some problems such as \textsc{OCT}, the component can be contracted and represented by a single vertex (after appropriate modifications).
  If the contraction decomposition theorem ensures that the contracted graph has treewidth $\Oh(\sqrt{k})$, then the procedure involves guessing $n^{\Oh(\sqrt{k})}$ possibilities, followed by a $2^{\Otilde(\sqrt{k})}\cdot n^{\Oh(1)}$ time treewidth algorithm. Together with a kernelization, this is sufficient for our purposes. For edge-deletion problems, the required contraction decomposition theorems are readily available. While the technique is simple and may be implicit in earlier work, our main technical contribution for this group of problems is precisely formulating the required contraction decomposition theorem for vertex-deletion problems and proving it for planar graphs.

 \item \textbf{Technique 3: Guessing the 2-connected components.} The technique of solving \textsc{OCT} sketched above essentially relied on the assumption that if a set of edges is disjoint from the solution, then they can be contracted and represented by a single vertex. For problems that are defined through the 2-connected components of the remaining graph $G-Z$, it is not clear how such a contraction could be performed, as it could significantly change the structure of 2-connected components. Nevertheless, our main contribution for this group of problems is a way of making the approach via contraction work. If a connected set $I$ of vertices is disjoint from the solution $Z$, then the vertices of $I$ are contained in some number of 2-connected components of $G-Z$. We show that we can identify polynomially many possibilities for the union $B$ of all (but one) of these components.
  Then we can contract $I$ to a vertex $v_I$ and implement the problem as a binary constraint satisfaction (CSP) instance on the contract graph. The variable $v_I$ stores the correct choice of this union $B$ and constraints ensure that this choice is consistent with the 2-connected structure of $G-Z$.
\end{itemize}

To formally state our main technical results, we need to introduce the framework of constraint satisfaction problems (CSPs). A {\em binary CSP} (or {\em 2-CSP}) instance consists of a set $X$ of variables, a domain $D$, and a set of constraints on two variables. A constraint $((x_1,x_2),R)$ requires that in a satisfying assignment $\alpha:X\to D$, we have $(\alpha(x_1),\alpha(x_2))\in R$. We say that a relation $R$ is a {\em permutation} if for every $x\in D$ there is at most one $y_1\in D$ with $(y_1,x)\in R$ and at most one $y_2\in D$ with $(x,y_2)\in R$. {\em Permutation CSP} is an instance where every relation is a permutation. Given a CSP instance the {\em constraint graph} or {\em primal graph} is the graph with vertex set $X$ where two vertices are adjacent if there is a constraint on them. With some abuse of notation, we often identify the variables and constraints with the vertices and edges of the constraint graph, respectively.

The most basic question about CSP is satisfiability. Observe that this is trivial for a permutation CSP, as the value of a variable uniquely determines the values of the neighboring variables. However, for unsatisfiable instances, maximizing the number of satisfied constraints or minimizing the number of unsatisfied constraints is still a non-trivial task. We can also define these optimization problems using the language of deletion problems, which is more natural in our applications. In the \textsc{(Perm) CSP Edge Deletion} problem, we need to make a (permutation) CSP instance satisfiable by removing at most $k$ constraints. This problem is often called the \textsc{Unique Label Cover} problem in the approximation algorithms literature \cite{DBLP:conf/coco/Khot10,DBLP:conf/coco/Khot02,DBLP:conf/esa/LokshtanovRS17,DBLP:journals/jacm/AroraBS15}. In \textsc{(Perm) CSP Vertex Deletion}, the instance needs to be made satisfiable by removing at most $k$ variables (together with all the constraints appearing on them); this problem is also called \textsc{Node Unique Label Cover}. Given a tree decomposition of width at most $t$ of the constraint graph, these CSP deletion problems can be solved in time $|D|^{\Oh(t)}\cdot n^{\Oh(1)}$ using standard dynamic programming techniques. This algorithm will be a basic tool in our results.

To express a wider range of problems, we augment the CSP deletion problems with size constraints. Let $G$ be the constraint graph and let $Z$ be the set of edges or vertices that we delete from the instance. Let $C$ be a connected component of the constraint graph after the removal of $Z$. A size constraint may restrict the size of such a component, or the number of certain types of vertices that appear in a component, or the number of times certain values can appear in the component, etc. We define size constraints in the following very general way. Let $w:X\times D\to \mathbb{N}$ be a function and let $q\in \mathbb{N}$. Then an assignment $\alpha:X\to D$ satisfies the $\le$-constraint $(w,q)$ on component $C$ if $\sum_{x\in C}w(x,\alpha(x))\le q$. We say that the constraint is $Q$-bounded if $q\le Q$. A $\ge$-constraint $(w,q)$ is defined similarly. A CSP deletion instance can be augmented with more than one size constraint, in particular, a $\le$- and a $\ge$- constraint together can express equality.

Our first main technical result considers permutation CSP deletion problems and formalizes Technique 2 described above.
\begin{theorem}\label{thm:intromain1}
 \textsc{Perm CSP Edge Deletion} and \textsc{Perm CSP Vertex Deletion} with a set $\mathcal{S}$ of $Q$-bounded $\le$- or $\ge$-constraints can be solved in time
 $(|X|+|D|+Q^{|\mathcal{S}|})^{\Oh(\sqrt{k})}$ if the constraint graph is planar.
\end{theorem}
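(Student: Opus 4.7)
The plan is to combine a planar contraction decomposition theorem with branching and treewidth-based dynamic programming. I describe the vertex-deletion variant in detail; the edge-deletion variant is analogous but uses the classical Demaine--Hajiaghayi--Mohar edge contraction decomposition in place of its vertex counterpart.

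First, I would invoke a vertex contraction decomposition: for any planar $G$ and integer $p$, the vertex set admits a partition $V(G)=V_1\cup\cdots\cup V_p$ such that contracting every connected component of $G[V_i]$ in $G$ yields a graph of treewidth $\Oh(p)$. Setting $p=\Theta(\sqrt{k})$ and fixing any optimal solution $Z^\star$ with $|Z^\star|\le k$, pigeonhole gives an index $i^\star$ with $|V_{i^\star}\cap Z^\star|\le k/p=\Oh(\sqrt{k})$. Enumerating all $p$ indices together with all subsets of $V_i$ of size $\Oh(\sqrt{k})$ produces $n^{\Oh(\sqrt{k})}$ candidate guesses for the pair $(V_{i^\star},\,V_{i^\star}\cap Z^\star)$.

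For each guess $(V_i,Z_i)$ I delete $Z_i$, decrease the budget by $|Z_i|$, and commit to $Z\cap(V_i\setminus Z_i)=\emptyset$. Since every constraint is a permutation, specifying a value at one vertex of a connected subgraph $I\subseteq V_i\setminus Z_i$ forces the values on all of $I$ by propagation, so I contract each connected component $I$ of $G[V_i\setminus Z_i]$ into a single vertex $v_I$ whose effective domain consists of the starting values $d\in D$ that propagate consistently through $I$. Each edge leaving $I$ becomes an induced permutation constraint between $v_I$ and its other endpoint, and for every size constraint $(w,q)\in\mathcal{S}$ I precompute the contribution $\sum_{x\in I}w(x,\alpha_d(x))$ of $v_I$ under each choice $d$. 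A tree decomposition of the resulting constraint graph $H$ of width $\Oh(\sqrt{k})$ is obtained from a tree decomposition of width $\Oh(p)$ of the fully contracted graph by splitting each super-vertex into its surviving sub-components; since there are only $\Oh(\sqrt{k})$ removed vertices in $V_i$, bag sizes grow by only $\Oh(\sqrt{k})$.

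Finally, I would solve the reduced instance by dynamic programming on this tree decomposition. A bag state records, for every bag vertex, whether it is deleted and otherwise its assigned value; the partition of non-deleted bag vertices into partial connected components of $G-Z$; and for each such partial component and each $(w,q)\in\mathcal{S}$ the running total in $\{0,\ldots,Q\}$, which is checked against $q$ when the component ``closes''. The per-bag state space has size $(|D|+1)^{\Oh(\sqrt{k})}\cdot Q^{|\mathcal{S}|\cdot\Oh(\sqrt{k})}$, and combined with the $n^{\Oh(\sqrt{k})}$ guessing overhead this yields the claimed running time of $(|X|+|D|+Q^{|\mathcal{S}|})^{\Oh(\sqrt{k})}$. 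The main technical obstacle is the vertex contraction decomposition theorem for planar graphs itself: the classical theorem is stated for edge partitions, and the vertex analog must be formulated and proved so that contracting connected vertex components still produces treewidth $\Oh(p)$. Secondary care is needed to show that contracting only components of $G[V_i\setminus Z_i]$ preserves the treewidth bound, and to verify that the DP correctly accumulates $\mathcal{S}$-weights across partial components that stretch across many bags.
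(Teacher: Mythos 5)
Your high-level plan matches the paper's: plug a planar vertex-contraction decomposition into the standard ``guess the intersection with one class, contract the rest, DP on small treewidth'' schema; the propagation trick for permutation constraints and a DP that tracks partial components together with per-constraint running sums in $\{0,\dots,Q\}$ are precisely what the paper does (cf.\ Theorem~\ref{thm:tw-binary-csp-size-constraints} and the translation to the contracted CSP in Section~\ref{subsec:perm-csp-vertex-deletion}). The edge-deletion case is also as you say, because un-contracting at most $\sqrt{k}$ edges of $E_i$ increases treewidth by at most $\sqrt{k}$.

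There is, however, a genuine gap in the vertex-deletion treewidth step. You claim that a width-$\Oh(\sqrt{k})$ tree decomposition of $G/E(G[V_i\setminus Z_i])$ is obtained from one of $G/E(G[V_i])$ by ``splitting each super-vertex into its surviving sub-components,'' with bags growing by only $\Oh(\sqrt{k})$ because $|Z_i|\leq\sqrt{k}$. This is not true: removing even a single vertex $z\in Z_i$ from a connected component $C$ of $G[V_i]$ can shatter $C$ into arbitrarily many components (for instance if $z$ is a high-degree cut vertex), so the super-vertex $c_C$ must be replaced by an unbounded number of pieces, all possibly adjacent to the same external vertices already in its bags, and the bag blows up. Unlike the edge case, $G/E(G[V_i\setminus W])$ is neither obtainable from $G/E(G[V_i])$ by $\Oh(|W|)$ uncontractions nor is it a minor of $G/E(G[V_i])$, so no quick monotonicity argument applies.

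The paper closes this gap by proving the strictly stronger Theorem~\ref{thm:planarcontr-vert2} (established as Theorem~\ref{thm:contraction-decomposition-planar-vertex}): the partition into residue classes of BFS layers of the radial graph satisfies $\tw\big(G/E(G[V_i\setminus W])\big)=\Oh(\ell+|W|)$ \emph{for every} $W\subseteq V_i$. The proof exploits the specific layered structure: each segment of $V_i\setminus W$ lies in a single layer, the union of $k+1$ consecutive layers is $(k+1)$-outerplanar and hence has treewidth $\Oh(k)$, and an outerplanarity/face-cycle counting bound (Lemma~\ref{la:outerplanar-bound-covered-face-cycles}) shows that only $\Oh(|W|)$ ``branching'' articulation points arise, so the local decompositions can be stitched along a tree after adding $W$ plus these $\Oh(|W|)$ vertices to every bag. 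This is the statement your argument needs, and it does not follow from the bound at $W=\emptyset$ by the splitting argument you sketch.
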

By simple reductions, we can obtain the following corollaries as example applications:
\begin{corollary}\label{cor:intro1}
The following problems can be solved in time $n^{\Oh(\sqrt{k})}$ on planar graphs: 
\textsc{OCT}, \textsc{Group Feedback Vertex Set} (where the group is part of the input), \textsc{Vertex Multiway Cut with Undeletable Terminals}, \textsc{Vertex Multiway Cut with Deletable Terminals}, \textsc{Component Order Connectivity} (where $t$ is part of the input), and the edge-deletion versions of all these problems. 
\end{corollary}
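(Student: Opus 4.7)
The plan is to reduce each listed problem to an instance of \textsc{(Perm) CSP Vertex Deletion} (or its edge analogue) with at most one $Q$-bounded $\le$- or $\ge$-constraint on a planar constraint graph, and then invoke Theorem~\ref{thm:intromain1}. In every reduction the constraint graph will be a minor blow-up of the input graph $G$, preserving planarity, and the parameters $|X|$, $|D|$, $|\calS|$, and $Q$ will all be polynomial in $n$, which delivers the claimed $n^{\Oh(\sqrt{k})}$ bound.

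I would first handle the straightforward cases. For \textsc{OCT}, take $D=\{0,1\}$ and place the permutation $\{(0,1),(1,0)\}$ on every edge; a deletion set $Z$ makes the CSP satisfiable iff $G-Z$ is bipartite. For \textsc{Group Feedback Vertex Set} with input group $\Gamma$, take $D=\Gamma$ and put the permutation $x\mapsto g\cdot x$ on each edge labelled $g$; satisfiability of the deleted instance is equivalent to the absence of non-identity cycles in $G-Z$. For \textsc{Component Order Connectivity} with bound $t$, use a trivial CSP ($|D|=1$, no relations) together with the single $\le$-constraint $(w,t)$ where $w\equiv 1$, which enforces that every component of $G-Z$ has size at most $t$; since $|\calS|=1$ and $Q=t\le n$, Theorem~\ref{thm:intromain1} applies directly.

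For \textsc{Vertex Multiway Cut} with terminals $T=\{t_1,\dots,t_\ell\}$, take $D=\{1,\dots,\ell\}$ and place the identity permutation (equality) on every edge of $G$; the deleted instance is satisfiable precisely when distinct terminals sit in distinct components of $G-Z$, provided each $t_i$ is pinned to the value $i$. I would pin $t_i$ by attaching a small constant-size planar gadget whose permutations composed around a cycle admit only the value $i$ at $t_i$. This suffices for the deletable variant. The main technical obstacle I foresee is the undeletable variant, where we must prevent the algorithm from trivially cutting $t_i$ out of the graph; I plan to handle this by replicating the pinning gadget $k+1$ times at each terminal (equivalently, attaching a small planar cluster of $k+1$ twin vertices each pinned to $i$), so that no solution of size $\le k$ can destroy every copy, while the constraint graph inflates only polynomially and stays planar. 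The edge-deletion analogues of every problem then follow from the same encodings via \textsc{(Perm) CSP Edge Deletion}, and in each case Theorem~\ref{thm:intromain1} produces the desired $n^{\Oh(\sqrt{k})}$ running time.
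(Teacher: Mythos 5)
Your encodings of \textsc{OCT}, \textsc{Group Feedback Vertex Set}, and \textsc{Component Order Connectivity} coincide with the paper's reductions in Section~3.2 (for the last one you should place the trivial permutation $\{(\circledast,\circledast)\}$ on every edge of $G$ so that the constraint graph actually equals $G$; ``no relations'' would leave the constraint graph edgeless and break the component structure). For \textsc{Vertex Multiway Cut} your plan contains a detour and a genuine gap. The detour: the paper's binary CSPs explicitly allow unary constraints, so each terminal $t_i$ can be pinned to value $i$ by a single constraint $(t_i,\{i\})$ --- no permutation-cycle gadget is required. The gap is in your treatment of undeletable terminals.

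Attaching $k+1$ pinned twins of $t_i$ that share $t_i$'s neighborhood does not preserve planarity: if $t_i$ has degree at least $3$ and $k \geq 2$, the $k+1$ twins and their $\geq 3$ common neighbors form a $K_{3,3}$ minor. More fundamentally, no such gadget is needed or used: the formal problem \textsc{Perm CSP Vertex Deletion with Size Constraints} of Section~3.1, which Theorem~\ref{thm:perm-csp-deletion-result} solves in $(|X|+|D|+\|\CS\|)^{\CO(\sqrt{k})}$ time, takes a set $U \subseteq X$ of undeletable variables as part of its input. The paper's reduction for \textsc{Vertex Multiway Cut with Undeletable Terminals} simply sets $U := T$; the deletable variant sets $U := \emptyset$. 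You should invoke that mechanism directly --- there is no obvious planarity-preserving vertex-deletion gadget that protects a vertex of unbounded degree, and the framework was designed so that you never need one.
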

As mentioned earlier, if appropriate kernels are available, then Corollary~\ref{cor:intro1} implies subexponential FPT algorithms on planar graphs. In particular, the kernelization needs to preserve planarity, which is not necessarily a goal in kernelization algorithms designed to work on general graphs. We use/adapt (quasi)polynomial (randomized) kernels from the literature to obtain subexponential FPT algorithms for the following problems (see Section~\ref{sec:kernels} for more details):

\begin{corollary}\label{cor:fptintro1}
The following problems can be solved in randomized time $2^{\Otilde(\sqrt{k})}\cdot n^{\Oh(1)}$ on planar graphs: \textsc{Odd Cycle Transversal}, \textsc{Edge Multiway Cut}, \textsc{Vertex Multiway Cut},  \textsc{Group Feedback Vertex Set} (for a fixed group), \textsc{Group Feedback Edge Set} (where the size of the group is $k^{\Oh(1)}$).
\end{corollary}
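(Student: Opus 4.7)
The plan is to follow the kernelization-plus-subexponential recipe spelled out in the introduction. For each listed problem $\Pi$, Corollary~\ref{cor:intro1} already provides an $n^{\Oh(\sqrt{k})}$-time algorithm on planar inputs. Suppose we can equip $\Pi$ with a (possibly randomized) polynomial-time kernelization that (a) maps planar instances to planar instances, (b) outputs a new parameter $k'$ with $k' = \Oh(k)$, and (c) produces an instance of size $|x'| = k^{\Oh(\textup{polylog}(k))}$. Then running the algorithm of Corollary~\ref{cor:intro1} on the kernel costs
\[
|x'|^{\Oh(\sqrt{k'})} \;=\; k^{\Otilde(\sqrt{k})} \;=\; 2^{\Otilde(\sqrt{k})},
\]
plus the polynomial kernelization time, which is precisely the claimed randomized FPT bound.

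The remaining work is, for each of the five problems, to supply a kernel satisfying (a)--(c); these kernels are assembled in Section~\ref{sec:kernels}. For \textsc{Odd Cycle Transversal}, \textsc{Edge Multiway Cut}, and \textsc{Vertex Multiway Cut}, I would base the kernelization on the randomized polynomial kernels of Kratsch and Wahlstr\"om derived from matroid representative sets, and verify that the underlying reduction steps (irrelevant-vertex arguments, local replacements) can be executed on a planar drawing without destroying planarity and with only a constant-factor increase of the parameter. For \textsc{Group Feedback Vertex Set} with a fixed group $\Sigma$, I would adapt existing Group FVS kernels, exploiting that a constant-size $\Sigma$ makes the label palette constant and the core reductions intrinsically local. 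For \textsc{Group Feedback Edge Set} with $|\Sigma| = k^{\Oh(1)}$, I would instead work on the edge version directly, where the group size enters the kernel size only polynomially.

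The main obstacle is precisely property (a): most off-the-shelf kernels for these problems rely on gadgets (e.g., auxiliary vertices arising from matroid contractions or gammoid constructions) that need not be planar, so a black-box citation is not possible. I plan to address this by replacing each such gadget with a planar equivalent whose size and parameter contribution remain bounded by the required functions of $k$, or by rerouting the kernelization argument through the planar CSP formulation of Theorem~\ref{thm:intromain1} so that the planar structure of the input is preserved throughout the reductions. A secondary challenge is to keep the parameter blow-up strictly linear rather than polynomial: allowing $k' = k^{\Oh(1)}$ would push the final running time above $2^{\Otilde(\sqrt{k})}$, so every reduction rule must be checked to preserve $k$ up to a constant factor and not merely up to a polynomial factor.
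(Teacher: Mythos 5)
Your overall recipe is exactly the paper's: combine the $n^{\Oh(\sqrt{k})}$ algorithms of Corollary~\ref{cor:intro1} with planarity-preserving kernels/compressions that blow up the parameter only linearly and the instance size only quasipolynomially, so that $|x'|^{\Oh(\sqrt{k'})} = 2^{\Otilde(\sqrt{k})}$. You also correctly identify that the entire difficulty sits in property (a), and that for the edge-deletion problems (\textsc{Edge Multiway Cut}, \textsc{Group Feedback Edge Set}) the existing quasipolynomial kernels of Wahlstr\"om already suffice, since they operate by repeatedly contracting edges of a multicut-mimicking network and therefore keep the reduced graph a minor of the input with the same parameter.

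The gap is that for the vertex-deletion problems you only promise to ``replace each gadget with a planar equivalent,'' whereas this replacement is the actual content of the proof and is not routine. The paper's concrete device is to replace the \emph{Torso} step of the Kratsch--Wahlstr\"om kernels (which places a clique on $N(X)$ and destroys planarity) by \emph{Contract to Undeletable}: contract each component of $G-W$ to a single undeletable vertex, so the reduced graph is a minor of $G$. This has two consequences your plan does not account for. First, the number of contracted vertices must be bounded separately; the paper does this via a combinatorial lemma (Lemma~\ref{lemma:bipartite-H-minor}) showing that in an $H$-minor-free bipartite graph with no two vertices on one side having nested neighborhoods, that side has size $\Oh(c_H|X|)$ — and for \textsc{Group FVS} even this fails because parallel edges carry distinct group labels, requiring a different counting argument (Lemma~\ref{lemma:bipartite-H-minor-GFVS}) that costs a factor $|\Sigma|^{\Oh(1)}$ in the kernel size (harmless for fixed $\Sigma$). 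Second, the output necessarily contains undeletable vertices, so one obtains a \emph{compression} into the problem variant with undeletable vertices rather than a kernel; this is fine only because the CSP reductions of Corollary~\ref{cor:intro1} explicitly support undeletable variables, a point that must be checked rather than assumed. Finally, \textsc{OCT} is not kernelized directly but obtained as the special case of \textsc{Group FVS} with $\Sigma = \mathbb{Z}_2$. Without these specific ingredients the plan does not yet constitute a proof.
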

Note that a subexponential FPT algorithm is known for \textsc{OCT} on planar graphs, but it uses more problem specific arguments \cite{DBLP:conf/fsttcs/LokshtanovSW12}. In particular, it is not based on a polynomial kernel for \textsc{OCT}, which was not known when \cite{DBLP:conf/fsttcs/LokshtanovSW12} was published.

Let us turn now our attention towards Technique 3, which allows us to handle problems defined by properties of the 2-connected components of the graph. In order to model such problems, we introduce first the 2-connected version of the permutation CSP deletion problem. In \textsc{2Conn Perm CSP Edge/Vertex Deletion}, we need to delete at most $k$ constraints/variables such that the remaining instance is satisfiable {\em on every 2-connected component} of the constraint graph. Note that even if every 2-connected component is satisfiable, this does not mean that the whole instance is satisfiable: a cut vertex appearing in 2-connected components $C_1$ and $C_2$ may need to receive different values in satisfying assignments of the two components.
Similarly to \textsc{Perm CSP Edge/Vertex Deletion}, we augment the problem with size constraints, but now the constraints restrict the value of $\sum_{v\in C}w(v,\alpha(v))$ on each 2-connected component $C$ and for technical reasons we allow only $\le$-constraints. The main technical result is the following:
\begin{theorem}\label{thm:intromain2}
  \textsc{2Conn Perm CSP Edge Deletion} and \textsc{2Conn Perm CSP Vertex Deletion} with a set $\mathcal{S}$ of $Q$-bounded $\le$-constraints can be solved in time
$\Oh(|X|+|D|+Q^{|\mathcal{S}|})^{\Oh(\sqrt{k})}$ if the constraint graph is planar.
\end{theorem}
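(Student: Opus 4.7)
The plan is to adapt the contraction-decomposition approach of Technique~2 to the 2-connected setting, with the novel ingredient of encoding the 2-connected structure inside contracted pieces via an enlarged CSP domain. Starting from a (vertex) contraction decomposition theorem for planar graphs, we find a vertex set $A\subseteq X$ such that, for any fixed optimal solution $Z$, the intersection $A\cap Z$ has size $\Oh(\sqrt{k})$, while contracting each connected component of $G[A\setminus Z]$ yields a graph of treewidth $\Oh(\sqrt{k})$. Since $A\cap Z$ is not known, we enumerate it in $n^{\Oh(\sqrt{k})}$ ways; after fixing a guess and removing those vertices, the remaining connected pieces $I$ of $A$ are guaranteed to be disjoint from $Z$.

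For each such region $I$, naively contracting $I$ to a single vertex would destroy the 2-connected information the problem depends on, since $I$ can meet several 2-connected components of $G-Z$ simultaneously. Following Technique~3, we therefore enumerate, for each $I$, polynomially many candidates for the union $B\subseteq V(G)$ of all-but-one of the 2-connected components of $G-Z$ intersecting $I$. Given such a choice, we contract $I$ into a super-variable $v_I$ whose enlarged domain records (i) the assignment restricted to the interface $N(I)\setminus I$, (ii) the identity of $B$, and (iii) for each of the $|\calS|$ size-constraints a partial-sum profile $\sum_{v\in I\cap C}w(v,\alpha(v))$ per candidate 2-connected component $C$, truncated at $Q$. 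This yields a domain of size $\Oh(|X|+|D|+Q^{|\calS|})^{\Oh(1)}$, and the contracted graph inherits treewidth $\Oh(\sqrt{k})$.

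We then reformulate the problem as a binary permutation-type CSP on the contracted graph: for each edge $v_Iv_J$ we install constraints that (a)~propagate value consistency across the old interface, (b)~enforce that the $B$-choices of adjacent super-variables combine into a globally coherent assignment of each surviving vertex to a 2-connected component, and (c)~add the weight profiles so that the $\le q$ budget is respected on each 2-connected component. Running the standard $|D'|^{\Oh(t)}\cdot n^{\Oh(1)}$ treewidth dynamic program at $t=\Oh(\sqrt{k})$ with the enlarged domain $D'$ gives the desired bound, once multiplied by the outer enumeration cost and the polynomial number of $B$-candidates.

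The main obstacle is justifying the ``polynomially many $B$'' claim and expressing 2-connectivity as strictly \emph{binary} local constraints. This is precisely where planarity is used: it endows $G-Z$ with a planar embedding in which the blocks touching a connected $Z$-avoiding region $I$ form a tree whose interfaces with the outside are few and well-structured, so that the union $B$ of all-but-one of those blocks is determined by a constant amount of combinatorial data together with a choice of separator or face. Establishing this structural lemma cleanly, and combining it with the vertex-deletion version of the contraction decomposition theorem developed for Theorem~\ref{thm:intromain1}, is the principal work of the proof; the rest reduces to the CSP-on-treewidth machinery already used for Theorem~\ref{thm:intromain1}.
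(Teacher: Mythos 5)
Your high-level plan coincides with the paper's: a vertex contraction decomposition, guessing $Z$'s intersection with one partition class, enumerating body candidates for each resulting segment, rewriting the instance as a binary CSP over the contracted graph, and finishing with a treewidth DP. But the proposed domain for the super-variable $v_I$ has a concrete bug in items (i) and (iii). Storing ``the assignment restricted to the interface $N(I)\setminus I$'' contributes a factor of at least $|D|^{|N(I)|}$ to the domain, which is not bounded by any fixed power of $|X|+|D|+Q^{|\calS|}$ since the interface of a segment can be large. Similarly, a truncated partial-sum profile ``per candidate $2$-connected component $C$'' met by $I$ has up to $|I|$ entries, so the number of profiles grows like $Q^{\Theta(|I|\cdot|\calS|)}$ when $I$ crosses many blocks. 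Both records are unnecessary: $v_I$ should carry only a \emph{single} element of $D$ (a canonical variable of the intersection of $I$ with its root block), because permutation constraints propagate that value across the connected set $I$ restricted to the root block, and the values of $I$ in the \emph{other} blocks it meets need not be recorded at all — those blocks lie in the guessed body $B$, and $B$ can be pre-filtered so that every $2$-connected component of $G[B]$ is already satisfiable and already respects the $\le$-size constraints. For the root block, a single aggregated weight per constraint suffices, and it is cleaner to push it through a size-constraint-aware DP rather than into the domain.

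Two further gaps. First, your constraints (b), which should ``combine $B$-choices into a globally coherent assignment of vertices to $2$-connected components,'' are the technical core and are only gestured at. Making them genuinely binary requires each non-segment vertex to additionally carry the cut vertex connecting its block to the parent block, the value of that cut vertex under the \emph{parent} block's satisfying assignment, and a height counter; these extra coordinates are what let purely local constraints rule out a globally inconsistent ``fake'' block decomposition and enforce that cut vertices receive the two (generally different) values assigned by their two incident blocks. Without something of this shape, the local checks admit spurious solutions. Second, the polynomial bound on the number of body candidates does not come from a planar embedding of $G-Z$: in the paper it comes from the average-degree bound for $H$-minor-free graphs together with a Menger-type counting of disjoint paths from a body to $Z$, which shows that after $\Oh(\sqrt{k})$ extra guessed vertices each segment has a constant-size separator from which its body is recoverable. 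Planarity is used only for the vertex contraction decomposition (Theorem~\ref{thm:planarcontr-vert2}); attributing the body bound to planarity is both inaccurate and would obscure the intended generalization to $H$-minor-free graphs under Conjecture~\ref{conj:contraction-decomposition-minor-vertex}.
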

Again by simple reductions, we can obtain the following corollaries:
\begin{corollary}\label{cor:intro2}
The following problems can be solved in time $n^{\Oh(\sqrt{k})}$ on planar graphs: 
\textsc{Subset Feedback Vertex Set}, \textsc{Two Subset Feedback Vertex Set}, \textsc{Subset OCT}, \textsc{Subset Group Feedback Vertex Set} (where the group is part of the input),  \textsc{$2$Conn Component Order Connectivity} (where $t$ is part of the input), and the edge-deletion versions of all these problems. 
\end{corollary}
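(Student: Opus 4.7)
The plan is to prove Corollary~\ref{cor:intro2} by exhibiting, for each problem listed, a polynomial-time reduction to a planar instance of \textsc{2Conn Perm CSP Vertex Deletion} (or its edge-deletion analogue) equipped with a constant number $|\mathcal{S}|=\Oh(1)$ of $Q$-bounded $\leq$-constraints with $Q=n^{\Oh(1)}$, and then applying Theorem~\ref{thm:intromain2}. Provided the reductions are planarity-preserving, change the parameter $k$ only by a constant factor, and use a domain of size $|D|=n^{\Oh(1)}$, the theorem yields the claimed running time $(|X|+|D|+Q^{|\mathcal{S}|})^{\Oh(\sqrt{k})}=n^{\Oh(\sqrt{k})}$.

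For each problem I would work out the reduction in turn. The template is to build the constraint graph as a planar modification of $G$, choose a permutation CSP that is satisfiable on a 2-connected component exactly when the desired combinatorial property holds on that component, and use size constraints only where the CSP itself cannot express the condition. Concretely: for \textsc{Subset Feedback Vertex Set} I would use the characterization (highlighted in the introduction) that $G-Z$ has no $T$-cycle iff every 2-connected component of $G-Z$ is either of size $\le 2$ or $T$-free; the CSP uses a domain that assigns each terminal a distinct colour and the identity permutation on every edge, so that a 2-connected component containing two different terminal colours becomes unsatisfiable. \textsc{Subset OCT} and \textsc{Two Subset FVS} are handled analogously, combining terminal-colour propagation with swap constraints on each edge to detect odd cycles within the terminal-containing blocks. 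For \textsc{Subset Group Feedback Vertex Set} the CSP domain is the input group, each edge carries the permutation given by its group label, and the terminal-colour gadget is layered on top, so the CSP is satisfiable on a 2-connected component iff the group value around every cycle in that component is the identity. For \textsc{2Conn Component Order Connectivity} the CSP is trivial (singleton domain) and a single size constraint $\sum_{v\in C} 1\le t$ bounds every 2-connected component. The edge-deletion versions are obtained by performing precisely the same reductions but targeting \textsc{2Conn Perm CSP Edge Deletion}.

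In each case one then verifies that $|X|,|D|,Q = n^{\Oh(1)}$ and $|\mathcal{S}|$ is a small constant depending only on the problem, so Theorem~\ref{thm:intromain2} delivers the desired $n^{\Oh(\sqrt{k})}$ bound. The correctness direction (a satisfying assignment plus valid size constraints on the reduced instance $\Leftrightarrow$ a feasible solution to the original problem) is a routine case analysis once the gadgets are in place.

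The main obstacle I expect is in the design of the terminal-handling gadgets: they must (i) preserve planarity, which rules out any construction that attaches a common vertex to many terminals; (ii) capture the condition ``every 2-connected component of size $\ge 3$ is terminal-free'' exactly, rather than the strictly weaker ``contains at most one terminal'' that a naive identity-propagation CSP would give; and (iii) leave the budget $k$ unchanged, so that gadget vertices are effectively undeletable. Resolving (ii) requires using the interplay between the CSP satisfiability condition and the $\le$-size constraints so that bridge-blocks are handled differently from $\ge 3$-blocks; once this is achieved in the \textsc{SFVS} case, the remaining problems inherit the idea with small modifications.
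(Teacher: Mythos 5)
Your overall route is exactly the paper's: each listed problem is encoded as a planar \textsc{$2$Conn Perm CSP Vertex Deletion} instance with $\Oh(1)$ polynomially bounded $\le$-constraints, the constraint graph is the input graph itself (so planarity and $k$ are untouched), and Theorem~\ref{thm:intromain2} is invoked; the edge-deletion versions go through the edge-deletion analogue. The one substantive point of divergence is the gadget for \textsc{Subset FVS} and \textsc{Two Subset FVS}. Your first formulation (distinct colours per terminal, identity permutations, unsatisfiability when two colours meet) only enforces ``at most one terminal per block,'' which is the wrong condition for \textsc{SFVS} and, even for \textsc{Two Subset FVS}, misses the legitimate case of a size-$2$ block containing two terminals; you correctly flag this yourself as obstacle (ii). The paper resolves it entirely inside the permutation CSP, with no size constraints: the domain is $\{\circledast\}\uplus V(G)\uplus E(G)$, every edge carries the identity relation, terminals are forbidden the value $\circledast$, and every vertex may only take the value of an incident edge or itself --- so a block whose common value is an edge $e$ is forced to satisfy $W\subseteq e$, i.e.\ $|W|\le 2$, while a block valued $\circledast$ is terminal-free. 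Your alternative of routing the ``$|W|\le 2$'' escape through a $2$cc $\le$-constraint (e.g.\ weight $1$ on vertices of terminal-valued blocks with bound $q=2$) also works within the framework and is arguably more uniform, at the cost of one extra size constraint; both yield $|D|,Q=n^{\Oh(1)}$ and hence the claimed bound. The remaining reductions (\textsc{Subset Group FVS} via the consistent-labelling lemma with $\circledast$ as the terminal-free escape, \textsc{Subset OCT} as the $\mathbb{Z}_2$ case, \textsc{$2$Conn Component Order Connectivity} via the singleton domain and one size constraint) match the paper's.
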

We can adapt the \textsc{Subset Feedback Vertex Set} kernel of Hols and Kratsch \cite{HolsK18} to preserve planarity, yielding a subexponential FPT algorithm when combined with Corollary~\ref{cor:intro2}. 
\begin{theorem}\label{thm:intosfvs-fpt}
\textsc{Subset Feedback Vertex Set} can be solved in randomized time  $2^{\Otilde(\sqrt{k})}\cdot n^{\Oh(1)}$ on planar graphs. 
\end{theorem}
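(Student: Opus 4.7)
The plan is to combine the $n^{\OO(\sqrt{k})}$ algorithm promised by Corollary~\ref{cor:intro2} with a polynomial kernelization for \textsc{Subset Feedback Vertex Set} (SFVS) that (i) preserves planarity and (ii) blows up the parameter by at most a constant factor. If both ingredients are in place, then running the kernelization on an input $(G,T,k)$ with $G$ planar produces an equivalent planar instance $(G',T',k')$ with $|V(G')|=k^{\OO(1)}$ and $k'=\OO(k)$; applying the algorithm of Corollary~\ref{cor:intro2} to $(G',T',k')$ takes time $|V(G')|^{\OO(\sqrt{k'})} = (k^{\OO(1)})^{\OO(\sqrt{k})} = 2^{\Otilde(\sqrt{k})}$, giving the claimed bound.

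The starting point for the kernelization is the randomized polynomial kernel of Hols and Kratsch~\cite{HolsK18} for general graphs. That kernel is built around a matroid-based reduction that identifies irrelevant vertices via representative sets (relying on cut-covering lemmas and randomized sparsification of the gammoid). The main task, and the hard part of this theorem, is to audit every reduction rule in their construction and verify that it can be executed without destroying planarity, replacing any step that introduces non-planar gadgets. Typical danger points are operations that add a clique on a boundary set of a separator (as often happens when replacing a ``protrusion'' by a representative) or that encode a gammoid by introducing new edges between originally distant vertices. The natural fix is to keep such reductions local: only apply them when the relevant boundary has bounded size and can be replaced by a planar gadget of bounded size realizing the same cut behaviour, using the fact that on planar graphs all relevant separators themselves have nice planar structure.

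Concretely, I would proceed as follows. First, restate the SFVS problem and observe the standard reduction moves (removing vertices not on any $T$-cycle, bypassing degree-2 non-terminals, contracting safe edges in $G-T$) all preserve planarity trivially. Second, for the core representative-set step of Hols--Kratsch, reanalyze the replacement gadget: instead of substituting an arbitrary graph realizing the required cut function on a boundary $B$, substitute a planar gadget (for instance, a bounded-size subgraph with the same $B$-cut matroid found by brute force, which is possible because $|B|=\OO(1)$ after the earlier reductions). Third, verify that after all rules are exhausted the number of remaining vertices is $\OO(k^{c})$ for some constant $c$ and that the parameter has increased by at most a constant factor; the latter is automatic for the deletion-type reductions used here. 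Fourth, plug the resulting planar instance into Corollary~\ref{cor:intro2}.

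The main obstacle, as indicated, is the planarity-preserving replacement in the representative-set reduction: the Hols--Kratsch argument only needs \emph{some} graph realizing the required cut function, but not every cut function on a boundary set is realizable by a planar gadget. I expect this to be resolvable by exploiting that in a planar graph the boundary of a bounded-size separator admits only polynomially many distinct cut behaviours consistent with a planar embedding, and among these the realizable ones can be catalogued and realized planarly using small book-embedding-style gadgets. Once this adaptation is verified, the theorem follows immediately by composing the kernel with Corollary~\ref{cor:intro2}.
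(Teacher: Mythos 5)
Your high-level architecture is exactly the paper's: compose a planarity-preserving polynomial kernel (with parameter not increasing) for \textsc{Subset FVS} with the $n^{\Oh(\sqrt{k})}$ algorithm of Corollary~\ref{cor:intro2}. The composition step and the running-time arithmetic are fine. The gaps are in the kernelization adaptation, which is where essentially all the work of this theorem lives.

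First, your proposed fix for the torso step does not match how the Hols--Kratsch kernel actually operates, and it would not go through as described. Their matroid/representative-set machinery marks a set $W$ of size $k\cdot|T_E|^{\Oh(1)}$ containing some solution, and the torso is then taken onto $W$; the components of $G-W$ that get eliminated do \emph{not} have boundaries of bounded size, so there is no stage at which ``$|B|=\Oh(1)$'' holds and no brute-force catalogue of planar gadgets to fall back on. Moreover, even for bounded boundaries, not every cut function is planarly realizable, and your suggested resolution (counting ``cut behaviours consistent with a planar embedding'' and building book-embedding gadgets) is a research program, not an argument. The paper sidesteps all of this with a much blunter tool: replace \emph{Torso} by \emph{Contract to Undeletable} --- contract each component of $G-W$ to a single undeletable vertex (a minor operation, so planarity and $H$-minor-freeness are automatic), delete any contracted vertex whose neighborhood is contained in another's, and then bound the number of surviving contracted vertices by a combinatorial lemma on $H$-minor-free bipartite graphs with incomparable neighborhoods (Lemma~\ref{lemma:bipartite-H-minor}). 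No cut-function realization is needed. The price is that the output has undeletable vertices, i.e.\ it is a compression rather than a kernel, but the CSP framework of Corollary~\ref{cor:intro2} accepts undeletable variables, so the composition still works (and on planar graphs the undeletable vertices can even be simulated by grid gadgets to recover a genuine kernel).

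Second, you have missed the other planarity-breaking step in Hols--Kratsch. Their reduction of the number of terminal edges to $k^{\Oh(1)}$ passes through a ``pair-constrained'' variant of the problem and, when converting back, \emph{adds new parallel terminal edges} between vertex pairs forced to intersect every solution. This destroys planarity independently of the torso issue. The paper has to redo this entire phase with a new reduction rule based on a priority version of the expansion lemma (Lemmas~\ref{lem:expansion-lemma-priority} and~\ref{lem:extra-expansion-vertex} and Reduction Rule~\ref{rr:expansion-lemma}) that only \emph{deletes} edges and hence keeps the graph a minor of the input. Without handling this phase, your kernel either fails to bound $|T_E|$ (and hence the kernel size, which depends on $k\cdot|T_E|^2$) or fails to stay planar. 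So the proposal as written does not yet constitute a proof; the two concrete repairs above are what the paper supplies.
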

Currently no  polynomial kernel is known for \textsc{Subset Odd Cycle Transversal}. Note that this problem is a common generalization of \textsc{OCT} and \textsc{Subset Feedback Vertex Set}. We could use such a (planarity preserving) kernel and Corollary \ref{cor:intro2} to obtain a subexponential FPT algorithm for \textsc{Subset Odd Cycle Transversal} on planar graphs, which would be a common generalization of the subexponential FPT algorithms for \textsc{OCT} and \textsc{Subset Feedback Vertex Set}.

We would like to emphasize that our framework could be applied to a much wider range of problems than those explicitly stated in Corollaries~\ref{cor:intro1} and \ref{cor:intro2}. For example, we obtain $n^{\Oh(\sqrt{k})}$ time algorithms on planar graphs for problems of the type ``delete $k$ vertices such that every 2-connected component is either (1) bipartite or (2) contains at most 3 red and at most 5 blue terminals.'' A useful feature of our framework is that if two properties can be expressed as the specified permutation CSP with size constraints, then the OR of the properties can be expressed as well (by a permutation CSP on the union of the two domains) and also the AND of the properties (by a permutation CSP on the product of the two domains). However, these results give subexponential FPT algorithms only if suitable polynomial kernels are also available. Our work decouples the question of kernels (which is an interesting question on its own right) from the question of obtaining subexponential running time in planar graphs.

\subsection{Technical overview}
In this section, we give an intuitive overview of the main technical ideas for subexponential FPT algorithms that will be developed in detail in later sections.

\paragraph{Bidimensionality.} Earlier we briefly recalled how bidimensionality can be used to obtain subexponential parameterized algorithms on planar graphs for some problems. As our results concern problems where bidimensionality is not applicable, we do not want to elaborate further on this technique. The following bidimensional problems are listed only to contrast them with later problems that do not have this property.
\begin{itemize}
\item \textsc{Feedback Vertex Set} (remove $k$ vertices to make the graph acyclic). As we have seen, the existence of a $c\sqrt{k}\times c\sqrt{k}$ grid implies that there is no solution of size $k$.
\item \textsc{Even Cycle Transversal} (remove $k$ vertices to destroy every even cycle). It can be shown that a $c\sqrt{k}\times c\sqrt{k}$ grid contains $k$ vertex-disjoint even cycles, hence implying that there is no solution of size $k$.
\item \textsc{$\ge \ell$-Cycle Transversal} (remove $k$ vertices to destroy every cycle of length at least $\ell$). Again, for fixed $\ell$, a $c_\ell\sqrt{k}\times c_\ell\sqrt{k}$ grid minor implies that there is no solution of size $k$,
\item \textsc{Component Order Connectivity} (remove $k$ vertices such that every component has size at most $t$). For fixed $t$, a $c_t\sqrt{k}\times c_t\sqrt{k}$ grid minor implies that there is no solution of size $k$.
\item \textsc{$2$Conn Component Order Connectivity} (remove $k$ vertices such that every 2-connected component has size at most $t$). For fixed $t$, a $c_t\sqrt{k}\times c_t\sqrt{k}$ grid minor implies that there is no solution of size $k$.
\end{itemize}
We remark that if $\ell$ is part of the input, then \textsc{$\ge \ell$-Cycle Transversal} is NP-hard even for $k=0$, as it contains the \textsc{Hamiltonian Cycle} problem.
For \textsc{($2$Conn) Component Order Connectivity}, if $t$ is part of the input, then it becomes W[1]-hard, but our techniques show that they can be solved in time $n^{\Oh(\sqrt{k})}$. As this lower bound is secondary to the main algorithmic message of the paper, the lower bounds for these problems (and their variants) are moved to Appendix~\ref{sec:lower-bounds}.

\paragraph{Edge-Deletion Problems.} The edge-deletion version of \textsc{OCT} (delete $k$ edges to make the graph bipartite, also called \textsc{Edge Bipartization}) will be a convenient example to explain the main ideas behind the $n^{\Oh(\sqrt{k})}$ time algorithm following from Technique 2. For the clean treatment of the problem, let us introduce the following two minor extensions:
\begin{itemize}
\item The input contains a set $U\subseteq E(G)$ of {\em undeletable edges} and the solution $Z\subseteq E(G)$ has to be disjoint from $U$.
  \item The edges are of two types, odd and even, and the task is to destroy every cycle whose total parity is odd.
\end{itemize}
Our first observation is that if we have an undeletable edge $uv$, then we can simplify the instance: let us remove $v$, and for every edge $vw$, let us introduce an edge $uw$ whose parity is the parity of $uv$ plus the parity of $vw$ (see Figure~\ref{fig:edgecontr}). It is not difficult to observe that this does not change the problem: for any closed walk going through $u$ in the new graph, there is a corresponding closed walk with the same parity in the original graph.
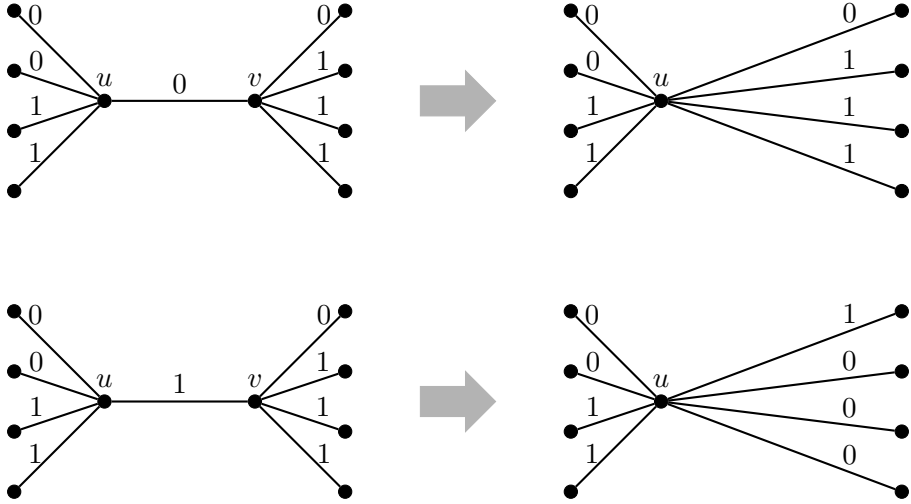
\begin{figure}
 \begin{center}
  \begin{tikzpicture}
   \node[vertex] (u1) at (0,-1.2) {};
   \node[vertex] (u2) at (0,-0.4) {};
   \node[vertex] (u3) at (0,0.4) {};
   \node[vertex] (u4) at (0,1.2) {};
   \node[vertex,label={[label distance=-1pt]90:{$u$}}] (v) at (1.2,0) {};
   \node[vertex,label={[label distance=-1pt]90:{$v$}}] (w) at (3.2,0) {};
   \node[vertex] (u5) at (4.4,-1.2) {};
   \node[vertex] (u6) at (4.4,-0.4) {};
   \node[vertex] (u7) at (4.4,0.4) {};
   \node[vertex] (u8) at (4.4,1.2) {};
   
   \path[draw,thick,-] (v) edge node[pos = 0.5, label={[label distance=-5pt]90:{$0$}}] {} (w);
   \path[draw,thick,-]
    (u1) edge node[pos = 0.2, label={[label distance=-5pt]90:{$1$}}] {} (v)
    (u2) edge node[pos = 0.2, label={[label distance=-5pt]90:{$1$}}] {} (v)
    (u3) edge node[pos = 0.2, label={[label distance=-5pt]90:{$0$}}] {} (v)
    (u4) edge node[pos = 0.2, label={[label distance=-5pt]90:{$0$}}] {} (v)
    (u5) edge node[pos = 0.2, label={[label distance=-5pt]90:{$1$}}] {} (w)
    (u6) edge node[pos = 0.2, label={[label distance=-5pt]90:{$1$}}] {} (w)
    (u7) edge node[pos = 0.2, label={[label distance=-5pt]90:{$1$}}] {} (w)
    (u8) edge node[pos = 0.2, label={[label distance=-5pt]90:{$0$}}] {} (w);
   
   \draw[fill=gray!60,gray!60] (5.4,-0.2) -- (6.0,-0.2) -- (6.0,-0.4) -- (6.4,0.0) -- (6.0,0.4) -- (6.0,0.2) -- (5.4,0.2) -- cycle;
   
   \node[vertex] (x1) at (7.4,-1.2) {};
   \node[vertex] (x2) at (7.4,-0.4) {};
   \node[vertex] (x3) at (7.4,0.4) {};
   \node[vertex] (x4) at (7.4,1.2) {};
   \node[vertex,label={[label distance=-1pt]90:{$u$}}] (y) at (8.6,0) {};
   \node[vertex] (x5) at (11.8,-1.2) {};
   \node[vertex] (x6) at (11.8,-0.4) {};
   \node[vertex] (x7) at (11.8,0.4) {};
   \node[vertex] (x8) at (11.8,1.2) {};
   
   \path[draw,thick,-]
    (x1) edge node[pos = 0.2, label={[label distance=-5pt]90:{$1$}}] {} (y)
    (x2) edge node[pos = 0.2, label={[label distance=-5pt]90:{$1$}}] {} (y)
    (x3) edge node[pos = 0.2, label={[label distance=-5pt]90:{$0$}}] {} (y)
    (x4) edge node[pos = 0.2, label={[label distance=-5pt]90:{$0$}}] {} (y)
    (x5) edge node[pos = 0.2, label={[label distance=-5pt]90:{$1$}}] {} (y)
    (x6) edge node[pos = 0.2, label={[label distance=-5pt]90:{$1$}}] {} (y)
    (x7) edge node[pos = 0.2, label={[label distance=-5pt]90:{$1$}}] {} (y)
    (x8) edge node[pos = 0.2, label={[label distance=-5pt]90:{$0$}}] {} (y);

   \node[vertex] (a1) at (0,-5.2) {};
   \node[vertex] (a2) at (0,-4.4) {};
   \node[vertex] (a3) at (0,-3.6) {};
   \node[vertex] (a4) at (0,-2.8) {};
   \node[vertex,label={[label distance=-1pt]90:{$u$}}] (b) at (1.2,-4) {};
   \node[vertex,label={[label distance=-1pt]90:{$v$}}] (c) at (3.2,-4) {};
   \node[vertex] (a5) at (4.4,-5.2) {};
   \node[vertex] (a6) at (4.4,-4.4) {};
   \node[vertex] (a7) at (4.4,-3.6) {};
   \node[vertex] (a8) at (4.4,-2.8) {};
   
   \path[draw,thick,-] (b) edge node[pos = 0.5, label={[label distance=-5pt]90:{$1$}}] {} (c);
   \path[draw,thick,-]
    (a1) edge node[pos = 0.2, label={[label distance=-5pt]90:{$1$}}] {} (b)
    (a2) edge node[pos = 0.2, label={[label distance=-5pt]90:{$1$}}] {} (b)
    (a3) edge node[pos = 0.2, label={[label distance=-5pt]90:{$0$}}] {} (b)
    (a4) edge node[pos = 0.2, label={[label distance=-5pt]90:{$0$}}] {} (b)
    (a5) edge node[pos = 0.2, label={[label distance=-5pt]90:{$1$}}] {} (c)
    (a6) edge node[pos = 0.2, label={[label distance=-5pt]90:{$1$}}] {} (c)
    (a7) edge node[pos = 0.2, label={[label distance=-5pt]90:{$1$}}] {} (c)
    (a8) edge node[pos = 0.2, label={[label distance=-5pt]90:{$0$}}] {} (c);
   
   \draw[fill=gray!60,gray!60] (5.4,-4.2) -- (6.0,-4.2) -- (6.0,-4.4) -- (6.4,-4.0) -- (6.0,-3.6) -- (6.0,-3.8) -- (5.4,-3.8) -- cycle;
   
   \node[vertex] (d1) at (7.4,-5.2) {};
   \node[vertex] (d2) at (7.4,-4.4) {};
   \node[vertex] (d3) at (7.4,-3.6) {};
   \node[vertex] (d4) at (7.4,-2.8) {};
   \node[vertex,label={[label distance=-1pt]90:{$u$}}] (e) at (8.6,-4) {};
   \node[vertex] (d5) at (11.8,-5.2) {};
   \node[vertex] (d6) at (11.8,-4.4) {};
   \node[vertex] (d7) at (11.8,-3.6) {};
   \node[vertex] (d8) at (11.8,-2.8) {};
   
   \path[draw,thick,-]
    (d1) edge node[pos = 0.2, label={[label distance=-5pt]90:{$1$}}] {} (e)
    (d2) edge node[pos = 0.2, label={[label distance=-5pt]90:{$1$}}] {} (e)
    (d3) edge node[pos = 0.2, label={[label distance=-5pt]90:{$0$}}] {} (e)
    (d4) edge node[pos = 0.2, label={[label distance=-5pt]90:{$0$}}] {} (e)
    (d5) edge node[pos = 0.2, label={[label distance=-5pt]90:{$0$}}] {} (e)
    (d6) edge node[pos = 0.2, label={[label distance=-5pt]90:{$0$}}] {} (e)
    (d7) edge node[pos = 0.2, label={[label distance=-5pt]90:{$0$}}] {} (e)
    (d8) edge node[pos = 0.2, label={[label distance=-5pt]90:{$1$}}] {} (e);
  \end{tikzpicture}
 \end{center}
 \caption{Contracting the undeletable edge $uv$ in the \textsc{Edge Bipartization} problem. The numbers on the edges show their parity.}
 \label{fig:edgecontr}
\end{figure}

The second key step is the use of the following contraction decomposition theorem:
\begin{theorem}[\cite{DBLP:conf/stoc/Klein06,DBLP:conf/focs/Klein05,DemaineHK11,DBLP:journals/combinatorica/DemaineHM10}]
 \label{thm:planarcontr}
 Let $G$ be a planar graph and $\ell \geq 1$.
 In polynomial-time, we can find a partition of the edge set $E(G) = E_1 \uplus E_2 \uplus \dots \uplus E_\ell$ such that, for every $i \in [\ell]$, it holds that
 \[\tw\big(G/E_i\big) =\Oh(\ell).\]
\end{theorem}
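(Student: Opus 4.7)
The plan is to transfer the statement to the planar dual, where edge contraction becomes edge deletion, and then to apply a Baker-style BFS layering. Fix a planar embedding of $G$ and let $G^*$ denote its planar dual. It is classical that contracting $e \in E(G)$ corresponds to deleting the dual edge $e^* \in E(G^*)$, and that the treewidths of a planar graph and its dual differ by at most an additive constant; hence $\tw(G/E_i) \leq \tw(G^* \setminus E_i^*) + \Oh(1)$, where $E_i^* := \{e^* : e \in E_i\}$. It thus suffices to partition $E(G^*)$ in polynomial time into $\ell$ classes $E_1^*, \ldots, E_\ell^*$ with $\tw(G^* \setminus E_i^*) = \Oh(\ell)$ for every $i$, and then take $E_i \subseteq E(G)$ to be the primal preimage.

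To construct the partition, pick an arbitrary root $r \in V(G^*)$ and compute BFS layers $L_0 = \{r\}, L_1, \ldots, L_D$ in $G^*$. Every edge $e^* \in E(G^*)$ has its two endpoints in layers whose indices differ by at most one, so one can define $\mathrm{lvl}(e^*) := \max\{a, b\}$ whenever $e^*$ joins $L_a$ with $L_b$. For $i \in \{1, \ldots, \ell\}$, set $E_i^* := \{e^* \in E(G^*) : \mathrm{lvl}(e^*) \equiv i \pmod{\ell}\}$. BFS and this labeling run in linear time.

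To bound $\tw(G^* \setminus E_i^*)$, observe that for each $j$ with $j \equiv i \pmod{\ell}$, the class $E_i^*$ contains every edge inside $L_j$ as well as every edge between $L_{j-1}$ and $L_j$. Hence no path of $G^* \setminus E_i^*$ can cross from layer $<j$ to layer $\geq j$, and $G^* \setminus E_i^*$ decomposes into connected components each confined to at most $\ell$ consecutive BFS layers. Every such ``slab'' is a planar subgraph of radius at most $\ell$, and by the classical Robertson--Seymour--Thomas bound a planar graph of radius $r$ has treewidth $\Oh(r)$; thus each slab has treewidth $\Oh(\ell)$, and the disjoint union of slabs does too. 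Combined with the duality inequality, this yields $\tw(G/E_i) = \Oh(\ell)$, as required.

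The step requiring most care is the planar duality bookkeeping: bridges in $G$ dualize to loops in $G^*$ (and multi-edges appear where faces share several boundary edges), and one must verify that BFS layering remains well-defined in the resulting multigraph and that the additive duality inequality for treewidth really holds in this level of generality. These are standard technicalities, and the overall argument is the one in the cited references.
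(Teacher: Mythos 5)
Your dualize-then-Baker strategy is a valid and standard proof of this cited theorem; the paper does not reprove Theorem~\ref{thm:planarcontr} itself, but its proof of the vertex-partition analogue (Theorem~\ref{thm:contraction-decomposition-planar-vertex}) follows the same template, using BFS layers in the \emph{radial graph} and Bodlaender's bound on $k$-outerplanar graphs where you use BFS in the dual and a radius bound. The duality bookkeeping you defer at the end (bridges becoming loops, multi-edges, the additive treewidth relation between a plane graph and its dual) is indeed standard and not a problem.

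There is, however, one genuine (though easily fixed) error: the claim that ``every such slab is a planar subgraph of radius at most $\ell$.'' The slab is a \emph{subgraph} of $G^*$ confined to $\ell$ consecutive BFS layers, and once you pass to that subgraph the intrinsic radius can be arbitrarily large (a long path surviving inside a single layer already has linear radius), so the Robertson--Seymour--Thomas bound does not apply to the slab as a standalone graph. To repair it, either (a) contract the prefix $L_0\cup\dots\cup L_{j-1}$ of BFS layers into a single vertex $r'$, note that $G^*/(L_0\cup\dots\cup L_{j-1})$ restricted to $\{r'\}\cup L_j\cup\dots\cup L_{j+\ell-1}$ \emph{does} have radius at most $\ell$ about $r'$, apply the RST radius bound to get treewidth $O(\ell)$, and use monotonicity of treewidth under subgraphs; or (b) argue, as the paper does in its proof of Theorem~\ref{thm:contraction-decomposition-planar-vertex}, that $G^*[L_j\cup\dots\cup L_{j+\ell-1}]$ is $\ell$-outerplanar and invoke Bodlaender's bound $\tw\le 3\ell-1$. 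With either substitution, the remainder of your argument is correct.
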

Let us invoke the algorithm of Theorem~\ref{thm:planarcontr} with
$\ell=\sqrt{k}$ and let $Z\subseteq E(G)$ be a hypothetical solution
of size at most $k$. Then for some $i\in[\ell]$, we have
$|Z\cap E_i|\le \sqrt{k}$. Let us guess this value of $i$ (we have $\sqrt{k}$
possibilities) and let us guess the set of edges $Z_i=Z\cap E_i$
(we have $n^{\CO(\sqrt{k})}$ possibilities). Now we can remove $Z_i$ from the
instance and mark every edge in $E_i\setminus Z$ as undeletable. Then, as
described above, we can modify the instance by contracting every edge
in $E_i\setminus Z$. We argue that the resulting graph
$G'=G/(E_i\setminus Z)$ has treewidth $\Oh(\sqrt{k})$. By
Theorem~\ref{thm:planarcontr}, the graph $G/E_i$ has treewidth
$\Oh(\sqrt{k})$ and $G/E_i$ can be obtained from $G/(E_i\setminus Z_i)$
by $|Z_i|$ further contractions. As contracting an edge can decrease
treewidth only by at most 1 and $|Z_i|\le \sqrt{k}$, it follows that
$\tw(G/(E_i\setminus Z))$ has treewidth $\Oh(\sqrt{k})$. Therefore, using
known $2^{\Oh(\tw(G))}\cdot n^{\Oh(1)}$ time algorithms, we can solve the
resulting equivalent instance on $G'$ in time $2^{\Oh(\sqrt{k})}\cdot n^{\Oh(1)}$.

    \paragraph{Vertex-Celetion Problems.} We can try to handle vertex-deletion problems such as \textsc{OCT} in a similar way, but we need to overcome some technical difficulties. Again, we can extend the problem with parities of the edges and a set $U\subseteq E(G)$ of undeletable edges, meaning that if $uv\in U$, then neither endpoint of $uv$ can be deleted. Then we would need a vertex-partition version of a contraction decomposition, where we contract each connected component of a vertex set $V_i$ into a single vertex (which is the same as saying that we contract every edge induced by $V_i$).
\begin{theorem}\label{thm:planarcontr-vert1}
 Let $G$ be a planar graph and $\ell \geq 1$.
In polynomial time, we can find a partition of the vertex set  $V(G) = V_1 \uplus V_2 \uplus \dots \uplus V_\ell$ such that, for every $i \in [\ell]$, it holds that
 \[\tw\big(G/E(G[V_i])\big)=\Oh(\ell).\]
\end{theorem}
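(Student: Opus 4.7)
The plan is to establish Theorem~\ref{thm:planarcontr-vert1} by adapting the BFS-based proof of the edge-version (Theorem~\ref{thm:planarcontr}) to the vertex setting. First, I would fix a planar embedding of $G$, possibly after augmenting $G$ with auxiliary vertices along the outer face to obtain a 2-connected embedding (an augmentation that preserves planarity and whose auxiliary vertices can be distributed among the $V_i$ arbitrarily after the partition is found). I would then perform BFS from a suitable root $r$, obtaining BFS layers $L_0, L_1, \dots, L_D$, and define the vertex partition by $V_i := \bigcup_{j \equiv i \,(\mathrm{mod}\,\ell)} L_j$ for $i \in \{0, 1, \dots, \ell-1\}$. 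The whole construction is polynomial time and immediately gives a partition of $V(G)$.

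Next, I would analyze the contracted graph $H_i := G/E(G[V_i])$. Every edge of $G$ either connects two consecutive BFS layers or lies inside a single layer; since two consecutive layers are never both in $V_i$ when $\ell \ge 2$, the edges of $G[V_i]$ are exactly the intra-layer edges in those layers $L_j$ with $j \equiv i \pmod \ell$. Consequently, $H_i$ is obtained from $G$ by replacing, for each such $j$, every connected component of $G[L_j]$ by a single super-vertex, while leaving the other layers intact. In particular, $H_i$ is a planar minor of $G$ and inherits a BFS layering $L'_0, L'_1, \dots, L'_D$ from the BFS of $G$, where $L'_j = L_j$ for $j \not\equiv i \pmod \ell$ and $L'_j$ is the set of super-vertices otherwise.

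To bound $\tw(H_i)$ by $O(\ell)$, I would decompose $H_i$ into slabs of $\ell+1$ consecutive BFS layers, $S_m := H_i[L'_{m\ell+i} \cup L'_{m\ell+i+1} \cup \dots \cup L'_{(m+1)\ell+i}]$. Each slab spans $\ell+1$ consecutive BFS layers of a planar graph, so by Klein's layering lemma for planar graphs it admits a tree decomposition of width $O(\ell)$. A tree decomposition of the entire $H_i$ would be assembled by gluing these slab decompositions along the contracted boundary layers $L'_{m\ell+i}$ that adjacent slabs share, exploiting the fact that each such contracted layer is a separator in the planar graph $H_i$.

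The hard part is this gluing step, which must simultaneously preserve the running intersection property and keep the bag size at $O(\ell)$. In general, a contracted layer $L'_{m\ell+i}$ may contain several super-vertices, one per connected component of $G[L_{m\ell+i}]$, so the interface between adjacent slabs is not automatically of size $O(\ell)$. The key technical effort is to exploit the planar embedding together with the BFS structure — in the spirit of the shortest-path separator argument underlying Klein's layering theorem — to show that the super-vertices of a contracted layer can be added to only a controlled set of bags in both adjacent slab decompositions, so that the assembled tree decomposition of $H_i$ has width $O(\ell)$ as required. This is the analog, for vertex contractions, of the structural argument that underlies the planar edge-contraction decomposition of Theorem~\ref{thm:planarcontr}.
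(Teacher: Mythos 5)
Your proposal and the paper's proof share the same overall architecture — layer the planar graph, partition layers by residue modulo $\ell$, bound the treewidth of slabs of $\ell{+}1$ consecutive layers via outerplanarity, and then glue the slab decompositions together — but there is a significant gap precisely at the step you flag as ``the hard part,'' and your choice of layering makes that step harder rather than easier.

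The paper does \emph{not} use ordinary BFS from a root vertex. It uses the radial graph $R(G)$ (vertex set $V(G)\cup F(G)$) and sets $L_j$ to be the vertices at radial distance $2j-1$ from the exterior face $f_0$. This ``peeling'' layering (Observation~\ref{obs:edge-layers-by-exterior-face}) has the crucial property that $L_j$ is \emph{exactly} the set of vertices incident to the exterior face after deleting $L_1,\dots,L_{j-1}$. Consequently, for each connected component $A$ of a slab $G[L_j\cup\dots\cup L_{j+k}]$, the set $L_j\cap A$ is a connected $1$-outerplanar graph forming the outer boundary of $A$, and every connected piece of $A\setminus L_j$ sits inside a unique bounded face of that boundary. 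This structure is what the paper exploits: it builds an auxiliary DAG whose nodes are slab components and contracted segments, and uses Lemma~\ref{la:outerplanar-bound-covered-face-cycles} (an outerplanar face-counting bound) to show that the DAG is a tree when $W=\emptyset$ (and nearly a tree otherwise, with $O(|W|)$ excess). That tree structure is exactly what licenses stitching the slab decompositions together without blowing up the bag sizes.

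With ordinary BFS from a single root vertex, layers do still satisfy $|L(v)-L(w)|\le 1$ for every edge $vw$, and $\ell{+}1$ consecutive BFS layers do have treewidth $O(\ell)$, so the partition and the per-slab bound go through. But an ordinary BFS layer $L_{m\ell}$ is \emph{not} the boundary of anything in a topological sense, so a slab's top layer is not $1$-outerplanar, the components below it are not organized inside faces of a boundary cycle, and there is no obvious tree structure tying slabs together through the contracted layer vertices. Your sketch simply asserts ``exploit the planar embedding together with the BFS structure\dots in the spirit of the shortest-path separator argument,'' but this is not a routine adaptation: the heart of the theorem lives exactly there, and the radial layering is what makes it tractable in the paper's proof. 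Finally, note that the paper proves the strictly stronger Theorem~\ref{thm:planarcontr-vert2} (allowing an arbitrary $W\subseteq V_i$ to be excised before contracting, with an $O(\ell+|W|)$ bound), which is what the rest of the paper actually needs; your outline only addresses the $W=\emptyset$ case.
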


Now we can proceed similarly as in the edge version: we guess an $i$ where $Z_i \coloneqq V_i \cap Z$ has size at most $\sqrt{k}$, guess $Z_i$, and mark every edge in $G[V_i\setminus Z_i]$ as undeletable. Then we can again contract these edges and arrive to an instance on $G'=G/E(G[V_i\setminus Z_i])$. We would need to connect the treewidth of $G'$ with the treewidth of $G/E(G[V_i])$, but this is not as obvious as in the edge-deletion case. We prove the following extension of Theorem~\ref{thm:planarcontr-vert1} that makes the connection apparent.
Intuitively, we can think of $V_i$ as collection of some kind of nested layers (see Figure~\ref{fig:segments}).
Accordingly, we call each component $I$ of $V_i\setminus Z$ a {\em segment.}
\begin{figure}
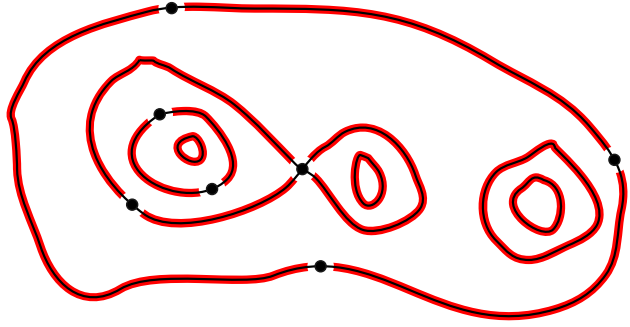

  \begin{center}
    \svg{0.5\linewidth}{segments}
  \end{center}
  \caption{The 7 vertices of $Z\cap V_i$ split $V_i$ into 12 segments.}\label{fig:segments}
\end{figure}

\begin{theorem}\label{thm:planarcontr-vert2}
 Let $G$ be a planar graph and $\ell \geq 1$.
In polynomial time, we can find a partition of the vertex set  $V(G) = V_1 \uplus V_2 \uplus \dots \uplus V_\ell$ such that, for every $i \in [\ell]$ and every $W\subseteq V_i$, it holds that
 \[\tw\big(G/E(G[V_i\setminus W])\big) =\Oh(\ell+|W|).\]
\end{theorem}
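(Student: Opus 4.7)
The strategy is to reuse the partition $V_1,\dots,V_\ell$ furnished by Theorem~\ref{thm:planarcontr-vert1} and argue that the graph $G/E(G[V_i\setminus W])$ differs from the already well-behaved graph $G/E(G[V_i])$ only by an ``uncontraction'' of $O(|W|)$ edges, which can raise the treewidth by at most $O(|W|)$. The key structural input needed from the planar construction is that each connected component $C$ of $G[V_i]$ is ring-like, as illustrated in Figure~\ref{fig:segments}, so that deleting $W\cap C$ creates at most $|W\cap C|+O(1)$ components of $C\setminus W$. Summed over components, this says that the number of components of $V_i\setminus W$ exceeds the number of components of $V_i$ by at most $O(|W|)$; this is where planarity is genuinely used.

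Granting this structural bound, set $G'' \coloneqq G/E(G[V_i])$ (of treewidth $O(\ell)$ by Theorem~\ref{thm:planarcontr-vert1}) and $G' \coloneqq G/E(G[V_i\setminus W])$. For each component $C$ of $G[V_i]$, the vertices of $G'$ that sit ``inside'' the supervertex $v_C$ of $G''$ form the set
\[
R_C \;=\; (W\cap C)\,\cup\,\{v_{C'} : C'\text{ a connected component of }C\setminus W\},
\]
and $G'[R_C]$ is connected because $C$ is connected in $G$ and contracting connected subgraphs preserves connectivity. Hence $G''$ is obtained from $G'$ by contracting a spanning tree of each $G'[R_C]$, for a total of
\[
\sum_C \bigl(|R_C|-1\bigr) \;=\; |W| + \sum_C (r_C - 1)
\]
edge contractions, where $r_C$ is the number of connected components of $C\setminus W$. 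By the structural bound, this total is $O(|W|)$. Since uncontracting a single edge raises treewidth by at most $1$ (replace the contracted vertex in every bag by both of its preimages), iterating yields
\[
\tw(G') \;\le\; \tw(G'') + O(|W|) \;=\; O(\ell+|W|),
\]
which is the desired bound.

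The main obstacle is the structural ring-like control of $\sum_C(r_C-1)$. Concretely, if the partition of Theorem~\ref{thm:planarcontr-vert1} comes from BFS from a root on the outer face with $V_i$ the union of levels at distance $\equiv i \pmod \ell$, then (for $\ell\ge 2$) every edge of $G[V_i]$ lies inside a single BFS level, so each component of $V_i$ sits in a single level $L_k$ and inherits the concentric cyclic/arc structure of that level within the planar embedding. One still has to handle cut vertices, bridges, and nested cycles carefully in order to verify that removing $k$ vertices from such a component creates at most $O(k)$ extra pieces, rather than the unbounded blow-up that an arbitrary graph could produce. Once this planar structural fact is established, the remainder of the argument is routine tree-decomposition bookkeeping, equivalent to starting from a width-$O(\ell)$ tree decomposition of $G''$ and performing the $\sum_C(|R_C|-1)=O(|W|)$ uncontractions one at a time, enlarging each relevant bag by one vertex per uncontraction.
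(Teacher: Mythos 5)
There is a genuine gap, and it is located exactly where you flagged "the main obstacle": the structural claim that each component $C$ of $G[V_i]$ is ring-like, so that deleting $W\cap C$ creates at most $|W\cap C|+O(1)$ pieces, is false for the BFS-layer partition. A BFS layer of a planar graph is only guaranteed to be a connected $1$-outerplanar graph, not a cycle or a path; in particular it can be a tree. Take $G=K_{1,n}$ (or embed a large star entirely inside a single layer of a bigger planar graph): the whole star is one component $C$ of $G[V_i]$, and deleting the single center vertex ($|W|=1$) shatters $C$ into $n$ components, so $\sum_C(r_C-1)=n-1\not=O(|W|)$. Consequently the quantity you actually need to control, the number of uncontractions $\sum_C(|R_C|-1)=|W|+\sum_C(r_C-1)$ separating $G/E(G[V_i\setminus W])$ from $G/E(G[V_i])$, can be $\Omega(n)$ even for $|W|=1$, and the bound $\tw(G')\le\tw(G'')+\sum_C(|R_C|-1)$ — which is correct as a fact, since uncontracting one edge raises treewidth by at most one — only yields $\tw(G')=O(\ell+n)$ rather than $O(\ell+|W|)$. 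The theorem is still true in these examples (the star has treewidth $1$), which shows the failure is in the accounting scheme itself: counting the total number of new components is simply the wrong quantity, and no refinement of the "ring-like" argument for single layers can rescue it.

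The paper's proof circumvents this by never bounding the number of components of $G[V_i\setminus W]$. Instead, it treats these components as articulation points gluing together the inter-layer pieces (each of which is a minor of a $(k+1)$-outerplanar graph and hence has treewidth $O(\ell)$ on its own), and it bounds the total \emph{excess branching} of this gluing structure by $O(|W|)$. The planar input is not that a layer is a cycle, but that each piece strictly inside a layer attaches to that layer along a single \emph{face cycle} of the $1$-outerplanar layer graph; on a cycle, $m$ deleted vertices of $W$ separate it into at most $m$ arcs, so each inner piece sees at most $\max\{|W\cap(\text{its face cycle})|,1\}$ distinct articulation points, and Lemma~\ref{la:outerplanar-bound-covered-face-cycles} sums this over all face cycles to get $13|W|$. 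Deleting $O(|W|)$ articulation points then makes the gluing structure a tree, along which the width-$O(\ell)$ decompositions are stitched, and $W$ together with the deleted articulation points is added to every bag. In the star example this works because the $n$ leaf-components each have in-degree and out-degree one in the gluing DAG and contribute no branching at all. If you want to keep an uncontraction-style argument, you would have to uncontract only $O(|W|)$ of the segment vertices and argue directly that the remaining fine-grained structure still has treewidth $O(\ell)$ — which is essentially what the paper's stitching argument does.
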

This extended stronger bound on treewidth allows the algorithm to go through for the vertex-deletion case without any difficulty. To make this formal, the following corollary combines Theorem~\ref{thm:planarcontr-vert2} with the guessing of $i$ and $V_i\cap Z$.
\begin{corollary}\label{cor:into-guess}
  Let $G$ be a planar graph and $k\ge 1$. In time $n^{\Oh(\sqrt{k})}$, we can find a sequence $V_1,\dots,V_h \subseteq V(G)$ of sets of vertices with $h=n^{\Oh(\sqrt{k})}$ such that for every $Z\subseteq V(G)$ with $|Z|\le k$, the following holds for at least one $1\le i \le h$:
  \begin{enumerate}
  \item $V_i\cap Z=\emptyset$, and
  \item $\tw(G/E(G[V_i]))=\Oh(\sqrt{k})$.
  \end{enumerate}  
\end{corollary}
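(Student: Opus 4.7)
The plan is to combine Theorem~\ref{thm:planarcontr-vert2} with an averaging argument over the parts of the partition and an enumeration over the (small) intersection of $Z$ with the chosen part. First I would invoke Theorem~\ref{thm:planarcontr-vert2} with $\ell := \lceil\sqrt{k}\rceil$, obtaining in polynomial time a partition $V(G) = V_1' \uplus V_2' \uplus \dots \uplus V_\ell'$ such that $\tw\bigl(G/E(G[V_i' \setminus W])\bigr) = \OO(\ell + |W|)$ for every $i \in [\ell]$ and every $W \subseteq V_i'$.

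Next I would define the sequence. For each $i \in [\ell]$ and each set $W \subseteq V(G)$ with $|W| \le \lceil\sqrt{k}\rceil$, add the set $V_i' \setminus W$ to the output list. The total number of candidates is at most $\ell \cdot \sum_{j=0}^{\lceil\sqrt{k}\rceil}\binom{n}{j} = n^{\OO(\sqrt{k})}$, which matches the claimed bound on $h$; the enumeration itself takes time $n^{\OO(\sqrt{k})}$.

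To verify correctness, fix any $Z \subseteq V(G)$ with $|Z| \le k$. Since the parts $V_1',\dots,V_\ell'$ form a partition, by averaging there exists an index $i^* \in [\ell]$ with $|V_{i^*}' \cap Z| \le k/\ell \le \sqrt{k}$. Let $W^* := V_{i^*}' \cap Z$ and consider the set $V := V_{i^*}' \setminus W^*$, which appears in our list. Then $V \cap Z = (V_{i^*}' \setminus W^*) \cap Z = (V_{i^*}' \cap Z) \setminus W^* = \emptyset$, giving condition~(1). For condition~(2), Theorem~\ref{thm:planarcontr-vert2} yields $\tw\bigl(G/E(G[V])\bigr) = \tw\bigl(G/E(G[V_{i^*}' \setminus W^*])\bigr) = \OO(\ell + |W^*|) = \OO(\sqrt{k})$.

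There is no real technical obstacle here: all the difficulty has been absorbed into Theorem~\ref{thm:planarcontr-vert2}, whose strengthened ``robust'' treewidth bound (allowing an arbitrary $W \subseteq V_i$ to be removed before contraction) is precisely what makes the guessing step go through. Without that extension, knowing only $\tw\bigl(G/E(G[V_i'])\bigr) = \OO(\sqrt{k})$ would not directly bound $\tw\bigl(G/E(G[V_i' \setminus W^*])\bigr)$, since removing vertices from a contracted part can in principle break up a ``super-vertex'' into many pieces and increase treewidth. Thanks to the form of Theorem~\ref{thm:planarcontr-vert2}, removing $W^*$ costs only an additive $\OO(|W^*|) = \OO(\sqrt{k})$ in treewidth, keeping the overall bound at $\OO(\sqrt{k})$.
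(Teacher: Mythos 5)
Your proof is correct and follows essentially the same route as the paper: apply Theorem~\ref{thm:planarcontr-vert2} with $\ell=\lceil\sqrt{k}\rceil$, enumerate the index $i$ and the at most $\sqrt{k}$ vertices of $V_i\cap Z$, and use the robust treewidth bound of the theorem to absorb the removed set $W^*$. This matches how the paper combines the contraction decomposition with the guessing step (made formal in Corollary~\ref{cor:segments-and-bodies-for-unknown-solution}).
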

That is, by trying every $V_i$, we are guaranteed to find one that can be safely contracted (as it is disjoint from the solution) and the contracted graph has treewidth $\Oh(\sqrt{k})$.

\paragraph{Permutation CSPs.} The arguments presented above can be generalized to every CSP deletion problem with permutation constraints, that is, for \textsc{Perm CSP Vertex Deletion}. There are two problem-specific points in the algorithm:
\begin{enumerate}
\item[(1)] We used a $2^{\Oh(t)}\cdot n^{\Oh(1)}$ time algorithm on graphs of treewidth $t$.
\item[(2)] We need a way of contracting the vertices in a component $C$ of undeletable edges to a single vertex.
\end{enumerate}
Condition (1) certainly holds for permutation CSP deletion problems over a fixed domain $D$: using standard dynamic programming techniques, we can find a minimum set of deletions that make the instance satisfiable. For (2), let us observe that if $C$ is a set of variables that are connected by an undeletable set of edges, then knowing the value of any variable $x\in C$ allows us to deduce the value of any other variable $y\in C$: the value of a variable uniquely determines the value of all its neighbors. Therefore, we can represent all the variables in $C$ by one of the variables, say $x$, and every constraint involving some $y\in C$ can be replaced by an equivalent constraint involving $x$. Therefore, we get $n^{\Oh(\sqrt{k})}$ time algorithms for \textsc{Perm CSP Edge/Vertex Deletion} on planar graphs the same way as we get such algorithms for \textsc{Edge Bipartization} and \textsc{OCT}.

\paragraph{Size Constraints.} Our goal is to extend the permutation CSP deletion problems with size constraints that restrict the number of appearances of certain values in {\em each connected component} of $G-Z$.
We need to check if (1) and (2) above still holds if we extend the problem with a set $\mathcal{S}$ of $\le$- or $\ge$-constraints, as defined earlier in the introduction. For (1), it is routine to extend a dynamic programming algorithm over a tree decomposition to enforce that a size constraint holds {\em globally} for $G-Z$. If there are $|\mathcal{S}|$ size constraints, and each size constraint bounds a value up to $Q$, then this adds a factor of $Q^{|\mathcal{S}|}$ overhead to the number of states of the dynamic programming table. However, we need to ensure that the size constraints hold for each component separately, and we cannot bound the number of components of $G-Z$. But from the viewpoint of dynamic programming, all that matters is that in a tree decomposition of width $t$, every bag can intersect $t+1$ of the components of $G-Z$. Therefore, it is sufficient to extend the dynamic programming with counters that keep track of each size constraint in at most $t+1$ components. Therefore, the $|\mathcal{S}|$ size constraints add only a factor of $Q^{\Oh(t|\mathcal{S}|)}$ to the running time.

For (2), let us first observe that every variable of $C$ is in the same component of $G-Z$. Therefore, knowing the value of one variable in $C$ allows us to tell how the variables of $C$ contribute to the size constraint. Therefore, when contracting $C$ to a single representative vertex, we can define the size constraint on that variable in a way that faithfully represents the contribution of $C$ to the size constraints.
    
\paragraph{2-Connected Components.} To explain Technique 3, let us consider for example the 
\textsc{Subset Feedback Vertex Set} problem, where we require that every 2-component of $G-Z$ be either of size at most 2, or terminal free.
Now if $I$ is a segment of undeletable edges, then contracting $I$ is highly problematic, as this could change the 2-connected components of $G-Z$ (see Figure~\ref{fig:2conn}). For example, in the figure the color of the rectangles represent the 2-connected components where they belong, but after the contraction of $I$ it is no longer possible to recover that rectangles of different color are supposed to be in different 2-connected components. Nevertheless, we manage to do the seemingly impossible:  given a set $V_i$ of Corollary~\ref{cor:into-guess}, we represent the problem as a CSP instance whose constraint graph is $G/E(G[V_i])$ and hence has treewidth $O(\sqrt{k})$.
\begin{figure}[t]
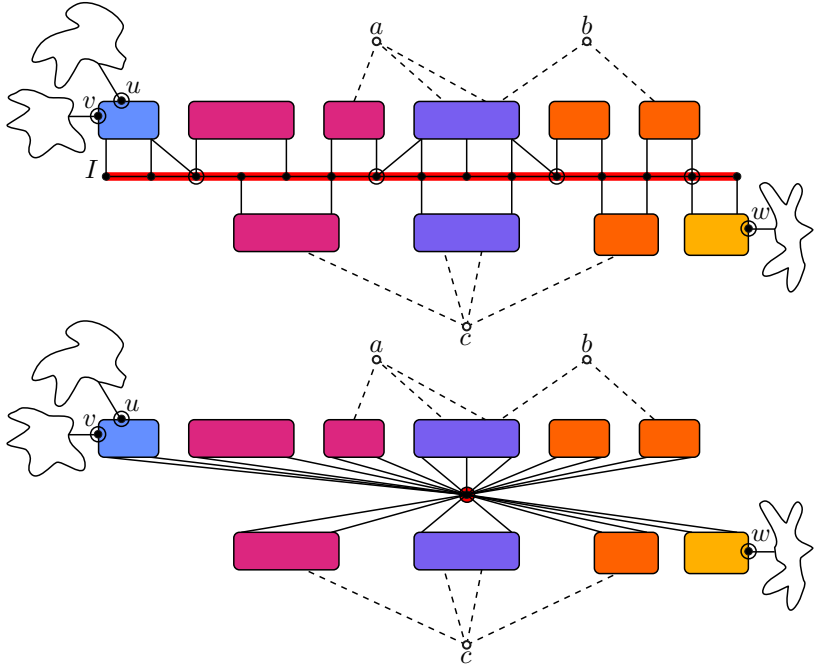

 \begin{center}
  {\small \svg{0.65\linewidth}{2conn}}
 \end{center}
 \caption{Top: After removing $Z=\{a,b,c\}$, segment $I$ is contained in 5 different 2-connected components of $G-Z$, as shown by the colors. Cut vertices are circled black dots; in particular, cut vertices $u,v,w$ connect the 2-connected components visited by $I$ to the rest of the graph. Bottom: after contracting $I$, it is no longer possible to recognize these 2-connected components.}
 \label{fig:2conn}
\end{figure}

A key idea in our handling of this issue is that guessing of a small number of vertices allows us to recover all (but one) of the 2-connected components visited by $I$ in $G-Z$. For example, in the figure, if we know that $a,b,c$ are precisely the vertices of $Z$ that are adjacent to these components, and $u,v,w$ are precisely the cut vertices where these components are connected to the rest of the graph, then this information is sufficient to recover all the 2-connected components visited by $I$ in $G-Z$. If we can argue that there is only a constant number of such vertices that are important to localizing the 2-connected components, then there is only polynomially many possibilities for the 2-connected components where $I$ appears in $G-Z$. Our main technical contribution is showing that (after some preprocessing and careful choice of $V_i$ in Theorem~\ref{thm:planarcontr-vert2}), we may assume that indeed there is only a constant number of such vertices for every segment $I$ of $V_i\setminus Z$.

Let us make these ideas more formal. Let us consider the decomposition of $G-Z$ into 2-connected components; we may assume that this is a rooted decomposition (see Figure~\ref{fig:2connbody}). Let $I$ be a set of vertices such that $G[I]$ is connected, and let us consider the 2-connected components that contain at least one vertex of $I$. There is a unique such component closest to the root, which we call the {\em root} $r(I)$ of $I$. The union of the vertices of every other 2-connected component intersected by $I$ is called the {\em body} $B(I)$ of $I$ (it is possible that $B(I)=\emptyset$). Note that a 2-connected component of $G-Z$ can be the root of multiple segments, can be both the root of some segment and the body of some other, but it can be in the body of only a single component: if a 2-connected component $C$ is in the body of $I$, then $I$ contains the vertex of $C$ joining it with the parent component. The following lemma can be seen as an extension of Corollary~\ref{cor:into-guess}: for each $V_i$, we also produce a collection $\mathcal{B}_i$ of sets that are possible candidates for the bodies of the components of $G[V_i]$.
\begin{figure}[t]
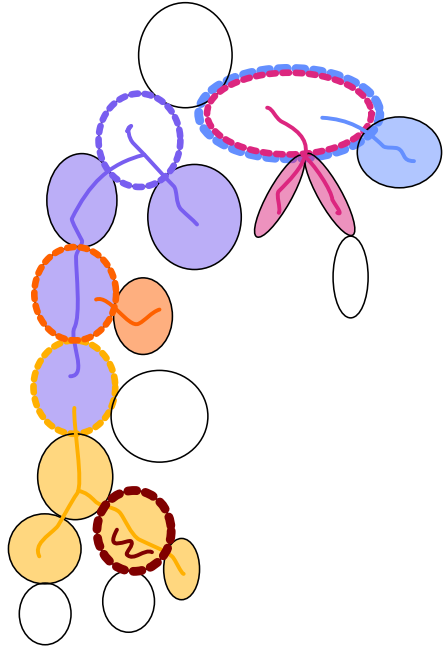

  \begin{center}
    {\small \svg{0.35\linewidth}{bodies}}
  \end{center}
  \caption{The rooted decomposition of the 2-connected components of $G-Z$ with 6 segments (shown by colored lines). The shading shows the bodies of the segments, the root components are shown by highlighted boundaries.}
  \label{fig:2connbody}
  \end{figure}

\begin{lemma}\label{lem:introbodyguess}
  Given a planar graph and an integer $k$, we can compute in time $n^{\Oh(\sqrt{k})}$ a sequence $(V_1,\mathcal{B}_1),\dots,(V_h,\mathcal{B}_h)$ with $h=n^{\Oh(\sqrt{k})}$, where $V_i\subseteq V(G)$ and $\mathcal{B}_i$ contains $n^{\Oh(1)}$ sets of vertices, such that the following holds.
  For every $Z\subseteq V(G)$ with $|Z|\le k$, there is at least one $1\le i \le h$ such that
  \begin{enumerate}
  \item $V_i\cap Z=\emptyset$,
  \item $\tw(G/E(G[V_i]))=\Oh(\sqrt{k})$, and
    \item for every component $I$ of $G[V_i]$, the body of $I$ in $G-Z$ is a set from $\mathcal{B}_i$.
  \end{enumerate}
\end{lemma}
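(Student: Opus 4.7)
The plan is to build directly on Corollary~\ref{cor:into-guess}, which already supplies a sequence $V_1,\dots,V_{h_0}$ with $h_0 = n^{\Oh(\sqrt{k})}$ satisfying properties~1 and~2. The only new content is, for the correct index $i$, to produce a family $\mathcal{B}_i$ of $n^{\Oh(1)}$ vertex subsets that contains the body $B(I)$ of every component $I$ of $G[V_i]$ taken in $G-Z$. First I would set up the structural picture: fix any $Z$ with $V_i\cap Z=\emptyset$, look at the rooted block decomposition of $G-Z$, and let $\mathcal{T}(I)$ denote the set of blocks that intersect a fixed component $I$ of $G[V_i]$. Because $G[I]$ is connected and disjoint from $Z$, $\mathcal{T}(I)$ is a connected subtree of the block tree, and every cut vertex shared by two adjacent blocks of $\mathcal{T}(I)$ necessarily lies on $I$. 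Hence $r(I)\cup B(I)$ is exactly the vertex set of the blocks in $\mathcal{T}(I)$.

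The combinatorial key I would then establish is that this set is determined by a small \emph{interface} around it in $G$: namely, (i) the cut vertex attaching $r(I)$ to its parent block, (ii) the cut vertices of $G-Z$ at which blocks of $\mathcal{T}(I)$ meet blocks outside $\mathcal{T}(I)$, and (iii) the vertices of $Z$ with a neighbour in some block of $\mathcal{T}(I)$. Given the interface, the region $r(I)\cup B(I)$ is reconstructible deterministically from $G$ and $I$ alone: it is the union, together with $I$, of the connected components of $G$ minus the interface that meet $I$; the root block is then identified via the attaching cut vertex from~(i), and subtracting it yields $B(I)$. So once the interface is guessed, the algorithm can compute a candidate body without any knowledge of $Z$.

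The main obstacle, and the technical heart of the lemma, is to prove that after a careful strengthening of the partitioning step inside the proof of Theorem~\ref{thm:planarcontr-vert2} together with a preprocessing of $G$ that collapses uninteresting degree-two paths and other pieces irrelevant to the block structure, one may assume that the interface of every segment $I$ in every $V_i$ has size $\Oh(1)$. I plan to prove this by re-examining the recursive, nested-layer construction indicated by Figure~\ref{fig:segments}: each $V_i$ is assembled from a bounded number of radial layers in the planar embedding, and planarity together with this layered structure limits the number of places where a single segment can be incident to vertices outside it. A cut-vertex counting argument on planar block-cut trees, combined with this layered structure, should yield the $\Oh(1)$ bound, possibly at the price of enumerating an additional bounded number of canonical "marker" vertices that the algorithm attaches to each segment during preprocessing.

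Granting this interface bound, the lemma follows by straightforward enumeration: for each $V_i$ produced by the refined partition theorem, iterate over all $\binom{n}{\Oh(1)}=n^{\Oh(1)}$ candidate interfaces and over all components $I$ of $G[V_i]$, run the deterministic reconstruction above to obtain a candidate body, and add it to $\mathcal{B}_i$. For the correct $i$ and the correct interface choice for each $I$, the true $B(I)$ lies in $\mathcal{B}_i$, so property~3 holds. Both $|\mathcal{B}_i|$ and the time to build $\mathcal{B}_i$ are $n^{\Oh(1)}$, and summing over the $h_0=n^{\Oh(\sqrt{k})}$ candidate sets yields the total running time $n^{\Oh(\sqrt{k})}$ claimed in the lemma.
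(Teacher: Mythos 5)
Your reduction to ``guess a small interface, then reconstruct the body'' has the right general shape, but the step you defer --- that after preprocessing every segment's interface has size $\Oh(1)$, provable by a planar-layering and cut-vertex counting argument --- is both unproven and false for the interface you define. Item (iii) of your interface, the vertices of $Z$ with a neighbour in a block of $\mathcal{T}(I)$, can have size $\Theta(k)$ for a single segment (take $Z$ to be $k$ vertices all attached to one long segment), and no refinement of the partition or local preprocessing removes this. The paper's proof replaces ``adjacent $Z$-vertices'' by a Menger separator between $\widehat{B}(I)$ and $Z$ in $G[\widehat{W}(I)\cup Z]$, whose size $\val(I)$ it does \emph{not} bound per segment; instead it proves the amortized bound $\sum_{I}\max(\val(I)-c_H,0)\le c_H|Z|$ via a minor/average-degree argument (Theorem~\ref{thm:average-degree-H-minor-free}), averages over the $\ell=\lceil\sqrt{k}\rceil$ classes to find one whose total excess is $\Oh(\sqrt{k})$, and folds those excess separator vertices into the globally guessed sequence of length $\Oh(\sqrt{k})$ --- i.e., into the outer $n^{\Oh(\sqrt{k})}$ enumeration, not into a per-segment polynomial enumeration. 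Only after that global guess does each segment have a constant-size residual separator that can be enumerated in $n^{\Oh(1)}$ time per segment. Your sketch contains no substitute for this amortized bound, and a planarity-based per-segment bound cannot exist for the quantity you are counting.

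A second problem is your reconstruction target. You propose to recover $r(I)\cup B(I)$ and then subtract the root block. When $G-Z$ is $2$-connected, $r(I)$ is all of $V(G)\setminus Z$, so recovering $r(I)\cup B(I)$ amounts to recovering $Z$ itself, and the interface needed for it has size up to $k$; this is exactly the obstruction the paper flags in the remark following the lemma statement in the introduction. The paper sidesteps it by guessing a small set $S$ (the residual Menger separator), computing the block decomposition of $G-S$ --- not of $G-Z$ --- guessing a root node, and recording only the union of the non-root blocks meeting $I$ together with the root block's intersection with $I$; the possibly enormous root block of $G-S$ is never reconstructed as a set. You would need to restructure your reconstruction around the body alone in the same way.
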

We remark that it seems unavoidable that only the body of $I$ appears in Lemma~\ref{lem:introbodyguess} and not the union of every 2-connected component visited by $I$. The reason is that it may happen that $G-Z$ is 2-connected, in which case some collection $\mathcal{B}_i$ would need to contain precisely the set $V(G)\setminus Z$. But it is certainly easy to construct an instance that has $n^{O(k)}$ solutions $Z$ for which $G-Z$ is 2-connected. For such an instance, the output of Lemma~\ref{lem:introbodyguess} would need to contain all these $n^{O(k)}$ sets. The definition of body avoids this issue: if $G-Z$ is 2-connected, then the body of $I$ is empty.

\paragraph{Representing a Hypothetical Solution.}
With Lemma~\ref{lem:introbodyguess} at hand, we can translate a \textsc{2Conn Perm CSP Vertex Deletion} instance to a CSP deletion problem the following way. 
We invoke Lemma~\ref{lem:introbodyguess}, and try every $1\le i \le h$.
For a fixed $i$, we define a CSP instance whose constraint graph is $G'=\tw(G/E(G[V_i]))$.
We may assume that every set $B \in \mathcal{B}_i$ has the property that every 2-connected component induced by $B$ is satisfiable in the permutation CSP instance: if not, then this $B$ cannot be the body of any segment in the solution $G-Z$. Let us consider a hypothetical solution $G-Z$, the decomposition of $G-Z$ into 2-connected components, and satisfying assignments of the permutation CSP instances induced by each 2-connected component of $G-Z$. We describe how the variables can represent this solution.
If variable $v_I$ corresponds to the contraction of some component $I$ of $G[V_i]$, then it needs to store the following pieces of information:
\begin{itemize}
\item the value of one of the variables in $r(I)\cap I$ in a satisfying assignment of $r(I)$ (which also determines the value of every other variable in $r(I)\cap I$), and
  \item the body $B\in \mathcal{B}_i$ of $I$ in the hypothetical solution $Z$ that we are looking for.
  \end{itemize}
  Note that $v_I$ {\em does not} contain any information about the values of the variables of $I$ outside $r(I)$. However, this information is not necessary if the guess of the body $B$ is correct: by our assumption on $\mathcal{B}_i$, every 2-connected component induced by $B$ is satisfiable in the permutation CSP. It is the job of the variables corresponding to the vertices of $B\setminus I$ to enforce that the guess of $v_I$ about the body $B$ is correct. This is only possible if the information stored at a variable $v\in B\setminus I$ is sufficient to determine whether $v$ is in the same 2-connected component as a segment $I$, or more precisely, whether $v$ is in the body of $I$. Therefore, if $v\not\in V_i$, then $v$ stores the following pieces of information:
  \begin{itemize}
  \item The value of $v$ in the satisfying assignment of the 2-connected component where $v$ appears.\footnote{This is not well defined for cut vertices, which appear in more than one 2-connected component. Further tricks are needed to handle the value of cut vertices.}
  \item The segment $I$ whose body contains $v$ (if exists).
\item The body $B$ of this segment $I$ (if exists). 
  \item The cut vertex that joins the 2-connected component of $v$ to its parent component.
    \item The level of the 2-connected component of $v$ in the rooted decomposition into 2-connected components.
    \end{itemize}
We show that if the variables store all this information, then binary constraints can ensure that the stored information consistently describe the 2-connected structure of the instance and ensure the correctness of the hypothetical solution $Z$. Therefore, we can solve the \textsc{2Conn Perm CSP Vertex Deletion} instance by checking whether the constructed CSP instance can be made satisfiable by the removal of $k$ variables. As the primal graph of this instance is $G/E(G[V_i])$, Lemma~\ref{lem:introbodyguess} guarantees that it has treewidth $\Oh(\sqrt{k})$, leading to an $n^{\Oh(\sqrt{k})}$ time algorithm.

There is a technical detail, which will become important when introducing size constraints. We said that the values of the variables in the CSP instance describe a decomposition of $G-Z$ into 2-connected components. However, this decomposition may be coarser than the actual decomposition into 2-connected components. Fortunately, this does not cause any problem for \textsc{2Conn Perm CSP Vertex Deletion}: if the instance is satisfiable on every component of the coarser decomposition, then it follows that it is satisfiable on every subset of each such component.

\paragraph{Size Constraints for 2-Connected Components.} 
In the \textsc{$2$Conn Perm CSP Vertex Deletion} problem, we want to introduce size constraints that restrict the appearance of the values in each 2-connected component. Given that the CSP instance constructed above can identify the 2-connected components of $G-Z$, we can augment the dynamic programming algorithm for CSP deletion with size constraints on the 2-connected components. The technical issue mentioned in the previous paragraph does not matter for $\le$-constraints: if such constraints are satisfied for a superset of a 2-connected component, then they are satisfied for the component as well. However, because of this issue, we cannot introduce $\ge$-constraints \textsc{$2$Conn Perm CSP Vertex Deletion}. In particular, we do not have an $n^{\Oh(\sqrt{k})}$ time algorithm for the version of \textsc{$2$Conn Component Order Connectivity} where each 2-connected component has to be of size exactly $t$, with $t$ being part of the input. 

\paragraph{Kernelization.}
There is a somewhat counterintuitive phenomenon when considering kernelization algorithms for restricted graph classes. Even though problem \textsc{Planar $\Pi$} is a special case of problem $\Pi$, a polynomial kernelization for $\Pi$ {\em does not} imply a polynomial kernelization for \textsc{Planar $\Pi$}. The reason is that a kernelization algorithm for general graphs need not produce a planar output when the input is planar, thus it does not necessarily produce an instance of \textsc{Planar $\Pi$}.

There is a particular kernelization step, which we call {\em Mark and Torso}, that was used for many of the problems considered in the paper. Unfortunately, this step can ruin planarity (see Figure~\ref{fig:torso}). Given a set $X$ that is guaranteed to be disjoint from the solution, we put a clique on $N(X)$ and remove $X$ from the graph. Intuitively, we introduce all possible shortcuts for the paths that could go through $X$, which means that $X$ is no longer necessary in the graph. Clearly, adding a large clique ruins planarity. To obtain planarity-preserving kernels, we replace {\em Torso} with what we call {\em Contract to Undeletable}: we contract $X$ to a single vertex and mark it undeletable. This step also retains every path going through $X$, but leaves a somewhat larger graph. However, after some processing (removing vertices with the same neighborhood), we can prove a combinatorial bound showing that in planar graphs (and more generally, in $H$-minor-free graphs) the number of such undeletable vertices that we introduce is polynomial in the size of the rest of the graph.
\begin{figure}[t]
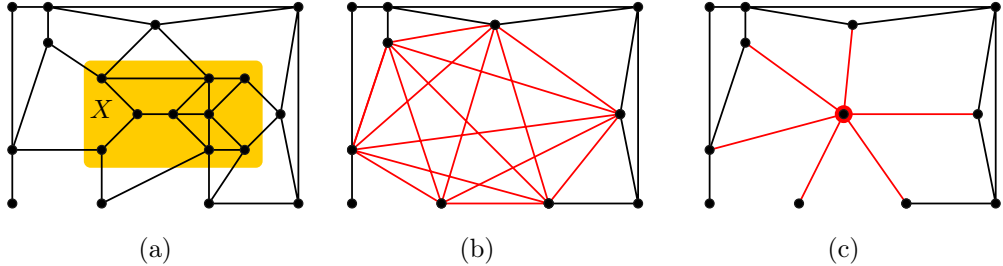

  \begin{center}
    {\small \svg{0.8\linewidth}{torso}}
  \end{center}
  \caption{(a) A set $X$ of vertices that is disjoint from the solution. (b) {\em Torso:} removing $X$ and adding a clique on $N(X)$. (c) {\em Contract to Undeletable:} set $X$ is contracted to a single undeletable vertex.}
  \label{fig:torso}
  \end{figure}

A minor technical issue is that this technique requires that we can mark  vertices as undeletable in the output instance, but some of the known kernelization algorithms do not work if we extend the input with undeletable vertices. This does not cause a problem for our algorithmic applications, but it means that some of the kernelization results should be stated as a compression into the problem extended with undeletable vertices (see Section~\ref{sec:kernels} for details).

There are additional adaptations that need to be done, and additional results that we use from the literature, which we only briefly highlight here.
\begin{itemize}
\item In the \textsc{Subset FVS} kernel of Hols and Kratsch \cite{HolsK18}, there is another step beyond {\em Mark and Torso} that does not preserve planarity. We need to carefully redo that part using different arguments that use and preserve planarity.
\item For \textsc{Group Feedback Vertex Set}, contracting $X$ to a single vertex and bounding the number of such vertices is more complicated, as we have to take into account the group elements on the edges.
\item The recent work of Wahlstr\"om \cite{Wahlstrom20} provides quasipolynomial kernels (which is suitable for our applications), but only for edge-deletion problems.
  \item The work of Jansen et al.~\cite{JansenPL19} provides deterministic kernels, but only for planar graphs and no generalization is known for $H$-minor-free graphs.
\end{itemize}

\paragraph{Extension to $H$-Minor-Free Graphs.} Graphs excluding a fixed graph $H$ as a minor are often considered to be a generalization of planar and bounded-genus graphs \cite{DemaineHK11,DBLP:journals/jacm/DemaineFHT05}. Indeed, many of the good algorithmic and combinatorial properties of planar graphs appear to be generalizable to $H$-minor-free graphs for every fixed $H$: the Graph Minors Structure Theorem of Roberston and Seymour \cite{DBLP:journals/jct/RobertsonS03a} provides a roadmap for proving such results.

While results for $H$-minor-free graphs are (understandably) more complicated to prove than for planar graphs, often the increase of complexity is hidden in ``black box'' results that can be used as convenient tools. Therefore, in addition to increasing the generality of the result, there is another potential motivation for the generalization to $H$-minor-free graphs: to develop cleaner and more robust arguments. Arguments about planar graphs typically rely on topological intuition, which can be misleading and formal proofs require very careful treatment of all possible situations. On the other hand, if an argument for $H$-minor-free graphs only uses some black box results, then it can be actually simpler than its planar counterpart and could reveal better the nature of the problem.

With this goal in mind, we developed our results in a way that would work also for $H$-minor-free graphs. The only point where we use specific geometric properties of planar graphs is in the proof of Theorem~\ref{thm:planarcontr-vert2} and indeed the only missing piece is the generalization of Theorem~\ref{thm:planarcontr-vert2} to $H$-minor-free graphs. For the edge partition version, such a generalization is known:

\begin{theorem}[Demaine et al.\ \cite{DemaineHK11}]
 \label{thm:Hcontr}
 For every fixed $H$, there is a $c_H>0$ such that the following holds. Let $G$ be an $H$-minor-free graph and $\ell \geq 1$.
 In polynomial time, we can find a partition of the edge set $E(G) = E_1 \uplus E_2 \uplus \dots \uplus E_\ell$ such that, for every $i \in [\ell]$, it holds that
 \[\tw\big(G/E_i\big) =c_H\cdot \ell.\]
\end{theorem}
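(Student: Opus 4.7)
The plan is to lift the planar contraction decomposition of Theorem~\ref{thm:planarcontr} to $H$-minor-free graphs via the Graph Minors Structure Theorem (GMST). Fix $H$ and let $h=h(H)$ be a constant large enough for the following. First, using the algorithmic form of the GMST (Robertson--Seymour, made constructive by Grohe--Kawarabayashi--Reed), compute in polynomial time a tree decomposition of $G$ whose torsos are $h$-almost-embeddable: each torso is obtained from a graph embedded in a surface of Euler genus at most $h$ by attaching at most $h$ apex vertices and at most $h$ vortices of pathwidth at most $h$ along disjoint faces, and adjacent torsos share an adhesion of size at most $h$.

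Second, I would prove the contraction decomposition for a single almost-embeddable piece $G'$: an edge partition $E(G')=E_1 \uplus \cdots \uplus E_\ell$ with $\tw(G'/E_i)=O_H(\ell)$. The core case is bounded genus. Delete the apex set $A$ (at most $h$ vertices) and, temporarily, the vortex boundaries, leaving a graph embedded in a surface of genus $\le h$. Root a BFS from an arbitrary vertex, obtain BFS layers $L_0,L_1,\ldots$, and assign each edge between $L_j$ and $L_{j+1}$ (or inside $L_j$) to class $j \bmod \ell$. Contracting $E_i$ identifies $\ell-1$ consecutive BFS layers at a time, producing a graph of radius $O(\ell)$ on a surface of genus $O_H(1)$, and the classical radius-genus-treewidth bound of Eppstein gives treewidth $O_H(\ell)$. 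Adding $A$ back contributes an additive $+h$ to the width, and each vortex of pathwidth $\le h$ can be absorbed into the bags that already cover its attachment face at an additive $O_H(1)$ cost.

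Third, I would glue the per-torso partitions through the clique-sum structure. Edges inside each torso receive a class from the per-torso decomposition; the $O_H(1)$ edges of each adhesion are distributed arbitrarily among the classes. A tree decomposition of $G/E_i$ is then obtained by taking the tree decompositions of the pieces $G_t/E_i$ and joining adjacent bags along their shared adhesion, which adds at most $h$ vertices per bag. Since adhesion sizes and the number of apices/vortices per piece are all bounded by $h$, the global width remains $O_H(\ell)$.

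The hard part, and the technical heart of the Demaine--Hajiaghayi--Kawarabayashi argument, is the bounded-genus step: one must show that contracting most of the BFS layers really does produce bounded treewidth, despite noncontractible cycles that behave very differently from the planar case where Klein's BFS-layering argument is transparent. The standard fix is to cut the surface along a short system of noncontractible noose-curves before (or while) applying BFS, then analyze how the layer-contraction interacts with these cuts so that the genus-$g$ radius-$r$ treewidth bound remains applicable uniformly in the choice of layer class $i$. Handling this interaction cleanly, and ensuring that the cuts and BFS-layering are jointly computable in polynomial time, is where I expect the real work to be.
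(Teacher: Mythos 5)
The paper does not prove this statement at all: Theorem~\ref{thm:Hcontr} is imported verbatim from Demaine, Hajiaghayi and Kawarabayashi~\cite{DemaineHK11} and used as a black box (it reappears as Theorem~\ref{thm:contraction-decomposition-minor-edge-simple}, again only with a citation). So there is no in-paper argument to compare yours against; what you have written is an outline of how the \emph{cited} result is established in the literature, and at that level your outline does track the known strategy (GMST decomposition into almost-embeddable torsos, a contraction decomposition for each piece, gluing along clique-sums).

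As a proof, however, the proposal has genuine gaps, and you have correctly located but not closed the main one. First, the bounded-genus case is asserted, not proven: the claim that contracting all but one residue class of BFS layers on a genus-$g$ surface yields treewidth $O_H(\ell)$ is itself a nontrivial theorem (due essentially to Demaine, Hajiaghayi and Mohar), and ``cut along short noncontractible nooses and re-apply the planar argument'' does not yet work uniformly over the choice of class $i$ — the cutting curves intersect different layer classes differently, and one must argue that the surgery cost is $O_H(1)$ independent of $i$. Second, the vortex step is where \cite{DemaineHK11} spends most of its effort, and ``absorb each vortex into the bags covering its attachment face at additive $O_H(1)$ cost'' is not correct as stated: vortex edges must themselves be assigned to classes, and contracting a class can identify vortex boundary vertices with far-apart surface vertices, destroying the layering on which the radius bound relies. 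Third, in the clique-sum gluing the torso edges on an adhesion are virtual and need not exist in $G$, so ``distributing the adhesion edges among classes'' is not meaningful; the actual difficulty is that contractions performed inside one piece identify adhesion vertices and thereby alter the neighboring pieces, so the per-torso treewidth bounds do not automatically compose. None of these is fatal to the overall strategy — it is the strategy of \cite{DemaineHK11} — but each is a real missing argument rather than a routine verification.
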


This allows us to generalize for example the edge-deletion version of Theorem~\ref{thm:intromain1} to $H$-minor-free graphs. However, we need the vertex partition version of Theorem~\ref{thm:Hcontr} for two reasons: to handle vertex-deletion problems and to handle the 2-connected versions of the problems (where the cut vertices make it necessary to work with a vertex partition, even in the edge-deletion version).
We formulate as a conjecture that Theorem~\ref{thm:planarcontr-vert2} can be generalized in a similar way to $H$-minor-free graphs.

\begin{conjecture}
 \label{conj:contraction-decomposition-minor-vertex}
 For every fixed $H$, there is a $c_H>0$ such that the following holds.
 Let $G$ be an $H$-minor-free graph and $\ell \geq 1$.
 In polynomial time, we can find a partition of the vertex set  $V(G) = V_1 \uplus V_2 \uplus \dots \uplus V_\ell$ such that, for every $i \in [\ell]$ and every $W\subseteq V_i$, it holds that
 \[\tw\big(G/E(G[V_i\setminus W])\big) =c_H\cdot (\ell+|W|).\]
\end{conjecture}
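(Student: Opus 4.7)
The statement is a conjecture, so the plan I describe is speculative; I will outline the attack that seems most natural in light of the tools already used in the paper. My starting point is Theorem~\ref{thm:Hcontr} of Demaine, Hajiaghayi, and Kawarabayashi, which gives the edge-partition analogue of what we want. The overall strategy is to lift this edge partition to a vertex partition while retaining enough ``layered'' structure so that deleting an arbitrary set $W \subseteq V_i$ degrades treewidth only additively in $|W|$, exactly as in the planar proof of Theorem~\ref{thm:planarcontr-vert2}.

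First, I would try to replace the planar BFS layering (which underlies Theorem~\ref{thm:planarcontr-vert2}) with a more general ``layered'' structure on the $H$-minor-free graph $G$. For a fixed excluded minor $H$, the Graph Minors Structure Theorem decomposes $G$ into a clique-sum of pieces, each almost embeddable into a surface of bounded genus (with a bounded number of apex vertices and a bounded number of vortices of bounded width). The plan is to apply a bounded-genus analogue of the BFS-layering argument inside each piece: cut the surface along $\Oh_H(1)$ short non-separating cycles to reduce to a planar-like piece, then use concentric layers around a root to produce sets $V_1^{(j)}, \dots, V_\ell^{(j)}$ within the $j$-th piece whose contraction yields treewidth $\Oh_H(\ell)$. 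The apex and vortex vertices contribute a $\Oh_H(1)$ additive overhead that is absorbed into $c_H$.

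The main technical obstacle, as I see it, is the global alignment of these per-piece partitions. Each almost-embeddable piece produces its own candidate partition, and we need to merge them into a single partition $V_1 \uplus \dots \uplus V_\ell$ of $V(G)$ in such a way that the treewidth bound survives the clique-sum reassembly. The concern is that the adhesions between pieces—which are cliques of bounded size—must remain ``synchronized'' across neighbouring pieces (so that the contracted graph $G/E(G[V_i\setminus W])$ still admits a tree decomposition of width $\Oh_H(\ell)$); and also, the choice of index $i$ assigned to an adhesion vertex cannot be made independently in each incident piece. A plausible way around this is a layered generalization of the clique-sum tree decomposition, in which one first fixes a tree decomposition $(T, \{X_t\}_{t \in V(T)})$ realizing the structure theorem, then chooses the layering inside piece $t$ relative to a root placed at the adhesion to $t$'s parent in $T$; indices $i$ at the adhesion vertices are then inherited from the parent piece.

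Finally, the robustness under deletion of $W$ should follow from a ``piecewise'' argument: in each piece, removing $W \cap V_i$ splits the layered vertex set $V_i$ restricted to that piece into at most $|W \cap V_i|+\Oh_H(1)$ segments, each of which is a union of consecutive layers and hence contracts to an $\Oh_H(\ell)$-treewidth subgraph; summing across pieces (using the tree decomposition) yields the claimed $\Oh_H(\ell+|W|)$ bound. In short, the conjecture reduces to (i) a bounded-genus vertex-partition contraction theorem with the $|W|$-robust bound, and (ii) a clique-sum gluing lemma that preserves the bound. Step (ii) is, in my view, where any attempt is most likely to get stuck, and it is precisely the gap the authors identify as the missing piece for extending the whole framework from planar to $H$-minor-free graphs.
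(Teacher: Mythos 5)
This statement is an unproved conjecture in the paper; the authors supply no proof, stating explicitly that establishing it would require reworking the proof of the Demaine--Hajiaghayi--Kawarabayashi edge-contraction decomposition step by step in the vertex-partition setting, which they leave as beyond the scope of the paper. There is therefore no proof in the paper against which to check your argument.

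That said, your sketch is consistent with the roadmap the authors themselves gesture at (Graph Minors Structure Theorem, bounded-genus layering analogous to the planar BFS layers, absorption of apices and vortices into $c_H$, and a clique-sum gluing lemma), and you correctly identify the clique-sum synchronization across adhesions as the most delicate point. One caveat worth flagging: even step (i), the bounded-genus vertex-partition version with the additive $|W|$ robustness, is not known and is not a routine transfer of the planar proof of Theorem~\ref{thm:contraction-decomposition-planar-vertex}. That proof leans on $1$-outerplanarity of individual layers and a bipartite planarity count (Lemma~\ref{la:outerplanar-bound-covered-face-cycles}) to bound the ``excess in-degree'' of the component DAG by $\mathcal{O}(|W|)$, and these arguments use the planar embedding and Euler-type edge bounds rather directly; replicating them after cutting a genus-$g$ surface open requires care that your outline elides. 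So your plan is a reasonable starting point, but it is an attack plan on an open problem, not a proof, and both of your steps (i) and (ii) would need substantial new work.
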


It is likely that the techniques in the proof of Theorem~\ref{thm:Hcontr}, together with additional ideas from the proof of Theorem~\ref{thm:planarcontr-vert2} could lead to a proof of Conjecture~\ref{conj:contraction-decomposition-minor-vertex}. However, this would require going into the details of the proof of Theorem~\ref{thm:Hcontr}~\cite{DemaineHK11}, carefully checking and adapting every step to the vertex-partition case, which is beyond the scope of this paper. Therefore, in the rest of the paper, we point out whenever a result can be generalized to $H$-minor-free graphs assuming Conjecture~\ref{conj:contraction-decomposition-minor-vertex}. We also make it clear which of the kernelization results keep not only planarity, but $H$-minor-freeness as well.

\subsection{Organization}

The paper is organized as follows. Section~\ref{sec:preliminaries} introduces basic notions related to graph theory, parameterized algorithms, and CSPs.
Section~\ref{sec:perm-csp-results} formally defines the problems considered in the paper, states our results for the technical CSP problems, and briefly goes through how problems of interest can be reduced to these technical problems.
Additionally, we state the kernelization results known from the literature or adapted in this paper, and obtain subexponential FPT algorithms as corollaries.

Technical work is started in Section~\ref{sec:structure}, where we prove our results on contraction decomposition by partitioning the vertices (Theorem~\ref{thm:planarcontr-vert2}) and the guessing of the bodies of the segments (Lemma~\ref{lem:introbodyguess}).
Section~\ref{sec:alg-perm-csp} first presents an algorithm for CSP deletion with size constraints.
Then we use this tool to obtain the algorithms for \textsc{($2$Conn) Perm CSP Edge/Vertex Deletion}.
Section~\ref{sec:kernels} shows how known kernelization results can be adapted to make them planarity preserving (or even preserving $H$-minor-freeness).
Appendix~\ref{sec:lower-bounds} presents the lower bound for \textsc{($2$Conn) Component Order Connectivity}.

\section{Preliminaries}
\label{sec:preliminaries}

\subsection{Basics}

We use $\NN = \{0,1,2,3,\dots\}$ to denote the natural numbers starting from $0$.
For a positive integer $q$, we denote the set $\{1, 2, \dots, q\}$ by $[q]$.

In this article, we consider simple graphs with a finite number of vertices.
For an undirected graph $G$, sets $V(G)$ and $E(G)$ denote its set of vertices and edges, respectively.
Unless otherwise specified, we use $n$ to denote the number of vertices of the input graph $G$.
We denote an edge $e$ with two endpoints $u, v$ as $e = uv$.
Two vertices $u, v$ in $V(G)$ are \emph{adjacent} to each other if $uv \in E(G)$. 
The open neighborhood of a vertex $v$, denoted by $N_G(v)$, is the set of vertices adjacent to $v$ and its degree $\deg_G(v)$ is $|N_G(v)|$.
The closed neighborhood of a vertex $v$, denoted by $N_G[v]$, is the set $N(v) \cup \{v\}$.
We omit the subscript in the notation for neighborhood and degree if the graph under consideration is clear.
For $S \subseteq V(G)$, we define $N[S] = \bigcup_{v \in S} N[v]$ and $N(S) = N[S] \setminus S$.

For $S \subseteq V(G)$, we denote the graph obtained by deleting $S$ from $G$ by $G - S$ and the subgraph of $G$ induced on the set $S$ by $G[S]$. 
For two subsets $S_1,S_2 \subseteq V(G)$, the set $E(S_1,S_2)$ denotes the edges with one endpoint in $S_1$ and another one in $S_2$. 
We say $S_1,S_2$ are adjacent if $E(S_1,S_2) \neq \emptyset$. 
For a subset $F \subseteq E(G)$ of edges we write $V(F)$ to denote the collection of endpoints of edges in $F$.
The subgraph of $G$ with $V(F)$ as its set of vertices and $F$ as its set of edges is denoted by $G[F]$.

A set of vertices $S \subseteq V(G)$ is said to be an \emph{independent set} in $G$ if no two vertices in $S$ are adjacent to each other.
A set of vertices $S$ is a \emph{vertex cover} of $G$ if $V(G) \setminus S$ is an independent set in $G$.
A {\em tree-decomposition} of a graph $G$ is a pair $(T,\beta)$ where $T$ is a tree and $\beta\colon V(T) \rightarrow 2^{V(G)}$ such that (i) $\bigcup_{t \in V(T)}{\beta(t)}=V(G)$, (ii) for every edge $uv \in E(G)$ there is a $t \in V(T)$ such that $\{u, v\}\subseteq \beta(t)$, and (iii) for every  vertex $v \in V(G)$ the subgraph of $T$ induced by the set $\{t\mid v\in \beta(t)\}$ is connected.
The {\em width} of a tree decomposition is $\max_{t\in V(T)}\{|\beta(t)|-1 \}$ and the {\em treewidth} of $G$, denoted by $\tw(G)$, is the minimum width over all tree decompositions of $G$.

For a tree decomposition $(T, \beta)$ we distinguish one vertex $r$ of $T$ which will be the root of $T$.
This introduces natural parent-child and ancestor-descendant relations in the tree $T$.
We say that such a rooted tree decomposition is \emph{nice} if the following conditions are satisfied:
$(i)$ $\beta(r) = \emptyset$ and $\beta(t) = \emptyset$ for every leaf $t$ of $T$.
$(ii)$  Every non-leaf node of $T$ is of one of the following three types:
$(a)$ \emph{Introduce node}: a node $t$ with exactly one child $t'$ such that $\beta(t) = \beta(t') \cup \{v\}$ for some $v \notin \beta(t')$; we say that $v$ is introduced at $t$.
$(b)$ \emph{Forget node}: a node $t$ with exactly one child $t$ such that $\beta(t) = \beta(t') \setminus \{w\}$ for some vertex $w \in \beta(t')$; we say that $w$ is forgotten at $t$.
$(c)$ \emph{Join node}: a node $t$ with two children $t_1$, $t_2$ such that $\beta(t) = \beta(t_1) = \beta(t_2)$.

A {\em path} $P = (v_1,\ldots, v_\ell)$ is a sequence of distinct vertices such that any pair of consecutive vertices are adjacent with each other.
The vertex set of $P$, denoted by $V(P)$, is the set $\{v_1,\ldots, v_\ell\}$.
The vertices $v_1$ and $v_\ell$ are called \emph{endpoints} of the path whereas the other vertices in $V(P)$ are called \emph{internal vertices}.
For two vertices $u,v \in V(G)$, we use $\dist_G(u,v)$ to denote the length of the shortest path with endpoints $u,v$.
A {\em cycle} $C = (v_1,\ldots, v_\ell)$ is a sequence of distinct vertices such that any pair of consecutive vertices and $v_1,v_\ell$ are adjacent with each other.
A graph is {\em connected} if there is a path between every pair of its vertices and it is {\em disconnected} otherwise. 
A subset $S$ of $V(G)$ is a \emph{connected set of vertices} if $G[S]$ is connected.
A {\em connected component} of a graph $G$ is a maximal connected set of vertices.
A {\em cut vertex} of a graph $G$ is a vertex $v$ such that the number of connected components of $G-\{v\}$ is strictly larger than that of $G$.
A connected graph that has no cut vertex is called {\em 2-connected}.
A \emph{$2$-connected component} of $G$ is a maximal subset $C \subseteq VG(G)$ such that $G[C]$ is $2$-connected.

\subsection{Planar Graphs and Minors}

A planar graph is a graph that can be embedded in the Euclidean plane, that is, there exists a mapping from every vertex to a point on a plane, and from every edge to a plane curve on that plane, such that the extreme points of each curve are the points mapped to the endpoints of the corresponding edge, and all curves are disjoint except on their extreme points.
A plane  graph $G$ is a planar graph with a fixed embedding.
Its faces are the regions bounded by the edges, including the outer infinitely large region. 

The {\em contraction} of an edge $uv$ in $G$ deletes vertices $u$ and $v$ from $G$, and adds a new vertex which is adjacent to vertices that were adjacent to $u$ or $v$.
This process does not introduce self-loops or parallel edges.
The resulting graph is denoted by $G/e$.
For $F \subseteq E(G)$, the graph $G/F$ denotes the graph obtained from $G$ by contracting each connected component in the subgraph $G[F]$ to a vertex.
Similarly, for $S \subseteq V(G)$, the graph $G/S$ denotes the graph obtained from $G$ by contracting each connected component in the subgraph $G[S]$ to a vertex.

A graph $H$ obtained by a sequence of such edge contractions starting from $G$ is said to be a contraction of $G$.
A graph $H$ is a minor of $G$ if $H$ is a subgraph of some contraction of $G$.
For a graph $H$, a graph $G$ is said to be $H$-minor-free if $G$ can not be contracted to $H$. 
We use the following result about $H$-minor-free graphs.

\begin{theorem}[Kostochka \cite{Kostochka84}, Thomason \cite{Thomason84}]
 \label{thm:average-degree-H-minor-free}
 Let $G$ be an $H$-minor-free graph.
 Then there is a constant $d_H$ such that
 \[\sum_{v \in V(G)} \deg_G(v) \leq d_H |V(G)|.\]
 Moreover, one can choose $d_H = \CO(h \sqrt{\log h})$ where $h$ denotes the number of vertices of $H$.
\end{theorem}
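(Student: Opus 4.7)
The plan is to prove the equivalent contrapositive: there is an absolute constant $C>0$ such that every graph $G$ of average degree at least $C h \sqrt{\log h}$ contains $K_h$ as a minor. This suffices because every graph $H$ on $h$ vertices is a subgraph of $K_h$, so a $K_h$-minor of $G$ yields an $H$-minor by discarding superfluous edges between branch sets. Since $\sum_{v} \deg_G(v) = 2|E(G)|$, bounding the sum of degrees by $d_H \cdot |V(G)|$ is equivalent to bounding the average degree by $d_H$.

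First, I would reduce to the case of a graph with large minimum degree. Setting $d := (C/2) h\sqrt{\log h}$, iteratively delete any vertex of degree less than $d$ from $G$. The total degree removed along the way is strictly less than $\sum_v \deg_G(v)$, so the resulting subgraph $G'$ is nonempty and has minimum degree at least $d$. Since every minor of $G'$ is a minor of $G$, it suffices to exhibit a $K_h$-minor in a graph of minimum degree $d$.

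Second, the core step is to build $h$ pairwise-adjacent, vertex-disjoint, connected \emph{branch sets} $V_1,\dots,V_h$ inside $G'$. I would proceed greedily. Fix a target branch-set size $s$, and at step $j+1$ search in $G' - (V_1\cup\cdots\cup V_j)$ for a connected subset of size $s$ that has at least one edge to each of $V_1,\dots,V_j$. Natural candidates arise as breadth-first-search neighborhoods of small radius around unused vertices; the minimum-degree assumption guarantees that each existing $V_i$ has external neighborhood of size roughly $sd$, so a random candidate set misses a specific $V_i$ with probability at most $(1-\Omega(d/n))^{s}$, and a union bound over the $h$ predecessors closes provided the parameters are tuned carefully.

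The main obstacle is choosing $s$ so that the probabilistic budget closes tightly: if $s$ is taken too small the union bound over the $h$ predecessors fails, while if $s$ is too large the vertex budget $hs \le n$ is exhausted before all $h$ branch sets are found. The sharp balance $s = \Theta(\sqrt{\log h})$, together with an iterative density-amplification argument ensuring the residual graph remains dense even after successive contractions, is exactly what yields the $\sqrt{\log h}$ improvement over the naive $\Oh(h\log h)$ bound. Carrying out this balance rigorously while maintaining the minimum-degree invariants throughout all $h$ iterations is the technical heart of the result and is Thomason's approach; Kostochka's alternative proof obtains the same asymptotic bound through a combinatorial induction on the number of edges together with a lemma that extracts a denser minor whenever the graph fails a certain local connectivity condition.
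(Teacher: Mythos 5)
The paper does not prove this statement; it is invoked as a classical extremal result of Kostochka and Thomason and used as a black box (e.g.\ in Lemma~\ref{la:high-degree-tree-size}), so there is no in-paper proof to compare your attempt against. Your opening reductions are sound: passing to the contrapositive, replacing $H$ by $K_h$, and extracting a subgraph of minimum degree $d = \Theta(h\sqrt{\log h})$ by repeatedly deleting low-degree vertices are all standard and correct.

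The core step, however, has a genuine gap, and it is precisely the one the real proofs are designed to overcome. Your failure-probability estimate for a size-$s$ candidate set missing a fixed branch set $V_i$ is of the form $(1 - \Omega(d/n))^{s}$, and for a union bound over $j \le h$ predecessors to close you need this to fall below $1/h$, i.e.\ $sd/n \gtrsim \log h$. With $s = \Theta(\sqrt{\log h})$ this forces $n = O(d/\sqrt{\log h})$ --- but the minimum-degree reduction gives you no upper bound on $n$ whatsoever. A graph of minimum degree $d$ can have arbitrarily many vertices (a $d$-regular graph on $n \to \infty$ vertices, say), and for such graphs your candidate set essentially never meets a fixed $V_i$, so the greedy construction stalls at the first step. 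The missing ingredient --- and the actual technical heart of both Kostochka's and Thomason's arguments --- is a preliminary densification: one first contracts down to a minor on $O(d)$ vertices that still has minimum degree $\Omega(d)$, thereby tying $n$ to $d$, and only then does a branch-set construction (greedy, probabilistic, or otherwise) have any chance. As a secondary point, even after $n$ has been controlled, the bound $(1-\Omega(d/n))^{s}$ is not justified for a BFS ball: a BFS ball is far from a uniformly random $s$-subset, and its intersection with $N(V_i)$ is highly correlated across $i$. Your closing paragraph gestures at the right phenomena (iterative density amplification, the $s = \Theta(\sqrt{\log h})$ balance), but as written the proposal defers the entire load-bearing argument to the cited papers rather than supplying it.
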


\subsection{Parameterized Algorithms and Kernelization}

An instance of a \emph{parameterized problem} is of the form
$(I;k)$ where $I$ is an instance of a (classical) decision problem and $k\in\mathbb{N}$
is the \emph{parameter}.
A parameterized problem is said to be \emph{fixed parameter tractable} (\FPT) if there exists an algorithm $\mathcal{A}$, a computable function $f$, and a constant $c$ such
that, given any instance $(I; k)$ of the parameterized problem, the algorithm $\mathcal{A}$ correctly decides whether $(I; k)$ is a \yes-instance in time $\Oh(f(k)|I|^c)$. 
An accompanying theory of hardness can be used to identify parameterized problems that are unlikely to admit \FPT\ algorithms. 
For the purpose of this article, we call the class of such problems as \WH.

Instances $(I;k)$ and $(I';k')$ of a parameterized problem are \emph{equivalent} if $(I;k)$ is a \yes-instance if and only if $(I';k')$ is a \yes-instance.
A \emph{compression} for a parameterized problem is an algorithm $\mathcal{B}$ that, given an instance $(I,k)$ of the problem, works in time $\Oh((|I| + k)^c)$ (for some constant $c$) and returns an equivalent instance $(I', k')$ of some problem.
A compression is said to be a \emph{kernel} if $(I', k')$ is an instance of the same problem.
If there exists a computable function $g$ such that size of an output obtained by algorithm $\mathcal{B}$ for $(I, k)$ is at most $g(k)$, we say that problem admits a compression of size $g(k)$.
If $g(k)$ is a polynomial function, then we say the problem admits a polynomial compression.
If $g(k) = k^{\textup{polylog}(k)}$, then we say the problem admits a quasipolynomial compression.
Consider the parameterized problems whose input contains a graph.
We say a compression is \emph{minor-preserving} if the graph in the reduced instance $(I', k')$ is $H$-minor-free whenever the graph in the input instance $(I, k)$ is $H$-minor-free.
We define similar notion for kernels.

We refer readers to \cite{CyganFKLMPPS15} for a detailed exposition on the subject.

\subsection{Constraint Satisfaction Problems}

A \emph{CSP-instance} is a tuple $\Gamma = (X,D,\CC)$ where $X$ is a finite set of variables, $D$ is a finite domain, and $\CC$ is a set of constraints $c = ((x_1,\dots,x_{a(c)}),R)$ where $x_1,\dots,x_{a(c)} \in X$ and $R \subseteq D^{a(c)}$.
We refer to $a(c)$ as the \emph{arity} of constraint $c$.
An assignment $\alpha\colon X \rightarrow D$ \emph{satisfies} a constraint $c = ((x_1,\dots,x_{a(c)}),R) \in \CC$ if $(\alpha(x_1),\dots,\alpha(x_{a(c)})) \in R$.
If $\alpha$ does not satisfy $c$, then we say that $\alpha$ \emph{violates} $c$.
An assignment $\alpha\colon X \rightarrow D$ \emph{satisfies} $\Gamma$ if it satisfies every constraint $c \in \CC$.
The instance $\Gamma$ is \emph{satisfiable} if there is a satisfying assignment.

We say that a CSP-instance $\Gamma = (X,D,\CC)$ is \emph{binary} if $a(c) \leq 2$ for all $c \in \CC$.
In the remainder of this work, we restrict ourselves to binary CSP-instances.
Let $c \in \CC$ be a constraint.
We call $c$ a \emph{unary} constraint if $a(c) = 1$ and a \emph{binary} constraint if $a(c) = 2$.
A binary constraint $c = ((x_1,x_2),R) \in \CC$ is called a \emph{permutation constraint} if for every $a \in D$ it holds that $|\{b \in D \mid (a,b) \in R\}| \leq 1$ and $|\{b \in D \mid (b,a) \in R\}| \leq 1$.
A binary CSP-instance is called is \emph{Permutation-CSP-instance} if every binary constraint is a permutation constraint.

Let $\Gamma = (X,D,\CC)$ be a binary CSP-instance.
The \emph{constraint graph of $\Gamma$} is defined as the graph $G(\Gamma)$ with vertex set $V(G(\Gamma)) \coloneqq X$ and edge set
\[E(G(\Gamma)) \coloneqq \{xy \mid ((x,y),R) \in \CC\}.\]
Let $Y \subseteq X$ be a subset of the variables.
We define $\Gamma[Y] \coloneqq (Y,D,\CC[Y])$ to be the \emph{induced subinstance of $\Gamma$} where $\CC[Y]$ contains all constraints $c = ((x_1,\dots,x_{a(c)}),R) \in \CC$ such that $x_i \in Y$ for all $i \in [a(c)]$.
Observe that $G(\Gamma[Y]) = (G(\Gamma))[Y]$.
Also, for $F \subseteq E(G(\Gamma))$, we define $\CC[F]$ to be the set of all binary constraints $c = ((x_1,x_2),R) \in \CC$ such that $x_1x_2 \in F$.
Moreover, we define $\CC - F \coloneqq \CC \setminus \CC[F]$ and $\Gamma - F \coloneqq (X,D,\CC - F)$.

\section{Permutation CSPs with Size Constraints}
\label{sec:perm-csp-results}

\subsection{Problem Definitions}

In this work, we shall be interested in vertex- and edge-deletion problems on Permutation-CSP-instances.
In the following, we formally define the problems on CSP-instances and state the main algorithmic results on planar input instances (all proofs are given in Section \ref{sec:alg-perm-csp}).
Then, we demonstrate the wide applicability of these problems by reducing various well-known cycle hitting problems to the defined problems on Permutation-CSP-instances.
Finally, we list several kernelization results which, in combination with our algorithmic results for vertex- and edge deletion problems on Permutation-CSP-instances, lead to various subexponential parameterized algorithms for certain cycle hitting problems on planar graphs.

\paragraph{The Basic Edge-Deletion Problem.}
We start by considering the basic edge deletion problem for Permutation-CSPs.

\medskip 
\defparproblem{\textsc{Perm CSP Edge Deletion}}{A Permutation-CSP-instance $\Gamma = (X,D,\CC)$, a set of undeletable edges $U \subseteq E(G(\Gamma))$, and an integer $k$}{$k$}
 {Is there a set $Z \subseteq E(G(\Gamma)) \setminus U$ such that $|Z| \leq k$ and $\Gamma - Z \coloneqq (X,D,\CC - Z)$ is satisfiable?}
\medskip

In this work, we are only interested in CSP-instances $\Gamma$ where $G(\Gamma)$ satisfies a certain property.
We call a binary CSP-instance $\Gamma$ \emph{planar} if the graph $G(\Gamma)$ is planar.
Similarly, a binary CSP-instance $\Gamma$ is \emph{$H$-minor-free} if $G(\Gamma)$ is $H$-minor-free.
This allows to define the problems \textsc{Planar Perm CSP Edge Deletion} and \textsc{$H$-Minor-Free Perm CSP Edge Deletion} where we restrict ourselves to input CSP-instances $\Gamma$ that are planar or $H$-minor-free.
We remark here that we follow similar naming conventions for other vertex- and edge deletion problems without explicitly defining them.

The next theorem is essentially a simple consequence of the contraction decompositions for $H$-minor-free graphs due to Demaine et al.\ \cite{DemaineHK11}.
Still, in combination with recent kernelization results due to Wahlstr{\"{o}}m \cite{Wahlstrom20}, it already leads to improved FPT algorithms for certain edge deletion problems on $H$-minor-free graphs.

\begin{theorem}
 \label{thm:perm-csp-edge-deletion-result}
 There is an algorithm solving \textsc{$H$-Minor-Free Perm CSP Edge Deletion} in time $(|X| + |D|)^{\CO(c_H \sqrt{k})}$ where $c_H$ is a constant depending only on $H$.
\end{theorem}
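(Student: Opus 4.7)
The plan is to combine the contraction decomposition of Demaine et al.\ (Theorem~\ref{thm:Hcontr}) with a guess-and-contract strategy and a standard treewidth-based dynamic programming algorithm for \textsc{Perm CSP}. First, I apply Theorem~\ref{thm:Hcontr} with $\ell = \lceil \sqrt{k}\rceil$ to obtain a partition $E(G(\Gamma)) = E_1 \uplus \dots \uplus E_\ell$ with $\tw(G(\Gamma)/E_i) \leq c_H \ell$ for every $i$. For any hypothetical solution $Z \subseteq E(G(\Gamma)) \setminus U$ with $|Z| \leq k$, by pigeonhole there exists $i^\ast \in [\ell]$ with $|Z \cap E_{i^\ast}| \leq \sqrt{k}$. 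The algorithm iterates over all $\ell$ choices of $i$ and over all subsets $Z_i \subseteq E_i$ of size at most $\sqrt{k}$. Since $H$-minor-free graphs have $|E(G(\Gamma))| \leq d_H |X|$ by Theorem~\ref{thm:average-degree-H-minor-free}, the total number of guesses is at most $\ell \cdot (d_H|X|)^{\CO(\sqrt{k})} = |X|^{\CO(c_H \sqrt{k})}$.

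After fixing a candidate pair $(i, Z_i)$, I remove the constraints in $Z_i$, declare every edge of $U \cup (E_i \setminus Z_i)$ undeletable, and then \emph{contract} every connected component of undeletable edges into a single variable over the same domain $D$. This step is the heart of the argument and exploits the permutation property: along any undeletable constraint $((x,y),R)$, the value of $x$ uniquely determines a candidate value of $y$ (or the branch is infeasible, in which case it can be rejected). Hence each connected component $C$ of undeletable edges can be safely represented by a single variable $v_C$, and every deletable constraint incident to some $x \in C$ is rewritten by composing it with the permutation mapping $v_C$ to $x$; if two constraints originally on distinct variables in $C$ end up on $v_C$, their relations are intersected (possibly yielding a unary constraint on $v_C$). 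A direct check establishes a bijection between satisfying assignments of the contracted instance $\Gamma'$ and satisfying assignments of $\Gamma - Z_i$ that respect the undeletable edges.

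The contracted constraint graph equals $G' = G(\Gamma)/(E_i \setminus Z_i)$, and since $G(\Gamma)/E_i$ is obtained from $G'$ by contracting the $|Z_i|$ remaining edges of $E_i$, and each edge contraction changes treewidth by at most $1$, we get
\[\tw(G') \leq \tw(G(\Gamma)/E_i) + |Z_i| \leq c_H \ell + \sqrt{k} = \CO(c_H \sqrt{k}).\]
On $\Gamma'$ I run the standard $|D|^{\CO(t)} \cdot n^{\CO(1)}$ dynamic programming algorithm for \textsc{Perm CSP Edge Deletion} over a nice tree decomposition, looking for a set of at most $k - |Z_i|$ constraints (none of them undeletable) whose removal satisfies $\Gamma'$, and lift the solution back through the contractions. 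Multiplying the number of guesses by the running time of the DP gives $(|X|+|D|)^{\CO(c_H\sqrt{k})}$ overall, as required.

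The main conceptual obstacle I expect is the rigorous verification of the contraction step: one must confirm that (i) the reduction from $\Gamma - Z_i$ (with $E_i \setminus Z_i$ forbidden) to $\Gamma'$ is solution-preserving in both directions, and (ii) the optimal solution size is exactly preserved, so the remaining budget $k - |Z_i|$ in $\Gamma'$ is meaningful. The rest is routine: the contraction decomposition is invoked as a black box, the pigeonhole and enumeration steps are standard, and the treewidth DP for \textsc{Perm CSP Edge Deletion} is folklore. This scheme is tight up to the $c_H$ factor coming from Theorem~\ref{thm:Hcontr}, and it extends verbatim to planar graphs (where $c_H$ is an absolute constant), as used in Theorem~\ref{thm:intromain1}.
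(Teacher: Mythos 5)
Your proposal follows essentially the same route as the paper: apply the Demaine--Hajiaghayi--Kawarabayashi edge contraction decomposition with $\ell=\lceil\sqrt k\rceil$, use pigeonhole to locate a class $E_i$ intersecting the solution in at most $\sqrt k$ edges, guess that intersection, contract the now-undeletable edges by exploiting the permutation property, bound the treewidth of the contracted constraint graph by $c_H\ell + |Z_i|$, and finish with a standard $|D|^{\CO(\tw)}$ treewidth DP. Two small remarks. First, you also contract the original undeletable set $U$; the paper instead keeps $U$-constraints in the contracted instance and assigns them weight $k'+1$ in a cost-function DP (Theorem~\ref{thm:tw-binary-csp-cost-bound}) so they can never be selected for deletion. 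Both are fine, but with your choice the contracted graph is $G(\Gamma)/\bigl((E_i\setminus Z_i)\cup U\bigr)$, not $G(\Gamma)/(E_i\setminus Z_i)$ as you write; since this is a further contraction, the treewidth bound still holds, but the displayed equality is incorrect as stated. Second, when several original constraints get projected onto the same pair of contracted variables, the DP must account for the number of original constraints violated rather than merely intersecting relations; the paper handles this by replacing the ``delete $\le k'$ edges'' objective with a weighted cost function $w\colon X^2\times D^2\to\NN$ that records how many underlying constraints each contracted edge-assignment violates, and then minimizing total cost. Your phrasing of simply ``intersecting'' the relations would lose this count, so you need a multiplicity/weight on each contracted edge (which you seem to implicitly intend when invoking the ``standard'' edge-deletion DP); once that is made explicit the argument is complete and matches the paper's.
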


\paragraph{Size Constraints for Connected Components.}

Next, we turn our attention to vertex-deletion problems.
To increase the applicability of our results, we also extend the CSP-instances with certain size constraints.
In this section, we focus on size constraints that bound the total weight of a connected component (after removing the solution).

Let $\Gamma = (X,D,\CC)$ be a binary CSP instance.
A \emph{1cc-size constraint} is a triple $(w,q,\op)$ where $w \colon X \times D \rightarrow \NN$ is a weight function, $q \in \NN$, and $\op \in \{\leq, \geq\}$.
An assignment $\alpha\colon X \rightarrow D$ satisfies $(w,q,\op)$ on $\Gamma$ if
\[\Big(\sum_{v \in C} w(v,\alpha(v)), q\Big) \in \op\]
for every connected component $C$ of $G(\Gamma)$.
(Here, $(p,q) \in \,\leq$ if $q \leq p$ and $(p,q) \in \,\geq$ if $q \geq p$.)

A \emph{Permutation-CSP-instance with $1$cc-size constraints} is a pair $(\Gamma,\CS)$ where $\Gamma = (X,D,\CC)$ is a Permutation-CSP-instance and $\CS$ is a set of 1cc-size constraints.
We say that $(\Gamma,\CS)$ is \emph{satisfiable} if there is an assignment $\alpha\colon X \rightarrow D$ which satisfies $\Gamma$ as well as every constraint $(w,q,\op) \in \CS$ on $\Gamma$.
Also, we define
\[\|\CS\| \coloneqq \prod_{(w,q,\op) \in \CS} (2 + q).\]

\medskip 
\defparproblem{\textsc{Perm CSP Vertex Deletion with Size Constraints}}{A Permutation-CSP-instance with $1$cc-size constraints $(\Gamma,\CS)$ with $\Gamma = (X,D,\CC)$, a set of undeletable variables $U \subseteq X$, and an integer $k$.}{$k$}
{Is there a set $Z \subseteq X \setminus U$ such that $|Z| \leq k$ and $(\Gamma[X\setminus Z],\CS)$ is satisfiable?}
\medskip

\begin{theorem}
 \label{thm:perm-csp-deletion-result}
 There is an algorithm solving \textsc{Planar Perm CSP Vertex Deletion with Size Constraints} in time $(|X| + |D| + \|\CS\|)^{\CO(\sqrt{k})}$.
\end{theorem}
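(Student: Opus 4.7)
The plan is to apply Technique~2 from the introduction: combine the vertex contraction decomposition of Corollary~\ref{cor:into-guess} with a standard tree-decomposition dynamic programming for binary CSP vertex deletion, augmented by counters for the 1cc-size constraints. First I would invoke Corollary~\ref{cor:into-guess} on the planar constraint graph $G(\Gamma)$ with parameter $k$ to enumerate a sequence $V_1,\dots,V_h$ with $h=n^{\CO(\sqrt{k})}$; for any hypothetical solution $Z\subseteq X\setminus U$ with $|Z|\le k$, some index $i$ satisfies $V_i\cap Z=\emptyset$ and $\tw\bigl(G(\Gamma)/E(G(\Gamma)[V_i])\bigr)=\CO(\sqrt{k})$. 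It therefore suffices to handle each $i$ separately under the assumption that $V_i$ is disjoint from the solution, and then return YES if any branch succeeds.

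Fix such an $i$ and let $I$ be a connected component of $G(\Gamma)[V_i]$. Because every internal constraint of $I$ is a permutation and must be satisfied (since $I\cap Z=\emptyset$), fixing the value $a$ of a single reference variable $r_I\in I$ uniquely propagates along the connected graph $G(\Gamma)[I]$ to an assignment $\alpha_{I,a}\colon I\to D$ whenever the propagation is consistent with every internal constraint; let $D_I\subseteq D$ be the set of admissible reference values. I would contract $I$ to a single new variable $v_I$ with domain $D_I$, declared undeletable. Every external constraint $((x,y),R)$ with $x\in I$ and $y\notin V_i$ is translated into a binary constraint between $v_I$ and $y$ via the rule ``$(a,b)$ is allowed iff $(\alpha_{I,a}(x),b)\in R$'', and parallel constraints on the same pair are intersected into a single constraint. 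Each size constraint $(w,q,\op)\in\CS$ is rewritten as $(w',q,\op)$ with $w'(v_I,a)=\sum_{x\in I}w(x,\alpha_{I,a}(x))$ and $w'(y,b)=w(y,b)$ for $y\notin V_i$. Since $V_i\cap Z=\emptyset$, the connected components of $G(\Gamma)-Z$ correspond bijectively to those of the contracted constraint graph minus $Z$, so a satisfying assignment of the reduced instance that avoids a set $Z'$ lifts to one of $\Gamma[X\setminus Z']$ and vice versa, with the aggregated weights matching constraint by constraint. The reduced constraint graph is exactly $G(\Gamma)/E(G(\Gamma)[V_i])$, hence has treewidth $t=\CO(\sqrt{k})$.

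I would then solve the reduced instance by a nice tree-decomposition DP: at each bag $B$ the state records (i)~the subset of $B$ marked as deleted, (ii)~the value assigned to each non-deleted variable in $B$, (iii)~the partition of the non-deleted vertices of $B$ according to which connected component of the already-processed subinstance they belong to, and (iv)~for each such component and each $(w,q,\op)\in\CS$, the partial weight capped at $q+1$. The $\le$-constraints are enforced on the fly, and $\ge$-constraints are verified exactly when a component's last vertex is forgotten. The number of states per bag is $(|D|+\|\CS\|)^{\CO(t)}$, so the DP runs in $(|D|+\|\CS\|)^{\CO(\sqrt{k})}\cdot n^{\CO(1)}$ time per index $i$, and multiplying by $h=n^{\CO(\sqrt{k})}$ yields the claimed $(|X|+|D|+\|\CS\|)^{\CO(\sqrt{k})}$ total. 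The main obstacle is the bookkeeping at join nodes, where partial components of the two children that share a vertex in the common bag have to be merged and their weight counters summed under the caps, while ensuring that every $\ge$-constraint on a component entirely contained in one child is verified at the exact forget node that removes its last bag-representative; this is standard once the state is defined carefully. A secondary point, straightforward to verify, is that the intersection of parallel translated constraints yields a well-defined binary relation because the propagation $\alpha_{I,\cdot}$ is a single-valued function on $D_I$.
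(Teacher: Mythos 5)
Your proposal is correct and follows essentially the same route as the paper's proof. Both invoke the vertex-partition contraction decomposition (your use of Corollary~\ref{cor:into-guess} is just the packaged form of the paper's ``apply Theorem~\ref{thm:contraction-decomposition-planar-vertex}, guess $i$ and $Z_i$''), both contract each segment $I$ of the undeletable partition class to a single variable $v_I$ whose domain records the propagated assignment on $I$, both rewrite external permutation constraints and size weights through the propagation $\alpha_{I,a}$, and both then run a treewidth DP over $G/\CI$ that tracks, per bag, the assignment, the partition into partial components, and per-component per-constraint weight counters capped at $q+1$. The one cosmetic difference is that you mark $v_I$ as undeletable and keep a ``deleted'' subset explicitly in the DP state, whereas the paper encodes deletion by adding a fresh domain value ${\sf sol}$ and making the segment variables' unary constraints exclude it, then delegates the DP to a general-purpose lemma (Theorem~\ref{thm:tw-binary-csp-size-constraints}) about global and $(F,\CD)$-local size constraints that is reused verbatim for the $2$-connected variant; these are functionally equivalent choices. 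Two small points you wave at but should spell out if writing this up: the translated binary constraint between two segment variables $v_I, v_{I'}$ (not just segment-to-ordinary) needs the same treatment, and at join nodes the merged component counter is $\min\big(q+1,\, g^1 + g^2 - \sum_{u\in P\cap B} w(u,f(u))\big)$, so the double-counted bag weight must be subtracted before re-capping.
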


\paragraph{Satisfiable $2$-Connected Components.}

Next, we state the corresponding problems and results for $2$-connected components.

Let $\Gamma = (X,D,\CC)$ be a binary CSP instance.
A \emph{2cc-size constraint} is a pair $(w,q)$ where $w \colon X \times D \rightarrow \NN$ is a weight function, and $q \in \NN$.
Let $W \subseteq X$.
An assignment $\alpha\colon W \rightarrow D$ satisfies $(w,q)$ on $W$ if
\[\sum_{v \in W} w(v,\alpha(v)) \leq q.\]
Observe that, in comparison to 1cc-size constraints, we only allow to check for upper bounds on the weighted size of a set $W \subseteq X$.

A \emph{Permutation-CSP-instance with $2$cc-size constraints} is a pair $(\Gamma,\CS)$ where $\Gamma = (X,D,\CC)$ is a Permutation-CSP-instance and $\CS$ is a set of 2cc-size constraints.
For $W \subseteq X$, we say that $(\Gamma,\CS)$ is \emph{satisfiable on $W$} if there is an assignment $\alpha\colon W \rightarrow D$ that satisfies $\Gamma[W]$ as well as all 2cc-size constraints $(w,q) \in \CS$ on $W$.
As before, we define
\[\|\CS\| \coloneqq \prod_{(w,q,\op) \in \CS} (2 + q).\]

\medskip 
\defparproblem{\textsc{$2$Conn Perm CSP Vertex Deletion with Size Constraints}}{A Permutation-CSP-instance with $2$cc-size constraints $(\Gamma,\CS)$ with $\Gamma = (X,D,\CC)$, a set of undeletable variables $U \subseteq X$, and an integer $k$.}{$k$}
{Is there a set $Z \subseteq X \setminus U$ such that $|Z| \leq k$ and $(\Gamma,\CS)$ is satisfiable on $W$ for every $2$-connected component $W$ of the graph $G(\Gamma[X \setminus Z])$.}
\medskip

\begin{theorem}
 \label{thm:2cc-perm-csp-deletion-result}
 There is an algorithm solving \textsc{Planar $2$Conn Perm CSP Vertex Deletion with Size Constraints} in time $(|X| + |D| + \|\CS\|)^{\CO(\sqrt{k})}$.
\end{theorem}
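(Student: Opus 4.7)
}
The plan is to follow Technique~3 from the overview: reduce the problem to an auxiliary binary CSP-with-size-constraints instance whose constraint graph has treewidth $\Oh(\sqrt{k})$, and then apply the treewidth-based dynamic program (which we also need to develop for \textsc{$2$Conn Perm CSP Vertex Deletion with Size Constraints}, extending the one already available for \textsc{Perm CSP Vertex Deletion with Size Constraints}). First, I would invoke Lemma~\ref{lem:introbodyguess} on the planar constraint graph $G = G(\Gamma)$ to obtain in time $n^{\Oh(\sqrt{k})}$ a list $(V_1,\CB_1),\dots,(V_h,\CB_h)$ of candidates with $h = n^{\Oh(\sqrt{k})}$. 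For any hypothetical solution $Z$ of size at most $k$, at least one index $i$ satisfies $V_i \cap Z = \emptyset$, $\tw(G/E(G[V_i])) = \Oh(\sqrt{k})$, and every body of a component of $G[V_i]$ in $G-Z$ lies in $\CB_i$. The algorithm tries each $i$ and returns \yes\ if any choice succeeds. Before using $\CB_i$, I would filter it: for each $B \in \CB_i$, compute the block-cut tree of $G[B]$ and, on each $2$-connected component $W$, run the dynamic program to test whether $(\Gamma, \CS)$ is satisfiable on $W$; discard $B$ otherwise. This is polynomial in $n$ and $\|\CS\|$ per set and guarantees that $B$ is a legal body.

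Given a fixed $i$, I would build an auxiliary Permutation-CSP-instance $\Gamma'$ on the graph $G' = G/E(G[V_i])$, whose vertices are either original vertices of $V(G) \setminus V_i$ or contraction-vertices $v_I$ corresponding to components $I$ of $G[V_i]$. The domain $D'$ is a Cartesian product encoding the bookkeeping described in the overview. For a contraction-vertex $v_I$, a value records (i) a value $a \in D$ intended as the common value of the variables of $r(I) \cap I$ in the satisfying assignment of the root component $r(I)$, and (ii) a guessed body $B \in \CB_i \cup \{\emptyset\}$ for $I$. For an original vertex $v \notin V_i$, a value records (i) its assigned value in $D$, (ii) the identity of the unique segment $I$ (or a special symbol $\bot$ for the root component) whose body contains $v$ together with the body $B \in \CB_i$ itself, (iii) the cut vertex of the $2$-connected component of $v$ that joins it to its parent component, and (iv) the depth of that component in the rooted block-cut decomposition. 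Cut vertices, which belong to several $2$-connected components, require the auxiliary trick mentioned in the footnote: I would duplicate the value coordinate per incident component so that the right value is visible to each neighbour.

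All of the combinatorial consistency I need is local: for each edge $uv$ of $G'$ I can write a binary constraint between the two endpoints checking that their stored values agree (underlying permutation constraint of $\Gamma$ on the chosen values), that the stored body, segment identity, cut vertex and level information on the two sides are consistent with a valid block-cut decomposition of $G-Z$, and that, if one endpoint is a contraction-vertex $v_I$, then the adjacent original vertex lies in $r(I)$ or in $B(I)$ (or is the designated cut vertex into the parent). These constraints are still binary, so the auxiliary constraint graph is $G'$, which has treewidth $\Oh(\sqrt{k})$. Crucially, we can make the \emph{base} permutation constraint of $\Gamma$ between the endpoints into a permutation constraint on $D'$ by fixing all the bookkeeping coordinates and requiring identity on the values of $v_I$'s partner vertex in $r(I)$, so the \textsc{Perm CSP Vertex Deletion with Size Constraints} machinery of Theorem~\ref{thm:perm-csp-deletion-result} applies — the size constraints of $\CS$ are translated to 1cc-size constraints on the connected components of $G'-Z$ that the dynamic program tracks per ``virtual'' $2$-connected component identified by the stored body label.

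The main obstacle will be the bookkeeping for the dynamic program across cut vertices and along the block-cut decomposition: when two segments $I, I'$ see overlapping $2$-connected components in $G-Z$, we must ensure their stored ``body'' labels are compatible and that the coarsening introduced by only naming bodies (not root components) does not spuriously equate distinct $2$-connected components. As pointed out in the overview, the coarsening is harmless for $\leq$-constraints because $\|\CS\|$ bounds a sum that only increases on supersets, so verifying satisfiability on the coarser partition implies satisfiability on each refinement. Putting everything together: filtering $\CB_i$ takes $n^{\Oh(1)} \cdot \|\CS\|^{\Oh(1)}$ time per $i$; building $\Gamma'$ takes polynomial time and blows up the domain by at most $|\CB_i| \cdot n^{\Oh(1)} \cdot |D|$, which is $(|X|+|D|)^{\Oh(1)}$; and solving $\Gamma'$ via Theorem~\ref{thm:perm-csp-deletion-result} costs $(|X|+|D|+\|\CS\|)^{\Oh(\sqrt{k})}$. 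Multiplying by the $h = n^{\Oh(\sqrt{k})}$ choices of $i$ yields the claimed bound $(|X|+|D|+\|\CS\|)^{\Oh(\sqrt{k})}$, and correctness follows because some $i$ is guaranteed to admit a solution whenever the original instance does.
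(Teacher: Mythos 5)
Your overall architecture matches the paper's: guess $(V_i,\CB_i)$ via Lemma~\ref{lem:introbodyguess} (the paper uses the equivalent Corollary~\ref{cor:segments-and-bodies-for-unknown-solution}), discard candidate bodies whose $2$-connected components are unsatisfiable, encode a hypothetical solution by a product domain on $G/E(G[V_i])$ storing the value, the body/segment label, the cut vertex to the parent block, and the level, and observe that the coarsening of the block decomposition is harmless for $\leq$-constraints. All of that is sound and is essentially what the paper does.

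The gap is in the final step, where you claim that ``the \textsc{Perm CSP Vertex Deletion with Size Constraints} machinery of Theorem~\ref{thm:perm-csp-deletion-result} applies.'' It does not, for two reasons. First, the consistency constraints on the auxiliary instance are not permutation constraints: fixing the value of one endpoint of an edge does not determine the value of the other (a neighbour of $v$ may lie in the same block at the same level, or be the cut vertex into a child block one level deeper, or lie in the body of some segment --- all compatible with the same value at $v$), so the instance is a general binary CSP, not a Permutation-CSP. Second, and more importantly, the 1cc-size constraints of that problem aggregate weights over the \emph{connected components} of the constraint graph after deletion, whereas you need to aggregate over the virtual $2$-connected components of $G-Z$, which are strict subsets of those connected components and are identified only through the stored level/cut-vertex coordinates; in addition, the weight of the cut vertex itself must be charged to its \emph{child} component even though the cut vertex carries a different level label. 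No 1cc-size constraint can express this grouping. The paper resolves exactly this by proving a separate dynamic program (Theorem~\ref{thm:tw-binary-csp-size-constraints}) for arbitrary binary CSPs of bounded treewidth equipped with global size constraints and $(F,\CD)$-local size constraints, where the ``components'' of a local constraint are the connected components of the subgraph induced by value-pairs in $F$ (here: equal levels) and the partition $\CD$ of the domain contributes the cut vertex's weight via an offset $w_\CD$. Your proof needs this tool (or an equivalent DP) to be stated and proved; without it, the reduction has no algorithm to land on.
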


\subsection{Reductions}

Next, we argue that several standard vertex and edge deletion problems can be interpreted as special cases of the problems discussed above.
In all the cases, the constraint graph of the CSP-instance we construct is the same as the input graph for the problem in question.
In particular, restrictions of the input graph immediately translate to corresponding restrictions of the constructed CSP instance.
Here, we only focus on vertex deletion problems.
However, the edge deletion versions of all problems naturally translate to the edge deletion versions of the CSP problems.

We start by covering some problems that can be reduced to \textsc{Perm CSP Vertex Deletion with Size Constraints}.
If the size constraints are not needed to build the reduction, we simply omit them.
Also, in the reductions described here, unless explicitly stated otherwise, we define $U \coloneqq \emptyset$ as the set of undeletable variables.

We start with the problem of eliminating all odd cycles in a graph.

\defparproblem{\textsc{Odd Cycle Transversal (OCT)}}{A graph $G$, and an integer $k$}{$k$}
{Does there exist a set $Z \subseteq V(G)$ of size at most $k$ that hits all odd cycles, i.e., $G - Z$ is bipartite?}

This problem can be translated into a Permutation-CSP-instance $\Gamma = (X,D,\CC)$ with
\begin{itemize}
 \item $X \coloneqq V(G)$,
 \item $D \coloneqq \{0,1\}$, and
 \item binary constraints $((v,w),R_{\neq})$ for all $vw \in E(G)$ where $R_{\neq} \coloneqq \{(0,1),(1,0)\}$.
\end{itemize}
It is easy to verify that $Z \subseteq V(G)$ is a solution for $(G,k)$ if and only if $Z$ is a solution for $(\Gamma,k)$.

\medskip

More generally, the same approach can be used to reduce \textsc{Group Feedback Vertex Set} to \textsc{Perm CSP Vertex Deletion with Size Constraints}.
Let $\Sigma$ be a group.
A \emph{$\Sigma$-labeled graph} is a pair $(G,\lambda)$ where $G$ is a graph and $\lambda\colon \{(v,w),(w,v) \mid vw \in E(G)\} \rightarrow \Sigma$ is a mapping satisfying $\lambda(v,w) = (\lambda(w,v))^{-1}$ for all $vw \in E(G)$.
A cycle $C = (v_1,\dots,v_\ell)$ is a \emph{non-null} cycle if $\lambda(C) \coloneqq \lambda(v_1,v_2)\lambda(v_2,v_3)\dots\lambda(v_{\ell-1},v_\ell)\lambda(v_\ell,v_1) \neq 1_\Sigma$ where $1_\Sigma$ denotes the identity element of $\Sigma$.
We remark that this notion if well-defined by the next lemma.

\begin{lemma}[{\cite[Lemma 7]{Guillemot11a}}]
 \label{la:non-null-rotate}
 Let $(v_1,\dots,v_\ell)$ be a cycle.
 Then $\lambda(v_1,\dots,v_\ell) \neq 1_\Sigma$ if and only if $\lambda(v_2,\dots,v_\ell,v_1) \neq 1_\Sigma$.
\end{lemma}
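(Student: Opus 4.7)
The plan is to reduce the statement to the elementary group-theoretic fact that, for any two elements $a, b$ in a group $\Sigma$, one has $ab = 1_\Sigma$ if and only if $ba = 1_\Sigma$ (since either condition is equivalent to $b = a^{-1}$). This is really all that is going on: rotating the cycle by one position amounts to conjugating the product of labels, which may change the value of the product but cannot change whether it equals the identity.

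Concretely, I would first name the pieces of the product. Let $a \coloneqq \lambda(v_1, v_2) \in \Sigma$ and let
\[
b \coloneqq \lambda(v_2, v_3)\,\lambda(v_3, v_4)\,\cdots\,\lambda(v_{\ell-1}, v_\ell)\,\lambda(v_\ell, v_1) \in \Sigma.
\]
Then by definition and associativity of the group operation,
\[
\lambda(v_1, v_2, \dots, v_\ell) = \lambda(v_1, v_2) \cdot \lambda(v_2, v_3) \cdots \lambda(v_\ell, v_1) = a \cdot b,
\]
while the rotated product is
\[
\lambda(v_2, v_3, \dots, v_\ell, v_1) = \lambda(v_2, v_3) \cdots \lambda(v_\ell, v_1) \cdot \lambda(v_1, v_2) = b \cdot a.
\]
Thus the claim reduces to showing $ab \neq 1_\Sigma \iff ba \neq 1_\Sigma$.

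For the final step, I would argue the contrapositive in both directions: if $ab = 1_\Sigma$, then left-multiplying by $a^{-1}$ gives $b = a^{-1}$, so $ba = a^{-1} a = 1_\Sigma$; conversely, if $ba = 1_\Sigma$, then $a = b^{-1}$, so $ab = b^{-1} b = 1_\Sigma$. I do not anticipate any real obstacle — the statement is essentially a one-line lemma whose only subtlety is keeping track of the definition of the labeling product and noting that the arrow-reversal condition $\lambda(v,w) = \lambda(w,v)^{-1}$ is not needed here (it would be needed only if one wanted to compare the forward traversal with the reverse traversal of the cycle, which is a different statement).
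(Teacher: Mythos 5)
Your proof is correct. The paper cites this lemma from Guillemot without reproducing a proof, but your argument — writing the two products as $ab$ and $ba$ for $a = \lambda(v_1,v_2)$ and $b$ the product of the remaining labels, and observing that $ab = 1_\Sigma \iff b = a^{-1} \iff ba = 1_\Sigma$ — is exactly the standard one-line justification, and your remark that the condition $\lambda(v,w) = \lambda(w,v)^{-1}$ is not needed here is also accurate.
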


\defparproblem{\textsc{Group Feedback Vertex Set (Group FVS)}}{A $\Sigma$-labeled graph $G$, and an integer $k$}{$k$}
{Does there exist a set $Z \subseteq V(G)$ of size at most $k$ that hits all non-null cycles?}

In order to translate an instance of \textsc{Group FVS} to a Permutation-CSP-instance, we use the following lemma that reformulates \textsc{Group FVS} as a labeling problem.
Let $G$ be a $\Sigma$-labeled graph.
A labeling $\Lambda\colon V(G) \rightarrow \Sigma$ is \emph{consistent} if
\[\Lambda(v) \lambda(v,w) = \Lambda(w)\]
for all $vw \in E(G)$

\begin{lemma}[{\cite[Lemma 8]{Guillemot11a}}]
 \label{la:consistent-group-labeling}
 Let $G$ be a $\Sigma$-labeled graph.
 Then $G$ has no non-null cycle if and only if there is a consistent labeling $\Lambda\colon V(G) \rightarrow \Sigma$.
\end{lemma}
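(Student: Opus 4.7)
The plan is to prove both implications separately. For the easy direction ($\Leftarrow$), suppose a consistent labeling $\Lambda$ exists and let $C = (v_1,\dots,v_\ell)$ be any cycle. By consistency, $\Lambda(v_{i+1}) = \Lambda(v_i)\lambda(v_i,v_{i+1})$ for each $i \in [\ell-1]$ and $\Lambda(v_1) = \Lambda(v_\ell)\lambda(v_\ell,v_1)$. Substituting repeatedly yields $\Lambda(v_1) = \Lambda(v_1)\cdot \lambda(v_1,v_2)\lambda(v_2,v_3)\cdots\lambda(v_\ell,v_1) = \Lambda(v_1)\cdot\lambda(C)$, so $\lambda(C) = 1_\Sigma$ and $C$ is null.

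For the harder direction ($\Rightarrow$), I would build $\Lambda$ componentwise via a spanning-tree potential. Handling each connected component separately, fix a spanning tree $T$ of (the component of) $G$, root it at an arbitrary vertex $r$, and set $\Lambda(r) := 1_\Sigma$. For every other vertex $v$, let $P_v$ be the unique $r$--$v$ path in $T$ and set $\Lambda(v) := \lambda(P_v)$, where the label of a walk is the product of its consecutive edge labels. Consistency is immediate along tree edges: if $w$ is the child of $v$ in $T$, then $P_w$ is $P_v$ extended by the edge $vw$, so $\Lambda(w) = \Lambda(v)\lambda(v,w)$ by construction.

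The main obstacle is verifying consistency for non-tree edges $vw$. Let $u$ be the least common ancestor of $v$ and $w$ in $T$, and decompose $P_v = Q\cdot R_v$ and $P_w = Q\cdot R_w$, where $Q$ is the $r$--$u$ tree path and $R_v,R_w$ are the $u$--$v$ and $u$--$w$ tree paths. Concatenating $R_v$, the edge $(v,w)$, and the reverse of $R_w$ yields a cycle $C$ with pairwise distinct vertices starting and ending at $u$; since $vw \notin E(T)$, neither $v$ nor $w$ is the parent of the other in $T$, so $C$ has length at least $3$ and is a genuine cycle in the sense of the definition. By hypothesis $C$ is null, and Lemma~\ref{la:non-null-rotate} lets us read its label starting at $u$, giving $\lambda(R_v)\lambda(v,w)\lambda(R_w)^{-1} = 1_\Sigma$ (using $\lambda(a,b)^{-1} = \lambda(b,a)$ to invert $R_w$). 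Rearranging gives $\lambda(R_v)\lambda(v,w) = \lambda(R_w)$, and left-multiplying by $\lambda(Q)$ then yields $\Lambda(v)\lambda(v,w) = \Lambda(w)$, as required.
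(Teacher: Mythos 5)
The paper does not prove this lemma itself — it is cited as Lemma~8 of Guillemot~\cite{Guillemot11a} and used as a black box — so there is no in-paper argument to compare against. That said, your proof is correct. The $\Leftarrow$ direction by telescoping along the cycle is standard and clean. The $\Rightarrow$ direction via a rooted spanning-tree potential is the canonical argument: defining $\Lambda(v)$ as the label of the tree path from $r$, consistency along tree edges is definitional, and for a non-tree edge $vw$ your reduction to the fundamental cycle through the LCA $u$ (noting $vw\notin E(T)$ forces length at least~$3$, so it is a genuine cycle in the paper's sense of distinct vertices) together with $\lambda(a,b)^{-1}=\lambda(b,a)$ gives exactly the required identity; invoking Lemma~\ref{la:non-null-rotate} to read the label starting at $u$ is a careful touch. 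One could note the case $u\in\{v,w\}$ explicitly (one of $R_v,R_w$ is trivial), but your argument specializes correctly there with an empty product, so there is no gap.
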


Now, consider the CSP-instance $\Gamma = (X,D,C)$ with
\begin{itemize}
 \item $X \coloneqq V(G)$,
 \item $D \coloneqq \Sigma$,
 \item binary constraints $((v,w),R_{\lambda(v,w)})$ for all $vw \in E(G)$ where $R_\sigma \coloneqq \{(\delta,\delta\sigma) \mid \delta \in \Sigma\}$ for $\sigma \in \Sigma$.
\end{itemize}

The last lemma implies that $Z$ is a \textsc{Group FVS} solution for $(G,k)$ if and only if $Z$ is a solution for $(\Gamma,k)$.

\medskip

Next, we consider two problems where the reductions use the size constraints.
 
\defparproblem{\textsc{Vertex Multiway Cut}}{A graph $G$, a set $T \subseteq V(G)$, and an integer $k$}{$k$}
{Does there exist a set $Z \subseteq V(G) \setminus T$ of size at most $k$ such that all vertices from $T$ lie in different connected components of $G - Z$?}

Consider the instance $(\Gamma,\CS,U)$ with $\Gamma = (X,D,C)$ and
\begin{itemize}
 \item $X \coloneqq V(G)$,
 \item $D \coloneqq \{\circledast\}$,
 \item binary constraints $((v,w),\{(\circledast,\circledast)\})$ for all $vw \in E(G)$,
 \item $U \coloneqq T$ is the set of undeletable variables, and
 \item $\CS = \{(w,1,\leq)\}$ where $w\colon X \times D \rightarrow \NN$ is defined via $w(v,\circledast) = 1$ for all $v \in T$, and $w(v,\circledast) = 0$ for all $v \in V(G) \setminus T$.
\end{itemize}
Again, it is easy to verify that $Z$ is a \textsc{Vertex Multiway Cut} solution for $(G,T,k)$ if and only if $Z$ is a solution for $(\Gamma,\CS,U,k)$.

We remark that \textsc{Vertex Multiway Cut} can also be reduced to \textsc{Perm CSP Vertex Deletion with Size Constraints} without using the size constraints as follows.
Consider the instance $\Gamma = (X,D,C)$ with
\begin{itemize}
 \item $X \coloneqq V(G)$,
 \item $D \coloneqq T$,
 \item unary constraints $(v,\{v\})$ for every $v \in T$ 
 \item binary constraints $((v,w),R_=)$ for all $vw \in E(G)$ where $R_= \coloneqq \{(v,v) \mid v \in T\}$, and
 \item $U \coloneqq T$ is the set of undeletable variables
\end{itemize}

\medskip

For the second example, we consider \textsc{Component Order Connectivity}.

\defparproblem{\textsc{Component Order Connectivity}}{A graph $G$ and integers $k,t$}{$k$}
{Does there exist a set $Z \subseteq V(G)$ of size at most $k$ such that $|C| \leq t$ for every connected component $C$ of $G - Z$?}

Consider the instance $(\Gamma,\CS)$ with $\Gamma = (X,D,\CC)$ and
\begin{itemize}
 \item $X \coloneqq V(G)$,
 \item $D \coloneqq \{\circledast\}$,
 \item binary constraints $((v,w),\{(\circledast,\circledast)\})$ for all $vw \in E(G)$, and
 \item $\CS = \{(w,t,\leq)\}$ where $w\colon X \times D \rightarrow \NN$ is defined via $w(v,\circledast) = 1$ for all $v \in V(G)$.
\end{itemize}
As usual, it is straight-forward to verify that $Z$ is a \textsc{Component Order Connectivity} solution for $(G,k,t)$ if and only if $Z$ is a solution for $(\Gamma,\CS,k)$.

\medskip

Next, we turn to problems that can be reduced to \textsc{$2$Conn Perm CSP Vertex Deletion with Size Constraints}.

The first example is the \textsc{Subset Feedback Vertex Set} problem which asks to eliminate all cycles visiting a terminal vertex $t \in T$ in a graph $G$.

\defparproblem{\textsc{Subset Feedback Vertex Set (Subset FVS)}}{A graph $G$, a set $T \subseteq V(G)$, and an integer $k$}{$k$}
{Does there exist a set $Z \subseteq V(G)$ of size at most $k$ that hits all $T$-cycles?}

A set $Z \subseteq V(G)$ is a solution for \textsc{Subset FVS} if and only if $|W| \leq 2$ or $W \cap T = \emptyset$ for every $2$-connected component $W$ of $G - Z$.
Consider the CSP-instance $\Gamma = (X,D,\CC)$ with
\begin{itemize}
 \item $X \coloneqq V(G)$,
 \item $D \coloneqq \{\circledast\} \uplus V(G) \uplus E(G)$,
 \item unary constraints $(v,\{e \in E(G) \mid v \in e\} \cup \{v\})$ for all $v \in T$,
 \item unary constraints $(v,\{\circledast\} \cup \{e \in E(G) \mid v \in e\})$ for all $v \notin T$, and
 \item binary constraints $((v,w),R_=)$ for all $vw \in E(G)$ where $R_= \coloneqq \{(a,a) \mid a \in D\}$.
\end{itemize}
We show that $Z$ is a \textsc{Subset FVS} solution for $(G,T,k)$ if and only if $Z$ is a solution for $(\Gamma,k)$.

Suppose $Z$ is a \textsc{Subset FVS} solution for $G$ and let $W$ be a $2$-connected component of $G - Z$.
We define a satisfying assignment $\alpha\colon W \rightarrow D$ for $\Gamma[W]$.
If $W \cap T = \emptyset$ then define $\alpha(w) \coloneqq \circledast$ for all $w \in W$.
Otherwise $|W| \leq 2$.
If $|W| = 1$ then set $\alpha(w) \coloneqq w$ for the unique $w \in W$ (note that $w \in T$).
Otherwise $|W| = 2$ and $G[W]$ contains a single edge $e$.
We set $\alpha(w) \coloneqq e$ for all $w \in W$.

In the other direction, assume that $Z$ is a solution for $\Gamma$ and let $W$ be a $2$-connected component of $G - Z$.
Let $\alpha\colon W \rightarrow D$ be a satisfying assignment for $\Gamma[W]$.
By the binary constraints of $\Gamma$ we get that $\alpha(w) = \alpha(w')$ for all $w,w' \in W$.
If $\alpha(w) = \circledast$ for all $w \in W$ then $W \cap T = \emptyset$ using the unary constraints.

If there is some $e \in E(G)$ such that $\alpha(w) = e$ for all $w \in W$, then $W \subseteq e$ using the unary constraints.
Hence, $|W| \leq 2$.

Otherwise there is some $v \in V(G)$ such that $\alpha(w) = v$ for all $w \in W$.
Then $W = \{v\}$.

\medskip

Using the framework of Permutation-CSPs, we can also cover more specialized problems such as the following variant of \textsc{Subset FVS}.

\defparproblem{\textsc{Two Subset Feedback Vertex set (Two Subset FVS)}}{A graph $G$, a set $T \subseteq V(G)$, and an integer $k$}{$k$}{Does there exist a set $Z \subseteq V(G)$ of size at most $k$ that hits all cycles $C$ such that $|C \cap T| \geq 2$?}
\medskip

A set $Z \subseteq V(G)$ is a solution for \textsc{Two Subset FVS} if and only if $|W| \leq 2$ or $|W \cap T| \leq 1$ for every two-connected component $W$ of $G - Z$.
Without loss of generality assume that $T \neq \emptyset$.
Consider the CSP-instance $\Gamma = (X,D,\CC)$ with
\begin{itemize}
 \item $X \coloneqq V(G)$,
 \item $D \coloneqq T \uplus E(G)$,
 \item unary constraints $(v,\{v\} \cup \{e \in E(G) \mid v \in e\})$ for all $v \in T$,
 \item unary constraints $(v,T \cup \{e \in E(G) \mid v \in e\})$ for all $v \notin T$, and
 \item binary constraints $((v,w),R_=)$ for all $vw \in E(G)$ where $R_= \coloneqq \{(a,a) \mid a \in D\}$.
\end{itemize}
We show that $Z$ is a \textsc{Two Subset FVS} solution for $(G,T,k)$ if and only if $Z$ is a solution for $(\Gamma,k)$.

Suppose $Z$ is a \textsc{Subset FVS} solution for $(G,T,k)$ and let $W$ be a $2$-connected component of $G - Z$.
We define a satisfying assignment $\alpha\colon W \rightarrow D$ for $\Gamma[W]$.
If $|W \cap T| = 0$ then define $\alpha(w) \coloneqq t$ for all $w \in W$ where $t \in T$ is arbitrary.
If $|W \cap T| = 1$ then define $\alpha(w) \coloneqq t$ for all $w \in W$ where $t \in W \cap T$.
Otherwise $|W| \leq 2$ and $G[W]$ contains a single edge $e$.
We set $\alpha(w) \coloneqq e$ for all $w \in W$.

In the other direction, assume that $Z$ is a solution for $(\Gamma,k)$ and let $W$ be a $2$-connected component of $G - Z$.
Let $\alpha\colon W \rightarrow D$ be a satisfying assignment for $\Gamma[W]$.
By the binary constraints of $\Gamma$ we get that $\alpha(w) = \alpha(w')$ for all $w,w' \in W$.
If $\alpha(w) \in T$ for some $w \in W$ then $|W \cap T| \leq 1$ using the unary constraints.
Otherwise there is some $e \in E(G)$ such that $\alpha(w) = e$ for all $w \in W$.
Moreover, $W \subseteq e$ using the unary constraints.
Hence, $|W| \leq 2$.

\medskip

As the next example, we consider the \textsc{Subset Group Feedback Vertex Set} problem which in particular contains \textsc{Subset OCT} as a special case.

\defparproblem{\textsc{Subset Group Feedback Vertex Set (Subset Group FVS)}}{A $\Sigma$-labeled graph $G$, a set $T \subseteq V(G)$, and an integer $k$}{$k$}{Does there exist a set $Z \subseteq V(G)$ of size at most $k$ that hits all non-null $T$-cycles?}

As before, we reformulate the problem as a labeling problem.

\begin{lemma}
 Let $(G,\lambda)$ be a $\Sigma$-labeled graph and let $T \subseteq V(G)$.
 Then $G$ has no non-null $T$-cycle if and only if $G[W]$ has a consistent labeling or $W \cap T = \emptyset$ for all $2$-connected components $W$ of $G$.
\end{lemma}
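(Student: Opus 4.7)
The plan is to prove both directions by leveraging the already-stated Lemma~\ref{la:consistent-group-labeling}, which says that ``no non-null cycle'' and ``consistent labeling'' are equivalent, and then arguing about which $2$-connected component a putative non-null $T$-cycle lives in.

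For the ($\Leftarrow$) direction, I would use the standard fact that every cycle is itself $2$-connected, hence fully contained in a single $2$-connected component $W$ of $G$. Given a non-null $T$-cycle $C$, pick such a $W \supseteq V(C)$. Since $C$ is a $T$-cycle we have $W \cap T \neq \emptyset$, so by hypothesis $G[W]$ admits a consistent labeling $\Lambda$. By Lemma~\ref{la:consistent-group-labeling} applied to $G[W]$, there is no non-null cycle in $G[W]$, contradicting the non-nullity of $C$. So no non-null $T$-cycle can exist.

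For the ($\Rightarrow$) direction, I fix a $2$-connected component $W$ with $W \cap T \neq \emptyset$ and argue by contradiction that $G[W]$ contains no non-null cycle; the desired consistent labeling then follows from Lemma~\ref{la:consistent-group-labeling}. So suppose for contradiction that $C = (u_1,\dots,u_\ell,u_1)$ is a non-null cycle in $G[W]$ and pick any $t \in W \cap T$. If $t \in V(C)$, then $C$ itself is already a non-null $T$-cycle and we are done. Otherwise, since $G[W]$ is $2$-connected, the classical ``fan lemma'' (an easy consequence of Menger's theorem) provides two internally vertex-disjoint paths $P_1,P_2$ from $t$ to $V(C)$ whose endpoints $u,v$ on $C$ are distinct and whose internal vertices avoid $V(C)$. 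The two arcs of $C$ between $u$ and $v$ together with $P_1$ and the reverse of $P_2$ yield two cycles $C_1, C_2$ through $t$.

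The key step, and the one place where any real computation is needed, is to verify that at least one of $C_1, C_2$ is non-null. Writing $\alpha$ and $\beta$ for the $\lambda$-labels of the two arcs of $C$ traversed from $u$ to $v$ and from $v$ back to $u$ respectively, and $\gamma_i \coloneqq \lambda(P_i)$, we have
\[
\lambda(C_1) = \gamma_1 \alpha \gamma_2^{-1}, \qquad \lambda(C_2) = \gamma_1 \beta^{-1} \gamma_2^{-1}.
\]
If both were equal to $1_\Sigma$, then $\alpha = \gamma_1^{-1}\gamma_2$ and $\beta = \gamma_2^{-1}\gamma_1$, giving $\alpha\beta = 1_\Sigma$; but $\alpha\beta$ is exactly $\lambda(C)$ read from $u$, which is nontrivial by Lemma~\ref{la:non-null-rotate} applied to $\lambda(C) \ne 1_\Sigma$, a contradiction. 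Hence one of $C_1, C_2$ is a non-null cycle containing $t \in T$, i.e.\ a non-null $T$-cycle in $G$, contradicting the hypothesis. The main obstacle is a purely bookkeeping one, namely setting up the fan paths and the orientations cleanly enough that the short group-theoretic computation above is unambiguous.
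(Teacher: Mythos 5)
Your proposal is correct and follows essentially the same argument as the paper: the backward direction uses that a cycle lies entirely inside one $2$-connected component, and the forward direction routes two internally disjoint paths from $t$ to the non-null cycle via the fan lemma and then invokes Lemma~\ref{la:non-null-rotate} to conclude that one of the two resulting cycles through $t$ is non-null. Your ``if both were null then $\alpha\beta = 1_\Sigma$'' step is just the contrapositive phrasing of the paper's direct computation that $\lambda(C_1)\lambda(C_2)$ equals a conjugate of a rotation of $\lambda(C)$.
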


\begin{proof}
 First suppose $G$ has a non-null $T$-cycle $C$.
 Then there is some $2$-connected component $W$ of $G$ such that $C \subseteq W$.
 In particular, $W \cap T \neq \emptyset$ and $G[W]$ does not have a consistent labeling by Lemma \ref{la:consistent-group-labeling}.
 
 In the other direction, let $W$ be a $2$-connected component of $G$ such that $W \cap T \neq \emptyset$ and there is no consistent labeling for $G[W]$.
 Let $t \in W \cap T$.
 By Lemma \ref{la:consistent-group-labeling} there is a non-null cycle $C = (v_1,\dots,v_\ell)$ in $G[W]$.
 If $t \in C$ then $C$ forms a non-null $T$-cycle and we are done.
 Assume otherwise.
 Since $G[W]$ is $2$-connected there are distinct $i,j \in [\ell]$ and paths $P = (u_1,\dots,u_p)$ from $t = u_1$ to $v_i = u_p$ and $Q = (w_1,\dots,w_q)$ from $t = w_1$ to $v_j = w_q$ such that
 $\{v_1,\dots,v_\ell\}$, $\{u_2,\dots,u_{p-1}\}$ and $\{w_2,\dots,w_{q-1}\}$ are pairwise disjoint.
 Without loss of generality suppose that $i < j$.
 Define the two cycles
 \[C_1 \coloneqq (t = u_1,u_2,\dots,u_p = v_i,v_{i+1},\dots,v_j=w_q,w_{q-1},\dots,w_1 = t)\]
 and
 \[C_2 \coloneqq (t = w_1,w_2,\dots,w_q = v_j,v_{j+1},\dots,v_k,v_1,v_{2},\dots,v_i=u_p,u_{p-1},\dots,u_1 = t)\]
 Then
 \[\lambda(C_1)\lambda(C_2) = \left(\prod_{s \in [p-1]} \lambda(u_s,u_{s+1})\right) \lambda(C') \left(\prod_{s \in [p-1]} \lambda(u_s,u_{s+1})\right)^{-1}\]
 where $C' = (v_i,v_{i+1},\dots,v_k,v_1,v_2,\dots,v_i)$.
 This means that $\lambda(C_1)\lambda(C_2) \neq 1_\Sigma$ since $\lambda(C') \neq 1_\Sigma$ by Lemma \ref{la:non-null-rotate}.
 Hence, $C_1$ or $C_2$ forms a non-null $T$-cycle in $G$.
\end{proof}

Now consider the CSP-instance $\Gamma = (X,D,\CC)$ with
\begin{itemize}
 \item $X \coloneqq V(G)$,
 \item $D \coloneqq \{\circledast\} \uplus \Sigma$,
 \item unary constraints $(v,\Sigma)$ for $v \in T$, and
 \item binary constraints $((v,w),R_{\lambda(v,w)}')$ for all $vw \in E(G)$ where $R_\sigma' \coloneqq \{(\circledast,\circledast)\} \cup \{(\delta,\delta\sigma) \mid \delta \in \Sigma\}$ for $\gamma \in \Sigma$.
\end{itemize}

The last lemma implies that $Z$ is a \textsc{Subset Group FVS} solution for $(G,T,k)$ if and only if $Z$ is a solution for $(\Gamma,k)$.

\medskip

As the last example, we consider \textsc{$2$Conn Component Order Connectivity} which again relies on the size constraints.

\defparproblem{\textsc{$2$Conn Component Order Connectivity}}{A graph $G$ and integers $k,t$}{$k$}
{Does there exist a set $Z \subseteq V(G)$ of size at most $k$ such that $|W| \leq t$ for every $2$-connected component $W$ of $G - Z$?}

This problem can be formulated as an instance $(\Gamma,\CS)$ with $\Gamma = (X,D,\CC)$ and
\begin{itemize}
 \item $X \coloneqq V(G)$,
 \item $D \coloneqq \{\circledast\}$,
 \item binary constraints $((v,w),\{(\circledast,\circledast)\})$ for all $vw \in E(G)$, and
 \item $\CS = \{(w,t)\}$ where $w\colon X \times D \rightarrow \NN$ is defined via $w(v,\circledast) = 1$ for all $v \in V(G)$.
\end{itemize}

\medskip

We complete this part by explicitly listing some algorithmic consequences of the reductions described above and Theorems \ref{thm:perm-csp-deletion-result} and \ref{thm:2cc-perm-csp-deletion-result}.

\begin{corollary}
 There are algorithms for the following problems:
 \begin{enumerate}
  \item \textsc{Planar Component Order Connectivity} running in time $(n + t)^{\CO(\sqrt{k})}$,
  \item \textsc{Planar Two Subset FVS} running in time $n^{\CO(\sqrt{k})}$,
  \item \textsc{Planar Subset Group FVS} running in time $(n + |\Sigma|)^{\CO(\sqrt{k})}$, and
  \item \textsc{Planar $2$Conn Component Order Connectivity} running in time $(n + t)^{\CO(\sqrt{k})}$.
 \end{enumerate}
\end{corollary}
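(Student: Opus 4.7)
The plan is a routine assembly: for each of the four items I would apply the corresponding reduction already described above to translate a planar input into a planar Permutation-CSP-instance (planarity is preserved for free, since in every one of these reductions the constraint graph of the produced instance coincides with $G$), and then invoke either Theorem~\ref{thm:perm-csp-deletion-result} or Theorem~\ref{thm:2cc-perm-csp-deletion-result} with the appropriate values of $|X|$, $|D|$, and $\|\CS\|$.

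Concretely, I would proceed item by item. For \textsc{Planar Component Order Connectivity}, the reduction yields $|X|=n$, $|D|=1$, and a single $1$cc-size constraint $(w,t,\leq)$ so $\|\CS\|=t+2$; Theorem~\ref{thm:perm-csp-deletion-result} then gives running time $(n+1+(t+2))^{\CO(\sqrt{k})} = (n+t)^{\CO(\sqrt{k})}$. For \textsc{Planar Two Subset FVS}, the reduction sets $|X|=n$ and $D = T \uplus E(G)$, so $|D| \le n + |E(G)|$; here I use that a planar graph satisfies $|E(G)| = \CO(n)$ by Euler's formula, hence $|D| = \CO(n)$, and since $\CS = \emptyset$ Theorem~\ref{thm:2cc-perm-csp-deletion-result} yields $n^{\CO(\sqrt{k})}$. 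For \textsc{Planar Subset Group FVS}, we have $|X|=n$, $|D|=|\Sigma|+1$, and $\CS = \emptyset$, so Theorem~\ref{thm:2cc-perm-csp-deletion-result} delivers $(n+|\Sigma|)^{\CO(\sqrt{k})}$. Finally, for \textsc{Planar $2$Conn Component Order Connectivity}, the reduction produces $|X|=n$, $|D|=1$, and a single $2$cc-size constraint with $\|\CS\|=t+2$, which by Theorem~\ref{thm:2cc-perm-csp-deletion-result} again gives $(n+t)^{\CO(\sqrt{k})}$.

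There is no real obstacle. Correctness of every reduction was already verified in the pages above, planarity of the constructed constraint graph transfers directly from planarity of $G$, and the running-time bounds fall out by substitution. The only mildly nontrivial input is the $\CO(n)$ bound on $|E(G)|$ used in the \textsc{Two Subset FVS} case; without it, the $|E(G)|$ term introduced into the domain would appear explicitly in the final running-time expression rather than being absorbed into $n^{\CO(\sqrt{k})}$.
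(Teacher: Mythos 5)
Your proposal is correct and is exactly the argument the paper intends (the paper states the corollary without an explicit proof, leaving it as a direct consequence of the reductions and Theorems~\ref{thm:perm-csp-deletion-result} and \ref{thm:2cc-perm-csp-deletion-result}). You correctly route item (1) through Theorem~\ref{thm:perm-csp-deletion-result} and items (2)--(4) through Theorem~\ref{thm:2cc-perm-csp-deletion-result}, compute $\|\CS\|$ according to the paper's definition, and rightly flag the one nontrivial ingredient in item (2): the $|E(G)| = \CO(n)$ bound for planar graphs needed to absorb the $E(G)$ part of the domain into $n^{\CO(\sqrt{k})}$.
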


For the other problems, we combine Theorems \ref{thm:perm-csp-deletion-result} and \ref{thm:2cc-perm-csp-deletion-result} with kernelization results to obtain subexponential FPT algorithms.

\subsection{Kernels and Subexponential FPT Algorithms}

We can combine the subexponential parameterized algorithms for the problems discussed above with existing kernelization results to also obtain subexponential FPT algorithms.
For this to work, we need two additional properties of the kernelization.
First, we require that the parameter is only increased linearly.
Second, if the input graph is planar (resp.\ $H$-minor-free), then the graph of the reduced instance also has to be planar (resp.\ $H$-minor-free).
In many cases, existing kernelizations do not satisfy the second property.
However, it is usually possible to modify the kernels appropriately.
In the following, we list the kernelization results relevant for this work, and briefly comment on where modifications to existing kernels are required.
All details for these results are given in Section \ref{sec:kernels}.

We start by considering edge deletion versions of the problems above (to refer to the edge deletion version of a problem, we simply replace the word \textsc{Vertex} by \textsc{Edge} in the problem name, or add the word \textsc{Edge} if the word \textsc{Vertex} does not appear). 
In a recent work, Wahlstr{\"{o}}m \cite{Wahlstrom20} obtained quasipolynomial kernels for the following problems.
Also, it can be checked that all these kernels preserve minors of the input graph and do not increase the parameter in question.

\begin{theorem}[Wahlstr{\"{o}}m \cite{Wahlstrom20}]
 \label{thm:edge-kernels-result}
 The following problems have randomized quasipolynomial kernels with failure probability $\CO(2^{-n})$:
 \begin{enumerate}
  \item \textsc{Edge Multiway Cut} parameterized by solution size,
  \item \textsc{Group Feedback Edge Set} parameterized by solution size, for any group, such that the group remains the same in the reduced instance,
  \item \textsc{Subset Feedback Edge Set} with undeletable edges, parameterized by solution size.
 \end{enumerate}
 Moreover, if the graph in the input instance is $H$-minor-free then the graph in the reduced instance is also $H$-minor-free, and the parameter does not increase in the reduced instance.
\end{theorem}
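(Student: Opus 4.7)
The plan is to invoke Wahlström's randomized quasipolynomial kernelization framework from \cite{Wahlstrom20} directly, and then verify that every reduction rule used in that framework additionally satisfies the two properties we need: $H$-minor-freeness preservation and non-increase of the parameter. Wahlström's kernels are built around (i) a representative-set computation using linear matroids derived from the input graph (chiefly gammoids encoding connectivity between terminals), (ii) polynomial-time reduction rules that mark a set of edges/vertices as ``essential'' with respect to every minimum solution and then prune the rest, and (iii) a final size bound coming from the dimension of the matroid and a recursive shrinking argument. Our job is to certify that these ingredients can be carried out as edge deletions, vertex deletions, and edge contractions on the input graph, rather than as operations that introduce new adjacencies.

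First, I would walk through each reduction rule in \cite{Wahlstrom20} for the three problems in turn. The recurring operations are: delete an irrelevant edge, delete an irrelevant vertex, force an edge into the solution (decrement $k$ and delete the edge), and contract an edge whose endpoints must lie in the same component of the solution (used, e.g., to realize undeletable edges in the \textsc{Subset Feedback Edge Set} variant). All four are minor operations on the underlying graph, so none of them can create an $H$-minor that was not present in the input; in particular, unlike a ``torso'' step, they never add a new edge between previously non-adjacent vertices. This immediately yields the minor-preservation claim of the theorem. For the parameter, each rule either leaves $k$ unchanged or decreases it by the number of edges forced into the solution, so the new parameter $k'$ satisfies $k' \le k$, giving the required linear (in fact trivial) dependence.

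The main obstacle is the representative-set step, since it is the only place where one might fear that the ``reduction'' is an algebraic operation disconnected from the graph. I would argue as follows: Wahlström's representative sets are computed over a gammoid whose ground set is a subset of $V(G) \cup E(G)$. The reduction keeps a small representative subfamily $\calR$ and discards the complement. In the gammoid setting, discarding an element of the ground set corresponds exactly to deleting the corresponding vertex or edge from the input graph, because the gammoid is defined by routings in $G$ itself. Hence the output is literally an induced subgraph (possibly with some edges contracted or added to the solution), and $H$-minor-freeness is automatic. For the \textsc{Subset Feedback Edge Set} variant with undeletable edges, I would verify that undeletable edges can be enforced throughout the kernelization by pre-contracting them (yielding a minor, hence still $H$-minor-free) or by marking them as forbidden in the forced-edge rule, using the same matroid-based argument applied to the contracted graph.

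Finally, I would observe that the size analysis of \cite{Wahlstrom20} is not affected by these adaptations: the quasipolynomial bound $|I|^{\mathrm{polylog}(k)}$ depends only on the dimension of the representative matroid and the recursion depth, both of which are controlled by the same matroid-theoretic parameters in our setting. Collecting these verifications establishes the theorem with the claimed failure probability $\OO(2^{-n})$, the minor-preservation guarantee, and the non-increase of the parameter.
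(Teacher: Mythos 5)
Your overall strategy matches the paper's in spirit — verify that every operation the kernelization performs is a minor operation and does not increase the parameter, hence the output graph inherits $H$-minor-freeness. But the paper's argument is far shorter, and it short-circuits the rule-by-rule audit you propose: it simply quotes Wahlström's own description (from Section~3 of \cite{Wahlstrom20}) that the multicut-mimicking network construction \emph{only} ever contracts single ``safe'' edges and repeats. Since every such step is an edge contraction, the reduced graph is literally a minor of the input, and the statement follows immediately from that observation together with Corollary~3 of \cite{Wahlstrom20}.

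Where your write-up drifts from what Wahlström actually does: you describe the representative-set pruning as \emph{deleting} the discarded ground-set elements (``discarding an element of the ground set corresponds exactly to deleting the corresponding vertex or edge''), and you attribute contraction primarily to the handling of undeletable edges in \textsc{Subset Feedback Edge Set}. In Wahlström's multicut-mimicking construction the operative operation throughout is edge contraction, not deletion — deleting an edge would change the cut structure and break the mimicking property, whereas contracting an edge known to be avoidable by some minimum multicut preserves it. This imprecision does not invalidate your conclusion, because deletion is also a minor operation, but it makes the argument less faithful to \cite{Wahlstrom20} and obscures the one-line justification the paper relies on. The intended argument is: the kernel is obtained purely by edge contractions, hence it is a minor of $G$, hence $H$-minor-free when $G$ is, and since nothing ever increases $k$, the parameter bound also follows.
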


Since all the reductions presented above also go through for the edge deletion versions of the problems, we obtain the following corollary by combining Theorem \ref{thm:perm-csp-edge-deletion-result} and the last theorem.

\begin{corollary}
 There are randomized algorithms solving the following problems with failure probability $\CO(2^{-n})$: 
 \begin{enumerate}
  \item \textsc{$H$-Minor-Free Edge Multiway Cut} in time $2^{\CO(c_H\cdot\sqrt{k}\cdot\textup{polylog}(k))}n^{\CO(1)}$, and
  \item \textsc{$H$-Minor-Free Group Feedback Edge Set} in time $|\Sigma|^{\CO(c_H\cdot\sqrt{k}\cdot\textup{polylog}(k))}n^{\CO(1)}$.
 \end{enumerate}
 Here, $c_H$ denotes a constant that only depends on $H$.
\end{corollary}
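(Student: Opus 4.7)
The plan is to pipeline the quasipolynomial randomized kernels of Theorem~\ref{thm:edge-kernels-result} with the $n^{\CO(c_H\sqrt k)}$-time algorithm of Theorem~\ref{thm:perm-csp-edge-deletion-result}, using the edge-deletion analogs of the reductions already exhibited in this section to express each graph problem as a \textsc{Perm CSP Edge Deletion} instance.

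First, for an input instance on an $H$-minor-free graph $G$ with parameter $k$, I invoke the appropriate kernel from Theorem~\ref{thm:edge-kernels-result}. This runs in polynomial time, has failure probability $\CO(2^{-n})$, preserves $H$-minor-freeness, does not increase the parameter, and produces an equivalent instance on a graph $G'$ with $|V(G')|\le k^{\textup{polylog}(k)} = 2^{\textup{polylog}(k)}$; for \textsc{Group Feedback Edge Set} the group $\Sigma$ is also preserved.

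Second, I translate the kernelized instance to a \textsc{Perm CSP Edge Deletion} instance whose constraint graph is $G'$, using the edge-deletion analogs of the reductions given earlier in this section (the same variable-per-vertex, constraint-per-edge constructions, now with solution edges identified with removed constraints rather than removed variables). For \textsc{Edge Multiway Cut} the domain is $D = T$, every edge carries the equality relation $R_=$, and each terminal $t$ receives a unary constraint pinning $\alpha(t) = t$; in particular $|D|\le |V(G')|$. For \textsc{Group Feedback Edge Set} the domain is $D = \Sigma$ and each edge $vw$ carries the permutation relation $R_{\lambda(v,w)}=\{(\delta,\delta\lambda(v,w)) : \delta\in\Sigma\}$. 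Since the constraint graph equals $G'$ in each case, it inherits $H$-minor-freeness from the kernel.

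Third, I invoke Theorem~\ref{thm:perm-csp-edge-deletion-result} on the CSP instance. For \textsc{Edge Multiway Cut} the running time is
\[
(|V(G')|+|D|)^{\CO(c_H\sqrt k)} \;=\; \bigl(2^{\textup{polylog}(k)}\bigr)^{\CO(c_H\sqrt k)} \;=\; 2^{\CO(c_H\sqrt k\,\textup{polylog}(k))}.
\]
For \textsc{Group Feedback Edge Set} the running time is $(|V(G')|+|\Sigma|)^{\CO(c_H\sqrt k)}$, which using $|\Sigma|\ge 2$ is bounded by $|\Sigma|^{\CO(c_H\sqrt k)}\cdot 2^{\CO(c_H\sqrt k\,\textup{polylog}(k))}\le |\Sigma|^{\CO(c_H\sqrt k\,\textup{polylog}(k))}$. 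Adding the polynomial-time overhead of the kernel and propagating its failure probability yields the claimed bounds. There is no genuine obstacle: the argument is a direct composition, and the only point that deserves a moment's attention is verifying that the edge-deletion reductions keep the constraint graph isomorphic to $G'$ so that the kernel's $H$-minor-freeness transfers intact to the CSP instance fed into Theorem~\ref{thm:perm-csp-edge-deletion-result}.
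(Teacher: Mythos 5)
Your proposal is correct and follows exactly the route the paper intends: kernelize via Theorem~\ref{thm:edge-kernels-result}, express the kernelized instance as \textsc{Perm CSP Edge Deletion} via the edge-deletion analogs of the reductions in this section, and run the algorithm of Theorem~\ref{thm:perm-csp-edge-deletion-result}. Your write-up in fact supplies more detail (the explicit CSP encodings and the arithmetic bounding $(|X|+|D|)^{\CO(c_H\sqrt k)}$) than the paper's one-line justification.
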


We remark that a subexponential fpt algorithm for \textsc{Planar Edge Multiway Cut} was already given in \cite{PilipczukPSL18}.

Observe that \textsc{Subset Feedback Edge Set} is not covered by the corollary since restrictions need to be enforced on $2$-connected components (after removing the solution).
However, we can still obtain a subexponential FPT algorithm on planar graphs by reducing the edge deletion version to the vertex deletion version with undeletable vertices.
Observe that we can easily introduce undeletable vertices into the above reduction by using the set of undeletable variables in the definition of \textsc{Perm CSP Vertex Deletion with Size Constraints} and \textsc{$2$Conn Perm CSP Vertex Deletion with Size Constraints}.
So we obtain the following corollary from combining Theorems \ref{thm:2cc-perm-csp-deletion-result} and \ref{thm:edge-kernels-result}.

\begin{corollary}
 There is a randomized algorithm with failure probability $\CO(2^{-n})$ solving \textsc{Planar Subset Feedback Edge Set with Undeletable Edges} in time $2^{\CO(\sqrt{k}\cdot\textup{polylog}(k))}n^{\CO(1)}$.
\end{corollary}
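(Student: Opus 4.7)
The plan is to compose three ingredients: the quasipolynomial randomized kernel of Theorem~\ref{thm:edge-kernels-result}(3), a planarity-preserving reduction from the edge version to the vertex version of \textsc{Subset FVS} with undeletable parts, and finally the reduction (already presented in the excerpt) to \textsc{$2$Conn Perm CSP Vertex Deletion with Size Constraints} followed by Theorem~\ref{thm:2cc-perm-csp-deletion-result}.

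First, I would apply Theorem~\ref{thm:edge-kernels-result}(3) to the input instance $(G,U,T,k)$. In polynomial time this produces an equivalent planar instance $(G',U',T',k')$ of \textsc{Subset Feedback Edge Set with Undeletable Edges} with $k' \le k$ and $|V(G')| + |E(G')| \le k^{\CO(\textup{polylog}(k))}$, failing with probability $\CO(2^{-n})$; planarity of $G'$ follows from the fact that the kernel is minor-preserving. From here on I work exclusively with the kernelized instance, and the remaining steps are deterministic.

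Next, I would reduce \textsc{Subset FES with Undeletable Edges} on $G'$ to \textsc{Subset FVS with Undeletable Vertices} on a planar graph $G''$ by subdividing each edge. For every $e = uv \in E(G')$ introduce a fresh vertex $x_e$ and replace $e$ with the path $u$--$x_e$--$v$. Mark every original vertex in $V(G')$ undeletable; mark $x_e$ undeletable if and only if $e \in U'$. Take $T'' \coloneqq \{x_e \mid e \in T'\}$ as the terminal vertex set. Planarity is preserved because subdivision does not change the embedding. Every cycle of $G''$ is a subdivision of a cycle of $G'$ and passes through the subdivision vertex of every edge it uses, so a cycle of $G'$ contains an edge of $T'$ iff the corresponding cycle of $G''$ contains a vertex of $T''$. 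Hence, sending $Z_E \subseteq E(G') \setminus U'$ to $Z_V \coloneqq \{x_e \mid e \in Z_E\} \subseteq V(G'') \setminus U''$ gives a size-preserving bijection between solutions of the two instances.

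I would then apply the \textsc{Subset FVS} $\to$ \textsc{$2$Conn Perm CSP Vertex Deletion with Size Constraints} reduction already described in the excerpt (with domain $\{\circledast\} \uplus V(G'') \uplus E(G'')$, the specified unary constraints on terminals/non-terminals, and binary equality constraints). The set of undeletable variables can simply be taken to equal the undeletable vertices of $G''$, as the text notes that such a set $U$ is built into the CSP problem definition; no size constraints are needed, so $\|\CS\| = 1$. The constraint graph coincides with $G''$, which is planar, and the parameter is still $k'$. Invoking Theorem~\ref{thm:2cc-perm-csp-deletion-result} on this CSP instance takes time $(|X| + |D| + \|\CS\|)^{\CO(\sqrt{k'})}$; since $|X|, |D| = \CO(|V(G')| + |E(G')|) = k^{\CO(\textup{polylog}(k))}$ and $k' \le k$, this evaluates to
\[
\left(k^{\CO(\textup{polylog}(k))}\right)^{\CO(\sqrt{k})} \;=\; 2^{\CO(\sqrt{k}\cdot\textup{polylog}(k))},
\]
which together with the polynomial kernel cost gives the claimed $2^{\CO(\sqrt{k}\cdot\textup{polylog}(k))} \cdot n^{\CO(1)}$ bound and inherits failure probability $\CO(2^{-n})$ from the kernel. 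I do not anticipate a serious obstacle: the only real care is in checking the subdivision reduction (in particular that undeletable edges map to undeletable subdivision vertices and that the $2$-connected component characterization of \textsc{Subset FVS} faithfully reflects $T'$-edge-cycles of $G'$), which is routine because every $2$-connected component of $G''$ is itself a subdivision of a $2$-connected component of $G'$.
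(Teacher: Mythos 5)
Your proposal is correct and follows exactly the route the paper gestures at (kernelize via Theorem~\ref{thm:edge-kernels-result}(3), reduce the edge-deletion instance to a vertex-deletion instance with undeletable vertices, encode as \textsc{$2$Conn Perm CSP Vertex Deletion with Size Constraints}, invoke Theorem~\ref{thm:2cc-perm-csp-deletion-result}); you merely make the subdivision reduction explicit where the paper leaves it implicit. One small wrinkle: the closing claim that every $2$-connected component of $G''$ is a subdivision of one of $G'$ fails for blocks that are single edges ($K_2$ subdivides into two $K_2$-blocks), but since such blocks have size at most $2$ and are therefore automatically compliant with the \textsc{Subset FVS} characterization, the argument goes through regardless.
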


Next, let us come back to vertex deletion problems.
Here, the situation is generally more complicated since existing kernels for the problems we are interested in usually do not preserve $H$-minor-freeness.
In Section \ref{sec:kernels}, we present modifications to several kernels from \cite{HolsK18,KratschW20}.
These modifications require us to introduce undeletable vertices into the problem definition.
Strictly speaking, this means that our results are not kernelizations, but only compressions.
However, for the purpose of designing subexponential FPT algorithms, this does not pose any additional problems since all reductions presented above can trivially be extended to take undeletable vertices into account.
For any problem $\Pi$ discussed above, we use \textsc{$\Pi$ with Undeletable vertices} to refer to the variant of the problem where undeletable vertices are added.
Also, \textsc{Vertex Multiway Cut with Deletable Terminal} refers to the variant of \textsc{Vertex Multiway Cut} where terminal vertices may be deleted in the solution.

\begin{theorem}
\label{thm:H-minor-graph-Pi-compression-vertex-SFVS}
Let $\Pi$ be \textsc{Vertex Multiway Cut with Deletable Terminal} or \textsc{Subset Feedback Vertex Set}.
There is an algorithm that given an instance $(G,T,k)$ of \textsc{$H$-Minor Free $\Pi$} constructs an equivalent instance $(G',T,F,k')$ of \textsc{$H$-Minor Free $\Pi$ with Undeletable vertices} in randomized polynomial time and with failure probability $\calO(2^{-|V(G)|})$ such that $|V(G')| \in (c_H \cdot k)^{\calO(1)}$ and $k' \leq k$.
\end{theorem}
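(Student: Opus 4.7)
The plan is to build on the known randomized (quasi)polynomial kernels for \textsc{Vertex Multiway Cut with Deletable Terminals} and \textsc{Subset Feedback Vertex Set} (due to Kratsch--Wahlstr\"om and Hols--Kratsch respectively), and to replace every step of those kernels that can destroy $H$-minor-freeness with a step that manifestly preserves it. The central move, already outlined in the technical overview around Figure~\ref{fig:torso}, is to swap each application of \emph{Mark and Torso} for what we call \emph{Contract to Undeletable}: given a set $X\subseteq V(G)$ that the marking step guarantees to be disjoint from some optimal solution, instead of deleting $X$ and putting a clique on $N(X)$, we contract each component of $G[X]$ to a single vertex and declare that vertex undeletable. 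Since the resulting graph is a minor of $G$, $H$-minor-freeness is preserved, and since every undeletable vertex behaves exactly like its pre-image for the purposes of paths/cycles/cuts through $X$, correctness of the kernelization transfers with only bookkeeping changes. This is exactly the reason we need the output problem to be the ``with undeletable vertices'' variant.

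Concretely, I would first run the matroid-based representative-sets marking from the existing kernels to obtain, with failure probability $\OO(2^{-|V(G)|})$, a set $Y\subseteq V(G)$ of size $(c_H k)^{\OO(1)}$ such that some optimum $Z$ is disjoint from $V(G)\setminus Y$; this step is combinatorial and does not touch planarity, so it goes through verbatim for $H$-minor-free inputs. Second, I would contract each connected component of $G-Y$ to a single undeletable vertex, producing a graph $G'$ whose vertex set is $Y\cup\{y_C : C \text{ a component of }G-Y\}$ and whose edges are those of $G[Y]$ together with one edge from $y_C$ to each vertex of $N(C)\subseteq Y$. The parameter is set to $k'=k$ (it does not increase because $Z$ lies entirely in $Y$ and the contracted vertices are undeletable). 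To bound $|V(G')|$, I would add a \emph{twin-reduction} step that merges undeletable contracted vertices having the same neighborhood in $Y$ (keeping $\OO(1)$ copies per neighborhood pattern, enough to preserve the small-cut/cycle structure the problem cares about). The key estimate is then that in the bipartite graph between $Y$ and the surviving $y_C$'s, $H$-minor-freeness and Theorem~\ref{thm:average-degree-H-minor-free} (Kostochka/Thomason) give total degree $\OO(d_H\cdot(|Y|+\#y_C))$, and a counting argument (bound the number of distinct neighborhoods of bounded size) yields $\#y_C = (c_H\cdot |Y|)^{\OO(1)} = (c_H\cdot k)^{\OO(1)}$.

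For \textsc{Vertex Multiway Cut with Deletable Terminals}, essentially only the Mark-and-Torso step needs to be replaced, and once that swap is done the rest of the Kratsch--Wahlstr\"om analysis (correctness of the matroid-based marking, preservation of every minimal multiway cut) carries over without change, because a multiway cut in $G$ survives a contraction of cut-free pieces into undeletable vertices, and conversely. The output is a \textsc{Vertex Multiway Cut with Deletable Terminals with Undeletable Vertices} instance whose underlying graph is a minor of $G$, so it is $H$-minor-free, and whose size is $(c_H k)^{\OO(1)}$ as required.

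The main obstacle, as flagged in the overview, is \textsc{Subset Feedback Vertex Set}: the Hols--Kratsch kernel contains, beyond Mark-and-Torso, an additional reduction step that introduces non-minor edges (essentially it adds shortcut edges to encode certain $T$-path obligations), and naively re-running it would destroy $H$-minor-freeness. I would replace this step by a planarity/minor-preserving argument that uses the contracted undeletable vertex as the carrier of the shortcut information rather than a new clique or edge: whenever the original step would add an edge between two vertices $u,v\in Y$ to record the existence of a $T$-avoiding $u$--$v$ path through some piece $X\subseteq V(G)\setminus Y$, we instead route this obligation through the undeletable vertex $y_X$ arising from the contraction of $X$ (splitting $X$ into at most two pieces by the $T$-status of its vertices if necessary, and marking the appropriate piece undeletable). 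Verifying that this rerouting is equivalent for the purposes of hitting $T$-cycles, and that the resulting graph still satisfies the size bound after twin-reduction, is the part of the argument that genuinely requires re-doing the Hols--Kratsch analysis rather than invoking it as a black box; this is where I expect essentially all the real work to lie.
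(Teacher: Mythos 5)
Your first three paragraphs essentially reproduce the paper's strategy for the \textsc{Vertex Multiway Cut with Deletable Terminals} half: keep the matroid-based marking (which is purely combinatorial and indifferent to minor-freeness), then replace \emph{Torso} by contracting each component of $G-W$ to an undeletable vertex, prune redundant contracted vertices, and count the survivors using $H$-minor-freeness. Two remarks on the counting. First, the paper's pruning rule is domination ($N(u)\subseteq N(v)$ implies delete $u$), not ``same neighborhood,'' and the resulting bound (Lemma~\ref{lemma:bipartite-H-minor}) is \emph{linear}, $|F|\le c_H|W|$, proved via a degeneracy argument on an auxiliary graph on $W$ together with the bound on maximal cliques in degenerate graphs. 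Your version — keep $\calO(1)$ copies per neighborhood pattern and count distinct bounded-size neighborhoods — only yields $|W|^{\calO(d_H)}$, i.e.\ a polynomial whose exponent depends on $H$; that is weaker than the claimed $(c_H\cdot k)^{\calO(1)}$, though it would still suffice for the downstream planar applications. Second, for \textsc{Subset FVS} the marking of Lemma~\ref{lemma:subset-FVS-marking} is polynomial in $|T_E|+k$, not in $k$ alone, so one cannot ``first run the marking to get a set of size $(c_Hk)^{\calO(1)}$'': the terminal-edge count must be reduced to $k^{\calO(1)}$ \emph{before} (or interleaved with) the marking, and that reduction is precisely where the trouble lives.

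The genuine gap is in your last paragraph. The step of Hols--Kratsch that breaks minor-freeness is not the recording of $T$-avoiding path obligations as shortcut edges; it is that their terminal-edge reduction (an expansion-lemma argument around a high-degree vertex $z$ of an approximate solution) produces \emph{pair constraints} of the form ``every solution contains $u$ or $v$,'' which are then re-encoded by adding parallel terminal edges between $u$ and $v$. A disjunctive constraint on the solution cannot be carried by an undeletable contracted vertex, so your proposed rerouting through $y_X$ does not address the actual obstacle. The paper instead redesigns the reduction rule itself: it introduces a $t$-expansion \emph{with priority} (Lemma~\ref{lem:expansion-lemma-priority}, edges colored red/blue with red preferred), chooses the expansion and the unsaturated component via Lemma~\ref{lem:extra-expansion-vertex}, and deletes edges incident to $z$ in such a way that the priority property forces the would-be partner $z_0$ of each pair into \emph{every} solution (Claim~\ref{claim:prop-sol-backword} and Lemma~\ref{lemma:rr-expansion-correct}); pair constraints then never arise, no edges are added, and Lemma~\ref{lemma:bound-terminal-nrs} bounds $|T_E|$ by $\calO(k^5)$ while keeping the graph a minor of the input. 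Without some replacement of this kind, your argument establishes the theorem only for \textsc{Vertex Multiway Cut with Deletable Terminals}, not for \textsc{Subset Feedback Vertex Set}.
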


\begin{theorem}
\label{thm:H-minor-graph-Pi-compression-vertex-GFVS}
 There is an algorithm that given an instance $(G,\lambda, \Sigma, T,k)$ of \textsc{$H$-Minor-Free Group Feedback Vertex Set} constructs an equivalent instance $(G', \lambda', \Sigma, T,F,k')$ of \textsc{$H$-Minor-Free Group Feedback Vertex Set with Undeletable vertices} in randomized polynomial time and with failure probability $\calO(2^{-|V(G)|})$ such that $|V(G')| \in k^{\calO(|\Sigma| \cdot |V(H)|)}$ and $k' \leq k$.
\end{theorem}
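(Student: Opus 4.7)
The plan is to adapt the compression for \textsc{$H$-Minor-Free Subset FVS} from Theorem~\ref{thm:H-minor-graph-Pi-compression-vertex-SFVS}, replacing the planarity-breaking \emph{Torso} step by \emph{Contract to Undeletable} and augmenting it to carry the group labels through the contraction. First I would invoke a randomized matroid-based marking procedure (in the style of Kratsch-Wahlström for \textsc{Group FVS}) working over a field of size $2^{\Omega(|V(G)|)}$ to achieve failure probability $\OO(2^{-|V(G)|})$; this produces a set $Y \subseteq V(G)$ of size $|Y| \leq k^{\OO(|\Sigma|)}$ such that with high probability some optimum solution $Z^{*}$ is contained in $Y$. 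Put $X := V(G) \setminus Y$. By the marking guarantee we may assume $X \cap Z^{*} = \emptyset$, so $G[X]$ contains no non-null cycle and by Lemma~\ref{la:consistent-group-labeling} each connected component $C$ of $G[X]$ admits a consistent labeling $\Lambda_C : V(C) \to \Sigma$, computable by rooting at some $x_0 \in V(C)$ with $\Lambda_C(x_0) := 1_\Sigma$ and propagating along a spanning tree; if some $G[C]$ already contains a non-null cycle the algorithm outputs a trivial \textsc{No}-instance.

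Next I would contract each remaining component $C$ of $G[X]$ into a single new undeletable vertex $v_C$ added to $F$. For every original edge $xv_i$ with $x \in V(C)$ and $v_i \in Y$, introduce an edge $v_C v_i$ in $G'$ with label $\lambda'(v_C, v_i) := \Lambda_C(x)\cdot \lambda(x, v_i)$. A direct computation using $\lambda_P(x, x') = \Lambda_C(x)^{-1}\Lambda_C(x')$ for any $C$-path $P$ shows that every cycle of $G$ entering $C$ at $v_i$ and leaving at $v_j$ maps to a cycle through $v_C$ in $G'$ with the same group label, \emph{provided} $\lambda'(v_C, v_i)$ is independent of the choice of $x \in V(C)$ adjacent to $v_i$. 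If two such neighbors $x, x' \in V(C)$ yield different effective labels, then the cycle $x \to v_i \to x' \rightsquigarrow x$ closed through $G[C]$ is non-null in $G$, so $v_i$ must lie in every solution; I would detect all such forced $v_i$ in polynomial time, delete them, decrement $k$ accordingly, and re-run the marking procedure on the smaller instance, iterating at most $k$ times.

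To bound $|V(G')| = |Y| + |\{v_C\}|$, I would apply a twin-removal step that merges any two components of $G[X]$ presenting the same labeled adjacency pattern on $Y$. The bipartite incidence graph between $Y$ and the remaining undeletable vertices is $H$-minor-free, so by the same average-degree argument based on Theorem~\ref{thm:average-degree-H-minor-free} that underlies Theorem~\ref{thm:H-minor-graph-Pi-compression-vertex-SFVS} the number of pairwise distinct \emph{unlabeled} neighborhood patterns is $(c_H |Y|)^{\OO(1)}$, and each such unlabeled pattern admits at most $|\Sigma|^{\OO(|V(H)|)}$ distinct labeled refinements. Multiplying the bounds yields $|V(G')| \leq k^{\OO(|\Sigma| \cdot |V(H)|)}$; since contracting connected subgraphs to single vertices preserves $H$-minor-freeness, $G'$ is $H$-minor-free as required. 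The main obstacle is precisely this labeled type-counting: unlike \textsc{Subset FVS}, a contracted vertex $v_C$ is not characterized by its vertex-neighborhood alone but by its entire \emph{labeled} neighborhood, which is why the $|\Sigma|$ factor enters the exponent of the final bound, and why the potential inconsistency of effective labels has to be handled by the iterative forcing of conflicting neighbors $v_i$ into the solution.
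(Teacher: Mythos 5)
Your overall architecture matches the paper's: a Kratsch--Wahlstr\"om-style marking step producing a set of size $k^{\OO(|\Sigma|)}$ containing some optimum solution, followed by contracting the unmarked components to undeletable vertices, forcing into the solution any marked vertex that sees two conflicting ``effective'' labels from the same component, deduplicating by labeled neighborhood, and counting what survives. Your relabeling device (computing a consistent labeling $\Lambda_C$ on each component and conjugating the crossing edges) is a legitimate variant: the paper instead arranges the marking so that every edge of $G-W$ already carries $1_\Sigma$, which makes the contraction label-preserving for free, but the two are essentially interchangeable, and your conflict condition $\Lambda_C(x)\lambda(x,v_i)\neq\Lambda_C(x')\lambda(x',v_i)$ is exactly the right one. (The repeated re-marking is unnecessary --- the marked set minus the forced vertices still contains a solution of the reduced instance --- and as written it slightly degrades the failure probability, but that is cosmetic.)

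The genuine gap is in the size bound, which is the crux of the group case. You bound $|F|$ as (number of distinct \emph{unlabeled} neighborhood patterns) $\times$ (number of labeled refinements per pattern), justifying the first factor ``by the same average-degree argument that underlies the \textsc{Subset FVS} compression.'' That argument (Lemma~\ref{lemma:bipartite-H-minor}) gives a \emph{linear} bound $c_H|Y|$ only under the antichain hypothesis $N(u)\not\subseteq N(v)$ for all surviving pairs --- and that hypothesis is precisely what cannot be enforced here, because a vertex whose unlabeled neighborhood is contained in another's cannot be deleted when the labels differ. After labeled deduplication, nested and equal unlabeled neighborhoods survive, so Lemma~\ref{lemma:bipartite-H-minor} simply does not apply. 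Moreover, your second factor is wrong for high-degree contracted vertices: a pattern with $d$ neighbors admits up to $|\Sigma|^{d}$ pairwise non-dominating labeled refinements, not $|\Sigma|^{\OO(|V(H)|)}$, so the product does not bound $|F|$. The paper closes exactly this hole with a separate combinatorial lemma (Lemma~\ref{lemma:bipartite-H-minor-GFVS}): contracted vertices of degree at least $h=|V(H)|$ are bounded by $(h-1)|Y|^h$ via a $K_{h,h}$-minor pigeonhole on the $h$ smallest neighbors, with no label counting at all, while the $|\Sigma|^{\OO(h)}$ refinement count is applied only to the at most $|Y|^{h-1}$ neighborhoods of size less than $h$. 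Your proposal needs this degree split (or an equivalent argument) to obtain $|V(G')|\in k^{\OO(|\Sigma|\cdot|V(H)|)}$; as written, the counting step does not go through.
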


We get the following result by combining Theorems~\ref{thm:2cc-perm-csp-deletion-result} and \ref{thm:H-minor-graph-Pi-compression-vertex-SFVS}.

\begin{corollary}
 There is a randomized algorithm solving \textsc{Planar Subset Feedback Vertex Set} with failure probability $\CO(2^{-n})$ in time $2^{\CO(\sqrt{k}\cdot\log k)}n^{\CO(1)}$ where $k$ denotes the solution size.
\end{corollary}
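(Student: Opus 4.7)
The plan is to chain together the kernelization of Theorem~\ref{thm:H-minor-graph-Pi-compression-vertex-SFVS} (instantiated for planar graphs, i.e.\ $H = K_5$) with the reduction from \textsc{Subset FVS} to \textsc{$2$Conn Perm CSP Vertex Deletion with Size Constraints} described earlier in this section and the subexponential CSP algorithm of Theorem~\ref{thm:2cc-perm-csp-deletion-result}. Concretely, given an input $(G,T,k)$ of \textsc{Planar Subset FVS}, the first step is to invoke Theorem~\ref{thm:H-minor-graph-Pi-compression-vertex-SFVS} to obtain, in randomized polynomial time with failure probability $\CO(2^{-n})$, an equivalent instance $(G',T',F,k')$ of \textsc{Planar Subset FVS with Undeletable vertices} such that $|V(G')| = k^{\CO(1)}$, $k' \le k$, $G'$ is planar, and the vertices in $F$ must not be deleted.

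The second step is to apply the \textsc{Subset FVS} reduction described in this section to $(G',T',k')$, but augmented so that $F$ becomes the set of undeletable variables of the resulting Permutation-CSP-instance $(\Gamma,\CS)$. This augmentation is trivial: the reduction sets $X := V(G')$, so we can simply put $U := F$. The CSP instance has $|X| = k^{\CO(1)}$, domain $D = \{\circledast\} \uplus V(G') \uplus E(G')$ of size $k^{\CO(1)}$ (using that $G'$ is planar and therefore has $\CO(|V(G')|)$ edges), no size constraints so $\|\CS\| = 1$, and its constraint graph is exactly $G'$, hence planar. By the correctness of the reduction (already verified in this section, and unaffected by declaring $F$ undeletable), $(G',T',F,k')$ is a yes-instance of \textsc{Planar Subset FVS with Undeletable vertices} iff $(\Gamma,\CS,U,k')$ is a yes-instance of \textsc{Planar $2$Conn Perm CSP Vertex Deletion with Size Constraints}.

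The third step is to run the algorithm of Theorem~\ref{thm:2cc-perm-csp-deletion-result} on $(\Gamma,\CS,U,k')$. Its running time is
\[
(|X|+|D|+\|\CS\|)^{\CO(\sqrt{k'})} = (k^{\CO(1)})^{\CO(\sqrt{k})} = 2^{\CO(\sqrt{k}\cdot \log k)}.
\]
Adding the polynomial kernelization time yields the claimed overall bound $2^{\CO(\sqrt{k}\cdot\log k)}\cdot n^{\CO(1)}$, with the $\CO(2^{-n})$ failure probability inherited from the kernelization.

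The only non-routine point is that the kernelization must preserve planarity and must output an instance with undeletable vertices that is usable by the reduction; this is exactly what Theorem~\ref{thm:H-minor-graph-Pi-compression-vertex-SFVS} guarantees (and the reason why its statement introduces the undeletable-vertex variant). Everything else is a direct composition of previously stated results.
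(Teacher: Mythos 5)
Your proof is correct and follows exactly the chain that the paper indicates (it presents the corollary as ``combining Theorems~\ref{thm:2cc-perm-csp-deletion-result} and \ref{thm:H-minor-graph-Pi-compression-vertex-SFVS}'' without spelling out the details, which is what you supply). One small point to be careful about: applying Theorem~\ref{thm:H-minor-graph-Pi-compression-vertex-SFVS} with $H=K_5$ only guarantees that $G'$ is $K_5$-minor-free, which by itself does not imply planarity (Wagner's theorem also requires $K_{3,3}$-minor-freeness); the correct justification for ``$G'$ is planar'' is that the compression algorithm of Lemma~\ref{lemma:compression-H-minor-free-Pi-vertex-SFVS} builds $G'$ from $G$ using only edge contractions and vertex deletions, so $G'$ is a minor of $G$ and hence inherits planarity directly.
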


Similarly, we can get the following result by combining Theorems~\ref{thm:perm-csp-deletion-result} and \ref{thm:H-minor-graph-Pi-compression-vertex-GFVS}. 

\begin{corollary}
 There is a randomized algorithm solving \textsc{Planar Group Feedback Vertex Set} with failure probability $\CO(2^{-n})$ in time $2^{\CO(\sqrt{k}\cdot\log k \cdot |\Sigma|)}n^{\CO(1)}$ where $k$ denotes the solution size and $\Sigma$ the input group.
\end{corollary}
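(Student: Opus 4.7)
The plan is to combine the planarity-preserving compression from Theorem~\ref{thm:H-minor-graph-Pi-compression-vertex-GFVS} with the algorithm for \textsc{Planar Perm CSP Vertex Deletion with Size Constraints} from Theorem~\ref{thm:perm-csp-deletion-result}, following the general ``kernel + $n^{\CO(\sqrt{k})}$'' recipe outlined in the introduction.

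First I would apply Theorem~\ref{thm:H-minor-graph-Pi-compression-vertex-GFVS} to the given instance $(G,\lambda,\Sigma,T,k)$ of \textsc{Planar Group FVS}. Since planar graphs are $K_5$-minor-free with $|V(K_5)|=5=\CO(1)$, this yields in randomized polynomial time (with failure probability $\CO(2^{-n})$) an equivalent instance $(G',\lambda',\Sigma,T,F,k')$ of \textsc{Planar Group FVS with Undeletable vertices} such that $k'\le k$, $G'$ is planar, and $|V(G')|\le k^{\CO(|\Sigma|)}$.

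Second I would apply (the straightforward extension with undeletable variables of) the reduction from \textsc{Group FVS} to \textsc{Perm CSP Vertex Deletion} given in Section~3.2. This produces the Permutation-CSP-instance $\Gamma=(X,D,\CC)$ with $X\coloneqq V(G')$, domain $D\coloneqq \Sigma$, binary permutation constraints $((v,w),R_{\lambda'(v,w)})$ for every $vw\in E(G')$, undeletable variable set $U\coloneqq F$, and empty set of size constraints $\CS=\emptyset$ (so $\|\CS\|=1$). By Lemma~\ref{la:consistent-group-labeling}, $Z$ is a solution for the Group FVS instance if and only if $Z$ is a solution for this CSP instance, and the constraint graph $G(\Gamma)=G'$ is planar.

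Finally, I would invoke Theorem~\ref{thm:perm-csp-deletion-result} on $(\Gamma,\CS,U,k')$. The running time is
\[
(|X|+|D|+\|\CS\|)^{\CO(\sqrt{k'})} \;\le\; \bigl(k^{\CO(|\Sigma|)}+|\Sigma|+1\bigr)^{\CO(\sqrt{k})} \;=\; 2^{\CO(\sqrt{k}\cdot \log k \cdot |\Sigma|)},
\]
plus the polynomial overhead of the compression. Combining the two correctness guarantees yields a randomized algorithm with total failure probability $\CO(2^{-n})$, as claimed. There is no real obstacle here: all technical work is absorbed into Theorems~\ref{thm:H-minor-graph-Pi-compression-vertex-GFVS} and~\ref{thm:perm-csp-deletion-result}; the only detail worth checking is that the reduction to Permutation-CSP respects undeletable vertices (which is immediate, since $U\subseteq X$ is passed through unchanged and the reduction is bijective on vertex sets).
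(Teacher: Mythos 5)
Your proposal is correct and follows exactly the route the paper takes: it obtains this corollary by combining the planarity-preserving compression of Theorem~\ref{thm:H-minor-graph-Pi-compression-vertex-GFVS} (instantiated for planar, i.e.\ $K_5$-minor-free, graphs, giving a reduced instance of size $k^{\CO(|\Sigma|)}$) with the reduction of \textsc{Group FVS} to \textsc{Perm CSP Vertex Deletion} via Lemma~\ref{la:consistent-group-labeling} and the algorithm of Theorem~\ref{thm:perm-csp-deletion-result}, with undeletable vertices passed through as undeletable variables. Your running-time calculation $(k^{\CO(|\Sigma|)}+|\Sigma|+1)^{\CO(\sqrt{k})}=2^{\CO(\sqrt{k}\cdot\log k\cdot|\Sigma|)}$ matches the claimed bound.
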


For \textsc{Vertex Multiway Cut} on planar graphs, we can actually get a stronger result by exploiting existing kernels on planar graphs. 

\begin{theorem}[Jansen et al.\ \cite{JansenPL19}]
 \textsc{Planar Vertex Multiway Cut} parameterized by solution size admits a deterministic polynomial kernel where the parameter in the reduced instance is not increased.
\end{theorem}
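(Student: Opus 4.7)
The plan is to combine a constant-factor approximation with a protrusion-replacement strategy tailored to planar graphs. The first step is to compute an approximate solution: using a known polynomial-time constant-factor approximation for Vertex Multiway Cut (for instance, via half-integrality of the natural LP relaxation), I would obtain a set $A \subseteq V(G) \setminus T$ with $|A| = \Oh(k)$ separating all terminals of $T$, or else certify a NO-instance if no such $A$ exists. Setting $X \coloneqq A \cup T$, we have $|X| = \Oh(k)$, and every connected component of $G - X$ lies entirely in the reachability zone of a single terminal with respect to any optimum of size $\leq k$.

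Second, I would apply a planar protrusion decomposition around $X$. Since $|X| = \Oh(k)$ and $G$ is planar, standard planar decomposition results partition $V(G) \setminus X$ into $\Oh(k)$ protrusions, each having an $\Oh(1)$-sized boundary into $X$ and constant treewidth. Inside each such protrusion $P$, the only interaction with the rest of the graph is through $\partial P \subseteq X$; so $P$'s contribution to any solution depends only on how the vertices of $\partial P$ are separated from one another and to which terminal class each is assigned from the outside.

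Third, I would invoke a finite-integer-index argument: Planar Vertex Multiway Cut has a bounded number of inequivalent behaviors on a boundary of constant size, where the signature records, for every labeled partition of $\partial P$ by terminal classes, the minimum number of vertices inside $P$ whose deletion realizes that partition. Because $|\partial P|$ is constant there are constantly many signatures, and by brute-force enumeration over all small planar graphs with the same boundary layout (respecting the cyclic order around the face where $P$ is attached) one can precompute a canonical replacement gadget of $\Oh(1)$ size for each signature. Replacing every protrusion by its canonical gadget yields a planar instance on $\Oh(k)$ vertices, with the parameter adjusted only by the accumulated additive cost offsets so that it does not exceed $k$. The entire procedure is deterministic.

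The main obstacle is establishing the replacement lemma in a form that simultaneously preserves planarity and accurately encodes the undeletability of terminals lying on $\partial P$: the replacement gadget must embed in place of the removed protrusion without creating crossings, which forces the signature to include the cyclic order of $\partial P$ on the enclosing face as well as a flag for which $\partial P$ vertices are themselves terminals. Making this work requires enumerating canonical gadgets in the slightly enriched setting of Vertex Multiway Cut with undeletable boundary vertices, and then verifying that swapping gadgets for protrusions composes correctly into a global solution-preserving reduction; the delicate point is that the additive offsets must telescope exactly so that the final kernel's parameter is no larger than the original $k$.
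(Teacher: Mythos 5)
The statement you are proving is cited by the paper as external work \cite{JansenPL19}; the paper itself gives no proof, so your proposal must stand on its own. It has two gaps that I do not see how to close.

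First, the bound $|X| = \Oh(k)$ does not follow from your setup. A constant-factor approximation indeed gives $|A| = \Oh(k)$, but you set $X \coloneqq A \cup T$, and $|T|$ is not bounded in terms of $k$. Consider a star whose center is a non-terminal and whose $n$ leaves are all terminals: the optimal multiway cut has size $1$, while $|T| = n$ is arbitrarily large. Any kernel argument that starts by taking $X \supseteq T$ must first do work to bound $|T|$, and there is no trivial reduction rule that does so for \textsc{Vertex Multiway Cut}.

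Second, and more fundamentally, the ``standard planar decomposition result'' you invoke in the second step does not exist. Protrusion-decomposition theorems in the literature (Bodlaender et al.\ and follow-ups) require, in addition to a small modulator $X$, a bound on $\tw(G - X)$; the modulator alone says nothing about the treewidth of the remainder, and for \textsc{Vertex Multiway Cut} that remainder can be a huge grid. Concretely, take an $n \times n$ grid, attach a fresh vertex $u$ to one corner by a single edge, make a terminal $s$ adjacent only to $u$, and make a second terminal $t$ adjacent to the opposite corner of the grid. The set $\{u\}$ is a multiway cut of size $1$, so $k$ can be taken constant and the approximate solution $A$ can be taken of constant size, yet $G - A - T$ still contains the entire grid and has treewidth $\Theta(n)$. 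Unlike bidimensional problems such as \textsc{Feedback Vertex Set}, finding a large grid minor does not let you declare a \no-instance for \textsc{Vertex Multiway Cut}, so you cannot reduce to the bounded-treewidth case. This is exactly the difficulty that makes polynomial kernelization of \textsc{Vertex Multiway Cut} nontrivial even on planar graphs, and it is the reason the generic protrusion machinery does not apply here out of the box.

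Your third step — finite-integer-index with canonical planar gadgets respecting the cyclic boundary order, enriched by undeletability flags — is reasonable in spirit and is the kind of care one needs when doing planarity-preserving replacements, but it cannot rescue the first two steps. The argument of Jansen et al.\ is substantially different: it does not go through a generic ``small modulator + protrusion decomposition'' pipeline, but instead exploits the planar embedding and the structure of optimal cuts much more directly. As written, your proposal rests on a decomposition theorem that is false for this problem, so it does not constitute a proof.
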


\begin{corollary}
 There is a (deterministic) algorithm solving \textsc{Planar Vertex Multiway Cut} in time $2^{\CO(\sqrt{k} \cdot \log k)}n^{\CO(1)}$ where $k$ denotes the solution size.
\end{corollary}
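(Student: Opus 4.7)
The plan is to combine three ingredients already assembled in the paper: the planarity preserving deterministic polynomial kernel of Jansen et al.\ for \textsc{Planar Vertex Multiway Cut}, the reduction from \textsc{Vertex Multiway Cut} to \textsc{Perm CSP Vertex Deletion with Size Constraints} given earlier in this section, and the $n^{\CO(\sqrt{k})}$ algorithm of Theorem~\ref{thm:perm-csp-deletion-result}.

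First, given an instance $(G,T,k)$ of \textsc{Planar Vertex Multiway Cut}, I invoke the kernelization of Jansen et al.\ in polynomial time. Because it is a genuine kernel (not merely a compression), the output is again a \textsc{Planar Vertex Multiway Cut} instance $(G',T',k')$, with $G'$ planar, $|V(G')|\le k^{\CO(1)}$, and $k'\le k$ since the parameter is not increased.

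Second, I apply the terminal version of the reduction to \textsc{Perm CSP Vertex Deletion with Size Constraints} described in this section (the variant using the undeletable-variables set rather than a size constraint): I take $X\coloneqq V(G')$, $D\coloneqq T'$, unary constraints $(v,\{v\})$ for $v\in T'$, binary constraints $((v,w),R_{=})$ for every $vw\in E(G')$, and $U\coloneqq T'$. The constraint graph of the produced instance is exactly $G'$, hence planar, and the set $\CS$ of size constraints is empty so $\|\CS\|=1$. The correctness of this reduction is exactly what was verified for \textsc{Vertex Multiway Cut} earlier, and the solution parameter remains $k'\le k$.

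Finally, I run the algorithm of Theorem~\ref{thm:perm-csp-deletion-result} on this \textsc{Planar Perm CSP Vertex Deletion with Size Constraints} instance. Its running time is
\[
(|X|+|D|+\|\CS\|)^{\CO(\sqrt{k'})} \;=\; (k^{\CO(1)})^{\CO(\sqrt{k})} \;=\; 2^{\CO(\sqrt{k}\cdot\log k)},
\]
since both $|X|\le|V(G')|$ and $|D|=|T'|\le|V(G')|$ are bounded by $k^{\CO(1)}$ after kernelization. Adding the polynomial kernelization time gives the claimed $2^{\CO(\sqrt{k}\cdot\log k)}\cdot n^{\CO(1)}$ bound, and the whole procedure is deterministic since both the kernel and the CSP algorithm are deterministic. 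The only point that requires a moment's care (rather than being a genuine obstacle) is checking that the parameter is preserved when piping the kernel output into the CSP reduction, which holds by construction of the reduction.
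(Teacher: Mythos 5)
Your proposal is correct and follows exactly the route the paper takes: the deterministic planarity-preserving kernel of Jansen et al., piped through the (terminal-domain) reduction to \textsc{Perm CSP Vertex Deletion with Size Constraints}, and then Theorem~\ref{thm:perm-csp-deletion-result} on the kernelized instance, yielding $(k^{\CO(1)})^{\CO(\sqrt{k})} = 2^{\CO(\sqrt{k}\log k)}$ plus polynomial overhead. Nothing further is needed.
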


\section{Structural Analysis}
\label{sec:structure}

The next two sections are devoted to the proofs of Theorems \ref{thm:perm-csp-edge-deletion-result}, \ref{thm:perm-csp-deletion-result} and \ref{thm:2cc-perm-csp-deletion-result}.
In this section, we start by providing all necessary structural tools.
In particular, we prove Theorem \ref{thm:planarcontr-vert2} and Lemma \ref{lem:introbodyguess}.

\subsection{Contraction Decompositions}

\subsubsection{Edge Partitions}

We first concern ourselves with the contraction decompositions for planar and $H$-minor-free graphs.
For the edge deletion problem \textsc{Perm CSP Edge Deletion} our algorithm relies on a contraction decomposition due to Demaine et al.\ \cite{DemaineHK11} (see Theorem \ref{thm:Hcontr}).

\begin{theorem}[Demaine, Hajiaghayi, Kawarabayashi \cite{DemaineHK11}]
 \label{thm:contraction-decomposition-minor-edge-simple}
 Let $G$ be an $H$-minor free graph and $k \geq 1$.
 Then there is a partition of the edge set $E(G) = E_1 \uplus E_2 \uplus \dots \uplus E_k$ such that, for every $i \in [k]$, it holds that
 \[\tw\big(G/E_i\big) \leq c_H k\]
 for some constant $c_H$ depending only on $H$.
 Moreover, given the graph $G$ and $k \geq 1$, the partition $E(G) = E_1 \uplus E_2 \uplus \dots \uplus E_k$ can be computed in polynomial time.
\end{theorem}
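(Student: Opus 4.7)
The plan is to build the partition by adapting the classical planar BFS-layering argument to the decomposition of $H$-minor-free graphs given by the Graph Minors Structure Theorem.

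First I would warm up with the planar case. Fix an arbitrary vertex $r$ and let $L_0, L_1, L_2, \dots$ be the BFS layers from $r$. For each $j \in \{0,1,\dots,k-1\}$, let $E_j$ be the set of edges whose endpoints are of BFS-depths that include at least one value congruent to $j \pmod{k}$ (formally, assign each edge to a unique class based on the deeper endpoint). Contracting $E_j$ merges all layers into at most $k$ consecutive super-layers per block of $k$ layers, leaving only edges that go between non-adjacent layers within a single super-layer. By a standard argument (see Eppstein/Baker-style decomposition), the contracted graph $G/E_j$ has treewidth $\OO(k)$, because any $k$ consecutive BFS layers of a planar graph induce a subgraph of treewidth $\OO(k)$, and contracting the remaining layers does not increase treewidth. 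This gives the planar version directly.

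Next I would extend to bounded-genus graphs by essentially the same argument: $k$ consecutive BFS layers of a graph of genus $g$ have treewidth $\OO(k+g)$, so the same partition works with constant $c$ depending on $g$. For the general $H$-minor-free case I would invoke the Graph Minors Structure Theorem of Robertson and Seymour, which expresses $G$ as a clique-sum of pieces $G_1,\dots,G_m$, each of which is $h$-almost embeddable in a surface of genus at most $h$, where $h = h(H)$ is a constant; that is, each $G_i$ is obtained from a graph embedded in a surface of bounded genus by adding at most $h$ apex vertices and by replacing at most $h$ faces with vortices of bounded depth. I would apply the BFS-layering partition to the embedded portion of each piece $G_i$, producing a partition $E(G_i \setminus A_i) = E_1^{(i)} \uplus \cdots \uplus E_k^{(i)}$ where $A_i$ is the apex set of $G_i$. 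Each apex contributes a bounded number of extra edges that I attach to every class uniformly, increasing treewidth only by an additive $\OO(h)$ after contraction. Vortices are handled by observing that after contracting one class, each vortex still has a path-decomposition of bounded width, which contributes only $\OO(h)$ to treewidth.

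To combine the partitions across the clique-sum decomposition, I would root the clique-sum tree arbitrarily and, while descending, align the cyclic BFS-layer indices so that on the at most $h$-sized clique separators the layer indices agree modulo $k$. Since the cliques have bounded size, only a bounded number of ``shifted'' BFS indices need to be tried, and a standard greedy/tree-walk procedure fixes the alignment in polynomial time. Then $E_j$ is defined as the union over all pieces of the $j$-th class (after alignment), and the treewidth of $G/E_j$ is bounded by combining a tree decomposition of bounded adhesion from the clique-sum structure with the per-piece treewidth bound; the total width is $\OO_H(k)$, yielding the claimed constant $c_H$. The polynomial-time construction follows because an approximate structural decomposition of an $H$-minor-free graph can be computed in polynomial time by known algorithms.

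The main obstacle I foresee is the simultaneous alignment of BFS indices across clique-sums together with the careful accounting for vortices: one must verify that after the contraction every vortex still admits a bounded-width path decomposition and that this path decomposition glues consistently with the treewidth decompositions of the contracted embedded pieces. Once this bookkeeping is pinned down, the treewidth bound is additive in the constantly many ``defects'' (apices, vortices, adhesion cliques), and one obtains $\tw(G/E_j) \le c_H k$ as claimed.
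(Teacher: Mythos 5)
The paper does not prove this statement at all: it is imported verbatim as a black-box result of Demaine, Hajiaghayi and Kawarabayashi \cite{DemaineHK11} (it also appears as Theorem~\ref{thm:Hcontr} in the introduction), and the only thing the paper derives from it is the easy Corollary~\ref{cor:contraction-decomposition-minor-edge} via the observation that contracting one edge decreases treewidth by at most one. So there is no in-paper proof to compare against; what you have written is an outline of the strategy of the cited reference itself (surface layering plus the Graph Minors Structure Theorem), which is the right general plan but, as a standalone proof, has genuine gaps.

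The most concrete problem is your treatment of apices. You write that ``each apex contributes a bounded number of extra edges that I attach to every class uniformly, increasing treewidth only by an additive $\OO(h)$.'' An apex vertex in an $h$-almost-embeddable piece can be adjacent to arbitrarily many vertices of the embedded part, spread across all BFS layers, so the number of apex edges is not bounded by any function of $H$. Worse, if an apex edge lands in the contracted class $E_j$, contracting it merges the apex with a surface vertex and produces a vertex adjacent to vertices in essentially every layer, which destroys exactly the layered structure your treewidth bound relies on; simply adding the $\le h$ apices to every bag only works if their incident edges are never contracted, and you have not arranged that. The other two delicate points --- showing that vortices still admit bounded-width path decompositions \emph{after} contraction, and aligning the layer indices modulo $k$ consistently across the clique-sum tree --- are precisely where the bulk of the technical work in \cite{DemaineHK11} lives, and you explicitly defer them as ``bookkeeping.'' Your planar warm-up is fine (it is essentially the $k$-outerplanarity argument the paper itself uses for its vertex-partition analogue, Theorem~\ref{thm:contraction-decomposition-planar-vertex}), but the $H$-minor-free case as sketched would not go through without substantial additional argument.
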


To be more precise, the algorithm from Theorem \ref{thm:perm-csp-edge-deletion-result} relies on the following corollary which allows us to declare certain edges to be uncontractible.

\begin{corollary}
 \label{cor:contraction-decomposition-minor-edge}
 Let $G$ be an $H$-minor free graph and $k \geq 1$.
 Then there is a partition of the edge set $E(G) = E_1 \uplus E_2 \uplus \dots \uplus E_k$ such that, for every $i \in [k]$ and every $W \subseteq E_i$, it holds that
 \[\tw\big(G/(E_i \setminus W)\big) \leq c_H k + |W|\]
 for some constant $c_H$ depending only on $H$.
 Moreover, given the graph $G$ and $k \geq 1$, the partition $E(G) = E_1 \uplus E_2 \uplus \dots \uplus E_k$ can be computed in polynomial time.
\end{corollary}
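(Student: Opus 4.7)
The plan is to reuse the partition $E(G) = E_1 \uplus \dots \uplus E_k$ supplied by Theorem \ref{thm:contraction-decomposition-minor-edge-simple}; no new combinatorial decomposition is required. The only thing to show is that allowing a subset $W \subseteq E_i$ to remain uncontracted can inflate the treewidth by at most $|W|$ over the bound $c_H k$ already guaranteed for $G/E_i$.

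The key observation is that $G/E_i$ can be recovered from $G/(E_i \setminus W)$ by contracting the images of the edges in $W$ under the quotient map $\pi\colon V(G)\to V(G/(E_i\setminus W))$. More precisely, let $W' \subseteq E(G/(E_i\setminus W))$ be the set of non-loop images of edges in $W$. Then $(G/(E_i\setminus W))/W' = G/E_i$ and $|W'|\le |W|$, since distinct edges of $W$ may collapse to the same edge or to a self-loop after the quotient.

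I would then invoke the standard single-edge uncontraction lemma: if $H=G'/e$ for a single edge $e=uv$, then given any tree decomposition $(T,\beta)$ of $H$ of width $w$, one obtains a tree decomposition of $G'$ of width at most $w+1$ by replacing the contracted vertex $v_e$ with $\{u,v\}$ in every bag that contains $v_e$ and leaving all other bags unchanged. Every edge of $G'$ is still covered, and the increase in each bag size is at most $1$. Iterating this argument through a sequence of $|W'|$ single-edge uncontractions that produce $G/(E_i\setminus W)$ from $G/E_i$ yields
\[
 \tw\bigl(G/(E_i\setminus W)\bigr) \;\le\; \tw(G/E_i) + |W'| \;\le\; c_H k + |W|,
\]
as required. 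Polynomial-time computability of the partition is inherited verbatim from Theorem \ref{thm:contraction-decomposition-minor-edge-simple}.

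There is no real obstacle here: the work is entirely a structural bookkeeping step, and the potential subtlety, namely that contracting a whole component $C_j$ of $G[E_i]$ is not literally a single edge contraction, is handled by viewing the contraction component-wise as an iterated sequence of single-edge contractions along any spanning tree of $C_j$, so the iterated uncontraction lemma applies cleanly.
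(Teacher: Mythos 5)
Your proposal is correct and is essentially the paper's own argument: the paper also reuses the partition from Theorem~\ref{thm:contraction-decomposition-minor-edge-simple} and justifies the bound by the observation that contracting a single edge decreases treewidth by at most one, which is exactly your uncontraction lemma applied $|W'|\le|W|$ times. You have merely written out the bag-replacement details that the paper leaves implicit.
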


\begin{proof}
 The statement follows from Theorem \ref{thm:contraction-decomposition-minor-edge-simple} by observing that the contraction of a single edge decreases the treewidth of a graph by at most one.
\end{proof}

\subsubsection{Vertex Partitions}

Next, we turn to the vertex version of the contraction decomposition.
In this work, we prove such a statement only for planar graphs.
More precisely, we obtain the following theorem which implies Theorem \ref{thm:planarcontr-vert2} (here, we give an explicit upper bound on the treewidth of the contracted graph and avoid using $\CO$-notation).

\begin{theorem}
 \label{thm:contraction-decomposition-planar-vertex}
 Let $G$ be a planar graph and $k \geq 1$.
 Then there is a partition of the vertex set $V(G) = V_1 \uplus V_2 \uplus \dots \uplus V_k$ such that, for every $i \in [k]$ and every $W \subseteq V_i$, it holds that
 \[\tw\big(G/E(G[V_i \setminus W])\big) \leq 3k + 14|W| + 2.\]
 Moreover, given the graph $G$ and $k \geq 1$, the partition $V(G) = V_1 \uplus V_2 \uplus \dots \uplus V_k$ can be computed in polynomial time.
\end{theorem}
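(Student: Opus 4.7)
The plan is to construct the partition from a BFS layering and then verify the treewidth bound in two stages: first handle the ``clean'' case $W = \emptyset$, which is the classical Klein-style contraction decomposition, and then upgrade the argument to an arbitrary $W \subseteq V_i$ by modifying the tree decomposition and carefully accounting for the extra vertices $W$ leaves behind.

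\textbf{Partition.} Fix an arbitrary root $r \in V(G)$ and run BFS, letting $L_j$ denote the set of vertices at distance exactly $j$ from $r$. For $i \in [k]$, set
\[
V_i \;\coloneqq\; \bigcup_{j \geq 0 \,:\, j \equiv i-1 \pmod{k}} L_j.
\]
This partitions $V(G)$, since every vertex lies in exactly one BFS layer and every layer lies in exactly one $V_i$.

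\textbf{Base case $W = \emptyset$.} I first bound $\tw(G_\emptyset) \leq 3k + 2$, where $G_\emptyset \coloneqq G/E(G[V_i])$. The graph $G_\emptyset$ is still planar since contractions preserve planarity. Moreover, collapsing every component of $G[V_i]$ to a single super-vertex merges all layers $L_j$ with $j \equiv i-1 \pmod k$ into a ``super-layer'', and the non-contracted vertices form strips of at most $k-1$ consecutive BFS layers in between. Using the classical bound that a planar graph whose vertices can be covered by $d$ consecutive BFS layers (from some virtual root consisting of the super-layer) has treewidth at most $3d-1$, this yields $\tw(G_\emptyset) \leq 3k + 2$. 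This is essentially Klein's argument for the edge-contraction decomposition applied to the vertex-contraction setting.

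\textbf{Lifting to arbitrary $W$.} Let $(T, \beta)$ be a tree decomposition of $G_\emptyset$ of width at most $3k + 2$. I construct a tree decomposition $(T, \beta')$ of $G_W \coloneqq G/E(G[V_i \setminus W])$ on the same tree $T$. For each node $t \in V(T)$ and each super-vertex $v_C \in \beta(t)$ (where $C$ is a component of $G[V_i]$), the super-vertex $v_C$ of $G_\emptyset$ is ``split'' in $G_W$ into several objects: the individual vertices of $W \cap C$, and a super-vertex for each component of $G[C \setminus W]$. In $\beta'(t)$, I replace $v_C$ by exactly those split pieces that are needed for validity of the tree decomposition, i.e., those split pieces of $v_C$ that either (i) are incident to a non-super vertex already in $\beta(t)$, or (ii) must appear together with some other piece of the same $v_C$ to preserve connectivity of bag-subtrees for the split super-vertices of $G_W$.

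\textbf{Counting the blow-up.} The crux of the proof, and what I expect to be the main obstacle, is to show that the total extra mass added to a single bag $\beta'(t)$ beyond the original $\beta(t)$ is bounded by $14|W|$ globally (over all bags), or rather bounded locally by $14|W| - O(1)$ per bag. Since $G_\emptyset$ is planar and $\beta(t)$ is a bounded-size vertex set, each super-vertex $v_C$ interfaces with the rest of $\beta(t)$ through a bounded ``boundary''; planarity of $G[V_i]$ then restricts how many components of $C \setminus W$ can simultaneously see this boundary. Formally, I would use a planar separator / sphere-cut style argument: the pieces of $C \setminus W$ visible at a given bag can be charged to vertices of $W \cap C$ on their ``boundary'' on the embedded planar dual, and each vertex of $W$ can be charged only a constant number of times. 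The precise constant $14$ will fall out of this charging once one tracks how many pieces a single $w \in W$ can separate off and how many of them can live together in a single bag while still respecting the planar embedding.

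\textbf{Main obstacle.} The nontrivial step is the bookkeeping in the charging argument: the naive bound ``each $w \in W$ contributes at most (constant) pieces to each bag'' is not enough unless one exploits that most pieces appearing in a bag share their boundary with some fixed constant number of elements of the original bag $\beta(t)$ (coming from the $3k+2$ bound). Pinning down the constant $14$ requires carefully choosing the tree decomposition of $G_\emptyset$ (e.g., one derived from a sphere-cut decomposition rather than an arbitrary width-optimal one), so that its bags correspond to noose-like curves in the plane along which planarity can be leveraged cleanly. Once this is in place, the final treewidth bound $3k + 14|W| + 2$ follows by combining the base bound with the per-bag blow-up bound.
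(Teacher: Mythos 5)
Your high-level plan (a modular partition of a layering, then a treewidth bound for the contracted graph) matches the paper's strategy only at the coarsest level; the two steps where the actual work happens are either different in a way that breaks the later argument, or left unproved.

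First, the layering. You use BFS layers in $G$; the paper uses outerplanarity layers, i.e.\ distance layers in the \emph{radial graph} from the outer face (equivalently, $L_j$ is the outer boundary after peeling off $L_1,\dots,L_{j-1}$). This is not cosmetic. The paper's entire accounting rests on two facts that hold for outerplanarity layers but fail for BFS layers: within each component $A$ of a slab of $k{+}1$ consecutive layers, the bottom layer $G[L_j\cap A]$ is \emph{connected} and $1$-outerplanar, and every component $B$ of the deeper part attaches to $L_j$ along a \emph{face cycle} of that $1$-outerplanar graph. The first fact already gives the base case $W=\emptyset$ essentially for free (each slab component has a unique contracted "root" segment below it, so the gluing structure is a forest of clique-sums at single vertices); with BFS layers $G[L_j\cap A]$ can be disconnected and this tree-likeness is not automatic. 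So your claim that the base case is "essentially Klein's argument" is not justified: the vertex-contraction version is precisely what this theorem contributes, and even for $W=\emptyset$ the paper proves it via the full machinery below, not as a corollary of the edge version.

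Second, and more seriously, the step you yourself flag as the crux is not carried out, and the route you propose for it has an unaddressed obstruction. You want to take a width-$(3k{+}2)$ tree decomposition of $G/E(G[V_i])$ and locally split each super-vertex $v_C$ into the pieces it becomes in $G/E(G[V_i\setminus W])$. But $G/E(G[V_i\setminus W])$ is not a minor of $G/E(G[V_i])$, so splitting is an un-contraction; the number of pieces of a single $C$ can be $\Theta(n)$, and your rule (ii) — include whatever pieces are "needed to preserve connectivity of bag-subtrees" — is exactly the quantity you must bound, not a construction. No argument is given that only $O(|W|)$ pieces are forced into any single bag, and the constant $14$ is asserted to "fall out". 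The paper avoids lifting entirely: it splits each contracted segment into a top and a bottom copy, observes the resulting graph decomposes into $(k{+}1)$-outerplanar slabs of treewidth $\le 3k+2$, forms the slab/segment incidence DAG $H$, and proves via a counting lemma for $1$-outerplanar graphs ($\sum_f \max\{|N_{R}(f)\cap W|-1,0\}\le 13|W|$) that deleting at most $13|W|$ segments turns $H$ into a forest; the slab decompositions are then stitched along this forest and the deleted segments together with $W$ (at most $14|W|$ vertices in total) are added to \emph{every} bag. Note this is a single global set added to all bags, not a per-bag charging — a structurally different, and simpler to verify, mechanism than what you sketch. As written, your proposal does not constitute a proof of the theorem.
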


The remainder of this subsection is devoted to proving the theorem.
First consider the case $k=1$.
Then $W$ forms a vertex cover of the graph $G/E(G[V_1 \setminus W])$ and thus, $\tw(G/E(G[V_1 \setminus W])) \leq |W|$.
So suppose $k \geq 2$.

Let $G$ be a planar graph and let us fix an embedding of $G$ in the plane.
Let $F(G)$ denote the faces of the embedding.
We define the \emph{radial graph} $R \coloneqq R(G)$ with vertex set $V(R) \coloneqq V(G) \cup F(G)$ and edge set
\[E(R) \coloneqq \{vf \mid v \in V(G), f \in F(G), v \text{ is incident to } f\}.\]
Note that $R$ is a connected, bipartite graph with bipartition $(V(G),F(G))$.
Also pick $f_0$ to be the exterior face of the embedding.
For $j \geq 1$ define
\[L_j \coloneqq \{v \in V(R) \mid \dist_R(f_0,v) = 2j - 1\}\]
to be the $j$-th \emph{vertex layer}.
Note that $L_j \subseteq V(G)$ for all $j \geq 1$.
For a vertex $v \in V(G)$ define $L(v) \coloneqq j$ for the unique $j \geq 1$ such that $v \in L_j$.

\begin{observation}
 \label{obs:edge-endpoint-layers}
 Let $vw \in E(G)$.
 Then $|L(v) - L(w)| \leq 1$.
\end{observation}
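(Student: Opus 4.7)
The plan is a one-line triangle-inequality argument in the radial graph $R$. The key idea is that if $vw \in E(G)$, then $v$ and $w$ share an incident face in the embedding, so they are at distance at most $2$ from each other in $R$.

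More concretely, I would first observe that since $vw$ is an edge of the planar embedding, it lies on the boundary of some (in fact two) face $f \in F(G)$, and both $v$ and $w$ are incident to $f$. By the definition of the radial graph, this gives edges $vf, fw \in E(R)$, so $v$-$f$-$w$ is a walk of length $2$ in $R$ and hence $\dist_R(v,w) \leq 2$.

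Next, I would apply the triangle inequality to the distances from the outer face $f_0$:
\[
|\dist_R(f_0,v) - \dist_R(f_0,w)| \leq \dist_R(v,w) \leq 2.
\]
By definition, $\dist_R(f_0,v) = 2L(v)-1$ and $\dist_R(f_0,w) = 2L(w)-1$, so the above inequality becomes $2|L(v) - L(w)| \leq 2$, i.e., $|L(v) - L(w)| \leq 1$, as required.

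There is no real obstacle here; the only thing to be careful about is that the definition of $L_j$ uses odd distances $2j-1$ (consistent with the bipartition of $R$ placing $V(G)$ on the odd side relative to $f_0 \in F(G)$), so one must verify that $L(v)$ is well-defined for every $v \in V(G)$, which follows from connectedness of $R$.
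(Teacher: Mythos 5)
Your proof is correct and is essentially the same argument as the paper's: you observe that the endpoints of an edge share an incident face, hence are at distance at most 2 in the radial graph, and then apply the triangle inequality with respect to $f_0$. You simply spell out the final triangle-inequality step that the paper compresses into ``This implies the observation.''
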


\begin{proof}
 There is a face $f \in F(G)$ such that $v$ and $w$ are incident to $f$.
 Hence, $\dist_R(v,w) \leq 2$.
 This implies the observation.
\end{proof}

For later reference we give a second description of the layers (with respect to the fixed embedding of $G$).
For $j \geq 1$ define $L_j'$ to be the set of vertices incident to the exterior face of the embedding after deleting all vertices from $L_1' \cup \dots \cup L_{j-1}'$.
The following observation follows from a simple induction on $j \geq 1$.

\begin{observation}
 \label{obs:edge-layers-by-exterior-face}
 $L_j = L_j'$ for all $j \geq 1$.
\end{observation}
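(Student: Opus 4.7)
The plan is to induct on $j \geq 1$, using the following basic structural fact about plane embeddings: if we delete a set $D$ of vertices from the plane graph $G$, then two faces $f, f' \in F(G)$ end up inside the same face of the inherited embedding of $G - D$ precisely when they lie in the same connected component of $R[D \cup F(G)]$, and in particular the exterior face of $G - D$ corresponds to the component of $R[D \cup F(G)]$ containing $f_0$. The base case $j = 1$ is immediate: $L_1$ is the set of vertices at distance $1$ from $f_0$ in $R$, which by the definition of $R$ equals the set of vertices of $G$ incident to $f_0$, and this in turn equals $L_1'$.

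For the inductive step, I would assume $L_i = L_i'$ for every $i < j$ and set $D \coloneqq L_1 \cup \dots \cup L_{j-1} = L_1' \cup \dots \cup L_{j-1}'$. By the characterization above, a vertex $v \notin D$ belongs to $L_j'$ iff $v$ is incident in $G$ to some face $g$ lying in the same connected component of $R[D \cup F(G)]$ as $f_0$. For the direction $L_j \subseteq L_j'$, I would pick $v \in L_j$ together with a shortest path from $f_0$ to $v$ in $R$, of the form $f_0, v_1, g_1, v_2, g_2, \dots, g_{j-1}, v$. Shortest-path optimality forces $\dist_R(f_0, v_i) = 2i - 1$, so $v_i \in L_i \subseteq D$ for each $i < j$, and hence the prefix up to $g_{j-1}$ lies entirely inside $R[D \cup F(G)]$; since $v \notin D$ and $v$ is incident to $g_{j-1}$, the characterization gives $v \in L_j'$.

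For the reverse inclusion $L_j' \subseteq L_j$, I would take $v \in L_j'$ (so $v \notin D$) and fix a face $g$ incident to $v$ that lies in the same component as $f_0$. For $j \geq 2$ it must be that $g \neq f_0$, since otherwise $v$ would lie on $f_0$ and hence in $L_1 \subseteq D$. A path from $f_0$ to $g$ inside the component therefore has length at least $2$, and its penultimate vertex $u$ belongs to $D$; since $u$ and $v$ are both incident to $g$, we get $\dist_R(f_0, v) \leq \dist_R(f_0, u) + 2 \leq (2(j-1) - 1) + 2 = 2j - 1$, while $v \notin D$ forces $\dist_R(f_0, v) \geq 2j - 1$, so $v \in L_j$. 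I expect the main technical obstacle to be a careful justification of the ``faces-as-components of $R[D \cup F(G)]$'' characterization of the inherited embedding, including the treatment of vertices of $G - D$ that become isolated under deletion; this is standard but slightly delicate, so I would want to state it as a separate lemma before invoking it in the induction.
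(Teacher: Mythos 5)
Your argument is correct, and it follows the same route the paper takes: the paper dismisses this observation with the single remark that it ``follows from a simple induction on $j \geq 1$,'' and your induction, together with the radial-graph characterization of which faces of $G$ merge into the exterior face of $G - D$, is exactly the content that remark is hiding. Your instinct to isolate that characterization as a separate lemma (and to handle vertices that become isolated after deletion) is the right place to put the care; the rest of the step-by-step distance bookkeeping checks out.
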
 

\begin{definition}
 An embedding of a graph $G$ is \emph{$1$-outerplanar} if it is planar and all vertices lie on the exterior face.
 For $k \geq 2$ an embedding of a graph $G$ is \emph{$k$-outerplanar} if it is planar and, after deleting all vertices on the exterior face, the resulting embedding is $(k-1)$-outerplanar.
 A graph is \emph{$k$-outerplanar} if it has a $k$-outerplanar embedding.
\end{definition}

We first record a property of $1$-outplanar graphs that plays an important role later on.

\begin{lemma}
 \label{la:outerplanar-bound-covered-face-cycles}
 Let $G$ be a $1$-outerplanar graph and let $W \subseteq V(G)$.
 Also fix some $1$-outerplanar embedding of $G$ and let $f_0$ denote the exterior face.
 Then
 \[\sum_{f \in F(G) \setminus \{f_0\}} \max\{|N_{R(G)}(f) \cap W| - 1,0\} \leq 13|W|.\]
\end{lemma}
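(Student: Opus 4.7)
The plan is to analyze the bipartite planar subgraph $R_W$ of the radial graph $R(G)$ induced on $W \cup A$, where $A$ denotes the interior faces of $G$ incident to at least one vertex of $W$. Writing $\mu(f) := |N_{R(G)}(f) \cap W|$, the sum to be bounded is $M = \sum_{f \in A}(\mu(f)-1) = |E(R_W)| - |A|$. I would split $A$ according to the value of $\mu(f)$: faces with $\mu(f)=1$ contribute nothing, the set $A_{=2}$ of faces with $\mu(f)=2$ contributes $|A_{=2}|$, and the set $A_{\geq 3}$ of faces with $\mu(f)\geq 3$ contributes $\sum_{f \in A_{\geq 3}}(\mu(f)-1)$. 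It therefore suffices to bound each of these two contributions linearly in $|W|$.

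For the $A_{\geq 3}$ contribution I plan to use Euler's formula on the bipartite planar subgraph of $R(G)$ between $W$ and $A_{\geq 3}$: since every face-side vertex has degree at least $3$ there, combining the bipartite planar bound $|E|\leq 2(|W|+|A_{\geq 3}|)-4$ with $|E|\geq 3|A_{\geq 3}|$ yields $|A_{\geq 3}|\leq 2|W|-4$ and hence $\sum_{A_{\geq 3}}(\mu(f)-1) = |E| - |A_{\geq 3}| \leq 2|W| + |A_{\geq 3}|-4 \leq 4|W|-8$. The $A_{=2}$ contribution requires more care and is where I expect the main obstacle to lie. The approach is to construct an auxiliary multigraph $H$ on vertex set $W$ by adding, for each face $f \in A_{=2}$ with $N_{R(G)}(f) \cap W = \{u,v\}$, an edge $\{u,v\}$. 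Since $H$ arises from the planar graph $R_W$ by contracting each such degree-$2$ face vertex, $H$ is planar; so once I bound its edge multiplicity by an absolute constant (which will be $2$), the planar simple graph bound $|E|\leq 3|W|-6$ gives $|A_{=2}|\leq 6|W|-12$ for $|W|\geq 3$.

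The multiplicity bound reduces to showing that for any $u,v \in V(G)$ at most two interior faces of $G$ contain both $u$ and $v$ on their boundaries. Interior faces lying in distinct blocks of $G$ share at most a single cut vertex, so the question reduces to a single $2$-connected block $B$ of $G$. Such a block has a Hamiltonian outer cycle and non-crossing chords, so its weak dual $T^*$ is a tree; the set of interior faces containing a fixed vertex $v$ forms a path in $T^*$ (the interior faces around $v$ in cyclic order), so the set of faces containing both $u$ and $v$ is an intersection of two paths, itself a subpath of $T^*$. Any internal edge of this subpath would correspond to a chord incident simultaneously to $u$ and to $v$, i.e., to the unique edge $uv$; since each edge of $G$ lies on exactly two interior faces, this subpath contains at most two vertices. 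Combining the two estimates yields $M \leq (6|W|-12) + (4|W|-8) = 10|W|-20 \leq 13|W|$ for $|W|\geq 3$, and the remaining cases $|W|\leq 2$ are immediate from the same bound that at most two interior faces contain any given pair $\{u,v\}$.
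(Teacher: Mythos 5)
Your proof is correct and follows the same high-level decomposition as the paper: Euler's formula on the bipartite radial subgraph bounds the contribution of faces incident to at least three vertices of $W$, and the contribution of faces incident to exactly two is bounded by combining a multiplicity estimate with the planar edge bound $3|W|-6$ on the contracted graph over $W$. The genuine difference lies in the multiplicity estimate. The paper asserts, with only the phrase ``by outerplanarity,'' that at most $3$ degree-$2$ faces can have the same two-element $W$-neighborhood, and from this gets $|F_2|\leq 9|W|$ and the overall bound $13|W|$. You instead prove the sharper statement that at most $2$ interior faces of a $1$-outerplanar graph can be incident to any fixed pair $\{u,v\}$, via the weak dual tree of each $2$-connected block: the interior faces incident to a fixed vertex $v$ form a path in that tree (a weak-dual edge between two such faces is a chord that separates them, forcing the chord to be incident to $v$), so the faces incident to both $u$ and $v$ are the intersection of two paths, and every internal edge of that subpath must be the single chord $uv$, bounding its length by one edge. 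This supplies the geometric argument the paper leaves implicit and incidentally tightens the constant to roughly $10|W|$, while still establishing the stated $13|W|$ bound; it also correctly reduces the non-$2$-connected case to a single common block.
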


\begin{proof}
 Consider the radial graph $R \coloneqq R(G)$ which is planar and bipartite.
 Let $H \coloneqq R[W \cup F(G)]$ be the induced subgraph on $W$ and $F(G)$.
 For $i \geq 0$ let $F_i \coloneqq \{f \in F(G) \mid |N_{R(G)}(f) \cap W| = i\}$.
 Then
 \[\sum_{i \geq 3} i \cdot |F_i| \leq |E(H)| \leq 2\left(\sum_{i \geq 3} |F_i| + |W|\right) - 4\]
 which implies that
 \[\sum_{i \geq 3} (i-2) \cdot |F_i| \leq 2|W|.\]
 In particular, 
 \[\sum_{i \geq 3} (i-1) \cdot |F_i| \leq 4|W|.\]
 
 It remains to bound the size of $F_2$.
 Consider the graph $H_2 \coloneqq R[W \cup F_2]$ in which every vertex in $F_2$ has exactly two neighbors.
 By outerplanarity, for each $f \in F_2$, it holds that $|\{f' \in F_2 \mid N_R(f) = N_R(f')\}| \leq 3$.
 Moreover, there are at most $3|W|$ elements from $F_2$ with pairwise distinct neighborhoods since $H_2$ is planar (every planar graph with $n$ vertices has at most $3n-6$ edges).
 Hence, $|F_2| \leq 9|W|$.
 
 So in total
 \[\sum_{f \in F(G) \setminus \{f_0\}} \max\{|N_{R(G)}(f) \cap W| - 1,0\} = |F_2| + \sum_{i \geq 3} (i-1) \cdot |F_i| \leq 13|W|.\]
\end{proof}

\begin{observation}
 The graph $G[L_j \cup L_{j+1} \cup \dots \cup L_{j + k}]$ is $(k+1)$-outerplanar.
\end{observation}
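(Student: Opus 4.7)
The plan is to prove the statement by induction on $k$, showing the sharper claim that the planar embedding of $G[L_j \cup L_{j+1} \cup \cdots \cup L_{j+k}]$ inherited from the fixed embedding of $G$ is itself a $(k+1)$-outerplanar embedding. Write $H_{j,k} := G[L_j \cup L_{j+1} \cup \cdots \cup L_{j+k}]$. The base case $k=0$ amounts to showing that every vertex of $G[L_j]$ lies on the exterior face of the inherited embedding, and the inductive step amounts to showing that after deleting all vertices on the exterior face of the inherited embedding of $H_{j,k}$, the remaining embedded graph is exactly $H_{j+1,k-1}$, which is $k$-outerplanar by the inductive hypothesis (applied with $j \to j+1$, $k \to k-1$).

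The key structural step, used in both the base case and the inductive step, is to prove that in the inherited planar embedding of $H_{j,k}$, the set of vertices incident to the exterior face is exactly $L_j$. I would factor the removal through two stages. First set $G' := G - (L_1 \cup \cdots \cup L_{j-1})$ with the inherited embedding; by Observation~\ref{obs:edge-layers-by-exterior-face}, the vertices incident to the exterior face of $G'$ are exactly $L_j$. Second, obtain $H_{j,k}$ from $G'$ by deleting $L_{j+k+1} \cup L_{j+k+2} \cup \cdots$. Each such deleted vertex lies strictly inside $L_j$ in the embedding of $G'$ and is therefore not incident to the exterior face.

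To conclude that removing these interior vertices does not alter the exterior face, I would use the standard topological fact about planar embeddings that deleting a vertex $v$ (together with its incident edges) merges precisely the faces incident to $v$ into a single new face; in particular, if $v$ is not on the boundary of the exterior face, then the exterior face is unchanged. Applying this iteratively along an arbitrary deletion order of $L_{j+k+1} \cup L_{j+k+2} \cup \cdots$, one verifies by induction on the number of deletions that the boundary of the exterior face remains $L_j$ throughout, so in particular no vertex from a deeper layer ever becomes incident to it. This is the technically delicate point, but the invariant is easy to maintain since the vertices that get added to the boundary of the exterior face after a deletion are all neighbors of the deleted vertex which were already on that boundary before — here none are, because the boundary is $L_j$ and the deleted vertex lies in a deeper layer.

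Once the structural step is in hand, the induction closes immediately: peeling $L_j$ off the inherited embedding of $H_{j,k}$ produces the inherited embedding of $H_{j+1,k-1}$, which is $k$-outerplanar by the inductive hypothesis, so $H_{j,k}$ is $(k+1)$-outerplanar. The base case $k=0$ is a direct consequence of the structural step: all vertices of $G[L_j]$ lie on the exterior face, so $G[L_j]$ is $1$-outerplanar. The main obstacle throughout is precisely justifying the invariance of the exterior face under interior vertex deletions; the rest is a clean induction.
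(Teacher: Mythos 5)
Your proof is correct and follows essentially the same route as the paper, which dispatches this observation in one line by citing the characterization $L_j = L_j'$ (Observation~\ref{obs:edge-layers-by-exterior-face}) and the definition of the $L_j'$ as successive exterior-face layers. Your write-up merely makes explicit the step the paper leaves implicit — that deleting the deeper layers $L_{j+k+1},L_{j+k+2},\dots$, none of which touch the exterior face, does not disturb the exterior face of the remaining embedding — and this is handled correctly.
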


\begin{proof}
 This follows from Observation \ref{obs:edge-layers-by-exterior-face} and the definition of the sets $L_j'$, $j \geq 1$.
\end{proof}

An important feature of $k$-outerplanar graphs is that their treewidth is bounded by a linear function in $k$.

\begin{theorem}[Bodlaender \cite{Bodlaender98}]
 Let $G$ be a $k$-outerplanar graph.
 Then $\tw(G) \leq 3k-1$.
\end{theorem}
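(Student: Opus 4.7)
The plan is to proceed by induction on $k$. For the base case $k=1$, the graph $G$ is outerplanar and it is a classical fact that every outerplanar graph has treewidth at most $2 = 3\cdot 1 - 1$; this follows for instance from the observation that every outerplanar graph is a subgraph of a $2$-tree, so a tree decomposition of width $2$ can be read off directly from any ear decomposition of $G$ (or from a triangulation of the bounded faces that uses only the outer cycle).

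For the inductive step, let $k \geq 2$ and let $G$ be a $k$-outerplanar graph with a fixed $k$-outerplanar embedding. Let $L_1$ be the set of vertices incident to the exterior face, and set $G' \coloneqq G - L_1$. By the definition of $k$-outerplanarity (and Observation \ref{obs:edge-layers-by-exterior-face}), the induced embedding of $G'$ is $(k-1)$-outerplanar, so the inductive hypothesis yields a tree decomposition $(T',\beta')$ of $G'$ of width at most $3(k-1)-1 = 3k-4$, i.e.\ $|\beta'(t)| \leq 3k-3$ for every node $t \in V(T')$. The goal is then to extend $(T',\beta')$ into a tree decomposition of $G$ by inserting at most $3$ vertices of $L_1$ into each bag, which would give width at most $(3k-3) + 3 - 1 = 3k-1$.

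The structural observation driving the extension is that, by Observation \ref{obs:edge-endpoint-layers}, every edge of $G$ incident to a vertex of $L_1$ has its other endpoint in $L_1 \cup L_2$. Moreover, in the planar embedding, for each bounded face $f$ of $G'$ that is ``filled in'' by vertices of $L_1$ (i.e.\ the part of the outer face of $G$ lying inside $f$), the set $L_1$-vertices using $f$ forms an outerplanar subgraph whose boundary with $G'$ consists of the $L_2$-neighbors lying on $f$. Since each such outerplanar sub-piece has treewidth at most $2$, I would triangulate it to obtain a tree decomposition with bags of size at most $3$ vertices of $L_1$ each, and then glue this auxiliary decomposition to $(T',\beta')$ at nodes whose bags already contain the relevant boundary vertices in $L_2$. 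Because the $L_2$-boundary of each face $f$ already forms a cycle of $G'$, a standard lemma guarantees the existence of a node of $T'$ whose bag contains any chosen vertex on that cycle, which serves as the anchor for the gluing.

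The main obstacle is the bookkeeping that certifies the result is a valid tree decomposition: one must check that (i) every edge of $G$ (inside $L_1$, between $L_1$ and $L_2$, and inside $G'$) is covered by some bag, and (ii) for every vertex $v$, the set of bags containing $v$ induces a connected subtree of the augmented tree. Condition (ii) is the delicate part for vertices of $L_1$ that lie on the boundary of several ``filled-in'' faces of $G'$; handling them requires choosing a consistent planar order and possibly duplicating bags along a path in $T'$ to keep the set of bags containing such a vertex connected, while never exceeding an additive blow-up of $3$ in the bag size. Once this is carried out, the width bound $3k-1$ follows immediately, completing the induction.
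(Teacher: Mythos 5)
The paper does not prove this statement at all --- it is imported as a black box from Bodlaender's survey \cite{Bodlaender98} --- so the only question is whether your argument stands on its own. It does not: the inductive step has a genuine gap exactly at the point you yourself flag as ``the delicate part.'' Your plan needs that an \emph{arbitrary} width-$(3k-4)$ tree decomposition $(T',\beta')$ of $G'=G-L_1$ can be extended by inserting at most $3$ vertices of $L_1$ into each bag. There is no reason this should be possible. A vertex $v\in L_1$ with neighbours $u_1,\dots,u_d\in L_2$ must be placed into some bag together with each $u_i$, and to keep the set of bags containing $v$ connected it must be added along the entire Steiner subtree of $T'$ spanning the subtrees of $u_1,\dots,u_d$. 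Many distinct $L_1$-vertices can be forced onto overlapping Steiner subtrees, so a single bag of $T'$ may receive an unbounded number of new vertices; planarity does constrain which pairs of $L_2$-vertices can be joined through the outer region, but exploiting that requires the inductive hypothesis to be strengthened (e.g., to a decomposition that is compatible with the embedding, with consecutive outer-face vertices appearing in nearby bags), and you neither state nor prove such a strengthening. ``Duplicating bags along a path in $T'$'' does not help, because duplication cannot reduce how many distinct $L_1$-vertices a given position of the tree must accommodate. This is precisely why the known proofs take a different route: Bodlaender first reduces to maximum degree $3$ by vertex expansion and then builds a spanning tree whose vertex- and edge-remember numbers are bounded by $3k-1$ and $2k$ respectively; an alternative is to show branchwidth at most $2k$ via sphere-cut decompositions and convert to treewidth.

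Two secondary problems: your geometric picture is inverted --- since $L_1$ consists of the vertices on the exterior face, $L_1$ lies in the \emph{outer} region of $G'$, not in bounded faces of $G'$ ``filled in'' by $L_1$-vertices --- and the claim that the $L_2$-boundary of the relevant region ``already forms a cycle of $G'$'' is unjustified ($G'$ need not even be $2$-connected, or connected). The base case ($\tw\leq 2$ for outerplanar graphs) and the arithmetic $(3k-3)+3-1=3k-1$ are fine, but the extension step is the entire content of the theorem and is missing.
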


This implies that
\begin{equation}
 \label{eq:treewidth-of-layers}
 \tw(G[L_j \cup L_{j+1} \cup \dots \cup L_{j + k}]) \leq 3k+2
\end{equation}
for all $j \geq 1$.
Next, we define
\[V_i \coloneqq \bigcup_{j \equiv i \mod k} L_j\]
for all $i \in [k]$.
Note that this partition can be computed in polynomial time since an embedding for a planar graph can be computed in polynomial time.
Fix $i \in [k]$ and $W \subseteq V_i$.

Let $C_1,\dots,C_\ell$ be the connected components of $G[V_i \setminus W]$.
We refer to these components as the \emph{articulation points}.
Note that $C_r \subseteq L_j$ for some $j \equiv i \mod k$ by Observation \ref{obs:edge-endpoint-layers} (recall that $k \geq 2$).
Consistent with previous notation, let $L(C_r) \coloneqq j$ for the unique $j \geq 1$ such that $C_r \subseteq L_j$.

Now let $G'$ be the graph with vertex set
\[V(G') \coloneqq (V(G) \setminus V_i) \cup \{C_r,C_r' \mid r \in [\ell]\}\]
and edge set
\begin{align*}
 E(G') \coloneqq\;\;\; &E(G[V(G) \setminus V_i]\\
                \cup\; &\{vC_r \mid L(v) + 1 = L(C_r) \wedge\exists w \in C_r \colon vw \in E(G)\}\\
                \cup\; &\{vC_r' \mid L(v) = L(C_r) + 1 \wedge\exists w \in C_r \colon vw \in E(G)\}.
\end{align*}

\begin{lemma}
 $\big(G/E(G[V_i \setminus W])\big) - W = G'/\{C_rC_r' \mid r \in [\ell]\}$.
\end{lemma}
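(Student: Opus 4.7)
The plan is to exhibit a natural bijection between the vertex sets of the two graphs and then verify that the adjacency relations coincide case by case. On the left, $G/E(G[V_i\setminus W])$ has, for each component $C_r$, a single contracted vertex $c_r$ (while the vertices in $W$ and in $V(G)\setminus V_i$ remain); removing $W$ yields vertex set $(V(G)\setminus V_i)\cup\{c_r\mid r\in[\ell]\}$. On the right, contracting all edges $C_rC_r'$ in $G'$ merges each pair into a single vertex $\tilde c_r$, yielding vertex set $(V(G)\setminus V_i)\cup\{\tilde c_r\mid r\in[\ell]\}$. So I identify $c_r$ with $\tilde c_r$ and all remaining vertices with themselves.

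Next I verify the edges. Edges with both endpoints in $V(G)\setminus V_i$ are simply the edges of $G[V(G)\setminus V_i]$ on both sides, by definition. For an edge between a vertex $v\in V(G)\setminus V_i$ and a component vertex: on the left, $v$ and $c_r$ are adjacent iff there exists $w\in C_r$ with $vw\in E(G)$; on the right, $v$ is adjacent to the merged vertex $\tilde c_r$ iff $v$ is adjacent in $G'$ to $C_r$ or to $C_r'$, which unpacks to $L(v)+1=L(C_r)$ or $L(v)=L(C_r)+1$, together with the same existence of $w\in C_r$ with $vw\in E(G)$. The key observation needed is that whenever $vw\in E(G)$ with $v\notin V_i$ and $w\in C_r\subseteq L_{L(C_r)}$, the two layer conditions $L(v)\in\{L(C_r)-1,L(C_r)+1\}$ are automatic: by Observation \ref{obs:edge-endpoint-layers} we have $L(v)\in\{L(C_r)-1,L(C_r),L(C_r)+1\}$, and the middle case is excluded because $L(C_r)\equiv i\pmod k$ and $v\notin V_i$ (this is where $k\geq 2$ matters). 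Hence these edge sets agree. Finally, there are no edges between two distinct component vertices on either side: on the left, such an edge would require an edge in $G$ between $C_r$ and $C_{r'}$, contradicting that these are distinct components of $G[V_i\setminus W]$; on the right, the definition of $E(G')$ only creates edges of the form $vC_r$ or $vC_r'$ with $v\in V(G)\setminus V_i$, so no two component copies are ever adjacent in $G'$, and the contractions along $C_rC_r'$ do not create such adjacencies.

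Two small sanity checks close out the argument. First, the fact that contractions suppress parallel edges (as stated in the preliminaries) ensures that multi-adjacency between a fixed $v$ and several vertices of some $C_r$ produces exactly one edge to $c_r$ on the left; symmetrically on the right, a fixed $v$ cannot simultaneously satisfy $L(v)+1=L(C_r)$ and $L(v)=L(C_r)+1$, so at most one of the edges $vC_r,vC_r'$ exists in $G'$, collapsing to one edge after contraction. Second, no self-loops arise because no component vertex is adjacent to itself in either construction.

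The main obstacle is really only the one potentially-slippery point in case~(ii): ruling out the possibility that an edge $vw\in E(G)$ with $v\notin V_i$ and $w\in C_r$ has $L(v)=L(C_r)$. Once one sees that $L(C_r)\equiv i\pmod k$ forces any same-layer vertex into $V_i$, everything else is a routine unfolding of the definitions of $G/E(G[V_i\setminus W])$ and of $G'$ and its contraction.
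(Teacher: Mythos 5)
Your proof is correct and follows essentially the same approach as the paper, which states the vertex identification as obvious and then remarks that ``all edges from the graph $\big(G/E(G[V_i \setminus W])\big) - W$ are present by Observation~\ref{obs:edge-endpoint-layers}'' --- exactly the point you isolate as the one slippery step, namely that the layer conditions $L(v)+1 = L(C_r)$ or $L(v) = L(C_r)+1$ built into the definition of $E(G')$ are automatic whenever $v \notin V_i$ and $v$ is $G$-adjacent to $C_r$. You spell the verification out in full (vertex bijection, four edge cases, no self-loops or duplicate edges), whereas the paper compresses this into three sentences, but the content is the same.

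One tiny imprecision in where you attach the remark ``this is where $k\geq 2$ matters'': the role of $k\geq 2$ is really to ensure that $L(C_r)$ is well-defined at all, i.e.\ that each component $C_r$ of $G[V_i\setminus W]$ lies in a single layer (if $k=1$ then $V_i = V(G)$ and a component can span many layers). Once $L(C_r)$ is a single number, the exclusion $L(v)\neq L(C_r)$ for $v\notin V_i$ follows from the definition of $V_i$ alone, for any $k$. This does not affect correctness since the surrounding argument already assumes $k\geq 2$, but it is worth knowing that the hypothesis enters one step earlier than you indicate.
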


\begin{proof}
 Clearly, $\big(G/E(G[V_i \setminus W])\big) - W$ is obtained from $G$ by contracting $C_r$ to a single vertex for all $r \in [\ell]$, and deleting all vertices from $W$.
 Similarly, $G'/\{C_rC_r' \mid r \in [\ell]\}$ is obtained from $G$ by contracting $C_r$ to a single vertex for all $r \in [\ell]$, and deleting all vertices from $W$.
 Note that all edges from the graph $\big(G/E(G[V_i \setminus W])\big) - W$ are present by Observation \ref{obs:edge-endpoint-layers}.
\end{proof}

In order to bound the treewidth of $G/E(G[V_i \setminus W])$ we proceed in three steps.
For the first step we provide an upper bound on the treewidth of $G'$.
Then, in the second step, we prove that contraction of pairs of articulation points does not increase the treewidth significantly.
Finally, all vertices from $W$ are added to all bags which increases the treewidth only by $|W|$.

Let $D_1,\dots,D_t$ be the connected components of $G'$ and let $p \in [t]$.
For $j \equiv i \mod k$ define\[U_j \coloneqq \{C_r' \mid L(C_r) = j\} \cup L_{j+1} \cup \dots \cup L_{j + k - 1} \cup \{C_r \mid L(C_r) = j + k\}.\]
Clearly, there is some $j \equiv i \mod k$ such that $D_p \cap U_j \neq \emptyset$.
Moreover, there are no outgoing edges from the set $U_j$ in the graph $G'$ by Observation \ref{obs:edge-endpoint-layers}.
Hence, $D_p \subseteq U_j$.

This implies that $G'[D_p]$ is a minor of the graph $G[L_j \cup L_{j+1} \cup \dots \cup L_{j + k}]$.
So $\tw(G'[D_p]) \leq 3k+2$ by Equation \eqref{eq:treewidth-of-layers}.

For the second step consider the following DAG $H$ with vertex $V(H) \coloneqq \{D_1,\dots,D_t\} \cup \{C_1,\dots,C_\ell\}$ and edges
\[E(H) \coloneqq \{(C_r,D_p) \mid C_r \in D_p\} \cup \{(D_p,C_r) \mid C_r' \in D_p\}.\]
Note that, for edges $(D_p,C_r)$ and $(C_r,D_{p'})$, it holds that $D_p \subseteq U_j$ and $D_{p'} \subseteq U_{j + k}$ for some $j \equiv i \mod k$.
Moreover, each articulation point has exactly one incoming and one outgoing edge.
Together this proves that $H$ is a DAG.

For $p \in [t]$ we denote by $\deg_H^-(D_p)$ the number of incoming edges for the vertex $D_p$ in the graph $H$.

\begin{lemma}
 $\sum_{p \in [t]\colon \deg_H^-(D_p) \geq 2} \deg_H^-(D_p) - 1 \leq 13|W|$.
\end{lemma}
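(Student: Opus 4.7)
The plan is to prove the bound layer by layer. Every articulation point $C_r$ belongs to a unique layer $L_{j+k}$ with $j \equiv i \bmod k$, and every edge $(C_r, D_p) \in E(H)$ forces $D_p \subseteq U_j$ for that $j$. Writing $L := L_{j+k}$ and $W_L := W \cap L$, the sets $W_L$ partition $W$ as $j$ ranges over $j \equiv i \bmod k$, so it suffices to prove for each such $j$ the per-layer inequality
\[
\sum_{\substack{p\,:\,D_p \subseteq U_j\\ \deg_H^-(D_p) \geq 2}} (\deg_H^-(D_p) - 1) \;\le\; 13\,|W_L|.
\]

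Fix $j$. The graph $G[L]$ is $1$-outerplanar (all of $L$ lies on its exterior face), and the articulation points contained in $L$ are precisely the components of $G[L \setminus W_L]$. Starting from the fixed planar embedding of $G$, I would first contract each connected component of $G[L_{j+1}\cup\cdots\cup L_{j+k-1}]$ to a single vertex placed in the exterior face of $G[L]$. For each resulting ``super-vertex'' $S$ adjacent to $L$, I then add a virtual cycle through the $L$-neighbors of $S$ in the cyclic order around $S$ given by the embedding, and delete $S$ itself. The resulting graph $\tilde G$ has vertex set $L$, is planar, and remains $1$-outerplanar since all added edges lie in the exterior face of $G[L]$ and no vertex of $L$ is pulled off that face.

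The key combinatorial claim is that for each $D_p \subseteq U_j$ with $d_p := \deg_H^-(D_p) \geq 2$, the virtual cycles corresponding to the super-vertices of $D_p$, together with $G[L]$, cut out internal faces of $\tilde G$ whose boundaries collectively contribute at least $d_p - 1$ ``excess $W_L$-vertices'' (in the sense of Lemma~\ref{la:outerplanar-bound-covered-face-cycles}). The intuition is that any two articulation points reached by $D_p$ lie in distinct components of $G[L \setminus W_L]$, so the portion of the exterior face boundary of $G[L]$ between them must contain at least one $W_L$-vertex; these separators can be charged to boundaries of faces of $\tilde G$ introduced by the virtual cycles. Applying Lemma~\ref{la:outerplanar-bound-covered-face-cycles} to $\tilde G$ and summing the per-layer estimate over $j$ then finishes the proof.

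The main obstacle is the charging argument in the third step: virtual cycles for different $D_p$'s may share $W_L$-vertices on their boundaries, and a single $W_L$-vertex can even appear on several faces of $\tilde G$ (for instance at a cut vertex). One must show that each separating $W_L$-vertex is charged only $O(1)$ times across different $D_p$'s, which relies on the planar nesting of the super-vertices in the exterior face of $G[L]$ and on the cyclic ordering supplied by the embedding; this is where the factor $13$ furnished by Lemma~\ref{la:outerplanar-bound-covered-face-cycles} ultimately enters.
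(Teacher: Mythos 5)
Your construction is genuinely different from the paper's. The paper works from the \emph{outer} layer $L_j$ of the window $S_j = L_j \cup \cdots \cup L_{j+k}$: it restricts to a single connected component $A$ of $G[S_j]$, uses the connected $1$-outerplanar graph $G[L_j \cap A]$, and charges to the face cycles $N_j(B)$ that the components $B$ of $G[A \setminus L_j]$ cut out in $G[L_j \cap A]$. You instead work from the \emph{inner} layer $L_{j+k}$ and build an auxiliary $1$-outerplanar graph $\tilde G$ by contracting the middle layers to super-vertices and replacing each super-vertex by a virtual cycle through its $L$-neighbors.

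Your proof has a genuine gap, and you acknowledge it yourself: the charging argument in the third step is not carried out. I think the obstacle is more serious than "a constant-factor bookkeeping issue," for two reasons. First, your contraction only merges vertices inside $L_{j+1}\cup\cdots\cup L_{j+k-1}$, while $D_p$ is a component of $G'$ and can be glued together by the contracted vertices $C_r'$ at layer $j$; so a single $D_p$ with $\deg_H^-(D_p)\ge 2$ may split across several super-vertices $S_1,\dots,S_m$ each meeting only one inner articulation point, in which case none of the virtual cycles $C^*_{S_\ell}$ individually carries any excess, and the $W_L$-vertices that separate $D_p$'s inner articulation points on $L_{j+k}$ need not lie on any of those virtual cycles at all. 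Second, even for a single super-vertex $S$ meeting several inner articulation points, a virtual cycle edge $uv$ with $u\in C_{r_1}$ and $v\in C_{r_2}$ can be inserted directly between consecutive $L$-neighbors of $S$, bypassing the $W_L$-separator that sits between $C_{r_1}$ and $C_{r_2}$ on the exterior boundary of $G[L]$; the face interior to $C^*_S$ can then have many distinct inner articulation points on its boundary yet zero $W_L$-vertices, so Lemma~\ref{la:outerplanar-bound-covered-face-cycles} applied to that face gives nothing. The paper's formulation sidesteps both problems because the set $B$ (living in the annulus $L_{j+1}\cup\cdots\cup L_{j+k}$) absorbs the inner articulation points outright, and the layer-$j$ vertices $C_r'$ that glue $D_p$ together are themselves visible on the face cycle $N_j(B)\subseteq L_j$, where the cycle-arc structure forces the desired number of $W$-vertices. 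In short: the high-level template (reduce to Lemma~\ref{la:outerplanar-bound-covered-face-cycles}) is the same, but your choice of which layer to charge against and how to build the auxiliary outerplanar graph makes the key counting step not go through.
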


\begin{proof}
 Fix $j \equiv i \mod k$ and let $S_j \coloneqq L_j \cup L_{j+1} \cup \dots \cup L_{j + k}$.
 Also fix $A$ to be the vertex set of a connected component of $G[S_j]$.
 We define $U_{j,A}$ to be the ``intersection'' of $U_j$ and $A$, i.e.,
 \[U_{j,A} \coloneqq \{C_r' \mid L(C_r) = j, C_r \subseteq A\} \cup (L_{j+1} \cap A) \cup \dots \cup (L_{j + k - 1} \cap A) \cup \{C_r \mid L(C_r) = j + k, C_r \subseteq A\}.\]
 We prove that
 \begin{equation}
  \label{eq:bound:large-indegree-blobs}
  \sum_{p \in [t]\colon D_p \subseteq U_{j,A} \wedge \deg_H^-(D_p) \geq 2} \deg_H^-(D_p) - 1 \leq 13|W \cap A \cap L_j|.
 \end{equation}
 First observe that this implies the lemma since, for each $p \in [t]$, there is some $j \equiv i \mod k$ and some component $A$ of the graph $G[S_j]$ such that $D_p \subseteq U_{j,A}$.
 
 By Observation \ref{obs:edge-layers-by-exterior-face} it holds that $L_j \cap A$ are exactly those vertices incident to the exterior face of $G[A]$.
 In particular, $G[L_j \cap A]$ is connected and $1$-outerplanar.
 Let $B$ be the vetex set of a connected component of $G[A \setminus L_j]$.
 We define
 \[N_j(B) \coloneqq \{v \in L_j \mid \exists w \in B \colon \dist_R(v,w) \leq 2\}.\]
 It is easy to verify that $G[N_j[B]]$ is a cycle.
 More precisely, $N_j(B)$ forms a face cycle of $G[L_j \cap A]$.
 
 For a connected component $B$ of $G[A \setminus L_j]$ define the \emph{articulation degree} $\adeg(B)$ to be the number of articulation points $C_r$ such that $N_j[B] \cap C_r \neq \emptyset$.
 We argue that
 \[\sum_B \max\{\adeg(B) - 1,0\} \leq 13|W \cap A \cap L_j|\]
 where $B$ ranges over connected components of $G[A \setminus L_j]$.
 Note that this implies Equation \eqref{eq:bound:large-indegree-blobs}.
 For a component $B$ it holds that $\adeg(B) \leq \max\{|N_j(B) \cap W|,1\}$. 
 Hence, it suffices to prove that
 \[\sum_B \max\{|N_j(B) \cap W| - 1,0\} \leq 13|W \cap A \cap L_j|.\]
 But this follows from Lemma \ref{la:outerplanar-bound-covered-face-cycles} since each $N_j(B)$ forms a face cycle of $G[L_j \cap A]$.
\end{proof}

Now let $\CC$ be a set of articulation points such that $|\CC| \leq 13|W|$ and $\deg_{H - \CC}^-(D_p) \leq 1$ for all $p \in [t]$.
Then $H - \CC$ is a tree.
This means there a tree decomposition of $\big(G/E(G[V_i \setminus W])\big) - (\CC \cup W)$ of width $3k+2$ by stitching the tree decompositions for the components $D_p$, $p \in [t]$, together along the tree structure of $H - \CC$.
Adding the vertices from $\CC \cup W$ to all bags than gives a tree decomposition of $G/E(G[V_i \setminus W])$ of width $3k + 14|W| + 2$.
This completes the proof of Theorem \ref{thm:contraction-decomposition-planar-vertex}.

\subsection{Guessing Bodies of Segments}

Next, we concern ourselves with the guessing of the bodies which is related to problems defined over $2$-connected components.
In particular, we prove Lemma \ref{lem:introbodyguess}.
More precisely, in Lemma \ref{lem:introbodyguess}, the sets $V_i$ are obtained from the Contraction Decomposition Theorem (Theorem \ref{thm:contraction-decomposition-planar-vertex}) proved in the last subsection.
Here, we concern ourselves with the problem of finding the body sets $\CB_i$.
Towards this end, we assume that we are already given a suitable partition of the vertex set (later, this partition is replaced with the partition computed in Theorem \ref{thm:contraction-decomposition-planar-vertex}).

So let $G$ be a graph and fix $Z \subseteq V(G)$ to be a \emph{solution} of size $|Z| \leq k$.
Also fix a partition $V(G) = V_1 \uplus \dots \uplus V_\ell$.

We refer to the connected components of the graph $G[V_i \setminus Z]$ as the \emph{$i$-segments}.
A \emph{segment} is an $i$-segment for some $i \in [\ell]$.
Let $\CI$ denote the set of all segments.

In the following, we are interested in the $2$-connected components of $G - Z$.
We shall represent the $2$-connected components of $G - Z$ as a pair $(\FT,\gamma)$ where $\FT$ is a forest and $\gamma \colon V(\FT) \rightarrow 2^{V(G)}$ is a function as follows.
Let us first consider the block-cut tree $\FT'$ which contains a node $t_C$ for every $2$-connected component $C$ of $G - Z$, and a node $s_v$ for every cut vertex $v$ of $G - Z$.
Also, there is an edge between $t_C$ and $s_v$ if and only if $v \in C$.
It is well knwon that $\FT'$ is a forest where each connected component of $\FT'$ corresponds to some connected component of $G - Z$.
For each connected component of $\FT'$ we fix a unique root node $t_C$ that corresponds to some $2$-connected component of $G - Z$. 
For simplicity, let us fix a ``canonical'' rooting as follows.
We assume that the vertices of the input graph $G$ are ordered (the order is arbitrary).
Now, we can order the $2$-connected components of $G - Z$ lexicographically.
For each connected component $T$ of $\FT'$, we pick the lexicographically smallest $2$-connected component as the root of $T$.

Now, consider a segment $I \in \CI$.
We need to define several objects based on the decomposition $(\FT,\gamma)$ into the $2$-connected components of $G - Z$.
First, let $r_Z(I)$ denote the unique node $t \in V(\FT)$ which is closest to a root node $t_0$ and for which $I \cap \gamma(t) \neq \emptyset$.
Also, let $B_Z(I) \coloneqq \{t \in V(\FT) \setminus \{r_Z(I)\} \mid I \cap \gamma(t) \neq \emptyset\}$ be the \emph{body of $I$}.
Observe that $r_Z(I) \notin B_Z(I)$.
Moreover, let $W_Z(I) \coloneqq \desc(r_Z(I)) \setminus \{r_Z(I)\}$ denote the set of all nodes that are descendants of $r_Z(I)$, excluding $r_Z(I)$.
Observe that $r_Z(I) \notin W_Z(I)$, but $B_Z(I) \subseteq W_Z(I)$.

Finally, we define $\widehat{r}_Z(I) \coloneqq \gamma(r_Z(I))$, $\widehat{B}_Z(I) \coloneqq \bigcup_{t \in R_Z(I)} \gamma(t)$ and $\widehat{W}_Z(I) \coloneqq \bigcup_{t \in W_Z(I)} \gamma(t)$ to be the corresponding sets of vertices in the graph $G$.
To simplify notation, we will regularly omit the index $Z$ if it is clear from context.

The next theorem allows us to compute the body sets.
Actually, for technical reasons, the objects we guess are slightly more complicated and also need to contain information about the sets $\widehat{r}_Z(I) \cap I$.

\begin{theorem}
 \label{thm:2-cc-from-solution-subset}
 Let $G$ be an $H$-minor-free graph and let $V_1,\dots,V_{\ell}$ be a partition of the vertex set $V(G)$.
 Also let $Z \subseteq V(G)$ be an (unknown) set of vertices of size $|Z| = k$ and let $(\FT,\gamma)$ be the (rooted) decomposition into $2$-connected components of $G - Z$.
 
 Then there is a constant $c_H$, an index $i \in [\ell]$, and a sequence of vertices $\bar s = (s_1,\dots,s_r) \in (V(G))^r$ of length $r = \CO(c_H \frac{k}{\ell})$ such that the following conditions are satisfied:
 \begin{enumerate}[label=(\Roman*)]
  \item\label{item:2-cc-from-solution-subset-1} $V_i \cap Z = \{s_1,\dots,s_j\}$ for some $j \in \{0,\dots,r\}$, and
  \item\label{item:2-cc-from-solution-subset-2} given $G,V_1,\dots,V_\ell,i,\bar s,j$ one can compute in time $n^{\CO(c_H)}$, for each $i$-segment $I$,
   a family of sets $\CF(I)\subseteq 2^{V(G)} \times 2^{V(G)}$ of size $|\CF(I)| \leq n^{c_H}$ such that $(\widehat{B}_Z(I),\widehat{r}_Z(I) \cap I) \in \CF(I)$.
 \end{enumerate}
\end{theorem}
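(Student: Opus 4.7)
Plan. My plan proceeds in three stages: an averaging step to pick the index $i$ and the first entries of $\bar s$; a structural characterization of each body in terms of a bounded number of ``key'' vertices that must be guessed; and a brute-force enumeration over candidate key vertices per segment to form $\CF(I)$.

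\textbf{Step 1: Selection of $i$.} Since $\sum_{i=1}^{\ell} |V_i \cap Z| = |Z| = k$, some $i \in [\ell]$ satisfies $|V_i \cap Z| \leq k/\ell$. I pick such an $i$, set $j = |V_i \cap Z|$, and take $(s_1, \ldots, s_j)$ to be an enumeration of $V_i \cap Z$. This consumes at most $k/\ell$ of the budget of $\bar s$, leaving $\Theta(c_H k/\ell)$ further slots for structural information below.

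\textbf{Step 2: Structural characterization of the body.} For each $i$-segment $I$, recall that $\widehat B_Z(I)$ is the union of the $2$-connected components of $G - Z$, other than $\widehat r_Z(I)$, that meet $I$. Since $I$ is connected in $G - Z$ and $r_Z(I)$ is, by definition, the intersected component closest to the root of the block-cut tree, the path in $I$ from any vertex in $I \cap C$ to any vertex in $I \cap r_Z(I)$ must traverse the cut vertex joining $C$ to its parent in the block tree, forcing that cut vertex into $I$. Hence the body is completely described by the subset $\widehat r_Z(I) \cap I$ together with the set of ``departure'' cut vertices of $I$ where we descend from $\widehat r_Z(I)$ into body branches; from these, the body is uniquely recovered by block-decomposition connectivity computations in $G - Z$. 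Thus the pair $(\widehat B_Z(I), \widehat r_Z(I) \cap I)$ is determined by (a) the set $V_i \cap Z$ (already in $\bar s$), (b) those vertices of $Z \setminus V_i$ adjacent to $\widehat B_Z(I) \cup I$, and (c) a small number of ``control'' cut vertices of $G-Z$ inside $I$.

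\textbf{Step 3: Extra entries of $\bar s$ and enumeration of $\CF(I)$.} The remaining $\Theta(c_H k/\ell)$ entries of $\bar s$ are used, globally across all $V_i$-segments, to record the vertices of $Z \setminus V_i$ on the interface between $V_i$-segments and the rest of $G$ that are needed in item (b). Applying Theorem~\ref{thm:average-degree-H-minor-free} to the bipartite adjacency graph between $Z$ and the segments, the total boundary mass is bounded by $d_H(|Z| + \#\text{segments})$, and a refined averaging over $i$ (choosing $i$ to minimize not just $|V_i \cap Z|$ but also this boundary contribution) delivers the required $\CO(c_H k/\ell)$ bound. Given this fixed $\bar s$, for each $i$-segment $I$ I build $\CF(I)$ by brute-forcing over all $n^{c_H}$ tuples of $c_H$ vertices of $V(G)$ playing the role of item (c); each guess together with the $Z$-information assembled from $\bar s$ yields, via a direct block-decomposition computation, one candidate pair $(\widehat B, \widehat r \cap I)$. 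The true pair appears for the correct guess, and the entire enumeration runs in $n^{\CO(c_H)}$ time per segment.

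\textbf{Main obstacle.} The principal difficulty is Step~3: proving that a number of guesses depending only on $c_H$ (and not on $|I|$) really suffices to pin down the body of every segment. A priori, a segment could thread through linearly many body components, each with its own distinct attachment cut vertex. Overcoming this requires exploiting both the $H$-minor-free structure --- so that too many independent body branches from a single segment would force a forbidden dense minor --- and the specific structure of the partition $V_1,\dots,V_\ell$ produced by the preprocessing (ultimately via the layered construction underlying Theorem~\ref{thm:contraction-decomposition-planar-vertex}), which limits how segments can interleave with the $2$-connected decomposition of $G - Z$. Designing this partition-aware preprocessing so that distinct body branches always admit a compact, $\CO(c_H)$-sized encoding is the heart of the argument.
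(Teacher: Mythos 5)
Your proposal has the right overall shape (average over $i$, spend the $\bar s$ budget on the ``excess'' interface vertices of each segment, then brute-force $n^{c_H}$ guesses per segment), but it leaves the central lemma unproven and mis-specifies the object being guessed. The quantity the paper controls is not the set of $Z$-vertices adjacent to $\widehat{B}_Z(I)\cup I$; it is $\val(I)$, the maximum number of paths from $\widehat{B}_Z(I)$ to $Z$ inside $G[\widehat{W}_Z(I)\cup Z]$ that are disjoint outside $\widehat{B}_Z(I)$. By Menger this yields a separator $S_I$ of size $\val(I)$ (which may contain non-$Z$ vertices of $\widehat{W}_Z(I)$), and deleting $S_I$ from $G$ is what makes the body of $I$ recoverable from a \emph{known} graph. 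Your recovery step instead appeals to ``block-decomposition connectivity computations in $G-Z$'' and to $Z$-vertices adjacent to the body; the first is circular ($Z$ is unknown), and the second is neither sufficient (the first $Z$-vertex on an escape path from the body need not be adjacent to it) nor the quantity whose total excess you can bound.

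The part you defer as the ``main obstacle'' is precisely the theorem's technical core, and your sketch of how to overcome it points in the wrong direction: the bound $\sum_{I\in\CI}\max(\val(I)-c_H,0)\le c_H|Z|$ holds for an \emph{arbitrary} partition $V_1,\dots,V_\ell$ and uses only $H$-minor-freeness, not any layered structure of the partition from Theorem~\ref{thm:planarcontr-vert2}. Its proof is nontrivial: since the sets $\widehat{B}_Z(I)\cup I$ are not pairwise disjoint, one cannot directly contract them to form a minor. The paper first $3$-colors the high-value segments so that within a color class no root lies in another segment's body and bodies are not adjacent in $\FT$, then greedily grows pairwise disjoint connected sets $U(I)$ covering each $I\cup\widehat{B}_Z(I)$, contracts these to obtain a minor $F$ on $\CI_0\uplus Z$, and shows $\val(I)\le\deg_F(I)$ before applying the degree bound of Theorem~\ref{thm:average-degree-H-minor-free} together with Lemma~\ref{la:high-degree-tree-size}. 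None of this is present or replaceable by the heuristic ``too many branches force a dense minor,'' so as written the proposal does not establish that the sequence $\bar s$ has length $\CO(c_H k/\ell)$, nor that $c_H$ extra guesses per segment suffice.
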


Observe that this is non-trivial since both $\widehat{B}_Z(I)$ and $\widehat{r}_Z(I) \cap I$ depend on $Z$ which is only partially given to the algorithm.
The remainder of this subsection is devoted to proving the theorem.

Let $d_H = \CO(h \sqrt{\log h})$ denote the constant from Theorem \ref{thm:average-degree-H-minor-free}, i.e., $\sum_{v \in V(G)} \deg_G(v) \leq d_H |V(G)|$ holds for every $H$-minor-free graph $G$.

\begin{lemma}
 \label{la:high-degree-tree-size}
 Let $G$ be an $H$-minor-free graph and suppose $V(G) = U \uplus Z$ such that $\deg_G(u) > 2d_H$ for all $u \in U$.
 Then $|U| \leq |Z|$.
\end{lemma}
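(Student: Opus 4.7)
The plan is a simple double-counting argument using Theorem~\ref{thm:average-degree-H-minor-free}. Since $G$ is $H$-minor-free, the degree sum satisfies $\sum_{v \in V(G)} \deg_G(v) \leq d_H \cdot |V(G)| = d_H(|U| + |Z|)$. On the other hand, by the hypothesis on $U$, every vertex in $U$ contributes strictly more than $2d_H$ to the degree sum, giving the lower bound $\sum_{u \in U} \deg_G(u) > 2d_H \cdot |U|$.

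Combining the two bounds (and using that the sum over $U$ cannot exceed the sum over all of $V(G)$) gives
\[
2d_H \cdot |U| \;<\; \sum_{u \in U} \deg_G(u) \;\leq\; \sum_{v \in V(G)} \deg_G(v) \;\leq\; d_H(|U| + |Z|).
\]
Dividing by $d_H$ and rearranging yields $|U| < |Z|$, which in particular implies $|U| \leq |Z|$.

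There is no real obstacle here: this is essentially the standard observation that in a sparse graph class the number of high-degree vertices can be bounded in terms of the remaining vertices. The only thing to be mindful of is that Theorem~\ref{thm:average-degree-H-minor-free} bounds the \emph{total} degree sum (equivalently $2|E(G)|$) by $d_H|V(G)|$, and that the hypothesis ``$\deg_G(u) > 2d_H$'' is strict, which is precisely what is needed to beat the factor of $2$ appearing in the average-degree bound.
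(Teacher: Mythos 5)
Your proof is correct and takes essentially the same approach as the paper: both compare the degree-sum contribution of $U$ (more than $2d_H|U|$) against the global degree-sum bound $d_H|V(G)|$ from Theorem~\ref{thm:average-degree-H-minor-free}. The only cosmetic difference is that you argue directly while the paper argues by contradiction.
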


\begin{proof}
 Suppose towards a contradiction that $|U| > |Z|$.
 Then $|E(G)| \geq \frac{1}{2}|U|2d_H = d_H|U|$.
 Hence,
 \[\sum_{v \in V(G)} \deg_G(v) = 2|E(G)| \geq 2d_H|U| > d_HV(G)\]
 which is a contradiction to Theorem \ref{thm:average-degree-H-minor-free}.
\end{proof}

Now, let us fix an $H$-minor-free graph $G$, a partition $V_1,\dots,V_{\ell}$ of the vertex set $V(G)$, and an unknown solution set $Z \subseteq V(G)$ of size $|Z| = k$.
Also, let $(\FT,\gamma)$ denote the (rooted) decomposition into $2$-connected components of $G - Z$. 

\begin{lemma}
 \label{la:segment-intersection}
 Let $I_1,I_2$ be two distinct segments.
 Then $B(I_1) \cap B(I_2) = \emptyset$.
\end{lemma}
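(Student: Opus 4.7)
The plan is to argue by contradiction: suppose a single node $t \in V(\FT)$ lies in $B_Z(I_1) \cap B_Z(I_2)$ for distinct segments $I_1 \ne I_2$, and derive $I_1 = I_2$. The central observation is that each segment $I_j$, being a connected component of $G[V_{i_j} \setminus Z]$, is a connected subgraph of $G - Z$. Consequently the ``trace'' $\{s \in V(\FT) : I_j \cap \gamma(s) \neq \emptyset\}$ is a connected subtree of $\FT$: any path in $G - Z$ joining two blocks intersected by $I_j$ must pass through the cut vertices that lie in every block on the block-tree path between them. Hence $r_Z(I_j)$ is the unique node of this trace closest to a root of $\FT$, and $t \in B_Z(I_j)$ is a strict descendant of $r_Z(I_j)$.

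Next, I identify a single vertex of $I_1 \cap I_2$. Let $v$ be the (unique) cut vertex of $G - Z$ that joins $\gamma(t)$ to its parent block in the block-cut structure. Standard properties of the block-cut forest give that $v$ separates $D(t) \setminus \{v\}$ from $V(G-Z) \setminus D(t)$ in $G - Z$, where $D(t) = \bigcup_{s \in \desc(t)} \gamma(s)$; moreover $\gamma(s) \cap D(t) \subseteq \{v\}$ for every ancestor $s$ of $t$, because two blocks share at most one vertex and only adjacent blocks in $\FT$ share any vertex at all. In particular $\gamma(r_Z(I_j)) \cap D(t) \subseteq \{v\}$.

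Now apply this to each $j \in \{1,2\}$: $I_j$ contains some vertex $u_j \in \gamma(t) \subseteq D(t)$ and some vertex $w_j \in \gamma(r_Z(I_j))$, where either $w_j = v$ or $w_j \in V(G-Z) \setminus D(t)$. In the latter case, any $u_j$--$w_j$ path inside the connected set $I_j \subseteq G - Z$ must cross the separator $v$, so $v \in I_j$; in the former case, $v = w_j \in I_j$ already. Either way, $v \in I_1 \cap I_2$.

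Finally, distinct segments are vertex-disjoint: two segments coming from different parts $V_i, V_{i'}$ are disjoint since $V_1, \dots, V_\ell$ partition $V(G)$, and two distinct $i$-segments are disjoint since they are distinct connected components of $G[V_i \setminus Z]$. Thus $v \in I_1 \cap I_2$ forces $I_1 = I_2$, contradicting $I_1 \ne I_2$. The only delicate step is the separation property of $v$, which is routine once one unpacks the block-cut tree; if one prefers to treat $\FT$ as the block-cut tree itself and $t$ happens to be a cut-vertex node $s_v$, the conclusion $v \in I_1 \cap I_2$ is immediate from $I_j \cap \gamma(s_v) = I_j \cap \{v\} \ne \emptyset$, so no separate argument is needed in that case.
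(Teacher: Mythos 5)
Your proof is correct and follows essentially the same route as the paper's: both arguments locate the cut vertex $v$ shared between $\gamma(t)$ and its parent bag, use the connectivity of each segment to force $v \in I_1$ and $v \in I_2$, and conclude from the vertex-disjointness of distinct segments. The paper states the separator step more tersely ("since $G[I_1]$ is connected it holds that $v \in I_1$"), whereas you spell it out via the descendant set $D(t)$; the extra detail is harmless and the argument is sound.
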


\begin{proof}
 Suppose $t \in B(I_1)$ and let $s$ be the parent node of $t$.
 Then $s \in B(I_1) \cup \{r(I_1)\}$.
 Let $v \in \gamma(t) \cap \gamma(s)$.
 Since $G[I_1]$ is connected it holds that $v \in I_1$.
 So $v \notin I_2$ since $I_1 \cap I_2 = \emptyset$.
 Hence, $t \notin B(I_2)$.
\end{proof}

Let $I$ be a segment.
We define $\val(I)$ to be the maximum number of paths from $\widehat{B}(I)$ to $Z$ in the graph $G[\widehat{W}(I) \cup Z]$ that are pairwise disjoint except on $\widehat{B}(I)$.
Observe that $\val(I) = 0$ whenever $B(I) = \emptyset$.

\begin{lemma}
 Suppose $G$ is $H$-minor-free.
 Then there is a constant $c_H$ such that
 \[\sum_{I \in \CI} \max(\val(I) - c_H,0) \leq c_H \cdot |Z|.\]
\end{lemma}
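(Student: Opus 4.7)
The plan is to exhibit a minor $M$ of $G$ on vertex set $Z \cup \{v_I : I \in \CI'\}$, where $\CI' \coloneqq \{I \in \CI : \val(I) > c_H\}$ is the set of \emph{heavy} segments, such that $\deg_M(v_I) \geq \val(I)$ for each $I \in \CI'$. Since $M$ is a minor of the $H$-minor-free graph $G$ it is itself $H$-minor-free, so by Theorem~\ref{thm:average-degree-H-minor-free},
\[
\sum_{I \in \CI'} \val(I) \;\leq\; \sum_{v \in V(M)} \deg_M(v) \;\leq\; d_H\bigl(|Z| + |\CI'|\bigr),
\]
which rearranges to $\sum_{I \in \CI'}\bigl(\val(I) - d_H\bigr) \leq d_H\,|Z|$. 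Choosing $c_H \geq d_H$ (or slightly larger to absorb any slack from the minor construction below), the left-hand side coincides with $\sum_{I \in \CI}\max(\val(I) - c_H,\,0)$, yielding the claim; light segments contribute zero by definition.

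To build $M$, for each heavy segment $I$ fix $\val(I)$ paths $P_I^1,\dots,P_I^{\val(I)}$ witnessing $\val(I)$: pairwise disjoint except on $\widehat{B}(I)$, with interiors in $\widehat{W}(I) \setminus \widehat{B}(I)$ and distinct endpoints $z_I^1,\dots,z_I^{\val(I)} \in Z$. Take the candidate branch set
\[
X_I \;\coloneqq\; \widehat{B}(I) \;\cup\; I \;\cup\; \bigcup_{j=1}^{\val(I)} \bigl(V(P_I^j) \setminus Z\bigr),
\]
which is disjoint from $Z$ and connected in $G$: the segment $I$ is connected, it touches every block in $\widehat{B}(I)$ by definition of $B(I)$, and each $P_I^j$ leaves $\widehat{B}(I)$. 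Using singleton branch sets $\{z\}$ for each $z \in Z$, each path $P_I^j$ yields an edge from $X_I$ to $z_I^j$ in the contracted graph, so if the $X_I$ can be chosen pairwise disjoint then $\deg_M(v_I) \geq \val(I)$ for every $I \in \CI'$.

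The technical heart, and the main obstacle, is showing that the $X_I$ can be chosen pairwise disjoint at the cost of discarding only $\CO(1)$ paths per segment. Two structural facts are crucial: (i) by Lemma~\ref{la:segment-intersection} the body tree-node sets $B(I)$ are pairwise disjoint, so $\widehat{B}(I_1) \cap \widehat{B}(I_2)$ consists solely of cut vertices of $G - Z$; and (ii) the family $\{\widehat{W}(I)\}_{I \in \CI'}$ is essentially laminar, since if $r(I_1)$ and $r(I_2)$ are incomparable in $\FT$ then $\widehat{W}(I_1)$ and $\widehat{W}(I_2)$ share only cut vertices, while otherwise one is nested inside the other. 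Processing heavy segments in order of decreasing depth of $r(I)$ in $\FT$ and discarding any path whose interior collides with an already-assigned branch set, one argues, using the $H$-minor-free average-degree bound locally at each level of $\FT$ in the spirit of Lemma~\ref{la:high-degree-tree-size}, that each heavy segment loses only $\CO(1)$ paths and that the surviving $X_I$ remains connected. This constant slack is then absorbed into the final choice of $c_H$.
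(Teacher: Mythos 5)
Your plan mirrors the paper's at the highest level (exhibit a minor whose degree sequence controls $\sum \val(I)$, then apply the average-degree bound for $H$-minor-free graphs, i.e.\ Lemma~\ref{la:high-degree-tree-size} / Theorem~\ref{thm:average-degree-H-minor-free}), and the opening algebra is fine. The genuine gap is in the step you yourself flag as the ``technical heart'': constructing pairwise disjoint branch sets $X_I$ while discarding only $\CO(1)$ paths per heavy segment. You assert this, gesture at laminarity of $\{\widehat{W}(I)\}$ and an ``average-degree bound locally at each level of $\FT$'', but no actual argument is given, and the claim as stated is not correct. A single heavy segment $I$ whose body $\widehat{B}(I)$ spans many blocks, each of which is the root block of some deeper heavy segment $I'$ processed earlier, can have essentially all of its $\val(I)$ witness paths forced through the already-claimed branch sets $X_{I'}$; a greedy ``discard colliding paths'' strategy then destroys $\Omega(\val(I))$ of them, and nothing in $H$-minor-freeness or laminarity caps this per segment. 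An average-degree bound at best gives a \emph{global} bound on such collisions, not a per-segment constant — and even a global bound would require you to actually count collisions as minor edges, which your proposal explicitly throws away.

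The paper resolves exactly this difficulty with two ideas you are missing. First, it does \emph{not} discard a path whose interior hits another branch set; instead, when a path from $\widehat{B}(I)$ first enters $U(I')$, this is recorded as an edge $II'$ in the minor $F$, and it is shown that distinct such paths (vertex-disjoint off $\widehat{B}(I)$) enter \emph{distinct} $U(I')$, so each of the $\val(I)$ paths still contributes one unit of degree to $v_I$. Second, making both the disjointness of the $U(I)$ and the ``distinct paths $\to$ distinct $I'$'' argument work requires a separation that your greedy depth order does not provide: the paper first $3$-colors the heavy segments so that within a color class, $r(I) \notin B(I')$ and the sets $B(I)\cup\{r(I)\}$ are pairwise non-adjacent in $\FT$ (for distinct roots). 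This coloring is the real engine; without it, both the disjointness of $U(I)$ (e.g.\ $\widehat{B}(I) \cap I' = \emptyset$ fails when $r(I') \in B(I)$) and the injectivity of $z \mapsto I_z$ break. Your proposal does not partition the segments, does not track the indirect contributions, and the $\CO(1)$-loss claim that is meant to substitute for these ideas is not established. This is the missing content, not a cosmetic difference.
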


\begin{proof}
 We define $c_H \coloneqq 6d_H$.
 Let $\CI^* \coloneqq \{I \in \CI \mid \val(I) > c_H\}$.
 We first partition $\CI^*$ into three sets $\CI_0$, $\CI_1$ and $\CI_2$ such that, for all $\mu \in \{0,1,2\}$ and all $I,I' \in \CI_\mu$ it holds that
 \[r(I) \notin B(I')\]
 and
 \[r(I) \neq r(I') \;\;\Rightarrow\;\; E_{\FT}(B(I) \cup \{r(I)\},B(I') \cup \{r(I')\}) = \emptyset.\]
 Actually, let us remark that the first condition is implied by the second one, but we still state it explicitly for later reference.
 
 The partition can be computed inductively as follows.
 Throughout the induction we maintain the property that segments $I$ with the same root bag $r(I)$ are assigned to the same partition class.
 For each $I \in \CI$ define the \emph{height} of $I$ to be the distance from $r(I)$ to the root of the corresponding connected component of $\FT$.
 First, we assign all segments of height $0$ to the class $\CI_0$.
 Clearly, this meets the above requirements.
 Now suppose all segments up to height $h$ have been partitioned and let $I$ be a segment of height $h+1$.
 Let $t \coloneqq r(I)$ and let $s$ be the parent of $t$.
 First suppose $t \in B(I')$ for some segment $I'$ of height at most $h$.
 By Lemma \ref{la:segment-intersection} there is a unique $I'$ of height at most $h$ such that $t \in B(I')$.
 Note that $s \in B(I') \cup \{r(I')\}$.
 Suppose $I' \in \CI_{\mu'}$.
 Moreover, by Lemma \ref{la:segment-intersection}, there is at most one $I''$ such that $s \in B(I'')$.
 Assume $I'' \in \CB_{\mu''}$ (if it exists).
 We assign $I$ to a partition class $\CI_\mu$ for some $\mu \in \{0,1,2\} \setminus \{\mu',\mu''\}$.
 Moreover, we assign all segments with the same root bag $t$ to the same partition class. 
 It can be easily checked that the requirements are satisfied.
 
 In the following we argue that $\sum_{I \in \CI_\mu} (\val(I) - c_H) \leq \frac{c_H}{3} \cdot |Z|$ for all $\mu \in \{0,1,2\}$ which implies the lemma.
 
 Without loss of generality assume $\mu = 0$.
 For each $I \in \CI_0$ we associate a subset $U(I)$ of $V(G) \setminus Z$ as follows.
 In the first step, all elements from $I \cup \widehat{B}(I)$ are added to $U(I)$.
 Note that all sets $U(I)$ are connected and pairwise disjoint.
 Let $X$ be those vertices that are not assigned so far.
 Next, all vertices from $\widehat{W}(I) \cap X$ that are reachable from $\widehat{B}(I)$ in the graph $G[X \cup \widehat{B}(I)]$ are added to $U(I)$.
 Observe that all sets $U(I)$ remain connected and pairwise disjoint.
 
 Now consider the following graph $F$ with vertex set $V(F) \coloneqq \CI_0 \uplus Z$ and edge set
 \[E(F) \coloneqq \{II' \mid I \neq I' \in \CI_0, E_G(U(I),U(I')) \neq \emptyset\} \cup \{Iz \mid I \in \CI_0, z \in Z, E_G(U(I),\{z\}) \neq \emptyset\}\]
 ($F$ is not a multigraph, i.e., there is for only a single edge between $I$ and $I'$ (resp.\ $I$ and $z$) even if $E(U(I),U(I'))$ (resp.\ $E(U(I),\{z\})$) contains more than one element).
 Note that $F$ is a minor of $G$.
 
 We claim that $\val(I) \leq \deg_F(I)$ for all $I \in \CI_0$.
 Let us first complete the proof assuming the claim holds true.
 Then $|\CI_0| \leq |Z|$ by Lemma \ref{la:high-degree-tree-size}.
 This means that
 \[\sum_{I \in \CI_0} (\val(I) - c_H) \leq \sum_{v \in V(F)} \deg_F(v) \leq d_H |V(F)| \leq 2d_H |Z| \leq \frac{c_H}{3} |Z|.\]
 
 To complete the proof it remains show that $\val(I) \leq \deg_F(I)$ for all $I \in \CI_0$.
 Let us fix some $I \in \CI_0$ and a set of paths $P_1^I,\dots,P_{\val(I)}^I$ from $\widehat{B}(I)$ to $Z$ in the graph $G[\widehat{W}(I) \cup Z]$ that are pairwise disjoint except on $\widehat{B}(I)$.
 Without loss of generality assume that no internal vertex of a path $P_i^I$ is contained in $Z$.
 Also define $Z(I) \subseteq Z$ to be set of all endpoints of the paths $P_1^I,\dots,P_{\val(I)}^I$.
 For $z \in Z(I)$ we also write $P_z^I$ for the unique path $P_i^I$ which ends in $z$.
 
 We say that $z \in Z(I)$ is \emph{directly reachable from $I$} if $P_z^I$ does not visit any of the sets $U(I')$ for $I \neq I' \in \CI_0$.
 Let $Z^*(I) \subseteq Z(I)$ be those vertices directly reachable from $I$.
 Let $z \in Z^*(I)$.
 By definition, all internal vertices from $P_z^I$ are contained in the set $U(I)$.
 Hence, $Iz \in E(F)$.
 On the other hand, for $z \in Z(I) \setminus Z^*(I)$, consider the first vertex $w_z$ on $P_z^I$ that is not contained in $U(I)$.
 Then $w_z \in I_z$ for some $I \neq I_z \in \CI_0$.
 Hence, $w_z \in U(I_z)$ and $II_z \in E(F)$.
 It remains to argue that, for a second $z \neq z' \in Z(I) \setminus Z^*(I)$, we have that $I_z \neq I_{z'}$.
 Suppose otherwise.
 Clearly, $r(I) \neq r(I_z)$.
 By the properties of the set $\CI_0$, it holds that $r(I_z) \notin B(I)$.
 Since $P_z^I$ and $P_{z'}^I$ are vertex-disjoint except on $\widehat{B}(I)$ we have that $w_{z} \neq w_{z'}$.
 Moreover, $w_z \in I_z \cap \widehat{r}(I_z)$ and $w_{z'} \in I_{z'} \cap \widehat{r}(I_{z'})$.
 Since $P_z^I$ and $P_{z'}^I$ are vertex-disjoint except on $\widehat{B}(I)$ this is only possible if $r(I_z)$ is a child of some node $t \in B(I)$.
 Hence, $E_T(\{r(I_z)\},B(I)) \neq \emptyset$ which contradicts the second property of the set $\CI_0$.
 Hence, $I_z \neq I_{z'}$ for all distinct $z,z' \in Z(I) \setminus Z^*(I)$.
 
 Together, this means that $\deg_F(I) \geq |Z(I)|$ for all $I \in \CI_0$ which completes the proof of the lemma.
\end{proof}

For $i \in [\ell]$ we define
\[\val^*(V_i) \coloneqq \sum_{I \text{ is $i$-segment}} \max(\val(I) - c_H,0).\]
We claim there exists some $i \in [\ell]$ such that
\begin{enumerate}
 \item $|V_i \cap Z| \leq 2\frac{k}{\ell}$, and
 \item $\val^*(V_i) \leq 2c_H \frac{k}{\ell}$.
\end{enumerate}
Let $A_1 \coloneqq \{i \in [\ell] \mid |V_i \cap Z| > 2\frac{k}{\ell}\}$ be the set of indices that violate the first condition.
Then $k = |Z| > |A_1| \cdot 2\frac{k}{\ell}$ which means that $|A_1| < \frac{\ell}{2}$.

Similarly, let $A_2 \coloneqq \{i \in [\ell] \mid\val^*(V_i) > 2c_H \frac{k}{\ell}\}$ be the set of indices that violate the second condition.
By the lemma
\[c_Hk = c_H|Z| \geq \sum_{i \in A_2}\val^*(V_i) > |A_2|2c_H \frac{k}{\ell}\]
which implies that $|A_2| < \frac{\ell}{2}$.
Hence, there is some $i \in [\ell]$ such that neither $i \in A_1$ nor $i \in A_2$.
Let $j \coloneqq |V_i \cap Z|$ and define $\{s_1,\dots,s_j\} = V_i \cap Z$ (the vertices are enumerated arbitrarily).

\medskip

Now let $I$ be an $i$-segment and consider the graph $G[\widehat{W}(I) \cup Z]$.
In this graph, there are at most $\val(I)$ many paths from $\widehat{B}(I)$ to $Z$ that are pairwise disjoint except on $\widehat{B}(I)$.
Hence, by Menger's theorem, there is a set $S_I \subseteq \widehat{W}(I) \cup Z$ of size $|S_I| = \val(I)$ such that $S_I \cap \widehat{B}(I) = \emptyset$ and every path from $\widehat{B}(I)$ to $Z$ in the graph $G[\widehat{W}(I) \cup Z]$ visits some vertex from $S_I$.
Suppose that $S_I = \{w_1^I,\dots,w_{\val(I)}^I\}$.

We define
\[\{s_{j+1}.\dots,s_r\} = \{w_{c_H+1}^I,\dots,w_{\val(I)}^I \mid I \text{ is $i$ segment with } \val(I) > c_H\}.\]
Note that $r \leq |V_i \cap Z| + \val^*(V_i) \leq 3(c_H+1)\frac{k}{\ell}$.
This completes the description of the sequence $\bar s = (s_1,\dots,s_r)$.

Clearly, Property \ref{item:2-cc-from-solution-subset-1} of Theorem \ref{thm:2-cc-from-solution-subset} is satisfied.
Hence, in order to complete the proof of Theorem \ref{thm:2-cc-from-solution-subset} it remains to verify Property \ref{item:2-cc-from-solution-subset-2}, i.e., we have to argue how to compute the families $\CF(I)$ for all $i$-segments $I$.
Clearly, given access to $G,V_1,\dots,V_\ell,i,\bar s,j$ we can compute the set of all $i$-segments in polynomial time.
So fix $I$ to be an $i$-segment.
The algorithm first guesses $\val(I)$ by iterating through all possible values.

If $\val(I) \leq c_H$ the algorithm iterates through all subsets $S \subseteq V(G) \setminus I$ of size $\val(I)$ and computes the decomposition $(\FT_S,\gamma_S)$ into $2$-connected components of the graph $G - S$.
Also, given a subset $S \subseteq V(G) \setminus I$ of size $\val(I)$, the algorithm guesses a root node $t_0^S$ for the connected component of $(\FT_S,\gamma_S)$ that contains $I$.
Finally, let $r_S(I)$ denote the unique node $t \in V(\FT_S)$ which is closest to the root node $t_0^S$ and for which $I \cap \gamma_S(t) \neq \emptyset$.
Also, let $B_S(I) \coloneqq \{t \in V(\FT_S) \setminus \{r_S(I)\} \mid I \cap \gamma_S(t) \neq \emptyset\}$.
Finally, we add the pair
\[\Big(\bigcup_{t \in B_S(I)} \gamma_S(t),\gamma_S(r_S(I)) \cap I\Big)\]
to the family $\CF(I)$.
For $S = S_I$ together with a suitable choice of $t_0^S$ we get that $\widehat{B}(I) = \bigcup_{t \in B_S(I)} \gamma_S(t)$.

For $\val(I) > c_H$ the algorithm first guesses $L_I = \{w_{c_H+1}^I,\dots,w_{\val(I)}^I\}$.
By appropriately ordering the sequence $\bar s$ and marking certain elements (for example, by repeating the element), this adds a multiplicative factor of at most $n$.
Having guessed $L_I$ the algorithm again iterates through all subsets of $S \subseteq V(G) \setminus I$ of size $c_H$ and computes the decomposition $(\FT_S,\gamma_S)$ into $2$-connected components of the graph $G - (S \cup L_I)$.
Now the algorithm proceeds as in the previous case by guessing a root of $(\FT_S,\gamma_S)$ and adding the pair
\[\Big(\bigcup_{t \in B_S(I)} \gamma_S(t),\gamma_S(r_S(I)) \cap I\Big)\]
to the family $\CF(I)$.
Again, for $S = S_I \setminus L_I$ together with a suitable choice of $t_0^S$ we get that $\widehat{B}(I) = \bigcup_{t \in B_S(I)} \beta(t)$.

This completes the proof of Theorem \ref{thm:2-cc-from-solution-subset}.

\subsection{Obtaining Segmented Graphs with Body Sets}

Having proved Theorems \ref{thm:contraction-decomposition-planar-vertex} and \ref{thm:2-cc-from-solution-subset}, we can now combine both results to prove Lemma \ref{lem:introbodyguess}.
Actually, we shall use a different formulation of Lemma \ref{lem:introbodyguess} which is slightly more convenient due to technical reasons.
We leave it as an exercise for the reader to derive Lemma \ref{lem:introbodyguess} from Corollary \ref{cor:segments-and-bodies-for-unknown-solution} below (all algorithms making use of this result do so via Corollary \ref{cor:segments-and-bodies-for-unknown-solution}).
Towards this end, we first need to introduce some additional notation on segments that is also heavily used in the next section.

A \emph{segmented graph} is a pair $(G,\CI)$ where $G$ is a graph and $\CI = \{I_1,\dots,I_\ell\}$ is a set of pairwise disjoint, connected subsets of $V(G)$.
Moreover, we generally assume that each segment $I \in \CI$ is equipped with a linear order $<_I$.
We refer to the sets $I_1,\dots,I_\ell$ as the \emph{segments} of $(G,\CI)$.
Also, define $V(\CI) \coloneqq \bigcup_{I \in \CI}I$ as the set of vertices appearing in some segment.
We define $G/\CI$ to be the graph obtained from $G$ by contracting each segment to a single vertex.
Note that $G/\CI$ is a minor of $G$ since all sets $I \in \CI$ are connected.
We always use $v_I$ to denote the vertex of $G/\CI$ that corresponds to segment $I$.
Also, $V_\CI \coloneqq \{v_I \mid I \in \CI\}$.
For $U \subseteq V(G)$ we use $\shr(U)$ to denote the set of all vertices $v \in V(G/\CI)$ that correspond to some $u \in U$.
In particular, $v_I \in \shr(U)$ if $U \cap I \neq \emptyset$.
In the other direction, for $U \subseteq V(G/\CI)$, we use $\ext(U)$ to denote the set of all vertices $v \in V(G)$ that correspond to some $u \in U$.
In particular, if $v_I \in U$ then $I \subseteq \ext(U)$.
If $U = \{u\}$ consists of a single vertex, then we also write $\shr(u)$ and $\ext(u)$ instead of $\shr(\{u\})$ and $\ext(\{u\})$. 

\begin{corollary}
 \label{cor:segments-and-bodies-for-unknown-solution}
 There is a polynomial-time algorithm that, given a planar graph $G$, an integer $k$, and a sequence of vertices $(v_1,\dots,v_r) \in (V(G))^r$, computes a set of pairwise vertex-disjoint segments $\CI$ and a function $\CF$ mapping every $I \in \CI$ to a set $\CF(I) \subseteq 2^{V(G)} \times 2^{V(G)}$ of size $|\CF(I)| = n^{\CO(1)}$ such that the following property is satisfied:
 
 For every $Z \subseteq V(G)$ of size $|Z| \leq k$ there is some sequence $(v_1,\dots,v_r) \in (V(G))^r$ of length $r \in \CO(\sqrt{k})$ such that, if $(\CI,\CF)$ is the output of the algorithm on input $(G,k,(v_1,\dots,v_r))$, then
 \begin{enumerate}
  \item $V(\CI) \cap Z = \emptyset$,
  \item $\tw(G/\CI) = \CO(\sqrt{k})$, and 
  \item $(\widehat{B}_Z(I),\widehat{r}_Z(I) \cap I) \in \CF(I)$ for every $I \in \CI$.
 \end{enumerate}
\end{corollary}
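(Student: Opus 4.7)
The plan is to derive this corollary by combining the two main structural results of this section, namely the vertex-partition contraction decomposition (Theorem \ref{thm:contraction-decomposition-planar-vertex}) and the body-guessing statement (Theorem \ref{thm:2-cc-from-solution-subset}). Given the planar graph $G$ and parameter $k$, the algorithm first runs the polynomial-time partitioning procedure of Theorem \ref{thm:contraction-decomposition-planar-vertex} with $\ell = \lceil\sqrt{k}\rceil$ to obtain a vertex partition $V(G) = V_1 \uplus \cdots \uplus V_\ell$. This partition depends only on $G$ and $k$, not on the input sequence.

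Next, the algorithm interprets the input sequence $(v_1, \ldots, v_r)$ via a fixed encoding convention: a short prefix encodes an index $i \in [\ell]$ and an integer $j \in \{0, \ldots, r\}$ (both take $\CO(\sqrt{k})$ many values, so $\CO(\log k)$ entries, or equivalently a few marker repetitions as already used in the proof of Theorem \ref{thm:2-cc-from-solution-subset}, suffice), and the remaining entries are read as vertices $s_1, \ldots, s_j$ that are required to lie in $V_i$ (otherwise the input is discarded). The segment set $\CI$ is then defined as the collection of connected components of $G[V_i \setminus \{s_1, \ldots, s_j\}]$, and the family $\CF$ is produced by invoking the body-enumeration procedure from Property \ref{item:2-cc-from-solution-subset-2} of Theorem \ref{thm:2-cc-from-solution-subset} on the input $(G, V_1, \ldots, V_\ell, i, (s_1, \ldots, s_r), j)$. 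This procedure runs in polynomial time and returns a family $\CF(I)$ of size $n^{\CO(1)}$ for every $i$-segment $I$, as required.

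To verify the existence guarantee, fix an arbitrary $Z \subseteq V(G)$ with $|Z| \leq k$. Theorem \ref{thm:2-cc-from-solution-subset}, applied with the partition above, supplies an index $i \in [\ell]$, a number $j$, and a sequence $\bar{s} = (s_1, \ldots, s_r)$ of length $r = \CO(c_H k/\ell) = \CO(\sqrt{k})$ satisfying Properties \ref{item:2-cc-from-solution-subset-1} and \ref{item:2-cc-from-solution-subset-2}. Feeding the encoding of $(i, j, \bar s)$ into the algorithm produces $V(\CI) = V_i \setminus \{s_1, \ldots, s_j\}$, and Property \ref{item:2-cc-from-solution-subset-1} gives $V_i \cap Z = \{s_1, \ldots, s_j\}$, so $V(\CI) \cap Z = \emptyset$. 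Since the segments are exactly the connected components of $G[V_i \setminus \{s_1, \ldots, s_j\}]$, the graph $G/\CI$ coincides with $G/E(G[V_i \setminus \{s_1, \ldots, s_j\}])$; Theorem \ref{thm:contraction-decomposition-planar-vertex} therefore bounds its treewidth by $3\ell + 14j + 2 = \CO(\sqrt{k})$, using $j \leq |V_i \cap Z| \leq 2k/\ell = \CO(\sqrt{k})$. Finally, the membership $(\widehat{B}_Z(I), \widehat{r}_Z(I) \cap I) \in \CF(I)$ for every $I \in \CI$ is precisely what Property \ref{item:2-cc-from-solution-subset-2} delivers.

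The only real obstacle is a bookkeeping one: the algorithm's input is an unstructured sequence of vertices, yet it must recover from it an index into the partition and a prefix length $j$, and it must do so without blowing up the overall length beyond $\CO(\sqrt{k})$. The encoding described above (a constant-length prefix of marker vertices, as in Theorem \ref{thm:2-cc-from-solution-subset}) handles this cleanly. With that convention in place, every other step is a direct invocation of a previously established result, and the three required properties follow immediately from the corresponding guarantees of Theorems \ref{thm:contraction-decomposition-planar-vertex} and \ref{thm:2-cc-from-solution-subset}.
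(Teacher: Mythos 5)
Your proposal follows essentially the same route as the paper's own proof: apply Theorem~\ref{thm:contraction-decomposition-planar-vertex} with $\ell = \lceil\sqrt{k}\rceil$ to partition $V(G)$, use a constant-length prefix of the input sequence to recover the index $i$ and the intersection $Z_i = V_i \cap Z$, take the components of $G[V_i \setminus Z_i]$ as $\CI$, and invoke Property~\ref{item:2-cc-from-solution-subset-2} of Theorem~\ref{thm:2-cc-from-solution-subset} to produce $\CF$, with correctness following from Property~\ref{item:2-cc-from-solution-subset-1} and the treewidth bound of Theorem~\ref{thm:contraction-decomposition-planar-vertex}. The only cosmetic difference is your encoding of $(i,j)$ into the prefix; the paper uses an arbitrary vertex of $V_i$ as $v_1$ and the $(j+1)$-th vertex of a fixed ordering as $v_2$, which is one concrete instantiation of the convention you sketch.
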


\begin{proof}
 Let $(G,k,(v_1,\dots,v_r))$ be the input to the algorithm.
 The algorithm first fixes an arbitrary numbering of the vertices of $G$, i.e., $V(G) = \{u_1,\dots,u_n\}$.
 Let $\ell \coloneqq \lceil\sqrt{k}\rceil$.
 The algorithm applies Theorem \ref{thm:contraction-decomposition-planar-vertex} and otains a partition $V_1,\dots,V_\ell$ of $V(G)$.
 Let $i \in [\ell]$ be the unique index such that $v_1 \in V_i$.
 Also, let $j \in \{0,\dots,n-1\}$ such that $v_2 = u_{j+1}$.
 We define $Z_i \coloneqq \{v_3,\dots,v_{j+2}\}$ and compute $\CI$ as the set of connected components of $G[V_i \setminus Z_i]$.
 Also, we define $\bar s \coloneqq (v_3,\dots,v_r)$.
 Finally, the algorithm computes the function $\CF$ using the algorithm from Theorem \ref{thm:2-cc-from-solution-subset}, Item \ref{item:2-cc-from-solution-subset-2} on input $(G,V_1,\dots,V_\ell,i,\bar s,j)$.
 Clearly, this algorithm runs in polynomial time.
 
 So let $Z \subseteq V(G)$ of size $|Z| \leq k$.
 We need to argue about the existence of a sequence $(v_1,\dots,v_r) \in (V(G))^r$ of length $r \in \CO(\sqrt{k})$ satisfying the properties stated in the corollary.
 By Theorem \ref{thm:2-cc-from-solution-subset} there is some $i \in [\ell]$, a sequence $\bar s = (s_1,\dots,s_r) \in (V(G))^r$ for $r \in \CO(\sqrt{k})$, and an integer $j \in [r]$ such that $V_i \cap Z = \{s_1,\dots,s_j\}$.
 We pick $v_1 \in V_i$ arbitrarily, $v_2 \coloneqq u_{j+1}$, and $v_{p+2} \coloneqq s_p$ for all $p \in [r]$.
 Let $(\CI,\CF)$ denote the output of the above algorithm on input $(G,k,(v_1,\dots,v_r))$.
 Clearly, $V(\CI) \cap Z = \emptyset$ by definition.
 Also, $\tw(G/\CI) = \CO(\ell + |V_i \cap Z|) = \CO(\sqrt{k})$.
 Finally, $(\widehat{B}_Z(I),\widehat{r}_Z(I) \cap I) \in \CF(I)$ for every $I \in \CI$ by Theorem \ref{thm:2-cc-from-solution-subset}, Property \ref{item:2-cc-from-solution-subset-2}.
\end{proof}

For later reference, we also formulate a version of the last corollary for $H$-minor-free graphs assuming Conjecture \ref{conj:contraction-decomposition-minor-vertex} holds.

\begin{corollary}
 \label{cor:segments-and-bodies-for-unknown-solution-minors}
 Assuming Conjecture \ref{conj:contraction-decomposition-minor-vertex}, there is an algorithm that, given an $H$-minor-free graph $G$, an integer $k$, and a sequence of vertices $(v_1,\dots,v_r) \in (V(G))^r$, computes a set of pairwise vertex-disjoint segments $\CI$ and a function $\CF$ mapping every $I \in \CI$ to a set $\CF(I) \subseteq 2^{V(G)} \times 2^{V(G)}$ of size $|\CF(I)| = n^{c_H}$ for some constant $c_H$ depending only on $H$ such that the following property is satisfied:
 
 For every $Z \subseteq V(G)$ of size $|Z| \leq k$ there is some sequence $(v_1,\dots,v_r) \in (V(G))^r$ of length $r \in \CO(c_H\sqrt{k})$ such that, if $(\CI,\CF)$ is the output of the algorithm on input $(G,k,(v_1,\dots,v_r))$, then
 \begin{enumerate}
  \item $V(\CI) \cap Z = \emptyset$,
  \item $\tw(G/\CI) = \CO(\sqrt{k})$, and 
  \item $(\widehat{B}_Z(I),\widehat{r}_Z(I) \cap I) \in \CF(I)$ for every $I \in \CI$.
 \end{enumerate}
 Moreover, the algorithm runs in time $n^{\CO(c_H)}$.
\end{corollary}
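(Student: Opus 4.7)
The plan is to mirror the proof of Corollary~\ref{cor:segments-and-bodies-for-unknown-solution}, substituting Conjecture~\ref{conj:contraction-decomposition-minor-vertex} for Theorem~\ref{thm:contraction-decomposition-planar-vertex} at the one point where planarity was used. Concretely, given the input $(G,k,(v_1,\dots,v_r))$, fix an arbitrary numbering $V(G)=\{u_1,\dots,u_n\}$, set $\ell\coloneqq\lceil\sqrt{k}\rceil$, and invoke the (conjectured) vertex contraction decomposition to obtain a partition $V_1\uplus\dots\uplus V_\ell=V(G)$ with the property that, for every $i\in[\ell]$ and every $W\subseteq V_i$, one has $\tw(G/E(G[V_i\setminus W]))\le c_H(\ell+|W|)$. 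Since Theorem~\ref{thm:2-cc-from-solution-subset} is already stated and proved for $H$-minor-free graphs, no further change is needed on that side.

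The first three entries of the guessing sequence are used exactly as before: $v_1$ selects the index $i\in[\ell]$ via the condition $v_1\in V_i$; $v_2=u_{j+1}$ encodes the integer $j\in\{0,\dots,n-1\}$; and $(v_3,\dots,v_{j+2})$ specifies the subset $Z_i\subseteq V_i$ to be removed. Then one forms $\CI$ as the set of connected components of $G[V_i\setminus Z_i]$ and feeds $(G,V_1,\dots,V_\ell,i,\bar s,j)$ with $\bar s=(v_3,\dots,v_r)$ into the algorithm of Theorem~\ref{thm:2-cc-from-solution-subset} to obtain $\CF$. Since Theorem~\ref{thm:2-cc-from-solution-subset} runs in time $n^{\CO(c_H)}$ and returns sets $\CF(I)$ of size at most $n^{c_H}$, this meets the running-time and size bounds claimed in the corollary.

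For correctness, fix an arbitrary $Z\subseteq V(G)$ with $|Z|\le k$. Theorem~\ref{thm:2-cc-from-solution-subset} supplies an index $i\in[\ell]$, a sequence $\bar s=(s_1,\dots,s_r)\in(V(G))^r$ of length $r\in\CO(c_H\sqrt{k})$, and a $j\in\{0,\dots,r\}$ such that $V_i\cap Z=\{s_1,\dots,s_j\}$ and the family produced for each $i$-segment $I$ contains $(\widehat{B}_Z(I),\widehat{r}_Z(I)\cap I)$. Choose any $v_1\in V_i$, set $v_2\coloneqq u_{j+1}$, and set $v_{p+2}\coloneqq s_p$ for $p\in[r]$. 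Then $V(\CI)\cap Z=\emptyset$ by construction, the treewidth bound $\tw(G/\CI)=\CO(c_H\sqrt{k})$ follows from the conjectured decomposition applied with $W=V_i\cap Z$ (noting $|V_i\cap Z|\le 2k/\ell=\CO(\sqrt{k})$ comes from the same averaging argument used implicitly in Theorem~\ref{thm:2-cc-from-solution-subset}), and the body condition $(\widehat{B}_Z(I),\widehat{r}_Z(I)\cap I)\in\CF(I)$ is exactly what the algorithm of Theorem~\ref{thm:2-cc-from-solution-subset} guarantees.

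The main (and essentially only) obstacle is conceptual rather than technical: the argument is conditional on Conjecture~\ref{conj:contraction-decomposition-minor-vertex}, which is the only place where the proof of the planar version used specific geometric facts (via Theorem~\ref{thm:contraction-decomposition-planar-vertex}). Once that black box is available for $H$-minor-free graphs, the adaptation is purely bookkeeping --- constants become $c_H$-dependent --- and no new structural ideas are required beyond those already present in the proofs of Theorems~\ref{thm:contraction-decomposition-planar-vertex} and~\ref{thm:2-cc-from-solution-subset}.
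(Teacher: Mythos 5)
Your proof is correct and takes exactly the same route as the paper: the paper's proof of this corollary is a one-line reference back to the proof of Corollary~\ref{cor:segments-and-bodies-for-unknown-solution}, with Theorem~\ref{thm:contraction-decomposition-planar-vertex} replaced by Conjecture~\ref{conj:contraction-decomposition-minor-vertex}, and you have faithfully unpacked that substitution (correctly noting that Theorem~\ref{thm:2-cc-from-solution-subset} is already stated for $H$-minor-free graphs, so only the contraction decomposition needs the conjecture).
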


\begin{proof}
 The proof is analogous to the proof of Corollary \ref{cor:segments-and-bodies-for-unknown-solution} replacing the application of Theorem \ref{thm:contraction-decomposition-planar-vertex} with Conjecture \ref{conj:contraction-decomposition-minor-vertex}.
\end{proof}

\section{Algorithms for CSP Deletion Problems on Planar Graphs}
\label{sec:alg-perm-csp}

In this section, we prove Theorems \ref{thm:perm-csp-edge-deletion-result}, \ref{thm:perm-csp-deletion-result} and \ref{thm:2cc-perm-csp-deletion-result}.
All proofs follow essentially the same high-level strategy.
First, we use the Contraction Decomposition Theorem to partition the vertex or edge set of the input constraint graph (depending on the problem in question).
Then, we guess some partition class that has small intersection with the solution as well as the intersection of the solution with said partition class.
This gives rise to a segmented graph $(G,\CI)$ where the segments are the connected components of the chosen partition class after removing all solution vertices.
Now, we translate the initial Permutation-CSP-instance (with constraint graph $G$) to an equivalent binary CSP-instance with constraint graph $G/\CI$.
For \textsc{$2$Conn Perm CSP Vertex Deletion with Size Constraints}, this translation also builds on Theorem \ref{thm:2-cc-from-solution-subset} to guess the body sets of the segments.
Finally, we use dynamic programming to solve the constructed binary CSP-instance over constraint graph $G/\CI$ exploiting that $G/\CI$ has small treewidth.

We remark that the algorithms presented in Sections \ref{subsec:tw-dp} - \ref{subsec:perm-csp-vertex-deletion} are standard algorithms, and the corresponding results should not be surprising to a reader familiar with the concept of contraction decompositions as well as algorithmic approaches to CSPs.
Indeed, the main algorithmic contribution of this section is the subexponential algorithm for \textsc{$2$Conn Perm CSP Vertex Deletion with Size Constraints} which is presented in Section \ref{subsec:2cc-perm-csp-vertex-deletion}.

\subsection{CSPs with Size Constraints on Graphs of Bounded Treewidth}
\label{subsec:tw-dp}

We start by implementing the last step, i.e., we provide an algorithm deciding satisfiability of binary CSPs $\Gamma = (X,D,\CC)$ in time $|D|^{\CO(\tw(G(\Gamma)))}|X|^{\CO(1)}$.
It is well-known that such an algorithm can be obtained via a simple dynamic programming strategy along a tree decomposition of $G$.
However, for Theorems \ref{thm:perm-csp-deletion-result} and \ref{thm:2cc-perm-csp-deletion-result}, we also need to able to take size constraints into account.
Here, our strategy is similar, i.e., we translate the size constraints on $G$ to suitable size constraints over $G/\CI$.
To able to use the same subroutine for both Theorem \ref{thm:perm-csp-deletion-result} and \ref{thm:2cc-perm-csp-deletion-result}, the size constraints that we allow on the contracted graph need to be fairly general.
In the following, we define suitable size constraints and provide a dynamic programming algorithm for input instances of small treewidth.
We remark that the size constraints we introduce may look somewhat unnatural at first glance, but they are designed primarily to fit our needs when proving Theorems \ref{thm:perm-csp-deletion-result} and \ref{thm:2cc-perm-csp-deletion-result}.

Let $\Gamma = (X,D,\CC)$ be a binary CSP instance.
A \emph{global size constraint} is a triple $(w,q,\op)$ where $w \colon X \times D \rightarrow \NN$ is a weight function, $q \in \NN$ and $\op \in \{\leq, \geq\}$.
An assignment $\alpha\colon X \rightarrow D$ satisfies $(w,q)$ if
\[\Big(\sum_{v \in X} w(v,\alpha(v)), q \Big) \in \op.\]

Let $F \subseteq D^2$ such that $F$ is symmetric (i.e., $(a,b) \in F$ if and only if $(b,a) \in F$ for all $a,b \in D$) and let $\CD$ be a partition of $D$.
An \emph{$(F,\CD)$-local size constraint} is a tuple $(w,w_\CD,q,\op)$ where $w \colon X \times D \rightarrow \NN$ and $w_\CD \colon \CD \rightarrow \NN$ are weight functions, $q \in \NN$ and $\op \in \{\leq, \geq\}$.
Let $\alpha\colon X \rightarrow D$ be an assignment.
Let $G \coloneqq G(\Gamma)$ be the constraint graph.
We define $H_\alpha$ to be the subgraph of $G$ with vertex set $V(H_\alpha) \coloneqq V(G)$ and edge set
\[E(H_\alpha) \coloneqq \{vw \in E(G) \mid \alpha(v)\alpha(w) \in F\}.\]
Let $C_1,\dots,C_\ell$ denote the connected components of $H$.
We say $\alpha\colon X \rightarrow D$ satisfies $(w,w_\CD,q,\op)$ if, for every $i \in [\ell]$, there is some $D_i \in \CD$ such that
\[\alpha(v) \in D_i\]
for every $v \in C_i$ and
\[\Big(w_\CD(D_i) + \sum_{v \in C_i} w(v,\alpha(v)), q\Big) \in \op.\]

Let $\CS_\glo$ be a set of global size constraints and $\CS_\loc$ be a set of $(F,\CD)$-local size constraints.
We say that $(\Gamma,\CS_\glo,\CS_\loc)$ is \emph{satisfiable} if there is an assignment $\alpha\colon X \rightarrow D$ that satisfies $\Gamma$,
all global constraints $(w,q,\op) \in \CS_\glo$ and all $(F,\CD)$-local size constraints $(w,q,\op) \in \CS_\loc$.
We define
\[\|\CS_\glo\| \coloneqq \prod_{(w,q,\op) \in \CS_\glo} (2 + q)\]
and similarly,
\[\|\CS_\loc\| \coloneqq \prod_{(w,w_\CD,q,\op) \in \CS_\loc} (2 + q).\]

\begin{theorem}
 \label{thm:tw-binary-csp-size-constraints}
 Let $\Gamma = (X,D,\CC)$ be a binary CSP instance and let $F \subseteq \binom{D}{2}$ and $\CD$ be a partition of $D$.
 Also, let $\CS_\glo$ be a set of global size constraints and $\CS_\loc$ be a set of $(F,\CD)$-local size constraints.
 Moreover, let $k \coloneqq \tw(G(\Gamma))$.
 Then there is an algorithm which decides whether $(\Gamma,\CS_\glo,\CS_\loc)$ is \emph{satisfiable} and runs in time
 \[(|D| + k + \|\CS_\glo\| + \|\CS_\loc\|)^{\CO(k)}|X|^{\CO(1)}.\]
\end{theorem}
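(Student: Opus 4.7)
The plan is a standard bottom-up dynamic programming on a nice tree decomposition $(\FT,\beta)$ of $G(\Gamma)$ of width $\CO(k)$, which (since we already know $k$) can be computed in time $2^{\CO(k)}|X|^{\CO(1)}$. For each node $t$ the DP table is indexed by \emph{states}, where a state records: an assignment $\alpha\colon\beta(t)\to D$; a partition $\Pi$ of $\beta(t)$ whose classes are the traces on $\beta(t)$ of the connected components of $H_\alpha$ restricted to the subgraph already processed (the union of $\beta(t')$ over descendants $t'$); for each class $P\in\Pi$ and each $(w,w_\CD,q,\op)\in\CS_\loc$, an accumulated weight in $\{0,\dots,q+1\}$ ($q{+}1$ serving as a cap that flags $\leq$-violations); and for each global constraint $(w,q,\op)\in\CS_\glo$, a cumulative weight (also capped). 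Since $\CD$ is a partition of $D$, the $\CD$-part attached to a class is determined by $\alpha(v)$ for any $v$ in the class, so only $\CD$-\emph{consistency} within each class needs to be enforced on each merge; it does not need to be stored explicitly.

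\textbf{Transitions.} At an introduce node for $v$, iterate over each $\alpha(v)\in D$, check all unary constraints on $v$ and every binary constraint on an edge $vw$ with $w\in\beta(t)\setminus\{v\}$, put $v$ into a fresh singleton class with initial local weight $0$ (so bag vertices contribute to their class only after they are forgotten), and merge this class with the class of each neighbor $w$ satisfying $\alpha(v)\alpha(w)\in F$, rejecting if a merged class violates $\CD$-consistency. At a forget node for $v$, add $w(v,\alpha(v))$ to $v$'s class weight for every local constraint and to every global constraint's accumulator, drop $v$ from $\Pi$, and if its class becomes empty in $\beta(t)$, finalize that class: add $w_\CD(D)$ (with the unique $D\in\CD$ containing $\alpha(v)$) and verify the local constraint's $\op$-comparison, rejecting on violation. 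At a join node, combine pairs of children states agreeing on $\alpha$ on the common bag by taking the join of the two partitions (two classes merge whenever they share any bag vertex), checking $\CD$-consistency, and summing local class weights and global weights. The convention that class weights exclude current bag vertices until they are forgotten, together with deferring the additive $w_\CD$-term to finalization, ensures no double counting at joins. The instance is satisfiable iff some state at the root (with empty bag) has every global accumulator satisfying its $\op$-comparison.

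\textbf{State count and runtime.} At any bag of size at most $k+1$ the number of states is bounded by
\[|D|^{k+1}\cdot(k+1)^{k+1}\cdot\|\CS_\loc\|^{k+1}\cdot\|\CS_\glo\|=(|D|+k+\|\CS_\glo\|+\|\CS_\loc\|)^{\CO(k)},\]
where the factors account for assignments, partitions of the bag (the Bell number is at most $(k+1)^{k+1}$), and one tuple of per-class local weights from $\prod_{(w,w_\CD,q,\op)\in\CS_\loc}(q+2)=\|\CS_\loc\|$ many choices per class. Each transition can be executed in time polynomial in the state count at $t$ (for a join, one iterates over compatible pairs from the two children); a nice tree decomposition has $\CO(|X|\cdot k)$ nodes; so the overall runtime stays within the stated bound.

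\textbf{Main obstacle.} The only non-routine ingredient is the partition-tracking required by the local constraints, which is a variant of the connectivity-style DP used, e.g., for \textsc{Steiner Tree} parameterized by treewidth. The delicate point is the bookkeeping at join nodes: we must merge two partitions of the common bag into a single equivalence relation, verify $\CD$-consistency, and combine weights without double counting bag vertices or the $w_\CD$-term. The convention of excluding current-bag contributions from class weights until forgetting, and adding $w_\CD(D)$ only when a component becomes fully processed, makes this bookkeeping essentially automatic.
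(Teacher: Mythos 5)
Your proposal is correct and follows essentially the same route as the paper: a DP over a nice tree decomposition whose states record the bag assignment, the partition of the bag induced by the $F$-components, capped per-class accumulators for each local constraint, and capped global accumulators, yielding the same state-count bound. The only (harmless) difference is a bookkeeping convention — you defer a bag vertex's weight contribution until it is forgotten, whereas the paper counts bag vertices immediately and subtracts $\sum_{u\in X_t}w(u,f(u))$ at join nodes to avoid double counting.
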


\begin{proof}
 Let $G \coloneqq G(\Gamma)$ and let $(T,\beta)$ be a nice tree decomposition of $G$.
 Let $t \in V(T)$.
 We denote $X_t \coloneqq \beta(t)$ and $V_t$ the set of vertices contained in bags below $t$ (including $t$ itself).
 For a mapping $\alpha \colon V_t \rightarrow D$, we define the graph $H_{t,\alpha}$ with vertex set $V(H_{t,\alpha}) \coloneqq V_t$ and
 \[E(H_{t,\alpha}) \coloneqq \{vw \in E(G) \mid \alpha(v)\alpha(w) \in F\}.\]
 
 For every function $f \colon X_t \rightarrow D$, every partitions $\CP$ of $X_t$, all numbers $q_\Fc \in [0,q+1]$ for $\Fc = (w,q,\op) \in \CS_\glo$, and all functions $g_\Fc \colon \CP \rightarrow [0,q+1]$ for $\Fc = (w,w_\CD,q,\op) \in \CS_\loc$,
 we define
 \[c_t[f,\CP,(q_\Fc)_{\Fc \in \CS_\glo},(g_\Fc)_{\Fc \in \CS_\loc}] = 1\]
 if there is an assignment $\alpha \colon V_t \rightarrow D$ such that
 \begin{enumerate}[label = (\Alph*)]
  \item\label{item:tw-dp-1} $\alpha$ satisfies $\Gamma[V_t]$,
  \item\label{item:tw-dp-2} $f(v) = \alpha(v)$ for all $v \in X_t$,
  \item\label{item:tw-dp-3} $C_i \cap X_t \in \CP$ for every $i \in [\ell]$, where $C_1,\dots,C_\ell$ denote the connected components of $H_{t,\alpha}$,
  \item\label{item:tw-dp-4} $\min\Big(\sum_{v \in V_t} w(v,\alpha(v)),q+1\Big) = q_\Fc$ for every $\Fc = (w,q,\op) \in \CS_\glo$,
  \item\label{item:tw-dp-5} for every $i \in [\ell]$ there is some $D_i \in \CD$ such that $\alpha(v) \in D_i$ for every $v \in C_i$,
  \item\label{item:tw-dp-6} $\Big(w_\CD(D_i) + \sum_{v \in C_i} w(v,\alpha(v)), q\Big) \in \op$ for every $i \in [\ell]$ such that $C_i \cap X_t = \emptyset$ and every $\Fc = (w,w_\CD,q,\op) \in \CS_\loc$,
  \item\label{item:tw-dp-7} $\min\Big(\sum_{v \in C_i} w(v,\alpha(v)),q+1\Big) = g_\Fc(C_i \cap X_t)$ for every $i \in [\ell]$ such that $C_i \cap X_t \neq \emptyset$ and every $\Fc = (w,w_\CD,q,\op) \in \CS_\loc$.
 \end{enumerate}
 Otherwise, $c_t[f,\CP,(q_\Fc)_{\Fc \in \CS_\glo},(g_\Fc)_{\Fc \in \CS_\loc}] = 0$.
 Also, for ease of notation, we define $A_t$ as the set of all tuples $(f,\CP,(q_\Fc)_{\Fc \in \CS_\glo},(g_\Fc)_{\Fc \in \CS_\loc})$.
 Observe that
 \[|A_t| \leq |D|^{|X_t|} \cdot |X_t|^{|X_t|} \cdot \|\CS_\glo\| \cdot \|\CS_\loc\|^{|X_t|} \leq (|D| + k + \|\CS_\glo\| + \|\CS_\loc\|)^{\CO(k)}.\]
 
 We compute the values $c_t[f,\CP,\bar q,\bar g]$ in a bottom-up fashion along the tree decomposition $(T,\beta)$.
 We first set
 \[c_t[c_\emptyset,\emptyset,(0)_{\Fc \in \CS_\glo},(c_\emptyset)_{\Fc \in \CS_\loc}] = 1\]
 for all leaf nodes $t$ where $c_\emptyset$ denotes the empty function (i.e., the unique function with an empty domain).
 All other entries for leaf nodes $t$ are set to $0$.
 
 So let $t$ be an internal node and suppose the algorithm already computed all values $c_s[f,\CP,\bar q,\bar g]$ for all children $s$ of $t$.
 
 \begin{description}
  \item[Forget:] Let $s$ be the unique child of $t$ and suppose $X_s = X_t \cup \{v\}$.
   Let $(f,\CP,(q_\Fc)_{\Fc \in \CS_\glo},(g_\Fc)_{\Fc \in \CS_\loc}) \in A_t$ and $(f',\CP',(q_\Fc')_{\Fc \in \CS_\loc},(g_\Fc')_{\Fc \in \CS_\glo}) \in A_s$.
   
   We say \emph{$(f',\CP',(q_\Fc')_{\Fc \in \CS_\glo},(g_\Fc')_{\Fc \in \CS_\loc})$ is consistent with $(f,\CP,(q_\Fc)_{\Fc \in \CS_\glo},(g_\Fc)_{\Fc \in \CS_\loc})$} if the following conditions are satisfied:
   \begin{itemize}
    \item $f(u) = f'(u)$ for all $u \in X_t$,
    \item $(q_\Fc')_{\Fc \in \CS_\glo} = (q_\Fc)_{\Fc \in \CS_\glo}$,
    \item if $\{v\} \in \CP'$ then
     \begin{itemize}
      \item $\Big(g_\Fc'(\{v\}) + w_\CD(D_v),q\Big) \in \op$ for every $\Fc = (w,w_\CD,q,\op) \in \CS_\loc$ where $D_v \in \CD$ is the unique element such that $f(v) \in D_v$,
      \item $\CP = \CP' \setminus \{\{v\}\}$,
      \item $g_\Fc(P) = g_\Fc'(P)$ for every $P \in \CP$ and every $\Fc \in \CS_\loc$,
     \end{itemize}
    \item if $\{v\} \notin \CP'$ and $P \in \CP'$ is the unique element such that $v \in P$ then
     \begin{itemize}
      \item $\CP = (\CP' \setminus \{P\}) \cup \{P \setminus \{v\}\}$,
      \item $g_\Fc(P') = g_\Fc'(P')$ for every $P' \in \CP'$ such that $P' \neq P$ and every $\Fc \in \CS_\loc$, and
      \item $g_\Fc(P \setminus \{v\}) = g_\Fc'(P)$ for every $\Fc \in \CS_\loc$.
     \end{itemize}
   \end{itemize}
   
   We set
   \[c_t[f,\CP,(q_\Fc)_{\Fc \in \CS_\glo},(g_\Fc)_{\Fc \in \CS_\loc}] = 1\]
   if there is some consistent tuple $(f',\CP',(q_\Fc')_{\Fc \in \CS_\loc},(g_\Fc')_{\Fc \in \CS_\glo}) \in A_s$ for which
   \[c_s[f',\CP',(q_\Fc')_{\Fc \in \CS_\glo},(g_\Fc')_{\Fc \in \CS_\loc}] = 1.\]
    
  \item[Introduce:] Let $s$ be the unique child of $t$ and suppose $v$ is introduced at $t$, i.e., $X_t = X_s \cup \{v\}$.
    Let $(f,\CP,(q_\Fc)_{\Fc \in \CS_\glo},(g_\Fc)_{\Fc \in \CS_\loc}) \in A_t$ and $(f',\CP',(q_\Fc')_{\Fc \in \CS_\loc},(g_\Fc')_{\Fc \in \CS_\glo}) \in A_s$.
   
   We say \emph{$(f',\CP',(q_\Fc')_{\Fc \in \CS_\glo},(g_\Fc')_{\Fc \in \CS_\loc})$ is consistent with $(f,\CP,(q_\Fc)_{\Fc \in \CS_\glo},(g_\Fc)_{\Fc \in \CS_\loc})$} if the following conditions are satisfied:
    \begin{itemize}
     \item $f$ satisfies $\Gamma[X_t]$,
     \item $f(u) = f'(u)$ for all $u \in X_s$,
     \item $q_\Fc = \min\Big(q+1,q_\Fc' + w(v,f(v))\Big)$ for all $\Fc = (w,q,\op) \in \CS_\glo$,
     \item if $P \in \CP$ is the unique element for which $v \in P$ then there are $P_1',\dots,P_\ell' \in \CP'$ such that
      \begin{itemize}
       \item $\CP = (\CP' \setminus \{P_1',\dots,P_\ell'\}) \cup \{P\}$,
       \item $P = \{v\} \cup \bigcup_{i \in [\ell]}P_i'$,
       \item $\{P_1',\dots,P_\ell'\} = \{P' \in \CP' \mid \exists u \in P'\colon vu \in E(G) \wedge f(v)f(u) \in F\}$,
       \item $f(u) \in D_v$ for every $u \in P$ where $D_v \in \CD$ is the unique set such that $f(v) \in D_v$,
       \item $g_\Fc(P') = g_\Fc'(P')$ for every $P' \in \CP$ such that $P' \neq P$ and every $\Fc \in \CS_\loc$, and
       \item
       \[g_\Fc(P) = \min\Big(q+1,w(v,f(v)) + \sum_{i = 1}^\ell g_\Fc'(P_i')\Big)\] for every $\Fc = (w,w_\CD,q,\op) \in \CS_\loc$.
      \end{itemize}
    \end{itemize}
    As before, we set
   \[c_t[f,\CP,(q_\Fc)_{\Fc \in \CS_\glo},(g_\Fc)_{\Fc \in \CS_\loc}] = 1\]
   if there is some consistent tuple $(f',\CP',(q_\Fc')_{\Fc \in \CS_\loc},(g_\Fc')_{\Fc \in \CS_\glo}) \in A_s$ for which
   \[c_s[f',\CP',(q_\Fc')_{\Fc \in \CS_\glo},(g_\Fc')_{\Fc \in \CS_\loc}] = 1.\]
   
   \item[Join:] Let $s_1,s_2$ be the children of $t$.
    Note that $X_t = X_{s_1} = X_{s_2}$.
    Let $(f,\CP,(q_\Fc)_{\Fc \in \CS_\glo},(g_\Fc)_{\Fc \in \CS_\loc}) \in A_t$ and $(f_1,\CP_1,(q_\Fc^1)_{\Fc \in \CS_\loc},(g_\Fc^1)_{\Fc \in \CS_\glo}) \in A_{s_1}$ and $(f_2,\CP_2,(q_\Fc^2)_{\Fc \in \CS_\loc},(g_\Fc^2)_{\Fc \in \CS_\glo}) \in A_{s_2}$.
   
    We say \emph{$(f_1,\CP_1,(q_\Fc^1)_{\Fc \in \CS_\glo},(g_\Fc^1)_{\Fc \in \CS_\loc})$ and $(f_2,\CP_2,(q_\Fc^2)_{\Fc \in \CS_\glo},(g_\Fc^2)_{\Fc \in \CS_\loc})$ are consistent with $(f,\CP,(q_\Fc)_{\Fc \in \CS_\glo},(g_\Fc)_{\Fc \in \CS_\loc})$} if the following conditions are satisfied:
    \begin{itemize}
     \item $f(u) = f_1(u) = f_2(u)$ for all $u \in X_t$,
     \item $\CP = \CP_1 \vee \CP_2$, i.e., $\CP$ is the unique finest partition of $X_t$ such that $\CP_1$ and $\CP_2$ refine $\CP$,
     \item for every $P \in \CP$ there is some $D_P \in \CD$ such that $f(u) \in D_P$ for every $u \in P$,
     \item $q_\Fc = \min\Big(q+1,q_\Fc^1 + q_\Fc^2 - \sum_{u \in X_t} w(u,f(u))\Big)$ for all $\Fc = (w,q,\op) \in \CS_\glo$, and
     \item
      \[g_\Fc(P) = \min\Big(q+1, \sum_{P_1 \in \CP_1\colon P_1 \subseteq P} g_\Fc^1(P_1) + \sum_{P_2 \in \CP_2\colon P_2 \subseteq P} g_\Fc^2(P_2) - \sum_{u \in P}w(u,f(u))\Big)\]
      for all $\Fc = (w,w_\CD,q,\op) \in \CS_\loc$ and all $P \in \CP$.
    \end{itemize}
    We set
    \[c_t[f,\CP,(q_\Fc)_{\Fc \in \CS_\glo},(g_\Fc)_{\Fc \in \CS_\loc}] = 1\]
   if there are tuples $(f_i,\CP_i,(q_\Fc^i)_{\Fc \in \CS_\loc},(g_\Fc^i)_{\Fc \in \CS_\glo}) \in A_{s_i}$, $i \in \{1,2\}$, that are consistent with $(f,\CP,(q_\Fc)_{\Fc \in \CS_\glo},(g_\Fc)_{\Fc \in \CS_\loc})$ and for which
   \[c_s[f_i,\CP_i,(q_\Fc^i)_{\Fc \in \CS_\glo},(g_\Fc^i)_{\Fc \in \CS_\loc}] = 1.\]
 \end{description}
  
 To complete the description, the algorithm outputs \yes\ if there is a tuple
 \[(c_\emptyset,\emptyset,(q_\Fc)_{\Fc \in \CS_\glo},(c_\emptyset)_{\Fc \in \CS_\loc}) \in A_r\]
 for which
 \[c_r[c_\emptyset,\emptyset,(q_\Fc)_{\Fc \in \CS_\glo},(c_\emptyset)_{\Fc \in \CS_\loc}] = 1\]
 where $r$ denotes the root node of $T$, such that
 \[(q_\Fc,q) \in \op\]
 for all $\Fc = (w,q,\op) \in \CS_\glo$.
 
 Let us first analyze the running time of the algorithm.
 First, we can compute a nice tree decomposition of $G$ of width $k = \CO(\tw(G))$ in time $2^{\CO(k)}|X|^{\CO(1)}$.
 For every node $t$, we need to compute $|A_t|$ many values.
 If $t$ is an introduce-node or a forget-node, this can be done naively in time $\CO(|A_t| \cdot |A_s| \cdot k^{\CO(1)})$ where $s$ denotes the unique child of $t$.
 Similarly, if $t$ is a join-node, all entries can be computed in time $\CO(|A_t| \cdot |A_{s_1}| \cdot |A_{s_2}| \cdot k^{\CO(1)})$ where $s_1,s_2$ denote the children of $t$.
 Overall, this means that the algorithm runs in time $(|D| + k + \|\CS_\glo\| + \|\CS_\loc\|)^{\CO(k)}|X|^{\CO(1)}$ as desired.
 
 Hence, it remains to prove the correctness.
 We need to argue that $c_t[f,\CP,(q_\Fc)_{\Fc \in \CS_\glo},(g_\Fc)_{\Fc \in \CS_\loc}] = 1$ if and only if there is an assignment $\alpha \colon V_t \rightarrow D$ satisfying Conditions \ref{item:tw-dp-1} - \ref{item:tw-dp-7}.
 This is proved by induction on the structure of the tree $T$.
 It is easy to check that this condition is satisfied for all leaves $t$ of $T$.
 So fix an internal node $t \in V(T)$.
 
 \begin{description}
  \item[Forget:] First assume $t$ is a forget-node.
   Let $s$ be the unique child of $t$, and suppose that $X_s = X_t \cup \{v\}$.
   Observe that $V_t = V_s$.
   Let $(f,\CP,(q_\Fc)_{\Fc \in \CS_\glo},(g_\Fc)_{\Fc \in \CS_\loc}) \in A_t$.
   
   First suppose there is an assignment $\alpha \colon V_t \rightarrow D$ satisfying Conditions \ref{item:tw-dp-1} - \ref{item:tw-dp-7} with respect to node $t$ and the tuple $(f,\CP,(q_\Fc)_{\Fc \in \CS_\glo},(g_\Fc)_{\Fc \in \CS_\loc})$.
   Let $C_1,\dots,C_\ell$ denote the connected components of $H_{t,\alpha}$.
   Let $f' \coloneqq \alpha|_{X_s}$ and $\CP' \coloneqq \{C_i \cap X_s \mid i \in [\ell], C_i \cap X_s \neq \emptyset\}$.
   Also, define $q_\Fc' \coloneqq \min\Big(\sum_{u \in V_s} w(u,\alpha(u)),q+1\Big)$ for all $\Fc = (w,q,\op) \in \CS_\glo$.
   And finally, for every $\Fc = (w,w_\CD,q,\op) \in \CS_\loc$, we define $g_\Fc'\colon \CP' \rightarrow [0,q+1]$ via
   \[g_\Fc'(C_i \cap X_s) \coloneqq \min\Big(q+1,\sum_{u \in C_i} w(u,\alpha(u))\Big)\]
   for every $i \in [\ell]$ such that $C_i \cap X_s \neq \emptyset$.
  
   It is easy to see that $\alpha$ satisfies Conditions \ref{item:tw-dp-1} - \ref{item:tw-dp-7} with respect to node $s$ and the tuple $(f',\CP',(q_\Fc')_{\Fc \in \CS_\glo},(g_\Fc')_{\Fc \in \CS_\loc})$.
   Hence, by the induction hypothesis,
   the algorithm sets
   \[c_s[f',\CP',(q_\Fc')_{\Fc \in \CS_\glo},(g_\Fc')_{\Fc \in \CS_\loc}] = 1.\]
   Moreover, using Conditions \ref{item:tw-dp-1} - \ref{item:tw-dp-7}, it is easy to see that $(f',\CP',(q_\Fc')_{\Fc \in \CS_\glo},(g_\Fc')_{\Fc \in \CS_\loc})$ is consistent with $(f,\CP,(q_\Fc)_{\Fc \in \CS_\glo},(g_\Fc)_{\Fc \in \CS_\loc})$.
   Indeed, the only non-trivial case occurs if $\{v\} \in \CP'$.
   Pick $i \in [\ell]$ such that $v \in C_i$.
   Then
   \[g_\Fc'(\{v\}) + w_\CD(D_v) = \min\Big(q+1,\sum_{u \in C_i} w(u,\alpha(u))\Big) + w_\CD(D_v))\]
   and
   \[\Big(\sum_{u \in C_i} w(u,\alpha(u))\big) + w_\CD(D_v),q\Big) \in \op\]
   for every $(w,w_\CD,q,\op) \in \CD_\loc$.
   But then
   $\Big(g_\Fc'(\{v\}) + w_\CD(D_v),q\Big) \in \op$
   which ensures consistency.
   So we set
   \[c_t[f,\CP,(q_\Fc)_{\Fc \in \CS_\glo},(g_\Fc)_{\Fc \in \CS_\loc}] = 1\]
   as desired.
   
   \medskip
   
   For the other direction, suppose we set $c_t[f,\CP,(q_\Fc)_{\Fc \in \CS_\glo},(g_\Fc)_{\Fc \in \CS_\loc}] = 1$.
   This means there is a tuple $(f',\CP',(q_\Fc')_{\Fc \in \CS_\loc},(g_\Fc')_{\Fc \in \CS_\glo}) \in A_s$ that is consistent with $(f,\CP,(q_\Fc)_{\Fc \in \CS_\glo},(g_\Fc)_{\Fc \in \CS_\loc})$ and for which
   \[c_s[f',\CP',(q_\Fc')_{\Fc \in \CS_\glo},(g_\Fc')_{\Fc \in \CS_\loc}] = 1.\]
   Hence, by the induction hypothesis there is an assignment $\alpha\colon V_s \rightarrow D$ that satisfies Conditions \ref{item:tw-dp-1} - \ref{item:tw-dp-7} with respect to node $s$ and the tuple $(f',\CP',(q_\Fc')_{\Fc \in \CS_\glo},(g_\Fc')_{\Fc \in \CS_\loc})$.
   Using consistency, it is easy to verify that $\alpha$ also satisfies Conditions \ref{item:tw-dp-1} - \ref{item:tw-dp-7} with respect to node $t$ and the tuple $(f,\CP,(q_\Fc)_{\Fc \in \CS_\glo},(g_\Fc)_{\Fc \in \CS_\loc})$.
   Indeed, if $C_1,\dots,C_\ell$ are the connected components $H_{t,\alpha}$, the only non-trivial part is verifying Condition \ref{item:tw-dp-6} for the component $C_i$ such that $v \in C_i$ assuming that $\{v\} \in \CP'$.
   But this follows by the same calculation as above.
  \item[Introduce:] Next, assume $t$ is an introduce-node.
   Let $s$ be the unique child of $t$, and suppose that $X_t = X_s \cup \{v\}$.
   Let $(f,\CP,(q_\Fc)_{\Fc \in \CS_\glo},(g_\Fc)_{\Fc \in \CS_\loc}) \in A_t$.
   
   First suppose there is an assignment $\alpha \colon V_t \rightarrow D$ satisfying Conditions \ref{item:tw-dp-1} - \ref{item:tw-dp-7} with respect to node $t$ and the tuple $(f,\CP,(q_\Fc)_{\Fc \in \CS_\glo},(g_\Fc)_{\Fc \in \CS_\loc})$.
   Let $C_1,\dots,C_\ell$ denote the connected components of $H_{t,\alpha}$.
   Let $\alpha' \coloneqq \alpha|_{V_s}$ denote the restriction of $\alpha$ to $V_s$.
   Let $C_1',\dots,C_{\ell'}'$ denote the connected components of $H_{s,\alpha'}$.
   Observe that $H_{s,\alpha'} = H_{t,\alpha} - \{v\}$.
   Without loss of generality suppose that $v \in C_1$.
   Then
   \[C_1 = \{v\} \cup \bigcup_{i \in [\ell']\colon E_{H_{s,\alpha'}}(\{v\},C_i') \neq \emptyset} C_i'.\]
   Now, let $f' \coloneqq \alpha'|_{X_s} = \alpha|_{X_s}$ and $\CP' \coloneqq \{C_i' \cap X_s \mid i \in [\ell'], C_i' \cap X_s \neq \emptyset\}$.
   Also, define $q_\Fc' \coloneqq \min\Big(\sum_{u \in V_s} w(u,\alpha'(u)),q+1\Big)$ for all $\Fc = (w,q,\op) \in \CS_\glo$.
   And finally, for every $\Fc = (w,w_\CD,q,\op) \in \CS_\loc$, we define $g_\Fc'\colon \CP' \rightarrow [0,q+1]$ via
   \[g_\Fc'(C_i' \cap X_s) \coloneqq \min\Big(q+1,\sum_{u \in C_i'} w(u,\alpha'(u))\Big)\]
   for every $i \in [\ell']$ such that $C_i' \cap X_s \neq \emptyset$.
   
   It is easy to see that $\alpha'$ satisfies Conditions \ref{item:tw-dp-1} - \ref{item:tw-dp-7} with respect to node $s$ and the tuple $(f',\CP',(q_\Fc')_{\Fc \in \CS_\glo},(g_\Fc')_{\Fc \in \CS_\loc})$.
   Hence, by the induction hypothesis,
   the algorithm sets
   \[c_s[f',\CP',(q_\Fc')_{\Fc \in \CS_\glo},(g_\Fc')_{\Fc \in \CS_\loc}] = 1.\]
   Moreover, using Conditions \ref{item:tw-dp-1} - \ref{item:tw-dp-7} as well as the above comments, it is easy to verify that $(f',\CP',(q_\Fc')_{\Fc \in \CS_\glo},(g_\Fc')_{\Fc \in \CS_\loc})$ is consistent with $(f,\CP,(q_\Fc)_{\Fc \in \CS_\glo},(g_\Fc)_{\Fc \in \CS_\loc})$.
   So we set
   \[c_t[f,\CP,(q_\Fc)_{\Fc \in \CS_\glo},(g_\Fc)_{\Fc \in \CS_\loc}] = 1\]
   as desired.
   
   \medskip
   
   For the other direction, suppose we set $c_t[f,\CP,(q_\Fc)_{\Fc \in \CS_\glo},(g_\Fc)_{\Fc \in \CS_\loc}] = 1$.
   This means there is a tuple $(f',\CP',(q_\Fc')_{\Fc \in \CS_\loc},(g_\Fc')_{\Fc \in \CS_\glo}) \in A_s$ that is consistent with $(f,\CP,(q_\Fc)_{\Fc \in \CS_\glo},(g_\Fc)_{\Fc \in \CS_\loc})$ and for which
   \[c_s[f',\CP',(q_\Fc')_{\Fc \in \CS_\glo},(g_\Fc')_{\Fc \in \CS_\loc}] = 1.\]
   Hence, by the induction hypothesis there is an assignment $\alpha'\colon V_s \rightarrow D$ that satisfies Conditions \ref{item:tw-dp-1} - \ref{item:tw-dp-7} with respect to node $s$ and the tuple $(f',\CP',(q_\Fc')_{\Fc \in \CS_\glo},(g_\Fc')_{\Fc \in \CS_\loc})$.
   We define $\alpha\colon V_t \rightarrow D$ via $\alpha(v) \coloneqq f(v)$ and $\alpha(u) \coloneqq \alpha'(u)$ for all $u \in X_s$.
   
   Using consistency and similar arguments as above, it is easy to verify that $\alpha$ satisfies Conditions \ref{item:tw-dp-1} - \ref{item:tw-dp-7} with respect to node $t$ and the tuple $(f,\CP,(q_\Fc)_{\Fc \in \CS_\glo},(g_\Fc)_{\Fc \in \CS_\loc})$.
  \item[Join:] Finally, suppose $t$ is a join node and let $s_1,s_2$ be the two children of $t$.
   Note that $X_t = X_{s_1} = X_{s_2}$.
   
   Let $(f,\CP,(q_\Fc)_{\Fc \in \CS_\glo},(g_\Fc)_{\Fc \in \CS_\loc}) \in A_t$.
   
   First suppose there is an assignment $\alpha \colon V_t \rightarrow D$ satisfying Conditions \ref{item:tw-dp-1} - \ref{item:tw-dp-7} with respect to node $t$ and the tuple $(f,\CP,(q_\Fc)_{\Fc \in \CS_\glo},(g_\Fc)_{\Fc \in \CS_\loc})$.
   Let $\alpha_1 \coloneqq \alpha|_{V_{s_1}}$ and $\alpha_2 \coloneqq \alpha|_{V_{s_2}}$ denote the restrictions of $\alpha$ to $V_{s_1}$ and $V_{s_2}$.
   Also, let $f_i \coloneqq \alpha|_{X_{s_i}}$ for both $i \in \{1,2\}$.
   Next, let $C_1^i,\dots,C_{\ell_i}^i$ denote the connected components of $H_{s_i,\alpha_i}$ for both $i \in \{1,2\}$.
   We define $\CP_i \coloneqq \{X_{s_i} \cap C_j^i \mid j \in [\ell_i], X_{s_i} \cap C_j^i \neq \emptyset\}$.
   Also, let $C_1,\dots,C_\ell$ denote the connected components of $H_{t,\alpha}$.
   Recall that $\CP = \{C_i \cap X_t \mid i \in [\ell], C_i \cap X_t \neq \emptyset\}$ by Property \ref{item:tw-dp-3}.
   Since $V_{s_1} \cap V_{s_2} = X_t$ we get that $\CP = \CP_1 \vee \CP_2$.
   
   Next, let $q_\Fc^i \coloneqq \min\Big(\sum_{u \in V_{s_i}} w(u,\alpha(u)),q+1\Big)$ for all $\Fc = (w,q,\op) \in \CS_\glo$.
   We have that
   \[\sum_{u \in V_{s_2}} w(u,\alpha(u)) + \sum_{u \in V_{s_2}} w(u,\alpha(u)) = \sum_{u \in V_t} w(u,\alpha(u)) + \sum_{u \in X_t} w(u,\alpha(u))\]
   which implies that
   \[q_\Fc = \min\Big(q+1,q_\Fc^1 + q_\Fc^2 - \sum_{u \in X_t} w(u,f(u))\Big)\]
   using Condition \ref{item:tw-dp-4}.
   
   Finally, for every $\Fc = (w,w_\CD,q,\op) \in \CS_\loc$, we define $g_\Fc^i\colon \CP_i \rightarrow [0,q+1]$ via
   \[g_\Fc^i(C_j^i \cap X_{s_i}) \coloneqq \min\Big(q+1,\sum_{u \in C_j^i} w(u,\alpha_i(u))\Big)\]
   for every $j \in [\ell_i]$ such that $C_j^i \cap X_{s_i} \neq \emptyset$.
   We have that
   \[g_\Fc(P) = \min\Big(q+1, \sum_{P_1 \in \CP_1\colon P_1 \subseteq P} g_\Fc^1(P_1) + \sum_{P_2 \in \CP_2\colon P_2 \subseteq P} g_\Fc^2(P_2) - \sum_{u \in P}w(u,f(u))\Big)\]
   for all $P \in \CP$.
   
   Overall, we get that $(f_1,\CP_1,(q_\Fc^1)_{\Fc \in \CS_\glo},(g_\Fc^1)_{\Fc \in \CS_\loc})$ and $(f_2,\CP_2,(q_\Fc^2)_{\Fc \in \CS_\glo},(g_\Fc^2)_{\Fc \in \CS_\loc})$ are consistent with $(f,\CP,(q_\Fc)_{\Fc \in \CS_\glo},(g_\Fc)_{\Fc \in \CS_\loc})$.
   Also, $\alpha_i$ satisfies Conditions \ref{item:tw-dp-1} - \ref{item:tw-dp-7} with respect to node $s_i$ and the tuple $(f_i,\CP_i,(q_\Fc^i)_{\Fc \in \CS_\glo},(g_\Fc^i)_{\Fc \in \CS_\loc})$.
   By the induction hypothesis, this implies that
   \[c_{s_i}[f_i,\CP_i,(q_\Fc^i)_{\Fc \in \CS_\glo},(g_\Fc^i)_{\Fc \in \CS_\loc}] = 1\]
   for both $i \in \{1,2\}$.
   So the algorithms sets
   \[c_t[f,\CP,(q_\Fc)_{\Fc \in \CS_\glo},(g_\Fc)_{\Fc \in \CS_\loc}] = 1\]
   as desired.
   
   \medskip
   
   For the other direction, suppose we set $c_t[f,\CP,(q_\Fc)_{\Fc \in \CS_\glo},(g_\Fc)_{\Fc \in \CS_\loc}] = 1$.
   This means there are tuples $(f_1,\CP_1,(q_\Fc^1)_{\Fc \in \CS_\glo},(g_\Fc^1)_{\Fc \in \CS_\loc}) \in A_{s_1}$ and $(f_2,\CP_2,(q_\Fc^2)_{\Fc \in \CS_\glo},(g_\Fc^2)_{\Fc \in \CS_\loc}) \in A_{s_2}$ that are consistent with $(f,\CP,(q_\Fc)_{\Fc \in \CS_\glo},(g_\Fc)_{\Fc \in \CS_\loc})$ and such that
   \[c_{s_i}[f_i,\CP_i,(q_\Fc^i)_{\Fc \in \CS_\glo},(g_\Fc^i)_{\Fc \in \CS_\loc}] = 1\]
   for both $i \in \{1,2\}$.
   By the induction hypothesis, there are assignments $\alpha_i\colon V_{s_i} \rightarrow D$, $i \in \{1,2\}$, that satisfy Conditions \ref{item:tw-dp-1} - \ref{item:tw-dp-7} with respect to node $s_i$ and $(f_i,\CP_i,(q_\Fc^i)_{\Fc \in \CS_\glo},(g_\Fc^i)_{\Fc \in \CS_\loc})$.
   We define $\alpha\colon V_t \rightarrow D$ via $\alpha(v) \coloneqq \alpha_1(v)$ for all $v \in V_{s_1}$ and $\alpha(v) \coloneqq \alpha_2(v)$ for all $v \in V_{s_2}$.
   Observe that this is well-defined since $\alpha_1(v) = \alpha_2(v)$ for all $v \in V_{s_1} \cap V_{s_2} = X_t$ by consistency and Property \ref{item:tw-dp-2}.
   Using similar arguments as for the other direction, it can be verified that $\alpha$ satisfies Conditions \ref{item:tw-dp-1} - \ref{item:tw-dp-7} with respect to node $t$ and the tuple $(f,\CP,(q_\Fc)_{\Fc \in \CS_\glo},(g_\Fc)_{\Fc \in \CS_\loc})$.
 \end{description}
 To complete the proofs we observe that $(\Gamma,\CS_\glo,\CS_\loc)$ is a \yes-instance if and only if there is a tuple
 \[(c_\emptyset,\emptyset,(q_\Fc)_{\Fc \in \CS_\glo},(c_\emptyset)_{\Fc \in \CS_\loc}) \in A_r\]
 for which
 \[c_r[c_\emptyset,\emptyset,(q_\Fc)_{\Fc \in \CS_\glo},(c_\emptyset)_{\Fc \in \CS_\loc}] = 1\]
 where $r$ denotes the root node of $T$, such that
 \[(q_\Fc,q) \in \op\]
 for all $\Fc = (w,q,\op) \in \CS_\glo$.
\end{proof}

We also need a second result for binary CSPs on graphs of bounded treewidth.
Let $\Gamma = (X,D,\CC)$ be a binary CSP and let $\alpha\colon X \rightarrow D$ be an assignment.
Also, let $w \colon X^2 \times D^2 \rightarrow \NN$ such that $w(x,y,a,b) = w(y,x,b,a)$ for all $x,y \in X$ and $a,b \in D$.
We define the \emph{cost of $\alpha$} as
\[\cost_\Gamma(\alpha) \coloneqq \sum_{x,y \in X \colon \alpha \text{ violates some constraint } ((x,y),R) \in \CC} w(x,y,\alpha(x),\alpha(y)).\]
We remark that the sum also covers unary constraints by setting $x = y$.
Implementing a simple dynamic program along the structure of a nice tree decomposition (similar to the previous theorem), we obtain the following result.

\begin{theorem}
 \label{thm:tw-binary-csp-cost-bound}
 Let $\Gamma = (X,D,\CC)$ be a binary CSP instance and $w \colon X^2 \times D^2 \rightarrow \NN$ a weight function such that $w(x,y,a,b) = w(y,x,b,a)$ for all $x,y \in X$ and $a,b \in D$, and $m \geq 0$.
 Also define $k \coloneqq \tw(G(\Gamma))$.
 Then there is an algorithm which decides whether there is an assignment $\alpha\colon X \rightarrow D$ of cost $\cost_\Gamma(\alpha) \leq m$ in time
 \[|D|^{\CO(k)}|X|^{\CO(1)}.\]
\end{theorem}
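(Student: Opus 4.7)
The plan is to run a standard dynamic program along a nice tree decomposition of $G(\Gamma)$, in the spirit of (but much lighter than) Theorem~\ref{thm:tw-binary-csp-size-constraints}: we only need to carry a single scalar cost. First, I would compute a nice tree decomposition $(T,\beta)$ of $G(\Gamma)$ of width $\CO(k)$ in time $2^{\CO(k)}|X|^{\CO(1)}$. For each node $t$, let $X_t = \beta(t)$ and let $V_t$ be the set of vertices appearing in a bag at or below $t$. For every $f \colon X_t \to D$, define $c_t[f]$ to be the minimum, over all extensions $\alpha \colon V_t \to D$ of $f$, of $\sum w(x,y,\alpha(x),\alpha(y))$ where the sum ranges over pairs $(x,y) \in V_t \times V_t$ such that some constraint $((x,y),R) \in \CC$ is violated by $\alpha$. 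There are at most $|D|^{|X_t|} = |D|^{\CO(k)}$ entries per node.

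I would then process the four node types as follows. At a leaf, $c_t[c_\emptyset] = 0$ where $c_\emptyset$ is the empty function. At an introduce node with child $s$ that adds a vertex $v$, set
\[c_t[f] = c_s\bigl[f|_{X_s}\bigr] + \sum_{u \in X_t} \chi(f,v,u) \cdot w(v,u,f(v),f(u)),\]
where $\chi(f,v,u) = 1$ if $f$ violates some constraint on the pair $(v,u)$ (with $u = v$ capturing unary constraints) and $0$ otherwise; the running-intersection property ensures that every constraint incident to $v$ has its other endpoint in $X_t$, so all new violations are accounted for exactly once. At a forget node with child $s$ forgetting $v$, set $c_t[f] = \min_{a \in D} c_s[f \cup \{v \mapsto a\}]$. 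At a join node with children $s_1, s_2$, set $c_t[f] = c_{s_1}[f] + c_{s_2}[f] - \delta(f)$, where $\delta(f)$ is the total weight of constraints with both endpoints in $X_t$ that are violated by $f$; subtracting $\delta(f)$ removes the double-counting of constraints inside the shared bag, and this quantity is well-defined precisely because $w$ is symmetric. The algorithm outputs \yes\ iff $c_r[c_\emptyset] \leq m$ at the root $r$ (whose bag is empty by niceness).

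The per-node work at introduce, forget, and join nodes is $|D|^{\CO(k)}|X|^{\CO(1)}$, so the overall running time is $|D|^{\CO(k)}|X|^{\CO(1)}$ as required. Correctness is a routine induction over $T$, structurally identical to (but markedly simpler than) the argument in Theorem~\ref{thm:tw-binary-csp-size-constraints}, since no size-constraint counters or component partitions need to be tracked. The only mildly delicate point, and hence the main place that needs care rather than any genuine obstacle, is the join-node bookkeeping: one must be sure that every violated constraint contributes its weight exactly once to the final value, which is precisely what the subtraction of $\delta(f)$ together with the symmetry of $w$ achieves.
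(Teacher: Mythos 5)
Your proof is correct and follows exactly the approach the paper intends: the paper gives no explicit proof of Theorem~\ref{thm:tw-binary-csp-cost-bound}, merely remarking that it follows from a simple dynamic program along a nice tree decomposition analogous to (but simpler than) Theorem~\ref{thm:tw-binary-csp-size-constraints}, which is precisely what you implement. Your handling of the introduce step (using that all neighbors of the introduced vertex within $V_t$ lie in $X_t$) and of the join step (subtracting $\delta(f)$ to correct double-counting, with independence of the two subtree extensions giving min-of-sums equals sum-of-mins) supplies the standard details correctly.
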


\subsection{Permutation CSP Edge Deletion}
\label{subsec:perm-csp-edge-deletion}

Now, we are ready to prove the main algorithmic results from Section \ref{sec:perm-csp-results}.
In this subsection, we start by proving Theorem \ref{thm:perm-csp-edge-deletion-result}.
We restate the problem as well as the main result.

\medskip 
\defparproblem{\textsc{Perm CSP Edge Deletion}}{A Permutation-CSP-instance $\Gamma = (X,D,\CC)$, a set of undeletable edges $U \subseteq E(G(\Gamma))$, and an integer $k$}{$k$}
 {Is there a set $Z \subseteq E(G(\Gamma)) \setminus U$ such that $|Z| \leq k$ and $\Gamma - Z \coloneqq (X,D,\CC - Z)$ is satisfiable?}
\medskip

\begin{theorem}
 There is an algorithm solving \textsc{$H$-Minor-Free Perm CSP Edge Deletion} in time $(|X| + |D|)^{\CO(c_H \sqrt{k})}$ where $c_H$ is a constant depending only on $H$.
\end{theorem}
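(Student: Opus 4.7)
The plan is to combine the edge contraction decomposition for $H$-minor-free graphs with a guessing argument and a treewidth-based dynamic program. I would invoke Corollary~\ref{cor:contraction-decomposition-minor-edge} with $\ell \coloneqq \lceil\sqrt{k}\rceil$ on $G \coloneqq G(\Gamma)$ to obtain a partition $E(G) = E_1 \uplus \cdots \uplus E_\ell$ with $\tw(G/(E_i \setminus W)) \leq c_H \ell + |W|$ for every $i \in [\ell]$ and every $W \subseteq E_i$. For any hypothetical solution $Z \subseteq E(G) \setminus U$ of size at most $k$, there is some index $i^* \in [\ell]$ with $|Z \cap E_{i^*}| \leq \sqrt{k}$. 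Accordingly, I would enumerate all pairs $(i, Z_i)$ with $i \in [\ell]$ and $Z_i \subseteq E_i \setminus U$ of size at most $\sqrt{k}$; this contributes a factor $\ell \cdot |E(G)|^{\CO(\sqrt{k})} = (|X|+|D|)^{\CO(\sqrt{k})}$ to the running time and at least one guess satisfies $Z \cap E_i = Z_i$.

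For each guess, I would delete the constraints corresponding to $Z_i$ and then exploit the permutation structure to contract the remaining edges of $E_i$. Let $F \coloneqq E_i \setminus Z_i$ and let $C$ be a connected component of the graph $(V(G), F)$. Since every edge of $F$ must be satisfied under any valid assignment extending the guess, the permutation property implies that, once the value of a single representative variable $x_C \in C$ is fixed, the values of all other variables of $C$ are determined by composing permutations along any walk from $x_C$. I would pick a spanning tree of $G[C]$, compute for each $y \in C$ the permutation $\pi_{x_C \to y}$ obtained along the tree path, and then contract $C$ to a single variable $x_C$. Every original constraint $((y,z),R) \in \CC$ with $y \in C$ and $z \notin C$ (or $z$ in a different component $C'$ with representative $x_{C'}$) is replaced by the equivalent permutation constraint obtained by precomposing $R$ with $\pi_{x_C \to y}^{-1}$ (and symmetrically with $\pi_{x_{C'} \to z}^{-1}$). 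Non-tree edges inside $C$ and parallel constraints produced by the contraction are checked for consistency: if a cycle in $C$ composes to a non-identity permutation, or if two resulting constraints on the same pair of contracted variables are incompatible, the current guess is discarded. This produces a Permutation-CSP-instance $\Gamma'$ whose constraint graph is $G' = G/F$, together with an updated set of undeletable edges $U'$ consisting of the images of $U$ together with any edge of $G'$ that arose from an element of $F$.

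By Corollary~\ref{cor:contraction-decomposition-minor-edge} applied with $W = Z_i$, we have $\tw(G') \leq c_H \ell + |Z_i| = \CO(c_H \sqrt{k})$. To finish, I would solve $\Gamma'$ with the budget $k - |Z_i|$ using Theorem~\ref{thm:tw-binary-csp-cost-bound}: assign weight $1$ to every deletable constraint of $\Gamma'$ and weight $k+1$ to every constraint in $U'$, and check whether there is an assignment of cost at most $k - |Z_i|$. By the treewidth bound this step runs in time $|D|^{\CO(c_H \sqrt{k})}(|X|+|D|)^{\CO(1)}$. Combined with the outer guessing, the total running time is $(|X|+|D|)^{\CO(c_H \sqrt{k})}$ as required, and correctness follows because the correct guess $(i^*, Z \cap E_{i^*})$ yields an equivalent contracted instance with the same budget left over.

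The main obstacle in making this proof watertight is the verification of the contraction reduction: one must argue carefully that the resulting instance $\Gamma'$ together with $U'$ and the remaining budget $k - |Z_i|$ is a \yes-instance if and only if the original is, for the correct guess. This requires showing that the choice of representative and spanning tree in each component does not affect the set of satisfying assignments (up to bijection), that cycles of $F$ with non-identity permutation composition correctly force infeasibility of the guess, and that parallel constraints arising from contraction can be merged by intersecting their relations without losing the permutation property. All of these points are routine consequences of the definition of permutation constraints, but they must be stated precisely to justify the reduction from the original instance to $\Gamma'$.
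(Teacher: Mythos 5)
Your overall strategy is exactly the one the paper uses: invoke the edge-partition contraction decomposition with $\ell=\lceil\sqrt{k}\rceil$, guess $i$ and $Z\cap E_i$, contract the remaining $E_i$-edges to collapse the graph to treewidth $\CO(\sqrt{k})$, and then run a treewidth-based dynamic program that counts violated constraints via a weight function (Theorem~\ref{thm:tw-binary-csp-cost-bound}). Two of your internal steps, however, are not quite right.

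First, your handling of parallel constraints between two contracted representatives $x_C$ and $x_{C'}$ is flawed in both of the ways you describe it. You say either (a) discard the guess if the resulting constraints are incompatible, or (b) merge parallel constraints by intersecting their relations and assign each resulting constraint weight~$1$. Option (a) is wrong outright: these parallel constraints arise from edges of $G$ \emph{outside} $E_i$, so they are perfectly deletable, and an incompatible pair simply means any consistent assignment must delete at least one of them — that is a cost, not an infeasibility. Option (b) undercounts: if $C$ and $C'$ are linked by $m$ original constraints and an assignment to $(x_C,x_{C'})$ violates, say, two of them, the true deletion cost is~$2$, but a merged constraint of weight~$1$ charges only~$1$. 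Since the $\cost_\Gamma$ notion in Theorem~\ref{thm:tw-binary-csp-cost-bound} sums one $w(x,y,\alpha(x),\alpha(y))$ term per ordered pair (not per constraint), the fix is the one the paper uses: keep a single constraint per pair, but set $w(x_C, x_{C'}, a, a')$ to the \emph{number} of original constraints between $C$ and $C'$ violated when the spanning-tree-induced assignment is used with representatives $a, a'$. The same remark applies to edges with only one endpoint in a segment.

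Second, a smaller but related imprecision: you say ``pick a spanning tree of $G[C]$'' and check non-tree edges for cycle consistency. The components $C$ are components of $(V(G), F)$ with $F = E_i\setminus Z_i$, but $G[C]$ may contain additional non-$F$ edges. The spanning tree must be chosen inside $F$ (which is possible, since $(C, F|_C)$ is connected), because only $F$-edges and undeletable edges are forced — these are what fix the values inside $C$ once a representative is chosen. Non-$F$, deletable edges inside $C$ become self-loops at $x_C$ after contraction; a cycle through such an edge with non-identity composition should \emph{not} discard the guess, but should contribute to a unary weight at $x_C$ (the paper encodes this via $w(v_I,v_I,a,a)$ being the number of violated constraints of $\Gamma[I]$ under $\alpha_{I,a}$). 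Only a cycle consisting entirely of $F$- and undeletable edges composing to a non-identity permutation justifies discarding the guess.

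With those two corrections, your argument matches the paper's proof.
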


\begin{proof}
 Let $\Gamma = (X,D,\CC)$ be a Permutation-CSP-instance and let $G \coloneqq G(\Gamma)$.
 By combining constraints over the same variables, we may assume without loss of generality that, for every $uv \in E(G)$, there is at most one constraint over variables $u$ and $v$.
 Also suppose $Z \subseteq E(G) \setminus U$ is a solution, i.e., $|Z| \leq k$ and $\Gamma - Z = (X,D,\CC - Z)$ is satisfiable.
 Let $\ell \coloneqq \left\lceil\sqrt{k}\right\rceil$.
 Let $E_1,\dots,E_\ell$ be the partition of the edge $E(G)$ computed by Corollary \ref{cor:contraction-decomposition-minor-edge}.
 Then there is some $i \in [\ell]$ such that $|E_i \cap Z| \leq \sqrt{k}$.
 Hence, at an additional multiplicative cost of $\ell |E(G)|^{\sqrt{k}} = |X|^{\CO(\sqrt{k})}$, the algorithm can guess $i$ as well as the set $E_i \cap Z$.
 We set $\Gamma' \coloneqq (X,D,\CC - (E_i \cap Z))$ and $k' \coloneqq k - |E_i \cap Z|$.
 Hence, it suffices to determine whether $(\Gamma',k')$ is a \yes-instance.
 
 Let $I_1,\dots,I_m$ denote the connected components of $G[E_i \setminus Z]$ and let $\CI \coloneqq \{I_1,\dots,I_m\}$.
 (Recall that $G[E_i \setminus Z]$ denotes the graph with vertex set $\bigcup_{e \in E_i \setminus Z} e$ and edge set $E_i \setminus Z$.)
 Also, for each $I \in \CI$, we fix an arbitrary linear order $<_I$ on the set $I$.
 This gives rise to a segmented graph $(G,\CI)$.
 We have that $\tw(G/\CI) \leq c_H \sqrt{k}$ by Corollary \ref{cor:contraction-decomposition-minor-edge} for some constant $c_H$ that only depends on $H$.
 Now, the central idea is to translate $\Gamma'$ to a binary CSP instance over the graph $G/\CI$.
 
 Let $\Gamma^* \coloneqq (X^*,D,\CC^*)$ where $X^* \coloneqq V(G/\CI)$, and a set of constraints $\CC^*$ defined as follows.
 For $a \in D$ and $I \in \CI$, we define $\alpha_{I,a} \colon I \rightarrow D$ to be a satisfying assignment such that
 \begin{itemize}
  \item $\alpha_{I,a}(v) = a$ where $v$ denotes the minimal element of $I$ with respect to $<_I$, and
  \item $\alpha_{I,a}$ satisfies all constraints $((u,v),R) \in \CC$ for which $u,v \in I$ and $u = v$ or $uv \in (E_i \setminus Z) \cup U$.
 \end{itemize}
 If such an assignment exists, we say that \emph{$a$ is valid for $I$}.
 Observe that, if $a$ is valid for $I$, then the assignment $\alpha_{I,a}$ is unique since $\Gamma$ is a Permutation-CSP-instance. 
 To bound the number of violated constraints, we also define a weight function $w \colon X^2 \times D^2 \rightarrow \NN$.
 
 For every $I \in \CI$, we introduce a unary constraint $(v_I,R_I)$ to $\CC^*$ where
 \[R_I = \{a \in D \mid a \text{ is valid for } I\}.\]
 For every $a \in R_I$ we define $w(v_I,v_I,a,a)$ to be the number of edges which correspond to some constraint that is violated by $\alpha_{a,I}$ on $\Gamma[I]$.
 
 For every $v \in X \setminus \left(\bigcup_{I \in \CI}I \right)$ and unary constraint $(v,R) \in \CC$, the constraint $(v,R)$ is added to $\CC^*$.
 Also, we set $w(v,v,a,a) \coloneqq k' + 1$ (indicating that this constraint can not be violated).
 Now, let $((u,v),R) \in \CC$ be a binary constraint.
 If $u,v \in X \setminus \left(\bigcup_{I \in \CI}I \right)$ then we add the constraint $((u,v),R)$ to $\CC^*$.
 If $uv \in U$ then we set $w(u,v,a,b) \coloneqq k' + 1$ for all $a,b \in D$.
 Otherwise, $w(u,v,a,b) \coloneqq 1$.
 If $u \in X \setminus \left(\bigcup_{I \in \CI}I \right)$ and $v \in I$ for some $I \in \CI$ then we add the constraint
 \[\Big((u,v_I),\{(a,a') \mid a \in D, a' \in R_I, (a,\alpha_{a',I}(v)) \in R\}\Big).\]
 Also, $w(u,v_I,a,b)$ denotes the number of such constraints over variables $u$ and $v_I$ that are violated when assigning $a$ with $u$ and $b$ to $v_I$.
 If some undeletable constraint is violated, then we set $w(u,v_I,a,b) \coloneqq k' + 1$.
 If $v \in X \setminus \left(\bigcup_{I \in \CI}I \right)$ and $u \in I$ we proceed analogously.
 Finally, if $u \in I$ and $v \in I'$ for some distinct $I,I' \in \CI$ then we add the constraint
 \[\Big((v_I,v_{I'}),\{(a,a') \mid a \in R_I, a' \in R_{I'}, (\alpha_{a,I}(v),\alpha_{a',I'}(w)) \in R\}\Big).\]
 Again, $w(u,v_I,a,b)$ denotes the number of such constraints over variables $u$ and $v_I$ that are violated when assigning $a$ with $u$ and $b$ to $v_I$.
 If some undeletable constraint is violated, then we set $w(u,v_I,a,b) \coloneqq k' + 1$.
 
 Now, it is easy to see that $(\Gamma',k')$ is a \yes-instance if and only if there is an assignment $\alpha^*\colon X^* \rightarrow D$ of cost $\cost_{\Gamma^*}(\alpha^*) \leq k'$.
 By Theorem \ref{thm:tw-binary-csp-cost-bound}, the latter can be checked in time $|D|^{\CO(\tw(G/\CI))}|X|^{\CO(1)} = |D|^{\CO(c_H \sqrt{k})}|X|^{\CO(1)}$.
 In total, this gives a running time of 
 \[|X|^{\CO(\sqrt{k})} \cdot |D|^{\CO(c_H \sqrt{k})}|X|^{\CO(1)} = (|X| + |D|)^{\CO(c_H \sqrt{k})}.\]
\end{proof}

\subsection{Permutation CSP Deletion with Size Constraints}
\label{subsec:perm-csp-vertex-deletion}

Next, we turn to the proof of Theorem \ref{thm:perm-csp-deletion-result}.
Again, we start by restating the problem and the main result.
Also, we provide a variant of Theorem \ref{thm:perm-csp-deletion-result} for $H$-minor-free graphs assuming Conjecture \ref{conj:contraction-decomposition-minor-vertex} holds.

Let $\Gamma = (X,D,\CC)$ be a binary CSP instance.
A \emph{1cc-size constraint} is a triple $(w,q,\op)$ where $w \colon X \times D \rightarrow \NN$ is a weight function, $q \in \NN$, and $\op \in \{\leq, \geq\}$.
An assignment $\alpha\colon X \rightarrow D$ satisfies $(w,q,\op)$ on $\Gamma$ if
\[\Big(\sum_{v \in C} w(v,\alpha(v)), q\Big) \in \op\]
for every connected component $C$ of $G(\Gamma)$.

A \emph{Permutation-CSP-instance with $1$cc-size constraints} is a pair $(\Gamma,\CS)$ where $\Gamma = (X,D,\CC)$ is a Permutation-CSP-instance and $\CS$ is a set of 1cc-size constraints.
We say that $(\Gamma,\CS)$ is \emph{satisfiable} if there is an assignment $\alpha\colon X \rightarrow D$ which satisfies $\Gamma$ as well as every constraint $(w,q,\op) \in \CS$ on $\Gamma$.
Also, we define
\[\|\CS\| \coloneqq \prod_{(w,q,\op) \in \CS} (2 + q).\]

\medskip 
\defparproblem{\textsc{Perm CSP Vertex Deletion with Size Constraints}}{A Permutation-CSP-instance with $1$cc-size constraints $(\Gamma,\CS)$ with $\Gamma = (X,D,\CC)$, a set of undeletable variables $U \subseteq X$, and an integer $k$.}{$k$}
{Is there a set $Z \subseteq X \setminus U$ such that $|Z| \leq k$ and $(\Gamma[X\setminus Z],\CS)$ is satisfiable?}
\medskip

\begin{theorem}
 There is an algorithm solving \textsc{Planar Perm CSP Vertex Deletion with Size Constraints} in time $(|X| + |D| + \|\CS\|)^{\CO(\sqrt{k})}$.
 
 Moreover, assuming Conjecture \ref{conj:contraction-decomposition-minor-vertex}, there is an algorithm solving \textsc{$H$-Minor-Free Perm CSP Vertex Deletion with Size Constraints} in time $(|X| + |D| + \|\CS\|)^{\CO(c_H\sqrt{k})}$ where $c_H$ is a constant that only depends on $H$.
\end{theorem}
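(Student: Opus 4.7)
The plan is to combine the vertex contraction decomposition of Theorem~\ref{thm:contraction-decomposition-planar-vertex} with the bounded-treewidth algorithm for binary CSPs with size constraints from Theorem~\ref{thm:tw-binary-csp-size-constraints}, following Technique~2 of the introduction. Set $\ell \coloneqq \lceil \sqrt{k} \rceil$, let $G \coloneqq G(\Gamma)$, and invoke Theorem~\ref{thm:contraction-decomposition-planar-vertex} to obtain a partition $V(G) = V_1 \uplus \dots \uplus V_\ell$. For any hypothetical solution $Z$ of size at most $k$, some class $V_i$ satisfies $|V_i \cap Z| \le \sqrt{k}$, so I iterate over all $\ell$ choices of $i$ and all $n^{\CO(\sqrt{k})}$ choices of $W \coloneqq V_i \cap Z$, committing to delete $W$. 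Let $\CI$ denote the set of connected components of $G[V_i \setminus W]$; these \emph{segments} are necessarily disjoint from the remaining solution, and Theorem~\ref{thm:contraction-decomposition-planar-vertex} guarantees $\tw(G/\CI) = \CO(\sqrt{k} + |W|) = \CO(\sqrt{k})$.

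Next I translate the residual instance into a binary CSP $\Gamma^* = (X^*, D^*, \CC^*)$ whose constraint graph is $G^* \coloneqq G/\CI$, by extending the domain with a fresh ``deletion'' symbol: $D^* \coloneqq D \cup \{\bot\}$ and $X^* \coloneqq V(G^*)$. For each segment $I \in \CI$, the permutation property says that once an arbitrary linear order $<_I$ on $I$ has been fixed and some value $a \in D$ has been assigned to the $<_I$-minimum vertex, there is at most one extension $\alpha_{I,a}\colon I \to D$ that satisfies $\Gamma[I]$. Let $R_I \coloneqq \{a \in D \mid \alpha_{I,a} \text{ exists}\}$ and add the unary constraint $v_I \in R_I$, so $v_I$ is effectively undeletable and encodes the entire assignment on $I$. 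Every variable in $U \setminus V_i$ is likewise forbidden from taking $\bot$, while variables in $X \setminus (U \cup V_i)$ are free to take any value in $D^*$. Each original constraint $((u,v),R) \in \CC$ is lifted to $\Gamma^*$ by substituting each endpoint lying inside a segment $I$ with its representative $v_I$ and replacing the pair $(a,b) \in R$ accordingly, e.g.\ by $(\alpha_{I,a}(u), \alpha_{J,b}(v))$ if both endpoints fall in segments; the lifted relation is then extended to accept every pair containing $\bot$ on a deletable coordinate, mirroring the fact that a constraint incident to a deleted variable vanishes.

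Finally I install the deletion budget and the $1$cc-size constraints into Theorem~\ref{thm:tw-binary-csp-size-constraints}. A single global $\le$-constraint of budget $k - |W|$ weighting $\bot$ by $1$ and every $a \in D$ by $0$ ensures that at most $k - |W|$ further variables are deleted. For each original constraint $(w, q, \op) \in \CS$, I add an $(F, \CD)$-local size constraint with $F \coloneqq D \times D$ and $\CD \coloneqq \{D, \{\bot\}\}$, setting the weight of a segment representative $v_I$ at value $a \in D$ to $\sum_{u \in I} w(u, \alpha_{I,a}(u))$, preserving $w(v, a)$ on the remaining original variables, and defining $w^*(\cdot, \bot) \coloneqq 0$. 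Since each segment is connected inside $G - Z$, the connected components of $G^*$ after deleting the $\bot$-assigned vertices correspond bijectively to the components of $G - Z$ with weights preserved, so the lifted local constraint reproduces the original component-wise bound exactly. Applying Theorem~\ref{thm:tw-binary-csp-size-constraints} to $\Gamma^*$ at treewidth $\CO(\sqrt{k})$ costs $(|D| + \|\CS\|)^{\CO(\sqrt{k})}|X|^{\CO(1)}$ per guess, which multiplies with the $n^{\CO(\sqrt{k})}$ outer guesses to the claimed bound; the $H$-minor-free variant goes through verbatim by substituting Conjecture~\ref{conj:contraction-decomposition-minor-vertex} for Theorem~\ref{thm:contraction-decomposition-planar-vertex}. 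The main delicate point will be verifying precisely the bijective correspondence of surviving connected components on both sides of the contraction, which hinges on $F = D \times D$ making $\bot$-assigned vertices act as genuine disconnectors and on every segment being connected in $G - Z$ so that contraction neither merges nor splits the original components.
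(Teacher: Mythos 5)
Your proposal follows the same high-level route as the paper's proof: guess the decomposition class $V_i$ with small intersection with the solution, guess that intersection, contract the segments of $G[V_i \setminus W]$, encode everything as a binary CSP on the contracted graph with an extra ``deleted'' symbol, and solve by dynamic programming over treewidth $\CO(\sqrt{k})$ via Theorem~\ref{thm:tw-binary-csp-size-constraints}. The construction of $R_I$, the lifted constraints, the global budget constraint, and the $H$-minor-free variant all match the paper.

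There is, however, one genuine (if small) gap in the way you translate the $1$cc-size constraints. You set $F = D \times D$, so under any assignment $\alpha$ the vertices with $\alpha(v) = \bot$ become isolated in $H_\alpha$ and form singleton components. The $(F,\CD)$-local constraint is evaluated on \emph{every} component of $H_\alpha$, including these $\bot$-singletons, not only on the components corresponding to $G - Z$. You set $w^*(\cdot,\bot) = 0$ and leave $w_\CD$ unspecified (implicitly $0$), which makes each $\bot$-singleton have total weight $0$. That is fine for a $\le$-constraint, but a $\ge$-constraint with $q > 0$ is then violated on every $\bot$-singleton, so the translated instance becomes unsatisfiable even when the original is a \yes-instance. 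Your closing sentence that the lifted constraint ``reproduces the original component-wise bound exactly'' overlooks these spurious components. The fix is exactly what the paper does: make the $\bot$-singletons vacuously satisfy every constraint by weighting $\bot$ dependently on $\op$, e.g.\ set $w^*(v,\bot) \coloneqq q$ (or $w_\CD(\{\bot\}) \coloneqq q$) when $\op = {\ge}$ and $0$ when $\op = {\le}$. With that one-line repair, the argument goes through and the running-time analysis is correct.
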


\begin{proof}
 We focus on the first statement.
 Let $G = G(\Gamma)$ be the constraint graph and let $\ell \coloneqq \left\lceil\sqrt{k}\right\rceil$.
 Also suppose $Z \subseteq V(G) \setminus U = X \setminus U$ is a solution, i.e., $|Z| \leq k$ and  $(\Gamma[X\setminus Z],\CS)$ is satisfiable.
 Let $V_1,\dots,V_\ell$ be the partition of the vertex set $V(G)$ computed by Theorem \ref{thm:contraction-decomposition-planar-vertex}.
 Then there is some $i \in [\ell]$ such that
 \[|Z \cap V_i| \leq \frac{k}{\ell} \leq \sqrt{k}.\]
 Let $Z_i \coloneqq Z \cap V_i$.
 The algorithm first guesses $i \in [\ell]$ as well as the set $Z_i$ by iterating over all possible choices.
 Observe that the guessing only increases the running time by a factor of $|X|^{(k/\ell) + 1} = |X|^{\CO(\sqrt{k})}$ which does not pose any problems.
 So for the remainder of the proof, suppose that $i$ and $Z_i$ are fixed.
 
 Let $I_1,\dots,I_m$ be the connected components of $G[V_i \setminus Z_i]$ and define $\CI \coloneqq \{I_1,\dots,I_m\}$.
 Also, for each $I \in \CI$, we fix an arbitrary linear order $<_I$ on the set $I$.
 Overall, this gives rise to the segmented graph $(G,\CI)$.
 Now, the central idea is to translate the instance $(\Gamma,\CS)$ to a binary CSP instance with global and local size constraints over the graph $G/\CI$.
 
 Let $\Gamma^* \coloneqq (X^*,D^*,\CC^*)$ where $X^* \coloneqq V(G/\CI)$, $D^* \coloneqq D \uplus \{{\sf sol}\}$, and a set of constraints $\CC^*$ defined as follows.
 For $I \in \CI$ and $a \in D$, we define $\alpha_{I,a} \colon I \rightarrow D$ to be an assignment such that
 \begin{itemize}
  \item $\alpha_{I,a}(v) = a$ where $v$ denotes the minimal element of $I$ with respect to $<_I$,
  \item $\alpha_{I,a}$ satisfies all unary constraints $(v,R) \in \CC$ for which $v \in I$, and
  \item $\alpha_{I,a}$ satisfies all binary constraints $((v,w),R) \in \CC$ for which $v,w \in I$ and $vw \in E(G)$.
 \end{itemize}
 If such an assignment exists, we say that \emph{$a$ is valid for $I$}.
 Observe that, if $a$ is valid for $I$, then the assignment $\alpha_{I,a}$ is unique since $\Gamma$ is a Permutation-CSP-instance and $G[I]$ is connected.
 
 For every $I \in \CI$, we introduce a unary constraint $(v_I,R_I)$ to $\CC^*$ where
 \[R_I = \{a \in D \mid a \text{ is valid for } I\}.\]
 For every $v \in X \setminus \left(Z_i \cup \bigcup_{I \in \CI}I \right)$ and unary constraint $(v,R) \in \CC$, the constraint $(v,R \cup \{{\sf sol}\})$ is added to $\CC^*$.
 For every $v \in Z_i$, we introduce the unary constraint $(v,\{{\sf sol}\})$ to $\CC^*$.
 Also, for every $v \in U \setminus \left(\bigcup_{I \in \CI}I \right)$, we introduce the unary constraint $(v,D)$ to $\CC^*$.
 Now, let $((v,w),R) \in \CC$ be a binary constraint.
 If $v,w \in X \setminus \left(\bigcup_{I \in \CI}I \right)$ then we add the constraint
 \[\Big((v,w), R \cup (D \times \{{\sf sol}\}) \cup (\{{\sf sol}\} \times D) \cup (\{{\sf sol}\} \times \{{\sf sol}\})\Big)\]
 to $\CC^*$.
 If $v \in X \setminus \left(\bigcup_{I \in \CI}I \right)$ and $w \in I$ for some $I \in \CI$ then we add the constraint
 \[\Big((v,v_I),\{(a,a') \mid a \in D, a' \in R_I, (a,\alpha_{a',I}(w)) \in R\} \cup (\{{\sf sol}\} \times R_I)\Big)\]
 If $w \in X \setminus \left(\bigcup_{I \in \CI}I \right)$ and $v \in I$ we proceed analogously.
 Finally, if $v \in I$ and $w \in I'$ for some distinct $I,I' \in \CI$ then we add the constraint
 \[\Big((v_I,v_{I'}),\{(a,a') \mid a \in R_I, a' \in R_{I'}, (\alpha_{a,I}(v),\alpha_{a',I'}(w)) \in R\}\Big).\]
 
 Next, we define the size constraints for $\Gamma^*$.
 First, we add a global size constraint to force that at most $k$ variables $x \in X^*$ are assigned the value ${\sf sol}$.
 More precisely, we define $\CS_\glo \coloneqq \{(w_{{\sf sol}},k,\leq)\}$ where $w_{{\sf sol}} \colon X^* \times D^* \rightarrow \NN$ is defined via $w_{{\sf sol}}(v,{\sf sol}) = 1$ and $w_{{\sf sol}}(v,a) = 0$ for every $a \in D$ and $v \in X^*$.
 For the local constraints, we define $F \coloneqq \{dd' \mid d,d' \in D\} \subseteq (D^*)^2$ and $\CD \coloneqq \{D^*\}$.
 Also, we define $0_\CD$ to denote the function with domain $\CD = \{D^*\}$ that maps $D^*$ to value $0$.
 
 We translate every $1$cc-size constraint $(w,q,\op) \in \CS$ into an $F$-local constraint $(w^*,0_\CD,q,\op) \in \CS_\loc$ as follows.
 We set
 \begin{itemize}
  \item $w^*(v,{\sf sol}) \coloneqq \begin{cases}
                                     q &\text{if } \op = \geq\\
                                     0 &\text{if } \op = \leq
                                    \end{cases}$\;\;\; for all $v \in X^*$,
  \item $w^*(v,a) \coloneqq w(v,a)$ for all $a \in D$ and $v \in X^* \setminus V_\CI$, and
  \item $w^*(v_I,a) \coloneqq \begin{cases}
                               \sum_{u \in I} w(u,\alpha_{a,I}(u)) &\text{if $a$ is valid for $I$}\\
                               0                                   &\text{otherwise}
                              \end{cases}$ \;\;\;for all $a \in D$ and $v_I \in V_\CI$.
 \end{itemize}

 \begin{claim}
  $(\Gamma,\CS,U,k)$ is a \yes-instance if and only if $(\Gamma^*,\CS_\glo,\CS_\loc)$ is satisfiable.
  \proof
  First suppose $(\Gamma,\CS,U,k)$ is a \yes-instance, i.e.,
  there is a set $Z \subseteq X \setminus U$ of size $|Z| \leq k$ and an assignment $\alpha\colon X \setminus Z \rightarrow D$ that satisfies $\Gamma[X \setminus Z]$ as well as every $1$cc-size constraint $(w,q,\op) \in \CS$ on $\Gamma[X \setminus Z]$.
  Moreover, by the above assumptions, $Z \cap V_i = Z_i$.
  We construct an assignment $\alpha^*\colon X^* \rightarrow D^*$ defined via
  \begin{itemize}
   \item $\alpha^*(v_I) \coloneqq \alpha(v)$ where $v$ is the minimal element of $I$ with respect to $<_I$,
   \item $\alpha^*(v) \coloneqq {\sf sol}$ for every $v \in Z$, and
   \item $\alpha^*(v) \coloneqq \alpha(v)$ for all every $v \in X \setminus (Z \cup \bigcup_{I \in \CI}I)$.
  \end{itemize}
  We claim that $\alpha^*$ satisfies $(\Gamma^*,\CS_\glo,\CS_\loc)$.
  First of all, for every $I \in \CI$, $\alpha(v)$ is valid for $I$ where $v$ is the minimal element of $I$ with respect to $<_I$.
  In particular, the unary constraint $(v_I,R_I)$ is satisfied.
  Moreover,
  \[\alpha_{\alpha(v),I}(w) = \alpha(w)\]
  for every $w \in I$.
  This implies that $\alpha^*$ satisfies all constraints in the set $\CC^*$.
  Hence, it remains to consider the size constraints.
  Clearly, the global size constraint is satisfied $(w_{{\sf sol}},k,\leq)$ is satisfied since $|Z| \leq k$.
  So consider some constraint $(w,q,\op) \in \CS$ and its corresponding $(F,\CD)$-local constraint $(w^*,0_\CD,q,\op)$.
  Let $G^* \coloneqq G(\Gamma^*)$ be the constraint graph of $\Gamma^*$ and define $H^*$ to be the graph with $V(H^*) \coloneqq V(G^*) = X^*$ and
  \[E(H^*) \coloneqq \{vw \mid \alpha^*(v)\alpha^*(w) \in F\}\]
  Let $C_1^*,\dots,C_\ell^*$ denote the connected components of $H^*$.
  Fix $i \in [\ell]$.
  Since $\CD = \{D^*\}$ there is clearly some $D_i \in \CD$ such that $\alpha^*(v) \in D_i$ for every $v \in C_i^*$.
  So we need to show that
  \begin{equation}
   \label{eq:local-constraints-in-factor-graph-1cc}
   \Big(\sum_{v \in C_i^*} w^*(v,\alpha^*(v)),q\Big) \in \op.
  \end{equation}
  If there is some $v \in C_i$ such that $\alpha^*(v) = {\sf sol}$ then $C_i = \{v\}$ by definition of the set $F$.
  In particular, Equation \eqref{eq:local-constraints-in-factor-graph-1cc} is satisfied by definition of $w^*$.
  Otherwise, let $C_i \coloneqq \ext(C_i^*)$ be the corresponding set of vertices in $G$.
  Observe that $C_i$ is a connected component of $G - Z$.
  Also, for $I \in \CI$, define $a_I \in D$ such that $\alpha^*(v_I) = a_I$.
  Hence,
  \begin{align*}
   \sum_{v \in C_i^*} w^*(v,\alpha^*(v)) &= \Big(\sum_{v_I \in C_i^* \cap V_\CI} \sum_{u \in I} w(u,\alpha_{a_I,I}(u)) \Big) + \sum_{v \in C_i^* \setminus V_\CI} w(v,\alpha(v))\\
                                         &= \Big(\sum_{v_I \in C_i^* \cap V_\CI} \sum_{u \in I} w(u,\alpha(u)) \Big) + \sum_{v \in C_i^* \setminus V_\CI} w(v,\alpha(v))\\
                                         &= \sum_{v \in C_i} w(v,\alpha(v)).
  \end{align*}
  So Equation \eqref{eq:local-constraints-in-factor-graph-1cc} is satisfied since $\alpha$ satisfies the $1$cc-size constraint $(w,q,\op)$ on $\Gamma[X \setminus Z]$.
  
  \medskip
  
  For the other direction, let $\alpha^*\colon X^* \rightarrow D^*$ be a satisfying assignment for $(\Gamma^*,\CS_\glo,\CS_\loc)$.
  Let $Z \coloneqq \{v \in X^* \mid \alpha^*(v) = {\sf sol}\}$.
  First observe that $Z \subseteq X$ since segment vertices can not be assigned the value ${\sf sol}$ due to the unary constraints.
  Also, $Z \cap V_i = Z_i$ and $Z \cap U = \emptyset$ again by the unary constraints.
  Moreover, $|Z| \leq k$ by the global size constraint $(w_{{\sf sol}},k,\leq)$.
  
  We construct an assignment $\alpha\colon X \setminus Z \rightarrow D$ via
  \begin{itemize}
   \item $\alpha(v) \coloneqq \alpha^*(v)$ for every $v \in X \setminus \left(Z \cup \bigcup_{I \in \CI}I \right)$, and
   \item $\alpha(v) \coloneqq \alpha_{a_I,I}(v)$ for every $I \in \CI$ and $v \in I$ where $a_I \coloneqq \alpha^*(v_I)$.
  \end{itemize}
  It is not difficult to check that $\alpha$ satisfies $\Gamma[X \setminus Z]$.
  So let $(w,q,\op) \in \CS$ and $C$ a connected component of $G - Z$.
  Also, let $C^* \coloneqq \shr(C)$ be the corresponding set in $G/\CI$.
  Observe that $Z \cap I = \emptyset$ for every $I \in \CI$ and hence, $I \subseteq C$ for every $I \in \CI$ for which $C \cap I \neq \emptyset$.
  As before, let $G^* \coloneqq G(\Gamma^*)$ be the constraint graph of $\Gamma^*$ and define $H^*$ to be the graph with $V(H^*) \coloneqq V(G^*) = X^*$ and
  \[E(H^*) \coloneqq \{vw \mid \alpha^*(v)\alpha^*(w) \in F\}\]
  Then $C^*$ is a connected component of $H^*$.
  Using the same calculations as above, it follows that 
  \[\Big(\sum_{v \in C} w(v,\alpha(v)), q \Big) \in \op\]
  because
  \[\Big(\sum_{v \in C^*} w^*(v,\alpha^*(v)), q \Big) \in \op\]
  due to the fact that $\alpha^*$ satisfies the $(F,\CD)$-local constraint $(w^*,0_\CD,q,\op)$.
  \uend
 \end{claim}
 
 By the previous claim, we can now use the algorithm from Theorem \ref{thm:tw-binary-csp-size-constraints} to decide whether $(\Gamma,\CS,U,k)$ is a \yes-instance.
 This completes the description of the algorithm.
 It only remains to analyze its running time.
 Clearly, the instance $(\Gamma^*,\CS_\glo,\CS_\loc)$ can be computed in polynomial time given $(\Gamma,\CS,U,k,i,Z_i)$.
 Also, $\tw(G/\CI) = \CO(\ell + |Z_i|) = \CO(\sqrt{k})$ by Theorem \ref{thm:contraction-decomposition-planar-vertex}.
 So the algorithm from Theorem \ref{thm:tw-binary-csp-size-constraints} runs in time
 \[(|D^*| + \CO(\sqrt{k}) + k + \|\CS\|)^{\CO(\sqrt{k})}|X^*|^{\CO(1)} = (|D| + k + \|\CS\|)^{\CO(\sqrt{k})}|X|^{\CO(1)}.\]
 Overall, this result in a running time of
 \[|X|^{\CO(\sqrt{k})} \cdot (|D| + k + \|\CS\|)^{\CO(\sqrt{k})}|X|^{\CO(1)} = (|X| + |D| + \|\CS\|)^{\CO(\sqrt{k})}.\]
 
 For the second statement of the theorem, the algorithm is essentially identical, but it relies on Conjecture \ref{conj:contraction-decomposition-minor-vertex} instead of Theorem \ref{thm:contraction-decomposition-planar-vertex}.
 In this case $\tw(G/\CI) = \CO(c_H(\ell + |Z_i|)) = \CO(c_H\sqrt{k})$.
 Hence, the algorithm from Theorem \ref{thm:tw-binary-csp-size-constraints} runs in time
 \[(|D| + c_Hk + \|\CS\|)^{\CO(c_H\sqrt{k})}|X|^{\CO(1)} = (|D| + k + \|\CS\|)^{\CO(c_H'\sqrt{k})}|X|^{\CO(1)}\]
 for some suitable constant $c_H'$ that only depends on $c_H$.
\end{proof}

The next theorem provides a variant of the last theorem where we partition the vertex set into $k+1$ classes.
This result may also be of interest.

\begin{theorem}
 There is an algorithm solving \textsc{Planar Perm CSP Vertex Deletion with Size Constraints} in time $(|D| + k + \|\CS\|)^{\CO(k)}|X|^{\CO(1)}$.
 
 Moreover, assuming Conjecture \ref{conj:contraction-decomposition-minor-vertex}, there is an algorithm solving \textsc{$H$-Minor-Free Perm CSP Vertex Deletion with Size Constraints} in time $(|D| + k + \|\CS\|)^{\CO(c_H k)}|X|^{\CO(1)}$ where $c_H$ is a constant that only depends on $H$.
\end{theorem}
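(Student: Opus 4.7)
The plan is to follow essentially the same strategy as for the previous subexponential algorithm, but instead of partitioning the vertex set into $\sqrt{k}$ classes and guessing a small intersection with the solution, we partition it into $k+1$ classes so that pigeonhole guarantees a class disjoint from the solution entirely. This avoids the $|X|^{\CO(\sqrt{k})}$ cost of guessing $Z_i$ at the price of a larger treewidth bound.

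First I would invoke Theorem~\ref{thm:contraction-decomposition-planar-vertex} with $\ell \coloneqq k+1$ to obtain a partition $V(G) = V_1 \uplus \dots \uplus V_{k+1}$ of the constraint graph $G = G(\Gamma)$. For any hypothetical solution $Z \subseteq X \setminus U$ with $|Z| \le k$, pigeonhole gives some $i \in [k+1]$ with $V_i \cap Z = \emptyset$. The algorithm enumerates all $k+1$ candidates for $i$, which contributes only a linear factor. For each fixed $i$, set $\CI$ to be the set of connected components of $G[V_i]$ (each endowed with an arbitrary linear order) and form the segmented graph $(G, \CI)$. Applying Theorem~\ref{thm:contraction-decomposition-planar-vertex} with $W = \emptyset$ yields $\tw(G/\CI) \le 3(k+1) + 2 = \CO(k)$.

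Next I would translate the input instance $(\Gamma, \CS, U, k)$ into an equivalent binary CSP instance $(\Gamma^*, \CS_\glo, \CS_\loc)$ whose constraint graph is $G/\CI$, using verbatim the construction from the proof of Theorem~\ref{thm:perm-csp-deletion-result}: introduce the domain $D^* = D \uplus \{{\sf sol}\}$, collapse each segment $I$ to a variable $v_I$ whose domain $R_I$ records the values $a \in D$ for which the unique propagation assignment $\alpha_{I,a}$ satisfies all internal constraints of $\Gamma[I]$, rewrite binary constraints to act on representative variables via $\alpha_{I,a}$, and encode deletion by the special value ${\sf sol}$. The global size constraint $(w_{\sf sol}, k, \le)$ enforces that at most $k$ variables are deleted, while each $1$cc-size constraint $(w,q,\op) \in \CS$ is turned into an $(F,\CD)$-local constraint with $F = D^2$ and $\CD = \{D^*\}$ exactly as before. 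The equivalence of $(\Gamma, \CS, U, k)$ and $(\Gamma^*, \CS_\glo, \CS_\loc)$ is identical to the claim proved in the previous theorem; the only simplification here is that $V_i \cap Z = \emptyset$, so no vertices of $V_i$ are ever marked with ${\sf sol}$.

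Finally I would call the treewidth-based dynamic programming algorithm of Theorem~\ref{thm:tw-binary-csp-size-constraints} on $(\Gamma^*, \CS_\glo, \CS_\loc)$. Since $\tw(G/\CI) = \CO(k)$ and the newly introduced global constraint has $q = k$ and $|\CS_\glo| = 1$, the DP runs in time $(|D^*| + \CO(k) + \|\CS_\glo\| + \|\CS_\loc\|)^{\CO(k)}|X^*|^{\CO(1)} = (|D| + k + \|\CS\|)^{\CO(k)} |X|^{\CO(1)}$. Multiplying by the $k+1$ choices of $i$ leaves the bound unchanged. For the $H$-minor-free variant, the only change is to replace the invocation of Theorem~\ref{thm:contraction-decomposition-planar-vertex} by Conjecture~\ref{conj:contraction-decomposition-minor-vertex}, which gives $\tw(G/\CI) = \CO(c_H k)$ and hence running time $(|D| + k + \|\CS\|)^{\CO(c_H k)} |X|^{\CO(1)}$. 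There is no genuine obstacle beyond reusing the CSP translation already established; the mild care needed is only to observe that the translation simplifies when $V_i \cap Z = \emptyset$, so none of the arguments about deletable segment vertices from the proof of Theorem~\ref{thm:perm-csp-deletion-result} need to be reworked.
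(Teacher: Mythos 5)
Your proposal is correct and matches the paper's proof, which is exactly as terse as you'd expect: the paper simply runs the previous algorithm with $\ell = k+1$ so that pigeonhole yields a class $V_i$ disjoint from the solution, and then applies the same CSP translation and treewidth DP with $\tw(G/\CI) = \CO(k)$. You have filled in the same details the paper leaves implicit.
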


\begin{proof}
 The algorithm is exactly the same as in the previous theorem, but it computes a partition $V_1,\dots,V_\ell$ into $\ell \coloneqq k + 1$ blocks.
 This implies there is some $i \in [\ell]$ such that $Z \cap V_i = \emptyset$, and the entire algorithm runs in time $(|D| + k + \|\CS\|)^{\CO(k)}|X|^{\CO(1)}$.
 
 Again, the same analysis provides the statement for $H$-minor-free graphs assuming Conjecture \ref{conj:contraction-decomposition-minor-vertex} holds.
\end{proof}

\subsection{Permutation CSP on Two-Connected Components}
\label{subsec:2cc-perm-csp-vertex-deletion}

Finally, we turn to the proof of Theorem \ref{thm:2cc-perm-csp-deletion-result}.
The proof of this theorem is the most complicated one of this section since we need to incorporate the guessing of the bodies of the segments and we need to verify that all guesses are consistent.
As before, we start by recalling the problem at hand.

Let $\Gamma = (X,D,\CC)$ be a binary CSP instance.
A \emph{2cc-size constraint} is a pair $(w,q)$ where $w \colon X \times D \rightarrow \NN$ is a weight function, and $q \in \NN$.
Let $W \subseteq X$.
An assignment $\alpha\colon W \rightarrow D$ satisfies $(w,q)$ on $W$ if
\[\sum_{v \in W} w(v,\alpha(v)) \leq q.\]
Observe that, in comparison to 1cc-size constraints, we only allow to check for upper bounds on the weighted size of a set $W \subseteq X$.

A \emph{Permutation-CSP-instance with $2$cc-size constraints} is a pair $(\Gamma,\CS)$ where $\Gamma = (X,D,\CC)$ is a Permutation-CSP-instance and $\CS$ is a set of 2cc-size constraints.
For $W \subseteq X$, we say that $(\Gamma,\CS)$ is \emph{satisfiable on $W$} if there is an assignment $\alpha\colon W \rightarrow D$ that satisfies $\Gamma[W]$ as well as all 2cc-size constraints $(w,q) \in \CS$ on $W$.
As before, we define
\[\|\CS\| \coloneqq \prod_{(w,q,\op) \in \CS} (2 + q).\]

\medskip 
\defparproblem{\textsc{$2$Conn Perm CSP Vertex Deletion with Size Constraints}}{A Permutation-CSP-instance with $2$cc-size constraints $(\Gamma,\CS)$ with $\Gamma = (X,D,\CC)$, a set of undeletable variables $U \subseteq X$, and an integer $k$.}{$k$}
{Is there a set $Z \subseteq X \setminus U$ such that $|Z| \leq k$ and $(\Gamma,\CS)$ is satisfiable on $W$ for every $2$-connected component $W$ of the graph $G(\Gamma[X \setminus Z])$.}
\medskip

To prove Theorem \ref{thm:2cc-perm-csp-deletion-result}, we need to build an algorithm solving \textsc{Planar $2$Conn Perm CSP Vertex Deletion with Size Constraints} and which runs in time $(|X| + |D| + \|\CS\|)^{\CO(\sqrt{k})}$.
For this, we build on similar ides as in the previous subsection, but we need to incorporate information about the bodies of segments $I \in \CI$ as obtained in Theorem \ref{thm:2-cc-from-solution-subset}.
Let us start by recalling some notation.

Let $(G,\CI)$ be a segmented graph.
We define $V(\CI) \coloneqq \bigcup_{I \in \CI}I$ as the set of vertices appearing in some segment.
We always use $v_I$ to denote the vertex of $G/\CI$ that corresponds to segment $I$.
Also, $V_\CI \coloneqq \{v_I \mid I \in \CI\}$.
For $U \subseteq V(G)$ we use $\shr(U)$ to denote the set of all vertices $v \in V(G/\CI)$ that correspond to some $u \in U$.
In particular, $v_I \in \shr(U)$ if $U \cap I \neq \emptyset$.
In the other direction, for $U \subseteq V(G/\CI)$, we use $\ext(U)$ to denote the set of all vertices $v \in V(G)$ that correspond to some $u \in U$.
In particular, if $v_I \in U$ then $I \subseteq \ext(U)$.
If $U = \{u\}$ consists of a single vertex, then we also write $\shr(u)$ and $\ext(u)$ instead of $\shr(\{u\})$ and $\ext(\{u\})$. 

Next, let $Z \subseteq V(G) \setminus V(\CI)$ denote a ``solution set''.
Let $(\FT,\gamma)$ be the decomposition into the $2$-connected components of $G - Z$.
Recall that $\FT$ is a forest where each connected component is equipped with a root node.
For $t \in V(\FT)$ we dente by $\desc(t)$ the set of descendants of $t$, including $t$ itself.

We recall the following objects defined for every segment $I \in \CI$.
We have that $r_Z(I)$ denote the unique node $t \in V(\FT)$ which is closest to a root node $t_0$ and for which $I \cap \gamma(t) \neq \emptyset$.
Also, $B_Z(I) \coloneqq \{t \in V(\FT) \setminus \{r(I)\} \mid I \cap \gamma(t) \neq \emptyset\}$ denotes the \emph{body of $I$}.
Observe that $r_Z(I) \notin B_Z(I)$.
Moreover,  $\widehat{r}_Z(I) \coloneqq \gamma(r_Z(I))$ and $\widehat{B}_Z(I) \coloneqq \bigcup_{t \in R_Z(I)} \gamma(t)$ denote the corresponding sets of vertices in the graph $G$.
To simplify notation, we usually omit the index $Z$ if it is clear from context.

To obtain the information on the bodies of segments $I \in \CI$ we build on Corollary \ref{cor:segments-and-bodies-for-unknown-solution} which provides suitable segments and their bodies after guessing $\CO(\sqrt{k})$ many vertices.
Hence, we may assume that this information is additionally given to algorithm.
This is formulated by the following intermediate problem.

\medskip
\defpargarproblem{\textsc{$2$Conn Perm CSP Vertex Deletion with Bodies}}{A Permutation-CSP-instance with $2$cc-size constraints $(\Gamma,\CS)$ with $\Gamma = (X,D,\CC)$, a set of undeletable variables $U \subseteq X$, a set of segments $\CI$ of the graph $G(\Gamma)$, a function $\CF$ on domain $\CI$, and an integer $k$.}{$\tw(G/\CI) + \|\CF\|$}
{Is there a set $Z \subseteq X \setminus (V(\CI) \cup U)$ such that $|Z| \leq k$ and, for $(\FT,\gamma)$ being a rooted decomposition into $2$-connected components of $G(\Gamma) - Z$, $(\Gamma,\CS)$ is satisfiable on $\gamma(t)$ for all $t \in V(\FT)$.}
{If $(\Gamma,\CS,U,\CI,\CF,k)$ is a \yes-instance, then there is a solution such that additionally $(\widehat{B}_Z(I),\widehat{r}_Z(I) \cap I) \in \CF(I)$ for all $I \in \CI$.}
\medskip

Here, we define $\|\CF\| \coloneqq \sum_{I \in \CI}|\CF(I)|$.
To be able to use an algorithm for this problem as a subroutine later on, let us clarify the behavior in case the guarantee is not satisfied.
If $(\Gamma,\CS,U,k)$ is a \yes-instance, an algorithm may output either answer if the guarantee does not hold.
However, if $(\Gamma,\CS,U,k)$ is a \no-instance, an algorithm may never output \yes\ independent of whether the guarantee is satisfied or not. 

\begin{theorem}
 \label{thm:2cc-perm-csp-deletion-with-bodies}
 There is an algorithm solving \textsc{$2$Conn Perm CSP Vertex Deletion with Bodies} in time
 \[(|X| + |D| + \|\CS\| + \|\CF\|)^{\CO(\tw(G/\CI))}.\]
\end{theorem}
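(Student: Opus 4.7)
The plan is to reduce the given instance $(\Gamma, \CS, U, \CI, \CF, k)$ to an auxiliary binary CSP instance $(\Gamma^*, \CS_{\glo}, \CS_{\loc})$ whose constraint graph is $G/\CI$, and then invoke Theorem~\ref{thm:tw-binary-csp-size-constraints}. Since $\tw(G/\CI)$ is the intended exponent, the design goal is a domain $D^*$ polynomially bounded in $|X| + |D| + \|\CS\| + \|\CF\|$, together with a constraint system that faithfully encodes, in the value of each variable, the relevant piece of a hypothetical solution $Z$ and the induced rooted decomposition of $G - Z$ into $2$-connected components (possibly a coarsening, which suffices for $\leq$-type size constraints).

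Following the sketch in the technical overview, each segment variable $v_I$ will take a value $(a, B) \in D \times \CF(I)$, where $a$ is the value assigned to the minimum element of $\widehat{r}_Z(I) \cap I$ with respect to $<_I$, which by the permutation property determines a unique candidate assignment $\alpha_{I,a}$ on $I \cap \widehat{r}_Z(I)$ (as in Section~\ref{subsec:perm-csp-vertex-deletion}), and $B \in \CF(I)$ is the guessed body. Each non-segment variable $v \in X \setminus V(\CI)$ will take value either ${\sf sol}$ (meaning $v \in Z$) or a tuple $(a, I, B, c, \ell)$, encoding the intended value $a \in D$ of $v$, the segment $I \in \CI \cup \{\perp\}$ whose body contains $v$, that body $B \in \CF(I) \cup \{\perp\}$, the cut vertex $c \in V(G) \cup \{\perp\}$ joining the block of $v$ to its parent block, and the depth $\ell$ of this block in the block-cut tree. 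The total domain size is thus polynomial in $|X| + |D| + \|\CF\|$.

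Unary constraints will forbid ${\sf sol}$ on $U$, restrict the chosen body for each $v_I$ to $\CF(I)$, and ensure that $\alpha_{I,a}$ is a well-defined satisfying assignment on $G[I]$ together with the undeletable edges inside $I$. Binary constraints on each edge $uv \in E(G/\CI)$ will enforce (i) the underlying permutation constraint of $\Gamma$ on the corresponding edge in $G$, using the propagated value $\alpha_{I,a}(u)$ or $\alpha_{I,a}(v)$ at contracted segment endpoints, and (ii) that the block-identifier coordinates $(I, B, c, \ell)$ are consistent across the edge: either both endpoints agree on these coordinates, meaning they lie in the same block, or the edge crosses a parent--child transition in the block-cut tree, witnessed by a cut vertex $c$ and a drop of $1$ in $\ell$. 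A global $\leq$-size constraint of value $k$ will bound the number of ${\sf sol}$-labels, and each 2cc-size constraint $(w, q) \in \CS$ is lifted to an $(F, \CD)$-local size constraint in the sense of Theorem~\ref{thm:tw-binary-csp-size-constraints}: the partition $\CD$ groups $D^*$ by block-identifier, and $F$ collects the compatible pairs, so that every connected component of the induced graph $H_{\alpha^*}$ is a union of $2$-connected components of $G - Z$; as the overview explains, coarser decompositions preserve $\leq$-constraints.

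The principal obstacle is the consistent treatment of cut vertices, which simultaneously lie in several $2$-connected components yet must be represented by a single domain element. The convention will be that the stored value $a$ of a cut vertex $v$ is its value in the unique block closest to the root, and its value in each child block is only accessed implicitly, by propagation through the permutation constraints along the $2$-connected structure of the child block; the binary constraints must therefore be flexible enough to allow an implicit change of $v$'s value across an edge incident to $v$ whose other endpoint lies in a child block, signalled by the $c$- and $\ell$-fields of that neighbor. Once these constraints are in place and their equivalences verified, satisfying assignments of $(\Gamma^*, \CS_{\glo}, \CS_{\loc})$ correspond exactly to solutions $Z$ with $|Z| \leq k$ together with valid assignments on a rooted coarsening of the $2$-connected decomposition of $G - Z$ meeting the body-guess guarantee, and conversely. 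Since $|D^*|$, $\|\CS_{\glo}\|$, and $\|\CS_{\loc}\|$ are all polynomial in $|X| + |D| + \|\CS\| + \|\CF\|$, Theorem~\ref{thm:tw-binary-csp-size-constraints} then concludes the proof within the claimed time bound $(|X| + |D| + \|\CS\| + \|\CF\|)^{\CO(\tw(G/\CI))}$.
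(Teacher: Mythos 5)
Your overall architecture matches the paper's proof closely: encode, in the value of each vertex of $G/\CI$, enough information about a hypothetical solution $Z$ and a (coarsened) rooted block-cut decomposition of $G-Z$ that binary constraints and the global/local size-constraint machinery of Theorem~\ref{thm:tw-binary-csp-size-constraints} can certify correctness. The body guesses, the level field, the cut-vertex field, and the reduction of 2cc-size constraints to $(F,\CD)$-local constraints all appear in the paper in essentially the form you propose. However, there is a genuine gap at the point you yourself flag as ``the principal obstacle,'' and the gesture at an ``implicit'' resolution does not close it.

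Concretely, your domain for a non-segment vertex $v$ is $(a,I,B,c,\ell)$: the value $a$ of $v$ in its own block, plus block-identifying data. What is missing is a coordinate recording the value \emph{of the cut vertex $c=\cut(v)$ in $v$'s block} --- the paper calls this field $\alpha^+(v)$. This field cannot be left implicit, for two reasons. First, for correctness of the permutation constraints: if $u$ is the cut vertex and $v_1,v_2$ are two of its neighbors inside the same child block $C$, the two edges $uv_1,uv_2$ must be checked against a \emph{single} value of $u$ in $C$'s assignment, and that value is generally different from the stored $\alpha(u)$ (which belongs to $u$'s root-closest block). A constraint of the form ``there exists $a_u$ with $(a_u,\alpha(v_i))\in R_i$'' is not enough: the restriction of a satisfying assignment of $\Gamma[C]$ to $C\setminus\{u\}$ need not extend to $C$, precisely because $u$'s neighbors in $C$ impose conflicting demands on $u$. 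Storing $\alpha^+(v)$ and enforcing $\alpha^+(v_1)=\alpha^+(v_2)$ whenever $h(v_1)=h(v_2)$ on an edge makes these demands visible to the DP. Second, and more fatally for your size-constraint translation: the cut vertex $u$ belongs to each of its child blocks, and $w(u,\alpha_C(u))$ must be charged to the 2cc-size constraint of each such block $C$. Your partition $\CD$ keys only on the block-identifier coordinates, so the weight function $w_\CD$ has no way to know which value $u$ takes in $C$ and therefore cannot charge $u$'s contribution. The paper's $\CD^*$ keys on $(\alpha^+,h,\cut)$ exactly so that $w_\CD^*$ can add $w(\cut(v),\alpha^+(v))$ for each block; without this, the reduction under-counts and would accept infeasible solutions. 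Two smaller points you would discover when formalizing: you need to pre-filter $\CF(I)$ down to ``valid'' pairs $(B,J)$ (every 2-connected component of $G[B]$ satisfiable, $I\subseteq B\cup J$, connectivity conditions, etc.), and the level field must be defined with entire bodies collapsed to a single level, not the raw block-cut-tree depth, or else the constraint ``all of $B$ sits at $h(v_I)+1$'' is inconsistent when $B$ spans several depths.
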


\begin{proof}
 We give a polynomial-time algorithm that translates an instance of the problem above into an equivalent CSP-instance $\Gamma^* = (X^*,D^*,\CC^*)$ with global size constraints $\CS_\glo$ and $(F^*,\CD^*)$-local size constraints $\CS_\loc$ (for some suitable choice of $(F^*,\CD^*)$) such that $G(\Gamma^*) = (G(\Gamma))/\CI$ and
 \[|D^*| = (|X| + |D| + \|\CF\|)^{\CO(1)}.\]
 Let $G \coloneqq G(\Gamma)$.
 Towards this end, we describe a solution to $(\Gamma,\CS,U,k)$ by a series of functions defined on $X^* \coloneqq V(G/\CI)$.

 Let $I \in \CI$ be a segment and let $(B,J) \in \CF(I)$.
 We say that $(B,J)$ is \emph{valid} if
 \begin{enumerate}[label = (\Roman*)]
  \item\label{item:valid-region-1} $(\Gamma,\CS)$ is satisfiable on $C$ for every $2$-connected component $C$ of $G[B]$,
  \item\label{item:valid-region-2} $I \subseteq B \cup J$,
  \item\label{item:valid-region-3} every vertex $v \in B \cap J$ is a cut vertex in $G[B \cup J]$,
  \item\label{item:valid-region-4} $G[J]$ is connected, and
  \item\label{item:valid-region-5} $B \cap I \neq \emptyset$ and $G[B \cup I]$ is connected.
 \end{enumerate}
 We start by removing all elements $(B,J)$ from $\CF(I)$ that are not valid.
 Observe that this can be done in polynomial time.

 Now let $\widetilde{\CF} \coloneqq \bigcup_{I \in \CI} \{I\} \times \CF(I)$. 
 We describe a solution to the input instance by the following functions:
 \begin{itemize}
  \item $\alpha \colon X^* \rightarrow D \uplus \{{\sf sol}\} \uplus \widetilde{F}$ such that $\alpha(v_I) \neq {\sf sol}$ for every $I \in \CI$, and $\alpha(v) \neq {\sf sol}$ for every $v \in U \setminus V(\CI)$,
  \item $\alpha^+ \colon X^* \rightarrow D$,
  \item $h \colon X^* \rightarrow [0,|X|]$,
  \item $\cut \colon X^* \rightarrow X \cup \{\top\}$,
  \item $\rho \colon \CI \rightarrow \bigcup_{I \in \CI}\CF(I)$ such that $\rho(I) \in \CF(I)$ for all $I \in \CI$,
 \end{itemize}

 \begin{figure}[t]
  \begin{center}
   \begin{tikzpicture}
    \node at (0,0) {\svg{0.65\linewidth}{bodies-dp2}};
    
    \node at (7.2,2.5) {
     \begin{tabular}{l|lllll}
       vertex & $\alpha$ & $\alpha^+$ & $h$ & $\cut$ & $\rho$\\
       \hline
       $v_1$ & $\textsf{4}$ & $\textsf{1}$ & 4 & $v_3$ & -- \\
       $v_2$ & $\textsf{1}$ & $\textsf{2}$ & 4 & $v_3$ & -- \\
       $v_3$ & $\textsf{4}$ & $\textsf{1}$ & 3 & $v_4$ & -- \\
       $v_4$ & $(I,B,J)$ &  -- & 2 & $v_6$ & -- \\
       $v_7 $ & $\textsf{1}$ &  -- & 0 & $\top$ & -- \\
       $v_I$ & $\textsf{3}$ &  $\textsf{1}$ & 1 & $v_7$ & $(B,J)$ \\
       $v_9$ & $(I,B,J)$ &  -- & 2 & $v_8$ & -- \\
       $v_{11}$ & $(I,B,J)$ & -- & 2 & $v_{10}$ & -- \\
       $v_{12}$ & $\textsf{0}$ & $\textsf{2}$ & 3 & $v_{11}$ & -- \\
     \end{tabular}
    };
   \end{tikzpicture}
   \end{center}
   \caption{A permutation CSP instance remaining after removing variables $Z$, where each 2-connected component is satisfiable. The domain is $D=\{
\textsf{0},\textsf{1},\textsf{2},\textsf{3},\textsf{4}\}$. Each directed edge $x\to y$ represents the relation $y=x+1 \pmod{5}$. Each yellow vertex appears in a unary constraint forcing it to $\textsf{0}$. Thus each 2-connected component has a unique satisfying assignment (note that these assignments do not agree on the cut vertices). Set $B$ is shown in gray, segment $I$ is red, and set $J$ is highlighted in light red. We assume that $v_6$ is the minimal element of $J$ on $I$. The table show the values of certain vertices in a consistent tuple $(\alpha,\alpha^+,h,\cut,\rho)$. 
     }\label{fig:bodies-dp}
 \end{figure}
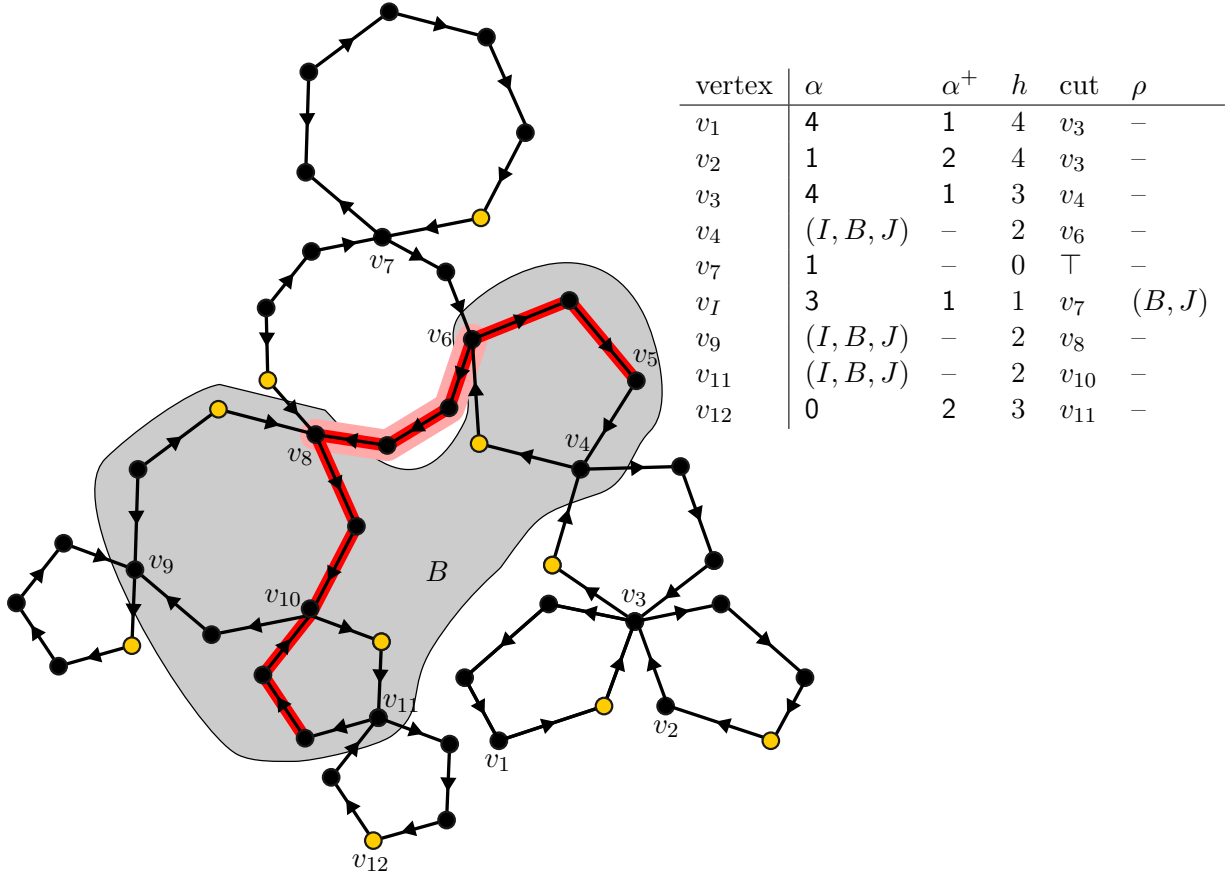
 
 These functions correspond to a satisfying assignment for $\Gamma^*$.
 Consequently, we set
 \[|D^*| \coloneqq (D \uplus \{{\sf sol}\} \cup \widetilde{\CF}) \times D \times [0,|X|] \times (X \cup \{\top\}) \times \Big(\bigcup_{I \in \CI} \CF(I)\Big).\]
 This allows us to associate every tuple of functions $(\alpha,\alpha^+,h,\cut,\rho)$ with some assignment $\alpha^*\colon X^* \rightarrow D^*$, and vice versa.
 Here, we can interpret $\CI$ (the domain of the function $\rho$) as a subset of $X^*$ in the natural way, and view $\rho$ as a function with domain $X^*$ where vertices $v \in V(G/\CI) \setminus V_\CI$ are assigned arbitrary values.
 
 Now, let us first describe how to translate a solution $Z \subseteq V(G) \setminus (V(\CI) \cup U)$ for $(\Gamma,\CS,U,k)$ into a satisfying assignment $(\alpha,\alpha^+,h,\cut,\rho)$ for $(\Gamma^*,\CS_\glo,\CS_\loc)$.
 (The construction of the instance $(\Gamma^*,\CS_\glo,\CS_\loc)$ is completed afterwards in such a way that $(\alpha,\alpha^+,h,\cut,\rho)$ indeed provides a satisfying assignment.)
 
 Suppose $Z \subseteq V(G) \setminus (V(\CI) \cup U)$ is a solution for $(\Gamma,\CS,U,k)$ such that $(\widehat{B}_Z(I),\widehat{r}_Z(I) \cap I) \in \CF(I)$ for all $I \in \CI$.
 Also, let $(\FT,\gamma)$ denote the corresponding rooted decomposition into $2$-connected components of $G - Z$.
 First, we fix $\alpha(v) \coloneqq {\sf sol}$ for every $v \in Z$ and define $\alpha^+(v)$ arbitrarily, $h(v) = 0$ and $\cut(v) = \top$.
 For all other vertices, the functions $(\alpha,\alpha^+,h,\cut,\rho)$ are defined as follows (see Figure \ref{fig:bodies-dp} for an example).
 
 \begin{description}
  \item[$\rho$: ] We first define $\rho(I) \coloneqq (\widehat{B}_Z(I),\widehat{r}_Z(I) \cap I)$ for every $I \in \CI$.
   It is easy to see that $(\widehat{B}_Z(I),\widehat{r}_Z(I) \cap I)$ is valid for every $I \in \CI$. 
  
  \item[$h$: ] For every $v \in V(G/\CI) \setminus Z$ we define $r(v) \coloneqq t$ for the unique highest node $t \in V(\FT)$ such that $\gamma(t) \cap \ext(v) \neq \emptyset$.
   Let $t_0,\dots,t_\ell = r(v)$ denote the unique path from a root of $\FT$ to $r(v)$.
   For every $I \in \CI$ it holds that $\{t_0,\dots,t_\ell\} \cap B_Z(I) = \{t_i,t_{i+1},\dots,t_j\}$ for some $i,j$, i.e., the intersection forms a subpath of $t_0,\dots,t_\ell$.
   Let $t_0',\dots,t_h'$ denote the path obtained from $t_0,\dots,t_\ell$ by contradicting all these subpaths to single vertices for every segment $I \in \CI$.
   We define $h(v) \coloneqq h$.
   
   (Intuitively speaking, $h(v)$ is the height of $t = r(v)$ in the tree $\FT$.
   However, for technical reasons, $h(v)$ is defined slightly differently where single body sets are viewed as a single $2$-connected component to determine the height; see Figure \ref{fig:bodies-dp}).
  \item[$\cut$: ] If $h(v) = 0$ we define $\cut(v) \coloneqq \top$.
   Otherwise, $\cut(v)$ is set to be the cut vertex to the parent of $r(v)$, i.e., $\cut(v) \coloneqq x$ where $x \in \gamma(t) \cap \gamma(s)$ is the unique element for $t = r(v)$ and $s$ being the parent of $t$ in $\FT$.
  \item[$\alpha$ and $\alpha^+$: ]
   Fix some $t \in V(\FT)$.
   We distinguish two cases.
   First suppose that $t \in B_Z(I)$ for some segment $I \in \CI$.
   For every $v \in V(G/\CI)$ such that $r(v) = t$ we set $\alpha(v) \coloneqq (I,\widehat{B}_Z(I),\widehat{r}_Z(I) \cap I)$.
   Also, we define $\alpha^+(v)$ arbitrarily.
 
   In the other case, there is no $I \in \CI$ such that $t \in B_Z(I)$.
   Let $\alpha_t \colon \gamma(t) \rightarrow D$ denote a satisfying assignment for $(\Gamma,\CS)$ on $\gamma(t)$.
   Let $v \in V(G/\CI)$ such that $r(v) = t$.
   If $v \notin V_\CI$ then we define $\alpha(v) \coloneqq \alpha_t(v)$.
   If $v = v_I \in V_\CI$ then $I \cap \gamma(t) = \widehat{r}_Z(I) \cap I \eqqcolon J$ by definition.
   Let $v'$ denote the minimal element contained in $J$ according to the linear order on $I$.
   We define $\alpha(v_I) \coloneqq \alpha_t(v')$.
   Observe that $\cut(v) \notin I$ for every $I \in \CI$ (as otherwise $t \in B_Z(I)$).
   We define $\alpha^+(v) \coloneqq \alpha_t(\cut(v))$ (if $\cut(v) = \top$ then $\alpha^+(v)$ is defined arbitrarily).
 \end{description}
 
 This completes the description of the tuple $(\alpha,\alpha^+,h,\cut,\rho)$.
 
 \medskip
 
 Now, let us complete the construction of $(\Gamma^*,\CS_\glo,\CS_\loc)$.
 Here, the basic idea is that every solution $Z$ for the input problem should provide a satisfying assignment $(\alpha,\alpha^+,h,\cut,\rho)$ as described above, and every satisfying assignment $(\alpha,\alpha^+,h,\cut,\rho)$ gives raise to a solution $Z \coloneqq \alpha^{-1}({\sf sol})$.
 We need to introduce the following additional notation.
 Let $I \in \CI$.
 Observe that $\rho(I) = (B,J)$ for some $B \subseteq V(G)$ and $J \subseteq I$.
 We denote by $\rho_1(I) = B$ the first component of $\rho(I)$, and $\rho_2(I) = J$ denotes the second component of $\rho(I)$. 

 Next, let $J \subseteq I$ such that $G[J]$ is connected and let $a \in D$.
 We define $\alpha_{J,a} \colon J \rightarrow D$ to the unique assignment for which $\alpha_{J,a}(v) = a$ where $v$ denotes the minimal element of $J$ (where elements of $J$ are ordered according to the linear order on $I$),
 and such that $\alpha_{J,a}$ is a satisfying assignment of $\Gamma[J]$.
 If there is no such assignment, we say that $a$ is \emph{invalid} for $J$ and $\alpha_{J,a}$ may be defined in an arbitrary manner.

 Now, we describe a series of conditions that enforces any tuple $(\alpha,\alpha^+,h,\cut,\rho)$ of function to have the intended meaning described above.
 In particular, the tuple of functions constructed from a solution set $Z$ as above satisfies all these conditions.
 We say that \emph{$(\alpha,\alpha^+,h,\cut,\rho)$ is consistent} if the following conditions are satisfied.
 For every $v \in V(G/\CI)$ such that $\alpha(v) \neq {\sf sol}$ it holds that
 \begin{enumerate}
  \item\label{item:csp-1} $h(v) = 0$ if and only if $\cut(v) = \top$,
  \item\label{item:csp-2a} if $\alpha(v) \in D$ and $v \notin V_\CI$ then $\alpha(v) \in R$ for unary every constraint $(v,R) \in \CC$,
  \item\label{item:csp-2b} if $\alpha(v) \in D$ and $v = v_I \in V_\CI$ then $\alpha(v)$ is valid for $\rho_2(I)$,
  \item\label{item:csp-2c} $\alpha^+(v) \in R$ for every unary constraint $(\cut(v),R) \in \CC$ (if $\cut(v) = \top$, this condition is ignored),
  \item\label{item:csp-4} if $\alpha(v) = (I,B,J)$ then $\cut(v) \in B \cap J$,
  \item\label{item:csp-5} if $\cut(v) \notin I$ for every $I \in \CI$, then $\alpha(v) \in D$,
  \item\label{item:csp-6} if $\cut(v) \in I$, then $\alpha(v) = (I,B,J)$ for some $(B,J) \in \CF(I)$,
  \item\label{item:csp-7} if $\alpha(v) = (I,B,J)$ and $v \notin V_\CI$ then $v \in B$,
  \item\label{item:csp-8} if $\alpha(v) = (I,B,J)$ and $v = v_{I'} \in V_\CI$ then $I \neq I'$ and $\rho_2(I') = B \cap I'$,
  \item\label{item:csp-9} $\cut(v) \notin \ext(v)$.
 \end{enumerate}
 Also, for every $vw \in E(G/\CI)$ such that $\alpha(v) \neq {\sf sol}$ it holds that
 \begin{enumerate}[resume]
  \item\label{item:csp-0a} if $v = v_I \in V_\CI$, $w \notin V_\CI$, $\rho(I) = (B,J)$ and $w \in B$ then $\alpha(w) = (I,B,J)$ and $h(w) = h(v) + 1$,
  \item\label{item:csp-0b} if $v = v_I \in V_\CI$, $w = v_{I'} \in V_\CI$, $\rho(I) = (B,J)$ and $I' \cap B \neq \emptyset$ then $\alpha(w) = (I,B,J)$ and $h(w) = h(v) + 1$,
  \item\label{item:csp-0c} if $\alpha(v) = (I,B,J)$, $w \notin V_\CI$ and $w \in B$ then $\alpha(w) = (I,B,J)$ and $h(w) = h(v)$,
  \item\label{item:csp-0d} if $\alpha(v) = (I,B,J)$, $w = v_{I'} \in V_\CI$, $I \neq I'$ and $I' \cap B \neq \emptyset$ then $\alpha(w) = (I,B,J)$ and $h(w) = h(v)$, 
  \item\label{item:csp-0e} if $\alpha(w) = (I,B,J)$ and $v = v_I$ then $\rho(I) = (B,J)$.
 \end{enumerate}
 (Intuitively speaking, these constraints enforce that, if $\rho(I) = (B,J)$, then the set of vertices $v$ that set $\alpha(v) \coloneqq (I,B',J')$ corresponds to $B$, and $\alpha(v) \coloneqq (I,B,J)$ for all such vertices.
 In other words, the constraints enforce that bodies are assigned consistently.)
 And finally, for every $vw \in E(G/\CI)$ such that $\alpha(v) \neq {\sf sol}$ and $\alpha(w) \neq {\sf sol}$ it holds that
 \begin{enumerate}[resume]
  \item\label{item:csp-10} $h(v) \in \{h(w) + 1, h(w), h(w) - 1\}$,
  \item\label{item:csp-11} if $h(v) = h(w)$ then
   \begin{enumerate}
    \item\label{item:csp-11a} $\cut(v) = \cut(w)$ and $\alpha^+(v) = \alpha^+(w)$,
    \item\label{item:csp-11a2} if $v = v_I \in V_\CI$ and $\rho(I) = (B,J)$ then $N(\ext(w)) \cap I \subseteq J$,
    \item\label{item:csp-11b} if $\alpha(v) \in \widetilde{F}$ then $\alpha(w) = \alpha(v)$
    \item\label{item:csp-11c} if $\alpha(v) \in D$ and $v,w \notin V_\CI$ then $(\alpha(v),\alpha(w))$ satisfies all constraints $((v,w),R) \in \CC$,
    \item\label{item:csp-11d} if $\alpha(v) \in D$ and $v = v_I \in V_\CI$, $w \notin V_\CI$ then $a \coloneqq \alpha(v)$ is valid for $\rho_2(I)$,
     and $(\alpha_{a,\rho_2(I)}(v'),\alpha(w))$ satisfies all constraints $((v',w),R) \in \CC$ where $v' \in \rho_2(I)$,
    \item\label{item:csp-11e} if $\alpha(v) \in D$ and $v = v_I \in V_\CI$, $w = w_{I'} \in V_\CI$ then $a \coloneqq \alpha(v)$ is valid for $\rho_2(I)$, $a' \coloneqq \alpha(w)$ is valid for $\rho_2(I')$,
     and $(\alpha_{a,\rho_2(I)}(v'),\alpha_{a',\rho_2(I')}(w'))$ satisfies all constraints $((v',w'),R) \in \CC$ where $v' \in \rho_2(I)$ and $w' \in \rho_2(I')$,
   \end{enumerate}
  \item\label{item:csp-12} if $h(w) = h(v) + 1$ then 
   \begin{enumerate}
    \item\label{item:csp-12a} $\cut(w) \in \ext(v)$,
    \item\label{item:csp-12b2} if $w = v_I \in V_\CI$ and $\rho(I) = (B,J)$ then $N(\ext(v)) \cap I \subseteq J$,
    \item\label{item:csp-12d} if $v,w \notin V_\CI$, then $(\alpha^+(w),\alpha(w))$ satisfies all constraints $((v,w),R) \in \CC$,
    \item\label{item:csp-12e} if $v \notin V_\CI$ and $w = v_I \in V_\CI$, then $a \coloneqq \alpha(w)$ is valid for $\rho_2(I)$,
     and $(\alpha^+(w),\alpha_{a,\rho_2(I)}(w'))$ satisfies all constraints $((v,w'),R) \in \CC$ where $w' \in \rho_2(I)$.
   \end{enumerate}
 \end{enumerate}
 (Here, we mostly ensure that all binary constraints are satisfied, and the root part $J$ for $\rho(I) = (B,J)$ interacts with the rest of graph in a consistent way, see \ref{item:csp-11a2} and \ref{item:csp-12b2}.)
 
 Now, the existence of a consistent tuple can be formulated as a binary CSP-instance $\Gamma^* = (X^*,D^*,C^*)$ such that $G(\Gamma^*) = (G(\Gamma))/\CI$ and
 \[|D^*| = (D \uplus \{{\sf sol}\} \cup \widetilde{\CF}) \times D \times [0,|X|] \times (X \cup \{\top\}) \times \Big(\bigcup_{I \in \CI} \CF(I)\Big).\]
 
 It remains to define the global and local size constraints.
 Since we need to define weight function on $X^* \times D^*$, we interpret elements $\bar a \in D^*$ as vectors of dimension $5$, and use $\bar a_i$ to denote the $i$-th entry.
 In this sense, $\bar a_1$ corresponds to the function $\alpha$, $\bar a_2$ corresponds to the function $\alpha^+$, $\bar a_3$ corresponds to the function $h$, $\bar a_4$ corresponds to the function $\cut$, and $\bar a_5$ corresponds to the function $\rho$.
 First, we add a global size constraint to force that at most $k$ variables $x \in X^*$ are assigned the value ${\sf sol}$ (under the function $\alpha$).
 More precisely, we define $\CS_\glo \coloneqq \{(w_{{\sf sol}},k,\leq)\}$ where $w_{{\sf sol}} \colon X^* \times D^* \rightarrow \NN$ is defined via
 \[w_{{\sf sol}}(v,\bar a) = \begin{cases}
                              1 &\text{if } \bar a_1 = {\sf sol}\\
                              0 &\text{otherwise}
                             \end{cases}\]
 for every $\bar a \in D^*$ and $v \in X^*$, where $\bar a_i$ denotes the $i$-th component of $\bar a$.
 
 Next, we turn to the local size constraints.
 We define
 \[F^* = \Big\{(\bar a,\bar a') \in (D^*)^2 \;\Big|\; \bar a_1 \in D, \bar a_1' \in D, \bar a_3 = \bar a_3'\Big\}.\]
 Also, we define the partition $\CD^*$ of $D^*$ in such a way that two elements $\bar a,\bar a' \in D^*$ end up in the same partition class of $\CD^*$ if and only if
 \[\bar a_2 = \bar a_2' \;\;\wedge\;\; \bar a_3 = \bar a_3' \;\;\wedge\;\; \bar a_4 = \bar a_4'.\]
 
 Now, consider a $2$cc-size constraint $(w,q) \in \CS$.
 We translate $(w,q)$ into a $(F^*,\CD^*)$-local size constraint $(w^*,w_\CD^*,q,\leq) \in \CS_\loc$ as follows.
 For $D' \in \CD^*$ there are $a \in D$, $h \in [0,|X|]$, and $v \in (X \cup \{\top\})$ such that $\bar a_2 = a$, $\bar a_3 = h$, and $\bar a_4 = v$ for all $\bar a \in D'$.
 We define
 \[w_\CD^*(D') \coloneqq \begin{cases}
                           0      &\text{if } v = \top\\
                           w(v,a) &\text{otherwise}
                          \end{cases}.\]
 Next, we define the function $w^*$.
 Let $v \in X^* = V(G/\CI)$ and $\bar a \in D^*$.
 If $v \notin V_\CI$ we define
 \[w^*(v,\bar a) \coloneqq \begin{cases}
                            w(v,\bar a_1) &\text{if } \bar a_1 \in D\\
                            0             &\text{otherwise}
                           \end{cases}.\]
 Also, if $v = v_I \in V_\CI$ we define
 \[w^*(v,\bar a) \coloneqq \begin{cases}
                            \sum_{u \in J}w(u,\alpha_{J,\bar a_1}(u)) &\text{if } \bar a_1 \in D, \bar a_5 = (B,J), \bar a_1 \text{ is valid for } J\\
                            0                                         &\text{otherwise}
                           \end{cases}.\]
 This completes the description of the instance $(\Gamma^*,\CS_\glo,\CS_\loc)$.
 
 \medskip\medskip
 
 First observe that it can be computed in time $(|D| + |X| + |\CS| + \|\CF\|)^{\CO(1)}$.
 Also, $|X^*| \leq |X|$, $\|\CS_\glo\| = k$, $\|\CS_\loc\| = \|\CS\|$, and
 \[|D^*| = (|X| + |D| + \|\CF\|)^{\CO(1)}.\]
 
 Now, we claim that the instance $(\Gamma^*,\CS_\glo,\CS_\loc)$ is equivalent to the input instance.
 More precisely, we claim that the following statements hold:
 \begin{enumerate}[label = (E.\arabic*)]
  \item\label{item:backward} If $(\Gamma^*,\CS_\glo,\CS_\loc)$ is satisfiable with a consistent tuple $(\alpha,\alpha^+,h,\cut,\rho)$ then $Z \coloneqq \alpha^{-1}({\sf sol})$ is a solution for the instance $(\Gamma,\CS,U,k)$.
  \item\label{item:forward} If $Z \subseteq V(G) \setminus (V(\CI) \cup U)$ is a solution for $(\Gamma,\CS,U,k)$ such that $(\widehat{B}_Z(I),\widehat{r}_Z(I) \cap I) \in \CF(I)$ for all $I \in \CI$, then $(\Gamma^*,\CS_\glo,\CS_\loc)$ is satisfiable.
 \end{enumerate}
 Assuming both statement are true, it suffices to check whether $(\Gamma^*,\CS_\glo,\CS_\loc)$ is satisfiable.
 By Theorem \ref{thm:tw-binary-csp-size-constraints} this can be done in time
 \[(|D^*| + \tw(G/\CI) + \|\CS_\glo\| + \|\CS_\loc\|)^{\CO(\tw(G/\CI))}|X^*|^{\CO(1)} = (|X| + |D| + \|\CS\| + \|\CF\|)^{\CO(\tw(G/\CI))}.\]
 Hence, it only remains to prove statements \ref{item:backward} and \ref{item:forward}.
 
 \medskip
 
 We start with statement \ref{item:forward}.
 Suppose $Z \subseteq V(G) \setminus (V(\CI) \cup U)$ is a solution for $(\Gamma,\CS,U,k)$ such that $(\widehat{B}_Z(I),\widehat{r}_Z(I) \cap I) \in \CF(I)$ for all $I \in \CI$.
 Following the translation described above, we obtain a tuple $(\alpha,\alpha^+,h,\cut,\rho)$.
 It can be easily checked that it is consistent.
 So let us consider the size constraints.
 First, it is easy to see that the global size constraint is satisfied since $|Z| \leq k$.
 So consider some $2$cc-size constraint $(w,q)$ and its counterpart $(w^*,w^*_\CD,q,\leq) \in \CS_\loc$.
 For $v \in V(G/\CI)$ let us denote $\alpha^*(v) \coloneqq (\alpha(v),\alpha^+(v),h(v),\cut(v),\rho(v))$ where $\rho(v) \in \bigcup_{I \in \CI}\CF(I)$ is arbitrary if $v \notin V_\CI$, and $\rho(v) \coloneqq \rho(I)$ if $v = v_I \in V_\CI$.
 Let $H$ be the graph with vertex set $V(H) \coloneqq V(G/\CI) = X^*$ and edge set
 \[E(H) \coloneqq \{vw \in E(G/\CI) \mid \alpha^*(v)\alpha^*(w) \in F\}.\]
 Let $C$ be a connected component of $H$ and let $v \in C$.
 If $\alpha(v) \notin D$ then $C = \{v\}$ by definition of the set $F$.
 Let $D' \in \CD^*$ such that $\alpha^*(v) \in D'$.
 We need to ensure that $w_\CD^*(D') + w^*(v,\alpha^*(v)) \leq q$.
 Since $\alpha(v) \notin D$ it holds that $w^*(v,\alpha^*(v)) = 0$ by definition.
 If $\cut(v) = \top$ then $w_\CD^*(D') = 0$.
 Otherwise, $w_\CD^*(D') = w(\cut(v),\alpha^+(v))$.
 In this case, $\alpha(v) \neq {\sf sol}$ (since $\cut(v) = \top$ for all $v \in Z$).
 Let $t \coloneqq r(v)$.
 Then $\cut(v) \in \gamma(t)$ and $\alpha^+(v) = \alpha_t(\cut(v))$ which implies that $w(\cut(v),\alpha^+(v)) \leq q$ since $\alpha_t$ satisfies $(\Gamma,\CS)$ on $\gamma(t)$. 
 
 So suppose that $\alpha(v) \in D$.
 By definition of $F$ and the tuple $(\alpha,\alpha^+,h,\cut,\rho)$, there is some $t \in V(\FT)$ such that $C \subseteq \shr(\gamma(t))$.
 First, this implies that there is some $D' \in \CD^*$ such that $\alpha^*(v) \in D'$ for every $v \in C$.
 Now, we distinguish two cases.
 First, suppose that $t$ is a root of $\FT$.
 Then $w^*_\CD(D') = 0$ by definition.
 Also,
 \begin{align*}
    &\sum_{v \in C} w^*(v,\alpha^*(v))\\
  = &\sum_{v \in C \setminus V_\CI} w(v,\alpha(v)) + \sum_{v = v_I \in C \cap V_\CI}\sum_{u \in \widehat{r}_Z(I) \cap I} w(u,\alpha_{\widehat{r}_Z(I) \cap I,\alpha(v)}(u))\\
  = &\sum_{v \in \gamma(t)} w(v,\alpha_t(v)) \leq q.
 \end{align*}
 
 Otherwise, $t$ is not a root of $\FT$.
 Let $s$ denote its parent.
 Now, $C \subseteq \shr(\gamma(t)) \setminus \shr(\gamma(s))$.
 Let $x$ denote the unique vertex in $\gamma(t) \cap \gamma(s)$.
 Then $w^*_\CD(D') = w(\cut(v),\alpha^+(v)) = w(x,\alpha_t(x))$ for some (and thus every) $v \in C$.
 So
 \begin{align*}
    &w^*(D') + \sum_{v \in C} w^*(v,\alpha^*(v))\\
  = &w^*(v,\alpha^*(v)) + \sum_{v \in C \setminus V_\CI} w(v,\alpha(v)) + \sum_{v = v_I \in C \cap V_\CI}\sum_{u \in \widehat{r}_Z(I) \cap I} w(u,\alpha_{\widehat{r}_Z(I) \cap I,\alpha(v)}(u))\\
  = &w^*(v,\alpha^*(v)) + \sum_{v \in \gamma(t) \setminus \gamma(s)} w(v,\alpha_t(v))\\
  = &\sum_{v \in \gamma(t)} w(v,\alpha_t(v)) \leq q.
 \end{align*}
 So $\alpha^*$ satisfies the $(F,\CD^*)$-local size constraint $(w^*,w^*_\CD,q,\leq) \in \CS_\loc$.
 Overall, this completes the proof of statement \ref{item:forward}.
 
 \medskip\medskip
 
 So let us turn to statement \ref{item:backward}.
 Consider a consistent tuple $(\alpha,\alpha^+,h,\cut,\rho)$ that satisfies $(\Gamma^*,\CS_\glo,\CS_\loc)$ and let $Z = \alpha^{-1}({\sf sol})$.
 First, observe that $Z \cap (V_\CI \cup U) = \emptyset$ which implies that $Z \subseteq V(G) \setminus (V(\CI) \cup U)$ as desired.
 Also, $|Z| \leq k$ by the global size constraints.
 For $v \in V(G/\CI)$ let us denote $\alpha^*(v) \coloneqq (\alpha(v),\alpha^+(v),h(v),\cut(v),\rho(v))$ where $\rho(v) \in \bigcup_{I \in \CI}\CF(I)$ is arbitrary if $v \notin V_\CI$, and $\rho(v) \coloneqq \rho(I)$ if $v = v_I \in V_\CI$.
 Let $H$ be the graph with vertex set $V(H) \coloneqq V(G/\CI) = X^*$ and edge set
 \[E(H) \coloneqq \{vw \in E(G/\CI) \mid \alpha^*(v)\alpha^*(w) \in F\}.\]
 
 Let $C$ be a $2$-connected component of $G - Z$.
 We define $h_C \coloneqq \max_{w \in \shr(C)} h(w)$ and let $C' \coloneqq \{w \in \shr(C) \mid h(w) = h_C\}$.
 Finally, let $A$ be a connected component of $(G/\CI)[C']$.
 By Property \ref{item:csp-11a} it holds that $\cut(v) = \cut(w)$ and $\alpha^+(v) = \alpha^+(w)$ for all $v,w \in A$.
 Let $\cut(A) \coloneqq \cut(v)$ and $\alpha^+(A) \coloneqq \alpha^+(v)$ for some (and thus every) $v \in A$.
 Recall that $\cut(A) \in V(G)$.
 By definition of the set $F$, we it follows that $A$ forms a connected set in the graph $H$.
 To argue that $(\Gamma,\CS)$ is satisfiable on $C$, we distinguish two cases.
 
 \medskip
 
 First, suppose that $\cut(A) \notin I$ for every $I \in \CI$.
 We first claim that $\shr(C) \subseteq A \cup \{\cut(A)\}$.
 If $\shr(C) \not\subseteq A$ then there is some $v \in N_{G/\CI}(A) \cap \shr(C)$ (since $(G/\CI)[\shr(C)]$ is connected).
 Pick $w \in A$ such that $vw \in E(G/\CI)$.
 By Property \ref{item:csp-10} and the definition of the set $A$, it holds that $h(v) + 1 = h(w)$.
 But now, Property \ref{item:csp-12a} implies that $\cut(w) = \cut(A) = v$.
 In other words, $N_{G/\CI}(A) \cap \shr(C) \subseteq \{\cut(A)\}$.
 But this implies that $\shr(C) \subseteq A \cup \{\cut(A)\}$ since otherwise $\cut(A)$ would form a cut vertex in $G[C]$.
 
 Next, let $v_I \in A \cap V_\CI$ be a segment vertex and suppose $\rho(I) = (B,J)$.
 We claim that $C \cap I \subseteq J$ or $C \subseteq I$.
 Suppose $C \nsubseteq I$.
 Let $w' \in N_G(I) \cap C$ and let $w$ be its corresponding vertex in $G/\CI$.
 Observe that $w \in \shr(C)$.
 If $w \in A$ then $h(w) = h(v_I)$ by definition of the set $A$.
 Hence, $N_G(w') \cap I \subseteq J$ by Condition \ref{item:csp-11a2}.
 In the other case, $w = w' = \cut(A)$ and $h(w) = h(\cut(A)) + 1$.
 So $N_G(w') \cap I \subseteq J$ by Condition \ref{item:csp-12b2}.
 In total, this means that $N_G(C \setminus I) \cap C \subseteq J$.
 Since $G[C]$ is $2$-connected, it follows that $C \cap I \subseteq J$ by Properties \ref{item:valid-region-2}-\ref{item:valid-region-4}.
 
 If there is some $I \in \CI$ such that $C \subseteq I$, then $(\Gamma,\CS)$ is satisfiable on $C$ by Properties \ref{item:valid-region-1} and \ref{item:valid-region-2}.
 So suppose that $C \nsubseteq I$ for every $I \in \CI$.
 Then
 \[C \subseteq \{\cut(A)\} \cup A \setminus V_\CI \cup \bigcup_{v_I \in A \cap V_\CI} \rho_2(I) \eqqcolon \widetilde{A}.\]
 Now, we construct a satisfying assignment $\widetilde{\alpha}\colon \widetilde{A} \rightarrow D$ for $\Gamma[\widetilde{A}]$ as follows.
 We set $\widetilde{\alpha}(\cut(A)) \coloneqq \alpha^+(A)$.
 For the other elements, first observe that $\alpha(v) \in D$ for all $v \in A$ by Condition \ref{item:csp-5}.
 For $v \in A \setminus V_\CI$ we define $\widetilde{\alpha}(v) \coloneqq \alpha(v)$.
 Finally, for $v_I \in V_\CI \cap A$ and $w \in \rho_2(I)$, we define
 \[\widetilde{\alpha}(w) \coloneqq \alpha_{\rho_2(I),\alpha(v_I)}(w).\]
 Observe that $\alpha(v_I)$ is valid for $\rho_2(I)$ by Condition \ref{item:csp-2b}.
 
 First consider a unary constraint $(v,R) \in \CC$ where $v \in \widetilde{A}$.
 Then $\widetilde{\alpha}(v) \in R$ by Properties \ref{item:csp-2a} - \ref{item:csp-2c}.
  
 So let $v,w \in \widetilde{A}$ such that $vw \in E(G)$.
 Consider a constraint $((v,w),R) \in \CC$.
 We distinguish several cases depending on which parts of $\widetilde{A}$ contain $v$ and $w$.
 If $v,w \in A \setminus V_\CI$ then $(\widetilde{\alpha}(v),\widetilde{\alpha}(w)) = (\alpha(v),\alpha(w)) \in R$ by Property \ref{item:csp-11c}.
 If $v \in \rho_2(I)$ for some $I \in \CI$ and $w \in A \setminus V_\CI$ (or the other way around) then the constraint $((v,w),R)$ is satisfied by Property \ref{item:csp-11d}.
 If $v,w \in \rho_2(I)$ for some $I \in \CI$ then the constraint $((v,w),R)$ is satisfied by Condition \ref{item:csp-2b}.
 If $v \in \rho_2(I)$ and $w \in \rho_2(I')$ for distinct $I,I' \in \CI$ then the constraint $((v,w),R)$ is satisfied by Condition \ref{item:csp-11e}.
 Finally, suppose one of the two elements $v,w$ is equal to $\cut(A)$.
 Without loss of generality assume that $v = \cut(A)$.
 Recall that $v \notin I$ for every $I \in \CI$.
 Also recall that $h(w') = h(v) + 1$ where $w' \in V(G/\CI)$ is the vertex corresponding to $w$.
 So $\widetilde{\alpha}$ satisfies the constraint $((v,w),R) \in \CC$ by Properties \ref{item:csp-12d} and \ref{item:csp-12e}.
 Overall, this shows that $\widetilde{\alpha}$ satisfies $\Gamma[\widetilde{A}]$.
 
 Next, consider some $2$cc size constraint $(w,q) \in \CS$ and its counterpart $(w^*,w_\CD^*,q,\leq) \in \CS_\loc$.
 Clearly,
 \[\sum_{v \in C} w(v,\widetilde{\alpha}(v)) \leq \sum_{v \in \widetilde{A}} w(v,\widetilde{\alpha}(v)).\]
 and it suffices to prove an upper bound for the second term.
 Recall that $A$ forms a connected set in $H$.
 Let $D' \in \CD^*$ denote the set such that $\alpha^*(v) \in D'$ for all $v \in A$.
 Then
 \[w_\CD^*(D') + \sum_{v \in A} w^*(v,\alpha^*(v)) \leq q.\]
 Since
 \[w_\CD^*(D') + \sum_{v \in A} w^*(v,\alpha^*(v)) = \sum_{v \in \widetilde{A}} w(v,\widetilde{\alpha}(v))\]
 by definition of $(w^*,w_\CD^*)$, it follows that $\widetilde{\alpha}$ satisfies the size constraint $(w,q) \in \CS$ on $C$.
 This completes the first case.
 
 \medskip
 
 For the second case, suppose that $\cut(A) \in I$ for some $I \in \CI$.
 We start by proving the following useful claim.
 
 \begin{claim}
  \label{claim:region-vertices}
  For every $I \in \CI$ and $(B,J) \coloneqq \rho(I)$ it holds that
  \[B = \{v \in V(G) \setminus V(\CI) \mid \alpha(v) = (I,B,J)\} \cup \bigcup_{I' \in \CI \colon \alpha(v_{I'}) = (I,B,J)} \rho_2(I').\]
  Moreover, $h(w) = h(v_I) + 1$ for every $w \in \shr(B) \setminus \{v_I\}$.
  \proof
  The inclusion ``$\supseteq$'' follows directly from Conditions \ref{item:csp-7} and \ref{item:csp-8}.
  For the other direction, first observe that $G[B \cup I]$ is connected by Condition \ref{item:valid-region-5}.
  Hence, $(G/\CI)[\shr(B)]$ is connected as well.
  Note that $v_I \in \shr(B)$ by Property \ref{item:valid-region-5}.
  But now,  the inclusion ``$\subseteq$'' follows from Conditions \ref{item:csp-0a} - \ref{item:csp-0d}.
  The same arguments also imply that $h(w) = h(v_I) + 1$ for every $w \in \shr(B) \setminus \{v_I\}$.
  \uend
 \end{claim}
 
 Now, fix $I \in \CI$ such that $\cut(A) \in I$ and let $(B,J) \coloneqq \rho(I)$.
 We argue that $C \subseteq B$.
 First observe that this implies $(\Gamma,\CS)$ is satisfiable on $C$ by Property \ref{item:valid-region-1}.
 Let
 \[A' \coloneqq \{v \in \shr(C) \mid \alpha(v) = (I,B',J') \text{ for some } (B',J') \in \CF(I)\}.\]
 First note that $A \subseteq A'$ by Condition \ref{item:csp-6}.
 Let $v \in A'$ and suppose $\alpha(v) = (I,B',J')$.
 We first argue that $(B,J) = (B',J')$.
 By Condition \ref{item:valid-region-5} it holds that $G[B' \cup I]$ is connected and $I \cap B' \neq \emptyset$.
 This means that $\alpha(v) = (I,B',J')$ for all $v \in \shr(B') \setminus \{v_I\}$ by Conditions \ref{item:csp-0c} and \ref{item:csp-0d}.
 But then $\rho(I) = (B',J')$ by Condition \ref{item:csp-0e}.
 So $(B,J) = (B',J')$.
 In other words, $\alpha(v) = (I,B,J)$ for every $v \in A'$.
 Also, Claim \ref{claim:region-vertices} implies that $A' \subseteq \shr(B)$ and $h(v) = h(v_I) + 1$ for every $v \in A'$.
 In particular, $v_I \notin A'$ (see also Condition \ref{item:csp-8}).
 
 Now, let $w \in N_{G/\CI}(A') \cap \shr(C)$ and let $v \in A'$ such that $vw \in E(G/\CI)$.
 Then $h(v) = h(w) + 1$ by the definition of the set $A$ and Condition \ref{item:csp-10}.
 Also, $\cut(v) \in I$ by Condition \ref{item:csp-4}.
 So $w = v_I$ by Condition \ref{item:csp-12a}.
 
 Next, let $v_{I'} \in V_\CI \cap A'$ be a segment vertex, and suppose that $\rho(I') = (B',J')$.
 We claim that $C \cap I' \subseteq J'$.
 Let $N_G(I') \cap C$ and let $w \in N_{G/\CI}(v_{I'}) \cap \shr(C)$ be the corresponding vertex in $G/\CI$.
 Then $h(w) = h(v_{I'})$ or $h(w) = h(v_{I'}) + 1$ using Condition \ref{item:csp-10} and the fact that $h(v_{I'}) = h_C$.
 So $N(w) \cap I' \subseteq J'$ by Conditions \ref{item:csp-11a2} and \ref{item:csp-12b2}.
 This means that $N_G(C \setminus I') \cap C \subseteq J'$.
 Since $G[C]$ is $2$-connected, it follows that $C \cap I' \subseteq J'$ by Properties \ref{item:valid-region-2} - \ref{item:valid-region-4}.
 
 Now, let $C' \coloneqq \{v \in C \mid \shr(v) \in A'\} \cup (C \cap I)$.
 Then $C' \subseteq B \cup I$ using Claim \ref{claim:region-vertices}.
 Now, let $w' \in C \setminus C'$ be some vertex such that $N_G(w') \cap C' \neq \emptyset$.
 The arguments above imply that $N_G(w') \cap C' \subseteq I$.
 Let $w \in V(G/\CI)$ be the vertex corresponding to $w'$.
 Observe that $wv_I \in E(G/\CI)$.
 If $h(w) = h(v_I) + 1$ then $\alpha(w) = (I,B',J')$ for some $(B',J') \in \CF(I)$ by Conditions \ref{item:csp-12a} and \ref{item:csp-6}.
 So $w \in A'$ and $w \in C'$, contradicting the definition of $w'$.
 So $h(w) = h(v_I)$ or $h(w) + 1 = h(v_I)$.
 But now, using the same arguments as before, we conclude that $N_G(w') \subseteq J$.
 This implies that $C \subseteq B$ using Properties \ref{item:valid-region-2} - \ref{item:valid-region-4} and the fact that $G[C]$ is $2$-connected.
 Hence, $(\Gamma,\CS)$ is satisfiable on $C$ by Property \ref{item:valid-region-1}.
\end{proof}

Now, we can combine the previous theorem with Corollary \ref{cor:segments-and-bodies-for-unknown-solution} to obtain an algorithm for \textsc{Planar $2$Conn Perm CSP Vertex Deletion with Size Constraints}.

\begin{theorem}
 \label{thm:2cc-perm-csp-deletion}
 There is an algorithm solving \textsc{Planar $2$Conn Perm CSP Vertex Deletion with Size Constraints} in time $(|X| + |D| + \|\CS\|)^{\CO(\sqrt{k})}$.
 
 Moreover, assuming Conjecture \ref{conj:contraction-decomposition-minor-vertex}, there is an algorithm solving \textsc{$H$-Minor-Free $2$Conn Perm CSP Vertex Deletion with Size Constraints} in time $(|X| + |D| + \|\CS\|)^{\CO(c_H\sqrt{k})}$ for some constant $c_H$ that only depends on $H$.
\end{theorem}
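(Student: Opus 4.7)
The plan is to reduce the problem to the variant with known bodies (Theorem~\ref{thm:2cc-perm-csp-deletion-with-bodies}) by enumerating the short vertex sequence that drives Corollary~\ref{cor:segments-and-bodies-for-unknown-solution}. Concretely, given an input $(\Gamma,\CS,U,k)$ with constraint graph $G \coloneqq G(\Gamma)$, the algorithm fixes a length $r = c\sqrt{k}$ matching the constant hidden in Corollary~\ref{cor:segments-and-bodies-for-unknown-solution} and iterates over every sequence $\bar v = (v_1,\dots,v_r) \in V(G)^r$. For each such $\bar v$ it runs the polynomial-time algorithm of Corollary~\ref{cor:segments-and-bodies-for-unknown-solution} on input $(G,k,\bar v)$ to obtain a family of segments $\CI_{\bar v}$ and a body-candidate function $\CF_{\bar v}$, and then feeds $(\Gamma,\CS,U,\CI_{\bar v},\CF_{\bar v},k)$ into the algorithm of Theorem~\ref{thm:2cc-perm-csp-deletion-with-bodies}. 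The overall output is \yes\ if any invocation returns \yes, and \no\ otherwise.

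For correctness, suppose first that the original instance is a \no-instance. Then for every $\bar v$, the set $Z$ whose existence is asserted by the body-version problem would also witness that the original instance is a \yes-instance (any valid $Z \subseteq X \setminus (V(\CI_{\bar v}) \cup U)$ with $(\Gamma,\CS)$ satisfiable on every $2$-connected component of $G(\Gamma[X\setminus Z])$ is, by definition, a solution to the input). By the behavior of Theorem~\ref{thm:2cc-perm-csp-deletion-with-bodies} stipulated in the text above it, the subroutine never outputs \yes\ on a \no-instance, so the overall algorithm also outputs \no. Conversely, if $Z$ of size at most $k$ is a solution to $(\Gamma,\CS,U,k)$, then by Corollary~\ref{cor:segments-and-bodies-for-unknown-solution} there is some sequence $\bar v$ of length $r = \CO(\sqrt{k})$ for which $V(\CI_{\bar v}) \cap Z = \emptyset$, $\tw(G/\CI_{\bar v}) = \CO(\sqrt{k})$, and $(\widehat{B}_Z(I),\widehat{r}_Z(I) \cap I) \in \CF_{\bar v}(I)$ for every $I \in \CI_{\bar v}$. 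In particular $Z$ is a solution to the body-version instance and the guarantee is satisfied, so the subroutine reports \yes\ for this $\bar v$.

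For the running time, there are $n^{\CO(\sqrt{k})}$ candidate sequences $\bar v$, and computing $(\CI_{\bar v},\CF_{\bar v})$ takes polynomial time. Corollary~\ref{cor:segments-and-bodies-for-unknown-solution} furthermore ensures $|\CF_{\bar v}(I)| = n^{\CO(1)}$ for each $I$, whence $\|\CF_{\bar v}\| = n^{\CO(1)}$, and also guarantees $\tw(G/\CI_{\bar v}) = \CO(\sqrt{k})$. Therefore each call to Theorem~\ref{thm:2cc-perm-csp-deletion-with-bodies} runs in time
\[(|X| + |D| + \|\CS\| + \|\CF_{\bar v}\|)^{\CO(\tw(G/\CI_{\bar v}))} \;=\; (|X| + |D| + \|\CS\|)^{\CO(\sqrt{k})}.\]
Multiplying by the number of guesses yields the claimed bound $(|X| + |D| + \|\CS\|)^{\CO(\sqrt{k})}$ for the planar case. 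The $H$-minor-free statement follows by exactly the same argument, invoking Corollary~\ref{cor:segments-and-bodies-for-unknown-solution-minors} in place of Corollary~\ref{cor:segments-and-bodies-for-unknown-solution}; the length of the guessed sequence becomes $\CO(c_H\sqrt{k})$ and $\tw(G/\CI_{\bar v}) = \CO(c_H\sqrt{k})$, yielding the bound $(|X| + |D| + \|\CS\|)^{\CO(c_H\sqrt{k})}$.

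There is no substantial obstacle here: all the technical content has been pre-packaged in Theorem~\ref{thm:2cc-perm-csp-deletion-with-bodies} and Corollary~\ref{cor:segments-and-bodies-for-unknown-solution}. The only point requiring care is the interpretation of the guarantee: we must verify that a spurious \yes\ from the subroutine on a \no-instance cannot occur (handled by the stipulated soundness above) and that \emph{some} guess provides segments $\CI_{\bar v}$ with $V(\CI_{\bar v}) \cap Z = \emptyset$ so that the body-version subroutine is given a semantically faithful instance — this is exactly what Corollary~\ref{cor:segments-and-bodies-for-unknown-solution} supplies.
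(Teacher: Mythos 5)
Your proposal follows the same route as the paper: enumerate all length-$r$ vertex sequences, run the algorithm of Corollary~\ref{cor:segments-and-bodies-for-unknown-solution} on each to produce $(\CI_{\bar v},\CF_{\bar v})$, hand the result to Theorem~\ref{thm:2cc-perm-csp-deletion-with-bodies}, and return \yes\ if any call succeeds. The correctness argument and the $H$-minor-free variant are handled the same way as in the paper.

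One small imprecision in the running-time analysis: you write that Corollary~\ref{cor:segments-and-bodies-for-unknown-solution} ``guarantees $\tw(G/\CI_{\bar v}) = \CO(\sqrt{k})$''. As stated, the corollary guarantees that bound only for at least one \emph{good} sequence $\bar v$ matching a hidden solution $Z$; it makes no promise about the treewidth for an arbitrary guess, and without such a promise a single bad guess could in principle blow up the inner call. The paper closes this loophole by explicitly testing whether $\tw(G/\CI_{\bar v}) = \CO(\sqrt{k})$ and skipping the guess otherwise; you should do the same (or, alternatively, argue from the internals of Corollary~\ref{cor:segments-and-bodies-for-unknown-solution}---that $\CI_{\bar v}$ is always obtained from Theorem~\ref{thm:contraction-decomposition-planar-vertex} by removing at most $r-2$ vertices from one block $V_i$---that the bound in fact holds for every output, but that goes beyond the stated interface of the corollary).
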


\begin{proof}
 Consider the first part of the theorem.
 Let $(\Gamma,\CS,U,k)$ be the input instance.
 Also, let $G \coloneqq G(\Gamma)$ be the constraint graph.
 We choose $r = \CO(\sqrt{k})$ according to Lemma \ref{cor:segments-and-bodies-for-unknown-solution} and iterate over all tuples $(v_1,\dots,v_r) \in (V(G))^r$.
 For $(v_1,\dots,v_r) \in (V(G))^r$, we compute a set of segments $\CI$ and a function $\CF$ using the algorithm from \ref{cor:segments-and-bodies-for-unknown-solution}.
 If $\tw(G/\CI) = \omega(\sqrt{k})$, the algorithm moves to the next tuple.
 Otherwise, we apply Theorem \ref{thm:2cc-perm-csp-deletion-with-bodies} to the input $(\Gamma,\CS,U,\CI,\CF,k)$.
 If the algorithm returns \yes\, then we also output \yes.
 Otherwise, we move to the next tuple $(v_1,\dots,v_r) \in (V(G))^r$.
 If we never outputs \yes\ for any tuple $(v_1,\dots,v_r) \in (V(G))^r$, then we return \no.
 
 Let us first analyse the running time of the algorithm.
 Clearly, in each iteration, $|\CI| \leq |X|$ and $|\CF| = |X|^{\CO(1)}$ by Corollary \ref{cor:segments-and-bodies-for-unknown-solution}.
 Also, the computation of $(\CI,\CF)$ requires time $|X|^{\CO(1)}$.
 Also, we can determine whether $\tw(G/\CI) = \CO(\sqrt{k})$ in time $2^{\CO(\sqrt{k})}|X|^{\CO(1)}$.
 Next, the algorithm from Theorem \ref{thm:2cc-perm-csp-deletion-with-bodies} runs in time
 \[(|X| + |D| + \|\CS\| + \|\CF\|)^{\CO(\tw(G/\CI))} = (|X| + |D| + \|\CS\|)^{\CO(\sqrt{k})}.\]
 Overall, this gives a running time of
 \[|X|^r \Big(|X|^{\CO(1)} + 2^{\CO(\sqrt{k})}|X|^{\CO(1)} + (|X| + |D| + \|\CS\|)^{\CO(\sqrt{k})}\Big) = (|X| + |D| + \|\CS\|)^{\CO(\sqrt{k})}.\]
 
 For the correctness of the algorithm, first observe that we only output \yes\ if $(\Gamma,\CS,U,k)$ is indeed a \yes-instance.
 So suppose that $(\Gamma,\CS,U,k)$ is a \yes-instance and let $Z \subseteq X = V(G)$ be a solution.
 Observe that $|Z| \leq k$.
 By Corollary \ref{cor:segments-and-bodies-for-unknown-solution}, there is some tuple $(v_1,\dots,v_r) \in (V(G))^r$ such that
 \begin{enumerate}
  \item $V(\CI) \cap Z = \emptyset$,
  \item $\tw(G/\CI) = \CO(\sqrt{k})$, and 
  \item $(\widehat{B}_Z(I),\widehat{r}_Z(I) \cap I) \in \CF(I)$ for every $I \in \CI$.
 \end{enumerate}
 where $(\CI,\CF)$ denotes the output of the algorithm from Corollary \ref{cor:segments-and-bodies-for-unknown-solution} on input $(G,(v_1,\dots,v_r))$.
 For this tuple, the algorithm from Theorem \ref{thm:2cc-perm-csp-deletion-with-bodies} outputs \yes.
 
 \medskip
 
 For the second part of the theorem, we proceed in the same way, but we replace the application of Corollary \ref{cor:segments-and-bodies-for-unknown-solution} with Corollary \ref{cor:segments-and-bodies-for-unknown-solution-minors}.
\end{proof}

In particular, the last theorem implies Theorem \ref{thm:2cc-perm-csp-deletion-result}.
We complete this section by considering the edge deletion version of the problem above.

\medskip 
\defparproblem{\textsc{$2$Conn Perm CSP Edge Deletion with Size Constraints}}{A Permutation-CSP-instance with $2$cc-size constraints $(\Gamma,\CS)$ with $\Gamma = (X,D,\CC)$, a set of undeletable edges $U \subseteq E(G(\Gamma))$, and an integer $k$.}{$k$}
{Is there a set $Z \subseteq E(G(\Gamma)) \setminus U$ such that $|Z| \leq k$ and $((X,D,\CC - Z),\CS)$ is satisfiable on $W$ for every $2$-connected component $W$ of the graph $G(\Gamma) - Z$.}
\medskip

\begin{theorem}
 \label{thm:2cc-perm-csp-edge-deletion}
 There is an algorithm solving \textsc{Planar $2$Conn Perm CSP Edge Deletion with Size Constraints} in time $(|X| + |D| + \|\CS\|)^{\CO(\sqrt{k})}$.
 
 Moreover, assuming Conjecture \ref{conj:contraction-decomposition-minor-vertex}, there is an algorithm solving \textsc{$H$-Minor-Free $2$Conn Perm CSP Edge Deletion with Size Constraints} in time $(|X| + |D| + \|\CS\|)^{\CO(c_H\sqrt{k})}$ for some constant $c_H$ that only depends on $H$.
\end{theorem}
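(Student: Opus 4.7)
The plan is to reduce \textsc{Planar $2$Conn Perm CSP Edge Deletion with Size Constraints} to \textsc{Planar $2$Conn Perm CSP Vertex Deletion with Size Constraints} via edge subdivision and then invoke Theorem~\ref{thm:2cc-perm-csp-deletion}; the $H$-minor-free statement will follow identically from the second part of that theorem. Given $(\Gamma,\CS,U,k)$ with $\Gamma=(X,D,\CC)$, I first merge duplicate constraints so that each edge $e=uv$ of $G(\Gamma)$ carries a unique binary constraint $c_e=((u,v),R_e)$. Then for every $e=uv$ I introduce a fresh subdivision variable $w_e$ of domain $D$ and replace $c_e$ by the two permutation constraints $((u,w_e),R_=)$ and $((w_e,v),R_e)$, where $R_==\{(a,a):a\in D\}$. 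Setting $X'\coloneqq X\cup\{w_e:e\in E(G(\Gamma))\}$, $U'\coloneqq X\cup\{w_e:e\in U\}$, and $k'\coloneqq k$ makes $G(\Gamma')$ the $1$-subdivision of $G(\Gamma)$, preserving planarity (resp.\ $H$-minor-freeness), and deleting $w_e$ removes exactly the constraint $c_e$.

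Translating the size constraints is straightforward for large $2$-connected components. For each $(w,q)\in\CS$ I add a counterpart $(w',q)\in\CS'$ with $w'(v,a)=w(v,a)$ for $v\in X$ and $w'(w_e,a)=0$ on subdivision variables. Any $2$-connected component $W$ of $G(\Gamma)-Z$ with $|W|\ge 3$ lifts to a $2$-connected component $W'=W\cup\{w_e:e\in E(G[W])\}$ of $G(\Gamma')-Z'$, where $Z'\coloneqq\{w_e:e\in Z\}$, and on $W'$ the translated constraint evaluates to $\sum_{v\in W}w(v,\alpha(v))\le q$ because subdivision vertices contribute zero. Isolated original vertices are handled identically.

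The main obstacle is the bridge case: a size-$2$ component $\{u,v\}$ of $G(\Gamma)-Z$ splits into two size-$2$ components $\{u,w_e\}$ and $\{w_e,v\}$ of $G(\Gamma')-Z'$, and with $w'(w_e,\cdot)=0$ the translated constraints are merely $w(u,\alpha(u))\le q$ and $w(v,\alpha(v))\le q$, strictly weaker than the required $w(u,\alpha(u))+w(v,\alpha(v))\le q$. To close this gap I attach, for every $(w,q)\in\CS$ and every edge $e=uv$, a unary constraint on $w_e$ whose allowed set is
\[\{a\in D:\exists b\in D\text{ with }(a,b)\in R_e\text{ and }w(u,a)+w(v,b)\le q\}.\]
Because the binary constraints $((u,w_e),R_=)$ and $((w_e,v),R_e)$ force $\alpha(w_e)=\alpha(u)$ and $\alpha(v)=R_e(\alpha(w_e))$, this unary constraint is exactly the original pairwise inequality on $\{u,v\}$. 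For $|W|\ge 3$ the same unary constraint is automatically met by any witness of the $2$cc bound on $W$, since non-negative summands of a sum bounded by $q$ are each at most $q$.

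A routine direct verification then shows that $(\Gamma,\CS,U,k)$ and $(\Gamma',\CS',U',k)$ are equivalent. Since $|X'|=\CO(|X|)$ on planar and $H$-minor-free graphs, $\|\CS'\|=\|\CS\|$, and the permutation structure is preserved, invoking Theorem~\ref{thm:2cc-perm-csp-deletion} on the constructed instance gives running time $(|X|+|D|+\|\CS\|)^{\CO(\sqrt{k})}$, with the $H$-minor-free case picking up the customary factor $c_H$ in the exponent.
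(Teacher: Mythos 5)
Your high-level approach — subdivide each edge and reduce to \textsc{$2$Conn Perm CSP Vertex Deletion with Size Constraints} — is exactly the paper's strategy, but there is a genuine gap in how you handle the bridge case, and it stems from your choice of domain for the subdivision variable.

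The paper gives $w_e$ domain $D^2$ and routes everything through a relation $R'_{u,v}$ that simultaneously encodes the binary constraint $R_e$, \emph{all unary constraints on $u$}, \emph{all unary constraints on $v$}, and all pairwise size bounds $w(u,a)+w(v,a')\le q$. When a $2$-connected component $W=\{u,v\}$ of $G-Z$ splits into $\{u,w_e\}$ and $\{w_e,v\}$ in $G'-Z'$, a single value $\alpha'(w_e)=(a,a')\in R'_{u,v}$ already certifies a valid assignment for $W$, so the two components do not need to agree on anything. Your $w_e$ has domain $D$, and your unary constraint on $w_e$ captures only the size bound, not the unary constraints on $u$ and $v$. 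After the split, your reasoning ``$\alpha(w_e)=\alpha(u)$ and $\alpha(v)=R_e(\alpha(w_e))$'' no longer applies: $R_=$ is checked only on $\{u,w_e\}$ and $R_e$ only on $\{w_e,v\}$, and the two certificates $\alpha_1,\alpha_2$ need not assign $w_e$ the same value. Concretely, take $D=\{0,1\}$, $R_e=\{(0,0),(1,1)\}$, unary constraint $\{0\}$ on $u$, unary constraint $\{1\}$ on $v$, no size constraints, and $k=0$. In $G'$ with $Z'=\emptyset$, the component $\{u,w_e\}$ is satisfied by $\alpha_1\equiv 0$ and $\{w_e,v\}$ by $\alpha_2\equiv 1$ (your unary constraint on $w_e$ allows both values), so your reduced instance is a \yes-instance; but the original is a \no-instance, since $(0,1)\notin R_e$ and $e$ cannot be deleted with $k=0$.

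The fix is small and brings you in line with the paper's $R'_{u,v}$: strengthen the unary constraint on $w_e$ to
$\{a\in D : a\in R_u \text{ for every unary }(u,R_u)\in\CC,\ \exists b,\ (a,b)\in R_e,\ b\in R_v \text{ for every unary }(v,R_v)\in\CC,\ w(u,a)+w(v,b)\le q\ \forall (w,q)\in\CS\}$.
With this, either restriction $\alpha_i(w_e)$ directly yields a legal pair $(a,b)$ for $W=\{u,v\}$, and both directions of the equivalence go through. One further small imprecision: the $1$-subdivision of an $H$-minor-free graph need not be $H$-minor-free; the paper instead argues it is $K_h$-minor-free for $h=\max(4,|V(H)|)$, which is what Conjecture~\ref{conj:contraction-decomposition-minor-vertex} needs.
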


\begin{proof}
 We prove the theorem by giving a reduction to the vertex deletion version of the problem.
 As usual, we focus on the first statement of the theorem.
 Let $(\Gamma,\CS,U,k)$ be the input instance where $\Gamma = (X,D,\CC)$.
 We construct an equivalent instance $(\Gamma',\CS',U',k')$, where $\Gamma' = (X',D',\CC')$, of \textsc{Planar $2$Conn Perm CSP Vertex Deletion with Size Constraints} as follows.
 Let $G \coloneqq G(\Gamma)$.
 We define
 \[X' \coloneqq X \uplus E(G)\]
 and
 \[D' \coloneqq D \uplus D^2.\]
 For every $v \in X$ we add the unary constraint $(v,D)$ to $\CC'$
 Also, for every unary constraint $(v,R) \in \CC$, the constraint $(v,R)$ is also added to $\CC'$.
 So consider some binary constraint $((u,v),R) \in \CC$.
 By combining binary constraints over the same set of variables, we may assume without loss of generality that $((u,v),R)$ is the only constraint with variables $u$ and $v$.
 Observe that $uv \in E(G)$.
 We define
 \begin{align*}
  R_{u,v}' \coloneqq \Big\{(a,a') \in D^2 \;\Big|\; &(a,a') \in R \text{ and }\\
                                                    &a \in R_u \text{ for every unary constraint } (u,R_u) \in \CC \text{ and }\\
                                                    &a' \in R_v \text{ for every unary constraint } (v,R_v) \in \CC \text{ and }\\
                                                    &w(u,a) + w(v,a') \leq q \text{ for every } (w,q) \in \CS\Big\}.
 \end{align*}
 Now, we add the binary constraints
 \[((u,uv),\{(a,(a,a')) \in D' \times D' \mid a,a' \in D, (a,a') \in R_{u,v}'\})\]
 and
 \[((uv,v),\{((a,a'),a') \in D' \times D' \mid a,a' \in D, (a,a') \in R_{u,v}'\})\]
 to $\CC'$.
 Observe that both constraints are permutation constraints.
 This completes the description of $\Gamma'$.
 Observe that $\Gamma'$ is Permutation-CSP-instance.
 We set
 \[U' \coloneqq X \cup U\]
 and
 \[k' \coloneqq k.\]
 It remains to define $\CS'$.
 For every $(w,q) \in \CS$ we add the size constraint $(w',q)$ to $\CS'$ where $w' \colon X' \times D \rightarrow \NN$ is defined via
 \begin{itemize}
  \item $w'(v,a) \coloneqq w(v,a)$ for all $v \in X$ and $a \in D$,
  \item $w'(v,(a,a')) \coloneqq 0$ for all $v \in X$ and $(a,a') \in D^2$, and
  \item $w'(e,a) \coloneqq 0$ for all $e \in E(G)$ and $a \in D'$.
 \end{itemize}

 Clearly, the instance $(\Gamma',\CS',U',k')$ can be computed in polynomial time.
 Let $G' \coloneqq G(\Gamma')$.
 Observe that $G'$ is obtained from $G$ by replacing each edge by a path of length $2$.
 In particular, if $G$ is planar, then $G'$ is also planar.
 
 We claim that $(\Gamma,\CS,U,k)$ has a solution if and only if $(\Gamma',\CS',U',k')$ has a solution.
 First, suppose there is a set $Z \subseteq E(G) \setminus U$ such that $|Z| \leq k$ and $((X,D,\CC - Z),\CS)$ is satisfiable on $W$ for every $2$-connected component $W$ of the graph $G(\Gamma) - Z$.
 We define $Z' \coloneqq Z \subseteq X'$.
 Note that $|Z'| = |Z| \leq k = k'$ and $Z' \cap U' = \emptyset$.
 Let $W'$ be a $2$-connected component of $G' - Z'$.
 
 We distinguish two cases.
 First, suppose that $|W'| > 2$.
 Then $W \coloneqq W' \cap X$ is a $2$-connected component of $G - Z$.
 This means there is some assignment $\alpha\colon W \rightarrow D$ that satisfies $(\Gamma,\CS)$ on $W$.
 We construct an assignment $\alpha'\colon W' \rightarrow D$ as follows.
 First, we set $\alpha'(v) \coloneqq \alpha(v)$ for all $v \in W$.
 So suppose $uv \in W' \cap E(G)$.
 Then $u,v \in W' \cap X$ since $|W'| > 2$ and $G'[W']$ is $2$-connected.
 Let $((u,v),R) \in \CC$ be the unique constraint over variables $u$ and $v$.
 We set $\alpha'(uv) \coloneqq (\alpha(u),\alpha(v))$.
 It is easy to see that $\alpha'$ satisfies $\Gamma'[W']$ as well as all size constraints $(w',q) \in \CS'$ on $W'$.
 
 Otherwise, $|W'| \leq 2$.
 If $|W'| = 1$ then $W' = \{v\}$ for some $v \in X$.
 Then $v$ forms an isolated vertex in $G - Z$.
 So $(\Gamma',\CS')$ is satisfiable on $W'$ since $(\Gamma,\CS)$ is satisfiable on $W$.
 Finally, suppose that $|W'| = 2$ which means that $W'$ induces an edge in $G'$.
 This means that $W' = \{u,uv\}$ where $u \in X$ and $uv \in E(G)$.
 But then $W \coloneqq \{u,v\}$ forms a $2$-connected component of $G - Z$.
 So again, $(\Gamma',\CS')$ is satisfiable on $W'$ since $(\Gamma,\CS)$ is satisfiable on $W$.
 
 In the other direction, suppose there is a set $Z' \subseteq X' \setminus U'$ such that $(\Gamma',\CS')$ is satisfiable on $W'$ for every $2$-connected component $W'$ of $G' - Z'$.
 We set $Z \coloneqq Z'$.
 Observe that $Z \subseteq E(G) \setminus U$ and $|Z| \leq k$.
 Let $W$ be a $2$-connected component of $G - Z$ and let $W' \coloneqq W \cup E(G[W])$.
 If $|W| > 2$ then $W'$ forms a $2$-connected component of $G' - Z'$.
 Let $\alpha'\colon W' \rightarrow D'$ be an assignment that satisfies $(\Gamma',\CS')$ on $W'$.
 By the unary constraints, it holds that $\alpha'(v) \in D$ for all $v \in W$.
 This allows us to define an assignment $\alpha \colon W \rightarrow D$ via $\alpha(v) \coloneqq \alpha'(v)$ for all $v \in W$.
 It is easy to see that $\alpha$ satisfies $(\Gamma,\CS)$ on $W$.
 
 If $|W| = 1$ then $W' \coloneqq W$ forms an isolated vertex in $G' - Z'$ and we can argue analogously that $(\Gamma,\CS)$ is satisfiable on $W$.
 So suppose that $W = \{u,v\}$.
 Then $uv \in E(G)$ and there is a $2$-connected component $W'$ of $G' - Z'$ such that $uv \in W'$.
 Let $((u,v),R)$ be the unique binary constraint in $\CC$ over variables $u$ and $v$.
 Since $(\Gamma',\CS')$ is satisfiable on $W'$, there are $a,a' \in D$ such that $(a,a') \in R_{u,v}'$.
 Now define $\alpha \colon W \rightarrow D$ via $\alpha(u) \coloneqq a$ and $\alpha(u) \coloneqq a'$.
 Then $\alpha $satisfies $(\Gamma,\CS)$ on $W$ by definition of the relation $R_{u,v}'$.
 
 Overall, this means we can use the algorithm from Theorem \ref{thm:2cc-perm-csp-deletion} to decide whether $(\Gamma,\CS,U,k)$ is a \yes-instance.
 This completes the proof of the first part of the theorem.
 
 \medskip
 
 For the second part, we proceed analogously.
 Suppose that $G$ is $H$-minor-free and let $h \coloneqq \max(4,|V(H)|)$.
 Recall that $G'$ is obtained from $G$ by replacing each edge by a path of length $2$.
 We claim that $G'$ is $K_h$-minor-free (where $K_h$ denotes the complete graph on $h$ vertices).
 Indeed, suppose towards a contraction that $U_1',\dots,U_h' \subseteq V(G')$ are vertex-disjoint, connected subsets such that $E_{G'}(U_i',U_j') \neq \emptyset$ for all $i \neq j \in [h]$.
 Then we can set $U_i \coloneqq U_i' \cap V(G)$ for $i \in [h]$.
 It is not difficult to see that $U_1,\dots,U_h$ are vertex-disjoint, connected subsets such that $E_G(U_i,U_j) \neq \emptyset$ for all $i \neq j \in [h]$.
 (Here, we use that if $uv \in U_i' \cap E(G)$ then $\{u,v\} \cap U_i' \neq \emptyset$ since $h \geq 4$ and $\deg_{G'}(uv) = 2$.)
 
 So, assuming Conjecture \ref{conj:contraction-decomposition-minor-vertex}, there is an algorithm deciding whether $(\Gamma',\CS',U',k')$ is a \yes-instance in time $(|X'| + |D'| + \|\CS'\|)^{\CO(c_H\sqrt{k'})} = (|X| + |D| + \|\CS\|)^{\CO(c_h\sqrt{k})}$ for some constant $c_h$ that only depends on $h$ by Theorem \ref{thm:2cc-perm-csp-deletion}.
 Since $h$ only depends on $H$, we obtain the second part of the theorem.
\end{proof}

\section{Minor Preserving Kernels}
\label{sec:kernels}

In this section, we discuss the kernelization results presented in Section \ref{sec:perm-csp-results} and how they relate to the existing literature.
In particular, explain how to obtain Theorems \ref{thm:edge-kernels-result}, \ref{thm:H-minor-graph-Pi-compression-vertex-SFVS} and \ref{thm:H-minor-graph-Pi-compression-vertex-GFVS}.
For edge deletion problems, the corresponding kernels (of quasi-polynomial size) can be obtained from existing results. 
This is not true for vertex deletion problems.
In the cases of \textsc{Vertex Multiway Cut with Deletable Terminals} and \textsc{Group FVS}, we are able to adopt the existing results with some simple yet non-trivial changes.  
In the case of \textsc{Subset FVS} we need significant changes in comparison to the existing kernel from \cite{HolsK18}.

\subsection{Edge Deletion Problems}

Consider a graph $G$ with a set terminals $T \subseteq V(G)$. 
The pair $(G,T)$ is referred to as a terminal network. 
Let $(G,T)$ be a terminal network, and let $\calT = T_1 \cup \dots \cup T_s$ be a partition of $T$. 
A \emph{multiway cut} for $\calT$ is a set of edges $X \subseteq E(G)$ such that $G - X$ contains no path between any pair of terminals $t \in T_i$ and $t' \in T_j$ for $i \neq  j$. 
Let us define a \emph{multicut-mimicking network} for $(G,T)$ as a terminal network $(G', T)$ where $T \subseteq V(G')$ and for every partition $\calT  = T_1 \cup \dots \cup T_s$ of $T$,  the size of a minimum multiway cut for $T$ is identical in $G$ and $G'$.
Wahlstr{\"{o}}m proved that existence of quasipolynomial multicut-mimicking networks for terminal network \cite{Wahlstrom20}.
In Section~$3$ in  \cite{Wahlstrom20}, the author states: 
``The process [of finding a multicut-mimicking network] repeatedly finds a single edge $e \in E(G)$ with a guarantee that for every set of cut requests $R \subseteq \binom{T}{2}$ there is a minimum multicut $X$ for $R$ in $G$ such that $e \not\in X$. 
We may then contract the edge $e$ and repeat the process. 
Thus the end product is a multicut-mimicking network, and the edges that survive until the end of the process form a multicut-covering set."
The following theorem is a direct consequence of this and Corollary~$3$ in \cite{Wahlstrom20}.

\begin{theorem}[Wahlstr{\"{o}}m \cite{Wahlstrom20}]
 The following problems have randomized quasipolynomial kernels with failure probability $\CO(2^{-n})$:
 \begin{enumerate}
  \item \textsc{Edge Multiway Cut} parameterized by solution size,
  \item \textsc{Group Feedback Edge Set} parameterized by solution size, for any group, such that the group remains the same in the reduced instance,
  \item \textsc{Subset Feedback Edge Set} with undeletable edges, parameterized by solution size.
 \end{enumerate}
 Moreover, if the graph in the input instance is $H$-minor-free then the graph in the reduced instance is also $H$-minor-free, and the parameter does not increase in the reduced instance.
\end{theorem}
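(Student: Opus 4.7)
The plan is to observe that Wahlstr{\"o}m's construction \cite{Wahlstrom20} produces the reduced instance entirely via edge contractions of the input graph, and then argue separately for each of the three problems that (i) such a contracting compression yields a valid kernel without increasing the parameter, and (ii) contractions preserve $H$-minor-freeness.

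First I would recall the main technical statement from \cite{Wahlstrom20}: given a terminal network $(G,T)$ with $|T|$ polynomially bounded in the parameter, one can compute in randomized polynomial time (with failure probability $\CO(2^{-n})$) a set $F \subseteq E(G)$ of quasipolynomial size that is a \emph{multicut-covering set}, meaning that for every partition $\calT$ of $T$ there is a minimum multiway cut for $\calT$ contained in $F$. As quoted in the excerpt, this $F$ is obtained by a procedure that, in each step, identifies a single edge $e$ which can be safely contracted, and the remaining edges of the final contracted multigraph constitute $F$. In particular, the multicut-mimicking network $(G',T)$ is isomorphic to a graph obtained from $G$ by a sequence of edge contractions, so $G'$ is a minor of $G$. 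Consequently, if $G$ is $H$-minor-free, so is $G'$; and since the parameter bounds the size of a minimum multiway cut, it cannot grow under contraction.

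Next I would derive each of the three kernels from this machinery. For \textsc{Edge Multiway Cut}, the terminals $T$ are given in the instance and any solution of size $\le k$ is a multiway cut for the trivial partition $\calT = \{\{t\} \mid t \in T\}$; applying the multicut-mimicking construction directly yields an equivalent instance $(G',T,k)$ on $G'$. For \textsc{Group Feedback Edge Set} (for a fixed group $\Sigma$), I would use the standard reduction that reformulates the problem on the $\Sigma$-lifted graph as an edge multiway cut instance with $|T|=\CO(k|\Sigma|)$ terminals (as in~\cite{Wahlstrom20}), and translate the contractions on the lifted graph back to contractions on the original labelled graph; since all operations are contractions of the underlying graph, the resulting instance is still on an $H$-minor-free $\Sigma$-labelled graph, and the group itself is not modified. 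For \textsc{Subset Feedback Edge Set with Undeletable Edges}, I would apply the variant of Wahlstr{\"o}m's framework that treats a subset of edges as uncontractible/undeletable, producing again a graph obtained from $G$ by contracting only deletable edges; undeletable edges and the subset $T$ of terminal vertices (or edges) are carried along unchanged.

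The main obstacle, and really the only non-trivial point beyond invoking \cite{Wahlstrom20}, is verifying that the contraction-only nature of the construction is preserved through the reductions of \textsc{Group Feedback Edge Set} and \textsc{Subset Feedback Edge Set} to the multicut-covering framework. For the group version this amounts to checking that when an edge is identified as safely contractible in the lifted graph, its projection is a single edge in the original graph carrying a group label, which can be contracted there without changing the set of non-null cycles (this follows from the product structure of the lift). For subset feedback edge set, I would check that the uncontractible-edge variant of Wahlstr{\"o}m's algorithm still only contracts edges of $G$. Once these sanity checks are in place, minor-closedness and the non-increase of the parameter are immediate corollaries of the fact that the reduced graph is a minor of the input.
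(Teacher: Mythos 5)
Your proposal is correct and matches the paper's own argument: the paper likewise quotes Wahlstr\"om's description of the multicut-covering-set construction as a sequence of edge contractions, concludes that the reduced graph is therefore a minor of the input (preserving $H$-minor-freeness) with non-increasing parameter, and derives the three kernels as direct consequences of Corollary~3 in \cite{Wahlstrom20}. The only difference is one of detail: you spell out the reductions for \textsc{Group Feedback Edge Set} and \textsc{Subset Feedback Edge Set} (the lifted graph and the uncontractible-edge variant) and flag the consistency check that contractions in the lift project to contractions in the original labelled graph, whereas the paper treats these steps as immediate from \cite{Wahlstrom20}.
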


It is known that \textsc{Edge Bipartization} is a special case of \textsc{Group Feedback Edge Set} with the group $\mathbb{Z}_2$.
As both of these problems are \textsf{NP-Complete},  we can conclude that similar result holds for \textsc{Edge Bipartization}.

\subsection{Vertex Deletion Problems}

In this subsection, we consider the following three problems: \textsc{Vertex Multiway Cut with Deletable Terminal (Vertex MwC-DT}),  \textsc{Group Subset Feedback Vertex Set (Group FVS)},  and \textsc{Subset Feedback Vertex Set (Subset FVS)}.
Let $\Pi$ be one of these problems. 
Consider an instance  $(G, T, k)$\footnote{For the sake of consistency, we slightly abuse the notation in the case of \textsc{Group Feedback Vertex Set} and denote an generic instance by $(G, T, k)$ instance of $(G, \lambda, \Sigma, T, k)$.} of $\Pi$ where $G$ is a graph,  a subset $T$ of $V(G)$ and an integer $k$.
Set $T$ is the collection of terminal vertices, an approximate solution\footnote{There is a polynomial time algorithm that constructs such a solution.}, and the collection of terminal vertices for the first, second and third problem, respectively.
Kratsch and Wahlstr{\"{o}}m \cite{KratschW20} proved that the first two problem admit randomized polynomial kernels parameterized by solution size whereas Hols and Kratsch \cite{HolsK18} proved the same result for the third problem.
In this subsection, we modify their kernelization algorithms to obtain Theorems~\ref{thm:H-minor-graph-Pi-compression-vertex-SFVS} and \ref{thm:H-minor-graph-Pi-compression-vertex-GFVS}.
For reader's convenience,  we restate the theorems here.

\begin{theorem}[Theorem \ref{thm:H-minor-graph-Pi-compression-vertex-SFVS} restated]
\label{thm:H-minor-graph-Pi-compression-vertex-SFVS-restated}
Let $\Pi$ be \textsc{Vertex Multiway Cut with Deletable Terminal} or \textsc{Subset Feedback Vertex Set}.
Then there is an algorithm that, given an instance $(G,T,k)$ of \textsc{$H$-Minor Free $\Pi$}, constructs an equivalent instance $(G',T,F,k')$ of \textsc{$H$-Minor Free $\Pi$ with Undeletable vertices} in randomized polynomial time and with failure probability $\calO(2^{-|V(G)|})$ such that $|V(G')| \in (c_H \cdot k)^{\calO(1)}$ and $k' \leq k$.
\end{theorem}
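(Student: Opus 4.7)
The plan is to adapt the existing randomized polynomial kernels -- the Kratsch--Wahlstr{\"o}m kernel for \textsc{Vertex MwC-DT} and the Hols--Kratsch kernel for \textsc{Subset FVS} -- so that they preserve $H$-minor-freeness. Both kernels first compute, via randomized representation-of-cut-matrices arguments that contribute the $\calO(2^{-|V(G)|})$ failure probability, a set $Y \subseteq V(G)$ of size $\POLY(k)$ with the property that some optimal solution $Z^\star$ is contained in $Y$. They then eliminate $X \coloneqq V(G)\setminus Y$ by a ``Mark-and-Torso'' operation: delete $X$ and place a clique on $N(X)$. This is the only step that destroys $H$-minor-freeness, since the clique on $N(X)$ can introduce $H$ as a minor.

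I will replace Mark-and-Torso by \emph{Contract-to-Undeletable}: for each connected component $C$ of $G[X]$, contract $C$ to a single vertex $u_C$ and place $u_C$ into the undeletable set $F$. As each $C$ is connected in $G$, the resulting graph is a minor of $G$ and therefore still $H$-minor-free. Equivalence of the new instance with $(G,T,k)$ rests on the same two observations that justified Mark-and-Torso in the original kernels: (i) no optimal solution ever deletes a vertex of $X$ by construction of $Y$, and (ii) the $X$-information that the torso captured -- namely, which pairs in $N(X)$ can be connected by a path through $X$ -- is exactly preserved by the undeletable super-vertex $u_C$, since every $C$-internal path survives the contraction and $u_C$ cannot be removed.

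The main obstacle is bounding $|V(G')|$ by $(c_H k)^{\calO(1)}$. After contraction $|F|$ can a priori be enormous, so I follow up with a twin-reduction rule that identifies any two vertices $u,u'\in F$ with the same neighborhood in $V(G')\setminus F$ (and, in the group-labelled case, the same incident edge labels). After twin-reduction, every pair of vertices in $F$ has distinct neighborhoods into $Y' \coloneqq V(G')\setminus F$. Applying Theorem \ref{thm:average-degree-H-minor-free} to the bipartite minor induced between $F$ and $Y'$, a standard degeneracy argument shows that the number of $F$-vertices of degree at least $2d_H + 1$ is $\calO(|Y'|)$, while the number of $F$-vertices of smaller degree is combinatorially bounded by a polynomial in $|Y'|$ using that all such neighborhoods are distinct subsets of $Y'$ of bounded size. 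Since $|Y'| = \POLY(k)$, this yields $|V(G')| = (c_H k)^{\calO(1)}$, and $k' \leq k$ holds by construction.

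For \textsc{Subset FVS}, the Hols--Kratsch pipeline contains a second step beyond Mark-and-Torso that is also not minor-preserving -- certain ``clean'' blocks are replaced by small gadgets whose extra edges can create forbidden minors. The plan is to redo this step by, once more, merely contracting each such block to a single undeletable vertex and relying on the subsequent twin-reduction rule to bound the total count. The main technical hurdle will be re-auditing the cut-covering analysis of Hols--Kratsch under this replacement: in particular, one must show that the terminal-interaction invariants that control how $T$-cycles may traverse each reduced block remain valid when the gadget-based torso construction is replaced by a single undeletable super-vertex, and that the argument for polynomially many surviving blocks still carries through in the $H$-minor-free setting.
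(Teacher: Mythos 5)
Your treatment of \textsc{Vertex Multiway Cut with Deletable Terminals} is exactly the paper's: invoke the Kratsch--Wahlstr\"om marking to get a set $W$ of size $\calO(k^3)$ containing $T$ and some solution, contract each component of $G-W$ to an undeletable vertex, delete any $F$-vertex whose neighborhood is contained in that of another, and count the survivors via the average-degree bound for $H$-minor-free graphs (this is the paper's Lemma~\ref{lemma:compression-H-minor-free-Pi-vertex-SFVS} combined with Lemma~\ref{lemma:bipartite-H-minor}). That half of your proposal is correct and matches the paper.

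For \textsc{Subset FVS} there is a genuine gap. The Hols--Kratsch marking only gives $|W| = \calO(k\cdot|T_E|^2)$, i.e.\ it is a kernel for the parameter $|T_E|+k$, so one must separately shrink the number of terminal edges to $\POLY(k)$ before the Contract-to-Undeletable step can yield a $(c_H k)^{\calO(1)}$ bound. The step of Hols--Kratsch that does this, and that breaks minor-freeness, is not a gadget replacement of ``clean blocks'': it is the terminal-reduction phase, which detects pairs $\{u,v\}$ of vertices at least one of which must be deleted, records them as pair constraints in a generalized problem, and finally realizes each pair constraint by adding new parallel terminal edges between $u$ and $v$. Your proposed fix --- contract the offending parts to undeletable super-vertices and let the twin rule absorb the blow-up --- does not address this: contraction does not decrease $|T_E|$, so $W$ stays of size $\calO(k|T_E|^2)$, and a pair constraint encodes a disjunctive deletion requirement that an undeletable super-vertex cannot express. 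The paper instead designs a new, purely edge-deleting terminal-reduction scheme that never adds edges: it takes an $8$-approximate solution $Z$, builds the bubble forest $\calH_Z$, uses Gallai's theorem to extract a $z$-blocker $B_z$ for each vertex $z\in Z$ adjacent to too many non-solitary leaves, and applies a two-colored $(k{+}2)$-expansion lemma with edge priorities to identify a component whose edges to $z$ can be safely deleted; iterating this bounds $|T_E|=\calO(k^5)$ (Lemmas~\ref{lemma:rr-expansion-correct} and~\ref{lemma:bound-terminal-nrs}). Some substitute for this argument is required before your construction gives the claimed compression for \textsc{Subset FVS}.
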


\begin{theorem}[Theorem \ref{thm:H-minor-graph-Pi-compression-vertex-GFVS} restated]
\label{thm:H-minor-graph-Pi-compression-vertex-GFVS-restated}
 There is an algorithm that, given an instance $(G,\lambda, \Sigma, T,k)$ of \textsc{$H$-Minor-Free Group FVS}, constructs an equivalent instance $(G', \lambda', \Sigma, T,F,k')$ of \textsc{$H$-Minor-Free Group FVS with Undeletable vertices} in randomized polynomial time and with failure probability $\calO(2^{-|V(G)|})$ such that $|V(G')| \in k^{\calO(|\Sigma| \cdot |V(H)|)}$ and $k' \leq k$.
\end{theorem}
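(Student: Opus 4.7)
The plan is to adapt the randomized polynomial kernel for \textsc{Group FVS} of Kratsch and Wahlström \cite{KratschW20} so that it preserves $H$-minor-freeness at the cost of introducing undeletable vertices. Recall that their kernel relies on a \emph{protected set} $X \subseteq V(G)$, computable in randomized polynomial time with failure probability $\calO(2^{-|V(G)|})$, with two properties: (i) $X$ is disjoint from some optimal solution, and (ii) after deleting $X$ there are only $k^{\CO(1)}$ vertices left. The kernel is finalized by replacing $X$ with its \emph{torso}, i.e.\ adding a clique on $N(X)$. This is precisely the step that destroys $H$-minor-freeness, and it is what I would replace.

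First I would perform a \emph{contract-to-undeletable} operation in place of the torso step: for each connected component $C$ of $G[X]$, I contract $C$ to a single vertex $v_C$ and add $v_C$ to the set $F$ of undeletable vertices in the output. Some care is required to preserve the labelling. Since $C$ is guaranteed to be disjoint from the solution, if $G[C]$ has no consistent $\Sigma$-labelling (Lemma~\ref{la:consistent-group-labeling}) then the instance is already a \no-instance and we may reject. Otherwise, using a spanning tree of $G[C]$ rooted at an arbitrary $r \in C$, I compute a consistent labelling $\Lambda\colon C \to \Sigma$ and replace every edge $uv$ with $u \in C$, $v \notin C$ by an edge between $v_C$ and $v$ labelled $(\Lambda(u))^{-1}\lambda(u,v)$. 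A standard argument based on Lemma~\ref{la:non-null-rotate} shows that this contraction preserves precisely the set of non-null cycles through $V(G)\setminus X$, so the reduced instance with $F$ containing all contracted vertices is equivalent. Since contraction cannot create minors, $H$-minor-freeness is preserved.

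What remains is to bound $|V(G')|$ by $k^{\CO(|\Sigma|\cdot|V(H)|)}$. The ``deletable part'' $V(G')\setminus F$ is inherited from the Kratsch–Wahlström kernel and has size $k^{\CO(1)}$, so the bottleneck is the number of surviving undeletable vertices. Here I would apply a \emph{labelled twin reduction}: two undeletable vertices $v_C, v_{C'}$ are declared equivalent if they have identical sets of neighbours in $V(G')\setminus F$ and the group label of every corresponding pair of edges agrees; equivalent vertices can be collapsed into one (the kept copy can simulate all the removed ones in any non-null cycle check). After twin reduction, distinct undeletable vertices must exhibit distinct labelled neighbourhoods in the deletable part.

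The main obstacle is proving the combinatorial bound on the number of such labelled neighbourhood patterns. The plan is to combine the degree bound of Kostochka–Thomason (Theorem~\ref{thm:average-degree-H-minor-free}) with an inductive counting argument on the deletable side. In an $H$-minor-free graph on $n$ deletable vertices, the number of vertices realising pairwise distinct subsets of a fixed small size is polynomial in $n$ with exponent $\CO(|V(H)|)$; refining by the group label of every incident edge multiplies this count by an additional factor of $|\Sigma|^{\CO(|V(H)|)}$, giving overall $n^{\CO(|\Sigma|\cdot|V(H)|)}$ surviving undeletable vertices. Plugging in $n = k^{\CO(1)}$ yields the claimed bound $k^{\CO(|\Sigma|\cdot|V(H)|)}$. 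The failure probability $\CO(2^{-|V(G)|})$ is inherited from the protected-set construction, and the parameter is not increased because the transformation only contracts vertices. The delicate point will be the labelled twin-bound: the standard unlabelled version is folklore for $H$-minor-free graphs, but the labelled refinement requires controlling how twin classes split when one incorporates the $|\Sigma|$-valued labels, which I expect to handle by grouping edges by their group label and applying the unlabelled bound to each group-label class separately.
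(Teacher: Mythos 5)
Your high-level plan — use the Kratsch–Wahlström protected-set marking, replace the torso step with a contract-to-undeletable step that re-anchors labels via a spanning-tree potential, collapse labelled twins, and then bound the surviving undeletable vertices using an $H$-minor-free combinatorial argument — matches the paper's approach in every structural respect (the paper's Lemma~\ref{lemma:group-FVS-marking}, the three reduction rules in Lemma~\ref{lemma:compression-H-minor-free-Pi-vertex-GFVS}, and Lemma~\ref{lemma:bipartite-H-minor-GFVS}). But the step you yourself flag as ``the main obstacle'' is also where your sketch would break. The route you propose, ``grouping edges by their group label and applying the unlabelled bound to each group-label class separately,'' does not compose: two undeletable vertices can have identical neighbourhoods within the $\alpha$-labelled subgraph and differ only in the $\beta$-labelled subgraph, so per-label-class twin bounds do not combine additively, and the unlabelled bound (which relies on deleting $u$ when $N(u) \subseteq N(v)$, a rule that is explicitly unsafe here precisely because of the labels) is not applicable to a single label class in isolation. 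The paper instead proves a dedicated two-part counting lemma for $K_{h,h}$-minor-free bipartite graphs (with $h = |V(H)|$): undeletable vertices of degree $< h$ are charged to their (small) neighbourhood, and for each such fixed small set the label-tuple injectivity argument after twin reduction gives at most $q = |\Sigma|^h$ of them, contributing $\CO(q \cdot |X|^{h})$; undeletable vertices of degree $\geq h$ are bounded by $(h-1)|X|^h$ purely by the $K_{h,h}$-minor-free pigeonhole on their $h$ smallest-indexed neighbours, \emph{independent of labels}. It is this second part — the high-degree case handled via $K_{h,h}$-freeness — that your ``inductive counting argument'' gestures at but does not supply, and without it the count of distinct labelled neighbourhoods is not controlled.

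Two smaller points. First, contracting a component $C$ to $v_C$ creates parallel edges whenever a deletable vertex $v$ had two or more neighbours inside $C$; you do not say what to do with them. The paper's Reduction Rule~2 handles exactly this: if two parallel labels do not multiply to $1_\Sigma$, the shared endpoint in $W$ must be in the solution; if the labels are equal, one edge is deleted; the residual case $\{\alpha, \alpha^{-1}\}$ survives and is handled by the map $\psi$ when the label tuple is counted. Your labelled-twin rule presupposes simple labelled incidences, so this cleanup must precede it. Second, the Kratsch–Wahlström marked set $W$ for \textsc{Group FVS} has size $k^{\CO(|\Sigma|)}$, not $k^{\CO(1)}$; the final exponent comes out the same, but the $|\Sigma|$ factor in the exponent of the kernel bound originates here, not (as your sketch suggests) in the labelled-twin count alone.
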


As mentioned in the previous section, \textsc{Odd Cycle Transversal} problem can be encoded as an instance of \textsc{Group FVS} with group $\mathbb{Z}_2$.
As both these problems are \textsf{NP-Complete}, we can conclude that the similar result hold for \textsc{$H$-Minor Free Odd Cycle Transversal} problem.

If we consider \textsc{Planar $\Pi$} problems, then we can improve the above compression algorithms to kernelization algorithms. 

\begin{theorem}
\label{thm:planar-Pi-kernel-vertex-SFVS}
 Let $\Pi$ be \textsc{Vertex Multiway Cut with Deletable Terminal} or \textsc{Subset Feedback Vertex Set}.
There is an algorithm that given an instance $(G,T,k)$ of \textsc{Planar $\Pi$} constructs an equivalent instance $(G',T,k')$ of \textsc{Planar $\Pi$} in randomized polynomial time and with failure probability $\calO(2^{-|V(G)|})$ such that $|V(G')| \in k^{\calO(1)}$ and $k' \leq k$.
\end{theorem}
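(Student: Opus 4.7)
The plan is to chain the compression of Theorem~\ref{thm:H-minor-graph-Pi-compression-vertex-SFVS} with a planarity-preserving post-processing step that removes the undeletable flag from the output instance.

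First, I will apply Theorem~\ref{thm:H-minor-graph-Pi-compression-vertex-SFVS} to the planar input $(G,T,k)$. Because that compression is minor-preserving and planar graphs form a minor-closed class (characterized by forbidden $K_5$ and $K_{3,3}$ minors), the resulting instance $(G',T,F,k')$ is planar, has $|V(G')| = k^{\CO(1)}$ vertices, satisfies $k' \leq k$, and is equipped with an undeletable set $F \subseteq V(G')$ of size at most $k^{\CO(1)}$. What remains is to convert this into a genuine planar $\Pi$-instance (without the undeletable flag) of polynomially bounded size.

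Second, I will replace each $v \in F$ by a local planar gadget $\Gamma_v$ of $\CO(k)$ vertices designed so that no optimum solution in the modified instance touches the interior of $\Gamma_v$. The natural approach of taking $k+1$ twin copies of $v$ with the same neighborhood fails in general, since $K_{k+1,d}$ is non-planar whenever $k,d \geq 3$. Instead, the construction will exploit the fact that $K_{2,k+1}$ is planar for every $k$: after first subdividing each edge incident to $v$ so that the gadget interfaces with the rest of the graph through degree-$2$ non-terminal stubs, $v$ is realized as a pair of ``poles'' $v^+, v^-$ connected by $k+1$ internally disjoint paths of length two, with the original neighbors of $v$ partitioned between $v^+$ and $v^-$ following the cyclic order of the planar embedding. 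Fresh problem-specific decorations are then added to each gadget: for \textsc{Vertex MwC-DT}, carefully chosen terminal pairs across the two poles so that severing all $k+1$ parallel paths costs more than the budget $k$; for \textsc{Subset FVS}, new terminals placed inside the gadget so that every $T$-cycle that used to pass through $v$ has $k+1$ parallel copies through the paths of the $K_{2,k+1}$, each of which can be hit by the alternative vertex whose existence is guaranteed by the undeletability certificate of $v$ in the compressed instance. Since $|F| \leq k^{\CO(1)}$ and $|\Gamma_v| = \CO(k)$, the final instance has size $k^{\CO(1)}$, parameter at most $k$, and is planar by construction.

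The main obstacle is the gadget design for \textsc{Subset FVS}. The naive idea of attaching $k+1$ vertex-disjoint $T$-triangles through $v$ produces exactly the \emph{opposite} of the desired behavior: it forces $v$ into every solution of size at most $k$ rather than excluding it. The correct gadget must make the cycle-hitting role of $v$ \emph{redundant} by rerouting every $v$-cycle through the $K_{2,k+1}$ bypass in such a way that each parallel copy can be hit by an alternative vertex, and by carefully bookkeeping the budget so that solutions of the modified instance are in bijection with solutions of the compressed instance that avoid $F$. This is precisely the step of the Hols--Kratsch kernel~\cite{HolsK18} that the paper's introduction flags as non-planarity-preserving, and where genuinely planar arguments must replace the torso-style reasoning; once the gadget is in hand, the equivalence between the two instances and the size and planarity bounds are routine verifications.
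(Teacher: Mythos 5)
Your high-level plan — apply Theorem~\ref{thm:H-minor-graph-Pi-compression-vertex-SFVS} to get a polynomially-sized planar instance with an undeletable set $F$, then replace each $u\in F$ by a planar gadget so that no small solution ever profits from touching the gadget — is exactly the route the paper takes (Lemma~\ref{lemma:kernel-planar-Pi-vertex-SFVS}). But your specific gadget is flawed, and the gap is real.

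The $K_{2,k+1}$ gadget has a bottleneck: the two poles $v^+,v^-$. You partition the subdivided stubs of $u$ between $v^+$ and $v^-$, so every stub is attached to exactly \emph{one} pole. Deleting the single vertex $v^+$ (cost $1$, well within budget) then separates all of $v^+$'s stubs from the rest of the gadget, which simulates deleting a proper subset of the edges incident to $u$ — something the original instance with undeletable $u$ cannot do. You try to discourage this with problem-specific decorations (extra terminal pairs for MwC, extra terminals for Subset FVS), but this changes the problem semantics rather than faithfully encoding ``$u$ cannot be deleted but can be freely traversed.'' In particular for Subset FVS, adding terminals inside the gadget creates new $T$-cycles that must be hit, which tends to \emph{force} deletions inside the gadget, the opposite of the intended effect — you note this phenomenon for the naive triangle gadget but the same pathology threatens the decorated $K_{2,k+1}$ construction, and nothing in the proposal shows how to budget around it. Since the decorations are left unspecified, the proposal does not establish the equivalence.

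The paper's gadget sidesteps all of this by being ``expensive to attack'' rather than ``cheap but decorated.'' Each $u\in F$ of degree $d$ is replaced by a $(2d(k+1))\times(k+1)$ grid $M_u$ (so $\Theta(k^2d)$ vertices, $\CO(k^2 |V(G')|)$ total, still polynomial in $k$), and each original neighbor $w_r$ of $u$ receives $k+1$ attachment edges to distinct boundary vertices of the grid, placed according to the cyclic order of the embedding to preserve planarity. The grid cannot be disconnected by removing at most $k$ vertices, and since every neighbor has $k+1$ vertex-disjoint routes into the grid, \emph{no} deletion inside $M_u$ ever helps: given any solution $Z^\circ$ of the gadget instance, one shows $Z^\circ\setminus\bigcup_u M_u$ is still a solution by rerouting along one of the $k+1$ untouched parallel paths. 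This gives the bijection you were hoping for without ever having to reason about problem-specific decorations at all — the same gadget works verbatim for both MwC-DT and Subset FVS.

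One last clarification: the ``step beyond Mark and Torso that does not preserve planarity'' which the introduction flags for Hols--Kratsch is the terminal-edge-reduction phase (it adds double edges for pair constraints); it is handled separately in the paper (Reduction Rules~\ref{rr:trivial-no}--\ref{rr:expansion-lemma} and Lemma~\ref{lemma:bound-terminal-nrs}), not by the undeletable-vertex-removal gadget, so the two issues should not be conflated.
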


\begin{theorem}
\label{thm:planar-Pi-kernel-vertex-GFVS}
There is an algorithm that given an instance $(G,\lambda, \Sigma,T,k)$ of \textsc{Planar Group Feedback Vertex Set} constructs an equivalent instance $(G',\lambda', \Sigma, T,k')$ of \textsc{Planar Group Feedback Vertex Set} in randomized polynomial time and with failure probability $\calO(2^{-|V(G)|})$ such that $|V(G')| \in k^{\calO(|\Sigma|)}$ and $k' \leq k$.
\end{theorem}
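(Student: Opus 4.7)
The plan is to reduce this kernelization task to the already-stated compression result of Theorem~\ref{thm:H-minor-graph-Pi-compression-vertex-GFVS-restated} and then transform the resulting instance into one without undeletable vertices, while preserving planarity.

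First, I would invoke Theorem~\ref{thm:H-minor-graph-Pi-compression-vertex-GFVS-restated} on the input instance $(G,\lambda,\Sigma,T,k)$. Since $G$ is planar and hence $K_5$-minor-free, instantiating the theorem with $H = K_5$ yields, in randomized polynomial time with failure probability $\mathcal{O}(2^{-|V(G)|})$, an equivalent instance $(G'',\lambda'',\Sigma,T,F'',k'')$ of Group FVS with undeletable vertices such that $k'' \le k$ and $|V(G'')| \in k^{\mathcal{O}(|\Sigma|\cdot |V(K_5)|)} = k^{\mathcal{O}(|\Sigma|)}$. I would then argue that the kernelization of Theorem~\ref{thm:H-minor-graph-Pi-compression-vertex-GFVS-restated}, whose main engine is the \emph{Contract to Undeletable} reduction described in the discussion of Figure~\ref{fig:torso}, is purely local: it contracts subsets of the graph (which preserves planarity) and removes false twins (which also preserves planarity). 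Consequently, if the input $G$ is planar, the compressed $G''$ is planar as well, not merely $K_5$-minor-free.

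Second, I would convert $(G'',\lambda'',\Sigma,T,F'',k'')$ into an equivalent instance of ordinary Planar Group FVS by replacing each undeletable vertex $v \in F''$ with a small planar "protector" gadget $\Gamma_v$. For each such $v$, the gadget attaches $k''+1$ new short non-null cycles through $v$ (placed in the local faces around $v$ in a fixed plane embedding), each sharing only $v$ with the rest of the graph and connected via $1_\Sigma$-labelled bridges so that any hypothetical solution $Z$ with $v \in Z$ can be transformed into a solution $Z'$ of the same size with $v \notin Z'$ by a standard rerouting argument on the group labelling (exploiting the characterization of non-null cycles in terms of consistent labellings from Lemma~\ref{la:consistent-group-labeling}). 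Because each gadget contributes only $\mathcal{O}(k)$ new vertices, is planar by construction, and introduces no new non-null cycles outside $\Gamma_v$, the resulting graph $G'$ is planar with $|V(G')| \le |V(G'')|\cdot k^{\mathcal{O}(1)} = k^{\mathcal{O}(|\Sigma|)}$, and $k' = k'' \le k$.

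The main obstacle lies in the gadget construction of the second step. Unlike the Subset FVS case treated in Theorem~\ref{thm:planar-Pi-kernel-vertex-SFVS}, where undeletable vertices can be protected by planar blocks of terminals, Group FVS introduces the additional difficulty that the labelling $\lambda$ must be extended onto $\Gamma_v$ so that the only non-null cycles created lie \emph{strictly inside} $\Gamma_v$ and are hit by exactly the choices we want. Moreover, naive approaches such as attaching $k+1$ false twins of $v$ break planarity whenever $\deg_{G''}(v)$ is large, so the gadget must carefully distribute the interaction with $N(v)$ among the cyclic faces of the plane embedding of $G''$ at $v$. Verifying that every "wasteful" solution that places $v$ in $Z$ admits a same-size rerouting avoiding $v$, while simultaneously guaranteeing that the compressed and the unfolded instance have identical minimum solution sizes, is the heart of the argument and requires a careful case analysis on the interaction of $\Gamma_v$ with the rest of the graph.
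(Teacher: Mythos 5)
Your first step is fine and matches the paper: the compression of Theorem~\ref{thm:H-minor-graph-Pi-compression-vertex-GFVS-restated} (equivalently, Lemma~\ref{lemma:group-FVS-marking} followed by Lemma~\ref{lemma:compression-H-minor-free-Pi-vertex-GFVS}) only contracts connected subsets and deletes vertices, so the output is a minor of $G$ and hence planar, with $|V(G'')| \in k^{\CO(|\Sigma|)}$ since $H=K_5$ has constant size.

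The gadget in your second step, however, is built on the wrong mechanism. Attaching $k''+1$ new non-null cycles through $v$, pairwise sharing only $v$, is precisely the standard construction for \emph{forcing} $v$ into every solution of size at most $k''$ (a $v$-flower of order $k''+1$): each of those cycles must be destroyed, and the only way to do so within budget is to delete $v$. That is the exact opposite of making $v$ undeletable, and your claimed rerouting ``any $Z$ with $v\in Z$ becomes a same-size $Z'$ with $v\notin Z'$'' cannot hold, since removing $v$ from $Z$ leaves $k''+1$ vertex-disjoint non-null cycles unhit. (If instead the cycles are attached by bridges and do not pass through $v$, they are untouched by deleting $v$ and simply make the instance a trivial \no-instance.) More generally, any gadget whose new non-null cycles lie ``strictly inside $\Gamma_v$'' forces extra deletions inside every gadget, so a correct undeletability gadget must introduce \emph{no} new non-null cycles at all. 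The paper's Lemma~\ref{lemma:kernel-planar-Pi-vertex-GFVS} does this by replacing each undeletable $v$ of degree $d$ with a $(2d(k+1))\times(k+1)$ grid whose internal edges are all labelled $1_\Sigma$, and attaching each neighbour $w_r$ to $k+1$ grid vertices by edges carrying the label $\lambda(vw_r)$, placed consistently with the cyclic order of the embedding. Every cycle using two of the parallel attachments of the same $w_r$ is null, every cycle through the gadget between distinct neighbours has the same label as the corresponding cycle through $v$, and the $(k+1)$-fold redundancy guarantees that deleting any grid vertex can be undone by rerouting, so some optimal solution avoids the gadget entirely. You would need to replace your cycle-attachment gadget with this redundancy-based one (or an equivalent) for the reduction to be correct.
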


\paragraph{Overview of the subsection:}
The kernelization algorithms of Kratsch and Wahlstr{\"{o}}m \cite{KratschW20}, and  Hols and Kratsch \cite{HolsK18} can be seen broadly as \emph{Mark and Torso} scheme.
We focus of `Marking' scheme and `Torso-ing' scheme separately.
We need a marking scheme that works while keeping the input graph $H$-minor free.
In the cases of \textsc{Vertex MwC-DT} and \textsc{Group FVS}, algorithms by Kratsch and Wahlstr{\"{o}}m \cite{KratschW20} satisfy this criteria. 
However, in the case of \textsc{Subset FVS}, the marking scheme of Hols and Kratsch \cite{HolsK18} introduces some edges.
We present a different marking scheme that does not introduce any new edge.
The `Torso-ing' scheme may also introduce some new edges in the graph.
We present \emph{Contract to Undeletable} scheme that ensures that the reduced graph is a minor of the graph in the input instance.
Moreover, if the graph in the input instance is $H$-minor free then we are able to bound the size of the graph in the reduced instance. 
In the cases of \textsc{Vertex McW-DT} and \textsc{Subset FVS}, it is easy to replace \emph{Torso} operation by \emph{Contract to Undeletable} scheme.
We present unified arguments for these two cases in Lemma~\ref{lemma:compression-H-minor-free-Pi-vertex-SFVS}.
This leads to the results of Theorem~\ref{thm:H-minor-graph-Pi-compression-vertex-SFVS-restated}.
However, in the case of \textsc{Group FVS}, the labeling of edges makes it difficult to contract them.
In this case, we need different set of arguments to bound the size of reduced instance which we present in Lemma~\ref{lemma:compression-H-minor-free-Pi-vertex-GFVS}.
The proof of Theorem~\ref{thm:H-minor-graph-Pi-compression-vertex-GFVS-restated} follows directly from this.

A drawback of \emph{Contract to Undeletable} scheme is that we are left with some undeletable vertices in the graph.
Hence, in the case of \textsc{$H$-Minor Free $\Pi$}, we can only present a compression to \textsc{$H$-Minor Free $\Pi$ with Undeletable Vertices}.
In the restricted case of \textsc{Planar $\Pi$}, we are able to replace these undeletable vertices by a large grid without affecting the planarity.
See Lemma~\ref{lemma:kernel-planar-Pi-vertex-SFVS} and Lemma~\ref{lemma:kernel-planar-Pi-vertex-GFVS}.
This proves the kernelization results stated in Theorem~\ref{thm:planar-Pi-kernel-vertex-SFVS} and Theorem~\ref{thm:planar-Pi-kernel-vertex-GFVS}.

\subsubsection{Replacing `Torso' by `Contract to Undeletable' }

The randomized kernelization algorithms in \cite{KratschW20} and \cite{HolsK18} first mark a subset $W \subseteq V(G)$ of size $k^{\calO(1)}$ such that $T \subseteq W$ and if there is a solution of $(G, T, k)$, then there is a solution contained in $W$ with very high probability.
It then marks all vertices in $V(G) \setminus W$ as undeletable. 
The algorithm constructs a graph $G'$ from a copy of $G$ by taking a torso operation as follows:
It deletes all vertices in $V(G) \setminus W$.
For two vertices $u, v \in W$, it adds edge $uv$ to $E(G')$ if there exists a $u, v$-path in $G$ with internal vertices from $V(G) \setminus W$. 
The algorithm obtains a kernel of desired size by removing all but two parallel edges between any two vertices.

Note that the torso operation may introduce new edges.
Hence if the input graph is $H$-minor free then the resulting graph after taking the torso operation mentioned above may not be $H$-minor free.
In \emph{Contract to Undeletable} scheme, instead of taking the torso operation,  we construct graph $G'$ from $G$ by contracting each connected components of $G - W$ to a single vertex.
Let $F$ be the set of vertices obtained after the contraction.
For two vertices $u, v \in F$, if $N_G'(u) \subseteq N_{G'}(v)$ then we delete $u$.
As $G'$ is obtained from $G$ by contracting some edges and deleting some vertices, $G'$ is a minor of $G$.
We argue that the number of remaining vertices in $F$ is at most $\calO(c_H \cdot |W|)$ using the fact that $G$ is an $H$-minor free graph.
Here, $c_H$ is a constant that depends only on the size of $H$.
The following lemma,  whose proof we differ to later part of the section, will be useful to prove such bound.

\begin{lemma} 
\label{lemma:bipartite-H-minor}
For a fixed graph $H$, let $G = (X \cup Y, E)$ be a simple $H$-minor free bipartite graph.
If  for  all  distinct $u, v \in Y$,  $N(u) \not\subseteq N(v)$ then $|Y| \le c_H |X|$. 
Here, $c_H$ is a constant that depends only on the size of $H$.
\end{lemma}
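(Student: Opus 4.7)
My plan is to induct on $|V(G)|$, using Theorem~\ref{thm:average-degree-H-minor-free} at each step to produce a vertex $v$ of degree at most $d_H$ (which exists because the average degree of $G$ is bounded by $d_H$, and every induced subgraph of an $H$-minor-free graph is still $H$-minor-free). I then split on whether this low-degree vertex lies in $X$ or in $Y$.

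If $v \in X$, I delete $v$ together with the at most $d_H$ vertices of $N(v) \cap Y$. The essential point is that the antichain condition on the remaining $Y$-vertices is preserved for free: a remaining vertex $y \in Y \setminus N(v)$ satisfies $v \notin N(y)$, so its neighborhood, viewed as a subset of $X \setminus\{v\}$, is unchanged, and pairwise incomparability of $\{N(y)\}_{y \in Y \setminus N(v)}$ is inherited. Applying the inductive hypothesis to the smaller bipartite $H$-minor-free graph gives $|Y| - d_H \le c_H(|X|-1)$, from which $|Y| \le c_H|X|$ follows as long as $c_H \ge d_H$. If $v \in Y$, I simply delete $v$; the antichain condition is trivially inherited by $Y\setminus\{v\}$, and the induction applies with $|X|$ unchanged and $|Y|$ decreased by one.

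The main technical obstacle is that the second case decreases $|Y|$ by one without touching $|X|$, so a naive induction accumulates a $+1$ per iteration and does not immediately close the bound $|Y| \le c_H|X|$. I will resolve this by strengthening the inductive invariant to $|Y| \le c_H(|X|+1)$, handling the base case $|X|=0$ directly (the antichain condition forces $|Y|\le 1$, since the empty set can appear at most once), and choosing $c_H$ to be a suitable multiple of $d_H$ (for instance $c_H = d_H+1$) so that both the $X$-case and the $Y$-case of the recurrence close simultaneously; the final bound $|Y|\le c_H|X|$ for $|X|\ge 1$ then follows by absorbing the additive slack into a slightly larger constant. Should this bookkeeping fail to close cleanly, the fallback is to split by regime: if some $x \in X$ has degree at most $d_H$ apply the $X$-case, and otherwise $\sum_{x\in X}\deg(x) > d_H|X|$ combined with $\sum_v\deg(v) \le d_H|V|$ forces $|Y| > |X|$ and simultaneously drives the average $Y$-degree below $d_H$, producing an abundance of low-degree $Y$-vertices whose removal can be amortized against the $X$-reductions.
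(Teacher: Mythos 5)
Your $X$-case is sound: deleting a low-degree vertex $v\in X$ together with the at most $d_H$ vertices of $N(v)\cap Y$ preserves the incomparability of the surviving neighborhoods and the recurrence closes. The genuine gap is the $Y$-case, and neither of your proposed repairs fixes it. If the invariant is $|Y|\le g(|X|)$ for \emph{any} function $g$, then deleting a single $y\in Y$ and applying induction yields only $|Y|\le g(|X|)+1$; strengthening to $g(|X|)=c_H(|X|+1)$ changes nothing, because the spurious $+1$ reappears after the very same step. The amortization fallback also cannot work: in the regime where every $x\in X$ has degree $>d_H$ you are forced to perform $Y$-deletions with no intervening $X$-deletion, and the number of consecutive such steps is not bounded by any constant, so there is no $X$-reduction against which to charge them. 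The underlying reason the induction cannot see the bound is that the entire difficulty of the lemma sits in the low-degree vertices of $Y$ (high-degree ones are handled by a one-line edge count); a priori there could be $\Theta(|X|^2)$ pairwise-incomparable degree-$2$ neighborhoods, and ruling this out requires using $H$-minor-freeness \emph{globally} across all of them simultaneously. Deleting these vertices one at a time discards exactly the structure (the minor of $G$ they collectively induce on $X$) that makes them few.

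The paper's proof is a global counting argument rather than an induction: it splits $Y$ at the degeneracy threshold $c$ of $H$-minor-free graphs, bounds $|Y_{\ge c+1}|\le c|X|$ by counting edges, and for $Y_{\le c}$ builds an auxiliary graph $G_0$ on $X$ joining two vertices whenever they share a neighbor in $Y_{\le c}$. It shows $G_0$ is $O(c^3)$-degenerate (by exhibiting a dense minor of $G$ otherwise), maps each $y\in Y_{\le c}$ to the clique $N(y)$ in $G_0$, observes that incomparability forces these cliques to be distinct, and concludes via the bound $n\cdot 3^{d}$ on the number of maximal cliques of a $d$-degenerate graph (Proposition~\ref{prop:nr-maximal-cliques-degenerate}). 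To salvage your approach you would need to replace the single-vertex $Y$-deletion by an operation that genuinely consumes $X$ (or consumes edges of a minor of $G$ supported on $X$), which essentially reconstructs the paper's contraction argument.
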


In the following lemma, let $\Pi$ be either \textsc{Vertex MwC-DT} or \textsc{Edge-Subset FVS}.
We find it convenient to present the following lemmas in terms of \textsc{Edge-Subset FVS} instead of \textsc{Subset FVS}.
Please see the corresponding section for formal definitions these two variation of the problem. 
For an instance $(G, T, k)$ of \textsc{Edge-Subset FVS} problem,  set $T$ denotes the collection of end points of terminal edges $T_E$.

\begin{lemma}
\label{lemma:compression-H-minor-free-Pi-vertex-SFVS}
Consider an instance $(G, T, k)$ of  \textsc{$H$-Minor-Free $\Pi$} and suppose there is a subset $W \subseteq V(G)$  with the guarantee that $T \subseteq W$ and if there is a solution of $(G, T, k)$, then there is a solution contained in $W$.
Then, one can obtain an equivalent instance $(G^{\circ}, T, F, k)$ of \textsc{$H$-Minor-Free $\Pi$ with Undeleteable Vertices} in polynomial time such that $|V(G')| \in \calO(c_H \cdot |W|)$.
\end{lemma}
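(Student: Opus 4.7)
The plan is to execute the construction in two phases and then verify correctness, minor-freeness, and the size bound. Starting from $G$, first contract each connected component of $G-W$ to a single new vertex; call the resulting graph $G'$ and let $F$ denote the set of new contracted vertices. Next iteratively discard every $u \in F$ for which some other $v \in F$ satisfies $N_{G'}(u) \subseteq N_{G'}(v)$; the remaining graph is $G^{\circ}$ and the surviving contracted vertices form the undeletable set, which I keep calling $F$. Both phases run in polynomial time, and since $G^{\circ}$ is obtained from $G$ by contractions followed by vertex deletions it is a minor of $G$, hence $H$-minor-free.

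For the size bound, note that $F$ is an independent set in $G^{\circ}$: no edge of $G$ connects distinct components of $G-W$, so after contraction no edge joins two elements of $F$, and deleting vertices can only remove edges. Thus the bipartite graph $B$ on parts $W$ and $F$ with edge set $E_{G^{\circ}}(W,F)$ captures all edges incident to $F$. As a subgraph of $G^{\circ}$, $B$ is $H$-minor-free; by construction of the removal step, the neighborhoods $N_B(u)$ for $u \in F$ are pairwise incomparable. Lemma~\ref{lemma:bipartite-H-minor} applied to $B$ with $Y = F$ and $X = W$ gives $|F| \leq c_H |W|$, so $|V(G^{\circ})| \leq (1+c_H)|W| = \calO(c_H |W|)$.

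What remains is equivalence of $(G,T,k)$ and $(G^{\circ}, T, F, k)$. Because the input guarantee produces a solution inside $W$ and $W \cap F = \emptyset$, the candidate deletable sets $S \subseteq W$ on the two sides coincide, so it suffices to show that for a fixed such $S$ the $\Pi$-property holds in $G-S$ iff it holds in $G^{\circ}-S$. The contraction step is routine: each component $C_i$ of $G-W$ is connected and disjoint from $S$, so paths and cycles in $G-S$ correspond to paths and cycles in $G'-S$ via contracting/expanding the $C_i$, which preserves both terminal-reachability (for \textsc{Vertex MwC-DT}) and existence of $T$-cycles (for \textsc{Edge-Subset FVS}). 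The main obstacle is the domination-removal step: I must show that whenever $u \in F$ was discarded because $N_{G'}(u) \subseteq N_{G'}(v)$ (with $v \in F$ surviving, possibly after iteratively following a chain of dominations), any relevant path or cycle through $u$ can be rerouted via $v$. For cycles, write the cycle as $u - w_2 - P - w_1 - u$ with $w_1, w_2 \in N_{G'}(u) \subseteq N_{G'}(v)$, and split on whether $v$ lies on $P$: if not, replacing $u$ by $v$ yields the desired cycle; if $v = x_j$ sits on $P$, the cycle splits at $v$ into two shorter cycles through $v$, and the half that contains the terminal edge (or a terminal vertex, in the vertex-subset formulation) gives a $T$-cycle avoiding $u$, using that neither $u$ nor $v$ is a terminal since $F \cap W = \emptyset$ and $T \subseteq W$. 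Iterating this rerouting along the chain of dominations yields the equivalence; for \textsc{Vertex MwC-DT} the analogous argument is simpler since only reachability rather than cycle structure needs to be preserved.
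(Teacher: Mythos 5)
Your proposal matches the paper's proof essentially step for step: the same two-phase construction (contract components of $G-W$ to get the independent set $F$, then discard dominated vertices of $F$), the same appeal to Lemma~\ref{lemma:bipartite-H-minor} for the $\calO(c_H|W|)$ bound, and the same safety arguments for both reduction rules. Your case analysis for rerouting a cycle through a discarded vertex $u$ when the dominating vertex $v$ already lies on the cycle is in fact slightly more careful than the paper's one-line justification, and is correct.
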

\begin{proof}
Consider an algorithm that given an instance $(G, T, k)$ of \textsc{$H$-Minor-Free $\Pi$} applies the following two reduction rules.
\begin{enumerate}
\item It contracts each connected component of $G - W$ to a vertex.
Let $G'$ be the resulting graph and let $F$ be the collection of all the vertices obtained after contraction i.e. $F = V(G') \setminus W$.
\item If there are two vertices $u, v \in F$ such that $N_{G'}(u) \subseteq N_{G'}(v)$, then delete $u$.
Let $G^{\circ}$ be the resulting graph.
The algorithm returns $(G^{\circ}, T, F, k)$ as the reduced instance.
\end{enumerate}

The algorithm obtains $G^{\circ}$ from $G$ by series of edge contractions and vertex deletion operations.
Hence, $G'$ is a minor of $G$ .
It is easy to see that the algorithm terminates in polynomial time.
We first argue that the reduction rules are safe and then bound the number of vertices in $G^{\circ}$.

\noindent \textbf{(1)}
For every vertex $u \in F \subseteq V(G')$, let $M_u$ be the corresponding connected component of $G - W$.
Alternately, vertex $u$ is obtained by contracting $M_u$ to a single vertex.
For any subset $Z$ of $W$, let $A$ be a connected component of $G - Z$.
As $F \cap W = \emptyset$, for any $u \in F$, either $M_u \subseteq A$ or $M_u \cap A = \emptyset$.
By the construction, for any connected component $A'$ of $G' - Z$, there is a connected component $A$ of $G - Z$ such that $A'$ can be obtained from $A$ by contracting all $M_u$ that intersects with $A$ to a single vertex.

Let $Z$ be a solution of $(G, T, k)$ such that $Z \subseteq W$ and hence $Z \cap F = \emptyset$.
Such a solution exists by the property of $W$.
We argue that $Z$ is a solution of $(G, T, F, k)$.
Assume this is not true.
In the case of \textsc{Vertex MwC-DT}, this implies that there exists a connected component $A'$ of $G' - Z$ such that it contains at least two terminals.
But this implies the corresponding connected component $A$ of $G - Z$ also contains at least two terminals.
This contradicts the fact that $Z$ is a solution of $(G, T, k)$.
Using the similar arguments for \textsc{Edge-Subset FVS}, a $T_E$-cycle in $A'$ implies a $T_E$-cycle in $A$, which contradicts the fact that $Z$ is a solution of $(G, T, k)$.
Hence, our assumption is wrong and $Z$ is a solution of $(G, T, F, k)$.

To prove the reverse direction, let $Z'$ be a solution of $(G', T, F, k)$.
Assume $Z'$ is not a solution of $(G, T, k)$.
In the case of \textsc{Vertex MwC-DT}, this implies there is a connected component $A$ of $G - Z'$ that has at least two terminals.
As $T \subseteq W$, this implies that there are at least two terminals in the corresponding connected component $A'$ of $G' - Z'$.
This contradicts the fact that $Z'$ is a solution of $(G', T, F, k)$.
Using the similar arguments in the case of \textsc{Edge-Subset FVS}, any $T_E$-cycle in $A$ imply $T_E$-cycle in $A'$ (as $V(T_E) = T \subseteq W$) which contradicts the fact that $Z'$ is a solution of $(G', T, F, k)$.
Here, $T_E$ is the collection of terminal edges in the graph and hence $T = V(T_E)$.
Hence our assumption is wrong and $Z'$ is a solution of $(G, T, k)$.
This concludes the proof of correctness of the first reduction rule.

\noindent \textbf{(2)} For the second reduction rule, the correctness of the forward direction follows from the fact that $G^{\circ}$ is a subgraph of $G'$.
Let $Z^{\circ}$ be a solution of instance $(G^{\circ}, T, F, k)$ of \textsc{Vertex MwC-DT}. 
If $Z^{\circ}$ is not a solution of $(G',  T, F, k)$,  there there are two terminals that are connected in $G' - Z^{\circ}$.
Moreover, vertex $u$ is present in the path connecting these two terminals.
As $N_{G'}(u) \subseteq N_{G'}(v)$, and $v \in F$ and hence $v \not\in Z^{\circ}$, there is a path connecting $t_1, t_2$ in $G^{\circ} - Z^{\circ}$.
This contradicts the fact that $Z^{\circ}$ is a solution of $(G^{\circ}, T, F, k)$.
Let $Z^{\circ}$ be a solution of instance $(G^{\circ}, T, F, k)$ of \textsc{Edge-Subset FVS}. 
If $Z^{\circ}$ is not a solution of $(G',  T, F, k)$,  there there is a $T_E$-cycle that contains vertex $u$ in $G' - Z^{\circ}$.
As $N_{G'}(u) \subseteq N_{G'}(v)$, and $v \in F$ and hence $v \not\in Z^{\circ}$, there is a $T$-cycle in $G^{\circ} - Z^{\circ}$.
This contradicts the fact that $Z^{\circ}$ is a solution of $(G^{\circ}, T, F, k)$.
This implies that the second rule is safe for any problem $\Pi$.

It remains to argue the bound on the number of vertices in $G^{\circ}$. 
As $G^{\circ}$ is a minor of $G$, which  is an $H$-minor free graph, $G^{\circ}$ is an $H$-minor free graph.
The vertices in $W \subseteq V(G)$ are not affected by either of reduction rules, and hence $W \subseteq V(G')$.
An exhaustive application of the first reduction rule ensures that $V(G^{\circ}) \setminus W$ is an independent set in  $G^{\circ}$.
Consider bipartite graph obtained from $G^{\circ}$ by deleting all edges in $E(G^{\circ})$ whose both endpoints are in $W$.
As the algorithm applies the second reduction rule exhaustively, this bipartite graph satisfies the premise of Lemma~\ref{lemma:bipartite-H-minor}. 
Hence, the number of vertices in $G'$ is at most $\calO(c_H \cdot |W|)$. 
This concludes the proof of the lemma.
\end{proof}

In the following lemma, we argue that in the case of \textsc{Planar $\Pi$} problems, we can encode undeletable vertices by a large grid while maintaining planarity.

\begin{lemma}
\label{lemma:kernel-planar-Pi-vertex-SFVS}
Consider an instance $(G, T, k)$ of a problem  \textsc{Planar $\Pi$} and suppose there is a subset $W \subseteq V(G)$ with the guarantee that $T \subseteq W$ and if there is a solution of $(G, T, k)$, then there is a solution contained in $W$.
Then, one can obtain an equivalent instance $(G', T, k)$ of \textsc{Planar $\Pi$} in polynomial time such that $|V(G')| \in \calO(k^2 \cdot |W|)$.
\end{lemma}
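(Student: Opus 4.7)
The strategy is two-stage. \emph{Stage 1 (compression):} Apply Lemma~\ref{lemma:compression-H-minor-free-Pi-vertex-SFVS} with $H=K_5$, which is valid since every planar graph is $K_5$-minor free, to the input $(G,T,k)$. This yields, in polynomial time, an equivalent instance $(G^{\circ},T,F,k)$ of \textsc{Planar $\Pi$ with Undeletable Vertices} in which $G^{\circ}$ is planar and $|V(G^{\circ})|=\calO(|W|)$. All that remains is to remove the labels on $F$ while keeping the graph planar.

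\emph{Stage 2 (undeletability gadget):} I build $G'$ from $G^{\circ}$ by replacing each $f\in F$ with a planar gadget $\Gamma_f$ engineered so that no $\le k$-vertex deletion can functionally ``disassemble'' it. Fix a planar embedding of $G^{\circ}$ and list the neighbours of $f$ in cyclic order $u_1,\dots,u_{d_f}$. Delete $f$ and, in the face it occupied, install $2k+1$ concentric cycles of length $d_f$, linked by radial ``spoke'' edges between corresponding vertices of consecutive cycles; finally, attach vertex $j$ of the outermost cycle to $u_j$ (in the matching cyclic order). The resulting $G'$ is planar by construction, and
\[
 |V(G')| \;\le\; |V(G^{\circ})| + (2k+1)\sum_{f\in F}d_f \;\le\; |V(G^{\circ})| + 2(2k+1)|E(G^{\circ})| \;=\; \calO(k\cdot|W|),
\]
using the fact that $G^{\circ}$ is planar; this comfortably meets the stated $\calO(k^2|W|)$ bound.

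The equivalence of $(G',T,k)$ and $(G^{\circ},T,F,k)$ is established in two directions. Forward: any solution $Z^{\circ}$ for the compressed instance avoids $F$, so the same set $Z^{\circ}\subseteq V(G')$ is a solution for $(G',T,k)$, because the gadgets introduce no terminals and any path/cycle routed through an intact $\Gamma_f$ corresponds to a path/cycle through $f$ in $G^{\circ}$. Backward: given $Z'\subseteq V(G')$ with $|Z'|\le k$, a pigeonhole on the $2k+1$ concentric cycles of each $\Gamma_f$ leaves at least $k+1$ of the layers intact; the undamaged part of $\Gamma_f$ therefore still connects all ports $u_j$ to one another in exactly the same pattern that $f$ did, so $Z'\cap V(G^{\circ})$ is a solution for $(G^{\circ},T,F,k)$.

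The main obstacle I anticipate is the \textsc{Subset FVS} case, because each new gadget produces many short cycles and one must show that none of them can become a genuine $T$-cycle in $G'-Z'$. This is handled by insisting that no gadget vertex lies in $T$: any $T$-cycle in $G'-Z'$ that enters $\Gamma_f$ must leave through a distinct port, and, combined with the pigeonhole above, pulls back to a $T$-cycle through the undeletable $f$ in $G^{\circ}-(Z'\cap V(G^{\circ}))$. Making this pull-back precise, and doing the analogous (easier) bookkeeping for \textsc{Vertex Multiway Cut with Deletable Terminal}, is the technical heart of the proof but uses no ideas beyond those already applied in Lemma~\ref{lemma:compression-H-minor-free-Pi-vertex-SFVS}.
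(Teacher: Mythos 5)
Your Stage~1 matches the paper (apply Lemma~\ref{lemma:compression-H-minor-free-Pi-vertex-SFVS} to an $H$-minor-free class containing planar graphs), but your Stage~2 gadget does not enforce undeletability, and the backward direction of the equivalence fails.

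The flaw is in the attachment: you connect each port $u_j$ to the gadget $\Gamma_f$ by a \emph{single} edge, namely to vertex $j$ of the outermost cycle. A solution $Z'$ in $G'$ can delete that one outer-cycle vertex with a single unit of budget, thereby severing $u_j$ from $\Gamma_f$. This simulates ``removing an edge incident to $f$'', an operation that is simply unavailable in $(G^{\circ},T,F,k)$ because $f$ is undeletable. Your pigeonhole on the $2k+1$ concentric layers only protects the interior of the gadget; it says nothing about the $d_f$ attachment vertices sitting on the outermost cycle, one of which is the unique interface for each port. Concretely: take $f$ undeletable with neighbours $w_1,w_2,w_3$, an extra edge $w_1w_2$, $T=\{w_1\}$, $k=1$; \textsc{Subset FVS}. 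With $d_f=3$ your gadget is three nested $3$-cycles with spokes and single edges $w_1a_1$, $w_2a_2$, $w_3a_3$ to the outer cycle. Then $Z'=\{a_1\}$ isolates $w_1$ and hits every $T$-cycle in $G'$, so it is a solution for $(G',T,1)$; but $Z'\cap V(G^{\circ})=\emptyset$ leaves the $T$-cycle $w_1fw_2w_1$ in $G^{\circ}$, so it is not a solution for $(G^{\circ},T,F,1)$. Hence $(G',T,k)$ and $(G^{\circ},T,F,k)$ are not equivalent. (Your $\calO(k\cdot|W|)$ size bound, tighter than the stated $\calO(k^2\cdot|W|)$, is itself a warning sign.)

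The paper avoids this by giving each neighbour $w_r$ of $f$ a bundle of $k+1$ vertex-disjoint attachment edges into the gadget (which is a $(2d(k+1))\times(k+1)$ grid rather than concentric cycles), so that no budget-$k$ deletion can detach $w_r$ from the gadget, and so that any two surviving neighbours remain joined by one of $k+1$ internally vertex-disjoint grid paths. You could repair your construction by enlarging each cycle to length $\Theta(k\,d_f)$ and attaching each $u_j$ to $k+1$ consecutive outer-cycle vertices, which restores the needed redundancy; this brings the gadget back to $\Theta(k^2 d_f)$ vertices and the total to the stated $\calO(k^2\cdot|W|)$.
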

\begin{proof}
Consider an instance $(G, T, k)$ of \textsc{Planar $\Pi$}.
Using Lemma~\ref{lemma:compression-H-minor-free-Pi-vertex-SFVS}, one can compute an equivalent instance $(G', T, F,k)$ of \textsc{Planar $\Pi$ with Undeletable Vertices} such that $|V(G')| \in \calO(|W|)$.
To complete the proof,  it is sufficient to prove that there is an algorithm that given an instance $(G',T, F,k)$ of \textsc{Planar $\Pi$ with Undeletable Vertices},  runs in polynomial time, and returns an equivalent instance $(G^{\circ}, T,  k)$ of \textsc{Planar $\Pi$} such that $|V(G^{\circ})| \in \calO(k^2  \cdot |V(G')|)$.

Recall that the algorithm in Lemma~\ref{lemma:compression-H-minor-free-Pi-vertex-SFVS} constructs set $F$ of undeletable vertices by contracting connected components of $G - W$ for some subset $W \subseteq V(G)$.  
Hence, $F$ is an independent set in $G'$.

Consider an algorithm that given an instance $(G', T, F,k)$ of \textsc{Planar $\Pi$ with Undeletable Vertices} construct the graph $G^{\circ}$ from $G'$ as follows.
Let $u \in F$ be a vertex of degree $d := \deg_G(u)$.
The algorithm replaces the vertex $u$ by a $(2d(k+1)) \times (k+1)$ grid with vertices $u_{i,j}$ for all $i \in [2d(k+1)]$ and $j \in [k+1]$.
Let $M_u$ be the set of all vertices $u_{i,j}$ for $i \in [2d(k+1)]$ and $j \in [k+1]$.
The algorithm adds grid edges $u_{i,j}u_{i+1,j}$ for all  $i \in [2d(k+1)-1]$, $j \in [k+1]$ as well as $u_{i,j}u_{i,j+1}$ for all $i \in [2d(k+1)]$, $j \in [k]$.
Let $w_1,\dots,w_d$ be the neighbors of $u$ (we choose numbering according to the cyclic order of a planar embedding of $G$).
As $F$ is an independent set in $G'$,  for every $r \in [d]$,  $w_r \notin F$.
The algorithm adds edges from $w_r$ to $u_{2(r-1)(k+1)+2j,1}$ for all $j \in [k+1]$ in a way that preserves planarity.
It returns $(G^{\circ}, T, k)$ as the reduced instance.

It is clear that the algorithm terminates in polynomial time and $G^{\circ}$ is a planar graph.
As $G$ is a planar graph,  the sum of degrees of vertices in $F$ is at most $|V(G)|$.
This implies that the number of new vertices added is at most $\calO(k^2 \cdot |V(G)|)$.
It remains to argue that both instances are equivalent.

Let $Z'$ be a solution of $(G', T, F, k)$. 
We claim that $Z'$ is also a solution of $(G^{\circ},T,k)$.
Let $A^{\circ}$ be a connected component of $G^{\circ} - Z'$. 
Since $G^{\circ}[M_u]$ is connected and $M_u \cap Z' = \emptyset$ for every $u \in F$ we get that $A^{\circ}\cap M_u = \emptyset$ or $M_u \subseteq A^{\circ}$ for all $u \in F$.
Hence, there is a corresponding connected component $A'$ of $G' - Z'$ where each set $M_u \subseteq A^{\circ}$ is replaced by the vertex $u$.

As $Z'$ is a solution of $(G',T,F,k')$, in the case of \textsc{Vertex MwC-DT}, this implies that there is at most one terminal in $A'$.
As $u \notin T$, this implies that there is at most one terminal in $A^{\circ}$.
Similarly, in the case of \textsc{Edge-Subset FVS}, there is no $T_E$-cycle completely contained in $A'$.
Here, $T_E$ is the collection of terminal edges in the graph and hence $T = V(T_E)$.
As $u \notin V(T'_E)$, this implies that there is no $T^{\circ}_E$-cycle which is completely contained in $A^{\circ}$.
Since, $A^{\circ}$ is an arbitrary connected component of $G^{\circ} - Z'$, this implies that $Z'$ is a solution of $(G^{\circ}, T, k)$.

In the other direction, suppose that $Z^{\circ}$ is a solution of $(G^{\circ},T,k)$.
We argue that $Z' := Z^{\circ} \setminus \bigcup_{u \in U} M_u$ is a solution of $(G',T,F,k)$.
Towards this end, we first argue that $Z'$ is a solution of $(G^{\circ},T,k)$.
Let $m_u \in Z^{\circ} \cap (\bigcup_{u \in U} M_u)$ and suppose that $m_u \in M_u$ for some $u \in F$.
It suffices to argue that $Z'' := Z^{\circ} \setminus \{m_u\}$ is a solution.
Assume not and consider the graph $G - Z''$.
In the case of \textsc{Vertex MwC-DT}, this implies that there are at least two terminals in $T$ that are connected in $G - Z''$.
Moreover, $m_u$ is present in the the path connecting these two terminals.
We claim that we can find a path connecting these two terminals that does not contain vertex $m_u$.
Indeed, this follows from the fact that, for each vertex which is adjacent to some vertices from $M_{F} := \bigcup_{u \in F}M_u$, there are $(k+1)$ vertex-disjoint segments connecting the same vertices outside of $M_{F}$.
Since all of these paths are not hit by $Z''$ we can reroute parts of the path connecting two terminals to avoid $m_u$.
This will contradict the fact that $Z''$ is a solution.
Hence, our assumption is wrong and $Z'$ is a solution of $(G^{\circ},T,k)$.
Now we can follow the same arguments as in the previous paragraph.
Let $A^{\circ}$ be a connected component of $G^{\circ} - Z'$. 
Hence,  there is at most one terminal inside $A^{\circ}$.
Consider the set $A' := (A \setminus F) \cup \bigcup_{u \in A \cap F} M_u$.
Then $A'$ is a connected component of $G' - Z'$ and there is  at most one terminal inside $A'$. 
As $A^{\circ}$ is an arbitrary connected component of $G^{\circ} - Z'$, this is true for any connected component of $G' - Z'$.
This concludes that $Z'$ is a solution of $(G^{\circ}, T,k)$.
If we replace the path connecting two terminals in the above arguments by $T_E$-cycle, then the similar results hold for \textsc{Edge-Subset FVS}.
This concludes the proof of the lemma.
\end{proof}

\subsubsection{\textsc{Vertex Multiway With Deletable Terminals}}

The kernelization algorithm of Kratsch and Wahlstr{\"{o}}m consists of the following four steps (See Section~$4.2$ in \cite{KratschW20}):
$(i)$ Bound the number of terminals;
$(ii)$ Analysis the solution (i.e.  prove that any minimum solution that contains the maximum number of terminals is `closest');
$(iii)$ Setting up the gammoid and applying the representative set lemma to obtain a collection of undeletable vertices; and 
$(iv)$ Shirking the input graph using torso operation.
We summarize the first three steps in the following lemma.

\begin{lemma}
\label{lemma:vertex-multicut-deletable-terminals-marking}
Suppose $(G, T, k)$ is a \yes\ instance of \textsc{Vertex MwC-DT}.
There exists a set $W \subseteq V(G)$ of size at most $\calO(k^3)$ such that $T \subseteq W$ and there is a solution of $(G, T, k)$ contained in $W$.
Furthermore, there is an algorithm that finds such a set in randomized polynomial time with failure probability $\calO(2^{-|V(G)|})$.
\end{lemma}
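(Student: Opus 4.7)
The plan is to follow the three-step marking scheme of Kratsch and Wahlström \cite{KratschW20} for \textsc{Vertex Multiway Cut with Deletable Terminals}, consolidated into a single statement. A key point is that the scheme only \emph{selects} a subset $W$ of vertices; it does not add or contract any edge, so the graph itself is untouched and any closed-under-subgraph property of $G$ is preserved trivially.

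First, I would reduce the number of terminals to $\calO(k)$. Running a standard polynomial-time constant-factor approximation yields a solution $\tilde Z$ of size $\calO(k)$. Any terminal that lies in a connected component of $G - \tilde Z$ containing no other terminal can be removed from $T$ without changing the problem, since no optimal solution can benefit from separating an already-isolated terminal. After exhaustive application, $|T| \leq \calO(k)$.

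Second, I would fix a canonical closest solution $Z^\star$: among all size-$\le k$ solutions, choose one minimising $\sum_{z \in Z^\star} \dist_{G - (Z^\star \setminus \{z\})}(z, T)$, with ties broken by a fixed linear order on $V(G)$. A standard uncrossing argument then shows that for every terminal $t \in T$, the vertices of $Z^\star \setminus T$ that separate $t$ from $T \setminus \{t\}$ form a minimum vertex separator between $t$ and $T \setminus \{t\}$ (the ``closest'' one on the $t$-side).

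Third, I would apply the gammoid-based representative-set machinery. For each terminal $t \in T$, attach a super-source adjacent to $t$ and a super-sink adjacent to $T \setminus \{t\}$, and consider the gammoid $M_t$ on $V(G)$ whose independent sets are vertex subsets through which $|I|$ vertex-disjoint source-to-sink paths can be routed. The vertices of $Z^\star \setminus T$ that separate $t$ from the remaining terminals form an independent set in $M_t$ of rank at most $k$. Using the randomised representative-set algorithm of Kratsch and Wahlström, built on a random matrix representation of $M_t$ over a field of size $2^{\Omega(|V(G)|)}$, one obtains in randomised polynomial time a set $W_t \subseteq V(G)$ of size $\calO(k^2)$ that contains a representative for every such independent set. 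Setting $W := T \cup \bigcup_{t \in T} W_t$ gives $|W| \leq \calO(k) + \calO(k) \cdot \calO(k^2) = \calO(k^3)$. Since $Z^\star$ is closest and $T \subseteq W$, one can assemble representatives per terminal into a single solution contained in $W$.

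The main obstacle is the second/third interface: we must argue that the per-terminal representatives can be glued into a \emph{single} solution, not merely local min-separators at each terminal. This is precisely what the ``closest'' property from step two guarantees, via the uncrossing argument; everything else is routine application of known tools. The only randomness comes from the matrix representation of the gammoids, and Schwartz--Zippel over a field of size $2^{\Omega(|V(G)|)}$ yields the claimed failure probability $\calO(2^{-|V(G)|})$.
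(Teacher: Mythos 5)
Your overall route---constant-factor approximation, a ``closest'' optimal solution, then gammoid-based representative sets---is exactly the Kratsch--Wahlstr\"om scheme that the paper itself invokes: the paper gives no independent proof of this lemma, but simply cites Section~4.2 of \cite{KratschW20} for steps (i)--(iii) and states the lemma as a summary of them. So in terms of approach you and the paper coincide.

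However, your first step is unsound as written, and it is the step that carries the whole $\calO(k^3)$ accounting. You propose to discard every terminal that ``lies in a connected component of $G-\tilde Z$ containing no other terminal''. But $\tilde Z$ is itself a multiway cut, so by definition \emph{every} terminal of $T\setminus\tilde Z$ lies alone in its component of $G-\tilde Z$; your rule would delete all of $T\setminus\tilde Z$ and ``bound'' $|T|$ by $|\tilde Z|$ vacuously. That this is not safe is shown by a star with non-terminal center $c$ and $n$ terminal leaves, with $k=1$: the unique solution is $\{c\}$, yet after applying your rule $T=\emptyset$ and hence $W=\emptyset$, which contains no solution of the original instance. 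More fundamentally, in the deletable-terminal setting the number of terminals of a yes-instance cannot be bounded by any function of $k$ merely by pruning $T$ (the star again), so step (i) of Kratsch--Wahlstr\"om necessarily uses reduction rules that modify the instance (e.g., forcing high-degree vertices into the solution), after which the marked set is produced for the \emph{reduced} instance. (This also explains why the lemma, read literally with ``$T\subseteq W$'' for the original $T$, already presupposes a bounded terminal count---a looseness the paper inherits by compressing the three KW steps into one statement.) Your remaining two steps (closest solution via uncrossing, per-terminal gammoids with representative sets of size $\calO(k^2)$) are consistent with \cite{KratschW20}, though the gluing of per-terminal representatives into a single global solution is asserted rather than argued; since the paper defers all of this to the cited work, the terminal-reduction step is the one genuine error to fix.
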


The above lemma, along with Lemma~\ref{lemma:compression-H-minor-free-Pi-vertex-SFVS} and Lemma~\ref{lemma:kernel-planar-Pi-vertex-SFVS}, prove the results about \textsc{Vertex MwC-DT} in Theorem~\ref{thm:H-minor-graph-Pi-compression-vertex-SFVS-restated} and Theorem~\ref{thm:planar-Pi-kernel-vertex-SFVS}, respectively.

\subsubsection{Basic Toolkits Used in Compression for Subset FVS}

Unlike in the case of \textsc{Vertex MwC-DT}, we can not use the existing results for \textsc{Subset FVS}.
We modified the known results in the next part using the preliminary results mentioned here.

\paragraph{Expansion Lemma}
Let $t$ be a positive integer and $G$ a bipartite graph with
vertex bipartition $(P, Q)$.
A set of edges $M \subseteq E(G)$ is called a \emph{$t$-expansion of $P$ into $Q$} if 
$(i)$ every vertex of $P$ is incident with exactly $t$ edges of $M$, and 
$(ii)$ the number of vertices in $Q$ which are incident with at least one edge in $M$ is exactly $t|P|$.
We say that $M$ \emph{saturates} the endpoints of its edges.
Note that the set $Q$ may contain vertices which are \emph{not} saturated by $M$.

We generalize this notation of $t$-expansion to \emph{$t$-expansion with priorities}.
Consider a bipartite graph $G$ with vertex partition $(P, Q)$ and whose edges are partitioned into $E_r$  and $E_b$.
In other words, every edge has either red or blue color.
Informally, if available, we prefer including a red colored edge over a blue colored edge in a $t$-expansion.
\begin{definition}
A set of edges $M \subseteq E(G)$ is called a \emph{$t$-expansion with priority of $P$ into $Q$} if the following conditions are true.
\begin{enumerate}
\item Every vertex of $P$ is incident with exactly $t$ edges of $M$.
\item The number of vertices in $Q$ which are incident with at least one edge in $M$ is exactly $t|P|$.
\item Consider an edge $pq \in E(G)$ such that $p \in P$, $q \in Q \setminus V(M)$ (and hence $pq \not\in M$).
If $pq \in E_r$ then all edges in $M$ that are incident on $p$ are in $E_r$.
\end{enumerate}
\end{definition}
We call the third property as \emph{priority property}.
We define a measure $\mu$ for $t$-expansion $M$ of $P$ into $Q$ as follows:
$\mu(M) = \sum_{p \in P} x^p_r \cdot x^p_b$, where $x^p_r = 1$ if set $\{q\ |\ q \in Q\setminus V(M)\ \land\ pq \in E_r  \}$ is non-empty and $0$ otherwise, and $x^p_b = |\{q\ |\ q \in Q \land pq \in E_b \cap M\}|$.
Note that $M$ is a $t$-expansion of $P$ into $Q$ with priority property if and only if $\mu(M) = 0$.
We argue that if there is $t$-expansion of $P$ into $Q$ then there is $t$-expansion  of $P$ into $Q$ with priority property.
Moreover, such an expansion can be easily found in polynomial time using the following modification.
Let $M \subseteq E(G)$ be a $t$-expansion of $P$ into $Q$.
If $\mu(M) = 0$, then it is a desired $t$-expansion.
Otherwise, let $p$ be a vertex in $P$ for which value of $x^p_r \cdot x^p_b$ is non-zero. 
This implies, there exists $q_1 \in Q$ and $q_2 \in Q \setminus V(M)$ such that $pq_1 \in E_b \cap M$,  $pq_2 \in E_r$ (note that $pq_2 \not\in M$).
Consider $M' = (M \cup \{pq_2\}) \setminus \{pq_1\}$.
It is easy to verify that $M'$ is a $t$-expansion of $P$ into $Q$ and $\mu(M') < \mu(M)$.
As this swapping operation produces another $t$-expansion with reduced measure, this process results in a $t$-expansion with measure zero in polynomial time.

The following generalization of Hall's Matching Theorem is known as \emph{expansion lemma}.

\begin{lemma}[{\cite[Lemma 2.18]{CyganFKLMPPS15}}]
 \label{lem:expansion-lemma}
 Let $t$ be a positive integer and $G$ be a bipartite graph with vertex bipartition $(P,Q)$ such that $|Q| > t |P|$ and there are no isolated vertices in $Q$.
 Then there is nonempty vertex sets $X \subseteq P$ and $Y \subseteq Q$ such that 
 $(i)$ $X$ has a $t$-expansion into $Y$, and 
 $(ii)$ no vertex in $Y$ has a neighbour outside $X$. Furthermore two such sets $X$ and $Y$ can be found in time polynomial in the size of $G$.
\end{lemma}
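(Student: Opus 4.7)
The plan is to reduce the statement to a standard application of Hall's Matching Theorem via vertex duplication, together with a single Hall-violator recursion to trim down $P$ and $Q$. Concretely, I would construct an auxiliary bipartite graph $H$ obtained from $G$ by replacing each vertex $p \in P$ by $t$ twin copies $p^{(1)},\dots,p^{(t)}$, each adjacent to $N_G(p)$ in $Q$. A matching in $H$ saturating all duplicated vertices corresponds exactly to a $t$-expansion of $P$ into $Q$, and polynomial-time algorithms for maximum bipartite matching let us either find such a matching or extract a Hall violator.

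If $H$ admits a matching $M$ saturating every copy of every vertex of $P$, then let $Y \subseteq Q$ be the set of endpoints of $M$ in $Q$, and set $X := P$. By construction, $|Y| = t|P|$ and each $p \in P$ is incident with exactly $t$ edges of $M$, giving a $t$-expansion of $X$ into $Y$; moreover, since the bipartition is $(P,Q)$, every neighbour of a vertex in $Y \subseteq Q$ lies in $P = X$, so condition $(ii)$ holds trivially. Both $X$ and $Y$ are nonempty because $|Q| > t|P| \geq 0$ and $Q$ has no isolated vertex, forcing $P \neq \emptyset$.

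Otherwise, Hall's condition in $H$ fails, and we can compute in polynomial time a Hall violator; by averaging over twin copies we may assume this violator consists of full sets of copies, and hence yields a nonempty $P' \subseteq P$ with $|N_G(P')| < t|P'|$. Set $P_0 := P \setminus P'$ and $Q_0 := Q \setminus N_G(P')$. The key observation is that every $q \in Q_0$ is non-isolated in $G$ but has no neighbour in $P'$, so all its neighbours lie in $P_0$; in particular $P_0 \neq \emptyset$ (else $Q_0$ would contain isolated vertices of $G$, contradicting the hypothesis). Moreover, $|Q_0| = |Q| - |N_G(P')| > t|P| - t|P'| = t|P_0|$, so the induced subgraph $G[P_0 \cup Q_0]$ satisfies the hypotheses of the lemma with strictly smaller $P$-side. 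Recursing on this smaller instance yields the desired $X \subseteq P_0 \subseteq P$ and $Y \subseteq Q_0 \subseteq Q$, and condition $(ii)$ is preserved because all edges from $Y \subseteq Q_0$ already lie inside $P_0$ and hence inside $X$.

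The main technical subtlety is ensuring that the recursion is well-defined and polynomial, which boils down to verifying (a) $P_0 \neq \emptyset$, handled by the no-isolated-vertex hypothesis as above, and (b) that each recursive call performs only polynomially many matching/Hall-violator computations before $|P|$ decreases by at least one. Both are immediate from the construction, so the final bookkeeping and the verification of properties $(i)$ and $(ii)$ for the returned pair $(X,Y)$ are the only remaining routine steps.
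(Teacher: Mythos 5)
The paper does not prove this lemma; it is cited verbatim from Cygan et al.\ (Lemma~2.18 there, the $q$-expansion lemma), so there is no ``paper's own proof'' to compare against. Your argument is correct and is essentially the standard textbook proof: duplicate each $p \in P$ into $t$ twin copies, run bipartite matching, and either read off the $t$-expansion from a saturating matching (taking $X = P$, $Y$ the matched side) or extract a Hall violator which, after closing under twin copies, yields a nonempty $P' \subseteq P$ with $|N_G(P')| < t|P'|$; then delete $P'$ and $N_G(P')$ and recurse. The bookkeeping you flag as ``subtle'' is in fact sound: $|Q_0| > t|P_0|$ forces $Q_0 \neq \emptyset$, whose non-isolation in $G$ forces $P_0 \neq \emptyset$, and since every $q \in Q_0$ has all of its $G$-neighbours inside $P_0$, condition $(ii)$ in the induced subinstance lifts verbatim to $G$; the recursion strictly decreases $|P|$ and hence terminates after at most $|P|$ polynomial-time matching computations.
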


We get the following result by applying the simple modifications stated in the previous paragraph to the proof of Lemma~\ref{lem:expansion-lemma}.

\begin{lemma}\label{lem:expansion-lemma-priority} Let $t$ be a positive integer and $G$ be a bipartite graph with vertex bipartition $(P,Q)$ such that $|Q| > t |P|$ and there are no isolated vertices in $Q$.
Moreover,  $E(G) = E_r \uplus E_b$.
Then there is nonempty vertex sets $X \subseteq P$ and $Y \subseteq Q$ such that 
$(i)$ $X$ has a $t$-expansion with priority property into $Y$, and 
$(ii)$ no vertex in $Y$ has a neighbour outside $X$. Furthermore two such sets $X$ and $Y$ can be found in time polynomial in the size of $G$.
\end{lemma}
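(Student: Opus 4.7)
The plan is to derive the priority version from the standard Expansion Lemma (Lemma~\ref{lem:expansion-lemma}) and then iteratively upgrade the resulting $t$-expansion by the local swap operation already described in the paragraph preceding the statement. Concretely, I would first apply Lemma~\ref{lem:expansion-lemma} to $G$ to obtain nonempty sets $X \subseteq P$, $Y \subseteq Q$ and a $t$-expansion $M$ of $X$ into $Y$ such that no vertex of $Y$ has a neighbour in $P \setminus X$. Since condition (ii) in the target statement depends only on the pair $(X,Y)$ and not on the chosen edge set $M$, it will be preserved throughout the subsequent manipulation of $M$, provided that $X$ and $Y$ themselves are never modified.

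Next I would introduce the simple potential $\nu(M) := |M \cap E_b|$ (the number of blue edges in $M$) and run the following loop: as long as $M$ violates the priority property for the expansion of $X$ into $Y$, the definition of violation hands us a vertex $p \in X$, an edge $pq_1 \in M \cap E_b$ with $q_1 \in Y$, and an edge $pq_2 \in E_r$ with $q_2 \in Y \setminus V(M)$, and I then replace $M$ by $M' := (M \setminus \{pq_1\}) \cup \{pq_2\}$. The fact that $M'$ is again a $t$-expansion of $X$ into $Y$ follows from the routine observation that in any $t$-expansion each saturated vertex of $Y$ is incident to exactly one edge of $M$ (because $|V(M) \cap Y| = t|X| = |M|$), so removing $pq_1$ unsaturates $q_1$ while adding $pq_2$ saturates the previously unsaturated $q_2$, keeping $|V(M') \cap Y|$ equal to $t|X|$ and leaving the degree of every $X$-vertex in $M'$ equal to $t$. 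Since $\nu(M') = \nu(M)-1$ and $\nu$ is a non-negative integer, the loop terminates after at most $|M| = t|X|$ swaps, which is polynomial in the size of $G$.

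The only delicate point---and the sole potential obstacle---is ensuring that a witness $q_2 \in Y \setminus V(M)$ always exists when the priority property fails. This is where it matters that we interpret condition~(3) of the priority definition relative to the pair $(X,Y)$, so that the existential quantifier ranges over $Y \setminus V(M)$. If $|Y| = t|X|$, then $Y \setminus V(M) = \emptyset$ and the priority property is vacuously satisfied, so we are done without any swaps; otherwise, the very existence of a violation at some $p \in X$ furnishes the required $q_2 \in Y \setminus V(M)$ with $pq_2 \in E_r$. Combining these observations with the termination argument above produces the desired pair $(X,Y)$ and the corresponding $t$-expansion with the priority property in polynomial time.
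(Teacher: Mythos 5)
Your proposal is correct and follows essentially the same route as the paper: apply the standard Expansion Lemma to get $(X,Y,M)$, then repeatedly perform the red-for-blue swap from the paragraph preceding the lemma until no violation remains, with termination guaranteed by a decreasing potential (your $\nu(M)=|M\cap E_b|$ versus the paper's measure $\mu$, a cosmetic difference). Your explicit check that each saturated vertex of $Y$ meets exactly one edge of $M$, and your remark that the unsaturated witness must be sought in $Y\setminus V(M)$, are both consistent with how the paper later uses the priority property.
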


We need sets $X, Y$ of Lemma~\ref{lem:expansion-lemma-priority} with an additional property.
The following is an adaptation of the proof of the lemma in \cite[Lemma~$5$]{PhilipRST19} that states a similar result with a different premise.

\begin{lemma}\label{lem:extra-expansion-vertex} If the premises of Lemma~\ref{lem:expansion-lemma-priority} are satisfied then we can find, in polynomial time, sets $X, Y$ of the kind described in Lemma~\ref{lem:expansion-lemma-priority} and a vertex $w \in Y $ such that there exists a $t$-expansion $M$ (with priority property) from $X$ into $Y$ that does not saturate $w$.
\end{lemma}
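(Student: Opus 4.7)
The plan is to apply Lemma~\ref{lem:expansion-lemma-priority} iteratively to a shrinking subgraph, stopping as soon as the returned sets exhibit strict surplus of the image over $t$ times the domain (which automatically yields the desired unsaturated vertex). Set $G_1 := G$, with bipartition $(P_1,Q_1) := (P,Q)$. In iteration $i$, apply Lemma~\ref{lem:expansion-lemma-priority} to $G_i$ to obtain nonempty $X_i \subseteq P_i$, $Y_i \subseteq Q_i$, and a $t$-expansion $M_i$ of $X_i$ into $Y_i$ with the priority property such that every vertex of $Y_i$ has all its $G_i$-neighbors inside $X_i$. Without loss of generality enlarge $Y_i$ to be maximal: if some $q \in Q_i \setminus Y_i$ satisfies $N_{G_i}(q)\subseteq X_i$, add it to $Y_i$ (the $t$-expansion $M_i$ and the priority property are preserved, since $q \notin V(M_i)$).

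If at some iteration $|Y_i| > t|X_i|$ (Case A), then $M_i$ saturates exactly $t|X_i|$ vertices of $Y_i$, so any $w \in Y_i \setminus V(M_i)$ is a suitable unsaturated vertex; output $(X_i,Y_i,w)$. Otherwise $|Y_i| = t|X_i|$ (Case B), in which case we set $G_{i+1} := G_i - (X_i \cup Y_i)$, $P_{i+1} := P_i \setminus X_i$, $Q_{i+1} := Q_i \setminus Y_i$, and iterate. The premises of Lemma~\ref{lem:expansion-lemma-priority} persist: $|Q_{i+1}| = |Q_i| - t|X_i| > t|P_i| - t|X_i| = t|P_{i+1}|$, and maximality of $Y_i$ guarantees every vertex of $Q_{i+1}$ still has a neighbor in $P_{i+1}$, so there are no isolated vertices in $Q_{i+1}$. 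Since $|P_i|$ strictly decreases each step and Case~B requires $P_{i+1} \neq \emptyset$ (else, if $X_i = P_i$, then by maximality $Y_i = Q_i$ and $|Y_i| = |Q_i| > t|P_i| = t|X_i|$ would already place us in Case~A), the iteration reaches Case~A within at most $|P|$ steps.

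The main obstacle, and the one subtle point, is ensuring that the sets $X_i,Y_i,M_i$ delivered in the final iteration still witness the conclusion \emph{with respect to the original graph} $G$, rather than the reduced $G_i$. The conditions ``$X_i \subseteq P$'', ``$Y_i \subseteq Q$'', and that $M_i$ is a $t$-expansion in $G$ are immediate from $E(G_i) \subseteq E(G)$. The non-trivial requirement is the priority property in $G$: for $p \in X_i$, every red edge $pq \in E(G)$ with $q \in Q \setminus V(M_i)$ must force all of $M_i$'s edges at $p$ to be red. We verify this by showing no such edge $pq$ exists with $q \in Y_j$ for $j < i$, so that all relevant edges lie in $G_i$ where the priority property is already guaranteed. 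Indeed, by the lemma (together with maximality) applied to iteration $j$, every $q \in Y_j$ satisfies $N_{G_j}(q) \subseteq X_j$; and since $X_i \subseteq P_i = P_j \setminus (X_j \cup X_{j+1} \cup \cdots \cup X_{i-1})$ is disjoint from both $X_j$ and $P \setminus P_j$, no edge of $G$ connects $Y_j$ to $X_i$. Thus the priority property transfers faithfully from $G_i$ to $G$. The similarly trivial condition ``no vertex in $Y_i$ has a neighbor outside $X_i$'' in $G$ follows by the same disjointness argument, since neighbors of $Y_i$ in $G$ that are not in $X_i$ would have to lie in $P \setminus P_i$, but once again the previously removed $X_j$'s cannot be adjacent to $Y_i \subseteq Q_i$ (they were removed precisely because their partners in $Y_j$ had no outside neighbors, not the other way around)---so instead we observe directly that such an external neighbor would already have been excluded when applying the lemma in $G_i$ extended maximally. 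Each iteration runs in polynomial time and there are $\calO(|P|)$ iterations, giving the claimed polynomial-time bound.
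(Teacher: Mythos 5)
Your algorithm (iteratively shrink the graph by removing $X_j \cup Y_j$ until a strict surplus $|Y_i| > t|X_i|$ appears) is the same driving idea as the paper's, and your analysis of termination and of the priority property transferring from $G_i$ to $G$ is fine. The gap is in what you output: you return only the \emph{final} iteration's $(X_i,Y_i)$, and your argument that these sets satisfy the condition ``no vertex in $Y_i$ has a neighbour outside $X_i$'' in the \emph{original} graph $G$ does not hold. The lemma applied to $G_i$ only constrains $N_{G_i}(Y_i)$; it says nothing about edges from $Y_i$ to the previously removed $X_1,\dots,X_{i-1}$, which are absent from $G_i$. Your justification (``they were removed precisely because their partners in $Y_j$ had no outside neighbours'') inverts the direction of the constraint --- the lemma at step $j$ bounds the neighbourhood of $Y_j$, not of $X_j$, so $X_j$ is free to have neighbours in $Q_i$ and in particular in $Y_i$. ``Extending $Y_i$ maximally'' inside $G_i$ cannot repair this either, since maximality is again a statement about $G_i$ only.

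A concrete counterexample with $t=2$: $P=\{p_1,p_2\}$, $Q=\{q_1,\dots,q_5\}$, edges $p_1q_1, p_1q_2, p_1q_3, p_2q_3, p_2q_4, p_2q_5$. A legitimate first output of Lemma~\ref{lem:expansion-lemma-priority} is $X_1=\{p_1\}$, $Y_1=\{q_1,q_2\}$ with $|Y_1|=t|X_1|$, so you shrink to $G_2 = G - \{p_1,q_1,q_2\}$; there the lemma yields $X_2=\{p_2\}$, $Y_2=\{q_3,q_4,q_5\}$ with surplus, and you would output $(X_2,Y_2)$. But $q_3\in Y_2$ has neighbour $p_1\notin X_2$ in $G$, so $(X_2,Y_2)$ is not a valid pair for the original graph. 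The paper avoids this by accumulating: it returns $\hat X = X_1\cup\dots\cup X_i$ and $\hat Y = Y_1\cup\dots\cup Y_i$ (plus $w$), for which $N_G(\hat Y)\subseteq\hat X$ does hold since each $N_G(Y_j)\subseteq X_1\cup\dots\cup X_j$. It also stitches the expansions together by taking $M_1\cup\dots\cup M_i$ (and re-runs the local swapping to restore the priority property for the union). Your proof needs the same accumulation to be correct.
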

\begin{proof}
Let $X, Y$ be sets of the kind guaranteed to exist by Lemma~\ref{lem:expansion-lemma-priority}, and let $M$ be a $t$-expansion of $X$ into $Y$ with priority property.
It is enough to show that we can find, in polynomial time, a pair $X,Y$ for which $|Y| > t|X|$ holds.
We give a proof by algorithm.
We start by setting $X= Y = \emptyset$.
\begin{enumerate}
\item Find sets $X'\subseteq P $ and $Y'\subseteq Q $ as guaranteed to exist by Lemma~\ref{lem:expansion-lemma-priority}.
Let $M'$ be a $t$-expansion of $X'$ into $Y'$ with priority property.
If $|Y'| >t|X'|$ then return $((X \cup X)',(Y \cup Y'))$.
Otherwise, if there is a vertex \(w \in Q\setminus Y'\) which has no neighbour in \(P \setminus X'\) then return $((X\cup X'),(Y\cup Y'\cup\{w\}))$.

\item At this point we have $|X'| < |P|$ and $|Y'|=t|X'|$.
From above we get that there is $t$-expansion, say $M$, from $X$ into $Y$.
Since $X \cap X' = \emptyset = Y \cap Y'$ we get that $M \cup M'$ is a $t$-expansion of $X \cup X'$ into $Y \cup Y'$.
Set $\hat{X} \gets X \cup X', \hat{Y} \gets Y\cup Y'$.
Then (i) there is a \(t\)-expansion of \(\hat{X}\) into \(\hat{Y}\), (ii) no vertex in \(\hat{Y}\) has a neighbour outside \(\hat{X}\), and (iii) \(|\hat{Y}|=t|\hat{X}|\).
We apply the modification mentioned earlier to ensure that $M \cup M'$ is a $t$-expansion of $X \cup X'$ into $Y \cup Y'$ with priority property.

\item Let \(\hat{P}=(P-\hat{X}),\hat{Q}=(Q-\hat{Y})\). Consider the subgraph \(\hat{G}=G[\hat{P}\cup{}\hat{Q}]\) and its vertex bipartition \((\hat{P},\hat{Q})\).
Since every vertex in $\hat{Q}$ has at least one neighbour in $\hat{P}$ (otherwise we would have returned in Step~$(1)$) we get that there are no isolated vertices in the set
$\hat{Q}$ in graph $\hat{G}$.
Since $|\hat{Y}| = t|\hat{X}|$ and \(|Q|> t |P|\) we have that $|\hat{Q}|=|Q| - t|\hat{X}| > t|P| - t|\hat{X}|=t(|P| - |\hat{X}|)> t |\hat{P}|$. 
Thus graph $\hat{G}$ and its vertex bipartition $(\hat{P},\hat{Q})$ satisfy the premises of Lemma~\ref{lem:expansion-lemma-priority}.
Set $G\gets\hat{G}$,$P\gets\hat{P}$, $Q\gets\hat{Q}$, $X\gets\hat{X}$, $Y\gets\hat{Y}$ and go to Step~$(1)$.
\end{enumerate}
This concludes the description of the algorithm.

Note that before Step~$(1)$ is executed it is always the case that $(i)$ there is a $t$-expansion with priority property from $X$ into $Y$, $(ii)$ no vertex in $Y$ has a neighbour outside $X$, and $(iii)$ $|Y|=t|X|$.
So we get that \emph{if} the algorithm terminates (which it does only at step (1)) it returns a correct pair of vertex subsets.
Hence the algorithm is correct.

The graph $G$ from the premise of Lemma~\ref{lem:expansion-lemma-priority} has a vertex bipartition $(P,Q)$ with $(|P| > 0,|Q|>0 ,|Q| > t |P|)$, and the sets $\hat{X},\hat{Y}$ in Steps~$(2)$ and $(3)$ satisfy $0<|\hat{X}|<|P|$ and $|\hat{Y}| = t|\hat{X}|$.
So the sets $\hat{P},\hat{Q}$ of Step~$(3)$ satisfy $|\hat{P}| >0, |\hat{Q}|>0, |\hat{Q}|> t|\hat{P}|\).
Thus the graph $\hat{G}$ computed in Step~$(3)$ has strictly fewer vertices than the graph $G$ passed in to the previous Step~$(1)$.
Since we update $G\gets\hat{G}$ before looping back to Step~$ (1)$, we get that the algorithm terminates in polnomially many steps.
This concludes the proof of the lemma.
\end{proof}

\paragraph{Flowers and Blockers}
\label{paragraph:flowers-blockers}
For $A \subseteq V(G)$ a path with endpoints in $A$ and internal vertices not in $A$ is called an \emph{$A$-path}.
\begin{proposition}[Gallai's Theorem~\cite{Gallai64}]
\label{prop:gallai-theorem}
Let $A \subseteq V(G)$ and $k \in \mathbb{N}$.
If the maximum number of vertex-disjoint $A$-paths is strictly less than $k + 1$, then there exists a set $B \subseteq V(G)$ of at most $2k$ vertices that intersect every $A$-path.
\end{proposition}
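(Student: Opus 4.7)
The plan is to deduce Gallai's theorem from Mader's $A$-path min--max theorem together with a short counting argument. Write $\nu(G,A)$ for the maximum number of vertex-disjoint $A$-paths; we want to exhibit a hitting set of size at most $2\nu(G,A)\le 2k$. A naive first attempt is to take a maximum packing $P_1,\dots,P_t$ (with $t=\nu(G,A)\le k$) and declare $B$ to be its $2t$ endpoints in $A$; this fails, because an $A$-path $Q$ disjoint from those $2t$ endpoints may still exist, overlapping the packing only at interior non-$A$ vertices. Resolving such overlaps through ad-hoc augmenting-path exchanges on a maximum packing is the main obstacle, and it is what motivates appealing to a stronger tool.

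That stronger tool is Mader's theorem on $A$-paths (Mader, 1978), which I would invoke as a black box. It states that $\nu(G,A)$ equals the minimum, taken over all pairs $(U,\mathcal{F})$ consisting of a set $U\subseteq V(G)$ and a family $\mathcal{F}$ of pairwise disjoint subsets of $V(G)\setminus U$ such that every $A$-path in $G$ either meets $U$ or has its entire vertex set inside some single $F\in\mathcal{F}$, of the quantity $|U|+\sum_{F\in\mathcal{F}}\lfloor |F\cap A|/2\rfloor$. Mader's own proof uses matroid-matching machinery and I would not attempt to reprove it here; its existence is precisely what lets us bypass the augmenting-path trouble mentioned in the previous paragraph.

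Given Mader's theorem, fix an optimal cover $(U,\mathcal{F})$ of weight $\nu(G,A)\le k$ and build $B$ as the union of $U$ with, for each $F\in\mathcal{F}$, any $2\lfloor |F\cap A|/2\rfloor$ vertices of $F\cap A$ (that is, all of $F\cap A$ if $|F\cap A|$ is even, and all but one otherwise). A short counting argument then gives
\[
|B|\;\le\;|U|+\sum_{F\in\mathcal{F}}2\bigl\lfloor |F\cap A|/2\bigr\rfloor\;\le\;2\Bigl(|U|+\sum_{F\in\mathcal{F}}\bigl\lfloor |F\cap A|/2\bigr\rfloor\Bigr)\;=\;2\nu(G,A)\;\le\;2k.
\]
To see that $B$ meets every $A$-path $Q$: if $Q$ touches $U$ we are done, and otherwise $Q$ lies entirely in some $F\in\mathcal{F}$, so both endpoints of $Q$ belong to $F\cap A$; since $B$ omits at most one vertex of $F\cap A$, at least one endpoint of $Q$ is in $B$. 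All the substantive content is thus concentrated in Mader's theorem; once that is available, Gallai's $2k$ bound reduces to the elementary counting and case analysis above.
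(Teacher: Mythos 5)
The paper does not prove this proposition at all — it is stated as a black-box citation to Gallai's 1964 paper, and is used in that form inside Claim 1 of the \textsc{Subset FVS} kernelization. So there is no ``paper's own proof'' to compare against; the task is simply to judge whether your derivation is sound, and it is.

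Your counting argument is correct: given a cover $(U,\mathcal{F})$ of weight $|U|+\sum_F\lfloor|F\cap A|/2\rfloor=\nu\le k$, putting all of $U$ and all but at most one vertex of each $F\cap A$ into $B$ gives $|B|=|U|+\sum_F 2\lfloor|F\cap A|/2\rfloor\le 2\nu\le 2k$, and any $A$-path avoiding $U$ lives in a single $F$, so both its endpoints lie in $F\cap A$, of which $B$ misses at most one. This is the standard way to extract the crude $\tau\le 2\nu$ bound from the exact min--max equality, and it is how one would prove the proposition in practice.

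Two remarks on the invocation of the min--max theorem. First, attribution: the $A$-path min--max you state is itself Gallai's theorem (the very reference the paper cites), of which Mader's $\mathcal{S}$-path theorem is the later, more general extension to several terminal groups; either is a legitimate source, but it is a bit backwards to appeal to the generalization when the special case is the classical result with the right name on it. (And the matroid-matching proof is Lov\'asz's, not Mader's original one.) Second, the form of the min--max you quote is not the verbatim textbook statement — the usual formulations take $\mathcal{F}$ to be a (sub)partition and charge ``boundary'' vertices of each part, not just the $A$-vertices inside it. Your version, which minimizes over arbitrary families of pairwise-disjoint sets with the weaker $A$-path-containment property, is nevertheless correct: the packing side gives $\nu\le |U|+\sum_F\lfloor|F\cap A|/2\rfloor$ for every such pair $(U,\mathcal{F})$, and taking $\mathcal{F}$ to be the connected components of $G-U$ for an optimal $U$ realizes the value $\nu$, so broadening the class of covers does not move the minimum. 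It would be cleaner to just state the min--max in the ``components of $G-U$'' form, which is exactly what your cover construction in fact uses.
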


Consider a graph $G$ and set of terminal edges $T_E$.
A set $\{C^l_1, C^l_2, \dots, C^l_t\}$\footnote{We use $C^l$ to denote a cycle and $C$ to denote a connected component.} of $T_E$-cycles that contains vertex $z$ is called an \emph{$z$-flower} of order $t$ with respect to $T_E$, if the set of vertices $V(C^l_i) \setminus \{z\}$ are pairwise disjoint.

\begin{lemma}[{\cite[Lemma 14]{HolsK18}}]
\label{lemma:find-z-flower}
Consider a graph $G$, a set of terminal edges $T_E \subseteq E(G)$, and a vertex $z \in V(G)$.
There is an algorithm that runs in polynomial time and finds a $z$-flower of maximum order i.e. maximum number of $T_E$-cycles that intersects only in $z$.
\end{lemma}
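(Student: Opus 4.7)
The plan is to reduce the problem of computing a maximum $z$-flower to a classical polynomial-time problem on an auxiliary graph. First, I subdivide every terminal edge $e \in T_E$ of $G$ by inserting a new midpoint $m_e$, obtaining a graph $H$ with a distinguished set $M := \{m_e : e \in T_E\}$. Under this subdivision a $T_E$-cycle through $z$ in $G$ is exactly a cycle through $z$ in $H$ that additionally passes through some vertex of $M$. Consequently, a $z$-flower of order $t$ in $G$ corresponds precisely to $t$ cycles through $z$ in $H$, each containing a vertex of $M$ and pairwise vertex-disjoint outside of $\{z\}$.

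Next, I pass to $H - z$ and reformulate the problem: I seek the maximum collection of pairwise internally-vertex-disjoint paths in $H - z$, each with two distinct endpoints in $N_H(z)$ and each visiting at least one vertex of $M$. The ``must visit $M$'' condition is the key nontrivial constraint and will be handled by a standard two-copy gadget. I build a flow network $\mathcal{N}$ as follows: take two vertex-disjoint copies of $H - z$, where every vertex is split into an in/out pair joined by a unit-capacity arc (so that vertex capacities are enforced); for each $m \in M$, identify the two copies of $m$, which forces any $s$--$t$ path to transit through $M$; for each non-$M$ vertex $v$ of $H - z$, install a \emph{shared} unit-capacity splitter gadget across the two copies, so that $v$ can be used by at most one path across the entire network; finally, add a source $s$ with unit-capacity arcs to every vertex of $N_H(z)$ in the first copy, and a sink $t$ with unit-capacity arcs from every vertex of $N_H(z)$ in the second copy (incident arcs at $z$ in $T_E$ are handled by attaching the corresponding neighbor directly into the transit stage, analogous to subdivided internal terminal edges). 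A maximum $s$--$t$ flow in $\mathcal{N}$ then equals the maximum order of a $z$-flower.

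To finish, I invoke a polynomial-time max-flow algorithm on the unit-capacity network $\mathcal{N}$, and apply standard flow decomposition to extract integer path families; mapping them back through the subdivision produces the desired flower in $G$ with respect to $T_E$. The main obstacle will be verifying the two-copy gadget carefully, namely that (i) every integral $s$--$t$ flow of value $t$ decomposes into $t$ paths each transiting through some $m \in M$ and pairwise vertex-disjoint in $H - z$, and (ii) every valid $z$-flower induces a flow of its cardinality in $\mathcal{N}$. The shared splitters at each non-$M$ vertex are what guarantee (i); a little care is needed so that flow decomposition does not ``reuse'' a non-$M$ vertex across two paths via the two copies. Degenerate cases (for instance, a $T_E$-cycle that contains two terminal edges, or a cycle of length three on a single terminal edge at $z$) will need to be accounted for but do not affect the argument; they correspond to flows with unique decompositions once the gadget is in place.
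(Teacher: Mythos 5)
The paper does not prove this lemma at all: it is imported as a black box from Hols and Kratsch \cite{HolsK18} (and ultimately goes back to the flower computation in Cygan et al.\ for \textsc{Subset FVS}), so your attempt has to be judged on its own merits rather than against an in-paper argument.

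Your reduction has a genuine gap at the ``shared unit-capacity splitter gadget across the two copies.'' The two requirements you need are in direct conflict inside a flow network. To force every $s$--$t$ path to transit through $M$, the two copies of $H-z$ must be connected \emph{only} at the identified $M$-vertices; but then, for a non-$M$ vertex $v$, its copy-1 instance and its copy-2 instance are two vertices that do not lie in series on the flow paths, and you need the constraint ``total flow through $v^{(1)}$ plus total flow through $v^{(2)}$ is at most $1$.'' A coupled capacity of this kind between non-consecutive arcs is not expressible as a max-flow constraint: if you identify the two copies of $v$ into one capacity-$1$ vertex, a path can slip from copy~1 to copy~2 through $v$ without ever visiting $M$ (breaking soundness of the ``must visit $M$'' direction); if you keep them separate with independent unit capacities, an integral flow of value $t$ can use $v$ twice, once in each copy, so the extracted cycles need not be vertex-disjoint in $G$ (breaking the other direction). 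There is no standard gadget that resolves this, and this is not an artifact of your encoding: even the underlying problem of packing vertex-disjoint $A$-paths (with $A=N(z)$, no $M$-constraint) is already not a max-flow problem --- its min-max is Gallai's theorem, whose dual is not a cut. The known polynomial-time algorithm for the maximum $z$-flower goes through exactly this doubling idea but lands in a \emph{maximum matching} instance \`a la Gallai's reduction for $A$-paths: there, the two copies $v',v''$ of a non-$A$, non-$M$ vertex are joined by an edge, and the matching either takes $v'v''$ (the vertex is unused) or matches $v'$ and $v''$ into the path structure (the vertex is internal to exactly one path); this parity/matching mechanism is precisely what correctly couples the two copies, and it cannot be simulated by a flow. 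To repair your proof you would need to replace the flow computation by this matching reduction (or by Mader's $\mathcal{S}$-paths / matroid matching machinery), at which point you essentially recover the proof of \cite[Lemma~14]{HolsK18}.
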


A set $B \subseteq V(G) \setminus \{z\}$ of size $t$ is called an \emph{$z$-blocker} of size $t$ with respect to $T_E$, if each $T_E$-cycle through $z$ also contains at least one vertex of the set $B$.

\subsubsection{\textsc{Subset Feedback Vertex Set}}

As mentioned before, we can not directly use the existing results for \textsc{Subset FVS}.
We briefly sketch the kernelization algorithm of Hols and Kratsch~\cite{HolsK18} and highlight why is it not possible to get a result similar to that of Lemma~\ref{lemma:vertex-multicut-deletable-terminals-marking} for \textsc{Vertex MwC-DT} as a direct corollary of their work.
Before that we mention the following two variation of the problem.
For a subset $T$ of $V(G)$, a $T$-cycle is a cycle that contains at least one vertex in $T$.
We define $T_E$-cycle for a set $T_E \subseteq E(G)$ in the similar way.

\defproblem{\textsc{Subset Feedback Vertex Set (Subset FVS)}}{Graph $G$, set $T \subseteq V(G)$, and integer $k$}{Does there exist a set $Z \subseteq V(G)$ of size at most $k$ that hits all $T$-cycles?}

\defproblem{\textsc{Edge-Subset Feedback Vertex Set (Edge-Subset FVS)}}{Graph $G$, set $T_E \subseteq E(G)$, and integer $k$}{Does there exist a set $Z \subseteq V(G)$ of size at most $k$ that hits all $T_E$-cycles?}

\medskip
There are simple polynomial-time reductions between these two problems that do not change the solution size $k$~\cite{CyganPPW13}.
More precisely, given an instance $(G, T, k)$ of \textsc{Subset FVS}, one can obtain an equivalent instance $(G', T_E, k)$ of \textsc{Edge-Subset FVS} in polynomial time.
Also, given an instance $(G', T_E, k)$ of \textsc{Edge-Subset FVS}, one can obtain an equivalent instance $(G, T, k)$ of \textsc{Subset FVS} in polynomial time such that $|T| \le |T_E|$.
Hence, these two problems are equivalent with respect to the size of kernel and the running time of an FPT algorithm. 
It is also easy to verify that these reduction insures that $G$ is an $H$-minor free graph if and only if $G'$ is.

As in \cite{HolsK18}, we find it convenient to work with  \textsc{Edge-Subset FVS}.
Hols and Kratsch~\cite{HolsK18} presented a kernelization algorithm for \textsc{Edge-Subset FVS} on general graphs using the matroid-based tools of Kratsch and Wahlström~\cite{KratschW20}.
We present an overview of their algorithm and highlight the steps which may not preserve `$H$-minor-free-ness' of the input graph.

The kernelization algorithm by Hols and Kratsch consists of two parts. 
In the first part, they establish a randomized polynomial kernelization for the problem parameterized by $|T_E| + k$.
For an instance $(G, T_E, k)$, the algorithm marks subset $W \subseteq V(G)$ of size at most $\calO(k \cdot |T_E|^2)$ such that $V(T_E) \subseteq W$ and if there is a solution of $(G, T_E, k)$, then there is a solution contained in $W$.
It marks all vertices in $V(G) \setminus W$ as undeletable. 
The algorithm constructs a graph $G'$ from a copy of $G$ by taking a torso operation.

The second part of Hols and Kratsch algorithm reduce the size of set $T_E$ until it is polynomially bounded in $k$. 
During this reduction process, the algorithm detects certain pairs $\{u, v\}$ of different vertices with the property that each solution must contain at least one of the vertices.
To handle this case, the authors generalized the \textsc{Edge-Subset FVS} problem to \textsc{Pair Constrained Edge-Subset FVS} problem.
The input of this problem consists of an addition set $\calP$ of pair of vertices.
The objective is to find a solution such that for every  $\{u, v\} \in \calP$, either $u$ or $v$ is in it.
The algorithm reduces the number of terminal edges for instance $(G, T_E, k, \calP = \emptyset)$ of \textsc{Pair Constrained Edge-Subset FVS}.
It obtains an equivalent instance of \textsc{Edge-Subset FVS} as follows: for every $\{u, v\} \in \calP$, it adds $uv$ to $T_E$ and adds two parallel edges to $E(G)$ whose endpoints are $u, v$. 

The last operation of adding edges corresponding to pairs in pair-constraints might not preserve $H$-minor-free-ness of the input graph.
Hence, our algorithm can not rely on this generalization of \textsc{Edge-Subset FVS} problem.
We modify Hols and Kratch's algorithm to ensure that it does not add any edges in the existing graph.
This modification includes new reduction rule and a different application of $2$-Expansion Lemma.

We present the kernelization algorithm in the similar fashion as that of Hols and Kratch's algorithm.
First we present a randomized polynomial kernelization for parameter $|T_E| + k$.
Then,  we present a polynomial time algorithm to reduce the number of terminal edges. 

\paragraph{Randomized polynomial kernelization for parameter $|T_E|  + k$}
The first part of the Hol and Kratsch's algorithm consists of the following four steps (See Section~$3$ in~\cite{HolsK18}): $(i)$ Analyzing solutions (to prove that there is a \emph{dominant} solution of any \yes\ instance); $(ii)$ Setting up the gammoid; $(iii)$ Applying the representative set lemma; and $(iv)$ Shirking the input graph.
We summarize the first three steps in the following lemma.

\begin{lemma}
\label{lemma:subset-FVS-marking}
Suppose $(G, T_E, k)$ is a \yes\ instance of \textsc{Edge-Subset FVS}.
There exists a set $W \subseteq V(G)$ of size at most $\calO(k \cdot |T_E|^2)$ such that $V(T_E) \subseteq W$ and there is a solution of $(G, T_E, k)$ contained in $W$.
Furthermore, there is an algorithm that finds such a set in randomized polynomial time with failure probability $\calO(2^{-|V(G)|})$.
\end{lemma}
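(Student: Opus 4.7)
The plan is to follow the matroid-based kernelization template of Kratsch and Wahlstr{\"{o}}m, specifically its adaptation to \textsc{Edge-Subset FVS} by Hols and Kratsch. I would proceed in three stages mirroring the three sub-steps named in the lemma: isolating a dominant solution, setting up a gammoid, and applying the representative sets lemma to extract $W$.

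For the first stage, among all minimum-size solutions $Z \subseteq V(G)$ with $|Z| \le k$, pick $Z^*$ minimizing a potential function measuring ``closeness'' to the terminal edges---for instance the lexicographic vector indexed by $z \in Z^*$ of the minimum number of vertex-disjoint paths from $z$ to $V(T_E)$ in $G - (Z^* \setminus \{z\})$. The structural claim I would establish is that for every $z \in Z^*$ there exists a ``fat witness'' $B_z \subseteq V(G)$ of size $\calO(|T_E|)$---essentially the union of $\calO(|T_E|)$ vertex-disjoint paths from $z$ to $V(T_E)$ avoiding $Z^* \setminus \{z\}$---such that any attempt to replace $z$ by a vertex outside $B_z$ would yield either a smaller or a strictly closer solution, contradicting the dominance of $Z^*$. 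The argument is a careful local exchange that exploits the flower/blocker framework reviewed in Section~\ref{paragraph:flowers-blockers}.

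For the second stage, I would introduce the gammoid $M = M(G,S)$ with source multiset $S$ drawn from $V(T_E)$ of size $\calO(|T_E|)$; independent sets are subsets of $V(G)$ linkable to $S$ by vertex-disjoint paths, so the rank of $M$ is $\calO(|T_E|)$. A randomized linear representation of $M$ over a field of size $2^{\Theta(|V(G)|)}$ is computable in polynomial time via the Schwartz--Zippel lemma, succeeding with failure probability $\calO(2^{-|V(G)|})$ as the lemma requires. The fat witnesses from Stage 1 are independent in $M$ by construction, exactly the input demanded by the representative-set machinery.

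For the third stage, apply the Kratsch--Wahlstr{\"{o}}m representative sets lemma to the family $\{B_v : v \in V(G)\}$ inside a suitable truncation of $M$ of rank $\calO(k + |T_E|)$, yielding a sub-family of size $\calO(k \cdot |T_E|^2)$ that represents every fat-witness configuration of a hypothetical solution; take $W$ to be the union of the surviving witness sets together with $V(T_E)$. The main obstacle is Stage 1: one has to formulate the dominance measure sharply enough that the exchange argument goes through in the $T_E$-cycle setting, since unlike in plain \textsc{FVS} a local swap of $z$ for a ``closer'' vertex $z'$ may destroy a $T_E$-cycle through $z$ only to reveal a different $T_E$-cycle avoiding $z'$, so the potential function must dominate any such cascade. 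Once the fat-witness property is secured, Stages 2 and 3 are by now a routine application of the Kratsch--Wahlstr{\"{o}}m toolkit.
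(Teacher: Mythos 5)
Your proposal follows exactly the same route as the paper: the paper does not re-prove this lemma but simply cites Section~3 of Hols and Kratsch \cite{HolsK18}, summarizing their first three steps (dominant solution analysis, gammoid construction, representative sets) as this lemma, and your sketch is a faithful reconstruction of that argument, correctly identifying the dominance/exchange step as the technical heart and the gammoid and representative-set steps as routine applications of the Kratsch--Wahlstr\"om toolkit.
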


\paragraph{Reducing the size of $T_E$}
In this section, we present an algorithm that given an instance $(G, T_E, F, k)$ of $H$-\textsc{Minor-Free Edge-Subset FVS with Undeletable Vertices}, runs in polynomial time and returns another equivalent instance $(G', T'_E, F, k')$ such that the number of edges in $T'_E$ are bounded by some polynomial function of $k'$.

As in \cite{HolsK18}, we modify the input instance to ensure that if there is a set $Z \subseteq V(G)$ that hits all $T_E$-cycles then there is a set $Z' \subseteq V(G)$ that hits all $T_E$-cycles, $|Z'| \le |Z|$, and $Z' \cap V(T_E) = \emptyset$. 
We delete vertex $v \in V(G)$ and decrease $k$ by one if $v$ has a self-loop and the corresponding edge is in $T_E$.
After this, we delete all loops.
Consider parallel edges with endpoints $v$ and $w$.
If no edge among these parallel edges is in $T_E$, then we delete all except one edge.
If at least one these edges is in $T_E$, then we delete all except two edges, so that one of remaining edges is in $T_E$.
For every edge $uv$ in $T_E$, we subdivides it twice to get a path $(u, u_1, v_1, v)$.
In $T_E$, we replace $uv$ by $u_1v_1$.
Consider a set $Z \subseteq V(G)$ that hits all $T_E$-cycles in such modified graph.
One can find set $Z' \subseteq V(G)$ that hits all $T_E$-cycles such that $|Z'| \le |Z|$ and $Z' \cap V(T_E) = \emptyset$.

We start with restating some reduction rules.
\begin{reduction rule}[{\cite[Rule 1]{HolsK18}}]
\label{rr:trivial-no}
If $k \le 0$ and there is a $T_E$-cycle, then return a trivial \no\ instance.
\end{reduction rule}

\begin{reduction rule}[{\cite[Rule 2]{HolsK18}}]
\label{rr:delete-bridge} Delete all bridges and connected components not containing edge from $T_E$.
\end{reduction rule}

\begin{reduction rule}[{\cite[Rule 3]{HolsK18}}]\label{rr:terminal-bridge} If there is an edge $e \in T_E$ such that $e$ is a bridge in $G - (T_E \setminus \{e\})$, then remove $e$ from $T_E$ i.e. return instance $(G, T_E \setminus \{e\}, F, k)$.
\end{reduction rule}

Suppose there is a vertex set $Z \subseteq V(G)$ such that $Z$ hits all $T_E$-cycles and $Z \cap V(T_E) = \emptyset$.
We note that $Z$ may intersects with $F$.
 
\begin{reduction rule}[{\cite[Rule 6]{HolsK18}}] \label{rr:flower} Suppose there is a $z$-flower of order $k + 1$ in $G$ for a vertex $z \in Z$.
If $z \in F$ then return a trivial \no\ instance otherwise return instance $(G - \{z\}, T_E, F, k - 1)$.
\end{reduction rule}

The above reduction rule can be applied for any vertex  in $V(G)$.
But, as in \cite{HolsK18}, applying this rule only for vertices in $Z$ suffices for the purpose. 

We need to define some terms and a subroutine before stating our final reduction rule.
Consider a set of vertices $Z \subseteq V(G)$ that hits every $T_E$-cycle in $G$ and $Z \cap V(T_E) = \emptyset$.
Hence, every edge $e \in T_E$ is a bridge in $G - Z$.
Consider the graph obtained from $G - Z$ by removing all edges in $T_E$.
Following Cygan et al.~\cite{CyganPPW13} and Hols and Kratsch~\cite{HolsK18}, we call such a component \emph{bubble}.
By the definition, every bubble is a connected subgraph of $G$.
Also, no two bubbles can be connected by more than one edge of $T_E$.

Consider the graph $\calH^{+}_Z$ obtained from $G$ by contracting all edges in a spanning tree for every bubble in $G$. 
Note that every $T_E$-cycle in $\calH^{+}_Z$ must contain a vertex of the set $Z$.
Let $\calH_Z$ be the graph obtained from $\calH^{+}_Z$ by deleting all vertices in $Z$.
It is easy to see that $\calH_Z$ is a forest and $E(\calH_Z) = T_E$.
Define $\psi:V(\calH_Z) \rightarrow 2^{V(G) \setminus Z}$ as follows:
$\psi(u) = C$ if $C$ is a bubble (i.e. connected component) in $G - Z - T_E$ and $u$ is obtained by contracting all edges in a spanning tree of $G[C]$.
For a set $U \subseteq V(\calH_Z)$, $\psi(U) = \bigcup_{u \in U} \psi(u)$.

A vertex $u$ in $V(\calH_Z)$ is called solitary, leaf, and inner if the degree of $u$ in $\calH_Z$ is zero, one, and at least two, respectively.
As $\calH_Z$ is a forest and $E(\calH_Z) = T_E$, it is sufficient to bound the number of non-solitary vertices in $\calH_Z$ to bound the size of $T_E$.
To bound these, we argue that any simple path in $\calH_Z$, with all internal vertices of degree two, is of length $\calO(k^2)$.
Also, the number of non-solitary leaves in $\calH_Z$ is at most $\calO(k^3)$.
We denote the collection of non-solitary leaves in $\calH_Z$ by $L(\calH_Z)$. 
As $\calH_Z$ is obtained from $\calH^{+}_Z$ by removing vertices in $Z$, we have $L(\calH_Z) \subseteq V(\calH^{+}_Z)$.
For a vertex $z \in Z$, let $L_z$ be the set of non-solitary leaves adjacent with $z$ in $\calH^+_Z$.

We now specify a subroutine used by the reduction rule to delete some edges in graph $G$.
The subroutine identifies a \emph{high degree} vertex $z \in Z \subseteq V(G)$ and deletes some edges incident on it.
In $\calH^+_Z$, these edges correspond to edges incident on $z$ and some leaf in $L_z$.
We describe the subroutine assuming that a blocker and a bipartite graph constructed by the subroutine satisfies desired properties.
We justify these assumptions in Claims~\ref{claim:intersection-inclusion}, \ref{claim:premise-expansion}, and \ref{claim:subroutine-running}.

\subparagraph{Subroutine.}
The subroutine takes as input an instance $(G, T_E, F, k)$ and set of vertices $Z \subseteq V(G)$ of size at most $8k$ that hits all $T_E$-cycles and $Z \cap V(T_E) = \emptyset$, and returns another instance $(G', T_E', F', k')$.
Without loss of generality, we assume that Reduction Rules~\ref{rr:trivial-no}, \ref{rr:delete-bridge}, \ref{rr:terminal-bridge}, and \ref{rr:flower} are not applicable on $(G, T_E, F, k)$.

The subroutine starts by constructing graph $\calH^+_Z$ as described above.
If all vertices in $Z$ are adjacent with at most $10(k + 2)^2$ non-solitary leaves in $L(\calH_Z)$ then the algorithm returns $(G, T_E, F, k)$.
Otherwise, let $z$ be a vertex in $Z$ that is adjacent with at least $10(k + 2)^2 + 1$ non-solitary leaves in $L(\calH_Z)$.

Let $L_z$ be the set of non-solitary leaves adjacent with $z$ in $\calH^+_Z$.
Let $T_E^{\circ} \subseteq T_E$ be the collection of edges that are incident on $L_z$.
The subroutine computes a blocker $B_z$  with respect to $T_E^{\circ}$  in graph $G - (Z \setminus \{z\})$ such that its size is at most $2k$,  and $B_z \cap V(T_E^{\circ}) = \emptyset$ and $B_z \cap \psi(L_z) = \emptyset$.
Let $\calC_1$ be the collection of all connected components of  $G - (\{z\} \cup B_z \cup (Z \setminus \{z\}))$\footnote{We prefer this notation over $G - (B_z \cup Z)$ as $\{z\}$ and $B_z \cup (Z \setminus \{z\})$ play different roles in the subroutine.}.
Let $\calC_2$ be the subset of $\calC_1$ such that for every $C$ in $\calC_2$, $V(C) \cap \psi(L_z) \neq \emptyset$.
In Claim~\ref{claim:intersection-inclusion},  we argue that for any $C$ in $\calC_2$, there is a unique vertex $u$ in $L_z$ such that $V(C) \cap \psi(u) \neq \emptyset $.
Moreover, if $V(C) \cap \psi(u) \neq \emptyset$ then $\psi(u) \subseteq V(C)$. 

The subroutine constructs an auxiliary bipartite graph $G_z(B_z \cup (Z \setminus \{z\}), \calC)$ as follows:
For every connected component $C_i$ in $\calC_2$, it adds a vertex $v_i$ corresponding to it in $\calC$.
It adds an edge $uv_i$ for $u \in B_z \cup (Z \setminus \{z\})$ and $v_i \in \calC$ if and only if there is an edge in $E(G)$ incident on $u$ whose another endpoint is in $V(C_i)$.
It partitions $E(G_z)$ into $E_r$ and $E_b$ as follows:
Consider edge $uv_i$ in $E(G_z)$. 
If there is $T_E^{\circ}$-path from $z$ to $u$ whose all interval vertices are in $V(C_i)$ then the subroutine includes $uv_i$ in $E_r$, otherwise it includes it in $E_b$.

Suppose, $G_z$ and $t = k + 2$ satisfies the premise of Lemma~\ref{lem:expansion-lemma-priority}, and sets $X \subseteq B_z \cup (Z \setminus \{z\}), Y \subseteq \calC$ and a vertex $y \in Y$ are of the kind described in Lemma~\ref{lem:extra-expansion-vertex}. 
The subroutine construct graph $G'$ by deleting all the edges incident on $z$ whose other endpoint is in $V(C_y) \cap \psi(L_z)$. 
Here, $C_y$ is the connected component corresponding to vertex $y$ in $\calC$.
It returns instance $(G', T_E, F, k)$.
This completes the description of the subroutine.
See Figure~\ref{fig:graph-partition}

\begin{figure}
\begin{center}
\includegraphics[scale=0.9]{../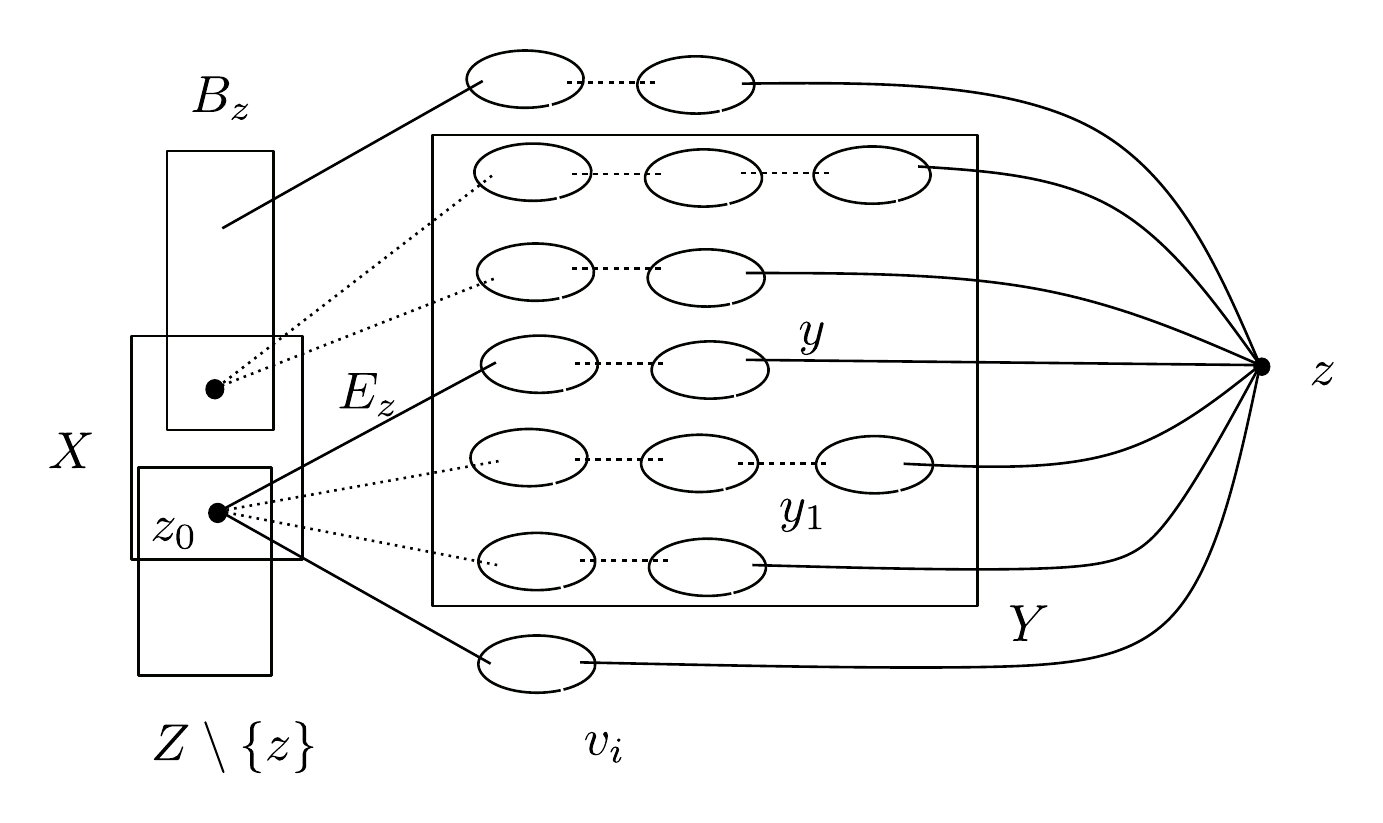}
\end{center}
\caption{Partition of the graph while apply the subroutine. 
Dashed edges represents terminal edges while dotted edges are edges in matching $M$. 
Vertices $y, y_1$ and $v_i$ in $G_z$ corresponds to connected components $C_y, C_1$ and $C_i$ in $G$, respectively in the proof of Lemma~\ref{lemma:rr-expansion-correct}.\label{fig:graph-partition}}
\end{figure}

We argue the running time and some desired properties of intermediate states in the subroutine in the following claims.

\begin{claim}\label{claim:blocker-properties}
There exists a blocker with the desired properties.
Moreover, the subroutine can compute such blocker in time polynomial in the size of input. 
\proof
As Reduction~\ref{rr:flower} is not applicable on $(G, T_E, F, k)$,  there is no $z$-flower of order $k + 1$ in $G$ with respect to $T_E$.
As $T_E^{\circ} \subseteq T_E$, there is no $z$-flower of order $k + 1$ in $G - (Z \setminus \{z\})$ with respect to $T_E^{\circ}$.
The maximum number of vertex disjoint $\psi(L_z)$-paths in $G - (Z \setminus \{z\})$ is at most $k$, as otherwise the $\psi(L_z)$-paths together with vertex $z$ would correspond to a $z$-flower of order $k + 1$ in $G - (Z \setminus \{z\})$.
From Proposition~\ref{prop:gallai-theorem}, it follows that there exists a set $B_z \subseteq V(G) \setminus Z$ of size at most $2k$ intersecting every $L_z$-path.
Since every $T_E^{\circ}$-cycle through $z$ in $G - (Z \setminus \{z\})$ must contain an $L_z$-path in $G - Z$, set $B_z$ is a $z$-blocker of size at most $2k$ in $G - (Z \setminus \{z\})$.

We modify $z$-blocker $B_z$, without increasing its size, to obtain another $z$-blocker that does not intersect with $\psi(L_z)$.
Suppose there exists a vertex $u$ in $L_z$ such that $B_z \cap \psi(u) \neq \emptyset$.
As $u$ is a non-solitary leaf in $\calH_Z$, there exists a unique vertex $u_1$ in $\calH_Z$ which is adjacent with $u$ in $\calH_Z$.
Moreover, the unique edge $ww_1$, where $w \in \psi(u)$ and $w_1 \in \psi(u_1)$, is a terminal edge in graph $G$ (more specifically $ww_1 \in T_E^{\circ}$).
By the definition of $T_E^{\circ}$, no $T_E^{\circ}$-cycle has all its vertices in $\{z\} \cup \psi(u)$.
By the definition of $\psi$, the only vertex adjacent with $\psi(u)$ in $V(G) \setminus Z$ is $w_1$.
Hence, $B'_z = B_z \cup \{w_1\}\setminus \psi(u)$ is another $z$-blocker of size at most $|B_z|$ that intersects fewer vertices in $\psi(L_z)$.
Repeated application of this process results in a $z$-blocker that does not intersect $\psi(L_z)$.

For notational convenience, we assume that $B_z$ is a $z$-blocker of size at most $2k$ that does not intersect $\psi(L_z)$.
It remains to argue that $B_z$ does not intersect $V(T^{\circ}_E)$.
Suppose, there exist edge $ww_1$ in $T_E^{\circ}$ such that $B_z \cap \{w, w_1\} \neq \emptyset $.
By the construction, there are two vertices $u, u_1 \in V(\calH_Z)$ such that $u \in L_z$ and $ww_1$ is the terminal edge across $\psi(u)$ and $\psi(u_1)$.
As $B_z \cap \psi(L_z) = \emptyset$, we have $w_1 \in B_z$.
Recall that the modifications mentioned at the start of this subsection ensure that there exists a unique neighbour say $w'_1$ of $w_1$ in $\psi(u_1)$.
Hence, $B_z \cup \{w'_1\} \setminus \{w_1\}$ is another $z$-blocker of size at most $|B_z|$ that intersects fewer vertices in $V(T_E^{\circ})$.
Repeated application of this process results in a $z$-blocker that does not intersect $V(T_E^{\circ})$.
\uend
\end{claim}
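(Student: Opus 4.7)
The plan is to obtain the blocker in three stages: first produce some $z$-blocker of size at most $2k$ via Gallai's theorem, then locally modify it to avoid $\psi(L_z)$, and finally locally modify it again to avoid $V(T_E^{\circ})$. Throughout, each modification must strictly not increase the size, so that after finitely many swaps we arrive at a blocker with all three required properties.

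Since Reduction Rule~\ref{rr:flower} is no longer applicable, there is no $z$-flower of order $k+1$ in $G$ with respect to $T_E$, hence none in $G-(Z\setminus\{z\})$ with respect to $T_E^{\circ}\subseteq T_E$. Every $T_E^{\circ}$-cycle through $z$ in $G-(Z\setminus\{z\})$ contains an edge of $T_E^{\circ}$, and by construction every edge of $T_E^{\circ}$ has at least one endpoint in $\psi(L_z)$; tracing such a cycle from $z$ and removing its occurrences of $z$ yields a $\psi(L_z)$-path in $G-(Z\setminus\{z\})-\{z\}$. Conversely, any family of vertex-disjoint $\psi(L_z)$-paths plus the vertex $z$ closes up into a $z$-flower of the same order. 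Hence the maximum number of vertex-disjoint $\psi(L_z)$-paths is at most $k$, and Gallai's theorem (Proposition~\ref{prop:gallai-theorem}) produces in polynomial time a hitting set $B_z^{(0)}$ of size at most $2k$ that intersects every such path; this makes $B_z^{(0)}$ a $z$-blocker for $T_E^{\circ}$ in $G-(Z\setminus\{z\})$.

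Next I would remove intersections with $\psi(L_z)$. Suppose $v\in B_z^{(0)}\cap\psi(u)$ for some $u\in L_z$. Because $u$ is a non-solitary leaf, $u$ has exactly one neighbour $u_1$ in $\calH_Z$, and by the construction of $\psi$ there is exactly one terminal edge $ww_1$ crossing out of $\psi(u)$ into $\psi(u_1)$; moreover no $T_E^{\circ}$-cycle lies entirely inside $\{z\}\cup\psi(u)$. Hence every $T_E^{\circ}$-cycle through $z$ that visits $\psi(u)$ must pass through $w_1$, so replacing all elements of $B_z^{(0)}\cap\psi(u)$ by the single vertex $w_1$ produces a $z$-blocker of no larger size with strictly fewer intersections with $\psi(L_z)$. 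Iterating over the (finitely many) leaves yields a $z$-blocker $B_z^{(1)}$ of size at most $2k$ disjoint from $\psi(L_z)$.

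Finally I would remove intersections with $V(T_E^{\circ})$. If $v\in B_z^{(1)}$ is an endpoint of some $ww_1\in T_E^{\circ}$, then since $B_z^{(1)}\cap\psi(L_z)=\emptyset$ the vertex $v$ must be the $\psi(u_1)$-endpoint $w_1$ (the other endpoint lies in $\psi(u)$ with $u\in L_z$). The subdivision preprocessing at the start of this subsection guarantees that $w_1$ has a unique neighbour $w_1'$ inside $\psi(u_1)$, and every $T_E^{\circ}$-cycle through $z$ that uses $w_1$ must also traverse $w_1'$; replacing $w_1$ by $w_1'$ therefore preserves the blocker property and the size, while strictly decreasing $|B_z\cap V(T_E^{\circ})|$ and preserving disjointness from $\psi(L_z)$ (since $w_1'\in\psi(u_1)$ and $u_1\notin L_z$, as $u_1$ has degree at least two in $\calH_Z$ by the fact that $u$ is a non-solitary leaf adjacent to $u_1$). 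Iterating terminates and gives the desired blocker. The main obstacle I expect is making sure the two rounds of swaps do not reintroduce intersections of the other forbidden type; this is why the order (first clear $\psi(L_z)$, then clear $V(T_E^{\circ})$) matters, and why the swap in the last step must land inside $\psi(u_1)$ for some non-leaf $u_1$. Each modification is a single vertex swap and reduces a non-negative integer potential (either $|B_z\cap\psi(L_z)|$ or $|B_z\cap V(T_E^{\circ})|$), so the whole procedure runs in polynomial time.
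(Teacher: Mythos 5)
Your proposal follows the paper's argument step by step: Gallai via the inapplicability of the flower rule to get a size-$2k$ blocker, then a first round of swaps to clear $\psi(L_z)$ by rerouting through the unique gate vertex $w_1$ of each leaf bubble, then a second round to clear $V(T_E^{\circ})$ by moving $w_1$ one step into $\psi(u_1)$ using the subdivision. That is exactly the paper's proof.

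One justification in your third stage is wrong, however. You assert that $u_1 \notin L_z$ because ``$u_1$ has degree at least two in $\calH_Z$ by the fact that $u$ is a non-solitary leaf adjacent to $u_1$.'' A leaf's unique neighbour in a tree can itself be a leaf (the component of $\calH_Z$ could be a single edge $uu_1$), so the degree bound does not follow. Fortunately you do not need it: at the point where this swap is applied you already have $B_z^{(1)}\cap\psi(L_z)=\emptyset$ and you have just identified $w_1 \in B_z^{(1)}\cap\psi(u_1)$, so $\psi(u_1)\not\subseteq\psi(L_z)$ and hence $u_1\notin L_z$ immediately. That shorter argument is what makes the swap preserve disjointness from $\psi(L_z)$; replace the degree claim with it. (The paper itself does not spell this out, so you were right to sense that something needed justifying here, but the justification you reached for is the wrong one.)

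Minor wording point: you write that a family of vertex-disjoint $\psi(L_z)$-paths ``plus $z$ closes up into a $z$-flower of the same order.'' As in the paper, this needs that each path's two endpoints sit in bubbles adjacent to $z$, which holds because the endpoints lie in $\psi(L_z)$ and every $u\in L_z$ is adjacent to $z$ in $\calH^+_Z$; stating that explicitly would tighten the argument, but this is a presentational remark, not a gap distinct from the paper's.
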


\begin{claim}[resume]\label{claim:intersection-inclusion} For any $C$ in $\calC_2$, there is a unique vertex $u$ in $L_z$ such that $V(C) \cap \psi(u) \neq \emptyset $.
Moreover, if $V(C) \cap \psi(u) \neq \emptyset$ then $\psi(u) \subseteq V(C)$. 
\proof
Let $C$ be any connected component of $G - (\{z\} \cup B_z \cup (Z \setminus \{z\}))$ such that there is a vertex corresponding to it in $\calC$.
We claim that there is at most one vertex $u$ in $L_z \subseteq V(\calH_Z)$ such that $V(C) \cap \psi(u) \neq \emptyset$.
Assume, for the same of contradiction, there are two vertices $u_1, u_2$ in $L_z$ such that $\psi(u_1)$ and $\psi(u_2)$ intersect with $V(C)$.
As $G - ({z} \cup B_z \cup(Z \setminus \{z\}))$ is a subgraph of $G - Z$, there is a connected component $C'$ of $G - Z$ such that $V(C) \subseteq V(C')$.
Consider a path in $G - Z$ that starts at a point in $\psi(u_1)$ and ends at a point in $\psi(u_2)$.
Note that this path does not contain vertex $z$.
By the construction of $\calH_Z$, the corresponding unique path connecting $u_1$ to $u_2$ in $\calH_Z$ contains an edge in $T^{\circ}_E$.
By the definition of $L_z$, there are edges in $G$ incident on $z$ whose other endpoints are in $\psi(u_1)$ and in $\psi(u_2)$.
By the construction of function $\psi$, both $\psi(u_1)$ and $\psi(u_2)$ are connected sets in $G$.
This implies that there is a $T_E^{\circ}$-cycle passing through $z$ that does not intersect $B_z$.
This contradicts the fact that $B_z$ is a $z$-blocker in $G - (Z \setminus \{z\})$.

For the second part,  assume that there is vertex $u \in L_z$ and a connected component $C$ of $G - (\{z\} \cup B_z \cup (Z \setminus \{z\}))$ such that $V(C) \cap \psi(u) \neq \emptyset$ and $V(C) \setminus \psi(u) \neq \emptyset$.
This implies $V(C) \setminus \psi(u) \subseteq (\{z\} \cup B_z \cup (Z \setminus \{z\}))$.
By the construction of $\calH_Z$,  $\psi(u) \cap Z = \emptyset$ and hence $V(C) \setminus \psi(u) \subseteq B_z$.
As $u \in L_z$,  this is a contradiction to the fact that $B_z \cap \psi(L_z) = \emptyset$.
Hence, our assumption is wrong.
\uend
\end{claim}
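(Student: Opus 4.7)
The plan is to establish both parts of the claim by exploiting the defining properties of $B_z$: it is a $z$-blocker for $T_E^\circ$-cycles in $G - (Z \setminus \{z\})$, and by Claim~\ref{claim:blocker-properties} it is disjoint from both $\psi(L_z)$ and $V(T_E^\circ)$. The uniqueness assertion will be proved by contradiction, with the core idea being to build, from two supposedly distinct leaves $u_1, u_2 \in L_z$ intersecting $V(C)$, a $T_E^\circ$-cycle through $z$ that avoids $B_z$ entirely. The inclusion assertion $\psi(u) \subseteq V(C)$ will then follow from a one-line connectivity argument.

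For uniqueness, I would suppose $u_1 \neq u_2$ in $L_z$ with $x_i \in \psi(u_i) \cap V(C)$ for $i \in \{1,2\}$. Since each $u_i \in L_z$, the definition of $L_z$ supplies an edge $z v_i$ in $G$ with $v_i \in \psi(u_i)$. Because each bubble $\psi(u_i)$ is connected in $G - Z$, I can walk from $v_i$ to $x_i$ inside $\psi(u_i)$; concatenating these walks with a path in $G[V(C)]$ from $x_1$ to $x_2$ produces a $v_1$-to-$v_2$ path in $G$, which closes into a cycle through $z$ via the edges $zv_1$ and $zv_2$. This cycle lies in $G - (Z \setminus \{z\})$ because bubbles are disjoint from $Z$ and $V(C) \cap (Z \setminus \{z\}) = \emptyset$, and it avoids $B_z$ because $B_z \cap \psi(L_z) = \emptyset$ covers the bubble portions and $V(C) \cap B_z = \emptyset$ covers the middle portion.

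The critical step will be to verify that this cycle is a $T_E^\circ$-cycle. I would argue as follows: contracting the spanning tree of each bubble in the cycle's $v_1$-to-$v_2$ sub-path projects it to a walk in $\calH_Z$ from $u_1$ to $u_2$. Since $u_1$ is a non-solitary leaf of $\calH_Z$, every walk leaving $u_1$ must use the unique $\calH_Z$-edge incident to $u_1$, which by definition of $T_E^\circ$ corresponds to a $T_E^\circ$-edge of $G$. Hence the cycle contains a $T_E^\circ$-edge, contradicting the assumption that $B_z$ is a $z$-blocker for $T_E^\circ$ in $G - (Z \setminus \{z\})$. I expect this projection-to-$\calH_Z$ step to be the main subtlety: one must check that a genuine $T_E^\circ$-edge is traversed by the cycle (not merely incident to bubbles on the walk), and the leaf structure of $u_1$ is exactly what pins this down.

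For the second part, suppose $V(C) \cap \psi(u) \neq \emptyset$. Then $\psi(u)$ is a connected subset of $G$ with $\psi(u) \cap Z = \emptyset$ (bubbles lie in $G - Z$) and $\psi(u) \cap B_z = \emptyset$ (since $\psi(u) \subseteq \psi(L_z)$ and $B_z \cap \psi(L_z) = \emptyset$), and clearly $z \notin \psi(u)$. Therefore $\psi(u)$ is contained in a single connected component of $G - (\{z\} \cup B_z \cup (Z \setminus \{z\}))$, and since it meets $V(C)$ it must be contained in $V(C)$, completing the proof.
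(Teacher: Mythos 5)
Your proof is correct and takes essentially the same approach as the paper's: both derive the contradiction by assembling a $T_E^\circ$-cycle through $z$ that avoids $B_z$ from two bubbles of $L_z$ meeting $V(C)$, using the forest structure of $\calH_Z$ and the leaf property of $u_1$ to certify that the cycle traverses a $T_E^\circ$-edge, and both obtain the inclusion $\psi(u)\subseteq V(C)$ from the connectivity of $\psi(u)$ and its disjointness from $\{z\}\cup B_z\cup (Z\setminus\{z\})$. Your treatment of the critical step (why the projected walk in $\calH_Z$ must use the unique edge at the leaf $u_1$) is, if anything, spelled out more explicitly than in the paper.
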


\begin{claim}[resume]
\label{claim:premise-expansion}
If there is a vertex $z$ in $Z$ that is adjacent with at least $10(k + 2)^2 + 1$ non-solitary leaves in $L(\calH_Z)$, then the bipartite graph $G_z(B_z \cup (Z \setminus \{z\}), \calC)$ and $t = k + 2$ satisfies the premise of Lemma~\ref{lem:expansion-lemma-priority}.
\proof
We argue that $|\calC| > (k + 2) \cdot |B_z \cup (Z \setminus \{z\})|$ and there is no isolated vertex in $\calC$.

By Claim~\ref{claim:intersection-inclusion},  for any $C$ in $\calC_2$, there is a unique vertex $u$ in $L_z$ such that $V(C) \cap \psi(u) \neq \emptyset$.
Hence,  the subroutine adds at least  $|L_z| \ge 10(k+2)^2 + 1$ many vertices to $\calC$.
As $|B_z| \le 2k$ and $|Z| \le 8k$, we have $|\calC| > (k + 2) \cdot |B_z \cup (Z \setminus \{z\})|$.

We now argue that there are no isolated vertices in $\calC$.
Assume, for the sake of contradiction, that there is a connected component $C$ of $G - (\{z\} \cup B_z \cup (Z \setminus \{z\}))$ such that there is no edge with one endpoint in $V(C)$ and another endpoint in $B_z \cup (Z \setminus \{z\})$.
As $C$ is not deleted by the subroutine, there is a vertex $u$ in $L_z$ such that $\psi(u) \subseteq V(C)$. 
Consider any edge, say $e$, incident on $u$ in $\calH_Z$.
By the construction, $e$ is an edge in $T_E$ (more specifically in $T_E^{\circ}$).
If there is a $T_E$-cycle containing edge $e$ and whose all vertices are in $z \cup V(C)$,  then there is $T_E^{\circ}$-cycle in $G - (B_z \cup (Z \setminus \{z\}))$.
Recall that any $T_E$-cycle in $G - (Z \setminus \{z\})$ passes through $z$.
This contradicts the fact that  $B_z$ is a $z$-blocker in $G - (Z \setminus \{z\})$.
Hence,  edge $e$ does not participate in any $T_E$-cycle in $G$.
But all such edges are removed by Reduction Rule~\ref{rr:terminal-bridge}.
This contradicts the fact that Reduction Rule~\ref{rr:terminal-bridge} is not applicable on $(G, T_E, F, k)$. 
Hence, every vertex in $\calC$ is adjacent with at least some vertices in $B_z$.
\uend
\end{claim}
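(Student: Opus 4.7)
The plan is to verify the two conditions required by the premise of Lemma~\ref{lem:expansion-lemma-priority}: the size inequality $|\calC| > (k+2)\cdot |B_z \cup (Z\setminus\{z\})|$, and the absence of isolated vertices in $\calC$.

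For the size condition, I would first bound $|B_z \cup (Z\setminus\{z\})| \le 10k$ using $|B_z| \le 2k$ from Claim~\ref{claim:blocker-properties} together with the assumption $|Z| \le 8k$. Next I would exhibit an injective map $L_z \hookrightarrow \calC_2$ by sending each $u \in L_z$ to the unique connected component of $G - (\{z\} \cup B_z \cup (Z\setminus\{z\}))$ containing $\psi(u)$. This map is well defined because each bubble $\psi(u)$ is connected in $G - Z - T_E$ and remains connected after also deleting $B_z$, since by Claim~\ref{claim:blocker-properties} we have $B_z \cap \psi(L_z) = \emptyset$; injectivity is immediate from Claim~\ref{claim:intersection-inclusion}, which assigns at most one $u \in L_z$ to each $C \in \calC_2$. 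Hence $|\calC| \ge |\calC_2| \ge |L_z| \ge 10(k+2)^2 + 1$, which by direct expansion exceeds $(k+2)\cdot 10k$.

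For the absence of isolated vertices, I would argue by contradiction. Suppose some $C \in \calC$ has no neighbor in $B_z \cup (Z\setminus\{z\})$ in $G$. Pick the corresponding $u \in L_z$ with $\psi(u) \subseteq V(C)$, and take any edge $e \in T_E^{\circ}$ of $G$ realizing an edge of $\calH_Z$ incident on $u$. Since Reduction Rule~\ref{rr:terminal-bridge} is not applicable, $e$ lies on some $T_E$-cycle $C^{\ell}$ in $G$. Because every $G$-neighbor of $V(C)$ lies in $V(C) \cup \{z\}$ — using that $V(C)$ is a whole connected component of $G - (\{z\} \cup B_z \cup (Z\setminus\{z\}))$ and has no $G$-edges to $B_z \cup (Z\setminus\{z\})$ by assumption — the cycle $C^{\ell}$ is entirely contained in $V(C) \cup \{z\}$. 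Since $Z$ hits every $T_E$-cycle and $V(C)$ is disjoint from $Z$, the cycle must pass through $z$, yielding a $T_E^{\circ}$-cycle through $z$ in $G - (Z\setminus\{z\})$ that avoids $B_z$. This contradicts the $z$-blocker property from Claim~\ref{claim:blocker-properties}.

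The main subtle point I expect is in the second part: establishing that any $T_E$-cycle containing $e$ must stay inside $V(C) \cup \{z\}$. The key observation is that $V(C) \cup \{z\}$ is closed under taking $G$-neighbors in the graph $G - (B_z \cup (Z\setminus\{z\}))$, which holds precisely because $V(C)$ is a full connected component of $G - (\{z\} \cup B_z \cup (Z\setminus\{z\}))$ and the isolation hypothesis rules out edges of $V(C)$ leaving to $B_z \cup (Z\setminus\{z\})$. Once this closure statement is in place, the contradiction with the blocker property is immediate, and the remaining steps are routine arithmetic and cardinality bookkeeping.
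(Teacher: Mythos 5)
Your proposal is correct and follows essentially the same route as the paper: the counting part uses Claim~\ref{claim:intersection-inclusion} together with $|B_z|\le 2k$ and $|Z|\le 8k$, and the isolated-vertex part derives a contradiction between Reduction Rule~\ref{rr:terminal-bridge} and the $z$-blocker property of $B_z$. One small imprecision: $V(C)\cup\{z\}$ is not literally closed under taking neighbours (the neighbours of $z$ itself are unconstrained); the containment of the cycle in $V(C)\cup\{z\}$ instead follows because $N_G(V(C))\subseteq\{z\}$ forces every exit from $V(C)$ along the cycle to pass through the single vertex $z$, which a cycle visits at most once — a step the paper itself leaves implicit.
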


\begin{claim}[resume]
\label{claim:subroutine-running}
The subroutine terminates in time polynomial in the size of input.
\proof
Given an instance $(G, T_E, F, k)$ and a set $Z \subseteq V(G)$, the subroutine can construct $\calH^+_Z$ in polynomial time.
If every vertex in $Z$ is adjacent with at most $10(k + 2)^2$ non-solitary leaves in $H_Z$ than the subroutine terminates and the claim is vacuously true.
Otherwise, let $z$ be a vertex in $Z$ such that $|L_z| \ge 10(k + 2)^2 + 1$.  
By Claim~\ref{claim:blocker-properties}, the subroutine can compute a blocker with desired properties in polynomial time.
By Lemma~\ref{lem:expansion-lemma-priority} and Lemma~\ref{lem:extra-expansion-vertex}, the subroutine can find desired sets and a vertex in polynomial time.
This concludes the proof of the claim.
\uend
\end{claim}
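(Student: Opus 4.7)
The plan is to walk through the subroutine step by step and verify that each of its stages runs in time polynomial in $|V(G)| + |E(G)|$, invoking the earlier claims of this proof for the non-trivial stages.

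First, I would handle the preprocessing and the construction of $\calH^+_Z$. Computing the bubbles of $G - Z - T_E$ amounts to a single connected-components computation, after which contracting a spanning tree of each bubble produces $\calH^+_Z$ and the associated map $\psi$ in polynomial time. From $\calH^+_Z$ one obtains $\calH_Z$ by removing the vertices of $Z$, and the set $L(\calH_Z)$ of non-solitary leaves, together with the per-vertex sets $L_z$ for $z \in Z$, can be enumerated by a single pass over $\calH_Z$. Comparing $|L_z|$ against the threshold $10(k+2)^2$ for every $z \in Z$ takes polynomial time, and either the subroutine returns immediately or it fixes some $z$ with $|L_z| \geq 10(k+2)^2 + 1$; either way this stage is polynomial.

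Next, I would appeal to Claim~\ref{claim:blocker-properties}, which already asserts that a blocker $B_z$ with the required properties (size at most $2k$, disjoint from $\psi(L_z)$ and from $V(T_E^\circ)$) can be computed in polynomial time. Given $B_z$, computing the connected components $\calC_1$ of $G - (\{z\} \cup B_z \cup (Z \setminus \{z\}))$ and filtering them to obtain $\calC_2$ (by checking intersection with $\psi(L_z)$) is a standard BFS/DFS computation. The auxiliary bipartite graph $G_z$ with vertex bipartition $(B_z \cup (Z \setminus \{z\}),\calC)$ can then be built in polynomial time by scanning the edges of $G$ incident with $B_z \cup (Z \setminus \{z\})$. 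Partitioning $E(G_z) = E_r \uplus E_b$ requires, for each edge $uv_i$, deciding whether there is a $T_E^\circ$-path from $z$ to $u$ whose internal vertices lie in $V(C_i)$; this is a single reachability query in the subgraph of $G[\{z\} \cup V(C_i) \cup \{u\}]$ restricted to terminal-edge traversals, and can be done in polynomial time per edge of $G_z$.

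Finally, Claim~\ref{claim:premise-expansion} guarantees that $G_z$ with $t = k+2$ satisfies the hypothesis of Lemma~\ref{lem:expansion-lemma-priority}, so the invocations of Lemma~\ref{lem:expansion-lemma-priority} and Lemma~\ref{lem:extra-expansion-vertex} each produce the desired sets $X, Y$ and the vertex $y$ in polynomial time, by their respective guarantees. Constructing $G'$ by deleting the edges incident with $z$ whose other endpoint lies in $V(C_y) \cap \psi(L_z)$ is a trivial polynomial-time operation, and returning $(G', T_E, F, k)$ closes the subroutine. Summing the polynomial costs of the finitely many stages gives an overall polynomial running time, which is the claim; the only step that even required work beyond direct inspection was the blocker computation, and that has already been discharged by Claim~\ref{claim:blocker-properties}.
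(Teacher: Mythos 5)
Your proposal is correct and follows essentially the same route as the paper's proof: construct $\calH^+_Z$ directly, invoke Claim~\ref{claim:blocker-properties} for the blocker, and invoke Lemmas~\ref{lem:expansion-lemma-priority} and~\ref{lem:extra-expansion-vertex} for the expansion sets and the extra vertex. You simply spell out more of the routine graph computations (components, the auxiliary bipartite graph, the red/blue edge partition) that the paper leaves implicit.
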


The last reduction rule applies the subroutine.

\begin{reduction rule}
\label{rr:expansion-lemma}
The reduction returns the instance that the subroutine outputs  when the input is instance  $(G, T_E, F, k)$ and set $Z \subseteq V(G)$.
\end{reduction rule}

We say Reduction Rule~\ref{rr:expansion-lemma} is `not applicable' if the input instance and the output instance are the same.
We now prove the correctness of the reduction rule.
Recall that sets $X \subseteq B_z \cup (Z \setminus \{z\}), Y \subseteq \calC$ and vertex $y \in Y$ are of the kind described in Lemma~\ref{lem:extra-expansion-vertex}.
Let $M$ be a $(k + 2)$-expansion of $X$ into $Y$ with priority property.
Set $E_z$ is the collection of (non-terminal) edges that are incident on $z$ and whose other endpoints are in $V(C_y) \cap \psi(L_z) = \psi(u)$.
Here, $C_y$ is the connected component corresponding to $y$ and $u$ is the vertex in $L_z$ such that $V(C_y) \cap \psi(u)  \neq \emptyset$.
Claim~\ref{claim:intersection-inclusion} implies $V(C_y) \cap \psi(L_z) = \psi(u)$.
Graph $G'$ is obtained from $G$ by deleting all edges in $E_z$.
We prove the following claim before proving the safeness of Reduction Rule~\ref{rr:expansion-lemma}.

\begin{claim}[resume]\label{claim:prop-sol-backword}
Consider a solution $Z'$ to $(G', T_E, F, k)$ such that $z \not\in Z'$.
Then, $X \setminus Z \subseteq Z'$.
\proof
Assume, for the sake of contradiction, that there exists a vertex $x \in (X \setminus Z) \setminus Z'$.
Let $Y_x$ be the collection of vertices in $\calC$ that are adjacent to $x$ via edges in $M$.
Formally, $Y_x = \{y_i \in \calC\ |\ xy_i \in M\}$. 
Note that $v_i$ is not in $Y_x$.
As $M$ is a $(k + 2)$-expansion of $X$ into $Y$, $|Y_x| = k + 2$.
Let $C_i$ be the connected component of $G - (\{z\} \cup B_z \cup (Z \setminus \{z\}))$ corresponding to vertex $y_i \in Y_x \subseteq \calC$.
As $|Y_x| = k + 2$ and $|Z'| \le k$, there are at least two vertices in $Y_x$, say $y_1, y_2$, such that $V(C_1) \cap Z' =  V(C_2) \cap Z' = \emptyset$.
Let $u_1, u_2$ be the unique vertices in $L_z$ such that $\psi(u_1) \subseteq V(C_1)$ and $\psi(u_2) \subseteq V(C_2)$.
Claim~\ref{claim:intersection-inclusion} implies that such vertices exists.
Consider a cycle $C^l$ that starts with $z$, passes through $V(C_1)$ to vertex $x$ and then goes back to $z$ through $V(C_2)$.
Such cycle contains vertices in $\psi(u_1)$ and $\psi(u_2)$.

We argue that $C^l$ contains an edge from $T_E$.
As $u_1$ is a non-solitary leaf in $\calH_Z$, there exists an unique edge, say $e_1$, incident on $u_1$ in $\calH_Z$.
By the construction of $\calH_Z$, edge $e_1$ is a terminal edge in $G$.
Recall that $\psi(u_1)$ is a connected component in graph $G - Z - T_E$.
Hence, any path connecting a vertex in $V(G) \setminus Z$ to a vertex in $\psi(u_1)$ contains edge $e_1$.
More specifically, the part of cycle $C^l$ that connects a vertex in $\psi(u_1)$ to vertex $x$ contains a terminal edge.
This implies there exists $T_E^{\circ}$-cycle in $G' - Z'$.
This contradicts the fact that $Z'$ is a solution of $(G, T_E, F, k)$.
Hence our assumption is wrong and $X \setminus Z \subseteq Z'$.
\uend
\end{claim}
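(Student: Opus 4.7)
The plan is to argue by contradiction: suppose there exists some $x \in (X \setminus Z) \setminus Z'$, and construct a $T_E$-cycle in $G' - Z'$, contradicting that $Z'$ is a solution. Since $X \subseteq B_z \cup (Z \setminus \{z\})$ and $x \notin Z$, we have $x \in B_z$. By the $(k+2)$-expansion $M$, $x$ is incident to exactly $k+2$ edges of $M$, reaching $k+2$ distinct components in $Y$. Since $|Z'| \leq k$ and the vertex $y$ itself is avoided by $M$ (by Lemma \ref{lem:extra-expansion-vertex}, $M$ does not saturate $y$), pigeonhole gives two components $C_1, C_2$ corresponding to $y_1, y_2 \in Y \setminus \{y\}$ with $V(C_1) \cap Z' = V(C_2) \cap Z' = \emptyset$.

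The second step is to construct a closed walk through $z$, $V(C_1)$, $x$, $V(C_2)$, back to $z$. By definition of $\calC = \calC_2$ and Claim \ref{claim:intersection-inclusion}, there exist unique $u_1, u_2 \in L_z$ with $\psi(u_i) \subseteq V(C_i)$, and because $u_i \in L_z$ is adjacent to $z$ in $\calH^+_Z$, there are edges in $G$ joining $z$ to $\psi(u_i)$. Crucially, these edges survive in $G'$: the only deletions are $E_z$, which consists of edges from $z$ to $V(C_y) \cap \psi(L_z)$, and $C_1, C_2 \neq C_y$. Similarly, the edges $xy_1, xy_2 \in E(G_z)$ correspond to actual edges of $G$ joining $x$ to $V(C_1)$ and $V(C_2)$, which are not touched by the edge deletion. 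Using connectivity of $G[V(C_i)]$ and of $\psi(u_i)$ (which is a bubble), we can route a closed walk and extract from it a simple cycle $C^l$ that uses $z$, $x$, a portion inside $V(C_1)$ passing through $\psi(u_1)$, and a portion inside $V(C_2)$ passing through $\psi(u_2)$. This cycle avoids $Z'$ since $z, x \notin Z'$ and $V(C_1) \cup V(C_2)$ is disjoint from $Z'$.

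The main obstacle, and the final step, is verifying that $C^l$ contains an edge of $T_E$ (so that it is a $T_E$-cycle and not merely a cycle). This uses the structure of $\calH_Z$: since $u_1 \in L_z$ is a non-solitary leaf of $\calH_Z$, there is a unique edge $e_1$ incident to $u_1$ in $\calH_Z$, and by construction of $\calH_Z$ this corresponds to a terminal edge of $G$ (in fact in $T_E^{\circ}$) with one endpoint in $\psi(u_1)$. Because $\psi(u_1)$ is a connected component of $G - Z - T_E$, any path in $G - Z$ that enters $\psi(u_1)$ and leaves it must traverse a terminal edge incident to $\psi(u_1)$; by the preprocessing that subdivided each terminal edge twice, this forces $e_1$ itself to lie on $C^l$. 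Hence $C^l$ is a $T_E$-cycle in $G' - Z'$, contradicting that $Z'$ hits all $T_E$-cycles. Therefore $X \setminus Z \subseteq Z'$. Note that the priority property of the expansion plays no role in this particular claim — it is needed elsewhere in the safeness argument (to coordinate which edge types around $z$ are eliminated); here only the raw $(k+2)$-expansion and the fact that $y$ is unsaturated matter.
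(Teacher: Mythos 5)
Your proposal is correct and follows essentially the same route as the paper: contradiction via pigeonhole on the $(k+2)$-expansion to find two components $C_1, C_2$ disjoint from $Z'$, then routing a cycle $z \to \psi(u_1) \to x \to \psi(u_2) \to z$ and arguing it must cross a terminal edge because each $\psi(u_i)$ is a bubble (a component of $G - Z - T_E$) from which any exit within $V(C_i)$ must be a $T_E$-edge.

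You go further than the paper in two useful respects. First, you make explicit that the constructed cycle survives in $G'$, not just $G$: since $E_z$ consists only of edges from $z$ into $V(C_y)\cap\psi(L_z)$, and $y$ is the vertex $M$ does not saturate (so $y\notin Y_x$ and hence $C_1, C_2 \neq C_y$), the two $z$-incident edges of the cycle avoid $E_z$, and all other edges are not incident on $z$. The paper's ``Note that $v_i$ is not in $Y_x$'' is almost certainly a typo for $y$, and your reading resolves it. Second, you correctly observe that the priority property of the expansion plays no role in this claim, only the raw $(k+2)$-expansion and $y$'s unsaturation; the paper does not comment on this. The one small over-reach is your appeal to the double-subdivision of terminal edges to conclude that $e_1$ specifically lies on $C^l$: that is not what the subdivision is for (it ensures one may assume $Z'\cap V(T_E)=\emptyset$), and in any case the argument only needs \emph{some} terminal edge on $C^l$, which follows already from $\psi(u_1)$ being a bubble and $u_1$ being a leaf of $\calH_Z$.
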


In the following lemma, we argue that the reduction rule is safe.

\begin{lemma}
\label{lemma:rr-expansion-correct}
$(G, T_E, F, k)$ is a \yes\ instance if and only if $(G', T_E, F, k)$ is.
\end{lemma}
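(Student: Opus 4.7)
The lemma has a trivial forward direction and a subtle backward direction; I address each in turn.

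The forward direction is immediate. By construction, $G'$ differs from $G$ only by deletion of the edge set $E_z$, so $V(G') = V(G)$ and $E(G') \subseteq E(G)$. Hence every $T_E$-cycle of $G' - Z$ is already a $T_E$-cycle of $G - Z$, and any solution of $(G, T_E, F, k)$ is automatically a solution of $(G', T_E, F, k)$.

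For the backward direction, let $Z'$ be a solution of $(G', T_E, F, k)$; by the preprocessing described at the start of this subsection, we may further assume $Z' \cap V(T_E) = \emptyset$ (so also $Z' \cap F = \emptyset$ and $|Z'| \le k$). Split according to whether $z \in Z'$. If $z \in Z'$, then every edge of $E_z$ is incident to $z$ and therefore absent from $G - Z'$; consequently $G - Z' = G' - Z'$ and $Z'$ solves $(G, T_E, F, k)$ as well. This disposes of the easy case.

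The interesting case is $z \notin Z'$. Apply Claim~\ref{claim:prop-sol-backword} to conclude $X \setminus Z \subseteq Z'$. I will argue that $Z'$ still solves $(G, T_E, F, k)$. Suppose not: then some $T_E$-cycle $C^\ell$ exists in $G - Z'$, and because $Z'$ solves $G'$ it must use at least one edge of $E_z$. Pick $C^\ell$ to minimize the number of $E_z$-edges used, and let $zw$ be such an edge, so $w \in \psi(u) \cap V(C_y)$. The strategy is to reroute the $V(C_y)$-portion of $C^\ell$ through an alternative component $C_{y'}$ with $y' \in Y \setminus \{y\}$, producing a $T_E$-cycle of $G' - Z'$, contradicting the choice of $Z'$. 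Because $V(C_y)$ is separated from the rest of $G - \{z\}$ by the cut $B_z \cup (Z \setminus \{z\})$, and only elements of $X$ in this cut can remain unhit by $Z'$ (indeed $X \setminus Z \subseteq Z'$ together with the $(k+2)$-expansion $M$ confines the ``exits'' of $C^\ell$ from $V(C_y)$ to a vertex $x \in X$ still on the cycle). Since Lemma~\ref{lem:extra-expansion-vertex} guarantees that $y$ is unsaturated by $M$, the hub $x \in X$ has $k+2$ components in $Y \setminus \{y\}$ matched to it through $M$, and because $|Z'| \le k$ at least two of these are disjoint from $Z'$. The priority property of $M$ (via the coloring $E_r \uplus E_b$) ensures that among these untouched components we may choose one, call it $C_{y'}$, such that there is a $T_E^\circ$-path from $z$ to $x$ through $V(C_{y'})$; this gives both the connectivity and a terminal edge needed. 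Splicing this detour into $C^\ell$ in place of the $V(C_y)$-portion yields a walk that avoids $E_z$ (since $E_z$-edges only meet $V(C_y) \cap \psi(u)$) and contains a $T_E$-edge, so it contains a $T_E$-cycle in $G' - Z'$, a contradiction.

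The main obstacle is making the rerouting step fully rigorous: one must verify that the $V(C_y)$-portion of $C^\ell$ actually exits through some $x \in X$ rather than hiding inside $V(C_y) \cup \{z\}$ (ruled out by the $z$-blocker property of $B_z$ together with $B_z \cap V(C_y) = \emptyset$), that the chosen alternative component $C_{y'}$ supplies \emph{both} the needed connectivity from $x$ back to $z$ \emph{and} a $T_E$-edge (this is exactly what the two-color bookkeeping $(E_r, E_b)$ and the priority property were engineered to provide), and that the resulting closed walk contains a simple $T_E$-cycle in $G' - Z'$. All three ingredients are direct consequences of the construction of the auxiliary bipartite graph $G_z$ and of the $(k+2)$-expansion with priority produced by Lemma~\ref{lem:extra-expansion-vertex}, together with Claim~\ref{claim:intersection-inclusion}, which pins down $V(C_y) \cap \psi(L_z) = \psi(u)$.
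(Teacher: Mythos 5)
Your forward direction and the $z \in Z'$ case match the paper, and the high-level rerouting strategy for $z \notin Z'$ is also the paper's. However, there is a genuine gap in the way you invoke the priority property. You claim that the priority property of $M$ ensures that among the components matched to $x$ and untouched by $Z'$, at least one component $C_{y'}$ admits a $T_E^\circ$-path from $z$ to $x$ through $V(C_{y'})$, and you rely on this detour to supply \emph{both} the connectivity and the terminal edge of the rerouted cycle. The priority property is a conditional statement: \emph{if} $xy$ is a red edge (i.e., there is a $T_E^\circ$-path from $z$ to $x$ through $V(C_y)$, and $y$ is unsaturated by Lemma~\ref{lem:extra-expansion-vertex}), \emph{then} every $M$-edge at $x$ is red. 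It gives no information at all when $xy$ is blue or not present, in which case every matched component of $x$ could be blue and the detour would carry no terminal edge. As written, your claim is unsound.

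The paper's argument separates the two cases that your proof conflates. It first uses the priority property in a proof by contradiction to conclude that $xy$ is \emph{not} red: if it were, all $k+2$ of $x = z_0$'s $M$-edges would be red, giving $k+2$ internally disjoint $T_E^\circ$-paths from $z$ to $z_0$, at least two of which avoid $Z'$; those two close into a $T_E$-cycle in $G' - Z'$ avoiding $z_0$ (so, by the reasoning of Claim~\ref{claim:prop-sol-backword}, $z_0$ would be forced into $Z'$, contradicting $z_0 \notin Z'$). Having ruled out $xy$ red, the $V(C_y)$-portion of $C^\ell$ carries no terminal edge, so the terminal edge of $C^\ell$ must lie in the portion outside $V(C_y)$; the detour through a matched component of $z_0$ avoiding $Z'$ then only needs to supply a plain path, not a $T_E^\circ$-path. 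The actual role of the priority property is to make the $V(C_y)$-portion of $C^\ell$ terminal-free, not to hand you a terminal edge in the alternative component. (A smaller point: the minimization of $E_z$-edges used by $C^\ell$ does not appear to be used anywhere in your argument.)
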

\begin{proof}
If every vertex in $Z$ is adjacent with at most $10(k + 2)^2$ non-solitary leaves in $L(\calH_Z)$ then the subroutine returns the same instance and the lemma is vacuously true. 
We consider the case where $G'$ is a proper subgraph of $G$.
Recall that sets $X \subseteq B_z \cup (Z \setminus \{z\}), Y \subseteq \calC$ and vertex $y \in Y$ are of the kind described in Lemma~\ref{lem:extra-expansion-vertex} and Set $E_z$ is the collection of (non-terminal) edges that are incident on $z$ and whose other endpoints are in $V(C_y) \cap \psi(L_z)$.

As $G'$ is a proper subgraph of $G$, any solution of $(G, T_E, F, k)$ is a solution of $(G', T_E, F, k)$.
We argue that a solution $Z'$ to $(G', T_E, F, k)$ is a solution of $(G, T_E, F, k)$.
By the construction, $G' = G - E_z$ and hence $G - \{z\} = G' - \{z\}$.
Consider a solution $Z'$ to $(G', T_E, F, k)$.
If $z \in Z'$ then $Z'$ is a solution of $(G, T_E, F, k)$.
In the remaining proof,  we consider the case when $z \not\in Z'$.

Assume, for the sake of contradiction, that $Z'$ is not a solution of $(G, T_E, F, k)$.
This implies that there exists a $T_E$-cycle, say $C^l$, in $G - Z'$.
As $Z'$ hits all $T_E$-cycles in $G' = G - E_z$, cycle $C^l$ contains at least one edge from $E_z$.
By the definition of $B_z$ and $Z$, cycle $C^{l}$ intersects with a connected component, say  $C_i$,  in $G - ({z} \cup B_z \cup(Z \setminus \{z\}))$. 
As $Z$ hits all $T_E$-cycles in $G$, all the vertices in $C^l$ are not included in $V(C_i)$.
By Claim~\ref{claim:intersection-inclusion} and the fact that  all the edges incident on $z \in Z$ are in $E(G') \setminus T_E$, i.e. non-terminal edges, all vertices in $C^l$ are not included in $V(C_i) \cup \{z\}$.
As $N_{B_z}(Y) \subseteq X$ (by Lemma~\ref{lem:extra-expansion-vertex}), cycle $C^{l}$ contains at least one vertex from $X$.
By Claim~\ref{claim:prop-sol-backword}, $X \setminus Z \subseteq Z'$. 
Hence, cycle $C^{l}$ contains at least one vertex from $X \cap Z$.
Let $z_0 \in X \cap Z $ be that vertex.
As $C^l$ is a cycle in $G - Z'$, vertex $z_0$ is not in $Z'$.
See Figure~\ref{fig:graph-partition}.

We argue that there is no $T_E$-path from $z$ to $z_0$ in $G$ whose all internal vertices are in $V(C_y)$.
If there exists such a path then, by the construction of the auxiliary bipartite graph $G_z(B_z \cup (Z \setminus \{z\}), \calC)$, edge $z_0y$ is colored red i.e. it is in $E_r$.
Recall that edges in $E_r$ are preferred over the edges of $E_b$ in the matching.
As $M$ is $(k + 2)$-expansion of $X$ into $Y$ with the priority property and $xy \not\in M$, every edge in $M$ that is incident on $z_0$ is in $E_r$.
By the similar arguments as that in Claim~\ref{claim:prop-sol-backword}, this implies that every solution of $(G', T_E, F, k)$ must contain vertex $z_0$, a contradiction.
Hence, there is no $T_E$-path from $z$ to $z_0$ in $G$ whose all internal vertices are in $V(C_y)$.

As $C^l$ is a $T_E$-cycle that contains vertices $z, z_0$ in $G - Z'$, there exists vertex $v_i \in \calC$, different from $y$, such that there exists $T_E$-path from $z$ to $z_0$ in $G - Z'$ whose internal vertices are in $V(C_i)$.
Here, $C_i$ is the connected component of $G - (\{z\} \cup B_z \cup (Z \setminus \{z\}))$ corresponding to vertex $v_i \in \calC$.
We note that $v_i$ may not be in $Y$.
Let $Y_{z_0}$ be the collection of vertices in $\calC$ that are adjacent to $z_0$ via edges in $M$.
Formally, $Y_{z_0} = \{y_i \in \calC\ |\ z_0y_i \in M\}$. 
As $M$ is a $(k + 2)$-expansion of $X$ into $Y$, $|Y_{z_0}| = k + 2$.
As $|Z'| \le k$, there is at least one vertex, say $y_1$, in $Y_{z_0}$ such that $y_1 \neq v_i$ and $V(C_1) \cap Z' = \emptyset$.
Hence, there exists a path in $G - Z'$ from $z$ to $z_0$ whose all internal vertices are in $V(C_1)$. 
This implies there exists a $T_E$-cycle containing $z$, $z_0$, and vertex disjoint paths whose internal vertices are in $V(C_1)$ and $V(C_i)$.
Note that this cycle does not contain edges in $E_z$.
Hence, such a path exists in $G' - Z'$.
This contradicts the fact that $Z'$ is a solution of $(G', T_E, F, k)$.
Hence, our assumption is wrong and there is no $T_E$-cycle in $G - Z'$.
\end{proof}

\begin{lemma}
\label{lemma:bound-terminal-nrs}
Consider an instance $(G, T_E, F, k)$ of 
$H$-\textsc{Minor-Free Edge-Subset FVS with Undeletable Vertices}.
Then, in polynomial time, one can compute an equivalent instance $(G', T'_E, F', k')$ such that $|T'_E| \in \calO(k^5)$.
\end{lemma}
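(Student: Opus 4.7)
The plan is to apply Reduction Rules~\ref{rr:trivial-no}--\ref{rr:expansion-lemma} exhaustively and then analyze the structure of the forest $\calH_Z$, using that $E(\calH_Z) = T_E'$ after the reductions. First I would invoke the constant-factor approximation algorithm for (Edge-)Subset FVS due to Even, Naor, and Schieber to compute, in polynomial time, a set $Z \subseteq V(G)$ hitting all $T_E$-cycles with $|Z| \leq 8k$, or conclude that $(G, T_E, F, k)$ is a \no-instance if the approximation exceeds $8k$. Since Reduction Rule~\ref{rr:expansion-lemma} only deletes edges, $Z$ remains a valid hitting set throughout the process, so the subroutine can be invoked repeatedly without recomputing $Z$. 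Each individual reduction rule clearly runs in polynomial time (the subroutine by Claim~\ref{claim:subroutine-running}), and each application either decreases $k$, removes an edge from $E(G)$, or removes an edge from $T_E$, so the total number of applications is polynomially bounded.

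Once the reduction rules are no longer applicable, I would establish the structural bound by analyzing $\calH_Z$. The forest $\calH_Z$ has $E(\calH_Z) = T_E'$, so it suffices to bound the number of edges in $\calH_Z$, equivalently the number of non-solitary vertices. I would first bound the number of non-solitary leaves $|L(\calH_Z)|$ as follows: after exhaustive application of Reduction Rules~\ref{rr:delete-bridge} and \ref{rr:terminal-bridge}, every non-solitary leaf $u \in L(\calH_Z)$ corresponds to a bubble $\psi(u)$ that lies in a connected component of $G$ containing a vertex of $Z$, and tracing the terminal structure shows that such a $u$ must be adjacent to at least one $z \in Z$ in $\calH_Z^+$. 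Since Reduction Rule~\ref{rr:expansion-lemma} is not applicable, every $z \in Z$ satisfies $|L_z| \leq 10(k+2)^2$, hence
\[
|L(\calH_Z)| \leq \sum_{z \in Z} |L_z| \leq 8k \cdot 10(k+2)^2 = \CO(k^3).
\]

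Next I would bound the length of each maximal path of degree-$2$ internal vertices in $\calH_Z$ by $\CO(k^2)$. Let $u_1 - u_2 - \cdots - u_\ell$ be such a path and consider the bubbles $\psi(u_i)$. Each internal bubble must be adjacent (via a non-terminal edge) to some vertex in $Z$, for otherwise the associated terminal edges would be bridges in $G - (T_E \setminus \{e\})$ and eliminated by Reduction Rule~\ref{rr:terminal-bridge}. By the pigeonhole principle, some $z \in Z$ is adjacent to at least $\ell / (8k)$ bubbles on the path. Whenever $z$ is adjacent to two bubbles $\psi(u_i)$ and $\psi(u_j)$ with $i < j$, we can form a $T_E$-cycle through $z$ by going from $z$ into $\psi(u_i)$, traversing the terminal edges $u_i u_{i+1}, \dots, u_{j-1} u_j$ through the intermediate bubbles, and returning from $\psi(u_j)$ to $z$. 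For consecutive indices where $z$ connects, these cycles are pairwise vertex-disjoint except at $z$. Therefore if $z$ is adjacent to more than $k+1$ bubbles on the path, we obtain a $z$-flower of order at least $k+1$, contradicting exhaustion of Reduction Rule~\ref{rr:flower}. Hence each $z$ is adjacent to at most $k+1$ bubbles on the path, so $\ell / (8k) \leq k+1$ and $\ell = \CO(k^2)$.

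Finally, since $\calH_Z$ is a forest with $\CO(k^3)$ non-solitary leaves, it has $\CO(k^3)$ branching vertices (degree $\geq 3$), and therefore $\CO(k^3)$ maximal paths of degree-$2$ internal vertices. Combining with the per-path bound $\CO(k^2)$ yields that the total number of non-solitary vertices in $\calH_Z$ is $\CO(k^3) + \CO(k^3) \cdot \CO(k^2) = \CO(k^5)$, and hence $|T_E'| = |E(\calH_Z)| = \CO(k^5)$. The main obstacle I anticipate is the flower argument for bounding degree-$2$ paths: care is needed to verify that the constructed cycles through $z$ are genuinely $T_E$-cycles (containing at least one terminal edge, which follows from the terminal edges on the path) and that they are indeed vertex-disjoint except at $z$ (which follows from the internal bubbles lying on disjoint segments of the path), together with the preparatory step of showing each internal bubble on a degree-$2$ path must attach to $Z$.
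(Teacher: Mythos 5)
Your proposal follows essentially the same route as the paper: compute an $8$-approximate hitting set $Z$ via Even--Naor--Schieber, apply Reduction Rules~\ref{rr:trivial-no}--\ref{rr:expansion-lemma} exhaustively, show every non-solitary degree-$\leq 2$ vertex of $\calH_Z$ must be adjacent to $Z$ in $\calH_Z^+$ (via Rules~\ref{rr:delete-bridge}, \ref{rr:terminal-bridge}), bound leaves by $\CO(k^3)$ via non-applicability of Rule~\ref{rr:expansion-lemma}, bound degree-$2$ path length by $\CO(k^2)$ via non-applicability of Rule~\ref{rr:flower}, and combine by the standard forest count. One small slip: the cycles formed from \emph{consecutive} adjacency positions on a degree-$2$ path are not pairwise vertex-disjoint (consecutive pairs share the intermediate bubble), so one must take every other pair, yielding the paper's threshold of $2(k+1)$ rather than your $k+1$; this does not affect the $\CO(k^2)$ bound or the final $\CO(k^5)$ conclusion.
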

\begin{proof}
The algorithm computes a set $Z \subseteq V(G)$ such that 
$Z$ intersects all $T_E$-cycles in $G$ using the algorithm in~\cite{EvenNZ00}.
We can assume, without loss of generality, that $Z \cap V(T_E) = \emptyset$. 
If $|Z| > 8k$ then algorithm returns a trivial \no\ instance. 
Otherwise, it applies Reduction Rules~\ref{rr:trivial-no} to \ref{rr:expansion-lemma}.
It exhaustively applies the least indexed reduction rule which is applicable.
To applies Reduction Rule~\ref{rr:flower} and Reduction Rule~\ref{rr:expansion-lemma}, the algorithm uses $Z$ as the desired set.
The algorithm returns the reduced instance, say $(G', T'_E, F', k')$.
This completes the description of the algorithm.

The approximation algorithm in~\cite{EvenNZ00} grantees a solution which is $8$ times the size of an optimum solution.
By the description of the algorithm, correctness and running time of the algorithm in~\cite{EvenNZ00} and of reduction rules, we can conclude that the algorithm runs in polynomial time and returns an equivalent instance.
Also, it is easy to see that $k' \le k$.
We argue the bound on the number of edges in $T'_E$ when $(G', T_E', F', k')$ is not a trivial \no\ instance.

Note that Reduction Rules~\ref{rr:trivial-no} to \ref{rr:expansion-lemma} are not applicable on $(G', T'_E, F', k')$.
Let $Z'$ be the $8$-factor approximate solution of $G'$  returned by the algorithm in~\cite{EvenNZ00}. 
As $(G', T'_E, F', k')$ is not a trivial \no\ instance, $|Z'| \le 8k$.
Let $\calH^{+}_{Z'}$ and $\calH_{Z'}$ be the graphs obtained from $G'$, using $Z'$, as described above the subroutine.
We first argue that the number of non-solitary leaves is $\calO(k^3)$.
We then argue that the length of any simple path in $\calH$ whose internal vertices are of degree two is $\calO(k^2)$. 

We prove that every non-solitary vertex of degree at most two in $\calH_{Z'}$ is adjacent with at least one in ${Z'}$ in $\calH^+_{Z'}$.
Assume, for the sake of contradiction, that there is a non-solitary vertex $v$ in $V(\calH_{Z'})$ that has degree at most two in $\calH^{+}_{Z'}$ and is not adjacent with any vertex in $Z'$. 
Suppose there is only one edge, say $e_1 \in T_E$, that is incident on $v$ in $\calH_{Z'}$ and hence in $\calH^{+}_{Z'}$.
Consider a bubble corresponding to $v$ in $G$.
Formally, consider $\psi(v)$ in $G'$, where $\psi$ is the function defined before the subroutine.
Note that $e_1$ is the only (terminal) edge with one edge point in $\psi(v)$ and another endpoint in $V(G') \setminus \psi(v)$.
Hence, $e_1$ is a bridge edge in graph $G'$.
This contradicts the fact that Reduction Rule~\ref{rr:delete-bridge} is not applicable to $(G', T'_E, F', k')$.
Suppose there are two edges, say $e_1, e_2 \in T'_E$, that are incident on $v$ in $\calH_{Z'}$ and hence in $\calH^{+}_{Z'}$.
Note that $e_1, e_2$ are the only two (terminal) edges with one edge point in $\psi(v)$ and another endpoint in $V(G') \setminus \psi(v)$.
Hence, $e_1 \in T'_E$ is a bridge edge in $G' - (T'_E \setminus \{e_1\})$. 
This contradicts the fact that Reduction Rule~\ref{rr:terminal-bridge} is not applicable to $(G', T'_E, F', k')$.
As $v$ is any arbitrary non-solitary vertex of degree at most two in $\calH^{+}_{Z'}$, every such vertex is adjacent with some vertex in $Z'$.

In graph $\calH^{+}_{Z'}$, every vertex in $Z'$ is adjacent to at most $10(k + 2)^2$ non-solitary leaves.
If this is not the case, Reduction Rule~\ref{rr:expansion-lemma} is applicable on $(G', T'_E, F', k')$.
This, together with the fact that $|Z'| \le 8k$ implies that the number of non-solitary leaves in $\calH_{Z'}$ is $ \calO(k^3)$.

Consider a simple path $P$ in $\calH_{Z'}$ such that all internal vertices in $P$ are of degree two.
Every vertex in $Z'$ is adjacent to at most $2(k + 1)$ vertices in $P$.
If not, as $E(P) \subseteq E(T'_E)$, there is a $z$-flower of order $k + 1$ in $\calH_{Z'}$ and hence in $G'$.
This contradicts the fact that Reduction Rule~\ref{rr:flower} is not applicable on $(G', T'_E, F, k')$.
As $|Z'| \le 8k$, we get desired bound on the number of vertices in $P$.
Since, $P$ was an arbitrary path in $\calH_{Z'}$, we can conclude that any such path in $\calH_{Z'}$ as at most $\calO(k^2)$ vertices.

By the standard arguments, we can bound the number of vertices in the forest using the upper bound on the number of leaves and the size of simple path whose all internal vertices are of degree two.
Hence,  the number of vertices in $\calH_{Z'}$ that are not solitary vertices (i.e. isolated vertices) is at most  $\calO(k^5)$.
This implies $|E(\calH_{Z'})| = |T'_E| \in \calO(k^5)$ which concludes the proof.
\end{proof}

Lemma~\ref{lemma:subset-FVS-marking} and Lemma~\ref{lemma:bound-terminal-nrs} imply the following result.

\begin{lemma}
\label{lemma:subset-FVS-marking-bound-k}
Consider an instance $(G, T_E, F = \emptyset, k)$ of $H$-\textsc{Minor-Free Edge-Subset FVS with Undeletable Vertices}.
Then, in randomized polynomial time and with failure probability $\mathcal{O}(2^{-|V(G)|})$, one can compute an equivalent instance $(G', T_E, F, k')$ such that $|V(G')| \le \calO(c_H \cdot k^{11})$ and $k' \le k$.
\end{lemma}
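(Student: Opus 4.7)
The plan is to chain together the three ingredients already established: Lemma~\ref{lemma:bound-terminal-nrs} to shrink the terminal set, Lemma~\ref{lemma:subset-FVS-marking} to produce a small set $W$ containing a solution, and Lemma~\ref{lemma:compression-H-minor-free-Pi-vertex-SFVS} to perform the \emph{Contract to Undeletable} step and bound the resulting vertex count by $\mathcal{O}(c_H\cdot|W|)$ using $H$-minor-freeness. The order of application is crucial because the marking bound from Lemma~\ref{lemma:subset-FVS-marking} depends quadratically on $|T_E|$, so we must first reduce $|T_E|$.

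Concretely, starting from $(G,T_E,F=\emptyset,k)$, I would first invoke Lemma~\ref{lemma:bound-terminal-nrs} in deterministic polynomial time to obtain an equivalent instance $(G_1,T_E^{(1)},F_1,k_1)$ of $H$-\textsc{Minor-Free Edge-Subset FVS with Undeletable Vertices} with $|T_E^{(1)}|\in\mathcal{O}(k^5)$ and $k_1\le k$. Note that this step preserves $H$-minor-freeness because the only graph modifications used inside the reduction rules are vertex/edge deletions (Reduction Rules~\ref{rr:trivial-no}--\ref{rr:terminal-bridge} and \ref{rr:flower}) and the edge deletions performed by the subroutine behind Reduction Rule~\ref{rr:expansion-lemma}; none of these introduce new minors. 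The undeletable set $F_1$ may be non-empty after this step, but Lemma~\ref{lemma:subset-FVS-marking} only requires that $V(T_E)\subseteq W$ and a dominant solution avoids $V(T_E)$, so we can treat $F_1$ simply as an additional constraint respected by the marking (a vertex in $F_1$ cannot be deleted by any solution, and the marking need only guarantee a solution inside $W$).

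Next, I would apply Lemma~\ref{lemma:subset-FVS-marking} to $(G_1,T_E^{(1)},k_1)$ in randomized polynomial time, with failure probability $\mathcal{O}(2^{-|V(G_1)|})=\mathcal{O}(2^{-|V(G)|})$ (since reduction rules never grow the graph), obtaining a set $W\subseteq V(G_1)$ of size
\[
|W|\in\mathcal{O}(k_1\cdot |T_E^{(1)}|^2)\subseteq\mathcal{O}(k\cdot k^{10})=\mathcal{O}(k^{11})
\]
with $V(T_E^{(1)})\subseteq W$ and the property that some solution of $(G_1,T_E^{(1)},F_1,k_1)$ lies inside $W$. To accommodate the undeletable set, augment $W$ by adding nothing (since undeletable vertices need not be in $W$) — the statement of Lemma~\ref{lemma:compression-H-minor-free-Pi-vertex-SFVS} only asks for a set $W$ containing $T$ such that some solution lies in $W$, which is exactly what we have.

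Finally, I would feed $(G_1,T_E^{(1)},k_1)$ together with this $W$ into Lemma~\ref{lemma:compression-H-minor-free-Pi-vertex-SFVS} to obtain, in deterministic polynomial time, an equivalent instance $(G',T_E^{(1)},F',k')$ of $H$-\textsc{Minor-Free Edge-Subset FVS with Undeletable Vertices} with $k'\le k_1\le k$ and
\[
|V(G')|\in\mathcal{O}(c_H\cdot|W|)=\mathcal{O}(c_H\cdot k^{11}).
\]
The total failure probability is $\mathcal{O}(2^{-|V(G)|})$ because only the marking step is randomized. The main subtlety I anticipate is the careful treatment of the undeletable set $F_1$ produced by Lemma~\ref{lemma:bound-terminal-nrs}: we must verify that both the marking lemma and the compression lemma continue to hold when undeletable vertices are already present, which amounts to checking that their proofs never delete or modify vertices in $F_1$ (the marking only picks a dominant solution avoiding $V(T_E)$, which is unaffected; the compression contracts components of $G_1-W$ but marks the resulting vertices undeletable, which is compatible with preexisting undeletable vertices by taking the union of undeletable sets).
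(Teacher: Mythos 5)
Your proof is correct and follows essentially the same chain the paper intends: reduce $|T_E|$ with Lemma~\ref{lemma:bound-terminal-nrs} first, then apply the marking Lemma~\ref{lemma:subset-FVS-marking} to get $|W|=\calO(k\cdot|T_E|^2)=\calO(k^{11})$, then shrink the graph to $\calO(c_H|W|)$ vertices via the \emph{Contract to Undeletable} step of Lemma~\ref{lemma:compression-H-minor-free-Pi-vertex-SFVS}; note that the paper's accompanying text only cites the first two lemmas, but the $c_H$ factor in the stated vertex bound can only come from the contraction lemma, so your inclusion of it is the right reading. Two small inaccuracies, neither fatal: (1) the reduction rules behind Lemma~\ref{lemma:bound-terminal-nrs} never add vertices to $F$, so starting from $F=\emptyset$ one has $F_1=\emptyset$ and the defensive discussion of a possibly non-empty $F_1$ is unnecessary; and (2) the claim ``$\calO(2^{-|V(G_1)|})=\calO(2^{-|V(G)|})$ since reduction rules never grow the graph'' has the inequality backwards---$|V(G_1)|\le|V(G)|$ makes $2^{-|V(G_1)|}$ \emph{larger}---though the intended conclusion still holds since the matroid-based marking can be run with $\mathrm{poly}(|V(G)|)$ random bits to push the failure probability below $2^{-|V(G)|}$.
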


Lemma~\ref{lemma:subset-FVS-marking-bound-k} and Lemma~\ref{lemma:compression-H-minor-free-Pi-vertex-SFVS} imply that there is an algorithm that given an instance $(G,T,k)$ of \textsc{$H$-Minor Free Edge-Subset FVS} constructs an equivalent instance $(G',T,F,k')$ of \textsc{$H$-Minor Free Edge-Subset FVS with Undeletable vertices} in randomized polynomial time and with failure probability $\calO(2^{-|V(G)|})$ such that $|V(G')| \in k^{\calO(1)}$ and $k' \leq k$.
By the simple reduction, stated in \cite{CyganPPW13} and mentioned at the start of this subsection, this implies the result about \textsc{Subset FVS} in Theorem~\ref{thm:H-minor-graph-Pi-compression-vertex-SFVS-restated}.
Using similar arguments with Lemma~\ref{lemma:kernel-planar-Pi-vertex-SFVS}, imply the result about \textsc{Subset FVS} in Theorem~\ref{thm:planar-Pi-kernel-vertex-SFVS}.

\subsubsection{\textsc{Group Feedback Vertex Set}}

Throughout this subsection, we consider a finite group $\Sigma$ with group operation $\otimes$ and $1_{\Sigma}$ as its identity element.
Consider an instance $(G, \lambda',  \Sigma, k)$ of \textsc{Group FVS}.
Here, $G$ is an undirected graph and function $\lambda'$ is an edge labelling i.e. $\lambda': E(G) \mapsto \Sigma$.
Consider the digraph $D$ on $V(G)$ obtained from $G$ by replacing every edge $uv$ by arcs $uv$ and $vu$.
Let $\lambda: A(D) \mapsto \Sigma$ be an arc labelling such that $\lambda(uv) = \lambda'(uv)$ and $\lambda(vu) = (\lambda'(uv))^{-1}$. 
It is easy to verify that a set $Z \subseteq V(G) = V(D)$ hits all non-null cycles in $G$ if and only if it hits all non-null cycles in $D$.

We present a brief overview of the kernelization algorithm for \textsc{Group FVS} presented in Section $7.3$ in \cite{KratschW20}.
Given an instance $(G, \lambda', \Sigma, k)$, the algorithm starts by finding a group feedback vertex set\footnote{For notational clarity, we state a weaker bound on the size of $X$.} $X$ of $(G, \lambda',  \Sigma)$ of size $\calO(|\Sigma|^2 \cdot k^2)$.
We assume, without loss of generality, that such a set exists as otherwise the algorithm can conclude that the input is a \no\ instance.
It then constructs an equivalent instance $(D, \lambda, \Sigma, k)$ of \textsc{Group FVS} such that $\lambda$ satisfies certain properties with respect to $X$.
The following lemma summaries these steps.
\begin{lemma}[{\cite[Lemma 7.6 \& 7.7]{KratschW20}}]
There is a polynomial time algorithm that given an instance $(G, \lambda', \Sigma, k)$ of \textsc{Group FVS} finds a subset $X \subseteq V(G)$ and constructs an equivalent instance $(D, \lambda, \Sigma, k)$ of \textsc{Group FVS} such that the following conditions are true.
\begin{enumerate}
\item $G$ is an underlying undirected graph of $D$.
\item Set $X$ is an independent set of size $\calO(|\Sigma|^2 \cdot k^2)$ in $G$, and it is a group feedback vertex set of $(D, \lambda, \Sigma)$.
\item Every arc of $D - X$ has label $1_{\Sigma}$ and every arc incident on $X$ is oriented out of $X$.
\end{enumerate}
\end{lemma}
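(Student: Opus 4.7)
The plan is to establish the three properties sequentially by first obtaining an approximate solution of the right size and then normalizing both the labeling and the orientation.

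First, I would obtain an approximate group feedback vertex set $X_0 \subseteq V(G)$ of size $\calO(|\Sigma|^2 k^2)$. The standard tool here is an iterative short-cycle hitting argument: while a non-null cycle of length at most $\calO(|\Sigma|)$ exists, pick any such cycle and add \emph{all} its vertices to $X_0$. When no short non-null cycle remains, any non-null cycle must be long, and a counting/girth argument combined with the fact that a minimum group FVS has size at most $k$ shows that the number of iterations before termination is $\calO(|\Sigma| \cdot k)$. Since each iteration adds $\calO(|\Sigma|)$ vertices, we obtain $|X_0| = \calO(|\Sigma|^2 k^2)$. (Alternatively, one can invoke the $\calO(\log |\Sigma|)$-approximation of Guillemot~\cite{Guillemot11a} and blow up by a polynomial factor; what matters is that the size is polynomial in $|\Sigma|$ and $k$.)

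Second, I would make $X_0$ independent by edge subdivision. For every edge $uv \in E(G[X_0])$ with label $\sigma$, subdivide it into $u$--$w$--$v$ where $w$ is a new vertex, assigning label $\sigma$ to $uw$ and $1_\Sigma$ to $wv$. The new vertex $w$ is placed outside $X_0$. This preserves non-null cycles (the product of labels along any cycle is unchanged) and yields an independent set $X \coloneqq X_0$ in the modified undirected graph $G$.

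Third, I would normalize the labeling. Since $X$ is a group FVS, $G - X$ contains no non-null cycle, so by Lemma~\ref{la:consistent-group-labeling} there is a consistent labeling $\Lambda \colon V(G) \setminus X \rightarrow \Sigma$ which is computable in polynomial time via a BFS from an arbitrary representative in each connected component. For every edge $uv \in E(G)$ with $u,v \notin X$, I replace its label $\sigma$ with $\Lambda(u)^{-1} \otimes \sigma \otimes \Lambda(v)$, which equals $1_\Sigma$ by consistency. For every edge $uv \in E(G)$ with $u \in X$ and $v \notin X$, I replace its label $\sigma$ with $\sigma \otimes \Lambda(v)$. A straightforward check (``conjugating'' the label of each cycle by $\Lambda$ at every vertex in $V(G) \setminus X$) shows that non-null cycles are preserved.

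Finally, I would form the digraph $D$ by orienting each remaining undirected edge $uv$ as follows: if $u \in X$ and $v \notin X$, orient it as $uv$ with the normalized label, and add the reverse arc $vu$ with inverse label; if both endpoints lie outside $X$ (the case of both in $X$ has been eliminated by subdivision), introduce both arcs $uv$ and $vu$ with labels $1_\Sigma$ each. After this, the arcs of $D - X$ all carry label $1_\Sigma$, and the arcs incident to $X$ that are oriented \emph{into} $X$ can be removed at the cost of keeping only the outward arcs, since the arc-labeling condition $\lambda(vu) = \lambda(uv)^{-1}$ makes them redundant for the purpose of detecting non-null cycles through $X$. The main obstacle is ensuring the size bound $\calO(|\Sigma|^2 k^2)$ on the approximate solution $X_0$; the other three steps are essentially bookkeeping operations whose correctness follows from Lemma~\ref{la:consistent-group-labeling} and the invariance of cycle labels under subdivision and conjugation.
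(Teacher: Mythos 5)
The statement you are trying to prove is cited in the paper from Kratsch and Wahlstr\"om's work, without a proof of its own; so I am evaluating your reconstruction on its own merits. Your steps 2--4 (making $X$ independent via subdivision, normalizing labels via a consistent labeling of $G-X$, and orienting arcs out of $X$) are the right kind of bookkeeping, modulo a sign error in the conjugation: to cancel the labels inside $G-X$ you need to replace $\sigma$ on edge $uv$ by $\Lambda(u) \otimes \sigma \otimes \Lambda(v)^{-1}$, not $\Lambda(u)^{-1} \otimes \sigma \otimes \Lambda(v)$, since consistency gives $\lambda(u,v) = \Lambda(u)^{-1}\Lambda(v)$ and only the former conjugation telescopes to $1_\Sigma$.

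The real gap is in step 1. Hitting all non-null cycles of length $\CO(|\Sigma|)$ does \emph{not} produce a group feedback vertex set, and it is not just that the bound is hard to verify --- the procedure simply cannot terminate with a correct output as you describe it. Take $\Sigma = \mathbb{Z}_2$ and let $G$ be a single cycle of length $2n+1$ with all edges labeled $1$: it has a non-null cycle (itself), but no non-null cycle of length $\CO(|\Sigma|) = \CO(1)$, so your loop adds nothing to $X_0$ and yields the empty set, which is not a group FVS. There is no ``girth/counting argument'' that can rescue this; short-non-null-girth simply does not imply the absence of long non-null cycles. Your fallback via Guillemot is also not quite right: Guillemot's algorithm is an exact FPT algorithm (running time exponential in $k$), not a polynomial-time approximation, so one cannot invoke it to get $X_0$ in polynomial time. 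Obtaining a polynomially sized approximate group FVS in polynomial time is precisely the nontrivial content of [KratschW20, Lemma 7.6--7.7], and your proposal does not supply a working substitute for it.
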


Using the properties of instance $(D, \lambda, \Sigma, k)$,  the algorithm constructs a graph $G$ and set of terminals $T$ such that one can infer a solution of the original instance from a multiway cut of some subset of $T$ in $G^{\circ}$. 
Let $G^{\circ}$ be the graph created by taking the underlying undirected graph of $D - X$,  adding a new vertex $x(\alpha)$ for every vertex $x \in X$ and every label $\alpha \in \Sigma$,  and finally connecting $x(\alpha)$ to $v$ if and only if $x$ is incident with an arc $xv$ in $D$ with $\lambda(xv) = \alpha$.
Let $T = \{x(\alpha)\ |\ x \in X, \alpha \in \Sigma\}$.
For a set $S \subseteq T$, define $X(S) = \{x \in X\ |\ x(\alpha) \in S, \alpha \in \Sigma\}$, and for $S \subseteq X$ define $T(S) = \{x(\alpha)\ |\ x \in S, \alpha \in \Sigma\}$.
A set $S \subseteq T$ is \emph{regular} if $T(X(S)) = S$.

\begin{lemma}[{\cite[Lemma 7.9]{KratschW20}}]
Let $(D,\lambda,\Sigma, k)$ be an instance of \textsc{Group FVS} and let $X \subseteq V(D)$ be a group feedback vertex set for $D$ such that every arc of $D - X$ has label $1_{\Sigma}$, $D[X]$ is an independent set, and every arc incident with $X$ is oriented out of $X$.
Let $G^{\circ}$ and $T$ be defined from $D$ and $X$ as above.
The instance $(D, \lambda,\Sigma, k)$ is \yes\ instance of \textsc{Group FVS} if and only if there is a regular set $T_X \subseteq T$, a partition $\calT = \{T_{\alpha}\ |\ \alpha \in \Sigma\}$ of $T \setminus T_X$ and a multiway cut $Y$ of $T$ in $G^{\circ} - T_X$ such that
\begin{enumerate}
\item if $x(\alpha) \in T_{\alpha'}$ and $x(\beta) \in T_{\beta'}$ then $\alpha \otimes \alpha^{-1} = \beta' \otimes \beta^{-1}$, and
\item $|Y| + |X(T_X)| \le k$.
\end{enumerate}
Moreover, if $Y$ and $T$ fulfill these properties, then $X' = (Y \setminus T ) \cup X (T_X \cup (Y \cap T ))$ is a solution for $(D, \lambda, k)$.
\end{lemma}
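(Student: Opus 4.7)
The plan is to use Lemma~\ref{la:consistent-group-labeling} (the reformulation of \textsc{Group FVS} as a labeling problem) as a bridge between Group FVS solutions for $(D,\lambda,\Sigma,k)$ and triples $(T_X,\calT,Y)$ of the stated form, translating solutions back and forth via a consistent labeling $\Lambda$ on $V(D-X')$.

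For the forward direction, suppose $X'\subseteq V(D)$ is a solution with $|X'|\le k$. Apply Lemma~\ref{la:consistent-group-labeling} to $D-X'$ to obtain a consistent labeling $\Lambda\colon V(D-X')\to\Sigma$. Set $T_X := T(X'\cap X)$, $Y := X'\setminus X$, and, for each $x(\alpha)\in T\setminus T_X$, place $x(\alpha)$ in the class $T_{\alpha'}$ where $\alpha':=\Lambda(x)\otimes\alpha$. Consistency of $\Lambda$ on any arc $xv$ with $x\in X\setminus X(T_X)$ and $\lambda(xv)=\alpha$ gives $\Lambda(v)=\Lambda(x)\otimes\alpha$, so $\alpha'\otimes\alpha^{-1}=\Lambda(x)$ depends only on $x$; reading Condition~1 as $\alpha'\otimes\alpha^{-1}=\beta'\otimes\beta^{-1}$, this is exactly what is required. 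The size bound $|Y|+|X(T_X)|=|X'\setminus X|+|X'\cap X|=|X'|\le k$ is immediate. To verify that $Y$ is a multiway cut of $\calT$ in $G^{\circ}-T_X$, observe that any path in $G^{\circ}-T_X-Y$ from $x(\alpha)$ to $y(\beta)$ descends to a walk in $D-X'$ between a neighbor of $x$ and a neighbor of $y$ that traverses only arcs labeled $1_\Sigma$; consistency of $\Lambda$ forces the two endpoints to carry the same $\Lambda$-value, i.e.\ $\alpha'=\beta'$.

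For the reverse direction, given $(T_X,\calT,Y)$ satisfying the conditions, define $X':=(Y\setminus T)\cup X(T_X\cup(Y\cap T))$ as dictated by the ``Moreover'' clause. Since $Y\setminus T\subseteq V(D)\setminus X$ is disjoint from the $X$-part and $|X(T_X\cup(Y\cap T))|\le|X(T_X)|+|Y\cap T|$, we obtain $|X'|\le|Y\setminus T|+|X(T_X)|+|Y\cap T|=|Y|+|X(T_X)|\le k$. To show that $D-X'$ has no non-null cycle, I would exhibit a consistent labeling $\Lambda\colon V(D-X')\to\Sigma$. By Condition~1, for each $x\in X\setminus X(T_X)$ the value $\sigma(x):=\alpha'\otimes\alpha^{-1}$, taken for any $\alpha$ with $x(\alpha)\in T_{\alpha'}$, is well-defined; set $\Lambda(x):=\sigma(x)$. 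For each connected component $C$ of $G^{\circ}-T_X-Y$, the multiway-cut property forces all surviving terminals meeting $C$ to lie in a single class $T_{\alpha'}$, so declare $\Lambda(v):=\alpha'$ for every non-terminal $v\in V(C)$. Consistency of $\Lambda$ on an arc $uv$ of $D-X'$ then splits into two cases: if $u,v\notin X$ then $\lambda(uv)=1_\Sigma$ and $u,v$ lie in the same component of $G^{\circ}-T_X-Y$, so $\Lambda(u)=\Lambda(v)$; and if $u=x\in X\setminus X(T_X)$ with $\lambda(xv)=\alpha$, then $v$ is adjacent to $x(\alpha)$ in $G^{\circ}$, hence lies in the component of $x(\alpha)$, giving $\Lambda(v)=\sigma(x)\otimes\alpha=\Lambda(x)\otimes\lambda(xv)$. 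Lemma~\ref{la:consistent-group-labeling} then delivers the desired conclusion.

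The main technical point is Condition~1: it encodes precisely the fact that the shift $\Lambda(x)$ one would read off from any outgoing arc of $x\in X\setminus X(T_X)$ is independent of which arc is used, and this is exactly the invariant that permits gluing the component-wise labeling on $V(D)\setminus X$ into a consistent labeling on all of $D-X'$. Once this invariant is set up correctly, both directions reduce to bookkeeping-style verifications of the identity $\Lambda(u)\otimes\lambda(uv)=\Lambda(v)$ arc by arc, exploiting the structural hypotheses that $D[X]$ is independent, arcs of $D-X$ carry label $1_\Sigma$, and every arc incident with $X$ is oriented out of $X$.
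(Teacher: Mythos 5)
This lemma is not proved in the paper at all: it is imported verbatim as a black box from Kratsch and Wahlstr\"om (cited as Lemma~7.9 of \cite{KratschW20}), so there is no in-paper argument to compare against. Judged on its own, your reconstruction is correct and is essentially the canonical argument: both directions are mediated by the consistent-labeling characterization (Lemma~\ref{la:consistent-group-labeling}), with Condition~1 serving exactly the role you identify, namely that the ``shift'' $\sigma(x)=\alpha'\otimes\alpha^{-1}$ read off from any surviving copy $x(\alpha)$ of $x$ is independent of $\alpha$, which is what lets the component-wise constant labeling on $V(D)\setminus X$ glue with the labels on $X\setminus X'$ into a consistent labeling of $D-X'$. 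You also correctly read the statement's Condition~1 as $\alpha'\otimes\alpha^{-1}=\beta'\otimes\beta^{-1}$ (the printed $\alpha\otimes\alpha^{-1}$ is a typo), and correctly treat $Y$ as a vertex multiway cut with deletable terminals, which is what the formula $X'=(Y\setminus T)\cup X(T_X\cup(Y\cap T))$ presupposes.

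Two small points you should make explicit. First, in the forward direction your separation argument assumes the $x(\alpha)$--$y(\beta)$ path has only non-terminal internal vertices; a path may pass through an intermediate terminal $z(\gamma)$ with $z\notin X'$, but the invariant survives because its two neighbours $v_i,v_{i+1}$ on the path both satisfy $\Lambda(v_i)=\Lambda(z)\otimes\gamma=\Lambda(v_{i+1})$, so the common $\Lambda$-value propagates across $z(\gamma)$ as well. Second, in the reverse direction a component of $G^{\circ}-T_X-Y$ containing no terminal needs an arbitrary label (say $1_\Sigma$); consistency within such a component is automatic since all its internal edges carry label $1_\Sigma$. With these two remarks added, the proof is complete.
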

The algorithm then computes a cut-covering set $W^{\circ} \subseteq V(G^{\circ})$ of size $k^{\calO(|\Sigma|)}$ for generalized $|\Sigma|$-partitions of $T$ in $G^{\circ}$ using Theorem $5.14$ in \cite{KratschW20}.
As the final step, it applies a torso operation for $D$ down to $(W^{\circ} \setminus T) \cup X$, modified to account for the labelling $\lambda$.

Note that $G - X$ is same graph as $G^{\circ} - T$ and $T \subseteq W^{\circ}$.
Hence, the above lemma implies that if $(D, \lambda, \Sigma, k)$ is a \yes\ instance of \textsc{Group FVS}, then there exists a solution in $W := (W^{\circ} \setminus T) \cup X$.
This implies the following lemma which is analogues to Lemma~\ref{lemma:vertex-multicut-deletable-terminals-marking} and Lemma~\ref{lemma:subset-FVS-marking} in the cases of \textsc{Vertex MwC-DT} and \textsc{Subset FVS}, respectively. 

\begin{lemma}
\label{lemma:group-FVS-marking}
Suppose $(G,\lambda, \Sigma, k)$ is a \yes\ instance of \textsc{Group FVS}.
There exists a set $W \subseteq V(G)$ of size $k^{\calO(|\Sigma|)}$ such that there is a solution of $(D, \lambda, \Sigma, k)$ contained in $W$, and every edge of $G - W$ has label $1_{\Sigma}$.
Furthermore, there is an algorithms that finds such a set in randomized polynomial time with failure probability $\calO(2^{-|V(G)|})$.
\end{lemma}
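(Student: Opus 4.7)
The plan is to lift the known randomized polynomial kernelization of Kratsch and Wahlstr{\"o}m~\cite{KratschW20} into a \emph{marking} statement, exactly in the spirit of Lemmas~\ref{lemma:vertex-multicut-deletable-terminals-marking} and \ref{lemma:subset-FVS-marking}. The whole point is that the Kratsch--Wahlstr{\"o}m algorithm already produces the required set $W$ internally, before it performs the torso operation that destroys $H$-minor-freeness; we just have to stop one step earlier and verify that the resulting set has the two stated properties.

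First, I would invoke their preprocessing to obtain an $\calO(|\Sigma|^2\cdot k^2)$-sized group feedback vertex set $X$ together with the reoriented, relabelled instance $(D,\lambda,\Sigma,k)$ whose three structural properties are listed in the excerpt: $X$ is an independent set in the underlying graph, every arc of $D-X$ carries the label $1_\Sigma$, and every arc incident with $X$ is oriented out of $X$. From $(D,\lambda,\Sigma,k)$ and $X$ I would build the auxiliary undirected graph $G^{\circ}$ and terminal set $T=\{x(\alpha)\mid x\in X,\alpha\in\Sigma\}$ as in the excerpt, and then apply Theorem~5.14 of~\cite{KratschW20} to compute, in randomized polynomial time with failure probability $\calO(2^{-|V(G)|})$, a cut-covering set $W^{\circ}\subseteq V(G^{\circ})$ of size $k^{\calO(|\Sigma|)}$ for generalized $|\Sigma|$-partitions of $T$ in $G^{\circ}$. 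I would then define
\[
W \;\coloneqq\; (W^{\circ}\setminus T)\,\cup\, X.
\]
Since $|X|=\calO(|\Sigma|^2 k^2)$ and $|W^{\circ}|=k^{\calO(|\Sigma|)}$, we have $|W|=k^{\calO(|\Sigma|)}$.

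The correctness of the marking property then follows by combining the two lemmas quoted just before the statement. By Lemma~7.9 of~\cite{KratschW20}, if $(D,\lambda,\Sigma,k)$ is a yes-instance there is a regular set $T_X\subseteq T$, a generalized $|\Sigma|$-partition $\calT$ of $T\setminus T_X$, and a multiway cut $Y$ of this partition in $G^{\circ}-T_X$ with $|Y|+|X(T_X)|\le k$, from which the solution $X'=(Y\setminus T)\cup X(T_X\cup(Y\cap T))$ for $(D,\lambda,\Sigma,k)$ is recovered. Because $W^{\circ}$ is a cut-covering set for generalized $|\Sigma|$-partitions, we may assume $Y\subseteq W^{\circ}$, hence $Y\setminus T\subseteq W^{\circ}\setminus T\subseteq W$; and $X(T_X\cup(Y\cap T))\subseteq X\subseteq W$. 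Thus $X'\subseteq W$, giving the desired solution contained in $W$.

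For the second property, the structural preprocessing already guarantees that every arc of $D-X$ has label $1_\Sigma$, so every edge of the undirected graph $G-X$ has label $1_\Sigma$. Since $X\subseteq W$, every edge of $G-W$ is an edge of $G-X$ and therefore carries label $1_\Sigma$. The main obstacle is really only bookkeeping: the Kratsch--Wahlstr{\"o}m write-up phrases things as a kernel produced via a torso operation, so I need to isolate the precise intermediate object (the set $W$ together with the undisturbed graph $G$) rather than the torso output, and verify that the guarantees of Lemma~7.9 and of the cut-covering set in Theorem~5.14 are strong enough to certify that some optimal solution sits inside $W$ before any graph modification takes place. Once this is spelled out, the randomized polynomial running time and $\calO(2^{-|V(G)|})$ failure bound are directly inherited from Theorem~5.14.
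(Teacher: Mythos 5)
Your proposal is correct and follows essentially the same approach as the paper: the paper likewise invokes the Kratsch--Wahlstr{\"o}m preprocessing to obtain $X$ and $(D,\lambda,\Sigma,k)$, builds $G^{\circ}$ and $T$, computes the cut-covering set $W^{\circ}$ via Theorem~5.14 of \cite{KratschW20}, defines $W \coloneqq (W^{\circ}\setminus T)\cup X$, and then reads off both the containment of a solution (via Lemma~7.9 of \cite{KratschW20} and the cut-covering property) and the label condition (from every arc of $D-X$ having label $1_\Sigma$).
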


We remark that if $G$ is an $H$-minor free graph then it does not imply that $G^{\circ}$ is an $H$-minor free graph.
However,  graph $G^{\circ}$ is used only to obtain set $W$ and to identify some vertices in $X$ that are in any solution.
Hence, the overall process of identifying set $W$ does not introduce any new vertex or edge.

Consider two vertices $u, v \in V(G) \setminus W$ such that $uv$ is an edge.
By the property of $W$, the label of $uv$ is $1_{\Sigma}$.
As there is a solution which is completely contained in $W$, it is safe to contract edge $uv$.
Hence, the `contraction' step follows easily.
Now, suppose $N_G(u) \subseteq N_G(v)$.
Unlike in the case of \textsc{Vertex MwC-DT} or \textsc{Subset FVS}, it is \emph{not safe} to delete $u$ as edges incident on $u$ and $v$ might have different labels.
In this case, we can not invoke Lemma~\ref{lemma:bipartite-H-minor} and need the following result whose proof we defer to the later part of the subsection.
\begin{lemma} 
\label{lemma:bipartite-H-minor-GFVS}
For a fixed graph $H$ on $h$ vertices, let $G = (X \cup Y, E)$ be a simple $H$-minor free bipartite graph such that there is no isolated vertex in $Y$.
Define $Y(X') = \{y \in Y\ |\ X' \subseteq N(y)\}$ for every $X' \subseteq X$.
If for every $X' \subseteq X$ of size strictly less than $h$, $|Y(X')| \le q$ then $|Y| \le \calO((q + h) \cdot |X|^{h})$.
\end{lemma}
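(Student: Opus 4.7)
The central observation is that $H$-minor-freeness already forces $|Y(X')| \leq h-1$ even for the ``boundary'' case $|X'|=h$, which the hypothesis does not cover directly. Indeed, I will argue: if $|X'|=h$ and $|Y(X')|\geq h$, pick any $Y'\subseteq Y(X')$ with $|Y'|=h$; then by definition every $y\in Y'$ is adjacent to every $x\in X'$, so $G[X'\cup Y']$ contains $K_{h,h}$ as a subgraph. Contracting any perfect matching inside this $K_{h,h}$ produces $K_h$ as a minor of $G$, and $H\subseteq K_h$ as a subgraph, so $G$ would have $H$ as a minor, contradicting the hypothesis. Hence $|Y(X')|\leq h-1$ for every $X'\subseteq X$ with $|X'|=h$.

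Having established this, I would split $Y=Y_{<h}\uplus Y_{\geq h}$ according to whether $|N(y)|<h$ or $|N(y)|\geq h$, and bound each piece separately. If $|X|=0$ then $Y=\emptyset$ by the no-isolated-vertex assumption, so assume $|X|\geq 1$. For $Y_{<h}$, every $y$ satisfies $y\in Y(N(y))$ with $1\leq |N(y)|\leq h-1$, and the count of $y$'s with a fixed neighborhood $N(y)=X'$ is bounded by $|Y(X')|\leq q$ via the hypothesis; summing over the $\sum_{i=1}^{h-1}\binom{|X|}{i}=O(|X|^{h-1})$ possible neighborhoods yields
\[
|Y_{<h}|\;\leq\; q\cdot O(|X|^{h-1}).
\]

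For $Y_{\geq h}$, I would use the standard double counting of pairs $(X',y)$ with $|X'|=h$, $X'\subseteq N(y)$, and $y\in Y_{\geq h}$. Each such $y$ contributes $\binom{|N(y)|}{h}\geq 1$ pairs, and each $X'$ of size $h$ contributes at most $|Y(X')|\leq h-1$ pairs by the boundary bound established above. Hence
\[
|Y_{\geq h}|\;\leq\; \binom{|X|}{h}(h-1)\;=\;O(h\cdot |X|^h).
\]
Adding the two estimates and using $|X|\geq 1$ to absorb $q\cdot|X|^{h-1}$ into $q\cdot|X|^h$ gives $|Y|\leq O((q+h)|X|^h)$.

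The only ``hard'' step is the first one, and its difficulty is conceptual rather than technical: one has to notice that the given hypothesis on sets of size strictly less than $h$ can be extended, using $H$-minor-freeness via the $K_{h,h}\Rightarrow K_h$-minor route, to a bound on sets of size exactly $h$, which is precisely what the double counting needs. Once this bridge is in place, the remainder is elementary counting.
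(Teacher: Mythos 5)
Your proof is correct and follows essentially the same route as the paper's: the paper also observes that an $H$-minor-free graph is $K_{h,h}$-minor-free (equivalently, your bound $|Y(X')|\le h-1$ for $|X'|=h$), splits $Y$ by degree at threshold $h$, bounds $Y_{<h}$ by the $q$-hypothesis over the $O(|X|^{h-1})$ small neighborhoods, and bounds $Y_{\ge h}$ by a pigeonhole/counting argument over $h$-subsets of $X$. Your double counting of pairs $(X',y)$ is an inessential variant of the paper's canonical-map-plus-pigeonhole step and yields the same bound.
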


We now prove a lemma analogous to Lemma~\ref{lemma:compression-H-minor-free-Pi-vertex-SFVS}.

\begin{lemma}
\label{lemma:compression-H-minor-free-Pi-vertex-GFVS}
Consider an instance $(G, \lambda, \Sigma, k)$ of \textsc{$H$-Minor-Free Group FVS} and suppose there is a subset $W \subseteq V(G)$  with the guarantee that if there is a solution of $(G, \lambda, \Sigma, k)$, then there is a solution contained in $W$, and every edge of $G - W$ has label $1_{\Sigma}$.
Then, one can obtain an equivalent instance $(G^{\circ}, \lambda^{\circ}, \Sigma, F, k^{\circ})$ of \textsc{$H$-Minor-Free Group FVS with Undeleteable Vertices} in polynomial time such that $|V(G')| \in (|\Sigma| + |W|)^{\calO(|V(H)|)}$ and $k^{\circ} \le k$.
\end{lemma}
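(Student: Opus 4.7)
The plan is to adapt the Contract-to-Undeletable scheme of Lemma \ref{lemma:compression-H-minor-free-Pi-vertex-SFVS} to the labeled setting. The key complication is that edge labels preclude the simple neighborhood-inclusion reduction used there, and we instead have to rely on the more delicate combinatorial bound of Lemma \ref{lemma:bipartite-H-minor-GFVS}.

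First I would apply the contraction step. Let $G_1$ be obtained from $G$ by contracting each connected component of $G - W$ to a single vertex; let $F_1 = V(G_1) \setminus W$ be the collection of these new vertices, all marked undeletable. Since every edge of $G - W$ carries label $1_{\Sigma}$, the contraction is well-defined: for $y \in F_1$ and $x \in W$ adjacent (in $G$) to some vertex of the component collapsed to $y$, the resulting edge $xy$ inherits a well-defined label $\lambda_1(x,y) \in \Sigma$, obtained by composing along any witnessing $1_{\Sigma}$-path. In the case where the same $x$ has several in-component neighbors producing inconsistent labels, a non-null $2$-cycle through $x$ is created, so any solution must delete $x$; by hypothesis such an $x$ already lies in $W$, so this can be handled as simple preprocessing. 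Standard arguments yield that $(G_1, \lambda_1, \Sigma, F_1, k)$ is an equivalent instance of \textsc{Group FVS with Undeletable Vertices}, and $G_1$, being a minor of $G$, is still $H$-minor-free. Crucially $F_1$ is an independent set of $G_1$, because distinct components of $G - W$ are by definition not joined in $G$.

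Next I would reduce the cardinality of $F_1$. Observe that each $y \in F_1$ imposes on any consistent labeling $\Lambda \colon V(G_1) \to \Sigma$ the constraint that $\Lambda(x) \cdot \lambda_1(y,x)^{-1}$ be constant (equal to $\Lambda(y)$) as $x$ ranges over $N(y)$. Two vertices $y_1, y_2 \in F_1$ with $N(y_1) = N(y_2)$ and $\lambda_1(y_1, x) = \sigma \cdot \lambda_1(y_2, x)$ for all $x$ in the common neighborhood and some fixed $\sigma \in \Sigma$ therefore impose equivalent constraints, so one of them may be deleted. Applying this and related reduction rules exhaustively leaves at most $|\Sigma|^{|X'|-1}$ vertices of $F_1$ whose neighborhood is exactly $X' \subseteq W$, and more generally, by classifying vertices of $\{y \in F_1 \mid X' \subseteq N(y)\}$ by their induced "difference pattern" $x \mapsto \lambda_1(y,x) \cdot \lambda_1(y,x_0)^{-1}$ on $X'$ (for some fixed $x_0 \in X'$), forces $|Y(X')| \le q$ for every $X' \subseteq W$ with $|X'| < h \coloneqq |V(H)|$ and for some $q$ depending only on $|\Sigma|$ and $h$.

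Finally, invoke Lemma \ref{lemma:bipartite-H-minor-GFVS} on the bipartite subgraph of $G_1$ with parts $W$ and $F_1$ (which is $H$-minor-free as a subgraph of a minor of $G$) to conclude $|F_1| \le \calO((q + h) \cdot |W|^h) \le (|\Sigma| + |W|)^{\calO(h)}$, giving $|V(G^{\circ})| = |W| + |F_1| \in (|\Sigma| + |W|)^{\calO(|V(H)|)}$. Output $(G^{\circ}, \lambda^{\circ}, \Sigma, F, k^{\circ})$ with $F \subseteq V(G^{\circ})$ the surviving undeletable vertices and $k^{\circ} = k$. The entire procedure runs in polynomial time, since each individual reduction is testable in polynomial time and strictly decreases $|F_1|$. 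The main obstacle is designing the family of reduction rules in the second paragraph so that, after exhaustive application, the bound $|Y(X')| \le q$ really holds for every small $X'$, with $q$ depending only on $|\Sigma|$ and $h$: unlike in the unlabeled setting of Lemma \ref{lemma:compression-H-minor-free-Pi-vertex-SFVS}, where neighborhood inclusion was enough, here two $y$'s with incomparable neighborhoods and unrelated labels can both witness the same small $X'$, so one must exploit the group structure carefully and use $H$-minor-freeness again to bound the number of distinct equivalence classes of labelings that survive.
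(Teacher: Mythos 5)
Your overall strategy coincides with the paper's: contract each component of $G-W$ to an undeletable vertex, handle the label bookkeeping created by the contraction (your observation that inconsistent labels at some $x\in W$ force $x$ into every solution is exactly the paper's second reduction rule), deduplicate the undeletable vertices using the group structure, and finish with Lemma~\ref{lemma:bipartite-H-minor-GFVS}. Your deduplication rule (equal neighborhoods, label vectors differing by a global left-translation $\sigma$) is sound and even slightly sharper than the paper's rule (which requires $N(u)\subseteq N(v)$ with \emph{identical} labels), and it correctly yields at most $|\Sigma|^{|X'|-1}$ surviving vertices with neighborhood exactly $X'$.

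The gap is the step you yourself flag: you assert that classifying the vertices of $Y(X')=\{y : X'\subseteq N(y)\}$ by their difference pattern on $X'$ ``forces $|Y(X')|\le q$,'' but the corresponding reduction rule is unsound --- two undeletable vertices with the same difference pattern on $X'$ but different neighborhoods impose genuinely different constraints (one may attach $X'$ to a vertex the other does not reach), so neither can be deleted, and $Y(X')$ can a priori contain one vertex per superset of $X'$. However, this obstacle dissolves once you look at how Lemma~\ref{lemma:bipartite-H-minor-GFVS} is actually proved: it splits $Y$ into $Y_{\ge h}$ and $Y_{<h}$, bounds $|Y_{\ge h}|\le (h-1)|X|^h$ purely from $K_{h,h}$-minor-freeness (pigeonholing on the $h$ smallest-indexed neighbors), and uses the hypothesis $|Y(X')|\le q$ only to bound $Y_{<h}$, where every vertex $y$ is counted inside $Y(N(y))$ for its own neighborhood $N(y)$ of size $<h$. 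So the only bound one really needs is $|\{y: N(y)=X'\}|\le q$ for $|X'|<h$, which your exact-neighborhood deduplication already delivers with $q=|\Sigma|^{h}$; the containment version for incomparable neighborhoods is never used. Substituting this observation for your final paragraph (or restating Lemma~\ref{lemma:bipartite-H-minor-GFVS} with the exact-neighborhood hypothesis) closes the argument and gives $|F_1|\le\calO((|\Sigma|^h+h)\cdot|W|^h)$, hence $|V(G^\circ)|\in(|\Sigma|+|W|)^{\calO(h)}$ as required. This is precisely how the paper concludes, so no new reduction rules or further use of $H$-minor-freeness are needed beyond what you already have.
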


\begin{proof}
Consider an algorithm that given an instance $(G, \lambda, \Sigma, k)$ of  \textsc{$H$-Minor-Free Group FVS}  and set $W$ as specified in the statement of the lemma applies the following three reduction rules.
\begin{enumerate}
\item It contracts each connected component of $G - W$ to a vertex \emph{without deleting any parallel edges}.
Formally, let $M_u$ be a connected component of $G - W$.
The algorithm deletes $M_u$ and adds a new vertex $u$ to $G$.
For every edge $vw$ such that $v \in M_u$ and $w \in W$, it adds edge $uw$ and assigns $\lambda'(uw) = \lambda(vw)$.
For all the remaining edges $e$, $\lambda'(e) = \lambda(e)$. 
Let $G'$ be the graph obtained after contracting each connected component of $G - W$ and let $F$ be the collection of all new vertices added i.e. $F = V(G') \setminus W$.
It returns instance $(G', \lambda', \Sigma, F, k)$.

\item Suppose there are two edges $e_1, e_2$ in $G$ that have same endpoints.
Moreover,  one of the endpoint, say $u$, is in $F$ and hence another endpoint, say $w$, is in $W$.  
If $\lambda(e_1) \otimes \lambda(e_2) \neq 1_{\Sigma}$, then it deletes $w$ and returns $(G - \{w\}, \lambda|_{E(G - {w})}, \Sigma, F, k - 1)$.
Otherwise, if $\lambda(e_1) = \lambda(e_2)$, then it deletes $e_1$ and returns $(G - \{e_1\}, \lambda|_{E(G) \setminus \{e_1\}}, \Sigma,  F, k)$.

Let $G'$ be the resulting graph after exhaustively applying the reduction rule.
It returns $(G', \lambda|_{E(G')}, \Sigma, F, k')$.

\item If there are two vertices $u, v \in F$ such that for every edge $uw$ in $G'$ there is an edge $vw$ in $G'$ such that $\lambda(uw) = \lambda(vw)$, then delete $u$.

Let $G^{\circ}$ be the resulting graph after exhaustively applying the reduction rule.
It returns $(G^{\circ}, \lambda|_{E(G^{\circ})}, \Sigma, F, k)$.
\end{enumerate}

The algorithm obtains $G^{\circ}$ from $G$ by series of edge contractions and vertex deletion operations.
Hence, $G'$ is a minor of $G$.
Moreover, the algorithm terminates in the time polynomial in the size of the input instance.
We first argue that the reduction rules are safe and then bound the number of undeletable vertices in $G^{\circ}$.

\noindent \textbf{(1)}
We argue the correctness of the first reduction rule.
Note that every edge of $G - W$ has label $1_{\Sigma}$.
This implies that every edge which is present in $G$ but not in $G'$ has label $1_{\Sigma}$.
For every vertex $u \in F \subseteq V(G')$, let $M_u$ be the corresponding connected component of $G - W$.
Alternately, vertex $u$ is obtained by contracting $M_u$ to a single vertex without deleting any parallel edges.
For any subset $Z$ of $W$, let $A$ be a connected component of $G - Z$.
As $F \cap W = \emptyset$, for any $u \in F$, either $M_u \subseteq A$ or $M_u \cap A = \emptyset$.
By the construction, for any connected component $A'$ of $G' - Z$, there is a connected component $A$ of $G - Z$ such that $A'$ can be obtained from $A$ by contracting all $M_u$ that intersects with $A$ to a single vertex.

To prove the forward direction, let $Z$ be a solution of $(G, \lambda, \Sigma, k)$ such that $Z \subseteq W$ and hence $Z \cap F = \emptyset$.
Such a solution exists by the property of $W$.
Assume $Z$ is not a solution of $(G', \lambda', \Sigma, F, k)$.
This implies that there exists a connected component $A'$ of $G' - Z$ such that it contains a non-null cycle. 
As every edge in $E(G) \setminus E(G')$ has label $1_{\Sigma}$, this implies the corresponding connected component $A$ of $G - Z$ also contains a non-null cycle. 
This contradicts the fact that $Z$ is a solution of $(G, \lambda, \Sigma, k)$.
Hence, our assumption is wrong and $Z$ is a solution of $(G', \lambda', \Sigma, F, k)$.

To prove the reverse direction, let $Z'$ be a solution of $(G', \lambda', \Sigma, F, k)$.
Assume $Z'$ is not a solution of $(G, \lambda, \Sigma, k)$.
This implies that there is at least one connected component $A$ of $G - Z'$ that has a non-null cycle. 
As every edge in $E(G) \setminus E(G')$ have label $1_{\Sigma}$, this implies that $A'$ contains a non-null cycle which contradicts the fact that $Z'$ is a solution of $(G', \lambda', \Sigma, F, k)$.
Hence our assumption is wrong $Z'$ is a solution of $(G, \lambda, \Sigma, k)$.

This concludes the proof of correctness of the first reduction rule.

\noindent \textbf{(2)} 
In the first case, the safeness of the reduction rule follows from the fact that any solution contains at least one vertex in $\{u, w\}$ and there is a solution which is completely contained inside $W$.
In the second case, the correctness of the forward direction follows from the fact that the resulting graph is a subgraph of the input graph.
In the reverse direction, let $Z'$ be a solution of instance $(G', \lambda|_{E(G')}, \Sigma, F, k)$. 
It is easy to verify that if there is non-null cycle in $G - Z'$ that contains $e_1$, then there is a non-null cycle in $G' - Z'$ that contains $e_2$.
This contradicts the fact that $Z'$ is a solution.
Hence, $Z'$ is a solution of $(G, \lambda, \Sigma, F, k)$.

This concludes the proof of correctness of the second reduction rule.

\noindent \textbf{(3)} The correctness of the forward direction follows from the fact that $G^{\circ}$ is a subgraph of $G'$.
To prove the reverse direction, let $Z^{\circ}$ be a solution of instance $(G^{\circ},  \lambda, \Sigma, F, k)$ of \textsc{Group FVS}. 
If $Z^{\circ}$ is not a solution of $(G',  \lambda, \Sigma, F, k)$,  there there a non-null cycle containing vertex $u$ in $G'$.
As for every edge $uw$ in $G'$ there is an edge $vw$ in $G'$ such that $\lambda(uw) = \lambda(vw)$,  and $v \in F$ and hence $v \not\in Z^{\circ}$, there is non-null cycle $G^{\circ} - Z^{\circ}$.
This contradicts the fact that $Z^{\circ}$ is a solution of $(G^{\circ},  \lambda, \Sigma, F, k)$.
This implies that the third reduction rule is safe.

To complete the lemma, it remains to argue an upper bound on the number of undeletable vertices.
Consider the bipartite graph obtained from $G^{\circ}$ by deleting all the edges whose both endpoint are in $V(G^{\circ}) \setminus F$.
Consider a subset $W^{\circ}$ of $W$ and let $F(W^{\circ})$ be the collection of vertices in $F$ that are adjacent with every vertex in $W^{\circ}$.
Formally, $F(W^{\circ}) = \{u\in F\ |\ N_{G^{\circ}}(u) = W^{\circ}\}$.
We first argue that $|F(W^{\circ})| \le |\Sigma|^{|W^{\circ}|}$.

As the second reduction rule is not applicable, for any vertex $u \in F(W^{\circ})$ and $w \in W^{\circ}$, there are at most two edges $e_1, e_2$ whose endpoints are $u, w$.
Moreover, $\lambda(e_1) \otimes \lambda(e_2) = 1_{\Sigma}$ i.e. $\lambda(e_1) = \lambda(e_2)^{-1}$.
For the sake of counting, we arbitrarily prefer one of the element in $\{\lambda(e_1), \lambda(e_2)\}$ over another.
Formally, define a function $\psi: \Sigma \mapsto \Sigma$ such that for any $\alpha \in \Sigma$, we have $\psi(\alpha) \in \{\alpha, \alpha^{-1}\}$ and $\psi(\alpha) = \psi(\alpha^{-1})$.
Let $[\Sigma]^{|W^{\circ}|}$ denote the collection of tuples of size $|W^{\circ}|$ where each entry in the tuple is in $\Sigma$.
Index the vertices in $W^{\circ}$ arbitrarily from $1$ to $|W^{\circ}|$.
Every vertex in $F(W^{\circ})$ naturally corresponds to an unique tuple $t$ in this collection.
Namely, for a vertex $u \in F(W^{\circ})$, let $t_u$ be the unique tuple in $[\Sigma]^{|W^{\circ}|}$ such that if $w_i$ is the $i^{th}$ vertex in $W^{\circ}$ then the $i^{th}$ entry in tuple $t_u$ is $\psi(\lambda(u))$.
Note that for two vertices $u, v$ in $F(W^{\circ})$, if $t_u$ is same as that of $t_v$, then the second reduction rule would have deleted $u$.
Hence, every vertex in $F(W^{\circ})$ corresponding to a unique tuple $[\Sigma]^{|W^{\circ}|}$.
This implies that there are at most $|\Sigma|^{|W^{\circ}|}$ many vertices in $F(W^{\circ})$.

We assume, without loss of generality, that there are no isolated vertices in $F$ as otherwise one can safely delete them.
Hence, the bipartite graph obtained from $G^{\circ}$ by deleting edges whose both endpoints are in $V(G^{\circ}) \setminus F$, satisfies the premise of Lemma~\ref{lemma:bipartite-H-minor-GFVS} for $q = |\Sigma|^h$. 
Hence, Lemma~\ref{lemma:bipartite-H-minor-GFVS} implies that the number of vertices in $F$ is $\calO((|\Sigma|^h + h) \cdot |W|^h)$. 
This concludes the proof of the lemma.
\end{proof}

We now prove a lemma analogous to that of Lemma~\ref{lemma:kernel-planar-Pi-vertex-SFVS}.

\begin{lemma}
\label{lemma:kernel-planar-Pi-vertex-GFVS}
Consider an instance $(G, \lambda, \Sigma, k)$ of  \textsc{Planar Group FVS} and suppose there is a subset $W \subseteq V(G)$  with the guarantee that if there is a solution of $(G, \lambda, \Sigma, k)$, then there is a solution contained in $W$, and every edge of $G - W$ has label $1_{\Sigma}$.
Then, one can obtain an equivalent instance $(G^{\circ},  \lambda^{\circ}, \Sigma, k^{\circ})$ of \textsc{Planar Group FVS} in polynomial time such that $|V(G')| \in k^2 \cdot (|\Sigma| + |W|)^{\calO(1)}$ and $k^{\circ} \le k$.
\end{lemma}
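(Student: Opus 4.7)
The plan is to mimic the strategy of Lemma~\ref{lemma:kernel-planar-Pi-vertex-SFVS}: first invoke the compression for \textsc{$H$-Minor-Free Group FVS with Undeletable Vertices}, and then replace every undeletable vertex by a planar ``grid gadget'' that faithfully encodes the group labels on the edges incident to it. Concretely, I first apply Lemma~\ref{lemma:compression-H-minor-free-Pi-vertex-GFVS} with $H = K_5$ to obtain, in randomized polynomial time, an equivalent instance $(G', \lambda', \Sigma, F, k')$ of \textsc{Planar Group FVS with Undeletable Vertices} such that $k' \le k$, $F \subseteq V(G')$ is independent, and $|V(G')| \le (|\Sigma| + |W|)^{\calO(1)}$ (the constant in the exponent of $|V(H)|$ is absolute for planar graphs). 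It suffices to describe a polynomial time reduction that, given such an instance $(G', \lambda', \Sigma, F, k')$, outputs an equivalent instance $(G^{\circ}, \lambda^{\circ}, \Sigma, k^{\circ})$ of \textsc{Planar Group FVS} with $k^{\circ} \le k'$ and $|V(G^{\circ})| \le \calO(k^2 \cdot |V(G')|)$.

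For the gadget, fix an undeletable vertex $u \in F$ of degree $d := \deg_{G'}(u)$ with neighbours $w_1, \dots, w_d$ enumerated according to the cyclic order of a fixed planar embedding of $G'$. Let $\sigma_i \coloneqq \lambda'(u,w_i) \in \Sigma$. Replace $u$ by a $(2d(k+1)) \times (k+1)$ grid $M_u$ with vertices $u_{i,j}$; assign label $1_\Sigma$ to every edge inside $M_u$. For each neighbour $w_r$ of $u$ and each $j \in [k+1]$, add an edge from $w_r$ to $u_{2(r-1)(k+1)+2j,1}$ (in the obvious order-preserving way so that planarity is preserved), and assign this new edge the label $\sigma_r$. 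Let $G^{\circ}$ be the graph obtained by carrying out this replacement for every $u \in F$, and let $\lambda^{\circ}$ be the resulting labelling. Set $k^{\circ} \coloneqq k'$. Since $G'$ is planar, $\sum_{u \in F}\deg_{G'}(u) \le 2|E(G')| \in \calO(|V(G')|)$, so $|V(G^{\circ})| \le \calO(k^2 \cdot |V(G')|) \le k^2 \cdot (|\Sigma| + |W|)^{\calO(1)}$ and the construction is clearly polynomial time and preserves planarity.

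The key equivalence argument has two directions. For the forward direction, if $Z^{\circ}$ is a solution for $(G^{\circ}, \lambda^{\circ}, \Sigma, k^{\circ})$ then, exactly as in the proof of Lemma~\ref{lemma:kernel-planar-Pi-vertex-SFVS}, any vertex of $Z^{\circ}$ that lies inside some grid $M_u$ can be removed without destroying the solution, using the fact that there are $k+1$ vertex-disjoint ``replacement routes'' through $M_u$ between any two points attached to it (so any non-null cycle routed through one deleted grid vertex can be re-routed). Thus one may assume $Z^{\circ} \cap \bigcup_{u\in F} M_u = \emptyset$, and then $Z^{\circ}$ is also a subset of $V(G')$. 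Moreover, any non-null cycle in $G' - Z^{\circ}$ that uses $u$ via $w_i, u, w_j$ (with multiplicative contribution $\sigma_i^{-1}\sigma_j$ at $u$) corresponds to a non-null cycle in $G^{\circ} - Z^{\circ}$ obtained by replacing the segment $w_i u w_j$ with the grid path $w_i, u_{2(i-1)(k+1)+2,1}, \dots, u_{2(j-1)(k+1)+2,1}, w_j$; by construction, the product of labels along this inserted path equals $\sigma_i^{-1} \cdot 1_\Sigma \cdots 1_\Sigma \cdot \sigma_j = \sigma_i^{-1}\sigma_j$, so the total cycle label is preserved. Hence $Z^{\circ}$ is a solution for $(G', \lambda', \Sigma, F, k')$, and by Lemma~\ref{lemma:compression-H-minor-free-Pi-vertex-GFVS} this is equivalent to solving the original instance. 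The converse direction is symmetric: any solution $Z' \subseteq V(G')$ for $(G', \lambda', \Sigma, F, k')$ satisfies $Z' \cap F = \emptyset$ and is a solution for $(G^{\circ}, \lambda^{\circ}, \Sigma, k^{\circ})$, since every non-null cycle in $G^{\circ} - Z'$ contracts inside each $M_u$ to a non-null cycle through $u$ in $G' - Z'$.

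The main obstacle is handling the non-abelian group labels correctly in the gadget replacement: while the SFVS gadget only had to preserve the existence of paths, here we must ensure that \emph{every} labelled product along every ``re-routing'' through the gadget equals the corresponding product through the original vertex $u$. This is exactly why we label all intra-grid edges by $1_\Sigma$ and push all of $\sigma_r$ onto the single edge that attaches $w_r$ to the grid; because $u \notin T$ (terminals play no role in Group FVS's definition) and because neighbours of $u$ are not in $F$ (as $F$ is independent), this single-edge relabelling is well defined and does not interact with other gadgets. The $(k+1)$-fold redundancy along every neighbour attachment then guarantees the undeletability argument above, completing the reduction to \textsc{Planar Group FVS} of the claimed size.
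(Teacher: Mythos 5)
Your proposal is correct and follows essentially the same route as the paper: apply Lemma~\ref{lemma:compression-H-minor-free-Pi-vertex-GFVS} to obtain the undeletable-vertex instance, then replace each $u\in F$ by a $(2\deg(u)(k+1))\times(k+1)$ grid with $1_\Sigma$-labelled internal edges and attachment edges inheriting $\lambda'(u,w_r)$, and use the $(k+1)$-fold redundancy and planarity-induced degree bound to conclude. Your remark on why the labelling works (pushing $\sigma_r$ onto attachment edges so any re-routing through the grid preserves the cycle label) is the right observation, and the minor slip of saying ``the single edge that attaches $w_r$'' (there are $k+1$ such edges) is harmless since the construction itself adds $k+1$ edges per neighbour, each labelled $\sigma_r$.
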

\begin{proof}
The proof follows closely along the lines of the proof of Lemma~\ref{lemma:kernel-planar-Pi-vertex-SFVS}.
Consider an instance $(G, \lambda, \Sigma, k)$ of \textsc{Planar Group FVS}.
Using Lemma~\ref{lemma:compression-H-minor-free-Pi-vertex-GFVS}, one can compute an equivalent instance $(G',  \lambda', \Sigma, F, k')$ of \textsc{Planar Group FVS with Undeleteable Vertices} such that $|V(G')| \in  (|\Sigma| + |W|)^{\calO(1)}$.
To complete the proof,  it is sufficient to prove that there is an algorithm that given an instance an instance $(G',\lambda', \Sigma, F,k)$ of \textsc{Planar $\Pi$ with Undeletable Vertices},  runs in polynomial time, and returns an equivalent instance $(G^{\circ}, \lambda, \Sigma, k)$ of \textsc{Planar Group FVS} such that $|V(G^{\circ})| \in \calO(k^2  \cdot |V(G')|)$.

Recall that the algorithm in Lemma~\ref{lemma:compression-H-minor-free-Pi-vertex-GFVS} constructs set $F$ of undeletable vertices by contracting connected components of $G - W$ for some subset $W \subseteq V(G)$.  
Hence, $F$ is an independent set in $G'$.

Consider the algorithm that given an instance $(G',  \lambda', \Sigma, F, k')$ of \textsc{Planar Group FVS with Undeleteable Vertices} constructs graph $G^{\circ}$ from $G'$ as follows.
Let $u \in F$ be a vertex of degree $d := \deg_G(u)$.
The algorithm replaces the vertex $u$ by a $(2d(k+1)) \times (k+1)$ grid with vertices $u_{i,j}$ for all $i \in [2d(k+1)]$ and $j \in [k+1]$.
Let $M_u$ be the set of all vertices $u_{i,j}$ for $i \in [2d(k+1)]$ and $j \in [k+1]$.
The algorithm adds grid edges $u_{i,j}u_{i+1,j}$ for all  $i \in [2d(k+1)-1]$, $j \in [k+1]$ as well as $u_{i,j}u_{i,j+1}$ for all $i \in [2d(k+1)]$, $j \in [k]$.
All these edges gets label $1_{\Sigma}$.
Let $w_1,\dots,w_d$ be the neighbors of $u$ (we choose numbering according to the cyclic order of a planar embedding of $G$).
As $F$ is an independent set in $G'$,  for every $r \in [d]$,  $w_r \notin F$.
The algorithm adds edges from $w_r$ to $u_{2(r-1)(k+1)+2j,1}$ for all $j \in [k+1]$ in a way that preserves planarity.
These edges gets the same label as that of $uw_r$.

The argument to prove that $G^{\circ}$ is a planar graph with $\calO(k^2 \cdot |V(G)|)$ vertices, the algorithm terminates in polynomial time and both the instance are equivalent is identical to that arguments used in the proof of Lemma~\ref{lemma:kernel-planar-Pi-vertex-SFVS}. 
One only need to replace $T_E$-cycle in the case of \textsc{Edge-Subset FVS} by non-null cycle to complete the argument.
This concludes the proof of the lemma.
\end{proof}

Lemma~\ref{lemma:group-FVS-marking},  along with Lemma~\ref{lemma:compression-H-minor-free-Pi-vertex-GFVS} and Lemma~\ref{lemma:kernel-planar-Pi-vertex-GFVS}, prove Theorem~\ref{thm:H-minor-graph-Pi-compression-vertex-GFVS-restated} and Theorem~\ref{thm:planar-Pi-kernel-vertex-GFVS}, respectively.

\subsubsection{Proofs of Lemma~\ref{lemma:bipartite-H-minor} and Lemma~\ref{lemma:bipartite-H-minor-GFVS}}

Recall that Lemma~\ref{lemma:bipartite-H-minor} states that for a fixed graph $H$,  if $G = (X \cup Y, E)$ is a simple $H$-minor free bipartite graph such that  for  all  distinct $u, v \in Y$,  $N(u) \not\subseteq N(v)$ then $|Y| \le c_H |X|$. 
Here, $c_H$ is a constant that depends only on the size of $H$.
We need the following result in the proof.

\begin{proposition}[{\cite[Theorem 3]{EppsteinLS10}}]
\label{prop:nr-maximal-cliques-degenerate}
The largest possible number of maximal cliques in an $n$-vertex graph with degeneracy $d$ is $n \cdot 3^{d}$.
\end{proposition}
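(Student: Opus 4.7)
The plan is to exploit the defining property of degeneracy, namely the existence of a \emph{degeneracy ordering} of the vertices, and then charge every maximal clique to a canonical vertex in that ordering.

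First, I would establish the degeneracy ordering. Since $G$ has degeneracy $d$, every nonempty subgraph of $G$ contains a vertex of degree at most $d$. Iteratively picking a minimum-degree vertex of the current remaining subgraph and peeling it off yields an ordering $v_1, v_2, \dots, v_n$ of $V(G)$ such that, for every $i \in [n]$, the vertex $v_i$ has at most $d$ neighbors in the set $\{v_{i+1}, v_{i+2}, \dots, v_n\}$. Write $N^+(v_i)$ for this set of later neighbors; so $|N^+(v_i)| \le d$ for all $i$.

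Next, I would set up the charging scheme. Let $\mathcal{M}$ denote the collection of maximal cliques of $G$. For each $C \in \mathcal{M}$, let $v(C)$ be the unique earliest vertex of $C$ in the ordering, i.e., $v(C) = v_i$ for the smallest $i$ with $v_i \in C$. Since every other vertex of $C$ comes later in the ordering and is adjacent to $v(C)$, we have the crucial containment
\[
C \setminus \{v(C)\} \;\subseteq\; N^+(v(C)).
\]
In particular $C$ is determined by the pair $(v(C),\, C \setminus \{v(C)\})$, where the second coordinate is a subset of $N^+(v(C))$.

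Finally, I would count. For each vertex $v$, the number of maximal cliques $C$ with $v(C) = v$ is at most the number of subsets of $N^+(v)$, hence at most $2^{|N^+(v)|} \le 2^d \le 3^d$. Summing over all $n$ vertices yields
\[
|\mathcal{M}| \;\le\; \sum_{i=1}^{n} 2^{|N^+(v_i)|} \;\le\; n \cdot 2^d \;\le\; n \cdot 3^d,
\]
which is the claimed bound. The only mild subtlety, rather than a genuine obstacle, is verifying that the charging map $C \mapsto v(C)$ is well-defined and that each maximal clique is counted at least once at its earliest vertex; this is immediate from the definition of a maximal clique and the ordering. Tighter bounds of the form $(n-d)\cdot 3^{d/3}$ can be obtained by replacing the trivial subset count with the Moon--Moser bound on maximal cliques in the $d$-vertex graph $G[N^+(v)]$, but that refinement is not needed for the stated inequality.
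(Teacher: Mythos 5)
The paper states this proposition as a black-box citation of Eppstein, L\"offler and Strash and gives no proof of its own, so there is nothing internal to compare against; what matters is whether your argument is sound and suffices for how the bound is used (namely, in the proof of Lemma~\ref{lemma:bipartite-H-minor}, only as an upper bound on the number of maximal cliques). Your proof is correct: the degeneracy ordering exists, the map $C \mapsto (v(C),\, C\setminus\{v(C)\})$ is injective with second coordinate a subset of $N^+(v(C))$, and summing $2^{|N^+(v_i)|}\le 2^d$ over all vertices gives $n\cdot 2^d \le n\cdot 3^d$. This is in fact stronger than needed and is the standard elementary argument.

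One small caveat about scope rather than correctness: the proposition as phrased (``the largest possible number \ldots is $n\cdot 3^d$'') reads as an exact extremal statement, which would also require a matching construction; moreover the theorem actually proved in the cited reference is the tight bound $(n-d)3^{d/3}$ for $n\ge d+3$, not $n\cdot 3^d$. You only establish the upper-bound direction, but since the surrounding proof of Lemma~\ref{lemma:bipartite-H-minor} uses the proposition solely to bound the number of maximal cliques from above, your argument fully covers the paper's needs, and you correctly flag where the sharper Moon--Moser-based refinement would enter if one wanted the tight constant.
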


\begin{proof}[Proof of Lemma~\ref{lemma:bipartite-H-minor}]
We can assume, without loss of generality, that there are no isolated vertices in $G$.
If there are isolated vertices in $X$, then we delete them and argue the lemma for the remaining graph.
If there is an isolated vertex in $Y$, then it is only vertex  in $Y$ and hence the lemma is vacuously true.
By Theorem~\ref{thm:average-degree-H-minor-free}, for a fixed graph $H$, there exists a constant, say $c$, such that every $H$-minor free graph is $c$-degenerate.
Hence, $G$ is a $c$-degenerate graph.

We partition vertices in $Y$ into two sets based on their degrees.
Let $Y_{\ge c + 1}$ be the set of vertices of degree at least $c + 1$ and $Y_{\le c}$ be the set of vertices of degree at most $c$.
We argue the bound on the sizes of these two sets separately.

Consider graph $G' = G[X \cup Y_{\ge c + 1}]$.
As $G'$ is a subgraph of $G$, it is a $c$-degenerate graph.
Hence, the number of edges in $G'$ is at most $c |V(G')| = c|X| + c|Y_{\ge c + 1}|$.
As every vertex in $Y_{\ge c + 1}$ has degree at least $c + 1$, and there are no parallel edges in $G$, the number of edges in $G'$ is at least $(c + 1) |Y_{\ge c + 1}|$.
Hence, $(c + 1)|Y_{\ge c + 1}| \le c|X| + c|Y_{\ge c + 1}|$ which implies $|Y_{\ge c + 1}| \le c|X|$.

In the remaining part, we bound the size of $Y_{\le c}$.
Construct an auxiliary graph $G_0$ by connecting two vertices in $X$ by an edge if they have a common neighbor in $Y_{\le c}$.
Formally, define $V(G_0) := X$, and $E(G_0) := \{x_1x_2|\ x_1, x_2 \in X\ \land\ \exists y \in Y_{\le c} \text{ such that } x_1, x_2 \in N(y)\}$.

\begin{claim}
 $G_0$ is a $((c + 1) \cdot \binom{c}{2})$-degenerate graph.
 \proof
 Assume, for the sake of contradiction, that there is a subgraph, say $G'_0$, of $G_0$ that has at least $((c + 1) \cdot \binom{c}{2}) \cdot |V(G'_0)|$ many edges.
 Using this graph, we construct a minor, say $G'$, of $G$ that has  at least $(c + 1)|V(G')|$ many edges, contradicting the fact that $G$ is a $c$-degenerate graph.
 
 For an edge $x_1x_2$, fix an arbitrary vertex $y$ in $Y_{\le c}$ such that that $x_1, x_2 \in N(y)$.
 By the construction of $G_0$, such a vertex exists for every edge in $E(G_0)$.
 Define function $\lambda: E(G'_0) \rightarrow Y_{\le c}$ as follows: $\lambda(x_1x_2) = y$ if and only if $x_1, x_2 \in N(y)$.
 Note that for any vertex $y$ in $Y_{\le c}$, $|\lambda^{-1}(y)| := |\{x_1x_2\ |\ \lambda(x_1x_2) = y\}| \le \binom{c}{2}$. 
 
 We now construct an one-to-one function $\psi$ from a subset, say $F$, of $E(G'_0)$ to $Y_{\le c}$.
 Initialize $F := \emptyset$.
 List edges in $E(G_0)$ is an arbitrary order.
 For $i \in [|E(G_0)|]$, add $e_i$ to $F$ if for any $1 \le j < i$, $\lambda(e_j) \neq \lambda(e_i)$, and define $\psi(e_i) = \lambda(e_j)$. 
 It is clear that $\psi: F \rightarrow Y_{\le c}$ is an one-to-one function.
 Moreover, for $y \in Y_{\le c}$, $|\psi^{-1}(y)| = 1 \ge |\lambda^{-1}(y)|/\binom{c}{2}$.
 Hence, $|F| = |\bigcup_{y \in Y_{\le c}} \psi^{-1}(y)| \ge |E(G'_0)|/\binom{c}{2} = (c + 1)\cdot |V(G'_0)| $.

 We now construct the minor $G'$ of $G$ such that $V(G') = V(G'_0)$ and $E(G') = F$ using the following procedure.
 Delete all vertices in $X \setminus V(G'_0)$ and $Y_{\ge c + 1}$.
 For every vertex $y \in Y_{\le c}$, if $\psi^{-1}(y) = \emptyset$ then delete $y$.
 Otherwise, let $\{x_1x_2\} = \psi^{-1}(y)$. 
 By the definitions of functions $\lambda$ and $\psi$, $x_1y, x_2y \in E(G)$.
 Delete all edges incident on $y$ except $x_1y$ and $x_2y$.
 In the resulting graph, every vertex in $Y_{\le c}$ is adjacent with exactly two vertices in $X$. 
 For every triple $x_1, x_2, y$ such that $x_1, x_2 \in X$, $y \in Y_{\le c}$, and $x_1y, x_2y \in E(G)$, contract edge $x_1y$ and rename new vertex as $x_1$.
 This completes the construction of $G'$.
 
 It is easy to verify that $V(G') = V(G'_0)$ and $E(G') = F$.
 Hence, $|E(G')| \ge (c + 1) \cdot |V(G')|$.
 This contradicts the fact that $G'$, which is a minor of $G$, is a $c$-degenerate graph.
 Hence, our assumption is wrong and $G_0$ is a $((c + 1)\binom{c}{2})$-degenerate graph.
 \uend
\end{claim}

Consider a mapping that maps every vertex $y \in Y_c$ to a clique on vertices $N(y) \subseteq X = V(G_0)$ in $G_0$.
By the construction of $G_0$,  this mapping maps every vertex in $Y_c$ to a clique (not necessarily a maximal clique) in $G_0$.
As for all  distinct $y_1, y_2 \in Y$,  $N(y_1) \not\subseteq N(y_2)$, each vertex in $Y_{\le c}$ is mapped to a different clique in $G_0$.
Let $d = (c + 1) \binom{c}{2}$. 
Since,  $G_0$ is a $d$-degenerate graph,  the size of any maximal clique is at most $(d + 1)$.
Hence,  for a maximal clique in $G_0$, there are at most $\binom{d + 1}{2}$ many vertices in $Y_{\le c}$ whose mapping is  contained in it.  
By Proposition~\ref{prop:nr-maximal-cliques-degenerate}, the number of maximal cliques in $G_0$ is at most $3^d \cdot |X|$.
Hence,  the number of vertices in $Y_{\le c}$ is at most $3^d  \cdot \binom{d + 1}{2} \cdot |X|$, where $d = (c + 1) \binom{c}{2}$. 
This, together with the fact that $|Y_{\ge c + 1}| \le c|X|$, concludes the proof of the lemma. 
\end{proof}

Recall that Lemma~\ref{lemma:bipartite-H-minor-GFVS} states that for a fixed graph $H$ on $h$ vertices, if $G = (X \cup Y, E)$ is a simple $H$-minor free bipartite graph such that there is no isolated vertex in $Y$ and for every $X' \subseteq X$ of size strictly less than $h$, $|Y(X')| \le q$ then $|Y| \le \calO((q + h) \cdot |X|^{h})$.
Here, $Y(X') := \{y \in Y\ |\ X' \subseteq N(y)\}$ for every $X' \subseteq X$.

\begin{proof}[Proof of Lemma~\ref{lemma:bipartite-H-minor-GFVS}]
Any graph $H$ on $h$ vertices is a minor of a clique $K_h$ which is a minor of complete bipartite graph $K_{h, h}$.
Hence, if $G$ is $H$-minor free then $G$ is also $K_{h, h}$-minor free.
We use $\binom{X}{< h}$ and $\binom{X}{= h}$ to denote the collections of the subset of $X$ of size strictly less than $h$ and of size exactly $h$, respectively.
The size of each of these collection is at most $|X|^h$.

We can assume, without loss of generality, that there are no isolated vertices in $G$.
If there are isolated vertices in $X$, then we delete them and argue the lemma for the remaining graph.
Note that there is no isolated vertex in $Y$.
We partition vertices in $Y$ into two sets based on their degrees.
Let $Y_{\ge h}$ be the set of vertices of degree at least $h$ and $Y_{< h}$ be the set of vertices of degree strictly less than $h$.
We argue the bound on the sizes of these two sets separately.

The number of subset of $X$ that are of size strictly less than $h$ is at most $|X|^{h - 1}$.
Hence, the total number of vertices present in $Y(X')$ for some set $X'\subseteq X$ of size strictly less than $h$ is at most $q \cdot |X|^{h - 1}$.
This implies the total number vertices in $Y_{<h}$ is $\calO(q \cdot |X|^h)$.

To argue the bound on the number of vertices in $Y_{\ge h}$, we arbitrarily index the vertices in $X$ from $1$ to $|X|$.
Consider a map $f:Y_{\ge h} \mapsto \binom{X}{=h}$ such that for every vertex $y$ in $Y_{\ge h}$, $f(y)$ is the collection of $h$ smallest indexed vertices in $N_G(y) \cap X$.
It is easy to verify that $f$ defines a function from $Y_{\ge h}$ to $\binom{X}{=h}$.
Assume that there are $(h - 1) \cdot |X|^h + 1$ many vertices in $Y_{\ge h}$.
By Pigeon-Hole Principle, this implies that there exist a set $X'$ in $\binom{X}{=h}$ such that the cardinality of set $Y' = \{y \in Y_{\ge h} | f(y) = X'\}$ is at least $h$.
Note that, by the definitions, the subgraph of $G$ on induced on $X' \cup Y'$ is a complete bipartite graph $K_{h, h}$.
This contradicts the fact that $G$ is $K_{h, h}$-minor free.
Hence, our assumption is wrong and there are at most $(h - 1) \cdot |X|^h$ many vertices in $Y_{\ge h}$.

This implies that the number of vertices in $Y$ is $\calO((q + h) \cdot |X|^h)$. 
\end{proof}

\bibliographystyle{plain}
\bibliography{references}

\begin{thebibliography}{10}

\bibitem{DBLP:journals/jacm/AroraBS15}
Sanjeev Arora, Boaz Barak, and David Steurer.
\newblock Subexponential algorithms for unique games and related problems.
\newblock {\em J. {ACM}}, 62(5):42:1--42:25, 2015.

\bibitem{DBLP:conf/soda/BateniCEHKM11}
MohammadHossein Bateni, Chandra Chekuri, Alina Ene, Mohammad~Taghi Hajiaghayi,
  Nitish Korula, and D{\'{a}}niel Marx.
\newblock Prize-collecting steiner problems on planar graphs.
\newblock In Dana Randall, editor, {\em Proceedings of the Twenty-Second Annual
  {ACM-SIAM} Symposium on Discrete Algorithms, {SODA} 2011, San Francisco,
  California, USA, January 23-25, 2011}, pages 1028--1049. {SIAM}, 2011.

\bibitem{DBLP:conf/stoc/BateniDHM16}
MohammadHossein Bateni, Erik~D. Demaine, MohammadTaghi Hajiaghayi, and
  D{\'{a}}niel Marx.
\newblock A {PTAS} for planar group steiner tree via spanner bootstrapping and
  prize collecting.
\newblock In Daniel Wichs and Yishay Mansour, editors, {\em Proceedings of the
  48th Annual {ACM} {SIGACT} Symposium on Theory of Computing, {STOC} 2016,
  Cambridge, MA, USA, June 18-21, 2016}, pages 570--583. {ACM}, 2016.

\bibitem{DBLP:conf/soda/BateniFH19}
MohammadHossein Bateni, Alireza Farhadi, and MohammadTaghi Hajiaghayi.
\newblock Polynomial-time approximation scheme for minimum k-cut in planar and
  minor-free graphs.
\newblock In Timothy~M. Chan, editor, {\em Proceedings of the Thirtieth Annual
  {ACM-SIAM} Symposium on Discrete Algorithms, {SODA} 2019, San Diego,
  California, USA, January 6-9, 2019}, pages 1055--1068. {SIAM}, 2019.

\bibitem{DBLP:journals/jacm/BateniHM11}
MohammadHossein Bateni, Mohammad~Taghi Hajiaghayi, and D{\'{a}}niel Marx.
\newblock Approximation schemes for steiner forest on planar graphs and graphs
  of bounded treewidth.
\newblock {\em J. {ACM}}, 58(5):21:1--21:37, 2011.

\bibitem{DBLP:conf/soda/BateniHKM12}
MohammadHossein Bateni, MohammadTaghi Hajiaghayi, Philip~N. Klein, and Claire
  Mathieu.
\newblock A polynomial-time approximation scheme for planar multiway cut.
\newblock In Yuval Rabani, editor, {\em Proceedings of the Twenty-Third Annual
  {ACM-SIAM} Symposium on Discrete Algorithms, {SODA} 2012, Kyoto, Japan,
  January 17-19, 2012}, pages 639--655. {SIAM}, 2012.

\bibitem{DBLP:conf/esa/BeckerKS17}
Amariah Becker, Philip~N. Klein, and David Saulpic.
\newblock A quasi-polynomial-time approximation scheme for vehicle routing on
  planar and bounded-genus graphs.
\newblock In Kirk Pruhs and Christian Sohler, editors, {\em 25th Annual
  European Symposium on Algorithms, {ESA} 2017, September 4-6, 2017, Vienna,
  Austria}, volume~87 of {\em LIPIcs}, pages 12:1--12:15. Schloss Dagstuhl -
  Leibniz-Zentrum f{\"{u}}r Informatik, 2017.

\bibitem{DBLP:conf/wads/BeckerKS19}
Amariah Becker, Philip~N. Klein, and Aaron Schild.
\newblock A {PTAS} for bounded-capacity vehicle routing in planar graphs.
\newblock In Zachary Friggstad, J{\"{o}}rg{-}R{\"{u}}diger Sack, and
  Mohammad~R. Salavatipour, editors, {\em Algorithms and Data Structures - 16th
  International Symposium, {WADS} 2019, Edmonton, AB, Canada, August 5-7, 2019,
  Proceedings}, volume 11646 of {\em Lecture Notes in Computer Science}, pages
  99--111. Springer, 2019.

\bibitem{Bodlaender98}
Hans~L. Bodlaender.
\newblock A partial \emph{k}-arboretum of graphs with bounded treewidth.
\newblock {\em Theor. Comput. Sci.}, 209(1-2):1--45, 1998.

\bibitem{DBLP:journals/talg/BorradaileKM09}
Glencora Borradaile, Philip~N. Klein, and Claire Mathieu.
\newblock An \emph{O}(\emph{n} log \emph{n}) approximation scheme for steiner
  tree in planar graphs.
\newblock {\em {ACM} Trans. Algorithms}, 5(3):31:1--31:31, 2009.

\bibitem{DBLP:journals/siamcomp/ChitnisFHM20}
Rajesh~Hemant Chitnis, Andreas~Emil Feldmann, Mohammad~Taghi Hajiaghayi, and
  D{\'{a}}niel Marx.
\newblock Tight bounds for planar strongly connected steiner subgraph with
  fixed number of terminals (and extensions).
\newblock {\em {SIAM} J. Comput.}, 49(2):318--364, 2020.

\bibitem{DBLP:conf/stoc/Cohen-AddadGK021}
Vincent Cohen{-}Addad, Anupam Gupta, Philip~N. Klein, and Jason Li.
\newblock A quasipolynomial {(2} + \emph{{\(\epsilon\)}})-approximation for
  planar sparsest cut.
\newblock In Samir Khuller and Virginia~Vassilevska Williams, editors, {\em
  {STOC} '21: 53rd Annual {ACM} {SIGACT} Symposium on Theory of Computing,
  Virtual Event, Italy, June 21-25, 2021}, pages 1056--1069. {ACM}, 2021.

\bibitem{CyganFKLMPPS15}
Marek Cygan, Fedor~V. Fomin, Lukasz Kowalik, Daniel Lokshtanov, D{\'{a}}niel
  Marx, Marcin Pilipczuk, Michal Pilipczuk, and Saket Saurabh.
\newblock {\em Parameterized Algorithms}.
\newblock Springer, 2015.

\bibitem{CyganPPW13}
Marek Cygan, Marcin Pilipczuk, Michal Pilipczuk, and Jakub~Onufry Wojtaszczyk.
\newblock Subset feedback vertex set is fixed-parameter tractable.
\newblock {\em {SIAM} J. Discret. Math.}, 27(1):290--309, 2013.

\bibitem{DBLP:journals/algorithmica/Verdiere17}
{\'{E}}ric~Colin de~Verdi{\`{e}}re.
\newblock Multicuts in planar and bounded-genus graphs with bounded number of
  terminals.
\newblock {\em Algorithmica}, 78(4):1206--1224, 2017.

\bibitem{DBLP:journals/siamdm/DemaineFHT04}
Erik~D. Demaine, Fedor~V. Fomin, Mohammad~Taghi Hajiaghayi, and Dimitrios~M.
  Thilikos.
\newblock Bidimensional parameters and local treewidth.
\newblock {\em {SIAM} J. Discret. Math.}, 18(3):501--511, 2004.

\bibitem{DBLP:journals/jacm/DemaineFHT05}
Erik~D. Demaine, Fedor~V. Fomin, Mohammad~Taghi Hajiaghayi, and Dimitrios~M.
  Thilikos.
\newblock Subexponential parameterized algorithms on bounded-genus graphs and
  \emph{H}-minor-free graphs.
\newblock {\em J. {ACM}}, 52(6):866--893, 2005.

\bibitem{DBLP:journals/cj/DemaineH08}
Erik~D. Demaine and MohammadTaghi Hajiaghayi.
\newblock The bidimensionality theory and its algorithmic applications.
\newblock {\em Comput. J.}, 51(3):292--302, 2008.

\bibitem{DBLP:journals/combinatorica/DemaineH08}
Erik~D. Demaine and MohammadTaghi Hajiaghayi.
\newblock Linearity of grid minors in treewidth with applications through
  bidimensionality.
\newblock {\em Comb.}, 28(1):19--36, 2008.

\bibitem{DemaineHK11}
Erik~D. Demaine, MohammadTaghi Hajiaghayi, and Ken{-}ichi Kawarabayashi.
\newblock Contraction decomposition in h-minor-free graphs and algorithmic
  applications.
\newblock In Lance Fortnow and Salil~P. Vadhan, editors, {\em Proceedings of
  the 43rd {ACM} Symposium on Theory of Computing, {STOC} 2011, San Jose, CA,
  USA, 6-8 June 2011}, pages 441--450. {ACM}, 2011.

\bibitem{DBLP:journals/combinatorica/DemaineHM10}
Erik~D. Demaine, MohammadTaghi Hajiaghayi, and Bojan Mohar.
\newblock Approximation algorithms via contraction decomposition.
\newblock {\em Comb.}, 30(5):533--552, 2010.

\bibitem{DBLP:conf/soda/EisenstatKM12}
David Eisenstat, Philip~N. Klein, and Claire Mathieu.
\newblock An efficient polynomial-time approximation scheme for steiner forest
  in planar graphs.
\newblock In Yuval Rabani, editor, {\em Proceedings of the Twenty-Third Annual
  {ACM-SIAM} Symposium on Discrete Algorithms, {SODA} 2012, Kyoto, Japan,
  January 17-19, 2012}, pages 626--638. {SIAM}, 2012.

\bibitem{EppsteinLS10}
David Eppstein, Maarten L{\"{o}}ffler, and Darren Strash.
\newblock Listing all maximal cliques in sparse graphs in near-optimal time.
\newblock In Otfried Cheong, Kyung{-}Yong Chwa, and Kunsoo Park, editors, {\em
  Algorithms and Computation - 21st International Symposium, {ISAAC} 2010, Jeju
  Island, Korea, December 15-17, 2010, Proceedings, Part {I}}, volume 6506 of
  {\em Lecture Notes in Computer Science}, pages 403--414. Springer, 2010.

\bibitem{EvenNZ00}
Guy Even, Joseph Naor, and Leonid Zosin.
\newblock An 8-approximation algorithm for the subset feedback vertex set
  problem.
\newblock {\em {SIAM} J. Comput.}, 30(4):1231--1252, 2000.

\bibitem{DBLP:conf/esa/FominGT09}
Fedor~V. Fomin, Petr~A. Golovach, and Dimitrios~M. Thilikos.
\newblock Contraction bidimensionality: The accurate picture.
\newblock In Amos Fiat and Peter Sanders, editors, {\em Algorithms - {ESA}
  2009, 17th Annual European Symposium, Copenhagen, Denmark, September 7-9,
  2009. Proceedings}, volume 5757 of {\em Lecture Notes in Computer Science},
  pages 706--717. Springer, 2009.

\bibitem{DBLP:conf/focs/FominLMPPS16}
Fedor~V. Fomin, Daniel Lokshtanov, D{\'{a}}niel Marx, Marcin Pilipczuk, Michal
  Pilipczuk, and Saket Saurabh.
\newblock Subexponential parameterized algorithms for planar and
  apex-minor-free graphs via low treewidth pattern covering.
\newblock In Irit Dinur, editor, {\em {IEEE} 57th Annual Symposium on
  Foundations of Computer Science, {FOCS} 2016, 9-11 October 2016, Hyatt
  Regency, New Brunswick, New Jersey, {USA}}, pages 515--524. {IEEE} Computer
  Society, 2016.

\bibitem{DBLP:conf/soda/FominLRS11}
Fedor~V. Fomin, Daniel Lokshtanov, Venkatesh Raman, and Saket Saurabh.
\newblock Bidimensionality and {EPTAS}.
\newblock In Dana Randall, editor, {\em Proceedings of the Twenty-Second Annual
  {ACM-SIAM} Symposium on Discrete Algorithms, {SODA} 2011, San Francisco,
  California, USA, January 23-25, 2011}, pages 748--759. {SIAM}, 2011.

\bibitem{DBLP:conf/soda/FominLS12}
Fedor~V. Fomin, Daniel Lokshtanov, and Saket Saurabh.
\newblock Bidimensionality and geometric graphs.
\newblock In Yuval Rabani, editor, {\em Proceedings of the Twenty-Third Annual
  {ACM-SIAM} Symposium on Discrete Algorithms, {SODA} 2012, Kyoto, Japan,
  January 17-19, 2012}, pages 1563--1575. {SIAM}, 2012.

\bibitem{DBLP:journals/siamcomp/FominLST20}
Fedor~V. Fomin, Daniel Lokshtanov, Saket Saurabh, and Dimitrios~M. Thilikos.
\newblock Bidimensionality and kernels.
\newblock {\em {SIAM} J. Comput.}, 49(6):1397--1422, 2020.

\bibitem{DBLP:conf/stoc/FoxKM15}
Kyle Fox, Philip~N. Klein, and Shay Mozes.
\newblock A polynomial-time bicriteria approximation scheme for planar
  bisection.
\newblock In Rocco~A. Servedio and Ronitt Rubinfeld, editors, {\em Proceedings
  of the Forty-Seventh Annual {ACM} on Symposium on Theory of Computing, {STOC}
  2015, Portland, OR, USA, June 14-17, 2015}, pages 841--850. {ACM}, 2015.

\bibitem{DBLP:conf/soda/Fox-EpsteinKS19}
Eli Fox{-}Epstein, Philip~N. Klein, and Aaron Schild.
\newblock Embedding planar graphs into low-treewidth graphs with applications
  to efficient approximation schemes for metric problems.
\newblock In Timothy~M. Chan, editor, {\em Proceedings of the Thirtieth Annual
  {ACM-SIAM} Symposium on Discrete Algorithms, {SODA} 2019, San Diego,
  California, USA, January 6-9, 2019}, pages 1069--1088. {SIAM}, 2019.

\bibitem{Gallai64}
Tibor Gallai.
\newblock Maximum-minimum {S}\"{a}tze und verallgemeinerte {F}aktoren von
  {G}raphen.
\newblock {\em Acta Math. Acad. Sci. Hungar.}, 12:131--173, 1964.

\bibitem{Guillemot11a}
Sylvain Guillemot.
\newblock {FPT} algorithms for path-transversal and cycle-transversal problems.
\newblock {\em Discret. Optim.}, 8(1):61--71, 2011.

\bibitem{HolsK18}
Eva{-}Maria~C. Hols and Stefan Kratsch.
\newblock A randomized polynomial kernel for subset feedback vertex set.
\newblock {\em Theory Comput. Syst.}, 62(1):63--92, 2018.

\bibitem{JansenPL19}
Bart M.~P. Jansen, Marcin Pilipczuk, and Erik~Jan van Leeuwen.
\newblock A deterministic polynomial kernel for odd cycle transversal and
  vertex multiway cut in planar graphs.
\newblock In Rolf Niedermeier and Christophe Paul, editors, {\em 36th
  International Symposium on Theoretical Aspects of Computer Science, {STACS}
  2019, March 13-16, 2019, Berlin, Germany}, volume 126 of {\em LIPIcs}, pages
  39:1--39:18. Schloss Dagstuhl - Leibniz-Zentrum f{\"{u}}r Informatik, 2019.

\bibitem{DBLP:conf/coco/Khot02}
Subhash Khot.
\newblock On the power of unique 2-prover 1-round games.
\newblock In {\em Proceedings of the 17th Annual {IEEE} Conference on
  Computational Complexity, Montr{\'{e}}al, Qu{\'{e}}bec, Canada, May 21-24,
  2002}, page~25. {IEEE} Computer Society, 2002.

\bibitem{DBLP:conf/coco/Khot10}
Subhash Khot.
\newblock On the unique games conjecture (invited survey).
\newblock In {\em Proceedings of the 25th Annual {IEEE} Conference on
  Computational Complexity, {CCC} 2010, Cambridge, Massachusetts, USA, June
  9-12, 2010}, pages 99--121. {IEEE} Computer Society, 2010.

\bibitem{DBLP:conf/focs/Klein05}
Philip~N. Klein.
\newblock A linear-time approximation scheme for planar weighted {TSP}.
\newblock In {\em 46th Annual {IEEE} Symposium on Foundations of Computer
  Science {(FOCS} 2005), 23-25 October 2005, Pittsburgh, PA, USA, Proceedings},
  pages 647--657. {IEEE} Computer Society, 2005.

\bibitem{DBLP:conf/stoc/Klein06}
Philip~N. Klein.
\newblock A subset spanner for planar graphs, : with application to subset
  {TSP}.
\newblock In Jon~M. Kleinberg, editor, {\em Proceedings of the 38th Annual
  {ACM} Symposium on Theory of Computing, Seattle, WA, USA, May 21-23, 2006},
  pages 749--756. {ACM}, 2006.

\bibitem{DBLP:conf/icalp/KleinM12}
Philip~N. Klein and D{\'{a}}niel Marx.
\newblock Solving planar k -terminal cut in {\textdollar}o(n{\^{}}\{c
  {\textbackslash}sqrt\{k\}\}){\textdollar} time.
\newblock In Artur Czumaj, Kurt Mehlhorn, Andrew~M. Pitts, and Roger
  Wattenhofer, editors, {\em Automata, Languages, and Programming - 39th
  International Colloquium, {ICALP} 2012, Warwick, UK, July 9-13, 2012,
  Proceedings, Part {I}}, volume 7391 of {\em Lecture Notes in Computer
  Science}, pages 569--580. Springer, 2012.

\bibitem{DBLP:conf/soda/KleinM14}
Philip~N. Klein and D{\'{a}}niel Marx.
\newblock A subexponential parameterized algorithm for subset {TSP} on planar
  graphs.
\newblock In Chandra Chekuri, editor, {\em Proceedings of the Twenty-Fifth
  Annual {ACM-SIAM} Symposium on Discrete Algorithms, {SODA} 2014, Portland,
  Oregon, USA, January 5-7, 2014}, pages 1812--1830. {SIAM}, 2014.

\bibitem{Kostochka84}
Alexandr~V. Kostochka.
\newblock Lower bound of the hadwiger number of graphs by their average degree.
\newblock {\em Comb.}, 4(4):307--316, 1984.

\bibitem{KratschW20}
Stefan Kratsch and Magnus Wahlstr{\"{o}}m.
\newblock Representative sets and irrelevant vertices: New tools for
  kernelization.
\newblock {\em J. {ACM}}, 67(3):16:1--16:50, 2020.

\bibitem{DBLP:conf/esa/LokshtanovRS17}
Daniel Lokshtanov, M.~S. Ramanujan, and Saket Saurabh.
\newblock A linear-time parameterized algorithm for node unique label cover.
\newblock In Kirk Pruhs and Christian Sohler, editors, {\em 25th Annual
  European Symposium on Algorithms, {ESA} 2017, September 4-6, 2017, Vienna,
  Austria}, volume~87 of {\em LIPIcs}, pages 57:1--57:15. Schloss Dagstuhl -
  Leibniz-Zentrum f{\"{u}}r Informatik, 2017.

\bibitem{DBLP:conf/fsttcs/LokshtanovSW12}
Daniel Lokshtanov, Saket Saurabh, and Magnus Wahlstr{\"{o}}m.
\newblock Subexponential parameterized odd cycle transversal on planar graphs.
\newblock In Deepak D'Souza, Telikepalli Kavitha, and Jaikumar Radhakrishnan,
  editors, {\em {IARCS} Annual Conference on Foundations of Software Technology
  and Theoretical Computer Science, {FSTTCS} 2012, December 15-17, 2012,
  Hyderabad, India}, volume~18 of {\em LIPIcs}, pages 424--434. Schloss
  Dagstuhl - Leibniz-Zentrum f{\"{u}}r Informatik, 2012.

\bibitem{DBLP:conf/icalp/Marx12}
D{\'{a}}niel Marx.
\newblock A tight lower bound for planar multiway cut with fixed number of
  terminals.
\newblock In Artur Czumaj, Kurt Mehlhorn, Andrew~M. Pitts, and Roger
  Wattenhofer, editors, {\em Automata, Languages, and Programming - 39th
  International Colloquium, {ICALP} 2012, Warwick, UK, July 9-13, 2012,
  Proceedings, Part {I}}, volume 7391 of {\em Lecture Notes in Computer
  Science}, pages 677--688. Springer, 2012.

\bibitem{DBLP:conf/focs/MarxPP18}
D{\'{a}}niel Marx, Marcin Pilipczuk, and Michal Pilipczuk.
\newblock On subexponential parameterized algorithms for steiner tree and
  directed subset {TSP} on planar graphs.
\newblock In Mikkel Thorup, editor, {\em 59th {IEEE} Annual Symposium on
  Foundations of Computer Science, {FOCS} 2018, Paris, France, October 7-9,
  2018}, pages 474--484. {IEEE} Computer Society, 2018.

\bibitem{DBLP:conf/esa/MarxP15}
D{\'{a}}niel Marx and Michal Pilipczuk.
\newblock Optimal parameterized algorithms for planar facility location
  problems using voronoi diagrams.
\newblock In Nikhil Bansal and Irene Finocchi, editors, {\em Algorithms - {ESA}
  2015 - 23rd Annual European Symposium, Patras, Greece, September 14-16, 2015,
  Proceedings}, volume 9294 of {\em Lecture Notes in Computer Science}, pages
  865--877. Springer, 2015.

\bibitem{DBLP:conf/stoc/Nederlof20a}
Jesper Nederlof.
\newblock Detecting and counting small patterns in planar graphs in
  subexponential parameterized time.
\newblock In Konstantin Makarychev, Yury Makarychev, Madhur Tulsiani, Gautam
  Kamath, and Julia Chuzhoy, editors, {\em Proccedings of the 52nd Annual {ACM}
  {SIGACT} Symposium on Theory of Computing, {STOC} 2020, Chicago, IL, USA,
  June 22-26, 2020}, pages 1293--1306. {ACM}, 2020.

\bibitem{PhilipRST19}
Geevarghese Philip, Varun Rajan, Saket Saurabh, and Prafullkumar Tale.
\newblock Subset feedback vertex set in chordal and split graphs.
\newblock {\em Algorithmica}, 81(9):3586--3629, 2019.

\bibitem{PilipczukPSL18}
Marcin Pilipczuk, Michal Pilipczuk, Piotr Sankowski, and Erik~Jan van Leeuwen.
\newblock Network sparsification for steiner problems on planar and
  bounded-genus graphs.
\newblock {\em {ACM} Trans. Algorithms}, 14(4):53:1--53:73, 2018.

\bibitem{DBLP:journals/jct/RobertsonS03a}
Neil Robertson and Paul~D. Seymour.
\newblock Graph minors. {XVI.} excluding a non-planar graph.
\newblock {\em J. Comb. Theory, Ser. {B}}, 89(1):43--76, 2003.

\bibitem{Thomason84}
Andrew Thomason.
\newblock An extremal function for contractions of graphs.
\newblock {\em Math. Proc. Cambridge Philos. Soc.}, 95(2):261--265, 1984.

\bibitem{Wahlstrom20}
Magnus Wahlstr{\"{o}}m.
\newblock On quasipolynomial multicut-mimicking networks and kernelization of
  multiway cut problems.
\newblock {\em CoRR}, abs/2002.08825, 2020.

\end{thebibliography}

\appendix

\section{Lower Bounds}
\label{sec:lower-bounds}

In this section, we provide lower bound for \textsc{Component Order Connectivity} and related problems on planar graph.
Recall that, given a graph $G$ and integers $k,t$, the problems asks to delete at most $k$ vertices such that every connected component of the resulting graph has size at most $t$.
In the following, we prove that this problem is \WH\ and can not be solved in time $n^{o(\sqrt{k})}$ on planar graphs (assuming $t$ is part of the input) assuming \ETH.
Similar results are obtained for several variants of the problem.

\subsection{Components of Equal Size}

To prove the lower bounds, we build on the \textsc{Grid Tiling} problem which is a standard tool for obtaining lower bounds for problems on planar graphs (see, e.g., \cite[Chapter 14.4.1]{CyganFKLMPPS15}).

\medskip 
\defparproblem{\textsc{Grid Tiling}}{Integers $k,n$, and sets $S_{i,j} \subseteq \{0,\dots,n\} \times \{0,\dots,n\}$ for all $i,j \in [k]$.}{$k$}
{Are there numbers $r_i,c_j \in \{0,\dots,n\}$, $i,j \in [k]$, such that $(r_i,c_j) \in S_{i,j}$ for all $i,j \in [k]$?}
\medskip

\begin{theorem}[see, e.g., {\cite[Theorem 14.28]{CyganFKLMPPS15}}]
 \label{thm:grid-tiling-lower-bound}
 The \textsc{Grid Tiling} problem is \WH\ and, unless \ETH\ fails, it can not be solved in time $f(k)n^{o(k)}$ for any computable function $f$.
\end{theorem}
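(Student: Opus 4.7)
The plan is to reduce from \textsc{Multicolored Clique}: given a graph $G$ with vertex set partitioned into color classes $V_1, \dots, V_k$ of size $n$ each, decide whether there is a clique containing exactly one vertex from each class. This problem is well known to be \WH\ and, under \ETH, to not admit an $f(k)n^{o(k)}$ algorithm (via a standard reduction from \textsc{$k$-Clique}, whose ETH lower bound follows from a sparsification-plus-Turán argument on \textsc{3-SAT}).

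Given an instance of \textsc{Multicolored Clique}, I would identify the vertices in each $V_i$ with integers $\{1, \dots, n\}$ and construct a $k \times k$ grid where the cell $S_{i,j}$ encodes the following. For the diagonal cells set $S_{i,i} := \{(v, v) \mid v \in V_i\}$; intuitively, this ``copies'' the chosen vertex of $V_i$ into both the $i$-th row coordinate and the $i$-th column coordinate. For off-diagonal cells, set
\[S_{i,j} := \{(u, v) \mid u \in V_i,\; v \in V_j,\; uv \in E(G)\}.\]
The intended correspondence is: the common row value $r_i$ represents the chosen vertex from $V_i$, and the common column value $c_j$ represents the chosen vertex from $V_j$. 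The diagonal cells force $r_i = c_i$ (so each color class supplies a single vertex), and the off-diagonal cells enforce the edge relation between these chosen vertices.

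The equivalence is routine. If $v_1 \in V_1, \dots, v_k \in V_k$ form a multicolored clique, set $r_i = c_i := v_i$; then $(r_i, r_i) \in S_{i,i}$ by definition, and for $i \neq j$ the pair $(v_i, v_j) = (r_i, c_j)$ lies in $S_{i,j}$ because $v_i v_j \in E(G)$. Conversely, a valid choice of $r_i, c_j$ forces $r_i = c_i$ via the diagonal cell, so calling this common value $v_i \in V_i$, the off-diagonal constraint guarantees $v_iv_j \in E(G)$ for all $i \neq j$, producing a multicolored clique. The construction is polynomial in $n+k$, the parameter is preserved exactly (it remains $k$), and the grid size $n$ equals the color class size, so an $f(k) N^{o(k)}$ algorithm for \textsc{Grid Tiling} (with $N = n$) would yield an $f(k) n^{o(k)}$ algorithm for \textsc{Multicolored Clique}, contradicting \ETH; W[1]-hardness follows from the same reduction since it is an FPT-reduction with parameter $k$.

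There is no real obstacle here; the only subtle design choice is the use of diagonal cells to enforce the ``diagonal consistency'' that links row- and column-projections to a single vertex per color class. One could alternatively use a larger domain encoding vertex pairs, but the diagonal trick keeps the domain linear in $n$, which is what ensures the $n^{o(k)}$ lower bound has the correct form (the ETH lower bound for \textsc{$k$-Clique} is in terms of the number of vertices, not edges).
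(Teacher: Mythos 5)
Your reduction from \textsc{Multicolored Clique} (diagonal cells $S_{i,i}=\{(v,v)\}$ to enforce $r_i=c_i$, off-diagonal cells encoding the edge relation) is correct and is in essence the same reduction given in the cited textbook proof (Cygan et al., Theorem 14.28), which the paper invokes as a black box rather than reproving. The parameter and the domain size are preserved as you say, so both the \WH\ conclusion and the $f(k)n^{o(k)}$ ETH lower bound transfer directly from \textsc{Multicolored Clique}.
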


Now, let us turn to proving lower bounds for (variants of) \textsc{Component Order Connectivity}.
Since edge deletions versions of the problems can usually be reduced to the vertex deletion version, we prove the lower bounds for the edge deletions versions.
Also, to simplify the reductions, we consider variants where vertices of the input graph are weighted, and certain edges may be declared undeletable.
At the end of this section, we argue how to extend the lower bounds to the standard versions of the problems.

We formalize the reduction from \textsc{Grid Tiling} on the variant of \textsc{Component Order Connectivity} where all components need to have size exactly $t$ after removing the solution.

\medskip 
\defparproblem{\textsc{Extended Component Equality Edge Connectivity}}{A graph $G$, a weight function $w\colon V(G) \rightarrow \NN$, a set $U \subseteq E(G)$, and integers $k,t$.}{$k$}
{Is there a set $Z \subseteq E(G) \setminus U$ such that $|Z| \leq k$ and $w(C) = t$ for every connected component $C$ of $G - Z$?}
\medskip

\begin{lemma}
 \label{la:reduction-component-order-connectivity-basic}
 There is a polynomial-time algorithm that, given an instance $(n,k,(S_{i,j})_{i,j \in [k]})$ of \textsc{Grid Tiling}, computes an equivalent instance $(G,w,U,k',t)$ of \textsc{Planar Extended Component Equality Edge Connectivity} such that
 \begin{enumerate}
  \item $|E(G)| = \CO(k^2n^2)$,
  \item $w(v) = \CO(kn^3)$ for all $v \in V(G)$, and
  \item $k' = \CO(k^2)$.
 \end{enumerate}
\end{lemma}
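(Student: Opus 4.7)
The plan is to reduce \textsc{Grid Tiling} to \textsc{Planar Extended Component Equality Edge Connectivity} by constructing a planar $k\times k$ arrangement of cell gadgets linked by row- and column-coupling subgraphs, so that feasible deletion sets correspond exactly to valid tilings and every remaining component has weight precisely $t$.

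First, I would design a cell gadget $G_{i,j}$ on $\CO(n^2)$ vertices with four designated boundary ``ports'' (north, south, east, west), consisting of an undeletable skeleton together with deletable ``choice edges.'' For each pair $(r,c) \in S_{i,j}$ there is a canonical subset of $\CO(1)$ choice edges whose deletion cuts the gadget into internal pieces of prescribed total weight and ``exposes'' on the east/west ports an encoding of $r$ and on the north/south ports an encoding of $c$. Pairs $(r,c) \notin S_{i,j}$ have no such canonical pattern: the skeleton is arranged so that any alternative deletion creates an internal component of weight different from $t$. I would then place the $k^2$ gadgets in a planar $k\times k$ layout and join horizontally adjacent gadgets by a short coupling subgraph (and similarly vertically), built from undeletable edges and weighted vertices, so that the connected component spanning the shared ports between $G_{i,j}$ and $G_{i,j+1}$ has weight exactly $t$ if and only if the $r$-encodings on the two sides agree; vertical couplings enforce the analogous agreement on $c$-encodings.

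Second, I would fix $t$ and a weight function $w\colon V(G)\to\NN$ so that the intended tiling produces a partition of $V(G)$ into components of weight exactly $t$. The per-vertex weight bound $\CO(kn^3)$ would arise from encoding an index of magnitude $\CO(n)$, multiplied by an $\CO(n^2)$ balancing factor within a cell and an $\CO(k)$ row/column normalization that makes any neighbor-mismatch visible as a strictly nonzero discrepancy. The set $U$ collects the skeleton and coupling edges; the deletable edges total $\CO(1)$ per cell and per coupling, giving $k' = \CO(k^2)$ and $|E(G)| = \CO(k^2n^2)$. For correctness, the forward direction is by construction: from a valid tiling delete the choice edges encoding $(r_i,c_j)$ in each $G_{i,j}$ and verify each component has weight $t$. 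The reverse direction rests on two structural claims: (a) within each gadget, the equality of component weights combined with the skeleton forces the deletion to be a canonical $(r,c)$-choice pattern with $(r,c)\in S_{i,j}$; (b) the coupling subgraphs force the $r$-choice to be constant along each row and the $c$-choice constant along each column, so the extracted values form a valid tiling.

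The principal obstacle is the rigidity of the equality requirement: we must exploit it to filter out invalid tile choices and row/column inconsistencies, yet not so strongly that even the intended tiling fails to satisfy $w(C)=t$. Concretely, the weight arithmetic must be arranged so that any ``cheating'' deletion -- one not realising a valid canonical pattern, or one mixing choice edges from different $(r,c)$, or one that accidentally merges two coupling components -- produces at least one component of weight strictly different from $t$. Engineering this weight balance while respecting the bound $w(v)=\CO(kn^3)$ and simultaneously keeping both the gadgets and the coupling subgraphs planar (so that the whole $k\times k$ arrangement embeds in the plane) is the crux of the construction; the remaining bookkeeping -- counting edges, deletions, and verifying planarity of the final graph -- is then routine.
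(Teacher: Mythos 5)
Your high-level strategy matches the paper's: a planar $k\times k$ arrangement of gadgets, vertex weights tuned so that the equality constraint $w(C)=t$ filters out invalid choices, and undeletable edges encoding the sets $S_{i,j}$. But the proposal stops exactly where the proof has to begin: the gadgets are specified only by the properties they are supposed to have, and your two structural claims (a) and (b) in the reverse direction are precisely the statements that need to be established by an explicit construction. As written, there is no argument that such gadgets exist, and you yourself flag the weight engineering as ``the crux'' without resolving it. That is a genuine gap, not routine bookkeeping.

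Concretely, the paper's construction supplies four mechanisms that your sketch is missing. First, values are not ``exposed on ports'' by $\CO(1)$ choice edges inside a cell; instead, each row value $r_i$ (resp.\ column value $c_j$) is encoded by \emph{where} a path of $n$ degree-two vertices between two heavy vertices is cut, so that $r_i$ vertices fall on one side and $n-r_i$ on the other. This single device simultaneously transfers the value to both neighboring cells and enforces consistency along the whole row/column, so no separate ``coupling subgraph'' comparing encodings is needed. Second, the weights are written in base $N=n+1$ with four digits, so that one scalar equality $w(C)=t$ decomposes into four independent counting constraints (total horizontal vertices, total vertical vertices, odd-column horizontal vertices, odd-row vertical vertices) without interference between them; your ``$\CO(n^2)$ balancing factor'' and ``$\CO(k)$ normalization'' gesture at this but do not pin down an encoding in which cheating provably perturbs some digit. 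Third, membership $(r,c)\in S_{i,j}$ is checked by a path of length $N^2$ attached to a heavy center vertex, where the cut position $p=(0,0,r,c)_N$ encodes the pair and the edges at forbidden positions are placed in $U$; this is how ``no canonical pattern for $(r,c)\notin S_{i,j}$'' is actually realized. Fourth, the reverse direction needs the tight-budget argument: every vertex outside the paths has weight exceeding $t/2$, so any two of them lie in distinct components, forcing at least one deletion per degree-two path; since the number of such paths equals $k'$ exactly, each path is cut exactly once. Without these (or equivalent) constructions, the lemma is not proved.
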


\begin{proof}
 Without loss of generality assume that $k$ is even.
 Indeed, if $k$ is odd, one can create an equivalent instance by increasing $k$ by one and setting $S_{i,k+1} = S_{k+1,i} \coloneqq \{0,\dots,n\} \times \{0,\dots,n\}$ for all $i \in [k+1]$.
 
 Let $(n,k,(S_{i,j})_{i,j \in [k]})$ be an instance of \textsc{Grid Tiling}.
 We describe how to construct an equivalent instance $(G,w,U,k',t)$ of \textsc{Planar Extended Component Equality Edge Connectivity}.
 Let $N \coloneqq n+1$.
 For the weight function $w$ it turns out to be more convenient to specify integers in base $N$.
 Specifically, we denote $(a_3,a_2,a_1,a_0)_N \coloneqq a_3N^3 + a_2N^2 + a_1N + a_0$ for all $a_1,a_2,a_3,a_4 \in [N-1]$.
 We define
 \[t \coloneqq 2\big( (n,n,n,n)_N + k(n,n,0,0)_N + 2(0,0,n,n)_N \big).\]
 We construct the graph $G$ as follows.
 For every $i,j \in [k]$ there is a vertex $w_{i,j}$ (it helps to think of $w_{i,j}$ as a vertex in a grid in row $i$ and column $j$ and $w_{1,1}$ being placed in the top left corner).
 Moreover, for $i \in \{0,\dots,k\}$, $j \in [k]$ and $p \in [n]$ we introduce vertices $v_{i,j,p}$ (\emph{vertical vertices)} and, for $i \in [k]$, $j \in \{0,\dots,k\}$ and $p \in [n]$, we add $h_{i,j,p}$ (\emph{horizontal vertices)}.
 We add edges $v_{i,j,p}v_{i,j,p+1}$ as well as $h_{i,j,p}h_{i,j,p+1}$ for all $p \in [n-1]$.
 For every $i,j \in [k]$ we also add edges $w_{i,j}v_{i-1,j,n}$, $w_{i,j}v_{i,j,1}$, $w_{i,j}h_{i,j-1,n}$, and $w_{i,j}h_{i,j,1}$
 Moreover, we introduce a vertex $u$ which is connected to all vertices $v_{0,j,1}$, $v_{k,j,n}$, $h_{i,0,1}$, and $h_{i,k,n}$ for all $i,j \in [k]$.
 
 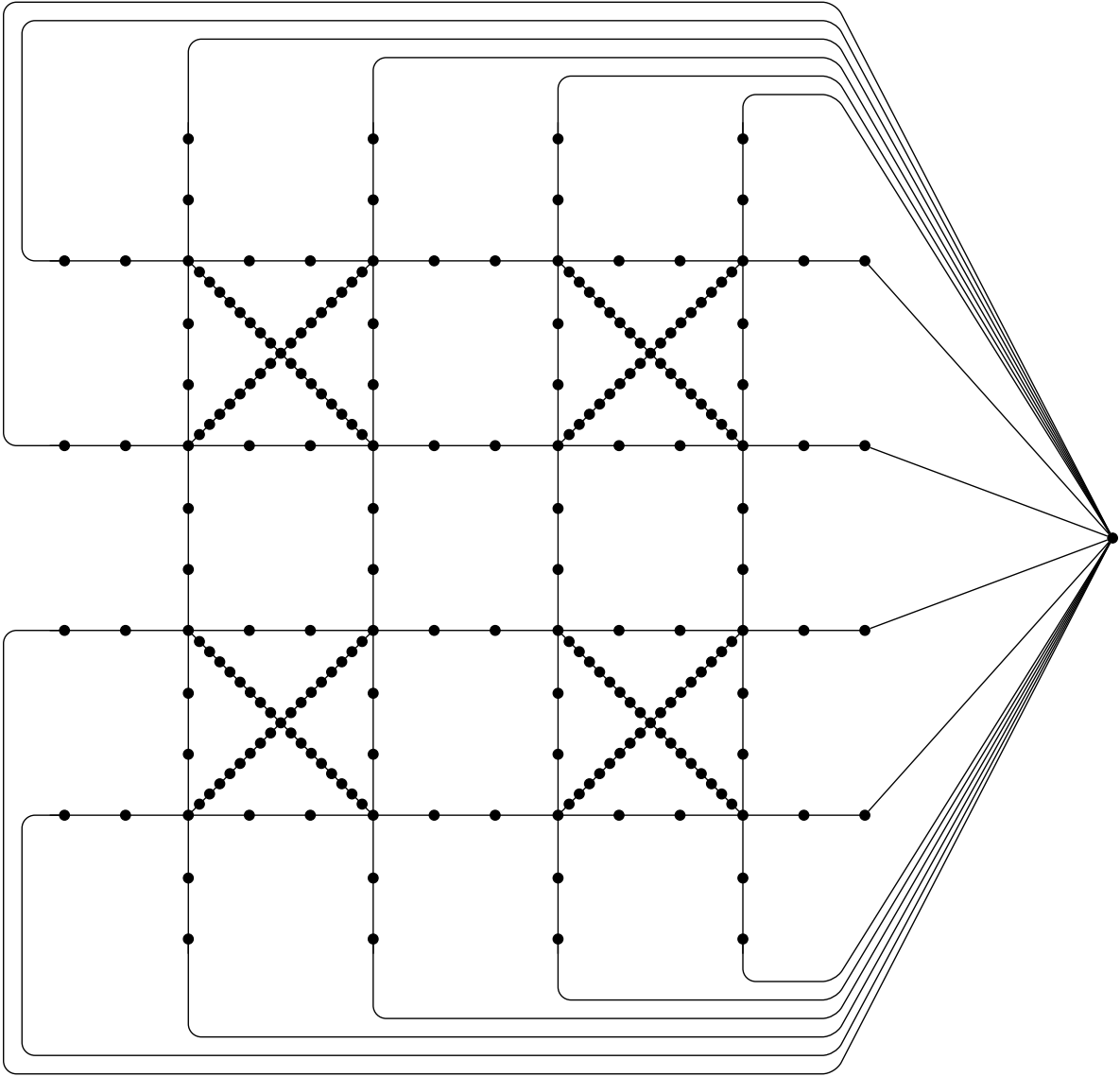
\begin{figure}
  \centering
  \scalebox{1.3}{
  \begin{tikzpicture}
   \draw (0.5,2) -- (9.33,2);
   \draw (0.5,4) -- (9.33,4);
   \draw (0.5,6) -- (9.33,6);
   \draw (0.5,8) -- (9.33,8);
   \draw (2,0.5) -- (2,9.5);
   \draw (4,0.5) -- (4,9.5);
   \draw (6,0.5) -- (6,9.5);
   \draw (8,0.5) -- (8,9.5);
   
   \draw (2,2) -- (4,4);
   \draw (2,4) -- (4,2);
   \draw (2,6) -- (4,8);
   \draw (2,8) -- (4,6);
   \draw (6,2) -- (8,4);
   \draw (6,4) -- (8,2);
   \draw (6,6) -- (8,8);
   \draw (6,8) -- (8,6);
   
   \foreach \i in {1,2,3,4}{
    \foreach \j in {1,2,3,4}{
     \node[smallvertex] (w\i\j) at (2*\j,2*\i) {};
    }
   }
   \foreach \i in {0,1,2,3,4}{
    \foreach \j in {1,2,3,4}{
     \foreach \p in {1,2}{
      \node[smallvertex] (v\i\j\p) at (2*\j,2*\i + 0.66*\p) {};
     }
    }
   }
   \foreach \i in {1,2,3,4}{
    \foreach \j in {0,1,2,3,4}{
     \foreach \p in {1,2}{
      \node[smallvertex] (h\i\j\p) at (2*\j + 0.66*\p,2*\i) {};
     }
    }
   }
   \foreach \i in {1,3}{
    \foreach \j in {1,3}{
     \node[smallvertex] (c\i\j) at (2*\j + 1,2*\i + 1) {};
     \foreach \p in {1,2,3,4,5,6,7,8}{
      \foreach \d/\e in {-1/-1,1/-1,-1/1,1/1}{
       \node[smallvertex] at (2*\j + 1 + \d*\p*0.11,2*\i + 1 + \e*\p*0.11) {};
      }
     }
    }
   }
   \node[smallvertex] (u) at (12,5) {};
   
   \draw (u) edge (h142);
   \draw (u) edge (h242);
   \draw (u) edge (h342);
   \draw (u) edge (h442);
   
   \draw[rounded corners] (u) -- (9,9.8) -- (8,9.8) -- (v442);
   \draw[rounded corners] (u) -- (9,10) -- (6,10) -- (v432);
   \draw[rounded corners] (u) -- (9,10.2) -- (4,10.2) -- (v422);
   \draw[rounded corners] (u) -- (9,10.4) -- (2,10.4) -- (v412);
   \draw[rounded corners] (u) -- (9,10.6) -- (0.2,10.6) -- (0.2,8) -- (h401);
   \draw[rounded corners] (u) -- (9,10.8) -- (0,10.8) -- (0,6) -- (h301);
   
   \draw[rounded corners] (u) -- (9,0.2) -- (8,0.2) -- (v041);
   \draw[rounded corners] (u) -- (9,0) -- (6,0) -- (v031);
   \draw[rounded corners] (u) -- (9,-0.2) -- (4,-0.2) -- (v021);
   \draw[rounded corners] (u) -- (9,-0.4) -- (2,-0.4) -- (v011);
   \draw[rounded corners] (u) -- (9,-0.6) -- (0.2,-0.6) -- (0.2,2) -- (h101);
   \draw[rounded corners] (u) -- (9,-0.8) -- (0,-0.8) -- (0,4) -- (h201);
   
  \end{tikzpicture}
  }
  \caption{Visualization of the graph $G$ for $k=4$ and $n=2$.}
  \label{fig:graph-construction-equal-components}
 \end{figure}
 
 Now fix $i,j \in [k]$ such that $i,j$ are both odd.
 We introduce vertices $c_{i,j}$ and $d_{i,j,p}^{\nearrow}$, $d_{i,j,p}^{\searrow}$, $d_{i,j,p}^{\swarrow}$, $d_{i,j,p}^{\nwarrow}$ for all $p \in [N^2-1]$.
 We add edges $c_{i,j}d_{i,j,1}^{\ar}$ and $d_{i,j,p}^{\ar}d_{i,j,p+1}^{\ar}$ for all $\ar \in \{\nearrow,\searrow,\swarrow,\nwarrow\}$ and $p \in [N^2-2]$.
 Also, there are edges $w_{i,j}d_{i,j,N^2-1}^{\nwarrow}$, $w_{i+1,j}d_{i,j,N^2-1}^{\swarrow}$, $w_{i,j+1}d_{i,j,N^2-1}^{\nearrow}$, and $w_{i+1,j+1}d_{i,j,N^2-1}^{\searrow}$.
 Note that $i+1,j+1 \in [k]$ since $k$ is even and $i,j$ are odd.
 This completes the description of the graph $G$.
 For later reference, define
 \[W \coloneqq \{w_{i,j} \mid i,j \in [k]\},\]
 \[V \coloneqq \{v_{i,j,p} \mid i \in \{0,\dots,k\},j \in [k], p \in [n]\},\]
 \[H \coloneqq \{h_{i,j,p} \mid i \in [k], j \in \{0,\dots,k\}, p \in [n]\},\]
 \[C \coloneqq \{c_{i,j} \mid i,j \in [k], i,j \equiv 1 \mod 2\},\]
 and
 \[D \coloneqq \{d_{i,j,p}^{\ar} \mid i,j \in [k], i,j \equiv 1 \mod 2, p \in [N^2-1], \ar \in \{\nearrow,\searrow,\swarrow,\nwarrow\}\}.\]
 So $V(G) = W \cup V \cup H \cup C \cup D \cup \{u\}$.
 
 Next, let us specify the weights of vertices.
 We define
 \[w(u) \coloneqq t - k(n,n,0,0)_N\]
 and
 \[w(w_{i,j}) \coloneqq t - (n,n,n,n)_N\]
 for all $w_{i,j} \in W$.
 Also,
 \[w(c_{i,j}) \coloneqq t - 2(0,0,n,n)_N\]
 for $c_{i,j} \in C$ and
 \[w(d_{i,j,p}^{\ar}) \coloneqq (0,0,0,1)_N\]
 for $d_{i,j,p}^{\ar} \in D$.
 Moreover, for $i,j \in \{0,\dots,k\}$ and $p \in [n+1]$, we define
 \[w(v_{i,j,p}) \coloneqq \begin{cases}
                           (0,1,0,1) &\text{if } i \equiv 1 \mod 2\\
                           (0,1,0,0) &\text{if } i \equiv 0 \mod 2
                          \end{cases}\]
 and
 \[w(h_{i,j,p}) \coloneqq \begin{cases}
                           (1,0,1,0) &\text{if } j \equiv 1 \mod 2\\
                           (1,0,0,0) &\text{if } j \equiv 0 \mod 2
                          \end{cases}.\]
 
 Finally, it remains to specify the set $U$ of undeletable edges.
 Fix $i,j \in [k]$ such that $i,j$ are both odd.
 For ease of notation we define $d_{i,j,0}^{\ar} \coloneqq c_{i,j}$ for all $\ar \in \{\nearrow,\searrow,\swarrow,\nwarrow\}$.
 Also, $d_{i,j,N^2}^{\nwarrow} \coloneqq w_{i,j}$, $d_{i,j,N^2}^{\nearrow} \coloneqq w_{i,j+1}$, $d_{i,j,N^2}^{\searrow} \coloneqq w_{i+1,j+1}$, and $d_{i,j,N^2}^{\swarrow} \coloneqq w_{i+1,j}$.
 We have that
 \[d_{i,j,p}^{\nwarrow}d_{i,j,p+1}^{\nwarrow} \in U \iff p = (0,0,a,b)_N \wedge (a,b) \notin S_{i,j}\]
 and
 \[d_{i,j,p}^{\nearrow}d_{i,j,p+1}^{\nearrow} \in U \iff p = (0,0,n-a,b)_N \wedge (a,b) \notin S_{i,j+1}\]
 and
 \[d_{i,j,p}^{\searrow}d_{i,j,p+1}^{\searrow} \in U \iff p = (0,0,n-a,n-b)_N \wedge (a,b) \notin S_{i+1,j+1}\]
 and
 \[d_{i,j,p}^{\swarrow}d_{i,j,p+1}^{\swarrow} \in U \iff p = (0,0,a,n-b)_N \wedge (a,b) \notin S_{i+1,j}.\]
 To complete the description, we set 
 \[k' \coloneqq 2k(k+1) + k^2.\]
 
 Clearly, the instance $(G,w,U,k',t)$ can be computed in polynomial time.
 Also, the construction meets the specified size bounds.
 Hence, it remains to prove that $(n,k,(S_{i,j})_{i,j \in [k]})$ has a solution if and only if $(G,w,U,k',t)$ has a solution.
 
 Let us start with some basic observations on solutions $Z$ of the instance $(G,w,U,k',t)$.
 The vertices of $G$ that do not have degree exactly two are those contained in $W \cup C \cup \{u\}$.
 Each of these vertices has weight strictly greater than $t/2$.
 Hence, each connected component of $G - Z$ contains at most one these vertices.
 
 On the other hand, consider $V \cup H \cup D$ the set of vertices of degree exactly two.
 The graph $G[V \cup H \cup D]$ has exactly $k'$ many connected components and each component is incident to two vertices of $W \cup C \cup \{u\}$.
 This means that $Z$ contains exactly one edge incident to a vertex from each connected component of $G[V \cup H \cup D]$.
 In particular, a solution $Z$ assigns each vertex in $(V \cup H \cup D) \setminus Z$ to one of the vertices from $W \cup C \cup \{u\}$.
 
 We now formally prove the correctness of the reduction.
 Let $r_i,c_j \in \{0,\dots,n\}$, $i,j \in [k]$, be a solution for $(n,k,(S_{i,j})_{i,j \in [k]})$.
 Again, for ease of notation, we define $w_{0,j} = w_{k+1,j} \coloneqq u$ and $w_{i,0} = w_{i,k+1} \coloneqq u$.
 Also, $v_{i,j,0} \coloneqq w_{i,j}$ and $v_{i,j,n+1} \coloneqq w_{i+1,j}$.
 Moreover, $h_{i,j,0} \coloneqq w_{i,j}$ and $h_{i,j,n+1} \coloneqq w_{i,j+1}$.
 We define
 \begin{align*}
  Z \coloneqq\;\;\; &\{v_{i,j,c_j}v_{i,j,c_j+1} \mid i \in \{0,\dots,k\}, j \in [k]\}\\
             \cup\; &\{h_{i,j,r_i}h_{i,j,r_i+1} \mid i \in [k], j \in \{0,\dots,k\}\}\\
             \cup\; &\{d_{i,j,p}^{\nwarrow}d_{i,j,p+1}^{\nwarrow} \mid p = (0,0,r_i,c_j)_N, i,j \in [k], i,j \equiv 1 \mod 2\}\\
             \cup\; &\{d_{i,j,p}^{\nearrow}d_{i,j,p+1}^{\nearrow} \mid p = (0,0,n-r_i,c_{j+1})_N, i,j \in [k], i,j \equiv 1 \mod 2\}\\
             \cup\; &\{d_{i,j,p}^{\searrow}d_{i,j,p+1}^{\searrow} \mid p = (0,0,n-r_{i+1},n-c_{j+1})_N, i,j \in [k], i,j \equiv 1 \mod 2\}\\
             \cup\; &\{d_{i,j,p}^{\swarrow}d_{i,j,p+1}^{\swarrow} \mid p = (0,0,r_{i+1},n-c_j)_N, i,j \in [k], i,j \equiv 1 \mod 2\}
 \end{align*}
 Note that $|Z| = 2k(k+1) + 4\left(\frac{k}{2}\right)^2 = k'$.
 We need to verify that all connected components of $G - Z$ have weight exactly $t$.
 As already indicated above, it is easy to verify that there is a one-to-one correspondence between $W \cup C \cup \{u\}$ and the connected components of $G - Z$.
 We calculate their weights.
 
 Let $i,j \in [k]$ such that $i,j$ are both odd.
 Let $C_{i,j}$ be the connected component of $G - Z$ that contains $c_{i,j}$.
 We have that
 \begin{align*}
  w(C_{i,j}) &= w(c_{i,j}) + (0,0,r_i,c_j)_N + (0,0,n-r_i,c_{j+1})_N\\
             &\;\;\;\;\; + (0,0,n-r_{i+1},n-c_{j+1})_N + (0,0,r_{i+1},n-c_j)_N\\
             &= w(c_{i,j}) + 2(0,0,n,n)_N = t.
 \end{align*}
 Next, let $W_{i,j}$ be the connected component that contains $w_{i,j}$.
 Then
 \begin{align*}
  w(W_{i,j}) &= w(w_{i,j}) + (r_i,0,r_i,0)_N + (n-r_i,0,0,0)_N +  (0,c_j,0,c_j)_N + (0,n-c_j,0,0)_N \\
             &\;\;\;\;\; + N^2 - 1 - (0,0,r_i,c_j)_N\\
             &= w(w_{i,j}) + (n,n,n,n)_N = t.
 \end{align*}
 The calculations for the components of $w_{i,j+1}$, $w_{i+1,j}$, and $w_{i+1,j+1}$ are similar.
 We have
 \begin{align*}
  w(W_{i,j+1}) &= w(w_{i,j+1}) + (n-r_i,0,n-r_i,0)_N + (r_i,0,0,0)_N +  (0,c_{j+1},0,c_{j+1})_N\\
             &\;\;\;\;\;  + (0,n-c_{j+1},0,0)_N + N^2 - 1 - (0,0,n-r_i,c_{j+1})_N\\
             &= w(w_{i,j}) + (n,n,n,n)_N = t,
 \end{align*}
 \begin{align*}
  w(W_{i+1,j}) &= w(w_{i,j}) + (r_{i+1},0,r_{i+1},0)_N + (n-r_{i+1},0,0,0)_N +  (0,n - c_j,0,n - c_j)_N\\
             &\;\;\;\;\; + (0,c_j,0,0)_N + N^2 - 1 - (0,0,r_{i+1},n-c_j)_N\\
             &= w(w_{i,j}) + (n,n,n,n)_N = t
 \end{align*}
 and
 \begin{align*}
  w(W_{i+1,j+1}) &= w(w_{i,j}) + (n - r_{i+1},0,n - r_{i+1},0)_N + (r_{i+1},0,0,0)_N\\
             &\;\;\;\;\; + (0,n - c_{j+1},0,n - c_{j+1})_N + (0,c_{j+1},0,0)_N\\
             &\;\;\;\;\; + N^2 - 1 - (0,0,n - r_{i+1},n - c_{j+1})_N\\
             &= w(w_{i,j}) + (n,n,n,n)_N = t.
 \end{align*}
 Finally, consider the connected component $C_u$ that contains vertex $u$.
 We have that
 \begin{align*}
  w(C_u) &= w(u) + \sum_{i \in [k]} ((r_i,0,0,0)_N + (n-r_i,0,0,0)_N) + \sum_{j \in [k]} ((0,c_j,0,0)_N + (0,n-c_j,0,0)_N)\\
         &= w(u) + k(n,n,0,0)_N = t
 \end{align*}
 This proves that $Z$ is a solution.
 
 For the other direction let $Z$ be a solution for $(G,w,U,k',t)$.
 By the above comments, each component of $G - Z$ contains exactly one of the vertices $W \cup C \cup \{u\}$.
 So there are $k^2 + \frac{k^2}{4} + 1$ components in $G - Z$, each of weight $t$.
 
 Consider first the set $H$.
 We have $|Z \cap \{h_{i,j,p}h_{i,j,p+1} \mid i,j \in [k], p \in \{0,\dots,n\}\}| = k(k+1)$ and $k(k+1)n$ vertices need to be assigned to components of $G - Z$.
 Such a vertex can only be assigned to a vertex $w_{i,j}$ or to vertex $u$.
 By the weight constraint, one can assign at most $n$ vertices of $H$ to $w_{i,j}$ and $kn$ vertices to $u$.\
 Hence, one needs to assign exactly $n$ vertices from $H$ to $w_{i,j}$ for every $i,j \in [k]$ as well as assign exactly $kn$ vertices from $H$ to $u$.
 This implies that there are numbers $r_i \in \{0,\dots,n\}$ such that $h_{i,j,p}h_{i,j,p+1} \in Z$ if and only if $p = r_i$.
 
 Now we can repeat the same argument for the vertical vertices in $V$.
 Hence, there are numbers $c_j \in \{0,\dots,n\}$ such that $v_{i,j,p}v_{i,j,p+1} \in Z$ if and only if $p = c_j$.
 
 Now fix some $i,j \in [k]$ such that $i,j$ are both odd.
 We need to verify that $(r_i,c_j) \in S_{i,j}$, $(r_{i+1},c_j) \in S_{i+1,j}$, $(r_i,c_{j+1}) \in S_{i,j+1}$, and $(r_{i+1},c_{j+1}) \in S_{i+1,j+1}$.
 Here, we only perform the calculations for the first case, the other cases are similar.
 Not taking the vertices $d_{i,j,p}^{\nwarrow}$ into account, the weight of the component containing $w_{i,j}$ is exactly
 \[w(w_{i,j}) + (r_i,0,r_i,0)_N + (n-r_i,0,0,0)_N +  (0,c_j,0,c_j)_N + (0,n-c_j,0,0)_N\]
 Hence, we need to add $(0,0,n-r_i,n-c_j)$ to the weight using the vertices $d_{i,j,p}^{\nwarrow}$.
 Suppose $d_{i,j,p}^{\nwarrow}d_{i,j,p+1}^{\nwarrow} \in Z$.
 By the above constraint we conclude that $p = (0,0,r_i,c_j)_N$.
 Moreover, $d_{i,j,p}^{\nwarrow}d_{i,j,p+1}^{\nwarrow} \notin U$.
 Hence, $(r_i,c_j) \in S_{i,j}$ by definition.
\end{proof}

In combination with Theorem \ref{thm:grid-tiling-lower-bound}, we obtain the following corollary.

\begin{corollary}
  \textsc{Planar Extended Component Equality Edge Connectivity} (parameterized by $k$) is \WH\ and, unless \ETH\ fails, it can not be solved in time $f(k)n^{o(\sqrt{k})}$ for any computable function $f$.
\end{corollary}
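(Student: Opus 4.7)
The corollary is obtained by combining the parameter-preserving reduction of Lemma~\ref{la:reduction-component-order-connectivity-basic} with the lower bound for \textsc{Grid Tiling} from Theorem~\ref{thm:grid-tiling-lower-bound}. The plan is to transport both the W[1]-hardness and the ETH-based running time lower bound along the reduction, taking care of the quadratic blow-up in the parameter.

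For the W[1]-hardness, I will note that Lemma~\ref{la:reduction-component-order-connectivity-basic} runs in polynomial time and maps any instance of \textsc{Grid Tiling} with parameter $k$ to an equivalent instance of \textsc{Planar Extended Component Equality Edge Connectivity} with parameter $k' = \CO(k^2)$. Since $k'$ is a computable function of $k$, this constitutes an fpt-reduction from \textsc{Grid Tiling} to \textsc{Planar Extended Component Equality Edge Connectivity}, which immediately gives W[1]-hardness via Theorem~\ref{thm:grid-tiling-lower-bound}.

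For the ETH lower bound, suppose toward a contradiction that there were an algorithm solving \textsc{Planar Extended Component Equality Edge Connectivity} in time $f(k')n^{o(\sqrt{k'})}$ for some computable $f$. Given a \textsc{Grid Tiling} instance $(n,k,(S_{i,j})_{i,j\in[k]})$, I would apply the reduction of Lemma~\ref{la:reduction-component-order-connectivity-basic} in polynomial time to obtain an equivalent instance $(G,w,U,k',t)$ with $|V(G)| = \CO(k^2 n^2)$ and $k' = \CO(k^2)$. Running the hypothetical algorithm on this instance takes time
\[
f(k')\cdot |V(G)|^{o(\sqrt{k'})} \;=\; f(\CO(k^2))\cdot (k^2 n^2)^{o(\sqrt{k^2})} \;=\; f'(k)\cdot n^{o(k)}
\]
for some computable $f'$, contradicting the ETH-based lower bound of Theorem~\ref{thm:grid-tiling-lower-bound}.

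There is essentially no hard step here; the key point, already handled inside Lemma~\ref{la:reduction-component-order-connectivity-basic}, is that the reduction achieves the \emph{quadratic} parameter blow-up $k' = \CO(k^2)$ rather than anything worse, which is exactly what is needed so that $\sqrt{k'} = \Theta(k)$ and the $n^{o(\sqrt{k'})}$ bound on the target problem translates to an $n^{o(k)}$ bound on \textsc{Grid Tiling}. The only minor care is to verify that the polynomial dependence of $|V(G)|$ on both $n$ and $k$ can be absorbed into the $o(k)$ exponent (using $\log(k^2 n^2) = \CO(\log(kn))$), which is standard.
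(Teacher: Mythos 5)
Your proposal is correct and is exactly the argument the paper intends: the corollary is stated as an immediate consequence of Lemma~\ref{la:reduction-component-order-connectivity-basic} and Theorem~\ref{thm:grid-tiling-lower-bound}, with the quadratic parameter blow-up $k'=\CO(k^2)$ turning $n^{o(\sqrt{k'})}$ into $n^{o(k)}$ and the factor $k^{o(k)}$ from the instance size absorbed into $f'(k)$. Nothing to add.
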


\subsection{Variants}

Next, we argue how to adapt the above reduction for other variants of \textsc{Component Order Connectivity}.
We do not intend to cover every possible variant of the problem, but rather choose certain variants to exemplify how the above reduction can be modified.
Also, we do spell out the reductions in detail, but rather only describe which changes need to be made to the proof of Lemma \ref{la:reduction-component-order-connectivity-basic}. 

First, we cover the edge deletion version of \textsc{Component Order Connectivity}.

\medskip 
\defparproblem{\textsc{Extended Component Order Edge Connectivity}}{A graph $G$, a weight function $w\colon V(G) \rightarrow \NN$, a set $U \subseteq E(G)$, and integers $k,t$.}{$k$}
{Is there a set $Z \subseteq E(G) \setminus U$ such that $|Z| \leq k$ and $w(C) \leq t$ for every connected component $C$ of $G - Z$?}
\medskip

\begin{lemma}
 There is a polynomial-time algorithm that, given an instance $(n,k,(S_{i,j})_{i,j \in [k]})$ of \textsc{Grid Tiling}, computes an equivalent instance $(G,w,U,k',t)$ of \textsc{Planar Extended Component Order Edge Connectivity} such that
 \begin{enumerate}
  \item $|V(G)| = \CO(k^2n^2)$,
  \item $w(v) = \CO(kn^3)$ for all $v \in V(G)$, and
  \item $k' = \CO(k^2)$.
 \end{enumerate}
\end{lemma}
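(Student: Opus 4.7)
The plan is to use essentially the same reduction as in the proof of Lemma~\ref{la:reduction-component-order-connectivity-basic}, without any structural modification to $(G, w, U, k', t)$. The only twist is to verify that, despite the relaxed component-size requirement ($\leq t$ instead of $= t$), every valid solution must in fact be a solution for the equality variant; the rest of the correctness analysis can then be quoted verbatim from the previous lemma. The forward direction is immediate: the set $Z$ constructed from $(r_i, c_j)_{i,j\in[k]}$ yields components of weight exactly $t$, which clearly satisfies $w(C) \leq t$.

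The interesting direction is the backward one, and it rests on the following chain of observations. First, the verification in Lemma~\ref{la:reduction-component-order-connectivity-basic} already shows that for the intended solution, the $N^* \coloneqq k^2 + \tfrac{k^2}{4} + 1$ components partition the vertex set and each has weight exactly $t$; hence the total weight of $G$ is exactly $N^* t$. Second, for the chosen value of $t = 2\bigl((n,n,n,n)_N + k(n,n,0,0)_N + 2(0,0,n,n)_N\bigr)$, one checks that the weight of any two ``high-weight'' vertices (those in $W \cup C \cup \{u\}$) together strictly exceeds $t$, so no two of them can lie in the same component of $G - Z$; thus $G - Z$ has at least $N^*$ components.

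The heart of the argument is then a counting step for the chains. Every edge of $G$ belongs to one of the $2k(k+1) + k^2 = k'$ chains (the $V$-chains, $H$-chains, and $D$-chains), and each chain has its two endpoints in distinct high-weight vertices. To separate all high-weight vertices into distinct components, every chain must be cut at least once, forcing $|Z| \geq k'$. Combined with $|Z| \leq k'$, this means each chain is cut exactly once, so no internal ``middle piece'' of a chain is orphaned: $G - Z$ has exactly $N^*$ components, each containing exactly one high-weight vertex. Since the total weight is $N^* t$ and every component weight is at most $t$, every component must have weight exactly $t$. The reasoning from the backward direction of Lemma~\ref{la:reduction-component-order-connectivity-basic} then extracts a \textsc{Grid Tiling} solution from $Z$.

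The main obstacle is purely bookkeeping: one has to confirm explicitly that the weight of any pair of high-weight vertices exceeds $t$ (done term-by-term using the base-$N$ decomposition of $t$, as in the original lemma), and that no edge of $G$ escapes the chain classification, so that the ``cut at least once per chain'' budget argument is tight. Both facts are already implicit in the construction of Lemma~\ref{la:reduction-component-order-connectivity-basic}, so no new construction is needed. The size bounds and planarity are inherited verbatim, and combining with Theorem~\ref{thm:grid-tiling-lower-bound} yields the corresponding $n^{o(\sqrt{k})}$ and \WH-hardness consequences for \textsc{Planar Extended Component Order Edge Connectivity}.
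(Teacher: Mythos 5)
Your proposal is correct and follows essentially the same approach as the paper: reuse the construction from Lemma~\ref{la:reduction-component-order-connectivity-basic} unchanged and observe that, for any feasible solution $Z$, the number of connected components of $G-Z$ is forced to equal $|W\cup C\cup\{u\}|$, so that total weight conservation upgrades the $\leq t$ constraint to $=t$. The paper states this component-count invariance in one sentence without justification; you unpack it (each of the $k'$ chains joins two distinct high-weight vertices whose combined weight exceeds $t$, forcing at least one cut per chain, and the budget $k'$ then forces exactly one cut per chain), which is exactly the reasoning already implicit in the backward direction of the original lemma.
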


\begin{proof}
 The proof is completely analogous to the reduction above.
 In particular, the construction is exactly the same.
 Since the number of components of $G - Z$ is always the same for every potential solution, we get that $w(C) \leq t$ for every connected component $C$ of $G - Z$ implies that $w(C) = t$ for every connected component $C$ of $G - Z$.
\end{proof}

Next, we turn to the variant where we restrict the size of every $2$-connected component.
As for the first reduction, we focus on the variant where every $2$-connected component should have size exactly $t$ (after removing the solution)

\medskip 
\defparproblem{\textsc{Extended $2$Conn Component Equality Edge Connectivity}}{A graph $G$, a weight function $w\colon V(G) \rightarrow \NN$, a set $U \subseteq E(G)$, and integers $k,t$.}{$k$}
{Is there a set $Z \subseteq E(G) \setminus U$ such that $|Z| \leq k$ and $w(C) = t$ for every $2$-connected component $C$ of $G - Z$?}
\medskip

\begin{lemma}
 There is a polynomial-time algorithm that, given an instance $(n,k,(S_{i,j})_{i,j \in [k]})$ of \textsc{Grid Tiling}, computes an equivalent instance $(G,w,U,k',t)$ of \textsc{Planar Extended $2$Conn Component Equality Edge Connectivity} such that
 \begin{enumerate}
  \item $|V(G)| = \CO(k^2n^2)$,
  \item $w(v) = \CO(kn^3)$ for all $v \in V(G)$, and
  \item $k' = \CO(k^2)$.
 \end{enumerate}
\end{lemma}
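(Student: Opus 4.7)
The plan is to adapt the reduction of Lemma~\ref{la:reduction-component-order-connectivity-basic}, thickening each ray into a planar ladder and adding a small wheel-like gadget at each hub, so that each connected component appearing in the previous reduction becomes one $2$-connected block in the new one. Starting from the planar instance $(G_0,w_0,U_0,k_0',t)$ output by Lemma~\ref{la:reduction-component-order-connectivity-basic}, recall that every vertex is either a \emph{hub} (one of $W\cup C\cup\{u\}$) or lies on a \emph{ray}, where a ray is a maximal path of degree-$2$ vertices joining two hubs, and any valid solution is forced to cut exactly one edge of each of the $k_0'$ rays.

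I will modify $G_0$ as follows. For each ray $R$ with hubs $h_1,h_2$ and interior vertices $z_1,\ldots,z_\ell$, introduce weight-$0$ padding vertices $z_0,z_{\ell+1}$ and weight-$0$ bottom-rail vertices $y_0,y_1,\ldots,y_{\ell+1}$, and replace $R$ by a planar ladder with top rail $h_1,z_0,z_1,\ldots,z_{\ell+1},h_2$, bottom rail $h_1,y_0,y_1,\ldots,y_{\ell+1},h_2$, and rungs $z_iy_i$ for $i\in\{0,\ldots,\ell+1\}$. At each hub $h$ with attached rays $R^{(1)},\ldots,R^{(d)}$ in cyclic order in the planar embedding, draw the two rails of every ladder on opposite sides of its ray direction and add, inside each wedge between consecutive rays, a weight-$0$ \emph{wedge vertex} $\omega^{(j)}$ with edges $\omega^{(j)}h$, $\omega^{(j)}y^{(j)}_0$, and $\omega^{(j)}z^{(j+1)}_0$; this preserves planarity and makes $h$ non-cut in its local structure. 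The new undeletable set $U'$ consists of every edge of the new graph $G'$ except the \emph{interior rail edges} $z_pz_{p+1}$ and $y_py_{p+1}$ for $p\in\{0,\ldots,\ell\}$; in particular we also inherit into $U'$ every image of a $U_0$-edge, each of which is an interior top-rail edge that was already undeletable. The new budget is $k'':=2k_0'$ and $t$ stays the same.

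For the forward direction, given a solution $Z\subseteq E(G_0)\setminus U_0$, we replace each cut edge $z_rz_{r+1}\in Z$ by the \emph{parallel pair} $\{z_rz_{r+1},y_ry_{r+1}\}$, spending exactly $2k_0'=k''$ cuts. Each ladder then splits into two smaller ladders that are themselves $2$-connected; the wedge vertices $\omega^{(j)}$ glue, at every hub $h$, the left halves of all its ladders into a single maximal $2$-connected block $\{h\}\cup\{\omega^{(j)}:j\}\cup\bigcup_j(\text{left half of }R^{(j)})$. Only $h$ and the original ray vertices lying on $h$'s side of the cuts contribute positive weight to this block, so its weight equals $w_0(h)+\sum_j(\text{half-ray weight on }h\text{'s side})$, i.e., the weight of $h$'s connected component in Lemma~\ref{la:reduction-component-order-connectivity-basic}, which is $t$.

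The main obstacle is the reverse direction. The key observation is that an uncut ladder, together with the local structures at its two hubs, is $2$-connected and contributes total weight exactly $2t$ (indeed, the ray contributes to two components of weight $t$ in the previous reduction), so any solution must break the $2$-connectivity of every ladder. A short case analysis over pairs of \emph{deletable} edges in one ladder shows that the only splitting that produces $2$-connected pieces all of weight $t$ is a \emph{parallel cut} $\{z_pz_{p+1},y_py_{p+1}\}$ at some position $p$: any other pair of deletions either leaves the ladder $2$-connected (giving one block of weight $2t\ne t$) or shatters it into small blocks of incorrect weight. Since the budget $k''=2k_0'$ allows exactly two cuts per ladder, a valid solution must use exactly one parallel cut per ladder, and reading off its position on each ray yields a solution of the instance of Lemma~\ref{la:reduction-component-order-connectivity-basic} and hence of the original \textsc{Grid Tiling} instance. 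The size bounds $|V(G')|=\CO(k^2n^2)$, $w(v)=\CO(kn^3)$, and $k''=\CO(k^2)$ follow from those of Lemma~\ref{la:reduction-component-order-connectivity-basic} because only $\CO(1)$ new weight-$0$ vertices are added per ray position and per hub wedge; planarity is preserved throughout by the construction.
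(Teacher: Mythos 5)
Your construction is correct and follows the same high-level strategy as the paper -- double every deletable connection so that separating two hubs costs two deletions, set $k''=2k_0'$, and arrange the gadgets so that the connected components of the first reduction become $2$-connected blocks -- but the gadgets differ. The paper replaces \emph{every} vertex of the graph from Lemma~\ref{la:reduction-component-order-connectivity-basic} by an undeletable cycle of length equal to its degree and every edge by a pair of edges between consecutive cycle vertices; you instead keep the hubs intact, thicken each ray into a ladder with undeletable rungs, and add weight-$0$ wedge vertices around each hub to fuse the incident ladder halves into one block. Your version exploits the hub/ray structure and has the pleasant feature that $t$ is unchanged (the paper rescales weights by $n$), and the observation that the only $2$-element $h_1$--$h_2$ edge cuts of a ladder restricted to deletable edges are the parallel pairs is a clean way to force the correspondence. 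One caution on the reverse direction: your ``short case analysis over pairs of deletable edges in one ladder'' is phrased locally, but $2$-connectivity is a global property -- a single deletion in one ladder can leave its two hubs in a common block via a cycle routed through \emph{other} ladders, so the per-ladder analysis must be anchored by the global facts that every hub has weight exceeding $t/2$ (hence no block may contain two hubs, so every ladder must internally disconnect its two hubs) and that this costs at least two deletions per ladder by the min-cut observation; only then does the budget force exactly one parallel cut per ladder. The paper's own justification (``otherwise there has to be a bridge edge'') is equally terse on this point, so this is a matter of spelling out the counting argument rather than a flaw specific to your approach.
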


\begin{proof}
 We modify the construction as follows to obtain a graph $G'$.
 Each vertex $v \in V(G)$ of degree $\delta$ is replaced by a cycle $C_\delta$ of length $\delta$ (for $\delta = 2$ we replace $v$ by two vertices connected by an edge).
 All edges of the cycle are undeletable.
 Let $v(1),\dots,v(\delta)$ denote the vertices of the cycle in the cyclic order.
 Also, we number the neighbors of $v$ in an arbitrary order.
 Let $N(v,i)$ denote the $i$-th neighbor of $v$ for $i \in [\delta]$.
 
 For an edge $vu \in E(G)$ such that $u = N(v,i)$ and $v = N(u,j)$ we add edges $v(i)u(j)$ and $v(i+1)u(j+1)$ (where positions are taken modulo $\delta$).
 If $vu$ is undeletable, then both of these edges become undeletable.
 
 For the weight function, we set $w(v(1)) \coloneqq n \cdot w(v) - \deg(v) + 1$ and $w(v(i)) \coloneqq 1$ for all $v \in V(G)$ and $i \geq 2$.
 Finally, we update $k'' \coloneqq 2k'$.
 
 We argue that the new instance $(G',w',U',k'',t \cdot n)$ of \textsc{Planar Extended $2$Conn Component Equality Edge Connectivity} has a solution if and only if the original instance $(G,w,U,k',t)$ of \textsc{Planar Extended Component Equality Edge Connectivity}.
 Let $Z$ be a solution for the original instance $(G,w,U,k',t)$ of \textsc{Planar Extended Component Equality Edge Connectivity}.
 Then we simply delete, for all $uv \in Z$, the edges $v(i)u(j)$ and $v(i+1)u(j+1)$ where $u = N(v,i)$ and $v = N(u,j)$.
 Let $Z'$ be the resulting set of edges.
 Then the $2$-connected components of $G' - Z'$ are exactly the connected components of $G' - Z'$ and it is easy to verify that $Z'$ is a solution.
 
 In the other direction, the main observation is that, if $v(i)u(j)$ is part of a solution $Z'$, then also its partner edge must be contained in a solution, since otherwise there has to be a bridge edge in $G' - Z'$ and its $2$-connected component can not have size exactly $t$.
\end{proof}

Finally, the next example demonstrates how to modify the reductions for vertex deletion problems.

\medskip 
\defparproblem{\textsc{$2$Conn Component Order Connectivity}}{A graph $G$, a weight function $w\colon V(G) \rightarrow \NN$, a set $U \subseteq V(G)$, and integers $k,t$.}{$k$}
{Is there a set $Z \subseteq V(G) \setminus U$ such that $|Z| \leq k$ and $w(C) \leq t$ for every $2$-connected component $C$ of $G - Z$?}
\medskip

\begin{lemma}
 There is a polynomial-time algorithm that, given an instance $(n,k,(S_{i,j})_{i,j \in [k]})$ of \textsc{Grid Tiling}, computes an equivalent instance $(G,w,U,k',t)$ of \textsc{Planar $2$Conn Component Order Connectivity} such that
 \begin{enumerate}
  \item $|V(G)| = \CO(k^2n^2)$,
  \item $w(v) = \CO(kn^3)$ for all $v \in V(G)$, and
  \item $k' = \CO(k^2)$.
 \end{enumerate}
\end{lemma}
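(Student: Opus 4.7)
The plan is to reduce from \textsc{Planar Extended $2$Conn Component Equality Edge Connectivity}, whose hardness and $n^{o(\sqrt{k})}$ lower bound were just established, to the target vertex-deletion problem. Two adaptations are needed: converting edge-deletion into vertex-deletion, and relaxing the per-block equality constraint $w(C) = t$ into the order constraint $w(C) \le t$. Together these should produce a polynomial-time reduction with the claimed size bounds, directly yielding the lemma.

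First I would form $G$ by subdividing every edge $e = uv$ of $G_1$ with a fresh vertex $x_e$ of weight $0$, so $e$ becomes a path $u - x_e - v$; original vertex weights are preserved. I would set $U = V(G_1) \cup \{x_e : e \in U_1\}$, making the only deletable vertices of $G$ the subdivision vertices of originally deletable edges, and set $k' = k_1$ and $t' = t_1$. Planarity is preserved under subdivision, and the bounds on $|V(G)|$, $\max_v w(v)$, and $k'$ follow immediately from those of $G_1$. The correspondence between solutions is then straightforward: any vertex-deletion solution $Z$ consists entirely of subdivision vertices and yields the edge set $Z_E = \{e : x_e \in Z\}$ of the same size, and conversely. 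The block decomposition of $G - Z$ is in near-bijection with that of $G_1 - Z_E$: each block of $G_1 - Z_E$ absorbs the weight-$0$ subdivision vertices of the edges it contains, while subdivision vertices of bridges of $G_1 - Z_E$ form extra trivial weight-$0$ blocks, so $w(C) = w_1(C')$ for every pair of corresponding non-trivial blocks.

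The hard part will be closing the gap between the $\le t'$ vertex constraint and the $= t_1$ edge constraint used in the previous lemma. My plan is to mirror the argument used for the order variant earlier in this appendix: the construction from the previous lemma contains a fixed number $N$ of hub substructures (the cycle gadgets replacing the heavy vertices $u$, $w_{i,j}$, $c_{i,j}$ of the basic construction), each of total weight strictly greater than $t_1/2$ and lying entirely in $U_1$, hence in $U$. The $\le t_1$ constraint then forbids any block from containing two hub substructures, forcing at least $N$ non-trivial blocks; combined with the weight-conservation identity $\sum_v w(v) = N \cdot t_1$ inherited from the previous lemma's rigid construction, this should force exactly $N$ non-trivial blocks of weight exactly $t_1$, recovering the $= t_1$ structure and showing that $Z_E$ is a valid solution to the edge-deletion instance. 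The most delicate step will be the weight bookkeeping in the presence of cut vertices shared between blocks of $G_1 - Z_E$; I expect to handle this by reusing the tightness analysis of the previous lemma's correctness proof, which shows that any admissible block pattern must match the hub partition exactly.
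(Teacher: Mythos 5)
Your forward direction and the size bookkeeping are fine, but the backward direction has a genuine gap, precisely at the step you flag as delicate. The counting argument you propose does not survive the passage from connected components to $2$-connected components, for two compounding reasons. First, separating two hubs into different \emph{blocks} is only half as expensive as separating them into different \emph{connected components}: in $G_1$ every connection between consecutive gadget levels consists of two parallel edges, and deleting just \emph{one} of the corresponding subdivision vertices already turns the surviving parallel connection into a bridge, so the two sides land in distinct $2$-connected components while staying in the same connected component. Since $k_1=2k_0'$ is calibrated so that \emph{both} edges of each of the $k_0'$ pairs must be deleted, keeping $k'=k_1$ leaves roughly $k_0'$ deletions of slack. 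Second, with that slack the number of blocks is not invariant: a ladder cut at two interior positions leaves an ``orphan'' middle block containing rungs that are assigned to no hub, so the capacities in the digit-counting argument of the basic lemma need not be saturated; and because cut vertices are counted once per block containing them, one only gets $\sum_C w(C)\ge N\cdot t_1$, which together with $w(C)\le t_1$ yields ``at least $N$ blocks'', not ``each hub block has weight exactly $t_1$''.

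Concretely, your instance is (almost) always a yes-instance. Cut every vertical and horizontal ladder at \emph{both} of its two hub attachments by deleting one subdivision vertex of one of the two parallel attachment edges at each end ($4k(k+1)$ deletions); every hub then receives nothing from its grid ladders, each hub cycle forms a block of weight $n\cdot w(v)<t_1$, each ladder interior forms a single block of weight far below $t_1$, and the leftover attachments are bridges whose blocks have weight at most that of a single vertex. Then cut each of the $k^2$ diagonal arms once at an arbitrary deletable position chosen so that the four positions around each $c_{i,j}$ sum to at most $2(N^2-1)$ (possible whenever every $S_{i,j}\neq\emptyset$, which holds for hard no-instances of \textsc{Grid Tiling}). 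This uses $4k(k+1)+k^2<2k_0'=k'$ deletions and satisfies every $\le t_1$ block constraint, yet encodes no grid tiling. The fix cannot be ``reuse the tightness analysis''; the budget itself must be changed so that double cuts are unaffordable and every rung is still forced into some hub block --- this is exactly why the paper's proof of this lemma replaces $k'$ by a smaller, recomputed value (counting one deletion per pair plus a controlled number of second deletions) rather than carrying over $k_1$ unchanged.
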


\begin{proof}
 We follow again the construction presented in the last lemma, but with one modification.
 We update
 \[k'' \coloneqq n \cdot \left(\frac{19}{4}k^2 + 4k\right) = n \cdot \left(2k' - |W \cup C + \{u\}| - 1\right).\]
 
 We argue that the new instance $(G',w',U',k'',t \cdot n)$ of \textsc{Planar $2$CC Size Deletion} has a solution if and only if the original instance $(G,w,U,k',t)$ of \textsc{Planar Extended Component Equality Edge Connectivity}.
 Let $Z$ be a solution for the original instance $(G,w,U,k',t)$ of \textsc{Planar Extended Component Equality Edge Connectivity}.
 Recall that the number of connected components of $G - Z$ is $|W \cup C + \{u\}|$.
 We first delete, for all $uv \in Z$, the edge $v(i)u(j)$ where $u = N(v,i)$ and $v = N(u,j)$.
 Note that, for each $uv \in Z$, there is a second edge $v(i+1)u(j+1)$.
 We either wish to delete this edge or turn it into a bridge edge in the resulting graph.
 This can be achieved by deleting exactly $k' - |W \cup C + \{u\}| - 1$ of these edges.
 
 In the other direction, we add $uv$ to a solution $Z$ if and only if $v(i)u(j) \in Z'$ or $v(i+1)u(j+1) \in Z'$ for a solution $Z'$ of $(G',w',U',k'',t \cdot n)$.
 It is easy to verify that $|Z| \leq k'$ and $Z$ is a solution of $(G,w,U,k',t)$.
 
 Here, we can repeat the same argument as before.
 We simply use the reduction from the last lemma
 Since the number of $2$-connected components of $G - Z$ is always the same for every potential solution, we get that $w(C) \leq t$ for every $2$-connected connected component $C$ of $G - Z$ implies that $w(C) = t$ for every $2$-connected connected component $C$ of $G - Z$.
\end{proof}

Again, in combination with Theorem \ref{thm:grid-tiling-lower-bound}, we obtain the following lower bounds.

\begin{corollary}
 Let $\Pi$ be one of the problems \textsc{Extended Component Order Edge Connectivity}, \textsc{Extended $2$Conn Component Equality Edge Connectivity} and \textsc{$2$Conn Component Order Connectivity}.
 Then $\Pi$ (parameterized by $k$) is \WH\ and, unless \ETH\ fails, it can not be solved in time $f(k)n^{o(\sqrt{k})}$ for any computable function $f$.
\end{corollary}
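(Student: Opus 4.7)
The plan is to obtain both lower bounds from Theorem~\ref{thm:grid-tiling-lower-bound} via the three reduction lemmas immediately preceding the corollary. Each of those lemmas produces, from a \textsc{Grid Tiling} instance with parameters $(n,k)$, an equivalent planar instance of the target problem $\Pi$ whose graph has $\CO(k^2n^2)$ vertices, whose weights are polynomially bounded, and whose parameter is $k' = \CO(k^2)$. These reductions run in polynomial time, so the W[1]-hardness of $\Pi$ parameterized by $k'$ follows directly from the W[1]-hardness of \textsc{Grid Tiling} parameterized by $k$: a fixed-parameter tractable algorithm for $\Pi$ with parameter $k'$ would, after the polynomial-time reduction, yield an FPT algorithm for \textsc{Grid Tiling} parameterized by $k$ (since $k' = \CO(k^2)$ depends only on $k$).

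For the ETH-based lower bound, suppose for contradiction that some $\Pi$ in the corollary admits an algorithm of running time $f(k')\cdot N^{o(\sqrt{k'})}$ on $N$-vertex instances for some computable $f$. Given a \textsc{Grid Tiling} instance $(n,k,(S_{i,j}))$, we apply the corresponding reduction lemma to obtain an equivalent instance with $N = \CO(k^2n^2)$ vertices and $k' = \CO(k^2)$. Running the hypothetical algorithm yields a decision procedure for \textsc{Grid Tiling} in time
\[
f(\CO(k^2))\cdot (k^2 n^2)^{o(\sqrt{k^2})} \;=\; g(k)\cdot n^{o(k)}
\]
for some computable function $g$, contradicting the ETH-based lower bound in Theorem~\ref{thm:grid-tiling-lower-bound}. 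This argument applies uniformly to all three problems in the corollary, using the three respective reduction lemmas.

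There is no real obstacle here: the entire work is done inside the preceding reduction lemmas, which have been carefully designed so that the parameter blowup is exactly quadratic. The only minor point to verify is that $\sqrt{k'} = \Theta(k)$, so that an $N^{o(\sqrt{k'})}$ algorithm translates into an $n^{o(k)}$ algorithm for \textsc{Grid Tiling} (not merely $n^{o(k')} = n^{o(k^2)}$, which would be a weaker contradiction), and that $N$ is polynomial in $n$ for fixed $k$ so that polynomial factors in $N$ get absorbed into the $n^{o(k)}$ bound. Both facts are immediate from the stated size bounds.
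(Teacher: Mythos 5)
Your proof is correct and follows exactly the route the paper intends: the paper states the corollary as an immediate combination of the three preceding reduction lemmas with Theorem~\ref{thm:grid-tiling-lower-bound}, and your write-up simply makes explicit the parameter and size bookkeeping ($k' = \CO(k^2)$, $N = \CO(k^2n^2)$, hence $N^{o(\sqrt{k'})} = g(k)\cdot n^{o(k)}$) that the paper leaves implicit.
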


\subsection{Removing Weights and Undeletable Vertices}

Next, we show how to realize the reductions without using vertex weights or declaring certain edges/vertices to be undeletable.
We only perform the reduction for the first problem \textsc{Planar Extended Component Equality Edge Connectivity}.
We remark that reductions for the other variants are similar.

\medskip 
\defparproblem{\textsc{Component Equality Edge Connectivity}}{A graph $G$, and integers $k,t$.}{$k$}
{Is there a set $Z \subseteq E(G)$ such that $|Z| \leq k$ and $|C| = t$ for every connected component $C$ of $G - Z$?}
\medskip

\begin{lemma}
 There is a polynomial-time many-one reduction from \textsc{Planar Extended Component Equality Edge Connectivity} to \textsc{Planar Component Equality Edge Connectivity} which leaves the parameter $k$ unchanged.
\end{lemma}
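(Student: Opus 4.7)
The plan is a three-step transformation.

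\textbf{Step 1 (Contract all undeletable components).} I will first observe that the two endpoints of any undeletable edge must lie in the same component of $G - Z$ for every feasible $Z$; I may therefore contract each connected component of the subgraph $(V(G), U)$ to a single \emph{super-vertex}, summing weights. The resulting instance $(G^{\circ}, w^{\circ}, \emptyset, k, t)$ is planar, equivalent to the original, and contains no undeletable edges. After scaling both $w^{\circ}$ and $t$ by $M=3$ and eliminating any weight-$0$ super-vertex by contracting it into an arbitrary incident neighbour (or returning a trivial no-instance if it is isolated and $t > 0$), I may assume $w^{\circ}(v) \geq 3$ for every super-vertex $v$ of $G^{\circ}$.

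\textbf{Step 2 (Cycle blobs for each super-vertex).} Next I will replace each super-vertex $v$ of weight $w^{\circ}(v) \geq 3$ by a cycle $C_v$ of length $w^{\circ}(v)$, selecting a designated root $r_v$ on the cycle at which all edges formerly incident to $v$ are re-attached, respecting the original cyclic order of neighbours. Every cycle edge of every $C_v$ is \emph{temporarily} declared undeletable; these are the only undeletable edges in the current intermediate instance.

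\textbf{Step 3 (Bundle each cycle edge).} Finally, I will replace each undeletable cycle edge $xy$ by a planar \emph{$(k{+}1)$-bundle}: $k+1$ internally vertex-disjoint $x$–$y$ paths of length $2$, contributing $k+1$ fresh subdivision vertices per cycle edge. Since any feasible solution removes at most $k$ edges, it cannot sever all $k+1$ paths of any single bundle; disconnecting $C_v$ would therefore require severing bundles at two distinct cycle positions, costing $2(k{+}1) > k$ edge deletions, which is infeasible. Hence each cycle blob stays in a single component in every feasible solution. Setting the new target to $t' := (k{+}2) \cdot M \cdot t$ and leaving the parameter $k$ unchanged, any feasible component $C'$ of the basic instance contains exactly $(k{+}2)\sum_{v \in C^{\circ}} w^{\circ}(v)$ vertices, which equals $t'$ iff $w^{\circ}(C^{\circ}) = M t$, iff $C^{\circ}$ is a feasible weight-$Mt$ component of the contracted extended instance.

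\textbf{Main obstacle.} The decisive design point is that Step 1 eliminates \emph{all} undeletable edges \emph{before} new ones are introduced in Step 2. If any original undeletable edge survived Step 2, its $(k{+}1)$-bundle in Step 3 would contribute $k+1$ new vertices \emph{between} blobs, and since distinct feasible weight-$t$ components of $G$ can contain different numbers of undeletable edges, no single value of $t'$ would work uniformly across all feasible components. By contracting the undeletable subgraph first, every $(k{+}1)$-bundle lies \emph{inside} a single blob, giving the uniform multiplicative factor $(k{+}2)$ per unit of weight that makes a single $t'$ work. The remaining technical verifications—planarity of a cycle blob drawn as a small loop around the former super-vertex position with original neighbours attached at $r_v$, and of each bundle drawn as $k+1$ concentric length-$2$ arcs between $x$ and $y$—are routine once the planar embedding is tracked through each step.
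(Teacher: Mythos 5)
Your plan takes a genuinely different route from the paper's proof---the paper never identifies vertices, while your Step~1 begins by contracting every connected component of the undeletable subgraph---and the contraction is where a gap sits. Two distinct $U$-components $C_1,C_2$ of $G$ may be joined by several deletable edges, and after contraction these become parallel edges between the corresponding super-vertices; the intermediate graph $G^{\circ}$ is then a multigraph, not a legal instance of the (simple-graph) problem. Your Steps~2--3 do not fix this, since every edge incident to a super-vertex is re-attached at the single root $r_v$ of its cycle blob, so parallel edges remain parallel in $G'$; merging them is also not an option because the number of deletions needed to separate $C_1$ from $C_2$ is exactly the multiplicity. Your weight-$0$ handling is a second, smaller issue: contracting a weight-$0$ vertex $v$ of degree $\geq 3$ into a neighbour $a$ is not equivalence-preserving (for instance, if $v$ is adjacent to $a,b,c$ with $w(v)=0$, deleting $va$ separates $a$ from $\{v,b,c\}$ with one deletion, but after contracting $v$ into $a$, isolating $a'$ from $b$ and $c$ costs two). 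Both issues are irrelevant for the specific instances produced by Lemma~\ref{la:reduction-component-order-connectivity-basic}, but they leave the general lemma unproved as stated; the parallel-edge issue in particular needs a substantive repair, e.g.\ scaling so that $w^{\circ}(v)\geq\deg_{G^{\circ}}(v)$ and attaching each incident edge at a distinct cycle vertex respecting the embedding's cyclic order.

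For comparison, the paper's proof stays within the simple-graph world throughout and is shorter. To neutralize an undeletable edge $uv\in U$ it adds $k$ fresh vertices $x(uv,1),\dots,x(uv,k)$, each adjacent to both $u$ and $v$, creating $k+1$ internally disjoint $u$--$v$ paths that no budget-$k$ solution can sever, and it rescales weights so that the new vertices cancel out in every component. To remove the weight function it attaches $w(v)-1$ pendant leaves to each $v$; a pendant edge is self-protecting (deleting it would create a forbidden size-$1$ component), so no reinforcement bundle is needed. Your $(k{+}1)$-bundle re-implements the paper's parallel-path trick inside the cycle gadget, and your target $(k{+}2)\sum_{v\in C^{\circ}}w^{\circ}(v)$ is the right vertex count, so once the two corner cases above are handled your construction would also work---just with an additional layer of gadgets that the pendant-leaf encoding shows to be unnecessary.
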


\begin{proof}
 Let $(G,w,U,k,t)$ be an instance of \textsc{Planar Extended Component Equality Edge Connectivity}.
 We first argue how to obtain an equivalent instance $(G',w',\emptyset,k,t')$ of \textsc{Planar Extended Component Equality Edge Connectivity}.
 
 We construct the graph $G'$ from $G$ as follows.
 For each edge $uv \in U$ we add vertices $x(uv,i)$ for all $i \in [k]$.
 Each vertex $x(uv,i)$ is connected to $u$ and $v$.
 Hence, for each edge $uv \in U$, we add $k$ internally vertex-disjoint paths from $u$ to $v$.
 Also, we set
 \[w'(v) \coloneqq nk \cdot w(v) - k\cdot |\{u \in N_G(v) \mid uv \in U\}|\]
 for all $v \in V(G)$ and $w'(x(uv,i)) \coloneqq 2$ for all $uv \in U$ and $i \in [k]$.
 Finally, $t' \coloneqq nkt$.
 
 We prove the correctness of the reduction.
 Let $Z$ be a solution for $(G,w,U,k,t)$.
 Then $Z$ is also a solution for the updated instance $(G',w',\emptyset,k,t')$.
 Indeed, let $C$ be a connected component of $G' - Z$.
 Then
 \begin{align*}
  w'(C) &= \left(\sum_{v \in V(G) \cap C} nk \cdot w(v) - k\cdot |\{u \in N_G(v) \mid uv \in U\}\right) + \sum_{u,v \in V(G) \cap C, uv \in U} 2k\\
        &= \sum_{v \in V(G) \cap C} nk \cdot w(v) = nk \cdot \sum_{v \in V(G) \cap C} w(v) = nkt = t'.
 \end{align*}

 On the other hand, let $Z'$ be a solution for $(G',w',\emptyset,k,t')$.
 For each $uv \in U$ there are are $k+1$ internally vertex-disjoint paths from $u$ to $v$ in $G'$.
 Hence, $u$ and $v$ have to be in the same connected component of $G' - Z'$.
 This component also contains $x(uv,i)$ since such a vertex can not appear in a connected component alone.
 Let $Z \coloneqq (E(G) \cap Z') \setminus U$.
 By the observations above it holds that $Z$ is also a solution for $(G',w',\emptyset,k,t')$.
 Also, repeating the calculations above, it is also a solution for $(G,w,U,k,t)$.
 
 Now, in a second step, we argue how to remove the weights.
 Let $(G,w,\emptyset,k,t)$ be an instance of \textsc{Planar Extended Component Equality Edge Connectivity}.
 We construct an equivalent instance $(G',k,t)$ of \textsc{Planar Component Equality Edge Connectivity}.
 The graph $G'$ is simply obtained from $G$ by adding, for each $v \in V(G)$, fresh copies $v(1),\dots,v(w(v)-1)$ each of which is connected only to $v$.
 Note that it is not possible to remove any of the newly added edges since this would create a component of size $1$ which is not allowed in any solution (if $t=1$ then all weights have to be $1$, otherwise there is no solution).
 Clearly, the instance $(G',k,t)$ is equivalent to $(G,w,\emptyset,k,t)$.
\end{proof}

\begin{corollary}
  \textsc{Planar Component Equality Edge Connectivity} (parameterized by $k$) is \WH\ and, unless \ETH\ fails, it can not be solved in time $f(k)n^{o(\sqrt{k})}$ for any computable function $f$.
\end{corollary}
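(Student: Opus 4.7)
The plan is to simply chain the preceding results. From Theorem~\ref{thm:grid-tiling-lower-bound}, \textsc{Grid Tiling} parameterized by $k$ is \WH\ and, assuming \ETH, admits no $f(k) \cdot n^{o(k)}$-time algorithm. The reduction given in Lemma~\ref{la:reduction-component-order-connectivity-basic} transforms an instance $(n,k,(S_{i,j})_{i,j \in [k]})$ of \textsc{Grid Tiling} in polynomial time into an equivalent instance $(G,w,U,k',t)$ of \textsc{Planar Extended Component Equality Edge Connectivity} with $k' = \CO(k^2)$ and $|V(G)|$ polynomial in $n$ and $k$. Therefore, a hypothetical algorithm for the latter running in time $f(k') \cdot n^{o(\sqrt{k'})}$ would, after substituting $k' = \CO(k^2)$, yield an $f'(k) \cdot n^{o(k)}$-time algorithm for \textsc{Grid Tiling}, contradicting \ETH; likewise, fixed-parameter tractability in $k'$ would give fixed-parameter tractability in $k$ for \textsc{Grid Tiling}, contradicting \WH-hardness. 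This already establishes the stated corollary for \textsc{Planar Extended Component Equality Edge Connectivity}.

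The final step is to carry these lower bounds through the parameter-preserving reduction of the last lemma, which transforms an instance $(G,w,U,k,t)$ of \textsc{Planar Extended Component Equality Edge Connectivity} into an equivalent instance $(G',k,t')$ of \textsc{Planar Component Equality Edge Connectivity} in polynomial time while leaving the parameter $k$ unchanged and keeping $|V(G')|$ polynomial in $|V(G)|$, the weights, and $k$. Since the reduction is polynomial time and preserves $k$ exactly, any $f(k) \cdot n^{o(\sqrt{k})}$-time algorithm (respectively, any \FPT\ algorithm) for \textsc{Planar Component Equality Edge Connectivity} would immediately yield a corresponding algorithm for \textsc{Planar Extended Component Equality Edge Connectivity} with the same parameter dependence, contradicting the bounds just established.

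There is no real obstacle here: both ingredients are already proved. The only thing to double-check is that the size blow-up in the second reduction is bounded polynomially in $n$ and $k$ (so that $n^{o(\sqrt{k})}$ on the output translates back to $n^{o(\sqrt{k})}$ on the input), which follows directly from inspecting the construction, where each vertex is expanded into $w(v)$ copies and each undeletable edge is replaced by $k$ parallel internally-disjoint paths of length two, with $w(v) = \CO(kn^3)$ by Lemma~\ref{la:reduction-component-order-connectivity-basic}. Combining the two reductions gives a single polynomial-time many-one reduction from \textsc{Grid Tiling} with parameter $k$ to \textsc{Planar Component Equality Edge Connectivity} with parameter $\CO(k^2)$, which yields both the \WH-hardness and the $f(k) \cdot n^{o(\sqrt{k})}$ lower bound under \ETH.
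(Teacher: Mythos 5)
Your chain of reasoning is correct and matches the paper's intent exactly: the paper itself offers no explicit proof for this corollary, treating it as an immediate consequence of composing Theorem~\ref{thm:grid-tiling-lower-bound}, Lemma~\ref{la:reduction-component-order-connectivity-basic} (with $k' = \CO(k^2)$), and the parameter-preserving, polynomial-size weight/undeletability-removal reduction. Your check that the cumulative size blow-up (roughly $\CO(k^2 n^2)$ vertices with weights $\CO(kn^3)$, then multiplied by $nk$ and expanded into unit-weight copies, plus $k$ parallel paths per undeletable edge) stays polynomial in $n$ and $k$ is precisely the point one must verify so that $N^{o(\sqrt{k'})}$ on the output collapses to $f(k)\,n^{o(k)}$ on the original \textsc{Grid Tiling} instance; this is correct.
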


\end{document}